\numberwithin{equation}{section} 
\theoremstyle{definition}
\newtheorem{theorem}{Theorem}[section]
\newtheorem{lemma}[theorem]{Lemma}
\newtheorem{proposition}[theorem]{Proposition}
\newtheorem{definition}[theorem]{Definition}
\newtheorem{remark}{Remark}
\newcommand{\ml}{\left(}
\newcommand{\mr}{\right)}
\newcommand{\dv}{\partial_v}
\newcommand{\du}{\partial_u}
\newcommand{\p}{\partial}
\newcommand{\mlm}{\left|}
\newcommand{\mrm}{\right|}
\newcommand{\oh}{\frac{1}{2}}
\newcommand{\e}{\mathfrak{e}}
\newcommand{\ul}[1]{\underline{#1}}
\newcommand\ii{\mathrm{i}}
\newcommand{\tp}{\tilde{\phi}}
\newcommand{\tgam}{\tilde{\gamma}}
\newcommand{\te}{\tilde{\varepsilon}}
\newcommand{\td}{\tilde{\delta}}
\newcommand{\ttheta}{\tilde{\theta}}
\newcommand{\tz}{\tilde{\zeta}}
\newcommand{\ttz}{\vardbtilde{\zeta}}
\newcommand{\tr}{\tilde{\rho}}
\newcommand{\trs}{\tilde{\rho}^*}
\newcommand{\tu}{\tilde{u}}
\newcommand{\tv}{\tilde{v}}
\newcommand{\tf}{\tilde{f}}
\newcommand{\tfs}{\tilde{f}^*}
\newcommand{\tg}{\tilde{g}}
\newcommand{\tgs}{\tilde{g}^*}
\newcommand{\tmfa}{\tilde{\mathfrak{a}}}
\newcommand{\tmfb}{\tilde{\mathfrak{b}}}
\newcommand{\vardbtilde}[1]{\tilde{\raisebox{0pt}[0.85\height]{$\tilde{#1}$}}}
\newcommand{\ttmfb}{\vardbtilde{\mathfrak{b}}}
\newcommand{\tmcd}{\mathcal{\tilde{D}}}
\newcommand{\mfa}{\mathfrak{a}}
\newcommand{\mfb}{\mathfrak{b}}
\newcommand{\D}{\mathrm{d}}
\DeclareMathOperator\re{Re}
\DeclareMathOperator\im{Im}
\title{A Proof of Weak Cosmic Censorship Conjecture \\ for the Spherically Symmetric\\ Einstein-Maxwell-Charged Scalar Field System}
\author[Xinliang An]{Xinliang An}
\address{Department of Mathematics, National University of Singapore\\ Singapore}
\email{matax@nus.edu.sg}
\author[Hong Kiat Tan]{Hong Kiat Tan}
\address{Department of Mathematics, University of California, Los Angeles\\ USA}
\email{maxtanhk@math.ucla.edu}
\begin{document}

\maketitle

\begin{abstract}
Under spherical symmetry, we show that the weak cosmic censorship holds for the gravitational collapse of the Einstein-Maxwell-charged scalar field system. Namely, for this system, with generic initial data, the formed spacetime singularities are concealed inside black-hole regions. This generalizes Christodoulou’s celebrated results to the charged case. Due to the presence of charge $Q$ and the complexification of the scalar field $\phi$, multiple delicate features and miraculous monotonic properties of the Einstein-(real) scalar field system are not present. We develop a systematical approach to incorporate $Q$ and the complex-valued $\phi$ into the integrated arguments. For instance, we discover a new path, employing the reduced mass ratio, to establish the sharp trapped surface formation criterion for the charged case. Due to the complex structure and the absence of translational symmetry of $\phi$, we also carry out detailed modified scale-critical BV area estimates with renormalized quantities to deal with $Q$ and $\phi$. We present a new $C^1$ extension criterion by utilizing the Doppler exponent to elucidate the blueshift effect, analogous to the role of integrating vorticity in the Beale-Kato-Madja breakdown criterion for incompressible fluids. Furthermore, by utilizing only double-null foliations, we establish the desired first and second instability theorems for the charged scenarios and identify generic initial conditions for the non-appearance of naked singularities. Our instability argument requires intricate generalizations of the treatment for the uncharged case via analyzing the precise contribution of the charged terms and its connection to the reduced mass ratio.
\end{abstract}

{\hypersetup{linkcolor=black}
\tableofcontents
}

\section{Introduction}

In general relativity, the validity of the weak cosmic censorship is a central unresolved issue. It is conjectured that \textit{for generic initial data, the singularities formed in the gravitational collapse would be confined within black-hole regions and cannot be seen by distant observers}. In 1969 Penrose \cite{Penrose69} wrote: \textit{We are thus presented with what is perhaps the most fundamental unanswered question of general-relativistic collapse theory, namely: does there exist a ``cosmic censor" who forbids the appearance of naked singularities, clothing each one in an absolute event horizon?} More precisely, as written by Christodoulou \cite{Chr.99}, the weak cosmic censorship conjecture for Einstein's equations states:

\textit{Generic asymptotically flat initial data have a maximal future development possessing a complete future null infinity.} 

\noindent Christodoulou \cite{Chr.99} further explained:  \textit{In heuristic terms, this means that, if we disregard exceptional initial conditions, no singularities are observed from infinity, even though observations from infinity are allowed to continue indefinitely.} So far, only for the spherically symmetric Einstein-(real) scalar field system, this censorship was rigorously established and it was proved by Christodoulou in 1990s through a series of breakthrough results \cite{christ1}-\cite{christ4}. To generalize this celebrated conclusion to the more physical scenario, one would consider adding the joint evolution of the electric charge $Q$, often viewed as the poor man's angular momentum, to the Einstein field equations. This motivates the study of the spherically symmetric Einstein-Maxwell-charged scalar field system. However, proving the weak cosmic censorship for this system remains open for more than 20 years. 

Returning to Christodoulou's proof in \cite{christ1}-\cite{christ4}, integrated arguments with five indispensable steps are established. In each of these steps, he explored the delicate structures of the Einstein-(real) scalar field system and established sharp results. Based on these, he designed a spectacular blueprint for the proof. All these steps and the corresponding sharp results are later ingeniously incorporated into one piece, hence proving the final conclusion. In his proof, various monotonic formulas tied to the special structures of the Einstein-(real) scalar field system are critically employed. Taking the evolution of $Q$ into account, the subtle structure changes. The given monotonic formulas for the uncharged case are no longer present, and numerous crucial arguments in \cite{christ1}-\cite{christ4} have to be circumvented or modified. 

In this article, we prove that, \textit{within spherical symmetry, the weak cosmic censorship holds for the gravitational collapse of the Einstein-Maxwell-charged scalar field system.} Here we consider the $3+1$ dimensional spacetime  $(\mathcal{M},g,\phi,F)$ with $\mathcal{M}$ being the $3+1$ dimensional manifold, $g$ being the Lorenztian metric, $\phi$ being the complex-valued function on $\mathcal{M}$ and $F$ being a differential $2$-form on $\mathcal{M}$. The system of equations governing these quantities takes the form: 
\begin{equation}\label{Intro1}
\begin{split}
R_{\mu \nu} - \frac{1}{2} g_{\mu \nu} R &= 8 \pi T_{\mu \nu}^{SF} + 8\pi T_{\mu \nu}^{EM},\\
T_{\mu \nu}^{SF} &= \frac{1}{2}(D_{\mu} \phi)(D_{\nu} \phi)^{\dagger} + \frac{1}{2}(D_{\nu} \phi)(D_{\mu} \phi)^{\dagger} - \frac{1}{2}g_{\mu \nu}g^{\alpha \beta} D_{\alpha}\phi (D_{\beta}\phi)^{\dagger}, \\
T_{\mu \nu}^{EM} &= \frac{1}{4\pi}(g^{\alpha \beta}F_{\alpha \mu}F_{\beta \nu} - \frac{1}{4}g_{\mu \nu}F^{\alpha\beta}F_{\alpha\beta}) \\
\end{split}
\end{equation}
with $^\dagger$ standing for the complex conjugate and $D_\mu:= \partial_{\mu} + \e i A_{\mu}$ being the gauge covariant derivative, where $\e$ is the coupling constant,  and $A_\mu$ is the electromagnetic potential. Note that, via the second Bianchi identity, the divergence-free property of the Einstein tensor also implies an evolution equation for the electromagnetic field, i.e., 
\begin{equation}\label{Intro2}
\nabla^{\nu}F_{\mu\nu} = 2 \pi \e i \big(\phi(D_{\mu}\phi)^{\dagger} - \phi^{\dagger}(D_{\mu}\phi)\big).
\end{equation}

\subsection{Main Theorem}
In this study, we conduct all the proofs with double null coordinates. Under spherical symmetry, the spacetime metric $g$ now takes the form:
\begin{equation}
g =  - \Omega^2(u,v) \; \mathrm{d}u\mathrm{d}v + r^{2}(u,v) \; (\mathrm{d}\theta^2 + \sin^2 \theta \mathrm{d} \phi^2)
\end{equation}
with $\Omega(u,v)$ called the lapse function and $r(u,v)$ called the radial function. Here we employ $u$ and $v$ as coordinates and they are optical functions toward the future satisfying $g^{\mu\nu}\partial_{\mu}u\partial_{\nu}u=0$ and $g^{\mu\nu}\partial_{\mu}v\partial_{\nu}v=0$. Furthermore, constant $u$ and constant $v$ hypersurfaces correspond to the outgoing and incoming null cones, respectively. 

Our main conclusion is as follows:

\begin{theorem}\label{main theorem rough version}
(rough version) In the setting of gravitational collapse for the spherically symmetric Einstein-Maxwell-charged scalar field system, for generic\footnote{Here the term ``generic'' means that the subset of initial conditions leading to the formation of naked singularities has positive co-dimensions. } asymptotically flat initial data prescribed along $C^+_0$, the singularities formed in the evolution would be covered by a black-hole region. 
\end{theorem}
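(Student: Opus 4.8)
The plan is to follow the five-step integrated blueprint that Christodoulou developed for the Einstein-(real) scalar field system, but to replace at every stage the sign-definite monotonicities of the uncharged problem by renormalized quantities and reduced-mass-ratio substitutes that survive the introduction of the charge $Q$ and the complexification of $\phi$. First I would reduce the system \eqref{Intro1}--\eqref{Intro2} under spherical symmetry in the double-null gauge to a first-order system for $(r,\Omega,\phi,Q)$, or equivalently for the Hawking mass $m$, the renormalized Hawking mass $\varpi=m+Q^2/2r$, and suitably rescaled gauge-covariant derivatives of $\phi$. I would then establish local existence and uniqueness for the characteristic initial value problem with data on $C^+_0$, in a function space adapted to the scale-invariant/BV norms used later, together with a sharp continuation criterion. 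In this step I would also record the propagation equations for the mass ratios $2m/r$, $Q^2/r^2$ and, crucially, for the reduced mass ratio, which is to play the role that $\mu=2m/r$ plays in the uncharged case.

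Second, I would prove the sharp trapped-surface-formation criterion: one quantifies, through a scale-critical quantity measuring the incoming flux of (the renormalized) $\partial_v\phi$ across an incoming cone, together with the charge contribution, how much concentration of energy forces the appearance of a trapped or marginally trapped surface. The new point here is to phrase the criterion in terms of the reduced mass ratio, so that the charged terms enter with a favorable sign rather than an uncontrolled one. Third, I would carry out the modified scale-critical BV area estimates: because $|\phi|$ is no longer translation invariant and $Q$ evolves nontrivially, I would work throughout with renormalized quantities --- the rescaled $D_u\phi$ and $D_v\phi$ and associated fluxes --- and establish scale-critical bounds on $r$, $\partial_u r$, $\partial_v r$ and on the BV norms of the relevant fluxes up to and slightly beyond the first singular point on the boundary of the maximal development, so that the geometry near the central singularity is quantitatively controlled.

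Fourth, to rule out naked singularities generically I would establish the $C^1$ extension criterion via the Doppler exponent: introduce the exponent measuring the accumulated redshift/blueshift along null geodesics, and show that if it remains bounded then $\phi$, and hence the full solution, extends in $C^1$ across the putative Cauchy horizon --- the charged analogue of the Beale-Kato-Majda breakdown criterion, with the Doppler factor playing the role of integrated vorticity. Then, assuming toward a contradiction the existence of a first singular point from which a null ray escapes to future null infinity, I would prove the first and second instability theorems purely within double-null foliations: a first perturbation of the incoming data near the singular point produces a future-trapped surface, and a second perturbation, driven by the quantitative trapped-surface criterion of Step 2, yields a complete and regular piece of null infinity, with every estimate tracking the precise contribution of the charged terms through the reduced mass ratio. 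A codimension-counting argument then packages these instabilities: the set of data on $C^+_0$ whose maximal development fails to possess a complete future null infinity has positive codimension, which is exactly the genericity claimed in Theorem~\ref{main theorem rough version}.

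The step I expect to be the main obstacle is the instability argument (Steps four and five). In the uncharged case Christodoulou leans on several ``miraculous'' sign-definite monotonicities --- for $\partial_u r$, for the mass, and for quantities built from $1-\mu$ --- that are destroyed once $Q^2/r^2$ enters: the charged terms in the equations for $\partial_u\partial_v r$ and for the mass do not carry a fixed favorable sign, and the blueshift near an inner/Cauchy horizon can amplify $\phi$. Making the perturbation argument go through requires showing that along the relevant null segments the reduced mass ratio still obeys the monotone-type inequality needed to force a trapped surface, and simultaneously that the BV fluxes of the renormalized scalar field do not blow up faster than the Doppler exponent can be controlled. Deploying the reduced-mass-ratio path and the Doppler-exponent criterion in tandem, with a careful accounting of exactly how the charged terms feed into each, is where the bulk of the new work lies.
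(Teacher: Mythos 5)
Your proposal follows the same six-step blueprint as the paper: novel estimates for $Q$ and $\phi$ via the structure of $\partial_v r/(1-\mu)$, the sharp trapped-surface criterion via the reduced mass ratio, the renormalized scale-critical BV area estimates, the $C^1$ extension criterion via the Doppler exponent, the two instability theorems carried out entirely in double-null gauge, and the codimension-two exceptional set. Your diagnosis of where the obstacles lie (loss of the sign-definite monotonicities once $Q$ enters, and the delicate interplay of blueshift with the BV fluxes in the instability arguments) also matches the paper's account.

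One caveat is worth flagging: you suggest tracking the renormalized Hawking mass $\varpi = m + Q^2/2r$ as a natural substitute. The quantity the paper actually introduces, the reduced mass ratio
\[
\eta^*(u) = \frac{2\,(m_2 - m_1)}{r_2} \;-\; \frac{2}{r_2}\int_{v_1}^{v_2} \frac{Q^2\,\partial_v r}{2 r^2}\,\mathrm{d}v',
\]
is \emph{not} $2(\varpi_2-\varpi_1)/r_2$ --- the paper explicitly remarks that the additional $Q^2$ term is ``not from the modified Hawking mass.'' The integral form is tailored so that $-\partial_u\partial_v m + \partial_u\big(Q^2\partial_v r/(2r^2)\big)$ collapses to a pure scalar-field flux derivative, which is what drives the ODE in $\eta^*$ along the incoming cone. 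If one tried the same computation with $\partial_v\varpi$, the cancellation would not be as clean, and the sharp log-criterion of Theorem~\ref{Trapped} (with the crucial factor $r_1(u_0)/(1-\mu^*(u_0))$ that feeds the instability theorems) would not follow in the same way. So the reduced mass ratio is a genuinely new ansatz rather than a repackaging of the standard $\varpi$, and its specific form is what lets the instability argument close.
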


\begin{figure}[htbp]
\begin{minipage}[!t]{0.7\textwidth}
\centering
\begin{tikzpicture}
	\begin{scope}[thick]
	\draw[-] (0,0) node[anchor=north]{$\mathcal{O}$} --  (0,8) node[anchor = east]{$\mathcal{O}'$};
	\draw[-] (0,0) -- (8,8);
        \node[rotate=45] at (5.2,4.8) {$C_0^+$};
        \node[] at (-0.3,5.0) {$\Gamma$};
        \fill[gray!50] (0,8) to (1.5,9.5) to[out = 45,in=135] (6,10) to[out = -135, in = 45] (0,8);
        \draw[] (0,8) -- (4,4);
        \draw[densely dashed] (0,4) -- (6,10);
        \node[rotate=45] at (4.2,7.8) {$\mathcal{EH}$};
        \draw[] (6,10) -- (8,8);
        \node[rotate = -45] at (7.2,9.2){$\mathcal{I}^+$};
        \draw[thick, densely dotted] (0,8) to (1.5,9.5);
        \draw[thick, loosely dotted] (1.5,9.5) to [out = 45,in=135] (6,10);
        \draw[fill=white] (1.5,9.5) circle (2pt);
        \node[rotate=12] at (3.6,10.9) {$\mathcal{B} \setminus \mathcal{B}_0$};
        \node[rotate=45] at (0.55,8.95) {$\mathcal{B}_0$};
        \draw[thick, dashdotted] (0,8) to[out = 45, in = - 135] (6,10);
        \node[] at (3.2,8.8) {$\mathcal{A}$};
        \node[] at (3.7,9.8) {$\mathcal{BH}$};
        \node[anchor = west] at (9.8,10.2) {\ul{Legend:}};
        \node[anchor = west] at (11,9.5) {$\mathcal{B}_0$}; 
        \draw[thick, densely dotted] (10,9.3) to (11,9.7);
        \node[anchor = west] at (11,9) {$\mathcal{B} \setminus \mathcal{B}_0$}; 
        \draw[thick, loosely dotted] (10,8.8) to (11,9.2);
        \node[anchor = west] at (11,8.5) {$\mathcal{A}$}; 
        \draw[thick, dashdotted] (10,8.3) to (11,8.7);
        \draw[fill=white] (0,8) circle (2pt);
        \draw[fill=white] (6,10) circle (2pt);
	\end{scope}
\end{tikzpicture}
\end{minipage}
\caption{In above picture, $\Gamma$ corresponds to the set of fixed points under the group action $SO(3)$; $\mathcal{O}'$ stands for the first singularity formed along $\Gamma$ in the evolution; $\mathcal{O}'$ and the singular boundary $\mathcal{B}$ are censored by the black-hole region ($\mathcal{BH}$); $\mathcal{A}$ is the apparent horizon; $\mathcal{EH}$ is the event horizon; $\mathcal{I}^+$ is the future null infinity.}
\end{figure}
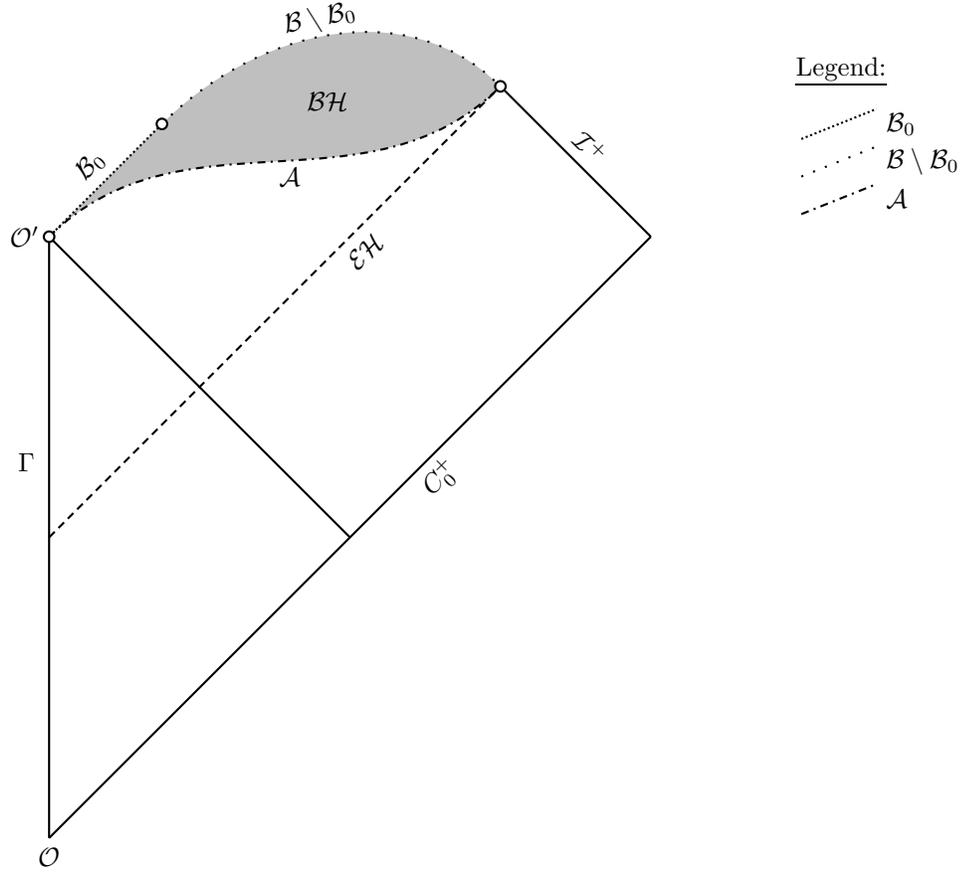

\begin{remark}
The precise version of Theorem \ref{main theorem rough version} combines our statements in Theorem \ref{Trapped}, Theorem \ref{AreaLemma3}, Theorem \ref{SET}, Theorem \ref{FITTheorem}, Theorem \ref{SITTheorem} and Theorem \ref{ExceptionalSet}. 
The conclusion in Theorem \ref{main theorem rough version} also holds for generic asymptotically flat Cauchy initial data. For the convenience of observing the instability mechanism, we state our main theorem with characteristic initial data.
\end{remark}

Our proofs in this paper are divided into six main steps: 
\begin{enumerate}
\item \textit{Find Novel Estimates for $Q$ and $\phi$}.
In Section \ref{Q and phi}, we derive new estimates for $\phi$ and $Q$ and develop a novel strategy crucially employed throughout the paper to control related charged terms. 

\item \textit{Obtain the Sharp Criterion for Formation of Trapped Surfaces via a New Approach}.
In Section \ref{TSF}, via a novel approach, we prove the desired trapped-surface-formation criterion for the charged scenario and state it as Theorem \ref{Trapped}. Specifically, restricted to the $Q=0$ case, in Appendix \ref{Appendix A}, we give a 2-page new proof of Christodoulou's sharp criterion for the Einstein-(real) scalar field system. 

\item \textit{Establish Scale-Critical $BV$ Area Estimates with Renormalizations}. In this step, the objective is to establish technical BV area estimates that are crucial for the entire regularity argument.  Section \ref{BV Area Estimates} is dedicated to proving Theorem \ref{AreaLemma3} for the charged case. Incorporating the evolution of $Q(u,v)$ into the system breaks the transitional symmetry of $\phi(u,v)$ and complicates the system. These require us to introduce new renormalizations and to conduct heavy intricate calculations. In the process, the complex structure of the system is carefully explored and utilized. 

\item \textit{Present a New $C^1$ Extension Principle related to the Blueshift Effect}. To connect with the instability arguments, we propose and certify a new $C^1$ extension principle (Theorem \ref{SET}) based on the Doppler exponent $\gamma$, which resembles the Beale-Kato-Majda type breakdown criterion, requiring the boundness of the integration of a certain physical quantity. The proof of an equivalent $C^1$ extension principle for the uncharged scalar field is also listed as Appendix \ref{Appendix B}.

\item \textit{Prove the First Instability Theorem with Double Null Foliations}. In Section \ref{First Instability Theorem}, we prove and summarize the first-instability conclusions as Theorem \ref{FITTheorem} with double null foliations. Christodoulou employed Bondi formalism in his original arguments. Our proofs for both the first and second instability theorems inject new ingredients about the charge and associated estimates that are central and delicate to the overall argument. 

\item \textit{Vindicate the Second Instability Theorem with Double Null Foliations}. In Section \ref{Second Instability Theorem}, we further confirm the second instability mechanism and prove Theorem \ref{SITTheorem}. All the aforementioned components are interconnected, constituting an integrated argument. These eventually lead to Theorem \ref{ExceptionalSet} in Section \ref{ExceptionalSetSection} describing conclusions of genericness.  

\end{enumerate}

To prove the weak cosmic censorship conjecture for the spherically symmetric Einstein-Maxwell-charged scalar field system, we incorporate these steps with the following coherent progression of logical arguments:

\begin{itemize}

\item[-] Near the center $\Gamma$ portrayed below, we start to consider the characteristic initial value problem with initial data prescribed along $C_0^+$. Applying Theorem \ref{AreaLemma3}, we extend the BV existence region toward the future, with scale-critical BV area estimates controlled.

\item[-] Assume that $O'$ is the first singular point along $\Gamma$. For any regular point $R$ along $\Gamma$ before $O'$, we consider a triangular shadowed region with the future tip at $R$. Within this region, a blueshift factor $e^\gamma$ (referred to as the Doppler factor, where $\gamma$ is termed the Doppler exponent) is bounded. Physically, the Doppler factor $e^\gamma$ measures the blue shift of a photon emitted backward from any $(u,v)$ in the shadowed region and measured by an observer located at $(0,v)$ along $C_0^+$. As $R$ approaches $O'$ along $\Gamma$, by employing Theorem \ref{SET}, along the backward cone $v=v_0$ emitting from $\mathcal{O}'$, we have $\gamma\rightarrow+\infty$ as it approaches $O'$. 
\item[-] The fact that $\gamma\rightarrow +\infty$ along $v=v_0$ triggers the blueshift-driven instability mechanism. We then employ the first and the second instability theorems (together with the trapped surface formation criterion) in the region between $v=v_0$ and $v=v_1$ to arrive at the main conclusion.

\end{itemize}

\begin{figure}[ht]
\begin{minipage}[!t]{0.7\textwidth}
	\centering
\begin{tikzpicture}
\begin{scope}[thick]
        \fill[gray!50] (0,0) to (1.75,1.75) to (0,3.5) to (0,0);
	\draw (0,0) node[anchor=north]{$\mathcal{O}$} --  (0,5);
	\draw[->] (0,0) -- (4,4) node[anchor = west]{$C_0^+$};
        \node at (-0.35,3.5) {$R$};
        \node at (-0.3,4.3) {$\Gamma$};
        \node at (-0.3,5.3) {$\mathcal{O}'$};
	\draw(0,3.5) -- (1.75,1.75);
        \draw[thick,dotted](0,4) -- (2,2);
        \draw[thick,dashed](0,5) -- (2.5,2.5);
        \node at (0,3.5)[circle,fill,inner sep=1.5pt]{};
        \node at (0,4)[circle,inner sep=1.5pt]{};
        \node[rotate=-45] at (1.4,4) {$v = v_0$};
        \draw[fill=white] (0,5) circle (2pt);
	\draw (7,0) node[anchor=north]{$\mathcal{O}$} --  (7,5);
	\draw[->] (7,0) -- (11,4) node[anchor = west]{$C_0^+$};
        \node at (6.7,4) {$\Gamma$};
        \node at (6.7,5.3) {$\mathcal{O}'$};
        \draw[thick](7,5) -- (9.5,2.5);
        \node at (7,4)[circle,inner sep=1.5pt]{};
        \node[rotate=-45] at (8.4,4) {$v = v_0$};
        \draw[fill=white] (7,5) circle (2pt);
        \draw[thick,dashed](7,5) -- (8,5);
        \draw[thick](8,5) -- (10,3);
        \node[rotate=-45] at (9.4,4) {$v = v_1$};
        \node[] at (7.5,5.3) {$\mathcal{AH}$};
\end{scope}
\end{tikzpicture}
\end{minipage}
\end{figure}

\subsection{Equations} Before we explain our proof strategy, we review several physical quantities and state their evolution equations below. The \textit{Hawking mass} $m(u,v)$ for sphere $S(u,v)$ on $\mathcal{Q}$ is defined via
\begin{equation}\label{Setup5}
1 - \frac{2m}{r} = g(\nabla r, \nabla r) = - 4\Omega^{-2}\du r \dv r.
\end{equation}
The \textit{mass ratio} $\mu(u,v)$ for sphere $S(u,v)$ is further set to be
\begin{equation}\label{Setup6}
\mu(u,v) := \frac{2m(u,v)}{r(u,v)}.
\end{equation} Given the electromagnetic $2$-form $F(u,v)$ in \eqref{Intro1}, we obtain the charge $Q(u,v)$ contained in a sphere $S(u,v)$ implicitly by solving
\begin{equation}\label{Setup7}
F(u,v) =  F_{uv}\; \mathrm{d}u \wedge \mathrm{d}v =\frac{Q(u,v)}{2r^2(u,v)}\Omega^2(u,v) \; \mathrm{d}u \wedge \mathrm{d}v.
\end{equation}
Equivalently, an explicit expression for $Q(u,v)$ is given by
\begin{equation}\label{Setup8}
Q(u,v) := 2r^2 \Omega^{-2}F_{uv}(u,v).
\end{equation} 

At the same time, notice that each electromagnetic field $F_{uv}$ is associated with a $1$-form $A_u$, called the electromagnetic potential. Furthermore, $F_{uv}$ can be rewritten as
\begin{equation}\label{Setup9}
F_{uv} = \du A_v - \dv A_u. 
\end{equation}
As explained in \cite{kommemi}, when $\e \neq 0$, the system \eqref{Intro1}-\eqref{Intro2} is invariant under the following local $U(1)$ gauge transformations:
\begin{equation}\label{Setup10}
\begin{aligned}
\phi \rightarrow e^{-\e \ii f}\phi, \quad  A_\mu \rightarrow A_\mu + \partial_\mu f
\end{aligned}
\end{equation}
with $f$ being a smooth real-valued function to be fixed. In this study, we impose the gauge condition 
$$A_v = 0$$
by choosing the appropriate $f$. With this gauge choice, we reduce \eqref{Setup8} and \eqref{Setup9} to
\begin{equation}\label{Setup11}
\begin{aligned}
F_{uv} = -\dv A_u \quad \mbox{ and } \quad Q(u,v) = -2r^2\Omega^{-2} \dv A_u.
\end{aligned}
\end{equation}
Substituting \eqref{Setup4} and \eqref{Setup11} back to the field equations \eqref{Intro1} and \eqref{Intro2}, we then derive the following evolution system:
\begin{equation}\label{Setup12}
r \dv \du r + \dv r \du r = -\frac{\Omega^2}{4}\ml 1- \frac{Q^2}{r^2}\mr,
\end{equation}
\begin{equation}\label{Setup13}
r^2 \du \dv \log(\Omega) = -2\pi r^2(D_u \phi (\dv \phi)^\dagger + \dv \phi(D_u \phi)^\dagger) - \frac{1}{2}\Omega^2 \frac{Q^2}{r^2} + \frac{1}{4}\Omega^2 + \du r \dv r,
\end{equation}
\begin{equation}\label{Setup14}
\du(\Omega^{-2}\du r) = - 4\pi r\Omega^{-2}|D_u\phi|^2,
\end{equation}
\begin{equation}\label{Setup15}
\dv(\Omega^{-2}\dv r) = - 4\pi r\Omega^{-2}|\dv \phi|^2,
\end{equation}
\begin{equation}\label{Setup16}
r\du \dv \phi + \du r \dv \phi + \dv r \du \phi + \e \ii \Psi(A) = 0,
\end{equation}
\begin{equation}\label{Setup17}
\Psi(A) = A_u \dv(r\phi) - \frac{\Omega^2}{4}\frac{Q}{r}\phi,
\end{equation}
\begin{equation}\label{Setup18}
\du Q = 4\pi \e r^2 \im{(\phi^\dagger D_u \phi)},
\end{equation}
\begin{equation}\label{Setup19}
\dv Q = -4\pi \e r^2 \im{(\phi^\dagger \dv \phi)}.
\end{equation}
Note that in the equations above we use the notation $D_u := \du + \ii \e A_u$ for the covariant derivative. Moreover,the coupling constant $\e$ here satisfies $0 \leq \e \in \mathbb{R}$. We also emphasize that, in the above system, $\phi$ is a complex-valued unknown function, while $r, A_u, \Omega^2, \im(\phi)$,  $\re(\phi)$ are real-valued unknown functions. 

In our arguments, derived from \eqref{Setup12}, \eqref{Setup14}, \eqref{Setup15}, \eqref{Setup17},  we also frequently employ the following equations:
\begin{equation}\label{Setup20}
\begin{aligned}
\dv \du r &= \frac{1}{r}\ml \frac{\mu - Q^2/r^2}{1-\mu}\mr \du r \dv r,
\end{aligned}
\end{equation}
\begin{equation}\label{Setup21}
\begin{aligned}
\du \ml \frac{1-\mu}{\dv r}\mr &= \frac{4 \pi r |D_u \phi|^2(1-\mu)}{(-\du r)(\dv r)},
\end{aligned}
\end{equation}
\begin{equation}\label{Setup22}
\begin{aligned}
\dv \ml \frac{1-\mu}{-\du r}\mr &= -\frac{4 \pi r |\dv \phi|^2(1-\mu)}{(-\du r)(\dv r)},
\end{aligned}
\end{equation}
\begin{equation}\label{Setup71}
\begin{aligned}
\Psi(A) &= A_u \dv(r\phi) + \frac{Q \phi (\du r) (\dv r)}{r(1-\mu)}.
\end{aligned}
\end{equation}
For the wave equation in \eqref{Setup16}, by \eqref{Setup17}, we also obtain and utilize the following equivalent forms:
\begin{equation}\label{Setup72}
\begin{aligned}
\du \ml r \dv \phi \mr + (\dv r)(\du \phi) &= - \ii \e A_u \dv (r\phi) - \ii \e \frac{Q \phi (\du r) (\dv r)}{r(1-\mu)}, \\
\dv \ml r \du \phi \mr + (\du r)(\dv \phi) &= - \ii \e A_u \dv (r\phi) - \ii \e \frac{Q \phi (\du r) (\dv r)}{r(1-\mu)},
\end{aligned}
\end{equation}
and 
\begin{equation}\label{Setup73}
\begin{aligned}
D_u(r\dv\phi) + \dv r(D_u \phi) &= -\ii\e \frac{Q\phi(\du r)(\dv r)}{r(1-\mu)}, \\
\dv(r D_u \phi) + \du r(\dv \phi) &= \ii\e \frac{Q\phi(\du r)(\dv r)}{r(1-\mu)}.
\end{aligned}
\end{equation}
It is also instructive to compute the evolution equations for $m$ and for $\mu$. These equations are
\begin{equation}\label{Setup24}
\begin{aligned}
 \du m 
&= -\frac{2\pi r^2 (1-\mu) |D_u \phi|^2}{(-\du r)} -  \frac{Q^2 (-\partial_u r)}{2 r^2},
\end{aligned}
\end{equation}
\begin{equation}\label{Setup25}
\begin{aligned}
\dv m &= \frac{2\pi r^2 (1-\mu)|\dv \phi|^2}{\dv r} +  \frac{Q^2 \dv r}{2 r^2},
\end{aligned}
\end{equation}
\begin{equation}\label{Setup26}
\begin{aligned}
\du \mu 
&= - \frac{4 \pi r (1-\mu) |D_u \phi|^2}{(-\du r)} + \frac{(-\du r)}{r}\ml \mu - \frac{Q^2}{r^2}\mr,
\end{aligned}
\end{equation}
\begin{equation}\label{Setup27}
\begin{aligned}
\dv \mu 
&= \frac{4 \pi r (1-\mu) |\dv \phi|^2}{\dv r} - \frac{\dv r}{r} \ml \mu - \frac{Q^2}{r^2}\mr.
\end{aligned}
\end{equation}
In addition, the terms $\dv(Q^2/r^2)$, $\du(Q^2/r^2)$, $\dv A_u$ appear fairly often in later sections. We can obtain their expressions via using \eqref{Setup18}, \eqref{Setup19},  and \eqref{Setup11} as follows:
\begin{equation}\label{Setup29}
\dv \ml \frac{Q^2}{r^2}\mr = -8 \pi \e Q \im(\phi^\dagger \dv \phi) - \frac{2 Q^2 \dv r}{r^3},
\end{equation}
\begin{equation}\label{Setup29.1}
\du \ml \frac{Q^2}{r^2}\mr = 8 \pi \e Q \im(\phi^\dagger D_u \phi) - \frac{2 Q^2 \du r}{r^3},
\end{equation}
\begin{equation}\label{Setup30}
\dv A_u = \frac{2 Q \du r \dv r}{r^2(1-\mu)}.
\end{equation}

\noindent A physical quantity of great interest in this paper is the \textit{Doppler exponent}. For any $(u,v), (\overline{u},v) \in \mathcal{Q}$, we define $\gamma(u,v;\overline{u})$ to be
\begin{equation}\label{Setup90}
\gamma(u,v;\overline{u}) := \int_{\overline{u}}^u \frac{-\du r}{r}\frac{\mu - Q^2/r^2}{1-\mu}(u',v) \D u'.
\end{equation}
In particular, if we consider initial data prescribed along the $u = 0$ hypersurface, we denote $\gamma(u,v)$ to be the associated \textit{Doppler exponent} at $(u,v)$, given by
\begin{equation}\label{gamma}
\gamma(u,v) := \int_{0}^u \frac{-\du r}{r}\frac{\mu - Q^2/r^2}{1-\mu}(u',v) \D u'.
\end{equation}
For the uncharged case, in \cite{christ4, Chr.99} Christodoulou employed this quantity to study the instability mechanism. In this paper, in addition to the instability theorems, we also use $\gamma(u,v)$ to establish the $C^1$ extension theorem. Note that the etymology of this term in this paper stems from the following \textit{Doppler identity} derived from \eqref{Setup20}:
\begin{equation}\label{Setup92}
(\dv r)(0,v) = (\dv r)(u,v) \exp \ml \int^u_0 \frac{-\du r}{r}\frac{\mu - Q^2/r^2}{1-\mu}(u',v) \D u' \mr = (\dv r)(u,v)e^{\gamma(u,v)}.
\end{equation}
Furthermore, we will refer to $e^{\gamma(u,v)}$ as the Doppler factor, and $\gamma(u,v)$ as the associated Doppler exponent. Physically, we can view the quantity $\dv r(u,v)$ as a proxy for the local scale factor, which measures the wavelength of a photon emitted (to the past) from $(u,v)$. The Doppler identity in \eqref{Setup92} implies that an observer at $(0,v)$ would observe the same photon with its wavelength scaled by a factor of $e^{\gamma(u,v)}$. On the other hand, if a photon is emitted from $(0,v)$ towards $(u,v)$, an observer at $(u,v)$ would observe the same photon with its wavelength scaled by a factor of $e^{-\gamma(u,v)}$. In both cases, if $\gamma(u,v) > 0$, the former corresponds to the \textit{redshift effect}, while the latter corresponds to the \textit{blueshift effect}. The \textit{Doppler factor} is thus a proxy for the extent in which the wavelength of a photon is being redshifted or blueshifted. A parallel argument using the notion of proper time and length to justify the redshift and blueshift effects is given in \cite{dafermos2}.

It is worth noting that along $\Gamma$ the Doppler exponent $\gamma(v,v)$ defined in \eqref{gamma} is the integral of a dimensionless quantity. As we will see in Theorem \ref{SET}, the breakdown of regularity is signaled by the divergence of $\gamma(v,v)$. This is consistent with Christodoulou's comment in \cite{Chr.99} on finding an analogous quantity to the integration of vorticity in the breakdown criterion for the incompressible 3D Euler equation as in \cite{BKM}.

We can further compute the derivatives of $\gamma$ with respect to $u$ and $v$ as follows. For the former, we have
\begin{equation}\label{Setup93}
\du \gamma(u,v) = \frac{-\du r}{r}\frac{\mu - Q^2/r^2}{1-\mu}(u,v)
\end{equation}
if $\frac{-\du r}{r}\frac{\mu - Q^2/r^2}{1-\mu}(u,v)$ is an integrable function in its corresponding domain. For the latter, we  first calculate
\begin{equation}\label{Setup94}
\begin{aligned}
&\dv \ml \frac{-\du r}{r} \frac{\mu - Q^2/r^2}{1-\mu}\mr \\
=\; & \dv \ml \frac{-\du r}{r}\mr \ml \frac{\mu - Q^2/r^2}{1-\mu}\mr + \ml \frac{-\du r}{r}\mr \dv \ml \frac{1-Q^2/r^2}{1-\mu} - 1\mr \\
=\; & \frac{(-\du \dv r) r + (\dv r)(\du r)}{r^2}\ml \frac{\mu - Q^2/r^2}{1-\mu}\mr  + \ml \frac{-\du r}{r}\mr \ml \frac{-\dv(Q^2/r^2)}{1-\mu} + \frac{1-Q^2/r^2}{(1-\mu)^2} \dv \mu \mr.
\end{aligned}
\end{equation}
With \eqref{Setup20}, we have
\begin{equation}\label{Setup95}
\frac{(-\du \dv r) r + (\dv r)(\du r)}{r^2}\ml \frac{\mu - Q^2/r^2}{1-\mu}\mr  = \frac{(\dv r)(\du r)}{r^2} \ml \ml \frac{\mu - Q^2/r^2}{1-\mu}\mr \ml 1 - \frac{\mu - Q^2/r^2}{1-\mu} \mr \mr.
\end{equation}
Employing \eqref{Setup27} and \eqref{Setup29}, we also rewrite
\begin{equation}\label{Setup96}
\begin{aligned}
& \ml \frac{-\du r}{r}\mr \ml \frac{-\dv(Q^2/r^2)}{1-\mu} + \frac{1-Q^2/r^2}{(1-\mu)^2} \dv \mu \mr \\
=\; & \ml \frac{-\du r}{r}\mr \ml \frac{8 \pi \e Q \im(\phi^\dagger \dv \phi)}{1-\mu} + \frac{2 Q^2 \dv r}{r^3(1-\mu)} + \frac{1-Q^2/r^2}{(1-\mu)^2} \ml \frac{4 \pi r (1-\mu)|\dv \phi|^2}{\dv r} + \frac{- \dv r}{r}\ml \mu - \frac{Q^2}{r^2} \mr \mr \mr \\
=\; &\ml \frac{\du r \dv r}{r^2} \mr \ml - \frac{2 Q^2}{r^2(1-\mu)} + \frac{(\mu - Q^2/r^2)(1-Q^2/r^2)}{(1-\mu)^2}\mr \\
 & \; + \ml \frac{- \du r}{r}\mr\ml  \frac{8 \pi \e Q \im(\phi^\dagger \dv \phi)}{1-\mu} \mr + \ml \frac{- \du r}{r}\mr \ml \frac{4 \pi r (1-Q^2/r^2)|\dv \phi|^2}{\dv r(1-\mu)} \mr.
\end{aligned}
\end{equation}
Combining \eqref{Setup95} and \eqref{Setup96} back to \eqref{Setup94}, we hence obtain
\begin{equation}\label{Setup97}
\begin{aligned}
\dv \ml \frac{-\du r}{r} \frac{\mu - Q^2/r^2}{1-\mu}\mr = T_1 + T_2 + T_3
\end{aligned}
\end{equation}
with
\begin{equation}\label{Setup98}
T_1 := \frac{-\du r}{r}\ml \frac{8 \pi \e Q \im(\phi^\dagger \dv \phi)}{1-\mu}\mr, \; T_2 := \frac{-\du r}{r}\ml \frac{4\pi r(1-Q^2/r^2)|\dv \phi|^2}{\dv r(1-\mu)}\mr, \;
T_3 := \frac{2 \du r \dv r}{r^2}\ml \frac{\mu - 2Q^2/r^2}{1-\mu}\mr. 
\end{equation}
Back to \eqref{gamma}, this implies 
\begin{equation}\label{Setup99}
\dv \gamma(u,v) = \int_0^u (T_1 + T_2 + T_3)(u',v) \D u'
\end{equation}
if we can justify the interchange between the derivative and the integral.

\subsection{Key Elements and New Ingredients} In this subsection, we outline some central ideas and remark on several new ingredients for each of the important steps.  

\subsubsection{Novel Estimates for $Q$ and $\phi$} In Section \ref{Q and phi}, we will show that $|Q|$ and $|Q\phi|$ enjoy enhanced estimates, which are heavily used throughout this paper. To explain this, we start by listing the estimate for $\phi$:
\begin{equation}\label{phi intro}
    \begin{split}
    |\phi|(u,v)\leq& \; |\phi|(\bar{u}, v)+\int_{\bar{u}}^u|D_u\phi|(u',v)du'\\
    \leq& \; |\phi|(\bar{u},v)+\frac{1}{(4\pi)^{{\frac12}}}\bigg(\int_{\bar{{u}}}^u\frac{4\pi r|D_u\phi|^2}{-\partial_u r}(u',v)du'\bigg)^{\frac12}\bigg(\int_{\bar{u}}^u\frac{-\partial_u r}{r}(u',v)du'\bigg)^{\frac{1}{2}}\\
    \leq& \; |\phi|(\bar{u},v)+\frac{1}{(4\pi)^{\frac12}}\bigg(\log \ml \frac{\frac{1-\mu}{\partial_v r}(u,v)}{\frac{1-\mu}{\partial_v r}(\bar{u},v)} \mr \bigg)^\frac12 \bigg(\log\ml \frac{r(\bar{u},v)}{r(u,v)} \mr \bigg)^\frac12.
    \end{split}
\end{equation}
Here, data along $u=\bar{u}$ are considered given. The first inequality in \eqref{phi intro} utilizes Gr\"onwall's inequality while  the second inequality employs $$\partial_u \log \ml \frac{1-\mu}{\partial_v r} \mr =\frac{4\pi r|D_u\phi|^2}{-\partial_u r}.$$
Note that the above identity also implies $\du (\frac{\dv r}{1-\mu})\leq 0$, and this could give an upper bound for $\frac{\dv r}{1-\mu}$. However, it fails to offer an upper bound for $\frac{1-\mu}{\dv r}$ and $\log\frac{\frac{1-\mu}{\partial_v r}(u,v)}{\frac{1-\mu}{\partial_v r}(\bar{u},v)}$. Hence, potentially utilizing \eqref{phi intro} to control $\phi$ and related terms could lead to non-uniform bounds. \\

In this paper, we overcome this problem and similar difficulties by spotting and employing a new strategy. Take the estimate for $|Q|$ as an example. We first consider the expression for $\int_u^v \partial_v Q(u, v')dv'$. Applying the above estimate \eqref{phi intro} for $\phi$ and the H\"older's inequality, as shown in \eqref{Estimates6}-\eqref{Estimates11}, for each give $\chi\in(0,1)$, we obtain
\begin{equation*}
\begin{aligned}
|Q|&(u,v)\leq \int^v_u 4\pi \e r^{2}|\phi||\dv \phi| (u,v') \mathrm{d}v' \\
\leq& \; 4\pi \e  \int^v_u \frac{r^2(u,v')}{(4\pi)^{\frac{1}{2}}}\ml  \log \ml \frac{\frac{1 - \mu}{ \dv r}(u,v')}{\frac{1 - \mu}{ \dv r}(\overline{u},v')}\mr  \mr^{\frac{1}{2}}\ml \log \ml \frac{r(\overline{u},v')}{r(u,v')}\mr \mr^{\frac{1}{2}}|\dv \phi| (u,v') \mathrm{d}v'+l.o.t.  \\
\leq & \; \frac{1}{(8\pi^2)^{\frac{1}{2}}}\ml \int^v_u \frac{2\pi r^2(1-\mu)|\dv \phi|^2}{\dv r} (u,v') \mathrm{d}v'\mr^{\frac{1}{2}} \\
&\times \ml \int^v_u \frac{ r^2 \dv r}{1 - \mu}(u,v') \cdot \ml \log \ml \frac{\frac{1 - \mu}{ \dv r}(u,v')}{\frac{1 - \mu}{ \dv r}(\overline{u},v')}\mr  \mr \cdot \ml \log \ml \frac{r(\overline{u},v')}{r(u,v')}\mr \mr \mathrm{d}v'\mr^{\frac{1}{2}}+l.o.t. \\
\leq & \; \frac{m^{\frac12}(u,v)}{(8\pi^2)^{\frac{1}{2}}} \ml \int^v_u r^2 (u,v') \cdot \ml \frac{\dv r}{1 - \mu}(\overline{u},v') \mr \cdot 
\ml \frac{\frac{\dv r}{1 - \mu}(u,v')}{\frac{\dv r}{1 - \mu}(\overline{u},v')}\mr \cdot \ml \log \ml \frac{\frac{1 - \mu}{ \dv r}(u,v')}{\frac{1 - \mu}{ \dv r}(\overline{u},v')}\mr  \mr \cdot \ml \log \ml \frac{r(\overline{u},v')}{r(u,v')}\mr \mr \mathrm{d}v'\mr^{\frac{1}{2}}+l.o.t. \\
\end{aligned}
\end{equation*}
\begin{equation*}
\begin{aligned}
\leq & \; C_1(\bar{u}, v; \chi) \mu^{\oh}(u,v)r^{\oh}(u,v) \cdot \ml \int^v_u r^{2-\chi} (u,v') \cdot r^{\chi}(\overline{u},v') \cdot \ml \frac{r^{\chi}(u,v')}{r^{\chi}(\overline{u},v')}\mr \cdot
\ml \log \ml \frac{r(\overline{u},v')}{r(u,v')}\mr \mr \mathrm{d}v'\mr^{\frac{1}{2}}+l.o.t. \\
\leq & \; C_1(\bar{u}, v; \chi) \mu^{\oh}(u,v) \cdot r^{\frac{3}{2} - \frac{\chi}{2}}(u,v) \cdot r^{\frac{\chi}{2}}(\overline{u},v) \cdot (v - u)^{\oh} \\
\end{aligned}
\end{equation*}
with constant $C_1(\bar{u}, v; \chi)$ provided in \eqref{Estimates11}. Here, to get rid of the dangerous $\bigg(\log\bigg(\frac{\frac{1-\mu}{\partial_v r}(u,v)}{\frac{1-\mu}{\partial_v r}(\bar{u},v)} \bigg) \bigg)^\frac12$ term in the second line, we crucially utilize the fact that
\textit{after applying the H\"older's inequality for $|\partial_v\phi|$, there is an additional $\frac{\dv r}{1-\mu}$ introduced in the expression (as in the third inequality).} By viewing $\frac{\partial_v r}{1-\mu}$ as $x$, we then employ the inequality that, for any $x \in (0,1)$ and $c > 0$, we have 
$$x^c \log \ml \frac{1}{x}\mr \leq \frac{1}{ce}.$$ 
Hence, a uniform upper bound is obtained, and the worrisome term in the logarithm is thus controlled. Note that this strategy is employed throughout this paper. With this strengthened bound for $|Q|$, we arrive at yet another frequently used estimate, as follows:
\begin{equation}
\bigg(\mu-\frac{Q^2}{r^2}\bigg)(u,v)\geq \mu(u,v) \ml 1-C_1^2(\bar{u}, v; \chi)r^{1-\chi}(u,v)r^{\chi}(\bar{u},v)(v-u) \mr >0.     
\end{equation}

Besides $|Q|$, we also derive an improved estimate for $|Q\phi|$. This is done by utilizing $\partial_v Q=-4\pi e r^2 \mbox{Im}(\phi^{\dagger}\partial_v\phi)$ together with \eqref{Estimates20}-\eqref{Estimates33} to obtain
\begin{equation*}
\begin{aligned}
|Q\phi|^{2}(u,v)  \leq&\; \int^v_u 2 |Q\phi| \dv(Q\phi) (u,v') \D v' \\
\leq& \; \int^v_u 2  (4\pi \e r^2 |\phi|^{2} |Q|+ |\phi||Q|^{2})|\dv \phi| (u,v') \D v' \\
\leq& \; 2  \ml \int^v_u \frac{\dv r}{2\pi r^2 (1-\mu)}(4\pi \e r^2 |\phi|^{3} |Q| + |\phi||Q|^{2})^2  \D v' \mr^{\frac{1}{2}} \ml \int^v_u  \frac{2 \pi r^2(1-\mu)|\dv \phi|^2}{\dv r} \D v'\mr^{\frac{1}{2}}\\
\leq& \; 2  \ml \int^v_u \frac{\dv r}{2\pi r^2 (1-\mu)}(4\pi \e r^2 |\phi|^{3} |Q| + |\phi||Q|^{2})^2  (u,v') \D v' \mr^{\frac{1}{2}} m(u,v)^{\frac{1}{2}}\\
\leq& \; \mu(u,v)^{\frac12} r^{2-\frac{\chi}{2}}(u,v).
\end{aligned}
\end{equation*}
For the final inequality, we use \eqref{phi intro} to control $|\phi|$ and apply our new strategy with the gained factor $\frac{\partial_v r}{1-\mu}$. Note that $|Q\phi|$ obeys a better bound than the product of individual bounds for $|Q|$ and $|\phi|$.

\subsubsection{Trapped Surface Formation Criterion} In this paper, we take a new approach to establish the desired trapped surface formation criterion for the charged case. With these new arguments, in the appendix, we also give a direct proof for the uncharged case to retrieve Christodoulou's sharp criterion in \cite{christ1}. Let us begin by first explaining some key calculations in \cite{christ1}. With the mass ratio $\eta=\frac{m_2-m_1}{r_2}$ introduced by Christodoulou, for the uncharged case, we can compute the following:
\begin{equation}\label{eta intro1}
 \begin{split}
     \frac{\D \eta}{\D x} = \frac{\frac{\mathrm{d}\eta}{\mathrm{d}u}}{\frac{\mathrm{d}x}{\mathrm{d}u}} =-\frac{\eta}{x}-\frac{16\pi\dv r_2 \Omega_2^{-2}}{x\du r_2}\bigg(r_2^2|\du \phi_2|^2-\frac{\Omega_1^{-2}\dv r_1}{\Omega_2^{-2}\dv r_2}r_1^2|\du \phi_1|^2 \bigg).
 \end{split}   
\end{equation}
Based on the particular structures of the Einstein-scalar field system, via using $\Theta:=r_2\du \phi_2-r_1\du \phi_1$ and
several ingenious lemmas, one could further prove
\begin{equation}\label{eta intro2}
 \begin{split}
     r_2^2|\du \phi_2|^2-\frac{\Omega_1^{-2}\dv r_1}{\Omega_2^{-2}\dv r_2}r_1^2|\du \phi_1|^2\leq&\; r_2^2|\du \phi_2|^2-e^{\eta}r_1^2|\du \phi_1|^2\\
     \leq&\; \Theta^2+2\Theta\du \phi_1 r_1+(1-e^{\eta})r_1^2|\du \phi_1|^2\\
     \leq&\; \bigg(1+\frac{1}{e^{\eta}-1}\bigg)\Theta^2\leq\bigg(1+\frac{1}{\eta}\bigg)\Theta^2\\
     \leq&\; \bigg(1+\frac{1}{e^{\eta}-1}\bigg)\cdot\frac{\du r_2}{8\pi\Omega_2^{-2}\dv r_2}(m_2-m_1)\bigg(\frac{1}{r_2}-\frac{1}{r_1}\bigg).
 \end{split}   
\end{equation}
Together with \eqref{eta intro1}, this gives
\begin{equation}\label{eta intro2.5}
 \begin{split}
     \frac{d\eta}{d x}\leq&-\frac{\eta}{x}-\frac{2}{x}\bigg(1+\frac{1}{\eta}\bigg)\bigg(\frac{1}{r_2}-\frac{1}{r_1}\bigg)(m_2-m_1)\\
     =&-\frac{\eta}{x}+\frac{\eta}{x}\bigg(1+\frac{1}{\eta}\bigg)\bigg(\frac{r_2}{r_1}-1\bigg).
 \end{split}   
\end{equation}
Solving this ODE inequality then leads to a desired criterion for the formation of trapped surfaces. The first author and Lim tried to generalize this approach in \cite{christ1} to the charged scenario. However, for the charged case, a few inequalities in \eqref{eta intro2} cannot be as sharp as for the uncharged case. Hence, the result obtained by An-Lim in \cite{An-Lim} is slightly weaker than the desired one, making it incompatible with the instability arguments we seek to employ. \\

We then state our new approach in this paper as follows. We begin by demonstrating it for the uncharged case.\footnote{The complete proof for the uncharged case is in the appendix.} 
With the mass ratio $\eta$, we compute and rewrite the expression of its derivative as
\begin{equation}\label{eta intro3}
\frac{\D \eta}{\D x} = \frac{\frac{\mathrm{d}\eta}{\mathrm{d}u}}{\frac{\mathrm{d}x}{\mathrm{d}u}} = \frac{\du \ml \frac{2(m_2 - m_1)}{r_2} \mr}{\frac{\du r_2}{r_2(u_0)}} = - \frac{\eta}{x} - \frac{2}{x(\du r_2)}\int^{v_2}_{v_1} -\du \dv m \; \D v'.
\end{equation}
Employing $\dv m = 2\pi r^2(1-\mu)(\dv \phi)^2/\dv r$, we further calculate
\begin{equation*}
\begin{aligned}
- \du\dv m  &= -\du \ml \frac{2\pi r^2(1-\mu)(\dv \phi)^2}{\dv r} \mr \\
&= -2 \pi \ml \du\ml \frac{(1-\mu)}{\dv r}\mr r^2(\dv \phi)^2 + \ml \frac{2(1-\mu)}{\dv r}\mr r (\dv \phi) \du (r\dv \phi) \mr \\
&= 8\pi^2 r^3 (1-\mu) \ml  (\du r)(\dv r) \frac{(\du \phi)^2 (\dv \phi)^2}{(\du r)^2(\dv r)^2} - \frac{1}{2\pi r^2} (\du r)(\dv r) \frac{(\dv \phi)(\du \phi)}{(\dv r)(\du r)} \mr.
\end{aligned}
\end{equation*}
Utilizing the elementary inequality $-x^2 + ax \leq {a^2}/{4}$ for $x, a\in \mathbb{R}$, we immediately obtain
\begin{equation*}
\begin{aligned}
\int^{v_2}_{v_1} -\du \dv m \; \D v' &= \int^{v_2}_{v_1} 8\pi^2 r^3 (1-\mu)(-\du r)(\dv r) \ml  -\frac{(\du \phi)^2 (\dv \phi)^2}{(\du r)^2(\dv r)^2} + \frac{1}{2\pi r^2} \frac{(\dv \phi)(\du \phi)}{(\dv r)(\du r)} \mr \; \D v' \\
&\leq \int^{v_2}_{v_1} 8\pi^2 r^3 (1-\mu)(-\du r)(\dv r) \ml \frac{1}{16\pi^2 r^4}\mr \; \D v' \\
&\leq \frac{(-\du r)_2}{2}\int^{v_2}_{v_1} \frac{\dv r}{r} \D v' = \frac{(-\du r)_2}{2}\log \ml 1 + \frac{r_2(u) - r_1(u)}{r_1(u)}\mr\leq \frac{(-\du r)_2}{2}\delta(u).
\end{aligned}
\end{equation*}
Here, we define $\delta(u):=(r_2(u)-r_1(u))/r_1(u)$, and we further denote $\delta_0$ to be the corresponding initial value. 

Back to \eqref{eta intro3}, this leads to an ODE inequality of the form
\begin{equation*}
\begin{aligned}
\frac{\D \eta}{\D x} &= - \frac{\eta}{x} + \frac{\delta(u)}{x} \leq - \frac{\eta}{x} + \frac{\delta_0}{x(x(1+\delta_0)-\delta_0)}.
\end{aligned}
\end{equation*}
Upon solving the ODE inequality above, we then get the desired result. In this new approach, we tend to use identities instead of inequalities, which relies on fewer requirements of the structures of equations. \\

We now generalize this new approach to the charged scenario. We first define a new quantity

\begin{equation*}
\eta^*(u) := \frac{2(m(u,v_2) - m(u,v_1))}{r(u,v_2)} - \frac{2}{r(u,v_2)}\int^{v_2}_{v_1} \frac{Q^2(\dv r)}{2r^2}(u,v')\mathrm{d}v'.
\end{equation*}
Note that outside the trapped region, this quantity is less or equal to the standard mass ratio, and we refer to it  as the \textit{reduced mass ratio}, denoted by $\eta^*(u)$. We conclude that, for the charged case, if it satisfies the condition:
\begin{equation}\label{TSF2'}
\eta^*(u_0) > C_{24} \frac{\delta(u_0)}{1+\delta(u_0)} \log \ml \frac{1}{\delta(u_0)}\mr + C_{23} \ml \frac{\delta(u_0)r_1(u_0)}{1-\mu^*(u_0)}\mr
\end{equation}
with the constants $C_{23}$, $C_{24}$ given in \eqref{TSF51}, \eqref{TSF54}, and $\mu^*(u_0)$ being the supremum of $\mu(u,v)$ along $u=u_0$, then a MOTS or a trapped surface is guaranteed to form in the evolution.  We emphasize that this is a crucial criterion. The factor $\frac{1}{1-\mu^*(u_0)}$ is related to the blueshift effect, and the additional $r_1(u_0)$ term will play a pivotal role in establishing the corresponding instability theorems. 

For the proof of this conclusion, we proceed as in \eqref{eta intro3} and obtain
\begin{equation}\label{eta intro4}
    \begin{split}
    \frac{d\eta^*}{dx}=\frac{\frac{d\eta^*}{du}}{\frac{dx}{du}}=-\frac{\eta^*}{x}-\frac{2}{x(\du r_2)}\bigg(\int_{v_1}^{v_2}[-\du \dv m+\du \ml \frac{Q^2\dv r}{2r^2}\mr] \bigg).    \end{split}
\end{equation}
Applying the equations for $m$ and for $Q$, we get
\begin{equation}\label{eta intro5}
\begin{aligned}
&- \du\dv m + \du \ml \frac{Q^2 \dv r}{2r^2} \mr \\ =& \; (\dv r)(-\du r) \ml -8\pi^2 r^3 (1-\mu) \ml \frac{|D_u \phi| |\dv \phi|}{(-\du r)(\dv r)} \mr^2  - 4\pi r (1-\mu) \frac{\re{((\dv \phi)^\dagger (D_u \phi))}}{\dv r (-\du r)}\mr\\
&\; + 4\pi \e Q (-\du r) \im{(\phi^\dagger (\dv \phi))} \\
\leq&\; 8\pi^2 r^3 (\dv r)(-\du r)(1-\mu) \ml - \ml \frac{|D_u \phi| |\dv \phi|}{(-\du r)(\dv r)} \mr^2  + \frac{1}{2\pi r^2} \frac{|D_u \phi||\dv \phi|}{(-\du r)(\dv r)}\mr + 4\pi \e |Q| (-\du r) |\phi||\dv \phi|. \\
\end{aligned}
\end{equation}
Back to \eqref{eta intro4}, we then integrate the first two terms on the right of \eqref{eta intro5} and obtain
\begin{equation*}
\begin{aligned}
&\int^{v_2}_{v_1} 8\pi^2 r^3 (\dv r)(1-\mu) \ml - \ml \frac{|D_u \phi| |\dv \phi|}{(-\du r)(\dv r)} \mr^2  + \frac{1}{2\pi r^2} \frac{|D_u \phi||\dv \phi|}{(-\du r)(\dv r)}\mr(u,v') \mathrm{d}v' \\
\leq&\; \int^{v_2}_{v_1} 2\pi^2 r^3 (\dv r)(1-\mu) \ml \frac{1}{2\pi r^2}\mr^2(u,v') \mathrm{d}v' \\
\leq&\; \frac{1}{2} \int^{v_2}_{v_1}   \frac{\dv r}{r}(u,v') \mathrm{d}v' = \frac{1}{2}\log\ml \frac{r(u,v_2)}{r(u,v_1)}\mr  = \frac{1}{2}\log \ml 1 + \frac{r(u,v_2) - r(u,v_1)}{r(u,v_1)}\mr \leq \frac{1}{2}\delta(u).
\end{aligned}
\end{equation*}
To bound the integration of the last term in \eqref{eta intro5}, by applying the obtained estimates for $|Q\phi|$, with $\chi \in \left( 0, \frac{1}{3}\right]$ we derive
\begin{equation*}
\begin{aligned}
\int^{v_2}_{v_1} 4\pi\e|Q||\phi||\dv \phi| (u,v') \mathrm{d}v' 
&\leq \sqrt{4\pi} \e {\eta^*}^{\frac{1}{2}} r_2^{\frac{1}{2}}(u) \ml \int^{v_2}_{v_1} \frac{\dv r}{1-\mu} \frac{|Q|^2|\phi|^2}{r^2}(u,v') \mathrm{d}v'\mr^{\frac{1}{2}} \\
&\leq \sqrt{4\pi} \e C_{5} {\eta^*}^{\frac{1}{2}} r_2^{\frac{1}{2}}(u)  \ml \int^{v_2}_{v_1} \frac{\dv r}{1 - \mu}(u_0,v') r^{\frac{1}{2}-\frac{3\chi}{2}}(u,v') \mathrm{d}v'\mr^{\frac{1}{2}} \\
&\leq \frac{\sqrt{4\pi} \e C_{5}}{(1-\mu_0^*)^{\frac{1}{2}}} {\eta^*}^{\frac{1}{2}} r_2^{\frac{3}{4}-\frac{3\chi}{4}}(u)  r_1^{\frac{1}{2}}(u) \delta(u_0)^{\frac{1}{2}}.
\end{aligned}
\end{equation*}

\noindent These together with $(-\du r)(u,v') \leq (-\du r)_2(u)$ for $v' \in [v_1,v_2]$ give 
\begin{equation*}
\begin{aligned}
\frac{\mathrm{d}\eta^*}{\mathrm{d}x} &\leq - \frac{\eta^*}{x} - \frac{2}{x(\du r_2)} \ml \int^{v_2}_{v_1} -\du\dv m + \du \ml \frac{Q^2 \dv r}{2r^2}\mr \mathrm{d}v'\mr \\
&\leq - \frac{\eta^*}{x} + \frac{\delta}{x} + \frac{1}{x} \frac{4\sqrt{\pi} \e C_{5}}{(1-\mu_0^*)^{\frac{1}{2}}} {\eta^*}^{\frac{1}{2}} r_2^{\frac{3}{4}-\frac{3\chi}{4}}(u)r_1^{\frac{1}{2}}(u_0) \delta^{\frac{1}{2}}(u_0) \\
&\leq - \frac{\eta^*}{x} + \frac{\delta}{x} + C_{21}x^{-\frac{1}{4}-\frac{3\chi}{4}}\ml \frac{2{\eta^*}^{\frac{1}{2}}\delta_0^{\frac{1}{2}}(r_1)_0^{\frac{1}{2}}}{(1-\mu_0^*)^{\frac{1}{2}}} \mr \\
&\leq - \frac{\eta^*}{x} + \frac{\delta}{x} + \frac{C_{21}}{x^{\frac{1}{2}}} \ml \eta^* + \frac{\delta_0\cdot(r_1)_0}{1-\mu_0^*} \mr \\
\end{aligned}
\end{equation*}
with $C_{21} = 2\sqrt{\pi} \e C_{5}r(\overline{u},v_3)^{\frac{3}{4}-\frac{3\chi}{4}}$. Solving this ODE then leads us to the desired trapped surface formation criterion for the charged case.

\subsubsection{BV Area Estimates}
{\color{black} In Section \ref{BV Area Estimates}, we establish the scale-critical BV area estimates in a region $\mathcal{D}(0,v_0)$ with $v_0$ small. We start by recalling the definition for the area of a pair of functions and it is given by
$$A[f,g] = \int_{\mathcal{D}(0,v_0)} \mlm \frac{\p f}{\p u} \frac{\p g}{\p v} - \frac{\p f}{\p v} \frac{\p g}{\p u}\mrm \D u \D v.$$
Notice that, unlike the non-charged case, our system \eqref{Setup12} - \eqref{Setup19} is \textit{not} invariant under the transformation $r \mapsto r$, $\Omega \mapsto \Omega$, $\phi \mapsto \phi + c$.\footnote{This is translation transformation for $\phi$.} We can no longer assume $\phi(0,0)=0$ without loss of generality as in \cite{christ2} for the uncharged case. Henceforth, in this paper, we conduct renormalizations for terms involving $\phi$ and meticulously define the following quantities:
\begin{equation*}
\begin{aligned}
\phi_0 := \phi(0,0), \quad \tp := \phi - \phi_0, \quad \lambda := \dv r, \quad \nu := \du r, \\ 
\tilde{\theta} := r \dv \tp, \quad \quad \quad
\tilde{\zeta} := r \du \tp, \quad \quad \quad
\ttz := r D_u \phi, \\
\mfa := \dv(r\phi), \quad \mfb := \du(r\phi), \quad \alpha := \frac{\mfa}{\lambda}, \quad \beta := \frac{\mfb}{\nu}, \\ 
\tmfa := \dv(r\tp), \quad \tmfb := \du(r\tp), \quad \tilde{\alpha} := \frac{\alpha}{\lambda}, \quad \tilde{\beta} := \frac{\beta}{\nu},  \\
A:= \sup_{\mathcal{D}(0,v_0)} |\tmfa|, \quad B:= \sup_{\mathcal{D}(0,v_0)} |\tmfb|, \quad P:= \sup_{\mathcal{D}(0,v_0)} |\tp|.
\end{aligned}
\end{equation*}
We further set
\begin{equation*}
\begin{aligned}
\ttmfb := \tmfb + \ii \e A_u r \phi, \quad X := \sup_{u\in(0,v_0)}\int_{u}^{v_0}|\dv \tp|(u,v') \D v', \quad Y := \sup_{v \in (0,v_0)} \int_{v}^{0}|\du \tp|(u',v) \D u',
\end{aligned}
\end{equation*}
and let 
$$D := TV_{\{0\} \times (0,v_0)}[\mfa] = TV_{\{0\} \times (0,v_0)}[\tmfa].$$
Note that in $\ttz, \alpha, \beta, \tilde{\alpha}, \tilde{\beta}$, we exclude the use of $\tp$, and the form $\ttmfb$ is specifically tailored for the charged scenario. Additionally, notice that the size of the initial data is measured in $D$ and is small. Our obtained area estimates are summarized in the theorem below. 

\noindent \textbf{Theorem.} ($BV$ Area Estimates.)
Assume that \begin{equation}\label{Area1'}
\inf_{\mathcal{D}(0,v_0)} \dv r \geq \frac{5}{12}, \quad \sup_{\mathcal{D}(0,v_0)} (-\du r) \leq \frac{2}{3}, \quad \sup_{\mathcal{D}(0,v_0)} \mu \leq \frac{1}{3}
\end{equation} 
hold for $v_0$ sufficiently small.\footnote{In the regular region, employing a spacetime rescaling as conducted by Christodoulou in Section 6 of \cite{christ2} can verify these assumptions.} Then there exists a constant $\varepsilon_0 \in (0,1]$, such that if the initial data satisfy
\begin{equation*}
Z = \max\{X,Y\} \leq \varepsilon_0 \quad \text{ and } \quad D := TV_{\{0\} \times (0,v_0)}[\tmfa]  \leq \varepsilon_0,
\end{equation*}
then we have $Z \lesssim D$, the total variations satisfy
\begin{equation}\label{Area40'}
\begin{aligned}
\sup_{u \in (0,v_0)}TV_{\{u\}\times(u,v_0)}[\tmfa] &\lesssim D,\\
\sup_{v \in (0,v_0)}TV_{(0,v) \times \{v\}}[\ttmfb] &\lesssim D,\\
TV_{(0,v_0)}[\tmfa|_{\Gamma}]  &\lesssim D,
\end{aligned} \quad 
\begin{aligned}
\sup_{u \in (0,v_0)} TV_{ \{u\} \times (u,v_0) } [\log \lambda] &\lesssim D^2, \\
\sup_{v\in(0,v_0)} TV_{(0,v) \times \{v\}}[\log |\nu|] &\lesssim D^2, \\
TV_{(0,v_0)}[\log(\lambda|_{\Gamma})] &\lesssim D^2, \\
\end{aligned}
\end{equation}
and the area estimates obey
\begin{equation}\label{Area41'}
\begin{aligned}
A[\tp^\dagger ,\tmfa/\lambda] &\lesssim D^2, \\
A[\tp^\dagger ,\ttmfb/\nu] &\lesssim D^2, \\
\end{aligned} \quad
\begin{aligned}
A[\lambda, \tp] &\lesssim D^3, \\
A[\nu, \tp] &\lesssim D^3. \\
\end{aligned}
\end{equation}

Prior to computing the relevant area estimates, we first establish a couple of lemmas in Section \ref{BV Area Estimates} which we will described below.

\noindent \textbf{Lemma 1.} For all $(u,v) \in \mathcal{D}(0,v_0),$ we have
$$\int^{v_0}_u |\dv \tilde{\phi}|(u,v') \D v' \leq TV_{\{u\} \times (u,v_0)} [\tilde\alpha] \quad \text{ and } \quad \int^{v}_0 |\du \tilde{\phi}|(u',v) \D u' \leq TV_{(0,v) \times \{v\} } [\tilde\beta].$$
We prove this lemma by utilizing integration by parts and leveraging the fact that the relevant boundary terms vanish along the center $\Gamma$. To proceed, in Section \ref{BV Area Estimates}, we then prove\\

\noindent \textbf{Lemma 2.} Assuming $D \leq 1$, $v_0$ being sufficiently small and \eqref{Area1'} holds, then we get
$$A \leq 4D, \quad B  \leq 8D, \quad P  \leq 12 D.$$
The proof of the lemma is via a bootstrapping argument. We first assume that for a fixed $(\tilde{u},\tilde{v}) \in \mathcal{D}(0,v_0),$
$$\sup_{\mathcal{D}(0,\tilde{u},\tilde{v})} |\tmfa| \leq 4D, \quad \sup_{\mathcal{D}(0,\tilde{u},\tilde{v})} |\tp|  \leq 12 D,$$
and attempt to improve on such an estimate. Since $\tp = \frac{1}{r(u,v)}\int^v_u \tmfa (u,v') \D v'$, we obtain
\begin{equation}\label{Area2'}
\sup_{\mathcal{D}(0,\tu,\tv)}|\tp| \leq \sup_{\mathcal{D}(0,\tu,\tv)}|\tmfa| \sup_{\mathcal{D}(0,\tu,\tv)}\mlm \frac{v - u}{r(u,v)}\mrm \leq 3 \sup_{\mathcal{D}(0,\tu,\tv)}|\tmfa|.
\end{equation}
To control $\tmfa$, we employ the following equation:
\begin{equation}\label{Area10'}
\du\ml \dv \ml r\tp \mr \mr = \tp \du \dv r- \ii \e  A_u \dv (r\tp) - \ii \e \phi_0 A_u \dv r  - \ii \e 
 \frac{\du r \dv r }{1 - \mu}\frac{Q\phi}{r}.
\end{equation}
By viewing this as a first-order ODE in $\dv(r\tp)$, with the help of an integrating factor, we deduce that
\begin{equation}\label{Area3'}
\begin{aligned}
|\dv(r\tp)|(u,v) \leq&\; |\dv (r\tp)|(0,v) + \int^u_0 |\tp|(-\du \dv r) (u',v)\D u'   \\
&+ \e \int^u_0 \frac{(-\du r)(\dv r)}{r(1-\mu)}|Q||\phi| (u',v)\D u'  + \e |\phi_0| \int^u_0 |A_u| \dv r (u',v) \D u'.
\end{aligned}
\end{equation}
To close the bootstrap, we will use $\tmfa$ to bound $|Q|$ and $|A_u|$. From \eqref{Setup19}, we have
\begin{equation}\label{Area5'}
\begin{aligned}
|Q|(u,v) &\leq 4 \pi \e \int^v_u r^2 |\phi||\dv \phi|(u,v') \D v' \leq 4 \pi \e \ml |\phi_0| + \sup_{\mathcal{D}(0,\tu,\tv)}|\tp| \mr r(u,v) \int^v_u r |\dv \phi|(u,v') \D v' \\
&\leq  4 \pi \e \ml |\phi_0| + \sup_{\mathcal{D}(0,\tu,\tv)}|\tp| \mr r(u,v) \int^v_u |\dv (r\tp)|(u,v') + |\tp|(\dv r)(u,v') \D v' \\
&\leq  4 \pi \e \ml |\phi_0| + \sup_{\mathcal{D}(0,\tu,\tv)}|\tp| \mr r(u,v) \ml \ml \sup_{\mathcal{D}(0,\tu,\tv)}|\tmfa| \mr (v-u) + \ml \sup_{\mathcal{D}(0,\tu,\tv)}|\tp| \mr r(u,v) \mr \\
&\leq 24\pi\e \ml \mlm \phi_0 \mrm + \sup_{\mathcal{D}(0,\tu,\tv)}|\tp| \mr \ml \sup_{\mathcal{D}(0,\tu,\tv)}|\tmfa| \mr  r(u,v)^2.
\end{aligned}
\end{equation}
By utilizing \eqref{Setup30} and \eqref{Area5'}, we further obtain
\begin{equation}\label{Area12'}
\begin{aligned}
|A_u|(u',v) &\leq \int^{v}_{u'} \frac{2|Q| (-\du r) \dv r}{r^2(1-\mu)} (u',v') \D v' \leq 24 \pi v_0 \e \ml \sup_{\mathcal{D}(0,\tu,\tv)}|\tmfa| \mr \ml \mlm \phi_0 \mrm + \sup_{\mathcal{D}(0,u',v)}|\tp| \mr.
\end{aligned}
\end{equation}
These two estimates will allow us to deduce that 
\begin{equation}\label{Area9'}
\ml \sup_{\mathcal{D}(0,\tu,\tv)}|\tmfa| \mr \leq 3D \quad \mbox{ and } \quad \sup_{\mathcal{D}(0,\tu,\tv)}|\tp| \leq 3\ml \sup_{\mathcal{D}(0,\tu,\tv)}|\tmfa| \mr \leq 9D,
\end{equation}
which improves the bootstrap assumption in \eqref{Area2'}. The desired estimates for $A$ and $P$ further imply the bound for $B$.\\

With these preparations, we start the proof of the theorem. Here, we generalize Christodoulou's ingenious arguments from \cite{christ2} to the complex charged scenario. The proof in Section \ref{BV Area Estimates} is relies on extensive calculations, and the estimates are intricately connected. Here, we provide an outline. For brevity, given a quantity $H$, we write $H \lesssim D$ to convey that there exists a non-negative constant $c$ (possibly depending on $\e$ and $|\phi_0|$) independent of $D$ such that $H \leq cD$. In view of Lemma 2 above, we then have
$$A \lesssim D, \quad B \lesssim D, \quad P \lesssim D.$$
Furthermore, $\dv r|_{\Gamma} = -\du r|_{\Gamma}$ and \eqref{Area1'} implies
$$\frac{5}{12} \leq \dv r(u,v) \leq \frac{1}{2}, \quad \text{ and } \quad \frac{1}{2} \leq (-\du r)(u,v) \leq \frac{2}{3}.$$
In addition, by
$$r \dv \tp = \dv \ml r \tp \mr - \tp \dv r, \quad r \du \tp = \du \ml r \tp \mr - \tp \du r, \quad \phi = \phi_0 + \tp,$$
together with estimates \eqref{Area5'} and \eqref{Area12'},
we deduce that
$$\sup_{\mathcal{D}(0,v_0)}\mlm r \dv \tp \mrm \lesssim D, \quad \sup_{\mathcal{D}(0,v_0)}\mlm r \du \tp \mrm \lesssim D, \quad \sup_{\mathcal{D}(0,v_0)}\mlm \phi \mrm \lesssim 1, \quad$$ 
and
$$|Q|(u,v) \lesssim D r^2(u,v), \quad |A_u|(u,v) \lesssim Dr(u,v), \quad |rD_u \phi|(u,v) \lesssim D.$$

We begin with deriving estimates for $\partial_v\tmfa$. This is done by computing $\du \dv \tmfa$ via using \eqref{Area42} and we get
\begin{equation}\label{Area19'}
\du \ml \dv \tmfa \mr(u',v') - \du\ml \tp \dv \lambda \mr(u',v') = G_1(u',v')  + G_2(u',v')  + \frac{\p (\lambda, \tp)}{\p(u,v)}(u',v')
\end{equation}
with 
$$\begin{aligned}
G_1 &= -\ii \e \ml \phi_0 A_u + \frac{\du r }{1-\mu}\frac{Q\phi}{r} \mr \dv \lambda, \\
G_2 &=  -\ii \e \ml (\dv A_u) \tmfa + \phi_0 (\dv A_u)\dv r + A_u \dv \tmfa + \dv r \dv \ml \frac{\du r}{1-\mu}\frac{Q\phi}{r}\mr\mr.
\end{aligned}$$
For each $(u,v) \in \mathcal{D}(0,v_0)$, by integrating with respect to $u'$ from $0$ to $v$ and $v'$ from $u$ to $v_0$, we obtain
$$\begin{aligned}
\int_u^{v_0} \mlm \dv \tmfa - \tp \dv \lambda \mrm (u,v') \D v' \leq \int^{v_0}_u \mlm \dv \tmfa \mrm(0,v') \D v' + \int_u^{v_0}\int^u_0 \ml |G_1| + |G_2| + \mlm\frac{\p(\lambda,\tp)}{\p(u,v)}\mrm \mr(u',v') \D u' \D v'.
\end{aligned}$$
Taking supremum over $u \in (0,v_0)$ and applying relevant estimates listed at the start of the proof, we derive:
\begin{equation}\label{Area13'}
\begin{aligned}
\sup_{u \in (0,v_0)}TV_{\{u\}\times(u,v_0)}[\tmfa] \leq& \; D + P \sup_{u \in (0,v_0)}TV_{\{u\}\times(u,v_0)}[\lambda] + A[\lambda,\tp] \\
&+ \sup_{u\in(0,v_0)}\int_u^{v_0}\int^u_0 |G_1| (u',v') \D u' \D v' + \sup_{u\in(0,v_0)}\int_u^{v_0}\int^u_0 |G_2| (u',v') \D u' \D v'.
\end{aligned}
\end{equation}
Furthermore, by applying the obtained estimates, we also show that
\begin{equation}\label{Area14'}
\begin{aligned}
\int_u^{v_0}\int^u_0 |G_1| (u',v') \D u' \D v' &\lesssim D \int_0^{u}\int^{v_0}_u |\dv \lambda| (u',v') \D v' \D u' \lesssim D \sup_{u \in (0,v_0)}TV_{\{u\}\times(u,v_0)}[\lambda], 
\end{aligned}
\end{equation}
and
\begin{equation}\label{Area15'}
\int_u^{v_0}\int^u_0 |G_2| (u',v') \D u' \D v' - \frac{1}{2}\sup_{u \in (0,v_0)}TV_{\{u\} \times (u,v_0) }[\tmfa] \lesssim  \int_u^{v_0}\int^u_0 D \D u' \D v' \lesssim D.
\end{equation}
These then simplify \eqref{Area13'} to 
\begin{equation}\label{Area16'}
\begin{aligned}
\frac{1}{2}\sup_{u \in (0,v_0)}TV_{\{u\}\times(u,v_0)}[\tmfa] \lesssim & \; D\ml 1 + \sup_{u \in (0,v_0)}TV_{\{u\}\times(u,v_0)}[\lambda]\mr  + A[\lambda,\tp]. \\
\end{aligned}
\end{equation}
Moving forward, we proceed to derive estimates for $\ttmfb$. By \eqref{Setup73} and taking derivatives in $u$, $v$, we arrive at 
$$\dv \du \ttmfb = \dv \ml \tp \du \nu \mr + G_3 + G_4 - \frac{\p(\nu,\tp)}{\p(u,v)}$$
with
$$\begin{aligned}
G_3 = \ii \e \du \nu \ml \frac{\dv r}{1-\mu}\frac{Q\phi}{r}\mr \quad \text{ and } \quad G_4 = \ii \e \du r \du \ml \frac{\dv r}{1-\mu}\frac{Q\phi}{r} \mr. 
\end{aligned}$$
Integrating along a future-directed outgoing null curve originating from $\Gamma$, we obtain
\begin{equation}\label{Area17'}
\begin{aligned}
\int^v_0 \mlm \du \ttmfb - \tp \du \nu\mrm(u',v) \D u' \leq&\; TV_{(0,v_0)}[\ttmfb|_{\Gamma}] + P \cdot TV_{(0,v_0)}[\nu|_{\Gamma}] + A[\nu,\tp] \\
&\; + \int^v_0 \int^v_{u'} |G_3|(u',v') \D v' \D u' + \int^v_0 \int^v_{u'} |G_4|(u',v') \D v' \D u'.
\end{aligned}
\end{equation}
By utilizing the obtained estimates, we deduce that
$$\int^v_0 \int^v_{u'} |G_3|(u',v') \D v' \D u' \lesssim D \cdot \sup_{v \in (0,v_0)}TV_{(0,v) \times \{v\}}[\nu] \quad \text{ and } \quad \int^v_0 \int^v_{u'} |G_4|(u',v') \D v' \D u' \lesssim D.$$
Furthermore, by employing relevant regularity conditions along $\Gamma$, we get
$$TV_{(0,v_0)}[\ttmfb|_\Gamma] = TV_{(0,v_0)}[\tmfa|_\Gamma], \quad TV_{(0,v_0)}[\nu|_\Gamma] = TV_{(0,v_0)}[\lambda|_\Gamma].$$
Combining these with \eqref{Area17'} then yields:
\begin{equation}\label{Area18'}
\begin{aligned}
\sup_{v \in (0,v_0)}TV_{(0,v) \times \{v\}}[\ttmfb] \lesssim & \; TV_{(0,v_0)}[\tmfa|_\Gamma] + D \ml 1 + TV_{(0,v_0)}[\lambda|_{\Gamma}] + \sup_{v \in (0,v_0)} TV_{(0,v) \times \{v\}}[\nu]\mr  + A[\nu,\tp].
\end{aligned}
\end{equation}
On the other hand, if we first integrate along a future-directed incoming null curve $v = u$ starting from $C_0^+$ to $(u,u) \in \Gamma$, and then integrate with respect to $u$ for $u \in (0,v_0)$, together with the boundary conditions along $\Gamma$, we will arrive at:
$$\begin{aligned}
TV_{(0,v_0)}[\tmfa|_{\Gamma}] &\leq  P \cdot TV_{(0,v_0)}[\lambda|_{\Gamma}] + D  + A[\lambda,\tp] + \int_0^{v_0}\int^{v_0}_{u'} |G_1|(u',v)  + |G_2|(u',v) \D v \D u'.
\end{aligned}$$
Using a similar argument for the double integrations of $|G_1|$ and $|G_2|$ as in \eqref{Area14'} and \eqref{Area15'}, we then deduce that
\begin{equation}\label{Area20'}
TV_{(0,v_0)}[\tmfa|_{\Gamma}] 
\lesssim D\ml 1 + TV_{(0,v_0)}[\lambda|_{\Gamma}] + \sup_{u\in(0,v_0)}TV_{\{u\}\times(u,v_0)}[\lambda]\mr + A[\lambda,\tp].
\end{equation}

\vspace{5mm}

Next, we will attempt to control $\sup_{u\in(0,v_0)}TV_{\{u\}\times(u,v_0)}[\lambda]$ appearing in \eqref{Area20'}. Starting from \eqref{Setup20},  by applying \eqref{Setup27}, \eqref{Setup29}, and by carefully exploiting the relevant cancellations between the terms, we can show that
\begin{equation}\label{Area23'}
\begin{aligned}
\dv \du \log \lambda =&\; \du\ml \frac{\mu - Q^2/r^2}{1-\mu}\frac{\dv r}{r}\mr + \frac{4\pi}{r^2}\ml \frac{1-Q^2/r^2}{1-\mu}\mr\ml \frac{(r|\dv \phi|)^2(\du r)}{(\dv r)} - \frac{(r|D_u \phi|)^2(\dv r)}{\du r}\mr\\
&+ \ml \frac{8\pi \e Q}{1-\mu} \mr \ml \frac{1}{1-\mu}\mr\ml (\du r)  \im(\phi^{\dagger}\dv \phi ) + (\dv r)\im(\phi^{\dagger}D_u \phi)\mr
\end{aligned}
\end{equation}
and
\begin{equation}\label{Area24'}
\begin{aligned}
\du \dv \log(-\nu) =&\; \dv\ml \frac{\mu - Q^2/r^2}{1-\mu}\frac{\du r}{r}\mr - \frac{4\pi}{r^2}\ml \frac{1-Q^2/r^2}{1-\mu}\mr\ml \frac{(r|\dv \phi|)^2(\du r)}{(\dv r)} - \frac{(r|D_u \phi|)^2(\dv r)}{\du r}\mr\\
&- \ml \frac{8\pi \e Q}{1-\mu} \mr \ml \frac{1}{1-\mu}\mr\ml (\du r)  \im(\phi^{\dagger}\dv \phi ) + (\dv r)\im(\phi^{\dagger}D_u \phi)\mr.
\end{aligned}
\end{equation}
Define 
\begin{equation*}
\tf(u,v) := 4\pi\re\ml \frac{r (\dv \tp)^\dagger}{r}\ml \frac{\tmfa}{\lambda} - \frac{\ttmfb}{\nu}\mr\mr(u,v)
\end{equation*}
and
\begin{equation*}
\tfs(u,v) := 4\pi\re\ml \frac{r (\du \tp)^{\dagger}}{r}\ml \frac{\tmfa}{\lambda} - \frac{\ttmfb}{\nu}\mr \mr(u,v).
\end{equation*}
By performing laborious computations and utilizing the fact that $( \du \tp)^{\dagger}(\dv \tp) - (\du \tp)(\dv \tp)^\dagger$ is purely imaginary, along with the identity below,
$$|D_u \phi|^2 = (\du \phi + \ii \e A_u \phi)(\du \phi - \ii \e A_u \phi)^\dagger = |\du \tp|^2 - 2 \e A_u \im(\phi \du \phi^\dagger) + \e^2 A_u^2 |\phi|^2,$$
we derive
\begin{equation}\label{Area25'}
\begin{aligned}
\du\ml \dv \log \lambda \mr = \du \ml \frac{\mu - Q^2/r^2}{1-\mu}\frac{\dv r}{r} - \tf\mr + 4\pi \re \ml \frac{\p(\tp^\dagger,\ttmfb/\nu)}{\p(u,v)}\mr + E_5 + E_6
\end{aligned}
\end{equation}
and
\begin{equation}\label{Area26'}
\begin{aligned}
\dv\ml \du \log (-\nu) \mr = \dv \ml \frac{\mu - Q^2/r^2}{1-\mu}\frac{\du r}{r} + \tfs\mr - 4\pi \re \ml \frac{\p(\tp^\dagger,\tmfa/\lambda)}{\p(u,v)}\mr - E_5 + E_6.
\end{aligned}
\end{equation}
Here, $E_5$ and $E_6$ are terms satisfying $|E_5| \lesssim D^2$ and $|E_6| \lesssim D^2$. From \eqref{Area25'}, for a fixed $(u,v) \in \mathcal{D}(0,v_0)$, by first integrating along a future-directed incoming null curve and further integrating for $v \in (u,v_0)$, we obtain
\begin{equation}\label{Area27'}
\begin{aligned}
\int^{v_0}_u \mlm \dv \log \lambda - \frac{\mu - Q^2/r^2}{1-\mu}\frac{\dv r}{r} + \tf\mrm(u,v') \D v'\lesssim \int^{v_0}_u \mlm \frac{\mu - Q^2/r^2}{1-\mu}\frac{\dv r}{r} - \tf\mrm(0,v') \D v' + A[\tp^\dagger,\ttmfb/\nu] + D^2.
\end{aligned}
\end{equation}
By employing relevant derived estimates and boundary conditions along $\Gamma$, we have
\begin{equation*}
\begin{aligned}
\int^{v_0}_u \frac{\mu - Q^2/r^2}{1-\mu}\frac{(\dv r)}{r}(u,v') \D v' \lesssim DX, \quad
&\int^{v_0}_u \frac{\mu - Q^2/r^2}{1-\mu}\frac{(\dv r)}{r}(0,v') \D v' \lesssim D^2, \\
\int^{v_0}_u |\tf|(u,v') \D v' \lesssim DX, \quad 
&\int^{v_0}_u |\tf|(0,v') \D v' \lesssim D^2.
\end{aligned}
\end{equation*}
Substituting these back to \eqref{Area27'}, we arrive at
\begin{equation}\label{Area28'}
\sup_{u \in (0,v_0)} TV_{ \{u\} \times (u,v_0) } [\log \lambda] \lesssim D^2 + DX + A[\tp^\dagger, \ttmfb/\nu].
\end{equation}
An analogous calculation for \eqref{Area26'} yields
\begin{equation}\label{Area29'}
\begin{aligned}
\int^v_0 \mlm \du \log(-\nu) - \frac{\mu - Q^2/r^2}{1-\mu}\frac{\du r}{r} - 4\pi \tfs \mrm(u',v) \D u' \lesssim  TV_{(0,v_0)}[\log(-\nu)|_{\Gamma}] +  A[\tp^\dagger,\tmfa/\lambda] + D^2.
\end{aligned}
\end{equation}
Since
$$\int^v_0 \frac{\mu - Q^2/r^2}{1-\mu}\frac{(-\du r)}{r}(u',v) \D u' \lesssim DY + D^2 \quad \text{ and } \quad \int^v_0 |\tfs|(u',v) \D u' \lesssim DY,$$
inequality \eqref{Area29'} implies
\begin{equation}\label{Area30'}
\sup_{v\in(0,v_0)} TV_{(0,v) \times \{v\}}[\log |\nu|] \lesssim D^2 + DY + TV_{(0,v_0)}[\log\lambda|_{\Gamma}] + A[\tp^\dagger,\tmfa/\lambda].
\end{equation}
By a similar approach for $\log(\lambda|_{\Gamma})$, we also get
\begin{equation}\label{Area31'}
TV_{(0,v_0)}[\log(\lambda|_{\Gamma})] \lesssim D^2 + A[\tp^\dagger,\ttmfb/\nu].
\end{equation}

\vspace{5mm}

It now remains to establish relations between the different area terms that have appeared in the estimates above. Starting from $A[\tp,\tmfa/\lambda]$, we observe that
\begin{equation}\label{Area32'}
\frac{\p(\tp^\dagger,\tmfa/\lambda)}{\p(u,v)} = \frac{(\du \tp)^\dagger}{\lambda}(\dv \tmfa - \tp \dv \lambda) + \frac{(r \dv \tp)}{\lambda^2}\frac{\p(\lambda, \tp^\dagger)}{\p(u,v)}- \frac{(\dv \tp)^\dagger E_7}{\lambda}  
\end{equation}
with $E_7$ satisfying $|E_7| \lesssim Dr$. Since $A[\lambda,\tp^\dagger] = A[\lambda,\tp]$ and 
$$\int_{D(0,v_0)} \frac{|\du \tp|}{\lambda}|\dv \tmfa - \tp \dv \lambda|(u,v) \D u \D v \lesssim DY\ml 1 + \sup_{u \in (0,v_0)}TV_{\{u\}\times(u,v_0)}[\lambda]\mr  + Y\cdot A[\lambda,\tp^\dagger],$$
identity \eqref{Area32'} then implies 
\begin{equation}\label{Area33'}
A[\tp^\dagger, \tmfa/\lambda] \lesssim DY\ml 1 + \sup_{u \in (0,v_0)}TV_{\{u\}\times(u,v_0)}[\lambda]\mr + D^2 + (D+Y)\cdot A[\lambda,\tp^\dagger].
\end{equation}
A similar strategy also gives
\begin{equation}\label{Area34'}
\begin{aligned}
A[\tp^\dagger, \ttmfb/\nu] \lesssim&\; X \cdot TV_{(0,v_0)}[\ttmfb|_{\Gamma}] + DX\ml 1 +  TV_{(0,v_0)}[\lambda|_{\Gamma}] + \sup_{v \in (0,v_0)} TV_{(0,v) \times \{v\} }[\nu] \mr \\
&+ D^2 + (D + X)\cdot A[\nu,\tp^\dagger].
\end{aligned}
\end{equation}

In view of \eqref{Area33'} and \eqref{Area34'}, we proceed to estimate $A[\lambda,\tp^\dagger]$ and $A[\nu,\tp^\dagger]$. Inspired by the definition of $\tf$ above, we further define
\begin{equation}\label{Area80'}
\tg := 4 \pi \im \ml \frac{\ttheta^\dagger}{r}\ml\frac{\tmfa}{\lambda} - \frac{\ttmfb}{\nu} \mr \mr
\end{equation}
so that
\begin{equation*}
\tf + \ii \tg = 4\pi \ml \frac{\ttheta^\dagger}{r} \mr\ml  \frac{\ttheta}{\lambda} - \frac{\tz}{\nu} + E_2\mr
\end{equation*}
with $E_2$ satisfying $|E_2| \lesssim Dr^2$. By computing $\frac{1}{\lambda}\frac{\p(\lambda,\tp)}{\p(u,v)}$, we deduce that
\begin{equation*}
\begin{aligned}
\frac{1}{\lambda}\frac{\p(\lambda,\tp)}{\p(u,v)} 
=&\;\frac{1}{r^2}\ml \frac{ \nu \ttheta - \lambda \tz }{\lambda \nu}\mr\ml \frac{\mu - Q^2/r^2}{1-\mu}\lambda \nu + 4 \pi \ttheta^\dagger \ttz\mr - \frac{\tz}{r}\ml \dv \log \lambda - \frac{\mu - Q^2/r^2}{1-\mu}\frac{\lambda}{r} + \tf \mr \\
&- \ii \frac{\tz \tg}{r} + \frac{4\pi \tz \ttheta^\dagger}{r^2}E_2 - 4\pi \ii \e \frac{\nu \ttheta - \lambda \tz}{\lambda \nu}\frac{\ttheta^\dagger A_u \phi}{r}.  \\
\end{aligned}
\end{equation*}
Note that the introduction of $\tg$ and the derivation of its desired bound are novel and crucial for the charged scenario. Next, we define
$$\tr := r \ml \frac{\mu - Q^2/r^2}{1-\mu}\lambda \nu + 4\pi \ttheta^\dagger \ttz\mr.$$
By estimating $\dv \log \lambda - \frac{\mu - Q^2/r^2}{1-\mu}\frac{\lambda}{r} + \tf$ via \eqref{Area25'} above, we obtain
\begin{equation}\label{Area35'}
\begin{aligned}
\int_{\mathcal{D}(0,v_0)}\mlm \frac{1}{\lambda}\frac{\p(\lambda,\tp)}{\p(u,v)} \mrm(u,v) \D u \D v \lesssim&\;  D \int_{\mathcal{D}(0,v_0)}\frac{|\tr|}{r^3}(u,v) \D u \D v 
+ D \int_{\mathcal{D}(0,v_0)}\frac{|r^2\tg|}{r^3}(u,v) \D u \D v \\ &+\; D^3 + D^2Y + Y\cdot A[\tp,\ttmfb/\nu].
\end{aligned}
\end{equation}
For the first integration on the right of \eqref{Area35'}, by conducting integration by parts and using the fact that $\tr/r^2|_{\Gamma} = 0$, we derive
\begin{equation*}
\begin{aligned}
\int_{\mathcal{D}(0,v_0)} \frac{|\tr|}{r^3}(u,v) \D u \D v &= 
\frac{1}{2}\int^{v_0}_0 \ml \int^v_0 |\tr/\nu| \du\ml \frac{1}{r^2}\mr(u,v) \D u \mr \D v \\
&= - \frac{1}{2}\int^{v_0}_0 \frac{|\tr/\nu|}{r^2}(0,v) \D v - \frac{1}{2}\int^{v_0}_0 \ml \int^v_0 \frac{1}{r^2}\du \mlm \frac{\tr}{\nu}\mrm(u,v) \D u\mr \D v \\
&\leq \frac{1}{2}\int_{\mathcal{D}(0,v_0)}\frac{1}{r^2}\mlm \du\ml \frac{\tr}{\nu}\mr \mrm (u,v) \D u \D v.
\end{aligned}
\end{equation*}
Employing 
\begin{equation*}
\du \ml \frac{\ttmfb}{\nu}\mr = \du \ml \frac{\nu \tp + r D_u \phi}{\nu}\mr = \frac{\ttz}{r} - \ii \e A_u \tp + \du \ml \frac{\ttz}{\nu}\mr
\end{equation*}
and
\begin{equation*}
\begin{aligned}
\ttheta^\dagger \du \ml \frac{\ttmfb}{\nu}\mr &= r\ml \frac{\p(\ttmfb/\nu,\tp^\dagger)}{\p(u,v)}\mr - \frac{\lambda(\mu - Q^2/r^2)|\ttz|^2}{r\nu(1-\mu)} + E_{10} \quad \mbox{ with } \quad |E_{10}| \lesssim D^2r,
\end{aligned}
\end{equation*}
through detailed computations, we unravel key cancellations in the expressions. Upon using them, we will arrive at 
$$\du \ml \frac{\tr}{\nu}\mr = 4\pi r^2 \frac{\p(\ttmfb/\nu,\tp^\dagger)}{\p(u,v)} + E_{11} \quad \mbox{ with } \quad |E_{11}| \lesssim D^2r^2.$$ 
Furthermore, we deduce that
\begin{equation}\label{Area36'}
\int_{\mathcal{D}(0,v_0)} \frac{|\tr|}{r^3}(u,v) \D u \D v \lesssim A[\tp^\dagger,\ttmfb/\nu] + D^2.
\end{equation}

We also employ integration by parts to treat the second integration on the right of \eqref{Area35'}. Together with $\tg|_{\Gamma}=0$, we obtain
\begin{equation}\label{Area60'}
\begin{aligned}
\int_{\mathcal{D}(0,v_0)} \frac{|r^2 \tg|}{r^3}(u,v) \D u \D v \leq \frac{1}{2}\int_{\mathcal{D}(0,v_0)}\frac{1}{r^2}\mlm \du\ml \frac{r^2\tg}{\nu}\mr \mrm (u,v) \D u \D v.
\end{aligned}
\end{equation}
Since $\tg= 4 \pi \im \ml \frac{\ttheta^\dagger}{r}\ml\frac{\tmfa}{\lambda} - \frac{\ttmfb}{\nu} \mr \mr$,
to compute $\du\ml \frac{r^2 \tg}{\nu}\mr$, we first calculate $\du\ml \frac{\ttheta^\dagger r}{\nu} \ml \frac{\tmfa}{\lambda} - \frac{\ttmfb}{\nu}\mr\mr$.Through a rigorous calculation, along with several crucial cancellations,  we derive
\begin{equation*}
\begin{aligned}
\du\ml\frac{\ttheta^\dagger r}{\nu} \ml \frac{\tmfa}{\lambda} - \frac{\ttmfb}{\nu}\mr\mr 
=&\; r^2 \frac{(\tz/\nu - E_2)}{\nu^2} \frac{\p(\nu,\tp^\dagger)}{\p(u,v)} + r^2\frac{1}{\nu}\frac{\p(\tp^\dagger,\ttmfb/\nu)}{\p(u,v)} + R_1 + E_{12},
\end{aligned}
\end{equation*}
where $R_1$ representing a term that is purely real (i.e., $\im(R_1) = 0$), and $E_{12}$ satisfying $|E_{12}|\lesssim D^2 r^2$.
This implies that
\begin{equation*}
\begin{aligned}
\du\ml \frac{r^2 \tg}{\nu}\mr &=   4\pi \im \ml \du \ml \frac{\ttheta^\dagger r}{\nu} \ml \frac{\tmfa}{\lambda} - \frac{\ttmfb}{\nu}\mr\mr \mr \\
&= 4\pi \im \ml r^2 \frac{(\tz/\nu - E_2)}{\nu^2} \frac{\p(\nu,\tp^\dagger)}{\p(u,v)} + r^2\frac{1}{\nu}\frac{\p(\tp^\dagger,\ttmfb/\nu)}{\p(u,v)}  + E_{12}\mr.
\end{aligned}
\end{equation*}
Going back to \eqref{Area60'}, we then deduce
\begin{equation}\label{Area37'}
\int_{\mathcal{D}(0,v_0)} \frac{|r^2 \tg|}{r^3}(u,v) \D u \D v \lesssim D \cdot A[\nu,\tp^\dagger] + A[\tp^\dagger,\ttmfb/\nu] + D^2.
\end{equation}
Plugging \eqref{Area36'} and \eqref{Area37'} into \eqref{Area35'}, we obtain
\begin{equation}\label{Area38'}
A[\lambda,\tp^\dagger] \lesssim (D+Y)A[\tp^\dagger,\ttmfb/\nu] + D^2\cdot A[\nu,\tp^\dagger] + D^2Y + D^3.
\end{equation}
We also introduce
$$\tgs := 4\pi \im\ml \frac{\tz^\dagger}{r}\ml \frac{\tmfa}{\lambda} - \frac{\ttmfb}{\nu}\mr\mr \quad \text{and} \quad \trs := r\ml \frac{\mu - Q^2/r^2}{1-\mu}\lambda \nu + 4\pi \ttheta \tz^\dagger\mr.$$
Here $\tgs$ is associated with the new complex charged scenario. Employing an analogous strategy as for $\tg$ and $\tr$, while giving additional attention to boundary terms, we arrive at
\begin{equation}\label{Area39'}
\begin{aligned}
A[\nu,\tp^\dagger] &\lesssim  X \cdot TV_{(0,v_0)}[\lambda|_{\Gamma}] + (X + D) \cdot A[\tp^\dagger,\tmfa/\lambda] + D^2 \cdot A[\lambda,\tp^\dagger] + D^2X + D^3.\\
\end{aligned}
\end{equation}
Combining \eqref{Area16'}, \eqref{Area18'}, \eqref{Area20'}, \eqref{Area28'}, \eqref{Area30'}, \eqref{Area31'}, \eqref{Area33'}, \eqref{Area34'}, \eqref{Area38'},  \eqref{Area39'}, together with $D$ and $v_0$ being small, we conclude that $Z \lesssim D$, and inequalities in \eqref{Area40'} and \eqref{Area41'} all hold.\\

\subsubsection{\texorpdfstring{$C^1$ Extension Theorem}{C1 Extension Theorem}} 

For spherically symmetric Einstein-scalar field system, Christodoulou proved the following:

\noindent \textbf{Theorem.} \cite{christ2} ($C^1$ Extension Theorem with Mass Ratio $\mu$.) \textit{Set $v_* > 0$ and assume that for every $\hat v \in (0,v_*)$ there is a $C^1$ solution on $\mathcal{D}(0,\hat v)$. Then, there exists a constant $\varepsilon' > 0$, such that if we have
$$ \lim_{u \rightarrow v_*} \sup_{\mathcal{D}(u,v_*)} \mu(u,v) < \varepsilon',$$
then we can find a $\tilde{v} > v_*$ and a $C^1$ solution in the extended region $\mathcal{D}(0,\tilde{v})$.}

In Section \ref{C1 Extension}, we prove two independent $C^1$ extension theorems. The first is novel, and is as follows:
\noindent \textbf{Theorem.} ($C^1$ Extension Theorem with Doppler Exponent $\gamma$.) Let $v_* > 0$ and assume for every $\hat{v} \in (0,v_*)$, a $C^1$ solution exists on $\mathcal{D}(0,\hat{v})$. If we have
\begin{equation}
\sup_{\mathcal{D}(0,v_*)} \gamma < + \infty,
\end{equation}
then a $C^1$ solution can be extended to $\mathcal{D}(0,\tilde{v})$ with $\tilde{v} > v_*$.  \\

\noindent This is our main theorem in Section \ref{C1 Extension} and is directly related to the later instability arguments. In this paper, we also generalize the aforementioned Christodoulou's $C^1$ extension theorem with mass-ratio $\mu$ to the charged case. It is stated and proved as Theorem \ref{FET}. For both proofs, we employ the quantities: 
\begin{equation*}
\alpha:=\partial_v(r\phi) \quad \mbox{ and } \quad \alpha' := \frac{1}{\dv r} \dv \left( \frac{\dv (r\phi)}{\dv r} \right) =\frac{\dv \alpha}{\dv r}.
\end{equation*}

\noindent We further define

\begin{equation}\label{GandBu introduction}
G:= \sup_{\mathcal{D}(0,v_*)} \gamma(u,v) \quad \text{ and } \quad B'(u):=\sup_{v\in[u, v_*]}|\alpha'|(u,v).
\end{equation}


With $\alpha'$, we then compute $\du \alpha'$ and obtain
\begin{equation}\label{a' intro1}
\begin{aligned}
\du \alpha'=& \; \frac{4 \pi r (Q^2/r^2-1)|\dv \phi|^2(\dv \phi)(\du r) }{(\dv r)^3 (1- \mu)} - \frac{2\du r}{r}\ml \frac{\mu - Q^2/r^2}{1-\mu}\mr \alpha' + \frac{\du r \dv \phi}{r \dv r}\ml \frac{5Q^2/r^2 - 
 3 \mu}{1-\mu}\mr\\
&\; -\frac{8 \pi \e Q (\im(\phi^\dagger \dv \phi))(\dv \phi)(\du r)}{(\dv r)^2(1-\mu)}   \\
&\;  -\ii \e \ml A_u \alpha' + \frac{3 Q \du r \dv \phi}{r(1-\mu)(\dv r)} + \frac{Q\phi \du r}{r^2 (1-\mu)} + \frac{4 \pi |\dv \phi|^2 Q\phi \du r}{(1-\mu)(\dv r)^2} - \frac{4  \pi \e \phi\im(\phi^\dagger \dv \phi) r\du r }{ \dv r (1-\mu)} \mr. \\
\end{aligned}
\end{equation}

\noindent Notice that, since  $Q=0$ and $\im{\phi}=0$ for the uncharged case, there are only three terms on the right of \eqref{a' intro1}. For the charged scenario, additional terms arise. Henceforth, we get
\begin{equation*}
\begin{aligned}
\alpha'(u,v) e^{\int^u_{0}\ii \e A_u (u',v) \mathrm{d}u'} &= \alpha'(0,v) +  \sum_{i=1}^{11} I_i(u,v),
\end{aligned}
\end{equation*}
with $I_i$ given in \eqref{FET16} via \eqref{K}. 
This implies
\begin{equation}\label{alpha' Ii introduction} 
|\alpha'|(u,v)\leq \sup_{\{0\}\times [0, v_*]}|\alpha'|+\sum_{i=1}^{11}|I_i|(u', v)du'.
\end{equation}

To illustrate our proof strategy, here we will focus on estimating $I_3$ and $I_{10}$. For $I_3$ we have

\begin{equation}\label{I_3' introduction}
|I_3|(u,v) \leq  4\pi \int^{u}_{0} (-\du r)\ml \frac{1}{r^2}\mr\ml \frac{1}{1-\mu}\mr\ml \frac{r|\dv \phi|}{\dv r}\mr^3 (u',v) \mathrm{d}u'.
\end{equation}
Observe that, since
\begin{equation*}
\begin{aligned}
\dv  \ml \frac{r \dv \phi}{\dv r}\mr^3 &= \dv \ml \frac{\dv (r\phi)}{\dv r} - \phi \mr^3 = 3 \alpha' \frac{r^2(\dv \phi)^2}{\dv r} - \frac{3 \dv r}{r}\ml \frac{r \dv \phi}{\dv r}\mr^3,
\end{aligned}
\end{equation*}
we then obtain
\begin{equation*}
\begin{aligned}
\dv  \ml r^3\ml \frac{r \dv \phi}{\dv r}\mr^3 \mr &= 3 \alpha' r^3 \ml \frac{r^2(\dv \phi)^2}{\dv r}\mr.
\end{aligned}
\end{equation*}
Together with $r^3\ml \frac{r \dv \phi}{\dv r}\mr^3 |_{\Gamma} = 0$, integrating the above equation with respect to $v$, we then derive
\begin{equation}\label{partial phi 3}
\begin{aligned}
r^3\ml \frac{r|\dv \phi|}{\dv r} \mr^3 (u,v) 
&\leq 3B'(u)r^3(u,v)\int_u^v \frac{r^2|\dv \phi|^2}{\dv r}(u,v') \D v' \leq \frac{3}{2\pi}B'(u)r^3(u,v) \int^v_u \frac{\dv m}{1-\mu}(u,v') \D v' \\
&\leq \frac{3e^G}{2\pi(1-\mu_*)} B'(u)r^3(u,v) \int^v_u \dv m(u,v') \D v' \leq \frac{3e^G}{\pi(1-\mu_*)} B'(u)r^4(u,v) \mu(u,v).
\end{aligned}
\end{equation}
To prove the third inequality, we employ the equation
$$\du\ml \frac{1}{1-\mu(u,v)}\mr  = \frac{1}{1-\mu(u,v)} \ml -\frac{4\pi r|D_u \phi|^2}{(-\du r)} + \frac{(-\du r)}{r}\frac{\mu - Q^2/r^2}{1-\mu}\mr$$
and its implication
\begin{equation*}
\begin{aligned}
\frac{1}{1-\mu(u,v)} &= \frac{1}{1-\mu(0,v)}\exp\ml {\int^u_{0} \ml -\frac{4\pi r|D_u \phi|^2}{(-\du r)} + \frac{(-\du r)}{r}\frac{\mu - Q^2/r^2}{1-\mu}\mr (u',v) \mathrm{d}u'} \mr\\
&\leq \frac{1}{1-\mu(0,v)}\exp\ml {\int^u_{0} \ml  \frac{(-\du r)}{r}\frac{\mu - Q^2/r^2}{1-\mu}\mr (u',v) \mathrm{d}u'} \mr\leq \frac{1}{1-\mu_*}e^G\\
\end{aligned}
\end{equation*}
with $G$ defined in \eqref{GandBu introduction}.
Plugging \eqref{partial phi 3} back to \eqref{I_3' introduction}, we then obtain 
\begin{equation*}
|I_3|(u,v) \leq \frac{12 e^G}{1-\mu_*} \int_0^u \frac{- \du r}{r} \frac{\mu}{1-\mu}(u',v) B(u') \D u'.
\end{equation*}

For $|I_{10}|$, with our improved estimate for $Q$, we deduce
\begin{equation*}
\begin{split}
|I_{10}|(u,v) \lesssim& \; \e  \int^{u}_{0} (-\du r) \ml \frac{|Q|}{r^2}\mr  \ml \frac{1}{1-\mu}\mr\ml \frac{r|\dv \phi|}{\dv r}\mr (u',v) \mathrm{d}u'\lesssim\frac{\e e^G r(0,v_*)^\oh}{1-\mu_*} \int_0^u \frac{- \du r}{r^{\oh}}(u',v) B(u') \D u'.
\end{split}
\end{equation*}

Together with our treatment for all other terms from $I_1$ to $I_{10}$, in Section \ref{C1 Extension}, we arrive at  
\begin{equation*}
B(u) \leq \sup_{\{0\}\times [0, v_*]}|\alpha'| + C_{11} + \int_0^u \ml C_9\frac{- \du r}{r} \frac{\mu}{1-\mu} + C_{10} \frac{- \du r}{r^\oh}\mr(u',\tilde{v}) B(u') \D u'.
\end{equation*}
Note that the exact expressions for constants $C_9, C_{10},$ and $C_{11}$ are provided in Section \ref{C1 Extension}. 
By Gr\"onwall's inequality, we then obtain
\begin{equation}\label{ExtensionGronwall}
\begin{aligned}
B(u) &\leq \ml \sup_{\{0\}\times [0, v_*]}|\alpha'|+C_{11} \mr \exp\ml \int_0^u \ml C_9\frac{- \du r}{r} \frac{\mu}{1-\mu} + C_{10} \frac{- \du r}{r^\oh}\mr(u',\tilde{v}) \D u' \mr \\
&\leq \ml \sup_{\{0\}\times [0, v_*]}|\alpha'|+C_{11} \mr \exp\ml C_9 \gamma(u,\tilde{v}) + 2C_{10}r(0,\tilde{v})^{\oh} \mr \\
&\leq \ml \sup_{\{0\}\times [0, v_*]}|\alpha'|+C_{11} \mr \exp\ml C_9 G + 2C_{10}r(0,v_*)^{\oh} \mr<\infty.\\
\end{aligned}
\end{equation}\\

We proceed to extend the $C^1$ solution beyond the triangular boundary. We consider the rectangular strip $U_{\xi}(u_1) = [0,u_1) \times [v_*,v_*+\xi]$.
We also define
$$\overline{B'_{\xi}}(u_1) := \sup_{U_{\xi}(u_1)} (|\alpha'|)\quad \quad \mbox{ and } \quad \quad \mathcal{D}(0,u_1,v_* + \xi) := \mathcal{D}(0,u_1,v_*) \cup U_{\xi}(u_1).$$
Now, fix some $\hat{\xi}>0$. Repeating the proof outlined as above, we obtain

\noindent {\bf Lemma.} For any $u_1 \in (0,v_*)$ and $\xi \in (0,\hat{\xi}]$, if $\gamma$ satisfies
\begin{equation}
\sup_{\mathcal{D}(0,u_1,v_* + \xi)} \gamma \leq 3G,
\end{equation}
then we have
\begin{equation}
\overline{B'_{\xi}}(u_1) \leq \max\{  C_{12}(G,v_*), C_{12}(3G,v_*+\hat{\xi}) \} =: B_{2}.
\end{equation}

With the above lemma, we then define
\begin{equation*}
u_2 = \sup\left\{ u_1 \in [0,v_*): \sup_{\mathcal{D}(0,u_1,v_*+\xi)} \gamma \leq 3G\right\}.
\end{equation*}
Subsequently, we prove that for $\xi$ sufficiently small, $u_2$ has to be $v_*$ and this will give uniform bound of $|\alpha'|$ in $\mathcal{D}(0,v_*,v_* + \xi)$. To achieve this, we consider the equation 
\begin{equation*}
\dv \gamma(u,v) = \int_0^u (T_1 + T_2 + T_3)(u',v) \D u'.
\end{equation*}
with
\begin{equation*}
T_1 := \frac{-\du r}{r}\ml \frac{8 \pi \e Q \im(\phi^\dagger \dv \phi)}{1-\mu}\mr, \; T_2 := \frac{-\du r}{r}\ml \frac{4\pi r(1-Q^2/r^2)|\dv \phi|^2}{\dv r(1-\mu)}\mr, \;
T_3 := \frac{2 \du r \dv r}{r^2}\ml \frac{\mu - 2Q^2/r^2}{1-\mu}\mr. 
\end{equation*}
Employing estimates previously derived, we get 
\begin{equation*}
|\dv \gamma|(u,v) \leq C_{14}
\end{equation*}
with
\begin{equation*}
C_{14} := \ml \frac{4\pi \e e^{4G} C_8' C_{13} B_2 }{6\sqrt{2}(1-\mu_*)}(v_* + \hat \xi)^{\frac{1}{2}} + \frac{\pi e^{4G}B_2^2}{2(1-\mu_*)}\ml 1+\frac{C_{13}^2(v_* + \hat \xi)^2}{4}\mr + \frac{e^{4G}}{2(1-\mu_*)}\ml \frac{\pi B_2^2}{12} + \frac{25 C_{13}^2}{12}\mr\mr (v_* + \hat \xi).
\end{equation*}
For $(u,v) \in U_\xi(u_2)$, we then derive
\begin{equation*}
\begin{aligned}
\gamma(u,v) &\leq \gamma(u,v_*) + \int_v^{v_*}|\dv \gamma|(u,v') \D v' \leq G + C_{14}\xi \leq 2G,
\end{aligned}
\end{equation*}
if we pick $\xi := \min \left\{  \hat{\xi}, \frac{G}{C_{14}}\right\}$. This concludes the proof of Theorem \ref{SET}.\\

\vspace{5mm}

To prove Theorem \ref{FET}, with $0<\delta<1$, we set
$$A'(u_1):=\sup_{\mathcal{D}(u_0, u_1, v_*)}(r^{\delta}|\alpha'|).$$ Using $\mu(u,v)$ being small in a region $\mathcal{D}(u_0, u_1, v_*)$ close to $O'$, via detailed estimate for each $I_i$ in \eqref{alpha' Ii introduction}, after conducting extensive estimates involving $Q$ and $\phi$, we prove that $A'(u_1)\leq C_{13}$ with $C_{13}$ given in \eqref{FET58}.

For all $(u,v) \in \mathcal{D}(u_0, u_1, v_*)$, using the definition of $A'(u_1)$, we then acquire
\begin{equation}\label{dv phi intro1}
\begin{aligned}
\frac{r |\dv \phi|}{\dv r}(u,v) &\leq C_{14}r^{1-\delta}(u,v), \quad  \mbox{ and } \quad C_{14}=\frac{C_{13}}{1-\delta}.
\end{aligned}
\end{equation}
Applying \eqref{Setup25}, together with $m|_{\Gamma} = 0$, we deduce
\begin{equation}\label{FET108'}
m(u,v) \leq \int^v_u \ml \frac{2\pi r^2(1-\mu)|\dv \phi|^2}{\dv r} + \frac{Q^2 \dv r}{2r^2} \mr(u,v') \D v'.
\end{equation}
Using \eqref{dv phi intro1}, we hence obtain
\begin{equation*}
\begin{aligned}
\int^v_u \frac{2\pi r^2(1-\mu)|\dv \phi|^2}{\dv r}(u,v') \D v'&\leq 2\pi \int^v_u \frac{r^2|\dv \phi|^2}{(\dv r)^2}(\dv r)(u,v') \D v'\\
&\leq 2\pi \int^v_u C_{14}^2 r^{2-2\delta}(\dv r)(u,v') \D v' = \frac{2\pi C_{14}^2}{3-2\delta}r^{3-2\delta}(u,v).
\end{aligned}
\end{equation*}
At the same time, when $0<\chi<\delta<1$, the term involving $Q$ in \eqref{FET108'} also obeys
\begin{equation*}
\begin{aligned}
\int^v_u \frac{Q^2 \dv r}{2r^2}(u,v') \D v' &\leq \frac{C_{6}^2}{2}\int^v_u r^{1-\chi}(\dv r)(u,v') \D v' = \frac{C_{6}^2}{2(2-\chi)}r^{2-\chi}(u,v)
\end{aligned}
\end{equation*}
with $C_6$ given in \eqref{FET130}.  If we further restrict our choice of $\delta$ to $\left( 0, \frac{1}{2} \right]$, we then get
\begin{equation*}
m(u,v) \leq \frac{C_{15}}{2} r^{2-\chi}(u,v) \mbox{ with } C_{15}(\delta,\chi) := \frac{4\pi C_{14}^2}{3-2\delta}r^{1-2\delta+\chi}(0,v_*) + \frac{C_6^2}{(2-\chi)}.
\end{equation*}
Consequently, this implies the following improved estimate for $\mu$:
\begin{equation*}
\mu(u,v) \leq C_{15}r^{1-\chi}(u,v).
\end{equation*}

With these enhanced bounds, we then employ the estimates for $I_i$ in 
\begin{equation*}
|\alpha'(u,v)| \leq \sup_{\{u_0\} \times [u_0,v_*]} (|\alpha'|) + \sum_{i=1}^{11}|I_i|(u,u_0,v)
\end{equation*}
and obtain a uniform bound for $|\alpha'(u,v)|$. This in turn implies that the solution extends to $\mathcal{D}(0,v_*).$ \\

In a similar fashion, we can extend the $C^1$ solution beyond the triangular boundary as follows. Consider a thin rectangle region
\begin{equation*}
U_{\xi}(u_1) = [u_0,u_1) \times [v_*,v_*+\xi].
\end{equation*}
We then derive uniform bounds for quantities
$$\overline{A'_{\xi}}(u_1) := \sup_{U_{\xi}(u_1)} (r^{\delta}|\alpha'|) \quad
\mbox{ and }
\quad A'_* := \sup_{u_1\in(u_0,v_*)}A'(u_1).$$
With $0<\delta<1$, we arrive at 
$$\sup_{\mathcal{D}(u_0,u_1,v_*+\xi)}(r^{\delta}|\alpha'|) \leq \max\{ A'(u_1), \overline{A'_\xi}(u_1)\} \leq \max\{A'_*, \overline{A'_\xi}(u_1)\}.$$
By replacing $A'(u_1)$ with $\max\{A'_*, \overline{A'_\xi}(u_1)\}$, we also obtain an analogous bound for $\overline{A'_{\xi}}(u_1)$. As explained before, we then employ improved estimates for $m(u,v)$ and $\mu(u,v)$, leading to the derivation of the uniform bound for $|\alpha'(u,v)|$ in the rectangular region $U_{\xi}(u_1)$. The desired conclusion then follows.

\subsubsection{First Instability Theorem}
In this paper, to explore the blueshift effect, we define the quantities:
\begin{equation}
\begin{split}
\gamma(u,v) :=& \int^u_{0} \frac{(- \du r)}{r} \frac{\mu - Q^2/r^2}{1 - \mu} (u',v) \mathrm{d}u',\quad \quad \gamma_0(u) := \int^u_{0} \frac{(- \du r)}{r} \frac{\mu - Q^2/r^2}{1 - \mu} (u',v_0) \mathrm{d}u',\\
\gamma_s(u,v) :=& \int^u_{0} \frac{(- \du r)}{r} \frac{\mu}{1 - \mu} (u',v) \mathrm{d}u', \quad \quad \gamma_{s, 0}(u) :=\int^u_{0} \frac{(- \du r)}{r} \frac{\mu}{1 - \mu} (u',v_0) \mathrm{d}u'.
\end{split}
\end{equation}
Note that $\gamma_s$ was introduced by Christodoulou in \cite{christ4}. For our use, with $\overline{u} \in [0,v_0)$ we also set
\begin{equation}
\gamma(u,v;\overline{u}) = \tgam(u,v) := \int^u_{\overline{u}} \frac{(- \du r)}{r} \frac{\mu - Q^2/r^2}{1 - \mu} (u',v) \mathrm{d}u'.
\end{equation}
The following two properties reveal that the quantity $\gamma$ is the source of the blueshift effect:
\begin{itemize}
    \item \mbox{Property 1} $$(\dv r) (u,v) = (\dv r)(0,v) e^{-\gamma(u,v)},$$
    \item \mbox{Property 2} $$ \frac{1}{1-\mu(u,v)} \leq \frac{1}{1-\mu(0,v)}e^{\gamma(u,v)}.$$
\end{itemize}
Hence, if $\gamma(u,v)$ is large, then $\partial_v r(u,v)$ would be close to $0$. 

The first property above is obtained from integrating
\begin{equation*}
\du \log \ml \dv r\mr = \frac{\du r}{r}\ml \frac{\mu-Q^2/r^2}{1-\mu}\mr.
\end{equation*}
The second property is a result of  considering
\begin{equation*}
\begin{aligned}
\du\ml \frac{1}{1-\mu(u,v)}\mr=& \frac{1}{1-\mu(u,v)} \ml -\frac{4\pi r|D_u \phi|^2}{(-\du r)} + \frac{(-\du r)}{r}\frac{\mu - Q^2/r^2}{1-\mu}\mr,
\end{aligned}
\end{equation*}
and
\begin{equation*}
\begin{aligned}
\frac{1}{1-\mu(u,v)} =& \frac{1}{1-\mu(0,v)}\exp\ml {\int^u_{0} \ml -\frac{4\pi r|D_u \phi|^2}{(-\du r)} + \frac{(-\du r)}{r}\frac{\mu - Q^2/r^2}{1-\mu}\mr (u',v) \mathrm{d}u'} \mr\\
\leq& \frac{1}{1-\mu(0,v)}\exp\ml {\int^u_{0} \ml  \frac{(-\du r)}{r}\frac{\mu - Q^2/r^2}{1-\mu}\mr (u',v) \mathrm{d}u'} \mr.
\end{aligned}
\end{equation*}

In the subsequent portions of that section, we will select a sequence of $u_n \in [u_2,v_0)$ and determine corresponding values of $v(u_n) \in [v_0,v_0 + \te]$, where $0<\te<1$. This selection ensures that for all $u \in [u_2,u_n]$, the following inequality holds:
\begin{equation*}
\delta(u,v(u_n)) \log \ml \frac{1}{\delta(u,v(u_n))}\mr \leq C_{33}e^{-2\gamma_0(u)}
\end{equation*} with $C_{33}$ given in \eqref{FIT44} and $\delta(u,v(u_n)) \leq 1/2$. Then, within the considered regions, we proceed to obtain 
\begin{equation*}
\begin{aligned}
\frac{1}{1-\mu(u,v)} &\leq \frac{8}{7}\frac{1}{1-\mu(u, v_0)}.
\end{aligned}
\end{equation*}

We further state an important property for the function $\delta(u,v)$ in double-null coordinates.

\noindent \textbf{Lemma.}  For a sequence of $u_n$ satisfying $u_n \in [u_2,v_0)$, suppose that there exist corresponding values of $v(u_n) \in [v_0,v_0 + \te]$ such that, for all $u \in [u_2,u_n]$, the following inequalities hold: \begin{equation}\label{FIT35c'}
\delta(u,v(u_n)) \leq \min\left\{ \frac{1}{2},\frac{2(1 - \mu_0(0))e^{-\gamma_0(u)}}{3}, \frac{e^{-\frac{3}{2}(1+2\xi)\gamma_0(u)}}{2 C_{34}} \right\}
\end{equation} 
and
\begin{equation}\label{FIT35c''}
\delta(u,v(u_n))\log \ml \frac{1}{\delta(u,v(u_n))} \mr \leq C_{33}e^{-2\gamma_0(u)}
\end{equation}
with $C_{34}$, $C_{33}$ given in \eqref{FIT71}, \eqref{FIT44} and $\xi\in(0, \frac12)$. Then, for a fixed $v \in (v_0,v(u_n)]$ and for any $u_3,u_4 \in [u_2,u_n]$ with $u_3 < u_4$, we have
\begin{equation}\label{FIT83'}
\delta(u_3,v) \leq 4 e^{-\frac{1}{2}\log\ml \frac{r_0(u_3)}{r_0(u_4)}\mr + \frac{3}{2}(1+2\xi)\gamma_{0}(u_4)}\delta(u_4,v).
\end{equation}

\begin{remark}
The conclusion in \eqref{FIT83'} means, along the null curve with the same $v$, the outer value of $\delta(u,v)$ can be controlled by its inner value, since $(u_4, v)$ is closer to the center compared with $(u_3, v)$. In double-null coordinates, this is new, constituting a non-trivial feature of the geometry that is crucially used in later arguments.
\end{remark}

To prove this lemma, we first the following employ inequality:
\begin{equation*}
    \frac{1}{2}\delta(u,v) \leq \log \ml \frac{r(u,v)}{r_0(u)}\mr \leq \delta(u,v)
\end{equation*}
with $u \in [u_3,u_4]$, $v \in (v_0,v(u_n)]$ and $\delta(u, v(u_n))\leq 1/2$. We then apply
\begin{equation*}
\begin{aligned}
\dv \ml \log \ml \frac{-\du r}{r}\mr\mr =& \; \frac{\dv r}{r}\ml \frac{\mu - Q^2/r^2}{1-\mu} - 1\mr\\
\leq& \;\frac{\dv r}{r} \ml \frac{\mu}{1-\mu} - 1\mr
= \frac{\dv r}{r} \ml \frac{1}{1-\mu} - 2\mr \leq \frac{\dv r}{r}\ml \frac{3}{2}\frac{1}{1-\mu_0}-2\mr.
\end{aligned}
\end{equation*}
Integrating from $v_0$ to $v$ implies that
\begin{equation*}
\begin{aligned}
    \log \ml \frac{-\frac{\du r}{r}(u,v)}{\ml -\frac{\du r}{r} \mr_0(u)}\mr &\leq \ml \frac{3}{2}\frac{1}{1-\mu_0(u)} - 2\mr \log \ml \frac{r(u,v)}{r_0(u)}\mr, 
\end{aligned}
\end{equation*}
and
\begin{equation*}
\begin{aligned}
    \ml - \frac{\du r}{r}\mr(u,v) &\leq \ml -\frac{\du r}{r}\mr_0(u) e^{\ml \frac{3}{2}\frac{1}{1-\mu_0(u)} - 2\mr \log \ml \frac{r(u,v)}{r_0(u)}\mr}.
\end{aligned}
\end{equation*}
This further yields
\begin{equation*}
\ml \frac{- \du r}{r}\mr (u,v) - \ml \frac{- \du r}{r}\mr_0(u) \leq \ml e^{\ml \frac{3}{2}\frac{1}{1-\mu_0(u)} - 2\mr \log \ml \frac{r(u,v)}{r_0(u)}\mr} - 1 \mr \ml \frac{-\du r}{r}\mr_0.
\end{equation*}
With these, we derive
\begin{equation*}
\begin{aligned}
-\du \log\ml \frac{r(u,v)}{r_0(u)}\mr  &= -\frac{\frac{\du r}{r_0}(u,v) - \frac{r(u,v)}{r_0(u)^2}(\du r)_0(u)}{\frac{r(u,v)}{r_0(u)}} =\frac{- \du r}{r}(u,v) - \ml\frac{- \du r}{r}\mr_0(u) \\
&\leq \ml e^{\ml \frac{3}{2}\frac{1}{1-\mu_0(u)} - 2\mr \log \ml \frac{r(u,v)}{r_0(u)}\mr} - 1 \mr \ml \frac{-\du r}{r}\mr_0.
\end{aligned}
\end{equation*}
Inspired by \cite{christ4}, we utilize the inequality $e^x - 1 \leq x + (e-2)x^2$ for $ x \leq 1$. Denoting
\begin{equation*}
    \td(u,v) := \log \ml \frac{r(u,v)}{r_0(u)}\mr,
\end{equation*}
for all $(u,v) \in [u_2,u_n] \times (v_0,v(u_n)]$, we then obtain
\begin{equation*}
\begin{aligned}
-\du \td &\leq \ml \ml \frac{3}{2(1-\mu_0(u))} - 2 \mr \td + (e-2)\ml \frac{3}{2(1-\mu_0(u))} - 2 \mr^2\td^2 \mr \ml \frac{-\du r}{r}\mr_0
\end{aligned}
\end{equation*}
and
\begin{equation*}
\begin{aligned}
\du \ml \frac{1}{\td}\mr &\leq \ml \ml \frac{3}{2(1-\mu_0(u))} - 2 \mr\ml \frac{1}{\td}\mr + (e-2)\ml \frac{3}{2(1-\mu_0(u))} - 2 \mr^2 \mr\ml \frac{-\du r}{r}\mr_0.
\end{aligned}
\end{equation*}
By employing a Gr\"onwall-like argument, we arrive at
\begin{equation*}
\begin{aligned}
\frac{e^{- \frac{3}{2}\gamma_{s,0}(u_4) + \frac{1}{2}\log\ml\frac{r_0(0)}{r_0(u_4)}\mr}}{\td(u_4,v)} - \frac{e^{- \frac{3}{2}\gamma_{s,0}(u_3)+ \frac{1}{2}\log\ml\frac{r_0(0)}{r_0(u_3)}\mr}}{\td(u_3,v)} &\leq \frac{e^{-\frac{3}{2}\gamma_{s,0}(u_4) + \frac{1}{2}\log\ml\frac{r_0(0)}{r_0(u_4)}\mr}}{2\td(u_4,v)},
\end{aligned}
\end{equation*}
which implies the above-stated lemma. \\

With the above preparations, we now delve into the physical arguments as outlined in \cite{christ4}. Employing
\begin{equation*}
\begin{aligned}
\du \ml \frac{r \dv \phi}{\dv r} \mr
= \ml -D_u \phi - \ii \e \frac{Q \phi (\du r)}{r(1-\mu)}\mr + \ml \frac{r \dv \phi}{\dv r}\mr \ml -\ii \e A_u + \frac{(-\du r)}{r} \frac{\mu - Q^2/r^2}{1 - \mu} \mr,
\end{aligned}
\end{equation*}
we deduce that
\begin{equation*}
\ml \frac{r |\dv \phi|}{|\dv r|} \mr(u,v)  = \mlm \ml \frac{r \dv \phi}{\dv r} \mr(0,v) - I(u)\mrm e^{\gamma(u,v)}
\end{equation*}
with 
\begin{equation*}
I(u)=\int^u_{0} g(u',v)e^{-f(u',v)}\mathrm{d}u' 
\end{equation*}
and
\begin{equation*}
\begin{split}
g(u,v) =& D_u \phi(u,v) +  \ii \e \frac{Q \phi (\du r)}{r(1-\mu)}(u,v),\\ 
f(u,v) =& \int^u_{0} -\ii \e A_u(u',v) + \frac{(-\du r)}{r} \frac{\mu - Q^2/r^2}{1 - \mu}(u',v) \mathrm{d}u'.
\end{split}
\end{equation*}

Our to-be-established first instability theorem considers the following case:
\begin{equation}\label{first instability condition}
\ml \frac{r \dv \phi}{\dv r} \mr_0 (0):=\ml \frac{r \dv \phi}{\dv r} \mr(0,0) \neq \lim_{u \rightarrow v_0^-} I(u).
\end{equation}

\noindent For this scenario, as $u\rightarrow v_0^-$, there are various cases for the behaviors of the complex-valued $I(u)$. Since $I(u)$ is complex-valued, we rewrite it as 
\begin{equation*}
        I(u) = X(u) + \ii Y(u)
\end{equation*} 
with $X(u) = \re{(I(u))}$ and $Y(u) = \im{(I(u))}$. Let us denote the following quantities:
\begin{equation}
\begin{aligned}
    X_- := \liminf_{u \rightarrow v_0^-} X(u), \qquad & \qquad X_+ := \limsup_{u \rightarrow v_0^-} X(u), \\ 
    Y_- := \liminf_{u \rightarrow v_0^-} Y(u), \qquad & \qquad Y_+ := \limsup_{u \rightarrow v_0^-} Y(u). \\ 
\end{aligned}
\end{equation}
\textit{A priori}, as the asymptotic property of $I$ is unknown, we have to consider all possible behaviors of $X$ and $Y$ below. For notational convenience, by using $Z$ as a placeholder for either $X$ or $Y$, we characterize their behavior as follows:
\begin{enumerate}[(1)]
    \item $-\infty < Z_- = Z_+ < \infty$ (limit exists and it is finite),
    \item $-\infty = Z_- = Z_+$ (limit exists and it is equals to $- \infty$),
    \item $Z_- = Z_+ = \infty$ (limit exists and it is equals to $+ \infty$),
    \item $-\infty < Z_- < Z_+ < \infty$ (limit does not exist but $Z_-$ and $Z_+$ are bounded),
    \item $-\infty < Z_- < Z_+ = \infty$ (limit does not exist),
    \item $-\infty = Z_- < Z_+ < \infty$ (limit does not exist),  
    \item $-\infty = Z_- < Z_+ = \infty$ (limit does not exist).
\end{enumerate}

With respect to $X$ and $Y$, there are in total $7\times 7=49$ cases, and we denote them as ordered integer pairs $(a,b)$ with $1\leq a, b \leq 7$. For simplicity of demonstration, we focus on the $(1,1)$ case here. Since both limits for $X$ and $Y$ exist, the limit for the complex-valued $I$ does exist. By the hypothesis stated in \eqref{first instability condition}, we set
\begin{equation}
    h := \mlm \ml \frac{r\dv \phi}{\dv r}\mr_0(0) - l\mrm \neq 0 \quad \mbox{ with } \quad l = \lim_{u \rightarrow v_0^-} I(u).
\end{equation}
For $u_*$ sufficiently close to $v_0$ and for all $u \in [u_*,v_0)$, we then have
\begin{equation*}
    |I(u)-l| \leq \frac{2h}{3}.
\end{equation*}
This implies that
\begin{equation*}
\begin{aligned}
\ml \frac{r |\dv \phi|}{\dv r} \mr_0(u) 
&= \mlm \ml\frac{r \dv \phi}{\dv r} \mr_0(0) - l + l - I(u) \mrm e^{\gamma_0(u)} \geq \mlm \mlm \ml\frac{r \dv \phi}{\dv r} \mr_0(0) - l\mrm - |l - I(u)|  \mrm e^{\gamma_0(u)} \geq \frac{h}{3}e^{\gamma_0(u)}.
\end{aligned}
\end{equation*}

Let us define the following quantities:
\begin{equation*}
    \Upxi(u,v) := \frac{r(u,v)}{r_0(u)}\ml \frac{r \dv \phi}{\dv r}\mr(u,v) - \ml \frac{r\dv \phi}{\dv r
    }\mr_0 (u) 
\end{equation*}
and
\begin{equation*}
    \psi(u,v) := e^{-\gamma_0(u)}\Upxi(u,v).
\end{equation*}
We first assume that, with $n \in \mathbb{N}$ being sufficiently large, we can work under the condition
\begin{equation}\label{first instability condition 2}
\sup_{v' \in [v_0,v(u_n)]}|\psi(u_n,v')| \leq \frac{h}{6}.
\end{equation}
Under this assumption, for each $v' \in [v_0,v(u_n)]$, we have
\begin{equation*}
\begin{aligned}
\frac{r(u_n,v')}{r_0(u_n)}\frac{r|\dv \phi|}{\dv r}(u_n,v') &= \mlm \frac{r\dv \phi}{\dv r}(u_n,v')\ml \frac{r(u_n,v')}{r_0(u_n)}\mr - \ml \frac{r\dv \phi}{\dv r}\mr_0(u_n) + \ml \frac{r \dv \phi}{\dv r}\mr_0(u_n) \mrm  \\
&\geq \mlm \mlm \frac{r\dv \phi}{\dv r}(u_n,v')\ml \frac{r(u_n,v')}{r_0(u_n)}\mr - \ml \frac{r\dv \phi}{\dv r}\mr_0(u_n) \mrm - \mlm \ml \frac{r \dv \phi}{\dv r}\mr_0(u_n)\mrm\mrm \\
&= \mlm \mlm \ml \frac{r \dv \phi}{\dv r}\mr_0(u_n)\mrm - |\psi|(u_n,v')e^{\gamma_0(u_n)} \mrm \\
&\geq \mlm \frac{h}{3} - \frac{h}{6}\mrm e^{\gamma_0(u_n)}= \frac{h}{6}e^{\gamma_0(u_n)}.
\end{aligned}
\end{equation*}

Recalling the definition of $\eta^*(u,v)$, this implies:
\begin{equation*}
\begin{aligned}
\eta^*(u_n,v) &= \frac{2\int^v_{v_0} \dv m(u_n,v') \mathrm{d}v'}{r(u_n,v)} - \frac{2}{r(u_n,v)} \int^v_{v_0} \frac{Q^2(\dv r)}{2r^2} (u_n,v') \mathrm{d}v' \\
&=  \frac{2}{r(u_n,v)}\ml \int^v_{v_0} \frac{2\pi r^2(1-\mu)|\dv \phi|^2}{\dv r}(u_n,v') \mathrm{d}v' \mr  \\
&\geq \frac{2\pi(1-\mu_0(0))h^2 }{27r(u_n,v)}e^{\gamma_0(u_n)}r^2_0(u_n)\ml \int^v_{v_0} \frac{(\dv r)(u_n,v')}{r^2(u_n,v')} \mathrm{d}v' \mr  \\
&= \frac{2\pi(1-\mu_0(0))h^2 }{27}e^{\gamma_0(u_n)}\ml \frac{1}{1+\delta(u_n,v)}\mr^2\ml \frac{r(u_n,v) - r_0(u_n)}{r_0(u_n)}\mr  \\
&\geq C_{37} \delta(u_n,v)e^{\gamma_0(u_n)} \quad \mbox{ with } \quad C_{37} := \frac{8\pi(1-\mu_0(0))h^2 }{243}.  \\
\end{aligned}
\end{equation*}

On the other hand, assuming that there is no trapped surface formation, from \eqref{TSF2'} (and its more detailed version \eqref{FIT34}), we obtain
\begin{equation*}
\begin{aligned}
\eta^*(u_n,v) &\leq C_{24} \frac{\delta(u_n,v)}{1+\delta(u_n,v)}\log\ml \frac{1}{\delta(u_n,v)}\mr + C_{23}\frac{\delta(u_n,v)r_0(u_n)}{1-\sup_{{v \in [v_0,v_0+\varepsilon]}}\mu(u_n,v)} \\
&\leq C_{24} \delta(u_n,v) \log\ml \frac{1}{\delta(u_n,v)}\mr + \frac{3 C_{23}}{2(1-\mu_0(0))} \delta(u_n,v) e^{\gamma_0(u_n)}r_0(u_n) \\
&\leq C_{36}\delta(u_n,v) \ml \log \ml \frac{1}{\delta(u_n,v)}\mr + e^{\gamma_0(u_n)} r_0(u_n) \mr
\end{aligned}
\end{equation*}
with 
\begin{equation*}
C_{36} = C_{24} + \frac{3 C_{23}}{2(1-\mu_0(0))}.
\end{equation*}

By comparing the above two inequalities for $\eta^*(u_n,v)$ at $v = v(u_n)$, a contradiction arises if
\begin{equation}\label{first instability condition 3}
\log \ml \frac{1}{\delta(u_n,v(u_n))}\mr e^{-\gamma_0(u_n)} + r_0(u_n)  < \frac{C_{37}}{C_{36}}.
\end{equation}
This means that if \eqref{first instability condition 3} is true, the assumption of no trapped surface is violated, and the formation of trapped surface is guaranteed.

\begin{remark}
It is crucial to have the additional $r_0(u_n)$ as in \eqref{TSF2'} in the expression of the trapped surface formation criterion. By picking $n$ sufficiently large, we can set
\begin{equation*}
r_0(u_n) \leq \frac{C_{37}}{4C_{36}}.
\end{equation*}
At the same time, for a sufficiently large $n$, we choose $v(u_n)$ such that
\begin{equation*}
\delta(u_n,v(u_n)) = e^{-\frac{C_{37}}{4C_{36}}e^{\gamma_0(u_n)}},
\end{equation*}
hence verifying \eqref{first instability condition 3},  ultimately concluding the proof of the first instability theorem. If we cannot obtain $r_0(u_n)$ in \eqref{TSF2'} and relax the factor to $1$, we would enter the realm of comparing constant with both $C_{37}$ and $C_{36}$ being of size 1. In this case, the validation of \eqref{first instability condition 3} remains open, and we would fail to provide the integrated proof as stated above.
\end{remark}

Recall the assumption \eqref{first instability condition 2} we imposed. The next lemma verifies it.

\noindent \textbf{Lemma.} Suppose that on top of having \eqref{FIT35c'} and \eqref{FIT35c''}, we demand that
\begin{equation*}
\delta(u,v(u_n)) \leq \min \left\{ \ml \frac{\log(3)}{8C_{48}}\mr^2e^{-(9+6\xi)\gamma_0(u)}, \ml \frac{h}{1296C_{45}}\mr^2 e^{-\ml \frac{15}{2} + 3\xi\mr\gamma_0(u)}, \ml \frac{h}{1296 b C_{49}}\mr^2 e^{- \frac{19}{2}\gamma_0(u)} \right\}
\end{equation*}
also holds for all $u \in [u_2,u_n]$. Then, the following requirement for initial data:
\begin{equation}\label{first instability condition 4}
\sup_{v \in [v_0,v(u_n)]}|\psi|(u_2,v) \leq \frac{h}{54}
\end{equation}
implies that 
\begin{equation*}
\sup_{v' \in [v_0,v(u_n)]}|\psi|(u_n,v') \leq \frac{h}{6}.
\end{equation*}

To prove this claim, we first compute
\begin{equation}\label{first instability condition 5}
\begin{aligned}
\du \psi &= e^{-\gamma_0} J_4 + \psi(J_5 - \ii \e A_u) + e^{-\gamma_0}\ml \frac{r \dv \phi}{\dv r}\mr_0 J_6. 
\end{aligned}
\end{equation}
Here, we have 
\begin{equation*}
\begin{aligned}
J_4 &=  \frac{r}{r_0}\ml - D_u \phi - \ii \e \frac{Q \phi (\du r)}{r(1-\mu)}\mr + \ml D_u \phi + \ii \e \frac{Q \phi (\du r)}{r(1-\mu)}\mr_0 , \\
J_5 &=  \du \gamma - \du \gamma_0 + \frac{\du r}{r} - \ml \frac{\du r}{r}\mr_0,  \\
J_6 &= J_5 - \ii \e (A_u - (A_u)_0).
\end{aligned}
\end{equation*}

To control $J_4$, we employ the following equation:
\begin{equation*}
\begin{aligned}
& \dv \ml \frac{r}{r_0} \ml D_u \phi + \ii \e \frac{Q \phi (\du r)}{r(1-\mu)} \mr \mr \\
=\; & \frac{1}{r_0}\ml (-\du r) \dv \phi + \ii \e \frac{Q (\du r)(\dv r) \phi}{r(1-\mu)} +  (4 \pi \ii \e^2)\frac{\phi(-\du r)r^2 \im{(\phi^{\dagger} \dv \phi)}}{1-\mu} +\ii \e \frac{Q (\dv \phi)(\du r)}{1-\mu} +  4 \pi \ii \e \frac{Q r |\dv \phi|^2(\du r) \phi}{\dv r(1-\mu)}\mr.
\end{aligned}
\end{equation*}
In the subsequent detailed proofs, we bound each term on the right. In one of the upcoming steps, we employ the following estimate:
\begin{equation*}
\begin{aligned}
\int^v_{v_0} r^2|\phi|^4 \frac{\dv r}{1-\mu}(u,v') \D v' \leq \; & 8C_{35}^4\int^v_{v_0} r^2 \frac{\dv r}{1-\mu}(u,v') \D v' \\
&+  \frac{1}{2\pi^2}\int^v_{v_0} r^2 \frac{\dv r}{1-\mu}(u,v') \ml  \log \ml \frac{\frac{1 - \mu}{ \dv r}(u,v)}{\frac{1 - \mu}{ \dv r}(u_2,v)}\mr  \mr^{2}\ml \log \ml \frac{r(u_2,v)}{r(u,v)}\mr \mr^{2} \D v'. \\
\end{aligned}
\end{equation*}
We then utilize $(\frac{\dv r}{1-\mu})^{\frac12}$ to balance the term in the logarithm and relate the other $(\frac{\dv r}{1-\mu})^{\frac12}$ to $e^{\frac{1}{2}\gamma_0(u)}$. Henceforth, we obtain
\begin{equation*}
\begin{aligned}
& \int^v_{v_0} r^2 \frac{\dv r}{1-\mu}(u,v') \ml  \log \ml \frac{\frac{1 - \mu}{ \dv r}(u,v)}{\frac{1 - \mu}{ \dv r}(u_2,v)}\mr  \mr^{2}\ml \log \ml \frac{r(u_2,v)}{r(u,v)}\mr \mr^{2} \D v' \\
\lesssim \; & \int^v_{v_0} \ml \frac{\dv r}{1-\mu}\mr^{\frac{1}{2}}(u,v')  \D v' \lesssim \;   e^{\frac{1}{2}\gamma_0(u)} \ml \int^v_{v_0} \dv r(u,v')  \D v' \mr^{\frac{1}{2}} \lesssim \;  e^{\frac{1}{2}\gamma_0(u)} \delta(u,v)^{\frac{1}{2}}.
\end{aligned}
\end{equation*}
This leads to 
\begin{equation*}
\int^v_{v_0} r^2|\phi|^4 \frac{\dv r}{1-\mu}(u,v') \D v' \leq C_{41}e^{\frac{1}{2}\gamma_0(u)}\delta(u,v)^{\frac{1}{2}} \quad \mbox{ with } \quad  C_{41} \mbox{ given in } \eqref{FIT247}.
\end{equation*}

To bound the contribution from $J_5$, we first rewrite $J_5$ as
\begin{equation*}
J_5(u,v) = \int^v_{v_0} \dv \ml \frac{\du r}{r} + \du \gamma \mr(u,v') \, \D v'.
\end{equation*}
Furthermore, we can show that
\begin{equation*}
\begin{aligned}
|J_5|(u,v) &\leq C_{48} \ml \frac{-\du r}{r}\mr_0(u) e^{3 \gamma_0(u)} \delta(u,v)^{\frac{1}{2}}
\end{aligned}
\end{equation*}
with $C_{48}$ given in \eqref{FIT195}. This further implies
\begin{equation*}
\begin{aligned}
\int^u_{u_2} |J_5|(u',v) \D v' &\leq C_{48} \int^u_{u_2} \ml \frac{-\du r}{r}\mr_0(u') e^{3\gamma_0(u')}\delta(u',v)^{\frac{1}{2}}\D u' \\
&\leq 2C_{48} e^{\ml \frac{15}{4} + \frac{3}{2}\xi \mr\gamma_0(u)}\delta(u,v)^{\frac{1}{2}} \int^u_{u_2} \ml \frac{-\du r}{r}\mr_0(u') e^{-\frac{1}{4}\log\ml \frac{r_0(u')}{r_0(u)}\mr} \D u' \\
&\leq 8C_{48} e^{\ml \frac{15}{4} + \frac{3}{2}\xi \mr\gamma_0(u)}\delta(u,v)^{\frac{1}{2}}.
\end{aligned}
\end{equation*}
Note that the above lemma for $\delta(u,v)$ is critically employed in the second inequality.

Back to the equation \eqref{first instability condition 5} for $\du \psi$, for each $v \in [v_0,v(u_n)]$, we then arrive at 
\begin{equation*}
|\psi|(u_n,v) \leq 3 \ml \frac{h}{54} \times 3\mr = \frac{h}{6}.
\end{equation*}

We are left with the task of verifying all the assumptions for $\delta$, as summarized and listed in \eqref{FIT139}, \eqref{FIT140}, \eqref{FIT257}, \eqref{FIT240}, \eqref{FIT256}, along with the requirement on the initial data in \eqref{first instability condition 4}. Our complete proof is presented in a later section. For instance, the verification of \eqref{FIT139} for $\delta$ can be carried out as follows:
\begin{equation*}
\begin{aligned}
\delta(u_*,v(u_n))\log \ml \frac{1}{\delta(u_*,v(u_n))}\mr &\leq \frac{2}{e}\delta(u_*,v(u_n))^{\frac{1}{2}}
\\
&\leq \frac{4}{e}\delta(u_n,v(u_n))^{\frac{1}{2}} e^{-\frac{1}{4}\log \ml \frac{r_0(u_*)}{r_0(u_n)}\mr + \frac{3}{4}\ml 1 + 2\xi \mr\gamma_0(u_n)} \\
&= 4e^{-\frac{C_{37}}{8C_{36}}e^{-\gamma_0(u_n)} + \frac{3}{4}\ml 1 + 2\xi \mr\gamma_0(u_n)-1}.
\end{aligned}
\end{equation*}

\noindent By continuity of $\delta(u_2,\cdot)$ and \eqref{FIT256}, we can further verify the requirement \eqref{first instability condition 4}: 
\begin{equation*}
\sup_{v \in [v_0,v(u_n)]}|\psi|(u_2,v) \leq \frac{h}{54}.
\end{equation*}
The above outlines the proof for the case $(1,1)$. Analogous arguments are conducted for case $(1,4), (4,1),$ and $(4,4)$.

For the remaining 45 cases, since $\limsup_{u \rightarrow v_0^-}|I(u)| = +\infty$, we define $b:[0,v_0) \rightarrow \mathbb{R}$ via
$$b(u) := \sup_{u'\in [0,u]} \ml \frac{r|\dv \phi|}{\dv r}\mr_0(u')e^{-\gamma_0(u')}$$
and deduce that there exists an increasing sequence $u_n \rightarrow v_0^-$ such that
$$b(u_n) =\ml \frac{r|\dv \phi|}{\dv r}\mr_0(u')e^{-\gamma_0(u')}$$
with $b(u_n) \rightarrow \infty$. Repeating the above arguments, similar to \eqref{first instability condition 3}, we can deduce that a contradiction arises if 
\begin{equation}
\log\ml \frac{1}{\delta(u_n,v(u_n))}\mr e^{-\gamma_0(u_n)} + r_0(u_n) < \frac{C_{50}}{C_{36}}b(u_n)^2.
\end{equation}
In these cases, we instead pick $v(u_n)$ such that
$$\delta(u_n,v(u_n)) = e^{-b(u_n)^2\frac{C_{50}}{4C_{36}}e^{\gamma_0(u_n)}}.$$
Correspondingly, some of the required assumptions would have additional linear factors in $b(u_n)$ in their exponents. Henceforth, despite the stronger requirements, these are met through picking a value of $v(u_n)$ that is even closer to $v_0$ as described above.

\subsubsection{Second Instability Theorem}

We now investigate the scenario where 
\begin{equation*}
\ml \frac{r \dv \phi}{\dv r} \mr_0 (0) = \lim_{u \rightarrow v_0^-} I(u).
\end{equation*}
For the Einstein-scalar field system, in \cite{christ4}, Christodoulou remarked that the proof for the first instability theorem still holds (and, therefore, a similar conclusion is valid) if there exists a positive constant $p > 1$ such that
\begin{equation}\label{SIT2'}
\limsup_{u \rightarrow v_0^-}\left\{ \mlm \ml \frac{r \dv \phi}{\dv r}\mr_0(0) - I(u)\mrm e^{\frac{1}{2}p\gamma_0(u)} \right\} \neq 0.
\end{equation}
For the uncharged situation, assumption \eqref{SIT2'} can be satisfied. However, for the charged case, it is not clear whether \eqref{SIT2'} can still be imposed and verified. Here we take a different path and avoid requiring \eqref{SIT2'}.

We first define
\begin{equation*}
V(u) := \mlm \ml \frac{r \dv \phi}{\dv r}\mr_0(0) - I(u)\mrm, \quad v(u) := v_0 + e^{-14\gamma_0(u) - \log \ml \frac{r_0(0)}{r_0(u)}\mr}.
\end{equation*}
and
\begin{equation*}
g(v(u)) := \sqrt{\max\left\{V^2(u),e^{-\frac{1}{2}\gamma_0(u)},e^{-\log\ml\frac{r_0(0)}{r_0(u)}\mr} \right\}}.
\end{equation*} 
We also set
\begin{equation*}
h^2(v(u)) := \frac{1}{v(u) - v_0}\int_{v_0}^{v(u)} |\psi|^2(0,v') \D v'.
\end{equation*}
Subsequently, we will observe that the second instability theorem still holds if
\begin{equation*}
\limsup_{u \rightarrow v_0^-}\frac{h(v(u))}{g(v(u))} = \infty,
\end{equation*}
indicating that an additional instability mechanism has been triggered. In particular, we allow $$\lim_{u\rightarrow v_0^-}h(v(u))=0.$$ 

We first state a useful lemma below.

\noindent \textbf{Lemma.} Let $\tu \in [u_2,v_0)$ be given. Suppose there exists a corresponding value $v(\tu) \in [v_0,v_0 + \te]$ with $0<\te<1$ such that, for all $u \in [u_2,\tu]$, we have
\begin{equation*}
 \delta(u,v(\tu)) \leq \frac{e^{-\ml \frac{15}{2} + 3 \xi \mr\gamma_0(u)}}{64C_{53}^2}
\end{equation*}
with $C_{53}$ defined in \eqref{SIT87}.
Then, for each $u \in [u_2,\tu]$ and $v' \in [v_0,v(\tu)]$, we have
\begin{equation*}
|\tgam(u,v) - \tgam_0(u)| \leq 1,
\end{equation*}
where 
$$\tgam(u,v):=\gamma(u,v;\bar{u})=\int_{\bar{u}}^u\frac{(-\partial_u r)}{r}\frac{\mu-Q^2/r^2}{1-\mu}(u', v)du' \quad \mbox{ and } \quad \tgam_0(u):=\tgam(u, v_0).$$

To prove this lemma, we compute $\dv \gamma$ and obtain
\begin{equation}
|\tgam(u,v) - \tgam_0(u)| \leq \int^u_{u_2} \int^v_{v_0} |J_9|(u',v') + |J_{10}|(u',v') + |J_{11}|(u',v') \D v' \D u'
\end{equation}
with the expressions for $J_8, J_{9}, J_{10}$ available in \eqref{FIT248}. By carefully estimating each of these integrals, we deduce that
\begin{equation}
\int^v_{v_0} |J_8|(u',v') + |J_{9}|(u',v') + |J_{10}|(u',v') \D v' \leq C_{53}\ml \frac{-\du r}{r} \mr_0(u')e^{3\gamma_0(u')}\delta(u',v)^{\frac{1}{2}}.
\end{equation}
This gives
\begin{equation*}
\begin{aligned}
|\tgam(u,v) - \tgam_0(u)| &\leq C_{53} e^{3\gamma_0(u)}\int^u_{u_2} \ml \frac{-\du r}{r}\mr_0(u') \delta(u',v)^{\frac{1}{2}}\D u' \\
&\leq 2C_{53} e^{ \ml \frac{15}{4} + \frac{3 \xi}{2} \mr \gamma_0(u)}\delta(u,v)^{\frac{1}{2}} \int^u_{u_2} \ml \frac{-\du r}{r}\mr_0(u') e^{-\frac{1}{4}\log \ml \frac{r_0(u_2)}{r_0(u')}\mr}\D u' \\
&\leq 8C_{53} e^{\ml \frac{15}{4} + \frac{3 \xi}{2} \mr\gamma_0(u)}\delta(u,v(\tu))^{\frac{1}{2}} \leq 1.
\end{aligned}
\end{equation*}

Applying the above lemma, we obtain
\begin{equation*}
\begin{aligned}
\delta(\tu,v(\tu)) &= \frac{1}{r_0(\tu)}\int^{v(\tu)}_{v_0} (\dv r)(\tu,v')\D v' = \frac{1}{r_0(\tu)} \int^{v(\tu)}_{v_0} (\dv r)(0,v')e^{-\gamma_0(\tu) + \ml \gamma_0(\tu) -\tgam(\tu,v') \mr - \gamma(u_2,v')} \D v' \\
&\quad\quad\quad\leq \frac{e C_{56}}{2r_0(\tu)} \int^{v(\tu)}_{v_0} e^{-\gamma_0(\tu) } \D v' \leq \frac{e C_{56} e^{-\gamma_0(\tu)}}{2r_0(\tu)}(v(\tu) - v_0)
\end{aligned}
\end{equation*}
with
\begin{equation*}
C_{56}(u_2) := \max_{v'\in[v_0,v_0 + \te]}e^{|\gamma|(u_2,v')} < + \infty.
\end{equation*}
Note that the direction of the above inequality for $\delta$ is reversed as compared to \eqref{FIT83'}. In \eqref{FIT83'} the outer $\delta(u_3, v')$ is bounded by the inner $\delta(u_4, v')$ with inflation factors. 

The bulk of the arguments for that section relies on a simple inequality mentioned in Lemma \ref{SITLemma1}, which states that for any arbitrary measurable subsets in $\mathbb{R}$,
\begin{equation}\label{second1}
\int_X f|g|^2 \geq 4 \int_X f|h|^2 \implies \int_X f|g+h|^2 \geq \frac{1}{4}\int_X f|g|^2 \geq \int_X f|h|^2.
\end{equation}
To proceed, we first deduce:
\begin{equation}\label{second2}
4 \int_{v_0}^{v}  \mlm \frac{r \dv \phi}{\dv r} \mrm_0^2  (\tu,v')  \D v' \leq 4( v(\tu) - v_0 ) V^2(\tu)e^{2\gamma_0(\tu)}.
\end{equation}
Subsequently, we will attempt to show that 
\begin{equation}\label{V(u) and Theta}
4( v(\tu) - v_0 ) V^2(\tu)e^{2\gamma_0(\tu)}\leq \int_{v_0}^{v(\tu)} \mlm \Upxi \mrm^2 (\tu,v')  \D v' .
\end{equation}
To prove \eqref{V(u) and Theta}, we employ
\begin{equation}\label{second3}
\int_{v_0}^{v(\tu)} \mlm \Upxi \mrm^2 (\tu,v')  \D v' 
\geq \frac{e^{-2C_{55}}}{9}e^{2\gamma_0(\tu)}\int^{v(\tu)}_{v_0} |\psi(0,v') + J_{13}(\tu,v')|^2  \D v'.
\end{equation}
In the later sections, we will verify that
\begin{equation}\label{second4}
\int_{v_0}^{v(\tu)} |\psi|(0,v') \D v' \geq 4 \int_{v_0}^{v(\tu)} |J_{13}|(\tu,v') \D v' .
\end{equation}

Next, we define
$$h^2(v):=\frac{1}{v-v_0}\int_{v_0}^v |\psi|^2(0, v')dv'$$ 
and choose
\begin{equation*}
h^2(v(\tu)) \geq 144 e^{2C_{55}} V^2(\tu).
\end{equation*}
These imply that
\begin{equation}\label{second5}
4(v(\tu)-v_0)V^2(\tu)e^{2\gamma_0(\tu)} \leq \frac{e^{-2C_{55}}}{36}e^{2\gamma_0(\tu)}\int^{v(\tu)}_{v_0} |\psi|^2(0,v') \D v'.
\end{equation}
Stitching \eqref{second1}, \eqref{second2}, \eqref{second3}, \eqref{second4}, and \eqref{second5} together, we obtain the inequality:
\begin{equation*}
\int_{v_0}^{v(\tu)}  \mlm \ml \frac{r \dv \phi}{\dv r}\mr_0 + \Upxi \mrm^2 (\tu,v') \D v' \geq \frac{e^{-2C_{55}}}{144}e^{2\gamma_0(\tu)}\int^{v(\tu)}_{v_0}|\psi|^2(0,v')\D v'.
\end{equation*}

Plugging this back to the expression for $\eta^*(\tu,v(\tu))$, we thus obtain
\begin{equation*}
\begin{aligned}
\eta^*(\tu,v(\tu)) &= \frac{4\pi}{r(\tu,v(\tu))}\int^{v(\tu)}_{v_0} (1-\mu) \ml \frac{r^2|\dv \phi|^2}{\dv r}\mr(\tu,v') \D v' \\
&\geq \frac{16\pi C_{54}(1-\mu_0(0))}{27e r(\tu,v(\tu))}e^{-2\gamma_0(\tu)}\int_{v_0}^{v(\tu)}  \mlm \ml \frac{r \dv \phi}{\dv r}\mr_0 + \Upxi \mrm^2 (\tu,v') \D v' \\
&\geq \frac{2\pi C_{54}e^{-2C_{55}}(1-\mu_0(0))}{729 e r_0(\tu)} \int^{v(\tu)}_{v_0}|\psi|^2(0,v')\D v'.
\end{aligned}
\end{equation*}
This would contradict the no trapped-surface-formation condition \eqref{TSF2'} if we choose
\begin{equation*}
\begin{aligned}
& \frac{eC_{36}C_{56}}{2r_0(\tu)}(v(\tu) - v_0)\ml r_0(\tu) + e^{-\gamma_0(\tu)}\ml \log \ml \frac{1}{v(\tu) - v_0}\mr + \log \ml \frac{2r_0(0)}{eC_{56}}\mr  + 
 \log \ml\frac{r_0(\tu)}{r_0(0)} \mr + \gamma_0(\tu)\mr\mr \\
 \leq & \; \frac{\pi C_{54}e^{-2C_{55}}(1-\mu_0(0))}{729e r_0(\tu)} \int^{v(\tu)}_{v_0}|\psi|^2(0,v')\D v'.
\end{aligned}
\end{equation*}
Therefore, to trigger the second instability mechanism, it suffices to require
\begin{equation*}
\begin{aligned}
h^2(v(\tu)) \geq  C_{61}\ml r_0(\tu) + e^{-\gamma_0(\tu)}\ml \log \ml \frac{1}{v(\tu) - v_0}\mr + \log \ml \frac{2r_0(0)}{eC_{56}}\mr  + 
 \log \ml\frac{r_0(\tu)}{r_0(0)} \mr + \gamma_0(\tu)\mr \mr
 \end{aligned}
\end{equation*}
with
\begin{equation*}
C_{61}(u_2) := \frac{729 e^2 C_{36} C_{56}}{2\pi C_{54}e^{-2C_{55}}(1-\mu_0(0))}.
\end{equation*}

For the inequality
$$h^2(v(\tu)) \geq 144 e^{2C_{55}} V^2(\tu)$$
as mentioned above, we carefully choose
\begin{equation*}
v(\tu) - v_0 = e^{-A\gamma_0(\tu) - B \log \ml \frac{r_0(0)}{r_0(\tu)}\mr} \quad \mbox{ with } \quad A=14 \mbox{ and } B=1.
\end{equation*}
By employing a small bootstrap argument involving the use of $\du \log \delta(u,v)$,
we ultimately satisfy all the imposed requirements and establish the second instability theorem. It is noteworthy that setting $B = 1$ is necessary, while $A$ can be chosen to be as large as required.

\subsection{Related Works} 
In this section, we list some references related to our approach. In \cite{EU2, EU1, EU3}, Christodoulou initiated the exploration of the gravitational collapse for the spherically symmetric Einstein-scalar field system. In a series of breakthrough works spanning four papers \cite{christ1}-\cite{christ4}, Christodoulou proved the weak cosmic censorship conjecture for this system. Subsequently, the proofs and the conclusions in these four papers have been revisited as follows:
\begin{itemize}[leftmargin=*]
\item[-] The method used to prove trapped surface formation as in \cite{christ1} is subsequently applied to the spherically symmetric Einstein-scalar field system with a positive cosmological constant in \cite{Costa} by Costa, and to the Einstein-Maxwell-charged scalar field system within spherical symmetry in \cite{An-Lim} by the first author and Lim.    

\item[-] Within spherical symmetry and self-similar ansatz, the construction of naked singular solutions in \cite{christ3} for the Einstein-scalar field system later inspires its extension to higher dimensions by the first author and Zhang in \cite{An-ZhangXF}, and by Cicortas in \cite{Cicortas}. A recent revisit of \cite{christ3} by Singh is presented in \cite{Singh}. Additionally, for the Einstein-Euler system, under the symmetry assumption, naked singularities are successfully constructed by Guo-Hadzic-Jang in \cite{GHJ}.

\item[-] In \cite{christ4}, Christodoulou presented two instability theorems, consistent with his earlier established sharp trapped surface formation criterion. Building on BV regularity estimates from \cite{christ2} and the blueshift effect discussed in \cite{christ4}, Li-Liu \cite{LL1} gave a different argument to obtain instability results with a priori estimates. Note that their instability theorem diverges from Christodoulou's two instability theorems in \cite{christ4}.

\end{itemize}

Within spherical symmetry, based on the aforementioned Christodoulou's works, various investigations have been unfolded regarding gravitational collapse. For insights into spacelike singularities in the Einstein-scalar field system, refer to \cite{An-ZhangRX} by the first author and Zhang, \cite{An-Gajic} by the first author and Gajic, and \cite{Warren Li} by Li. For the study of weak null singularity, related to the proof presented in this paper for the charged scenario, see \cite{dafermos1} by Dafermos, \cite{Luk-Oh} by Luk-Oh, \cite{Van de Moortel1}, \cite{vandemoortel2020breakdown} by Van de Moortel, \cite{Li-Moortel} by Li-Van de Moortel, \cite{CCDHJ18a} by Cardoso-Costa-Destounis-Hintz-Jansen, \cite{Rossetti} by Rossetti, and to the references therein. We also note a recent result \cite{KeUn} by Kehle and Unger investigating extremal critical collapse for the Einstein-Maxwell-Vlasov system. The approach of this current paper with double-null foliations is quite robust. The first author and collaborators are extending this approach to cosmological spacetimes and other matter models. 

Beyond spherical symmetry, in a monumental work \cite{Chr:book} Christodoulou invented the short-pulse method and proved the first trapped-surface-formation result for the Einstein vacuum equations. Subsequently, Klainerman-Rodnianski\cite{KR:Trapped} simplified and expanded this work by introducing the signature for the short pulse. In \cite{KLR}, Klainerman-Luk-Rodnianski further relaxed the initial requirements and provided a fully anisotropic mechanism for the formation of trapped surfaces. The first scale-critical result on trapped surface formation was obtained by the first author and Luk in \cite{AL}, opening a door to study the region near the central singularity. In \cite{An17} by the first author and in \cite{AH} by the first author and Han, emergences of (isotropic and anisotropic) apparent horizons from the center of gravitational collapse are further established. Dynamics of apparent horizon is also studied by the first author and He in \cite{AnHe}. In \cite{An12}, the first author introduced the signature for decay rates. Utilizing only this signature, a new concise proof for the (scale-critical) trapped surface formation criterion is provided by the first author in \cite{An19} . Later, this approach is generalized to the Einstein-Maxwell system by the first author and Athanasiou in \cite{AA}. For the Einstein-scalar field system, with singular initial data, Li-Liu proved an almost scale-critical trapped surface formation criterion in \cite{LL2}. It is worth noting that the construction of the naked-singularity solutions for the Einstein vacuum equations was recently accomplished by Rodnianski-Shlapentokh-Rothman \cite{R-S} and by Shlapentokh-Rothamn \cite{YS}. Based on Christodoulou's naked singularity constructed in \cite{christ3}, for the Einstein-scalar field system with no symmetric assumption, the proof of naked-singularity censoring with anisotropic apparent horizon was also recently established by the first author in \cite{An2023}.  

\subsection{Acknowledgement} Both authors would like to thank Demetrios Christodoulou for insightful correspondence. XA extends his thanks to Jonathan Luk for a valuable discussion of Christodoulou's work back to Princeton. Additionally, both authors would like to thank Haoyang Chen and Kai Dong for the beneficial discussions and for their careful reading of this manuscript. 
XA is supported by MOE Tier 1 grants A-0004287-00-00, A-0008492-00-00 and MOE Tier 2 grant A-8000977-00-00. HKT is partially supported under the MOE Tier 1 grant R-146-000-269-133 during his time at the National University of Singapore.

\section{Preliminaries}\label{Prelim}

In this study, we utilize the double-null coordinate system to re-express the Einstein-Maxwell-charged-scalar-field system as presented in \eqref{Intro1}.

We say that a spacetime $(\mathcal{M},g)$ is \textit{spherically symmetric} if $SO(3)$ acts on it by isometry, and the orbits of the group are (topological) $2$-dimensional spheres $\mathcal{S}$. Our spacetime $(\mathcal{M},g)$ can be represented with a $2$-dimensional Penrose's diagram, by considering the quotient manifold  $\mathcal{Q} = \mathcal{M}/\mathcal{S}$. Under spherical symmetry, we employ the following ansatz for the Lorentzian metric:
\begin{equation}\label{Setup4}
g = - \Omega^2(u,v) \; \mathrm{d}u\mathrm{d}v + r^{2}(u,v) (\mathrm{d}\theta^2 + \sin^2 \theta \mathrm{d} \phi^2),
\end{equation}
where $\Omega(u,v)$ and $r(u,v)$ are positive functions on $\mathcal{Q}$ representing lapse and area radius respectively. Here, $u$ and $v$ are optical functions satisfying $g^{\alpha\beta}\partial_{\alpha} u \partial_{\beta} u = 0$ and $g^{\alpha\beta}\partial_{\alpha} v \partial_{\beta} v = 0$, with their level sets corresponding to outgoing and incoming null hypersurfaces respectively. In Penrose's diagram, these null geodesics are $45$-degree slanted. We further denote the axis of symmetry to be $\Gamma$, where the area radius function vanishes, and we represent $\Gamma$ as the vertical axis. Mathematically, we express this as 
\begin{equation}\label{Setup1}
\Gamma := \{ (u,v) \in \mathcal{Q} \hspace{3pt}|\hspace{3pt} r(u,v) = 0\}.
\end{equation}

As shown in Figure \ref{SetupFig1}, we proceed to set up the coordinate system in $\mathcal{Q}$ as follows.

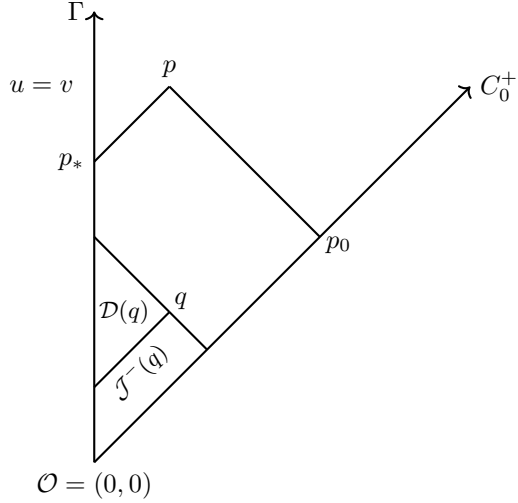
\begin{figure}[htbp]
\begin{minipage}[!t]{0.4\textwidth}
	\centering
\begin{tikzpicture}
	\begin{scope}[thick]
	\draw[->] (0,0) node[anchor=north]{$\mathcal{O} = (0,0)$} --  (0,6) node[anchor = east]{$\Gamma$};
	\draw[->] (0,0) -- (5,5) node[anchor = west]{$C_0^+$};
	\draw(3,3) -- (1,5)node[anchor = south]{$p$};
	\draw(1,5) -- (0,4)node[anchor = east]{$p_*$};
	\draw(0,3) -- (1,2);
        \draw(0,1) -- (1,2);
        \draw(1,2) -- (1.5,1.5);
        \node[] at (3.25,2.9) {$p_0$};
        \node[] at (1.15,2.15) {$q$};
	\node[] at (0.4,2) {\small $\mathcal{D}(q)$};
        \node[] at (-0.7,5) {$u = v$};
        \node[rotate=45] at (0.6,1.2) {\small $\mathcal{J}^-(q)$};
	\end{scope}
\end{tikzpicture}
\end{minipage}
\caption{The coordinate system in $\mathcal{Q}$}
\label{SetupFig1}
\end{figure}

\begin{enumerate}[leftmargin=*]
    \item Let $\Gamma$ denote the axis of symmetry as described in \eqref{Setup1}. 
      
    \item For any point $p$ in the causal future of $C_0^+$, there is a past-directed incoming null curve emanating from it and intersecting $C_0^+$ at $p_0$. We then assign the $v$-coordinate of $p$ to be the same as that of $p_0$.

   \item For the points along $\Gamma$, by the aforementioned instruction, each point about $\mathcal{O}$ is associated with a $v$-coordinate. Furthermore, we choose its $u$-coordinate along $\Gamma$ to satisfy
    \begin{equation}\label{Setup2}
        \Gamma:\; u = v.
    \end{equation}
    
    \item For an outgoing null curve emitting from a point $\mathcal{O}\in \Gamma$, we denote it $C_0^+$. We fix $u=0$ and prescribe the coordinate parameter $v$ along it. We require the point $\mathcal{O}$ to have the coordinate $(0,0)$.
     
    \item Back to the point $p$, there is a past-directed {\color{black}outgoing} null curve emanating from $p$ and intersecting $\Gamma$ at $p_*$. We proceed to assign the $u$-coordinate of $p$ to be equal to that of $p_*$.
\end{enumerate}

With the above-chosen coordinate system, we define some commonly used terminologies below.
\begin{itemize}[leftmargin = *]
    \item With $q = (u',v')$, we denote $\mathcal{D}(u',v')$ to be its {\color{black}\textit{domain of dependence determined by the point $q=(u', v')$}} and we have
    \begin{equation}\label{Setup3}
    \mathcal{D}(u',v') = \{(u,v) \in \mathcal{Q}: u \in [u',v'), v \in (u,v']\}.
    \end{equation}
    \item With $q = (u',v')$, we set $\mathcal{J}^-(u',v')$ to be its \textit{causal past} and the following holds:
    \begin{equation}\label{Setup28}
    \mathcal{J}^-(u',v') = \{(u,v) \in \mathcal{Q}: u \in [0,u'], v \in (u,v')\}.
    \end{equation}
\end{itemize}

We further eliminate the gauge freedom of $v$ for our double-null coordinates. Along $C_0^+$, we require our coordinate parameter $v$ to satisfy
\begin{equation}\label{Setup33}
C_0^+:\; v = 2r
\end{equation}
and we also impose
\begin{equation}\label{Setup34}
\Omega(0,0) = 1.
\end{equation}

Next, we remark that the system considered above is subjected to the following boundary conditions along $\Gamma$: 
\begin{enumerate}
\item For every $(u,u) \in \Gamma$ with $u \geq 0$ and $F \in C^1(\mathcal{Q})$, we first define
\begin{equation}\label{Setup74}
F|_{\Gamma}(u) := \lim_{v \rightarrow u^+}F(u,v).
\end{equation}
Under spherical symmetry, when we consider points infinitesimally close to the center, we note that the incoming null geodesics would become outgoing in the opposite direction (and vice versa) after bypassing $\Gamma$. Mathematically, for $u \geq 0$ and $F \in C^1(\mathcal{Q})$, we hence have
\begin{equation}\label{Setup75}
F|_{\Gamma}(u) = \lim_{\tilde{u} \rightarrow u^-}F(\tilde{u},u) \quad \mbox{ and } \quad \dv F|_{\Gamma}(u) = - \du F|_{\Gamma}(u).
\end{equation}
\item As employed in \cite{christ2}, at regular points along $\Gamma$, we impose the following boundary conditions:
\begin{equation}\label{Setup35}
|g_{\mathcal{Q}}(\nabla r,\nabla r)| < +\infty, \quad |\phi|_\Gamma < +\infty, \quad |F_{uv}|_{\Gamma} < + \infty. 
\end{equation}
These imply 
\begin{equation}\label{Setup36}
m|_{\Gamma} = 0, \quad r\phi|_{\Gamma} = 0, \quad Q|_{\Gamma} = 0.
\end{equation}
An alternative approach to observe that \eqref{Setup36} must hold at regular points along $\Gamma$ is to consider the Kretschmann scalar of this system. To avoid being a singular point, \eqref{Setup36} must hold. 
\item In addition, it is worth noting that the electromagnetic potential $A_u$ enjoys a gauge freedom as described in \eqref{Setup10}. Considering \eqref{Setup11}, without loss of generality, we set
\begin{equation}\label{Setup37}
A_u|_{\Gamma} = 0.
\end{equation}
\item {\color{black} By the spherical symmetry assumption, for points infinitesimally close to the center, its incoming null geodesics essentially become outgoing in the opposite direction. Mathematically, for all $u \geq 0$, we have
\begin{equation}\label{Setup38}
\dv r|_{\Gamma} = -\du r|_{\Gamma}.
\end{equation}}
\end{enumerate}

For the study of gravitational collapse, it is natural to assume that there is no anti-trapped surface along $C_0^+$, that is, $\du r(0,v) < 0$ for all $v > 0$. By \eqref{Setup14}, this in turn implies that 
\begin{equation}\label{Setup40}
\du r(u,v) < 0
\end{equation}
for all $(u,v) \in \mathcal{Q}$. 

Next, we define the regular region $\mathcal{R}$, the apparent horizon $\mathcal{A}$, and the trapped region $\mathcal{T}$ below.
\begin{itemize}[leftmargin=*]
\item The \textit{regular region} $\mathcal{R}$ refers to the set of points on $\mathcal{Q}$ in which $\dv r > 0$;
\item The \textit{apparent horizon} $\mathcal{A}$ refers to the set of points  on $\mathcal{Q}$, where $\dv r = 0$.
\item The \textit{trapped region} $\mathcal{T}$ refers to the set of points  on $\mathcal{Q}$ in which $\dv r < 0$.
\end{itemize}

A $2$-sphere is termed a \textit{trapped surface} if $\dv r(u,v) < 0$. If $\dv r(u,v) = 0$, we labelthe $2$-sphere at $(u,v)$ as a \textit{marginally outer trapped surface} (MOTS).

In view of the definitions above, using \eqref{Setup5}, we have an equivalent characterization as shown below:
\begin{center}
\begin{tabular}{|c|c|c|c|}
\hline
    Regions & Sign of $\dv r$ & Sign of $1- \mu$ & Value of $\mu$ \\
\hline
    $(u,v) \in \mathcal{R}$ & $\dv r > 0$ & $1 - \mu > 0$ & $\mu < 1$ \\
    $(u,v) \in \mathcal{A}$ & $\dv r = 0$ & $1 - \mu = 0$ & $\mu = 1$ \\
    $(u,v) \in \mathcal{T}$ & $\dv r < 0$ & $1 - \mu < 0$ & $\mu > 1$ \\
\hline
\end{tabular}
\end{center}

\noindent Related to the gauge fixing along $C_0^+$, since we set $v=2r$ along $C_0^+$, we have $\partial_v r|_{C_0^+}=1/2$. Referring to the table above, we note that neither the trapped surface nor the MOTS appears along $C_0^+$. In addition, we introduce the term
\begin{equation}\label{Setup42}
\mu_* := \sup_{v > 0} \mu(0,v) 
\end{equation}
which lies in the range $[0,1)$.

We further record a proposition that will be commonly referenced in subsequent sections.
\begin{proposition}\label{SetupProp3}
Along any future-directed outgoing null curve with $u > 0$, if $(u,v_1) \in \mathcal{A}$, then for all $(u,v)$ with $v \geq v_1$, we have $(u,v) \in \mathcal{A} \cup \mathcal{T}$. Moreover, if $(u,v_1) \in \mathcal{T}$, then for all $(u,v)$ with $v \geq v_1$, we have $(u,v) \in \mathcal{T}$. 
\end{proposition}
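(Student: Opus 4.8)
The key is the monotonicity equation \eqref{Setup22}, namely
$$\dv \ml \frac{1-\mu}{-\du r}\mr = -\frac{4 \pi r |\dv \phi|^2(1-\mu)}{(-\du r)(\dv r)}.$$
The plan is to track the sign of $1-\mu$ along the outgoing cone $\{u\}\times\{v \geq v_1\}$, using that $\du r < 0$ everywhere by \eqref{Setup40}, so the sign of $1-\mu$ equals the sign of $\frac{1-\mu}{-\du r}$. First I would treat the trapped case. Suppose $(u,v_1)\in\mathcal{T}$, so $\dv r(u,v_1) < 0$ and $1-\mu(u,v_1) < 0$. I claim $\dv r$ stays negative for all $v \geq v_1$: if not, let $v_2 > v_1$ be the first parameter where $\dv r(u,v_2) = 0$, so that $\dv r < 0$ on $[v_1,v_2)$ and hence $(u,v)\in\mathcal{T}$, i.e.\ $1-\mu < 0$, throughout $[v_1,v_2)$. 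On that interval the right-hand side of \eqref{Setup22} equals $-\frac{4\pi r|\dv\phi|^2(1-\mu)}{(-\du r)(\dv r)}$; here $1-\mu < 0$, $-\du r > 0$, $\dv r < 0$, so the whole fraction $\frac{(1-\mu)}{(-\du r)(\dv r)} \geq 0$ and the right-hand side is $\leq 0$. Thus $\frac{1-\mu}{-\du r}$ is non-increasing on $[v_1,v_2)$, so it remains $\leq \frac{1-\mu}{-\du r}(u,v_1) < 0$; letting $v \to v_2^-$ gives $\frac{1-\mu}{-\du r}(u,v_2) < 0$, hence $1-\mu(u,v_2) < 0$ and $\dv r(u,v_2) < 0$, contradicting $\dv r(u,v_2) = 0$. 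Therefore $\dv r < 0$, i.e.\ $(u,v)\in\mathcal{T}$, for all $v \geq v_1$.

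For the MOTS case, suppose $(u,v_1)\in\mathcal{A}$, so $\dv r(u,v_1) = 0$ and $1-\mu(u,v_1) = 0$. I want to show $(u,v)\in\mathcal{A}\cup\mathcal{T}$, equivalently $\dv r(u,v) \leq 0$, for all $v \geq v_1$. Suppose instead there is some $v_2 > v_1$ with $\dv r(u,v_2) > 0$. Since $\dv r(u,v_1) = 0$, by continuity we can pick $v_1 \leq v_3 < v_2$ with $\dv r(u,v_3) = 0$ and $\dv r(u,v) > 0$ on $(v_3,v_2]$ — take $v_3$ to be the supremum of parameters in $[v_1,v_2)$ where $\dv r$ vanishes. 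On $(v_3,v_2]$ we have $(u,v)\in\mathcal{R}$, so $1-\mu > 0$ there, while $1-\mu(u,v_3) = 0$ (since $\dv r(u,v_3)=0$). On $(v_3,v_2]$ the right-hand side of \eqref{Setup22} has $1-\mu > 0$, $-\du r > 0$, $\dv r > 0$, so it is strictly negative (assuming $|\dv\phi|$ not identically zero; otherwise one argues with the weak inequality and a limiting/continuity argument). Hence $\frac{1-\mu}{-\du r}$ is non-increasing on $(v_3,v_2]$, so $\frac{1-\mu}{-\du r}(u,v_2) \leq \lim_{v\to v_3^+}\frac{1-\mu}{-\du r}(u,v) = \frac{1-\mu}{-\du r}(u,v_3) = 0$. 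This forces $1-\mu(u,v_2) \leq 0$, i.e.\ $\dv r(u,v_2) \leq 0$, contradicting $\dv r(u,v_2) > 0$. Therefore no such $v_2$ exists, and $(u,v)\in\mathcal{A}\cup\mathcal{T}$ for all $v \geq v_1$.

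The only delicate point — and the main obstacle to a fully rigorous write-up — is the handling of the boundary parameter $v_3$ (resp.\ $v_2$) where $1-\mu$ touches zero: one must ensure that the ODE monotonicity of $\frac{1-\mu}{-\du r}$ obtained on the open subinterval passes to the endpoint, which is just continuity of $1-\mu$ and $\du r$ in $v$ (both are $C^1$ in the regular region), and that the potential vanishing of $|\dv\phi|$ does not let $1-\mu$ strictly increase. In fact $\frac{1-\mu}{-\du r}$ is monotone non-increasing in $v$ \emph{wherever} $\dv r$ and $1-\mu$ have the same sign, which is exactly the situation on any maximal interval of $\mathcal{R}$ or of $\mathcal{T}$; so one can also phrase the whole argument cleanly by noting that $\frac{1-\mu}{-\du r}$ cannot return from a negative value to a non-negative one along the cone. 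An alternative, perhaps slicker, route is to use \eqref{Setup20}: $\dv\du r = \frac{1}{r}\ml\frac{\mu - Q^2/r^2}{1-\mu}\mr\du r\dv r$, combined with $\dv(\Omega^{-2}\dv r) = -4\pi r \Omega^{-2}|\dv\phi|^2 \leq 0$ from \eqref{Setup15}, which shows $\Omega^{-2}\dv r$ is non-increasing in $v$; since $\Omega^{-2} > 0$, once $\dv r \leq 0$ it stays $\leq 0$, and once $\dv r < 0$ it stays $< 0$. This last observation via \eqref{Setup15} is in fact the cleanest proof and avoids the sign-bookkeeping entirely; I would present it as the main argument and relegate the \eqref{Setup22}-based computation to a remark.
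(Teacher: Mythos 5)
Your proposal is correct, and the ``slicker route'' you flag at the end (monotonicity of $\Omega^{-2}\dv r$ from \eqref{Setup15}) is precisely the paper's proof; since you state you would present that as the main argument, you have essentially reproduced it.

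One observation that would have saved you the sign-bookkeeping in your primary \eqref{Setup22}-based route: by \eqref{Setup5} and $\du r<0$, one has $\frac{1-\mu}{-\du r} = 4\,\Omega^{-2}\dv r$, so the two routes are literally the same quantity up to the constant $4$. In particular $1-\mu$ and $\dv r$ always share the same sign, so $\frac{1-\mu}{(-\du r)(\dv r)}\geq 0$ everywhere, and the right-hand side of \eqref{Setup22} is globally $\leq 0$ without any case-splitting over $\mathcal{R}$, $\mathcal{A}$, $\mathcal{T}$. That makes the ``delicate point'' about passing monotonicity to the boundary parameter $v_3$ a non-issue: one simply integrates a nonpositive derivative once, from $v_1$ to $v$, exactly as the paper does with $\Omega^{-2}\dv r$.
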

\begin{proof}
From \eqref{Setup15}, we see that $\dv(\Omega^{-2}\dv r) \leq 0$. For all $v \geq v_1$, this implies
\begin{equation}\label{Setup44}
\Omega^{-2}\dv r(u,v) \leq \Omega^{-2}\dv r(u,v_1).
\end{equation}
Thus, if $(u,v_1) \in \mathcal{A}$ with $\dv r(u,v_1) = 0$, the above yields that $\Omega^{-2}\dv r(u,v) \leq 0$. In particular, we have $\dv r(u,v) \leq 0$ since $\Omega^2 > 0$ for all $(u,v) \in \mathcal{Q}$, and thus $(u,v) \in \mathcal{A} \cup \mathcal{T}$. Similarly, if $(u,v_1) \in \mathcal{T}$ with $\dv r(u,v_1) < 0$, for $v\geq v_1$, we then conclude $\dv r(u,v) < 0$ and thus $(u,v) \in \mathcal{T}$. 
\end{proof}
Intuitively, this implies that once we encounter a MOTS along a fixed future-directed outgoing null curve at some $(u,v_1)$, then the future of this curve would either be part of the apparent horizon or be within the trapped region.

Last but not least, we review some preliminaries for the (future) singular boundary of the spacetime.

In view of Theorem \ref{FET}, there exists a point $P = (0,v_0)$ on $C_0^+$, such that the incoming null curve $C_0^-$
\begin{equation}\label{Setup45}
\mbox{with } C_0^- := \{ (u,v) \in \mathcal{Q}: v = v_0 \text{  and  } 0 \leq u \leq v_0\}
\end{equation}
cannot terminate at a regular vertex on $\Gamma$. Denote this vertex to be $\mathcal{O}' = (v_0,v_0)$. In addition, we call $P$ as the \textit{maximal sphere of development}, and we say that the past of $C_0^-$ is a \textit{terminal indecomposable past set} in the terminology of \cite{Penrose1}. The vertex $\mathcal{O}'$, also known as the unique future limit point of $\Gamma$ in $\overline{\mathcal{Q}} \setminus \mathcal{Q}$, marks the start of the \textit{singular boundary} $\mathcal{B}$, which consists of boundary points of $\mathcal{Q}$ lying in the ambient manifold $\mathbb{R}^{1+1}$. Note that $\mathcal{B}$ contains two components, the central component $\mathcal{B}_0$ (composed by possibly empty null segment and its past endpoint $\mathcal{O}'$) and the noncentral component $\mathcal{B} \setminus \mathcal{B}_0$ (being achronal, with $r$ vanishing on it). According to \cite{kommemi}, we can decompose the null segment $\mathcal{B}_0$ as $\mathcal{B}_0 = \mathcal{S}^1_{\Gamma} \cup \mathcal{CH}_\Gamma \cup \mathcal{S}^2_{\Gamma}$. Moreover, the radius function $r$ extends continuously to zero on $\mathcal{S}^1_{\Gamma}$ and $\mathcal{S}^2_{\Gamma}$, but extends continuously to possibly non-zero values on $\mathcal{CH}_\Gamma$. Furthermore, $\mathcal{A}$ necessarily issues from $\mathcal{CH}_\Gamma$, and in particular, from $\mathcal{B}_0$. The relative positions of the aforementioned points are illustrated in Figure \ref{SetupFig5}. \\

\begin{figure}[htbp]
\begin{minipage}[!t]{0.7\textwidth}
\centering
\begin{tikzpicture}
	\begin{scope}[thick]
	\draw[-] (0,0) node[anchor=north]{$\mathcal{O}$} --  (0,8) node[anchor = east]{$\mathcal{O}'$};
	\draw[-] (0,0) -- (8,8);
        \node[rotate=45] at (5.2,4.8) {$C_0^+$};
        \node[] at (-0.3,5.0) {$\Gamma$};
        \fill[gray!50] (1,9) to (2,10) to[out = 45,in=135] (6,10) to[out = -135, in = 45] (1,9);
        \draw[] (0,8) -- (4,4);
        \draw[densely dashed] (0,4) -- (6,10);
        \node[rotate=45] at (4.2,7.8) {$\mathcal{EH}$};
        \draw[] (6,10) -- (8,8);
        \node[rotate = -45] at (7.2,9.2){$\mathcal{I}^+$};
        \draw[thick, densely dotted] (0,8) to (2,10);
        \draw[thick, loosely dotted] (2,10) to [out = 45,in=135] (6,10);
        \draw[fill=white] (2,10) circle (2pt);
        \draw[fill=white] (0.7,8.7) circle (2pt);
        \node[rotate=45] at (0.1,8.5) {$\mathcal{S}^1_\Gamma$};
        \draw[fill=white] (1.4,9.4) circle (2pt);
        \node[rotate=45] at (0.8,9.2) {$\mathcal{CH}_\Gamma$};
        \node[rotate=45] at (1.5,9.9) {$\mathcal{S}^2_\Gamma$};
        \draw[thick, dashdotted] (1,9) to[out = 45, in = - 135] (6,10);
        \node[] at (4.1,9.7) {$\mathcal{A}$};
        \node[] at (4.2,10.3) {$\mathcal{BH}$};
        \node[anchor = west] at (9.8,10.2) {\ul{Legend:}};
        \node[anchor = west] at (11,9.5) {$\mathcal{B}_0$}; 
        \draw[thick, densely dotted] (10,9.3) to (11,9.7);
        \node[anchor = west] at (11,9) {$\mathcal{B} \setminus \mathcal{B}_0$}; 
        \draw[thick, loosely dotted] (10,8.8) to (11,9.2);
        \node[anchor = west] at (11,8.5) {$\mathcal{A}$}; 
        \draw[thick, dashdotted] (10,8.3) to (11,8.7);
        \draw[fill=white] (0,8) circle (2pt);
        \draw[fill=white] (6,10) circle (2pt);
	\end{scope}
\end{tikzpicture}
\end{minipage}
\caption{}
\label{SetupFig5}
\end{figure}

Let $\mathcal{U}$ be a subset of the regular region $\mathcal{R}$. With regards to the regularity of solutions to the spherically symmetric Einstein-Maxwell-charged scalar field system, we first define the notion of a $C^1$ solution as follows:

\begin{definition}\label{C1solution}
A solution to equations \eqref{Setup12}-\eqref{Setup19} is called a $C^1$ solution in $\mathcal{U}$ if $$\inf_{\mathcal{U}} \du r > - \infty$$ 
and $\dv r, \du r, \phi, r \dv \phi, r \du \phi, Q/r$ are $C^1$ functions on $\overline{\mathcal{U}}$.
\end{definition}

\begin{remark}
Contrasting with the corresponding definitions in \cite{christ2}, here we also require $Q/r$ being $C^1$ on $\overline{\mathcal{U}}$. This follows from considering \eqref{Setup18}, \eqref{Setup19} and the fact that $\dv r, \du r, \phi, r \dv \phi, r \du \phi$ being $C^1$ on $\overline{\mathcal{U}}$. Furthermore, note that $Q/r$ is a dimensionless quantity, which is consistent with the other terms mentioned above. 
\end{remark} 

\begin{remark}
In this paper, when we refer to $C^1$ initial data, we mean
$\alpha:=\frac{\partial_v(r\phi)}{\partial_v r}$ is $C^1$ along $C_0^+$. 
\end{remark}

We proceed to define the notion of a solution of bounded variation in $\mathcal{U}$ as in \cite{christ2} below:
\begin{definition}\label{BVsolution}
A solution to equations  \eqref{Setup12}-\eqref{Setup19} is called a solution of bounded variation in $\mathcal{U}$ if 
\begin{equation}\label{Setup77}
\inf_{\mathcal{U}} \du r > - \infty,
\end{equation}
and 
\begin{enumerate}[(i)]
\item $\dv r$ is of bounded variation along each outgoing null curve and $\du r$ is of bounded variation along each incoming null curve;
\item For each $p$ with $(p,p) \in \Gamma$, we have
\begin{equation}\label{Setup78}
\lim_{\varepsilon \rightarrow 0^+}(\dv r + \du r)(a,a+\varepsilon) = 0;
\end{equation}
\item $\phi$ is an absolutely continuous function on each outgoing null curve and on each incoming null curve;
\item For each $p$ with $(p,p) \in \Gamma$, the following conditions hold:
\begin{equation}\label{Setup79}
\begin{aligned}
\lim_{\varepsilon \rightarrow 0} \sup_{0 < \delta \leq \varepsilon} TV_{\{p - \delta\} \times (p-\delta,p)}[\phi] = 0, \qquad \lim_{\varepsilon \rightarrow 0} \sup_{0 \leq \delta < \varepsilon} TV_{ (p-\varepsilon,p-\delta) \times \{p-\delta\} }[\phi] = 0, \\
\lim_{\varepsilon \rightarrow 0} \sup_{0 < \delta \leq \varepsilon} TV_{ (p,p+\delta) \times \{p+\delta\} }[\phi] = 0, \qquad \lim_{\varepsilon \rightarrow 0} \sup_{0 \leq \delta < \varepsilon} TV_{\{p+\delta\} \times (p+\delta,p+\varepsilon) }[\phi] = 0;
\end{aligned}
\end{equation}
\item $r\dv \phi$ is of bounded variation on each outgoing null curve, and $r \du \phi$ is of bounded variation on each incoming null curve;
\item For each $p$ with $(p,p) \in \Gamma$, we have
\begin{equation}\label{Setup80}
\lim_{\varepsilon \rightarrow 0^+} \ml \dv(r\phi) + \du(r\phi) \mr (p,p+\varepsilon) = 0.
\end{equation}
\end{enumerate}
\end{definition}

\begin{remark}
Here, we retain the definition of a solution of  bounded variation solution as in \cite{christ2}. In Section \ref{BV Area Estimates}, we will deduce the additional total-variation estimates for the charged terms and observe that the above is a proper definition.
\end{remark}

\section{Novel Estimates for \texorpdfstring{$|Q|$}{Q} and  \texorpdfstring{$|\phi|$}{Phi}}\label{Q and phi} 
We now prove several crucial estimates for $Q$ and $\phi$ which are employed throughout the paper. We start off with a useful lemma for $\dv r/(1-\mu)$ and $\du r/(1-\mu)$ below.
\begin{lemma}
\label{SetupLemma1}
In the regular region $\mathcal{R}$, the function $\frac{\dv r}{1 - \mu}(u,v)$ is decreasing along any future-directed incoming null curve and the function $\frac{-\du r}{1-\mu}$ is increasing along any future-directed outgoing null curve. Furthermore, if $(u,v) \in \mathcal{R}$ and there exists $0 \leq \overline{u} < u$ such that $(u',v) \in \mathcal{R}$ for all $u' \in [\overline{u},u]$, then we have
\begin{equation}\label{Estimates1}
\frac{\dv r}{1-\mu} (u,v) \leq \frac{\bar{B}}{(1-\overline{\mu_*})}(\overline{u},v)
\end{equation}
with 
\begin{equation}\label{Bandmu*}
\bar{B}(\overline{u},v) := \sup_{v'\in(\overline{u},v]} \dv r(\overline{u},v') \quad \text{ and } \quad \overline{\mu_*}(\overline{u},v) := \sup_{v'\in (\overline{u},v]}\mu(\overline{u},v') \in [0,1).
\end{equation}
\end{lemma}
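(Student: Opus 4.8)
The plan is to establish the two monotonicity claims first and then deduce the quantitative bound \eqref{Estimates1} by combining them. For the monotonicity of $\frac{\dv r}{1-\mu}$ along future-directed incoming null curves, I would work directly from equation \eqref{Setup21}, which reads $\du\!\ml \frac{1-\mu}{\dv r}\mr = \frac{4\pi r|D_u\phi|^2(1-\mu)}{(-\du r)(\dv r)}$. In the regular region $\mathcal{R}$ we have $\dv r>0$ and $1-\mu>0$, and by \eqref{Setup40} we have $-\du r>0$, so the right-hand side is nonnegative. Hence $\frac{1-\mu}{\dv r}$ is nondecreasing in $u$, which is exactly the statement that $\frac{\dv r}{1-\mu}$ is nonincreasing along any future-directed incoming null curve (i.e. as $u$ increases with $v$ fixed). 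Symmetrically, for the second claim I would use \eqref{Setup22}, $\dv\!\ml \frac{1-\mu}{-\du r}\mr = -\frac{4\pi r|\dv\phi|^2(1-\mu)}{(-\du r)(\dv r)}$; in $\mathcal{R}$ the right-hand side is nonpositive, so $\frac{1-\mu}{-\du r}$ is nonincreasing in $v$, i.e. $\frac{-\du r}{1-\mu}$ is nondecreasing along any future-directed outgoing null curve. One should double-check that $\mathcal{R}$ is preserved along the relevant segments so that these sign conditions hold throughout; this is implicit in the hypothesis of \eqref{Estimates1} which assumes $(u',v)\in\mathcal{R}$ for all $u'\in[\overline u,u]$.

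For the quantitative bound \eqref{Estimates1}, I would apply the just-proved monotonicity along the incoming null curve $\{v\}\times$ (varying $u$). Since $(u',v)\in\mathcal{R}$ for all $u'\in[\overline u,u]$, monotonicity in $u$ gives
\begin{equation*}
\frac{\dv r}{1-\mu}(u,v) \leq \frac{\dv r}{1-\mu}(\overline u,v).
\end{equation*}
Then I bound the right-hand side crudely: $\dv r(\overline u,v)\leq \sup_{v'\in(\overline u,v]}\dv r(\overline u,v') = \bar B(\overline u,v)$, and $1-\mu(\overline u,v)\geq 1-\sup_{v'\in(\overline u,v]}\mu(\overline u,v') = 1-\overline{\mu_*}(\overline u,v)>0$, where positivity of $1-\overline{\mu_*}$ follows from $(\overline u,v)$ (and nearby points) lying in $\mathcal{R}$, so that $\overline{\mu_*}\in[0,1)$ as asserted in \eqref{Bandmu*}. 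Combining,
\begin{equation*}
\frac{\dv r}{1-\mu}(u,v) \leq \frac{\bar B(\overline u,v)}{1-\overline{\mu_*}(\overline u,v)},
\end{equation*}
which is \eqref{Estimates1}.

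I do not expect a serious obstacle here; the proof is essentially a direct reading of \eqref{Setup21}--\eqref{Setup22} together with the sign conventions $\dv r>0$, $1-\mu>0$ in $\mathcal{R}$ and $-\du r>0$ from \eqref{Setup40}. The only point requiring a little care is making sure the monotonicity is applied along a segment lying entirely in $\mathcal{R}$ (so all sign conditions are valid simultaneously), and correctly matching the direction of monotonicity in $u$ versus $v$ with the phrases "future-directed incoming" and "future-directed outgoing" — an incoming null curve is one of constant $v$ traversed in the direction of increasing $u$, and an outgoing null curve is one of constant $u$ traversed in the direction of increasing $v$. If one wanted to be slightly more careful about the endpoint $\overline u$ possibly being $0$, where the center $\Gamma$ is approached only as $v\to\overline u^+$, one notes that \eqref{Bandmu*} takes the supremum over $v'\in(\overline u,v]$, so the bound is stated in terms of quantities along the incoming curve strictly to the future of $\Gamma$, and no boundary subtlety arises.
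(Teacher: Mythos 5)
Your proof is correct and takes essentially the same route as the paper: both establish the monotonicity claims directly from the sign of the right-hand sides of \eqref{Setup21} and \eqref{Setup22} in $\mathcal{R}$, and then obtain \eqref{Estimates1} by applying the $u$-monotonicity along the incoming segment from $(\overline u,v)$ to $(u,v)$ followed by the crude supremum bounds at $u=\overline u$. The paper's write-up is slightly more terse but identical in substance.
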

\begin{figure}[htbp]
\begin{tikzpicture}
\begin{scope}[thick]
\draw[->] (0,0) node[anchor=north]{$\mathcal{O}$} --  (0,6) node[anchor = east]{$\Gamma$};
\draw[->] (0,0) -- (5,5) node[anchor = west]{$C_0^+$};
\draw[thick, dashdotted]  (4.9,5.1) to[in=-120,out= -140] (2.4,2.9) to[out=60,in=0] (2,4) to[out=180,in=-45] (0,4.6);
\draw[thick] (1.2,3.8) node[anchor = east]{$p$} -- (2,3)node[anchor = south]{$\overline{p}$};
\draw[thick] (0,1) -- (2,3);
\draw[dotted] (2,3) -- (2.5,2.5);
\node[] at (0.5,2.5) {$\mathcal{R}$};
\node[] at (3.5,4.5) {$\mathcal{T}$};
\node[] at (2,4.3) {$\mathcal{A}$};
\node at (1.2,3.8)[circle,fill,inner sep=1.5pt]{};
\node at (2,3)[circle,fill,inner sep=1.5pt]{};
\end{scope}
\end{tikzpicture}
\caption{For a given point $p = (u,v) \in \mathcal{R}$, suppose that we can find some $\overline{p} = (\overline{u},v) \in \mathcal{R}$ with $0 \leq \overline{u} < u$ such that the incoming null curve $\overline{p}p$ is in $\mathcal{R}$, then the estimates \eqref{Estimates1} holds. Here the supremums in \eqref{Bandmu*} are obtained along the outgoing null curve originating from $\Gamma$ which intersects with $\overline{p}$. Even though the dotted line from $\overline{p}$ extends to the past along an incoming null curve may intersect $\mathcal{A}$ and transits into $\mathcal{T}$ before intersecting $C_0^+$, this does not affect our lemma here as it suffices to only find a point $\overline{p}$ along a past-directed incoming null curve from $p$ such that $\overline{p}p$ is in $\mathcal{R}$. If the dotted line from $\overline{p}$ does not intersect $\mathcal{A}\cup\mathcal{T}$, we can set $\overline{u} = 0$. }
\label{SetupFig2}
\end{figure}
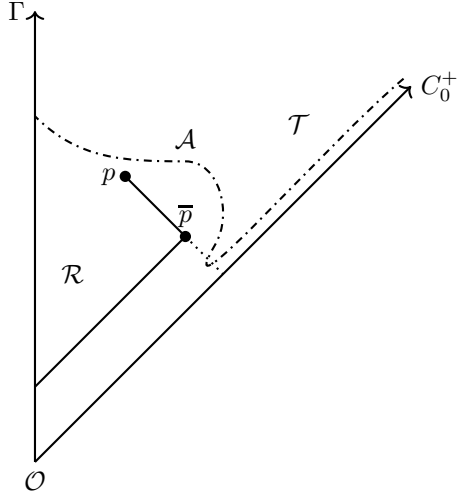
\begin{proof}
We use Figure \ref{SetupFig2} to illustrate this lemma. For $(u,v) \in \mathcal{R}$, we have $1-\mu > 0$, $\dv r > 0$ and $\du r < 0$. By \eqref{Setup21}, it implies that $\du \ml \frac{1-\mu}{\dv r}\mr \geq 0$ and $\du \ml \frac{\dv r}{1 - \mu}\mr \leq 0$. Hence, $\frac{\dv r}{1-\mu}$ is decreasing along future-directed incoming null curves, which lies in $\mathcal{R}$. Similarly, via \eqref{Setup22}, we deduce that $\frac{-\du r}{1 - \mu}$ is increasing along future-directed outgoing null directions. The bound \eqref{Estimates1} then follows from the monotonicity of $\frac{\dv r}{1-\mu}$ along incoming null directions.
\end{proof}
Next, we record a calculus lemma that would be used not only for obtaining estimates of $|\phi|$ and $|Q|$, but also heavily employed throughout the paper. 
\begin{lemma}\label{xlogx}
For any $x \in (0,1)$ and $c > 0$, we have
\begin{equation}\label{xlogx1}
x^c \log \ml \frac{1}{x}\mr \leq \frac{1}{ce}.
\end{equation}
\end{lemma}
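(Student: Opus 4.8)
\textbf{Proof proposal for Lemma \ref{xlogx}.}

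The plan is to reduce the claimed two-variable inequality to a one-variable optimization and then solve that by elementary calculus. First I would introduce the substitution $t = -\log x$, so that $x \in (0,1)$ corresponds to $t \in (0,+\infty)$ and $x = e^{-t}$. Then $x^c \log(1/x) = e^{-ct} \cdot t = t e^{-ct}$, so the statement becomes: for every $c > 0$ and every $t > 0$, one has $t e^{-ct} \le \tfrac{1}{ce}$.

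Next I would analyze the function $g(t) := t e^{-ct}$ on $(0,+\infty)$. It is smooth, vanishes at $t = 0$, tends to $0$ as $t \to +\infty$, and is positive in between, so its supremum is attained at an interior critical point. Differentiating, $g'(t) = e^{-ct}(1 - ct)$, which vanishes precisely at $t = 1/c$; since $g'$ changes sign from positive to negative there, this is the global maximum. Evaluating, $g(1/c) = \tfrac{1}{c} e^{-1} = \tfrac{1}{ce}$. Hence $g(t) \le \tfrac{1}{ce}$ for all $t > 0$, which is exactly the rewritten inequality; undoing the substitution yields \eqref{xlogx1}.

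There is essentially no obstacle here — the only point requiring a word of care is that the substitution $t = -\log x$ is a bijection from $(0,1)$ onto $(0,\infty)$, so the supremum of $x^c\log(1/x)$ over $x\in(0,1)$ equals the supremum of $te^{-ct}$ over $t\in(0,\infty)$, and the latter is pinned down exactly by the first-derivative test above. Equality holds iff $x = e^{-1/c}$, though the lemma only needs the inequality. If one prefers to avoid the substitution, the same computation can be carried out directly: set $\varphi(x) = x^c \log(1/x) = -x^c \log x$ on $(0,1)$, compute $\varphi'(x) = -c x^{c-1}\log x - x^{c-1} = -x^{c-1}(c\log x + 1)$, find the critical point $x = e^{-1/c}$, and check it is a maximum with value $\tfrac{1}{ce}$.
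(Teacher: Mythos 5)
Your proof is correct and, modulo the cosmetic substitution $t=-\log x$, takes essentially the same route as the paper: optimize a single-variable function by finding its unique critical point and evaluating there. One small difference worth noting: the paper uses a second-derivative test at $x_*=e^{-1/c}$, which on its own only certifies a \emph{local} maximum; your argument via the sign change of $g'$ together with the boundary behavior $g(0^+)=0$ and $g(t)\to 0$ as $t\to\infty$ directly pins down the \emph{global} supremum, so it is marginally tighter as written. Both versions are fine for the lemma's purpose.
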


\begin{proof}
Let $f(x) = x^c \log \ml \frac{1}{x}\mr$. Observe that $f'(x_*) = x_*^{-1+c}\ml -1 + c\log (\frac{1}{x_*})\mr = 0$ when $x_* = e^{-1/c}$. Furthermore, we have $f''(x_*) = x_*^{-2+c}\ml 1-2c-(1-c)c\log(\frac{1}{x_*})\mr = - c e^{-1+2/c} < 0$ for all $c > 0$. This implies that $f$ attains a maximum at $x_* = e^{-1/c} \in (0,1)$, and thus 
\begin{equation}\label{xlogx2}
\begin{aligned}
f(x) &\leq \ml e^{-1/c}\mr^c \log \ml e^{1/c}\mr = \frac{1}{ce}
\end{aligned}
\end{equation}
all $x \in (0,1)$.
\end{proof}

\begin{remark}
Lemma \ref{xlogx} allows us to absorb any logarithmic term $\log \ml \frac{1}{x} \mr$ by $x^c$ for an arbitrarily small $c > 0$ with an explicit upper bound. 
\end{remark}

With the above preparations, we proceed to derive some novel estimates for $|\phi|$ and $|Q|$, which will be heavily used in subsequent sections.  For notational simplicity, for each $(\overline{u},v) \in \mathcal{R}$, we define
\begin{equation}\label{overlineP}
\overline{P}(\overline{u},v) := \sup_{v'\in(\overline{u},v]} |\phi|(\overline{u},v') < +\infty.
\end{equation}
And we have

\begin{proposition}\label{PropEstimates} For $(u,v) \in \mathcal{R}$, if there exists a $0 \leq \overline{u} < u$ such that $(u',v') \in \mathcal{R}$ for all  $u' \in [\overline{u},u]$ and $v' \in (u',v]$, then for each given $\chi \in (0,1)$,  we have 
\begin{equation}\label{Estimates13}
\begin{aligned}
|\phi|(u,v) \leq \overline{P}(\overline{u},v) + \frac{1}{(4\pi)^{\frac{1}{2}}}\ml  \log \ml \frac{\frac{1 - \mu}{ \dv r}(u,v)}{\frac{1 - \mu}{ \dv r}(\overline{u},v)}\mr  \mr^{\frac{1}{2}}\ml \log \ml \frac{r(\overline{u},v)}{r(u,v)}\mr \mr^{\frac{1}{2}}
\end{aligned}
\end{equation}
and
\begin{equation}\label{Estimates14}
\begin{aligned}
|Q|(u,v) 
&\leq C_1(\overline{u},v;\chi) \cdot \mu^{\oh}(u,v) \cdot r^{\frac{3}{2} - \frac{\chi}{2}}(u,v) \cdot r^{\frac{\chi}{2}}(\overline{u},v) \cdot (v - u)^{\oh}
\end{aligned}
\end{equation}
with constant $C_1(\overline{u},v;\chi)$ given in \eqref{Estimates11}. In particular, we have
\begin{equation}\label{Estimates17}
\frac{Q^2}{r^2}(u,v) \leq C_1^2(\overline{u},v;\chi) \cdot \mu(u,v) \cdot r^{1 - \chi}(u,v) \cdot r^{\chi}(\overline{u},v) \cdot (v - u).
\end{equation}
Furthermore, if $Q$ is paired with $\phi$, we obtain an improved estimate
\begin{equation}\label{Estimates19}
|Q\phi|(u,v) \leq C_5(\overline{u},v;\chi,\xi_1,\xi_2) \cdot \mu(u,v)^{\frac{1}{2\xi_1}} \cdot r^{\frac{3}{2}- \frac{\chi}{2} -\frac{1}{2\xi_1}(1+\xi_2)}(u,v),
\end{equation}
where the constant $C_5$ (depending on $\overline{u},v,\chi,\xi_1$, $\xi_2$) is given in \eqref{Estimates33},
with any $\xi_1 \geq 1$ and $\xi_2 \in (0,1-\chi)$.
\end{proposition}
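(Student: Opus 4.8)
The plan is to establish the four estimates in the stated order, as each feeds into the next, with Lemma~\ref{SetupLemma1} (monotonicity of $\dv r/(1-\mu)$ along incoming null curves) and the calculus Lemma~\ref{xlogx} serving as the two workhorses. For \eqref{Estimates13} I would first record the pointwise inequality $\du|\phi|\le|D_u\phi|$: this holds because $\du(|\phi|^2)=2\re(\phi^{\dagger}D_u\phi)$, the gauge contribution $\ii\e A_u|\phi|^2$ being purely imaginary; equivalently, with the integrating factor $\psi(u'):=\phi(u',v)\exp(\ii\e\int_{\overline u}^{u'}A_u(u'',v)\,\D u'')$ one has $|\psi|=|\phi|$ and $|\partial_{u'}\psi|=|D_u\phi|$, so that $|\phi|(u,v)\le|\phi|(\overline u,v)+\int_{\overline u}^u|D_u\phi|(u',v)\,\D u'$. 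Applying Cauchy--Schwarz with the split $|D_u\phi|=(\tfrac{4\pi r|D_u\phi|^2}{-\du r})^{1/2}(\tfrac{-\du r}{4\pi r})^{1/2}$, then identifying $\int_{\overline u}^u\tfrac{4\pi r|D_u\phi|^2}{-\du r}\,\D u'=\log(\tfrac{(1-\mu)/\dv r\,(u,v)}{(1-\mu)/\dv r\,(\overline u,v)})$ via \eqref{Setup21} and $\int_{\overline u}^u\tfrac{-\du r}{r}\,\D u'=\log(r(\overline u,v)/r(u,v))$, and finally weakening $|\phi|(\overline u,v)\le\overline P(\overline u,v)$, yields \eqref{Estimates13}.

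For \eqref{Estimates14} and \eqref{Estimates17} I start from \eqref{Setup19} together with $Q|_{\Gamma}=0$ (from \eqref{Setup36}), so that $|Q|(u,v)\le4\pi\e\int_u^v r^2|\phi||\dv\phi|(u,v')\,\D v'$. Substituting \eqref{Estimates13} for $|\phi|(u,v')$ — the $\overline P$-piece giving lower-order terms absorbed at the end — and applying Cauchy--Schwarz in $v'$ pulls out $(\int_u^v\tfrac{2\pi r^2(1-\mu)|\dv\phi|^2}{\dv r}\,\D v')^{1/2}$, which by \eqref{Setup25} and $m|_{\Gamma}=0$ is $\le m^{1/2}(u,v)=(\mu r/2)^{1/2}(u,v)$. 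The leftover factor is $(\int_u^v r^2(u,v')\,\tfrac{\dv r}{1-\mu}(u,v')\,L_1(v')\,L_2(v')\,\D v')^{1/2}$ with $L_1=\log(\tfrac{(1-\mu)/\dv r\,(u,v')}{(1-\mu)/\dv r\,(\overline u,v')})$ and $L_2=\log(r(\overline u,v')/r(u,v'))$. The key move: write $\tfrac{\dv r}{1-\mu}(u,v')\,L_1=\tfrac{\dv r}{1-\mu}(\overline u,v')\cdot s\log(1/s)$ with $s:=\tfrac{(\dv r/(1-\mu))(u,v')}{(\dv r/(1-\mu))(\overline u,v')}\in(0,1]$ by Lemma~\ref{SetupLemma1}, so $s\log(1/s)\le 1/e$ by Lemma~\ref{xlogx}; and $r^2(u,v')L_2=r^{2-\chi}(u,v')\,(r^{\chi}(u,v')L_2)\le\tfrac{1}{\chi e}r^{2-\chi}(u,v')\,r^{\chi}(\overline u,v')$, again by Lemma~\ref{xlogx} with exponent $\chi$. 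Bounding $\tfrac{\dv r}{1-\mu}(\overline u,\cdot)$ by its finite supremum over $(\overline u,v]$ and $r(u,v'),r(\overline u,v')$ by their values at $v'=v$, the $v'$-integral is $\lesssim r^{2-\chi}(u,v)\,r^{\chi}(\overline u,v)\,(v-u)$; collecting constants into $C_1(\overline u,v;\chi)$ yields \eqref{Estimates14}, and \eqref{Estimates17} is immediate.

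For the improved bound \eqref{Estimates19}, since $Q$ is real-valued we have $|Q\phi|=|Q||\phi|$ and $|Q\phi|^2|_{\Gamma}=0$, hence $|Q\phi|^2(u,v)\le\int_u^v2|Q\phi|\,|\dv(Q\phi)|(u,v')\,\D v'$; from \eqref{Setup19}, $|\dv(Q\phi)|\le4\pi\e r^2|\phi|^2|\dv\phi|+|Q||\dv\phi|$, so the integrand is $\le 2(4\pi\e r^2|\phi|^3|Q|+|\phi||Q|^2)|\dv\phi|$. A second Cauchy--Schwarz extracting $(\int_u^v\tfrac{2\pi r^2(1-\mu)|\dv\phi|^2}{\dv r})^{1/2}\le m^{1/2}(u,v)$ leaves a weighted integral of $\tfrac{\dv r}{r^2(1-\mu)}(4\pi\e r^2|\phi|^3|Q|+|\phi||Q|^2)^2$; I then insert \eqref{Estimates13} and \eqref{Estimates14} and use once more that the surviving weight $\tfrac{\dv r}{1-\mu}(u,v')$ converts the powers of the logarithms coming from $|\phi|$ into uniform constants via Lemma~\ref{xlogx}, now in the form $s^{c}(\log(1/s))^{k}\le C(c,k)$. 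The $v'$-integration is then again of the type $\int_u^v r^{a}\dv r\,\D v'\lesssim r^{a+1}$, and the freedom in distributing the powers of $\mu$ (using $\mu\le1$) and of $r$ when invoking Lemma~\ref{xlogx} is precisely what introduces the free parameters $\xi_1\ge1$ and $\xi_2\in(0,1-\chi)$; tracking them produces \eqref{Estimates19} with $C_5$ as in \eqref{Estimates33}.

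The main obstacle is the one flagged in the introduction: the factor $L_1=\log(\tfrac{(1-\mu)/\dv r\,(u,v')}{(1-\mu)/\dv r\,(\overline u,v')})$ admits no a priori upper bound from the monotonicity of $\du\log\tfrac{1-\mu}{\dv r}$ alone, so a naive estimate produces non-uniform constants. The resolution is structural, not computational: Cauchy--Schwarz applied to $|\dv\phi|$ \emph{necessarily} leaves the weight $\tfrac{\dv r}{1-\mu}$ behind, and pairing it with $L_1$ collapses the dangerous term to $s\log(1/s)$ with $s\in(0,1]$, which Lemma~\ref{xlogx} bounds uniformly; the same mechanism handles the higher powers of the logarithms appearing in \eqref{Estimates19}. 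Once this is in place, everything else is careful bookkeeping — of the powers of $r$ and $r(\overline u,\cdot)$, of $\mu$, of the $(v-u)$-factors, and of the vanishing of $Q$, $m$, and $Q\phi$ on $\Gamma$ — so that the exponents in \eqref{Estimates14} and \eqref{Estimates19} come out exactly as stated.
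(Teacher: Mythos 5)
Your proof follows the paper's for \eqref{Estimates13}, \eqref{Estimates14}, and \eqref{Estimates17}: the integrating-factor argument to reduce to integrating $|D_u\phi|$, the Cauchy--Schwarz splits producing the two logarithmic factors, extraction of $m^{1/2}$ via \eqref{Setup25}, and the twin uses of Lemma~\ref{xlogx} --- pairing the residual weight $\dv r/(1-\mu)$ with $L_1$ to get $s\log(1/s)\le e^{-1}$, and trading $r^\chi(u,v')$ for $r^\chi(\overline u,v')$ against $L_2$ --- are exactly the paper's mechanism.

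For \eqref{Estimates19} there is a gap. You work with $|Q\phi|^2$, i.e., $\xi_1=2$, and claim the general $\xi_1\ge1$ follows by ``distributing powers of $\mu$ (using $\mu\le1$).'' This cannot work for $\xi_1\in[1,2)$: there $1/(2\xi_1)>1/4$, and since $\mu<1$ in $\mathcal{R}$ one has $\mu^{1/4}>\mu^{1/(2\xi_1)}$, so the $\xi_1=2$ bound is strictly weaker than, and does not imply, the asserted $\mu^{1/(2\xi_1)}$ bound. In the paper $\xi_1$ enters at the outset: one differentiates $|Q\phi|^{\xi_1}$ to get $|Q\phi|^{\xi_1}\le\int_u^v\xi_1|Q\phi|^{\xi_1-1}|\dv(Q\phi)|\,\D v'$, whose integrand involves $|\phi|^{\xi_1+1}|Q|^{\xi_1-1}$ and $|\phi|^{\xi_1-1}|Q|^{\xi_1}$; this is what couples the $\mu$- and $r$-exponents in \eqref{Estimates19}. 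Your account of $\xi_2$ (freedom in where to split the $r$-power when invoking Lemma~\ref{xlogx}) is correct. If you replace the $\xi_1=2$ starting point by $|Q\phi|^{\xi_1}$, the rest of your argument carries over verbatim.
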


\begin{figure}[htbp]
\begin{tikzpicture}
\begin{scope}[thick]
\fill[gray!50] (0,1) to (2,3) to (1.2,3.8) to (0,2.6) to (0,1);
\draw[->] (0,0) node[anchor=north]{$\mathcal{O}$} --  (0,6) node[anchor = east]{$\Gamma$};
\draw[->] (0,0) -- (5,5) node[anchor = west]{$C_0^+$};
\draw[thick, dashdotted]  (4.9,5.1) to[in=-120,out= -140] (2.4,2.9) to[out=60,in=0] (2,4) to[out=180,in=-45] (0,4.6);
\draw[thick] (1.2,3.8) node[anchor = east]{$p$} -- (2,3)node[anchor = south]{$\overline{p}$};
\node[] at (0.5,1) {$\mathcal{R}$};
\node[] at (3.5,4.5) {$\mathcal{T}$};
\node[] at (2,4.3) {$\mathcal{A}$};
\node[] at (-0.3,1) {$\overline{p}'$};
\node[] at (-0.3,3) {$p'$};
\draw[thick] (0,1) -- (2,3);
\draw[thick] (0,2.6) -- (1.2,3.8);
\node at (1.2,3.8)[circle,fill,inner sep=1.5pt]{};
\node at (2,3)[circle,fill,inner sep=1.5pt]{};
\end{scope}
\end{tikzpicture}
\caption{For a given point $p = (u,v) \in \mathcal{R}$, let $\overline{p} = (\overline{u},v)$ with $0 \leq \overline{u} < u$ be also a point in $\mathcal{R}$. Denote $p'p$ and $\overline{p}'\overline{p}$ as outgoing null curves emitting from $p'$ and $\overline{p}'$ along $\Gamma$ respectively. We then have that the shaded region indicated above must be in $\mathcal{R}$. To derive pointwise estimates at $p = (u,v)$, we consider the data along $\overline{p}'\overline{p}$ as the initial data.}
\label{SetupFig3}
\end{figure}
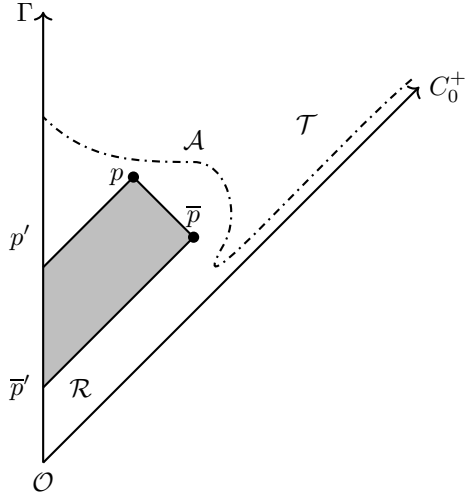

\begin{remark}\label{Remark mu Q}
A direct consequence of \eqref{Estimates17} would be that, for $\mu - Q^2/r^2$, it yields
\begin{equation}\label{Estimates18}
\ml \mu - \frac{Q^2}{r^2} \mr(u,v) \geq \mu(u,v) \ml 1 - C_1^2(\overline{u},v;\chi) \cdot r^{1 - \chi}(u,v) \cdot r^{\chi}(\overline{u},v) \cdot (v - u) \mr.
\end{equation}
Hence, we have $\mu - \frac{Q^2}{r^2} \geq 0$ if $C_1^2(\overline{u},v;\chi) \cdot r^{1 - \chi}(u,v) \cdot r^{\chi}(\overline{u},v) \cdot (v - u) \leq 1$, which naturally holds for regions sufficiently close to $\Gamma$. The sign of $\mu - Q^2/r^2$ is essential for this study as it is employed repeatedly throughout the entire paper. Furthermore, from the explicit expression of $C_1$ provided later in \eqref{Estimates11}, we can see that although $C_1(\overline{u},v;\chi)$ is a constant depending on $\overline{u}$ and $v$, it is possible to obtain an upper bound for $C_1$ that is independent of $\overline{u}$ and $v$. This will be addressed in the individual sections on a case-by-case basis.
\end{remark}

\begin{proof}
We begin with estimating $\phi$ as follows. Fix any $(u,v) \in \mathcal{R}$. By the hypothesis, there exists a $\overline{u} \in [0,u)$ such that $(u',v') \in \mathcal{R}$ for all $u' \in [\overline{u},u]$ and $v' \in [u',v']$. (This is illustrated as the shaded region in Figure \ref{SetupFig3}.) By the definition of $D_u,$ we have
$\du \phi + \ii \e A_u \phi = D_u \phi$. This implies that
\begin{equation}\label{Estimates16}
\du \ml \phi e^{\int^{u}_{\overline{u}} \ii \e A_u (u',v) \; \D u'}\mr = e^{\int^{u}_{\overline{u}} \ii \e A_u (u',v) \; \D u'} D_u \phi
\end{equation}
for any $\overline{u}$. By conducting integration with respect to $u$ from $\overline{u}$, we obtain a gauge invariant estimate for $\phi$
\begin{equation}\label{Estimates2}
|\phi|(u,v) \leq |\phi| (\overline{u},v) + \int^u_{\overline{u}} |D_u \phi| (u',v) \mathrm{d}u',
\end{equation}
where we have used the fact that $\int^{u}_{\overline{u}} \ii \e A_u (u',v) \; \D u'$ is purely imaginary. Next, notice that \eqref{Setup21} is equivalent to the following equation:
\begin{equation}\label{Estimates3}
\begin{aligned}
\du \ml \log \ml \frac{1-\mu}{\dv r}\mr\mr &= \frac{4\pi r |D_u \phi|^2}{-\du r}.
\end{aligned}
\end{equation}
Using \eqref{Estimates2} and the Cauchy-Schwarz inequality, we deduce  
\begin{equation}\label{Estimates5}
\begin{aligned}
|\phi|(u,v) &\leq  |\phi| (\overline{u},v) + \frac{1}{(4\pi)^{\frac{1}{2}}}\ml \int^u_{\overline{u}} \frac{4 \pi r|D_u \phi|^2}{-\du r} (u',v) \mathrm{d}u' \mr^{\frac{1}{2}} \ml \int^u_{\overline{u}} \frac{-\du r}{r} (u',v) \mathrm{d}u' \mr^{\frac{1}{2}} \\
&\leq \overline{P}(\overline{u},v) + \frac{1}{(4\pi)^{\frac{1}{2}}}\ml  \log \ml \frac{\frac{1 - \mu}{ \dv r}(u,v)}{\frac{1 - \mu}{ \dv r}(\overline{u},v)}\mr  \mr^{\frac{1}{2}}\ml \log \ml \frac{r(\overline{u},v)}{r(u,v)}\mr \mr^{\frac{1}{2}}.
\end{aligned}
\end{equation}

To estimate $Q$, we shall employ \eqref{Setup19}. Given any point $(u,v) \in \mathcal{R}$, we perform an integration along an outgoing null curve from $\Gamma$ to $(u,v)$. Together with the fact that $Q|_{\Gamma} = 0$ from \eqref{Setup36} and utilizing our bounds in \eqref{Estimates5}, we then derive 
\begin{equation}\label{Estimates6}
\begin{aligned}
|Q|(u,v) &\leq \int^v_u 4\pi \e r^{2}|\phi||\dv \phi| (u,v') \mathrm{d}v' \\
&\leq 4\pi \e  \int^v_u r^{2}\ml \overline{P}(\overline{u},v') + \frac{1}{(4\pi)^{\frac{1}{2}}}\ml  \log \ml \frac{\frac{1 - \mu}{ \dv r}(u,v')}{\frac{1 - \mu}{ \dv r}(\overline{u},v')}\mr  \mr^{\frac{1}{2}}\ml \log \ml \frac{r(\overline{u},v')}{r(u,v')}\mr \mr^{\frac{1}{2}}\mr|\dv \phi| (u,v') \mathrm{d}v'  \\
&= 4\pi\e (Q_1(u,v) + Q_2(u,v))
\end{aligned}
\end{equation}
with
\begin{equation}\label{Estimates7}
\begin{aligned}
Q_1(u,v) &:= \overline{P}(\overline{u},v) \int^v_u r^2 |\dv \phi|(u,v') \mathrm{d}v',  \\
Q_2(u,v) &:= \frac{1}{(4\pi)^{\frac{1}{2}}} \int^v_u \ml \log \ml \frac{\frac{1 - \mu}{ \dv r}(u,v')}{\frac{1 - \mu}{\dv r}(\overline{u},v')}\mr  \mr^{\frac{1}{2}} \ml \log \ml \frac{r(\overline{u},v')}{r(u,v')}\mr \mr^{\frac{1}{2}} r^2|\dv \phi| (u,v') \mathrm{d}v'.
\end{aligned}
\end{equation}
For $Q_1$, using \eqref{Estimates1}, \eqref{Setup25} and $\dv r(u,v) > 0$, we obtain
\begin{equation}\label{Estimates8}
\begin{aligned}
Q_1(u,v) &= \overline{P}(\overline{u},v) \int^v_u r^2 |\dv \phi|(u,v') \mathrm{d}v'  \\
&= \frac{\overline{P}(\overline{u},v)}{(2\pi)^{\frac{1}{2}}}\ml \int^v_u \frac{2\pi r^2(1-\mu)|\dv \phi|^2}{\dv r} (u,v') \mathrm{d}v'\mr^{\frac{1}{2}} \ml \int^v_u \frac{ r^2 \dv r}{1 - \mu} (u,v') \mathrm{d}v'\mr^{\frac{1}{2}} \\
&\leq \frac{\overline{P} \bar{B}^{\frac{1}{2}}}{{(2\pi (1-\overline{\mu_*}))^{\oh}}} (\overline{u},v) \cdot r(u,v) \cdot (v-u)^{\oh}  \cdot \ml \int^v_u \dv m (u,v') \mathrm{d}v'\mr^{\oh} \\
&\leq  \frac{\overline{P} \bar{B}^{\frac{1}{2}}}{{(4\pi (1-\overline{\mu_*}))^{\oh}}} (\overline{u},v) \cdot  r^{\frac{3}{2}}(u,v) \cdot \mu^{\oh}(u,v) \cdot (v-u)^{\oh}.
\end{aligned}
\end{equation}
To bound $Q_2$, we shall rely on Lemma \ref{xlogx} and we will crucially employ the structure of $\frac{\partial_v r}{1-\mu}$-terms after using the equation of $\partial_v m$ and a H\"older's inequality. Also noting that the terms in the arguments of the logarithm functions in \eqref{Estimates7} are greater than or equal to $1$, we then derive
\begin{equation}\label{Estimates9}
\begin{aligned}
Q_2(u,v) = &\; \frac{1}{(4\pi)^{\frac{1}{2}}} \int^v_u \ml \log \ml \frac{\frac{1 - \mu}{\dv r}(u,v')}{\frac{1 - \mu}{\dv r}(\overline{u},v')}\mr  \mr^{\frac{1}{2}} \ml \log \ml \frac{r(\overline{u},v')}{r(u,v')}\mr \mr^{\frac{1}{2}}r^2|\dv \phi| (u,v') \mathrm{d}v' \\
\leq &\; \frac{1}{(8\pi^2)^{\frac{1}{2}}}\ml \int^v_u \frac{2\pi r^2(1-\mu)|\dv \phi|^2}{\dv r} (u,v') \mathrm{d}v'\mr^{\frac{1}{2}} \\
&\times \ml \int^v_u \frac{ r^2 \dv r}{1 - \mu}(u,v') \cdot \ml \log \ml \frac{\frac{1 - \mu}{ \dv r}(u,v')}{\frac{1 - \mu}{ \dv r}(\overline{u},v')}\mr  \mr \cdot \ml \log \ml \frac{r(\overline{u},v')}{r(u,v')}\mr \mr \mathrm{d}v'\mr^{\frac{1}{2}} \\
\leq &\; \frac{1}{(8\pi^2)^{\frac{1}{2}}} m^{\oh}(u,v) \\
&\times \ml \int^v_u r^2 (u,v') \cdot \ml \frac{\dv r}{1 - \mu}(\overline{u},v') \mr \cdot 
\ml \frac{\frac{\dv r}{1 - \mu}(u,v')}{\frac{\dv r}{1 - \mu}(\overline{u},v')}\mr \cdot \ml \log \ml \frac{\frac{1 - \mu}{ \dv r}(u,v')}{\frac{1 - \mu}{ \dv r}(\overline{u},v')}\mr  \mr \cdot \ml \log \ml \frac{r(\overline{u},v')}{r(u,v')}\mr \mr \mathrm{d}v'\mr^{\frac{1}{2}} \\
\leq &\; \frac{\bar{B}^{\frac{1}{2}}}{(16\pi^2 e (1-\overline{\mu_*}))^{\frac{1}{2}}}(\overline{u},v) \cdot \mu^{\oh}(u,v)r^{\oh}(u,v) \cdot \ml \int^v_u r^{2-\chi} (u,v') \cdot r^{\chi}(\overline{u},v') \cdot \ml \frac{r^{\chi}(u,v')}{r^{\chi}(\overline{u},v')}\mr \cdot
\ml \log \ml \frac{r(\overline{u},v')}{r(u,v')}\mr \mr \mathrm{d}v'\mr^{\frac{1}{2}} \\
\leq &\; \frac{\bar{B}^{\frac{1}{2}}}{(16\pi^2 \chi e^2 (1-\overline{\mu_*}))^{\frac{1}{2}}}(\overline{u},v) \cdot \mu^{\oh}(u,v) \cdot r^{\frac{3}{2} - \frac{\chi}{2}}(u,v) \cdot r^{\frac{\chi}{2}}(\overline{u},v) \cdot (v - u)^{\oh}. \\
\end{aligned}
\end{equation}
Here, $\chi \in (0,2)$ is a fixed constant to be chosen.
Combining \eqref{Estimates8} and \eqref{Estimates9}, we simplify \eqref{Estimates6} as
\begin{equation}\label{Estimates10}
\begin{aligned}
|Q|(u,v) &\leq 4\pi\e \ml \frac{\overline{P}\bar{B}^{\frac{1}{2}}}{(4\pi (1-\overline{\mu_*}))^{\oh}} + \frac{\bar{B}^{\frac{1}{2}}}{(16\pi^2 \chi e^2 (1-\overline{\mu_*}))^{\oh}}\mr (\overline{u},v) \cdot \mu^{\oh}(u,v) \cdot r^{\frac{3}{2} - \frac{\chi}{2}}(u,v) \cdot r^{\frac{\chi}{2}}(\overline{u},v) \cdot (v - u)^{\oh}   \\
&= C_1(\overline{u},v;\chi) \cdot \mu^{\oh}(u,v) \cdot r^{\frac{3}{2} - \frac{\chi}{2}}(u,v) \cdot r^{\frac{\chi}{2}}(\overline{u},v) \cdot (v - u)^{\oh} 
\end{aligned}
\end{equation}
where
\begin{equation}\label{Estimates11}
C_1(\overline{u},v;\chi) := \frac{4\pi\e \bar{B}^{\oh}}{(1-\overline{\mu_*})^{\oh}} (\overline{u},v) \cdot \ml \frac{\overline{P}(\overline{u},v)}{(4\pi)^{\oh}} + \frac{1}{(16\pi^2 \chi e^2)^{\oh}}\mr.
\end{equation}
In particular, \eqref{Estimates10} is equivalent to  \eqref{Estimates14}, and \eqref{Estimates17} follows immediately.

To derive \eqref{Estimates19}, we employ a trick via considering $(Q\phi)^{\xi_1}$ for some $\xi_1 \geq 1$. By the fact that $Q\phi|_\Gamma = 0$ and $m|_{\Gamma} = 0$, via fundamental theorem of calculus and \eqref{Setup25}, we have
\begin{equation}\label{Estimates20}
\begin{aligned}
|Q\phi|^{\xi_1}(u,v)  &\leq \int^v_u \xi_1 |Q\phi|^{\xi_1 - 1} \dv(Q\phi) (u,v') \D v' \\
&\leq \int^v_u \xi_1  (4\pi \e r^2 |\phi|^{\xi_1 + 1} |Q|^{\xi_1 - 1} + |\phi|^{\xi_1 - 1}|Q|^{\xi_1})|\dv \phi| (u,v') \D v' \\
&\leq \xi_1  \ml \int^v_u \frac{\dv r}{2\pi r^2 (1-\mu)}(4\pi \e r^2 |\phi|^{\xi_1 + 1} |Q|^{\xi_1 - 1} + |\phi|^{\xi_1 - 1}|Q|^{\xi_1})^2  \D v' \mr^{\frac{1}{2}} \ml \int^v_u  \frac{2 \pi r^2(1-\mu)|\dv \phi|^2}{\dv r} \D v'\mr^{\frac{1}{2}}\\
&\leq \xi_1  \ml \int^v_u \frac{\dv r}{2\pi r^2 (1-\mu)}(4\pi \e r^2 |\phi|^{\xi_1 + 1} |Q|^{\xi_1 - 1} + |\phi|^{\xi_1 - 1}|Q|^{\xi_1})^2  (u,v') \D v' \mr^{\frac{1}{2}} m(u,v)^{\frac{1}{2}}\\
&\leq \frac{\xi_1}{\sqrt{2}} \cdot \mu(u,v)^{\frac{1}{2}} \cdot r(u,v)^{\frac{1}{2}} \cdot \ml \int^v_u \frac{\dv r}{2\pi r^2 (1-\mu)}(4\pi \e r^2 |\phi|^{\xi_1 + 1} |Q|^{\xi_1 - 1} + |\phi|^{\xi_1 - 1}|Q|^{\xi_1})^2  (u,v') \D v' \mr^{\frac{1}{2}}.
\end{aligned}
\end{equation}
The first integral on the right can be bounded as follows
\begin{equation}\label{Estimates21}
\begin{aligned}
& \int^v_u \frac{\dv r}{2\pi r^2 (1-\mu)}(4\pi \e r^2 |\phi|^{\xi_1 + 1} |Q|^{\xi_1 - 1} + |\phi|^{\xi_1 - 1}|Q|^{\xi_1})^2  \D v' \\
\leq &\; 16 \pi \e^2 \int^v_u \frac{\dv r}{1-\mu}r^2 |\phi|^{2\xi_1 + 2} |Q|^{2\xi_1 - 2} (u,v') \D v' + \frac{1}{\pi} \int^v_u \frac{\dv r}{1-\mu} \frac{1}{r^2} |\phi|^{2\xi_1 -2} |Q|^{2\xi_1 } (u,v') \D v'.
\end{aligned}
\end{equation}
The first term in the upper bound in \eqref{Estimates21} can be estimated in the same fashion as in \eqref{Estimates9} via using \eqref{Estimates13} and \eqref{Estimates14}. Indeed, for $v' \in (u',v]$, using the fact that $\overline{P}(\overline{u},\cdot)$ and $C_1(\overline{u},\cdot;\chi)$ are non-decreasing, $\dv r > 0$ and $\mu < 1$, we get
\begin{equation}\label{Estimates22}
|\phi|(u,v') \leq \overline{P}(\overline{u},v) + \frac{1}{(4\pi)^{\frac{1}{2}}} \ml  \log \ml \frac{\frac{1 - \mu}{ \dv r}(u,v')}{\frac{1 - \mu}{ \dv r}(\overline{u},v')}\mr  \mr^{\frac{1}{2}} \cdot \ml \log \ml \frac{r(\overline{u},v')}{r(u,v')}\mr \mr^{\frac{1}{2}},
\end{equation}
and
\begin{equation}\label{Estimates23}
|Q|(u,v') \leq C_2(\overline{u},v;\chi) \cdot r^{\frac{3}{2}-\frac{\chi}{2}}(u,v'),
\end{equation}
with
\begin{equation}\label{Estimates24}
\begin{aligned}
C_{2}(\overline{u},v) &= C_1(\overline{u},v;\chi) \cdot r^{\frac{\chi}{2}}(\overline{u},v) \cdot (v-\overline{u})^{\frac{1}{2}}.
\end{aligned}
\end{equation}
These in turn imply that
\begin{equation}\label{Estimates34}
\begin{aligned}
|\phi|^{2\xi_1 -2}(u,v') &\leq 2^{2\xi_1 -3}(1+\overline{P}(\overline{u},v))^{2\xi_1 -2} + \frac{1}{2\pi^{\xi_1 - 1}}\ml  \log \ml \frac{\frac{1 - \mu}{ \dv r}(u,v')}{\frac{1 - \mu}{ \dv r}(\overline{u},v')}\mr  \mr^{\xi_1 - 1} \cdot \ml \log \ml \frac{r(\overline{u},v')}{r(u,v')}\mr \mr^{\xi_1 - 1}, \\
|\phi|^{2\xi_1 +2}(u,v') &\leq 2^{2\xi_1 +1}(1+\overline{P}(\overline{u},v))^{2\xi_1 +2} + \frac{1}{2\pi^{\xi_1 + 1}}\ml  \log \ml \frac{\frac{1 - \mu}{ \dv r}(u,v')}{\frac{1 - \mu}{ \dv r}(\overline{u},v')}\mr  \mr^{\xi_1 + 1} \cdot \ml \log \ml \frac{r(\overline{u},v')}{r(u,v')}\mr \mr^{\xi_1 + 1}
\end{aligned}
\end{equation}
by appealing to the inequality $(a+b)^p \leq 2^{p-1}(a^p + b^p)$ for $a,b \geq 0$ and $p \geq 1$.

We start by dealing with the first term on the right of \eqref{Estimates21} as follows
\begin{equation}\label{Estimates25}
\begin{aligned}
& \int^v_u \frac{\dv r}{1-\mu}r^2 |\phi|^{2\xi_1 + 2} |Q|^{2\xi_1 - 2} (u,v') \D v' \\
\leq &\; C_2^{2\xi_1 - 2} 2^{2\xi_1 + 1}(1+ \overline{P}(\overline{u},v))^{2\xi_1 + 2 }\int^v_u \frac{\dv r}{1-\mu}(u,v')r^{(\xi_1 - 1) \cdot \ml 3 - \chi\mr + 2}(u,v')   \D v' \\
&+ \;  \frac{C_2^{2\xi_1 - 2}}{2 \pi^{\xi_1 + 1}}\int^v_u \frac{\dv r}{1-\mu}(u,v') \cdot r^{(\xi_1 - 1)\ml 3 - \chi\mr + 2}(u,v') \cdot \ml  \log \ml \frac{\frac{1 - \mu}{ \dv r}(u,v')}{\frac{1 - \mu}{ \dv r}(\overline{u},v')}\mr  \mr^{\xi_1 + 1} \cdot \ml \log \ml \frac{r(\overline{u},v')}{r(u,v')}\mr \mr^{\xi_1 + 1}  \D v'. \\
\end{aligned}
\end{equation}
By the monotonicity of $\frac{\dv r}{1-\mu}$ along future-directed incoming null directions, for any $\xi_2 \in (0,2)$, we have
\begin{equation}\label{Estimates35}
\begin{aligned}
& C_2^{2\xi_1 - 2} 2^{2\xi_1 + 1}(1+ \overline{P}(\overline{u},v))^{2\xi_1 + 2 }\int^v_u \frac{\dv r}{1-\mu}(u,v') \cdot r^{(\xi_1 - 1)\ml 3 - \chi\mr + 2}(u,v')   \D v' \\
\leq &\; C_2^{2\xi_1 - 2} 2^{2\xi_1 + 1}(1+ \overline{P}(\overline{u},v))^{2\xi_1 + 2 } \frac{\bar{B}(\overline{u},v) r^{\xi_2}(\overline{u},v)}{(1-\overline{\mu_*})(\overline{u},v)} \cdot (v - u) \cdot r^{(\xi_1 - 1)(3-\chi) + 2 - \xi_2}(u,v).
\end{aligned}
\end{equation}
By Lemma \ref{xlogx}, we further obtain
\begin{equation}\label{Estimates26}
\frac{\dv r}{1-\mu}(u,v') \ml  \log \ml \frac{\frac{1 - \mu}{ \dv r}(u,v')}{\frac{1 - \mu}{ \dv r}(\overline{u},v')}\mr  \mr^{\xi_1 + 1} \leq \frac{\dv r}{1-\mu}(\overline{u},v') \ml \frac{\xi_1 + 1}{e}\mr^{\xi_1 + 1},
\end{equation}
and
\begin{equation}\label{Estimates27}
\ml \log \ml \frac{r(\overline{u},v')}{r(u,v')}\mr \mr^{\xi_1 + 1}  \leq  r^{-\xi_2}(u,v') r^{\xi_2}(\overline{u},v') \ml \frac{\xi_1 + 1}{\xi_2 e}\mr^{\xi_1 + 1}.
\end{equation}
Consequently, the second term in  \eqref{Estimates25} simplifies to 
\begin{equation}\label{Estimates28}
\begin{aligned}
& \frac{C_2^{2\xi_1 - 2}}{2 \pi^{\xi_1 + 1}}\int^v_u \frac{\dv r}{1-\mu}(u,v')r^{(\xi_1 - 1)\ml 3 - \chi\mr + 2}(u,v') \ml  \log \ml \frac{\frac{1 - \mu}{ \dv r}(u,v')}{\frac{1 - \mu}{ \dv r}(\overline{u},v')}\mr  \mr^{\xi_1 + 1}\ml \log \ml \frac{r(\overline{u},v')}{r(u,v')}\mr \mr^{\xi_1 + 1}  \D v' \\
\leq &\; \frac{C_2^{2\xi_1 - 2}}{2\pi^{\xi_1 + 1}}\frac{\bar{B}(\overline{u},v)r^{\xi_2}(\overline{u},v)  }{(1-\overline{\mu_*})(\overline{u},v)} \ml \frac{\xi_1 + 1}{ e}\mr^{\xi_1 + 1}  
 \ml \frac{\xi_1 + 1}{\xi_2 e}\mr^{\xi_1 + 1} (v-u) r^{(\xi_1-1)(3-\chi) + 2 - \xi_2}(u,v)
\end{aligned}
\end{equation}
for all possible $\xi_1 \geq 1$ and $\chi \in (0,1)$. Thus, we arrive at 
\begin{equation}\label{Estimates36}
\begin{aligned}
\int^v_u \frac{\dv r}{1-\mu}r^2 |\phi|^{2\xi_1 + 2} |Q|^{2\xi_1 - 2} (u,v') \D v' 
\leq C_3(\overline{u},v;\chi,\xi_1,\xi_2) \cdot r^{(\xi_1-1)(3-\chi) + 2 - \xi_2}(u,v)
\end{aligned}
\end{equation}
with
\begin{equation}\label{Estimates37}
C_3 :=   C_2^{2\xi_1 - 2}\cdot 
 (v - \overline{u}) \frac{\bar{B}(\overline{u},v)r^{\xi_2}(\overline{u},v)  }{(1-\overline{\mu_*})(\overline{u},v)}\ml 2^{2\xi_1 + 1}(1+ \overline{P}(\overline{u},v))^{2\xi_1 + 2 } 
 + \frac{1}{2\pi^{\xi_1 + 1}}\ml \frac{\xi_1 + 1}{ e}\mr^{\xi_1 + 1}  
 \ml \frac{\xi_1 + 1}{\xi_2 e}\mr^{\xi_1 + 1} \mr.
\end{equation}
Back to the second term on the right of \eqref{Estimates21}, by an analogous argument, one can show that if $\xi_2 \in (0,1-\chi)$, it then holds
\begin{equation}\label{Estimates29}
\begin{aligned}
\int^v_u \frac{\dv r}{1-\mu} \frac{1}{r^2} |\phi|^{2\xi_1 -2} |Q|^{2\xi_1 } (u,v') \D v' 
\leq  C_4(\overline{u},v;\chi,\xi_1,\xi_2) \cdot r^{\xi_1(3-\chi) - 2 - \xi_2}(u,v),
\end{aligned}
\end{equation}
with
\begin{equation}\label{Estimates38}
C_4 :=  C_2^{2\xi_1}\cdot 
 (v - \overline{u}) \frac{\bar{B}(\overline{u},v)r^{\xi_2}(\overline{u},v)  }{(1-\overline{\mu_*})(\overline{u},v)}\ml 2^{2\xi_1 - 3}(1+ \overline{P}(\overline{u},v))^{2\xi_1 - 2 } 
 + \frac{1}{2\pi^{\xi_1 - 1}}\ml \frac{\xi_1 - 1}{ e}\mr^{\xi_1 - 1}  
 \ml \frac{\xi_1 - 1}{\xi_2 e}\mr^{\xi_1 - 1} \mr.
\end{equation}
Going back to \eqref{Estimates21}, by applying \eqref{Estimates36} and \eqref{Estimates29}, we also obtain
\begin{equation}\label{Estimates30}
\begin{aligned}
& \int^v_u \frac{\dv r}{2\pi r^2 (1-\mu)}(4\pi \e r^2 |\phi|^{\xi_1 + 1} |Q|^{\xi_1 - 1} + |\phi|^{\xi_1 - 1}|Q|^{\xi_1})^2  \D v' \\
\leq &\; \ml 16 \pi \e^2 C_3 r^{1+\chi}(\overline{u},v) + \frac{C_4}{\pi} \mr r^{\xi_1(3-\chi)-2-\xi_2}(u,v).
\end{aligned}
\end{equation}
We then apply \eqref{Estimates30} to \eqref{Estimates20} and get
\begin{equation}\label{Estimates32}
\begin{aligned}
|Q\phi|^{\xi_1}(u,v) &\leq \frac{\xi_1}{\sqrt{2}} 
 \ml 16 \pi \e^2 C_3 r^{1+\chi}(\overline{u},v) + \frac{C_4}{\pi} \mr^{\frac{1}{2}}
\mu(u,v)^{\frac{1}{2}} \cdot r^{\xi_1 \ml \frac{3}{2}- \frac{\chi}{2} \mr -\frac{1}{2}(1+\xi_2)}(u,v), \\
\end{aligned}
\end{equation}
and
\begin{equation}\label{Estimates32a}
\begin{aligned}
|Q\phi|(u,v) &\leq C_5(\overline{u},v;\chi,\xi_1,\xi_2) \cdot \mu(u,v)^{\frac{1}{2\xi_1}} \cdot 
r^{\frac{3}{2}- \frac{\chi}{2} -\frac{1}{2\xi_1}(1+\xi_2)}(u,v),
\end{aligned}
\end{equation}
with
\begin{equation}\label{Estimates33}
C_{5}(\overline{u},v;\chi,\xi_1,\xi_2) :=  \ml \frac{\xi_1}{\sqrt{2}} 
 \ml 16 \pi \e^2 C_3(\overline{u},v;\chi,\xi_1,\xi_2) \cdot r^{1+\chi}(\overline{u},v) + \frac{C_4 (\overline{u},v;\chi,\xi_1,\xi_2)}{\pi} \mr^{\frac{1}{2}} \mr^{\frac{1}{\xi_1}}.
\end{equation}
\end{proof}

\section{Trapped Surface Formation}\label{TSF}

In this section, we shall provide a novel proof and establish a new trapped surface formation criterion for the Einstein-matter fields. In particular, we prove a desired result for the charged case. This will play a crucial role in subsequent sections. If we set $Q\equiv 0$, in the appendix, we also give an approximately 1.5-page new proof of the sharp criterion obtained by Christodoulou for the Einstein-(real) scalar field system. Inspired by \cite{christ1}, for a given coordinate patch $[u_1,u_2]\times[v_1,v_2]$ with $u_1 < u_2$ and $v_1 < v_2$, we denote the dimensionless radius ratio $\delta(u)$ by
\begin{equation}\label{RadiusRatio}
\delta(u) = \frac{r(u,v_2) - r(u,v_1)}{r(u,v_1)}.
\end{equation}
We also define the dimensionless  
\textit{reduced mass ratio} as 
\begin{equation}\label{ReducedMassRatio}
\eta^*(u) := \frac{2 \ml m(u,v_2) - m(u,v_1) \mr}{r(u,v_2)} - \frac{2}{r(u,v_2)}\int^{v_2}_{v_1} \frac{Q^2(\dv r)}{2r^2}(u,v')\mathrm{d}v'.
\end{equation}
Note that the additional term containing $Q^2$ is of a special form and is not from the modified Hawking mass. This additional term will play an important role in the later-established instability theorems.

\begin{figure}[htbp]
\begin{tikzpicture}[
    dot/.style = {draw, fill = white, circle, inner sep = 0pt, minimum size = 4pt}]
\begin{scope}[thick]
\fill[gray!50] (1,4) to (2.5,5.5) to (3,5) to (1.5,3.5) to (1,4);
\fill[gray!80] (0.2,4.8) to (1.2,5.8) to (2,5) to (1,4) to (0.2,4.8);
\draw[thick] (0,0) node[anchor=north]{$\mathcal{O}$} --  (0,5);
\node[] at (-0.3,2.5) {$\Gamma$};
\node[] at (-0.3,5.0) {$\mathcal{O}'$};
\draw[->] (0,0) -- (6,6) node[anchor = west]{$C_0^+$};
\draw[thick, densely dotted] (0,5) to (1.4,6.4);
\draw[thick, loosely dotted] (1.4,6.4) to [out=45,in=175] (4.7,7.0);
\node[rotate=45] at (0.6,5.9) {$\mathcal{B}_0$};
\node[rotate=0] at (4.3,7.3) {$\mathcal{B}\setminus\mathcal{B}_0$};
\draw[fill=white] (1.4,6.4) circle (2pt);
\draw[thick] (0,5) --  (2.5,2.5);
\node[] at (2.7,2.3){$v_1$};
\draw[thick,dashed] (4,4) --  (1.5,6.5);
\node[] at (4.2,3.8){$v_3$};
\draw[thick, dashdotted] (0.2,5.2) to [out=30,in=150] (3.5,5.5) to [out=-30,in=150] (3.2,4.2) [out=10] to [in=215] (5,5.1);
\node[] at (4.15,4.85){$\mathcal{A}$};
\draw[thick] (0.2,4.8) to (1.2,5.8);
\draw[thick] (1.2,5.8) to (2,5);
\draw[thick,dashed] (1,6) to (3.5,3.5);
\node[] at (3.7,3.3){$v_2$};
\draw[thick] (1,4) to (2.5,5.5) to (3,5) to (1.5,3.5);
\node[rotate = 45] at (0.75,5.05) {\small{$u = u_*$}};
\node[rotate = 45] at (1.4,4.6) {\small{$u = u_0$}};
\node[rotate = 45] at (1.9,4.1) {\small{$u = \overline{u}$}};
\node[rotate=45] at (1.1,4.9) {$R'$};
\node[rotate=45] at (1.6,4.2) {$R$};
\draw[fill=white] (0,5) circle (2pt);
\end{scope}
\end{tikzpicture}
\begin{tikzpicture}[
    dot/.style = {draw, fill = white, circle, inner sep = 0pt, minimum size = 4pt}]
\begin{scope}[thick]
\fill[gray!50] (1,4) to (2.5,5.5) to (3,5) to (1.5,3.5) to (1,4);
\fill[gray!80] (0.2,4.8) to (1.2,5.8) to (2,5) to (1,4) to (0.2,4.8);
\draw[thick] (0,0) node[anchor=north]{$\mathcal{O}$} --  (0,5);
\node[] at (-0.3,2.5) {$\Gamma$};
\node[] at (-0.3,5.0) {$\mathcal{O}'$};
\draw[->] (0,0) -- (6,6) node[anchor = west]{$C_0^+$};
\draw[thick, densely dotted] (0,5) to (0.8,5.8);
\draw[thick, loosely dotted] (0.8,5.8) to [out=45,in=175] (4.7,7.0);
\node[rotate=45] at (0.4,5.7) {$\mathcal{B}_0$};
\node[rotate=0] at (4.3,7.3) {$\mathcal{B}\setminus\mathcal{B}_0$};
\draw[fill=white] (0.8,5.8) circle (2pt);
\draw[thick] (0,5) --  (2.5,2.5);
\node[] at (2.7,2.3){$v_1$};
\draw[thick,dashed] (4,4) --  (1.5,6.5);
\node[] at (4.2,3.8){$v_3$};
\draw[thick, dashdotted] (0.2,5.2) to [out=30,in=150] (3.5,5.5) to [out=-30,in=150] (3.2,4.2) [out=10] to [in=215] (5,5.1);
\node[] at (4.15,4.85){$\mathcal{A}$};
\draw[thick] (0.2,4.8) to (1.2,5.8);
\draw[thick] (1.2,5.8) to (2,5);
\draw[thick,dashed] (1,6) to (3.5,3.5);
\node[] at (3.7,3.3){$v_2$};
\draw[thick] (1,4) to (2.5,5.5) to (3,5) to (1.5,3.5);
\node[rotate = 45] at (0.75,5.05) {\small{$u = u_*$}};
\node[rotate = 45] at (1.4,4.6) {\small{$u = u_0$}};
\node[rotate = 45] at (1.9,4.1) {\small{$u = \overline{u}$}};
\node[rotate=45] at (1.1,4.9) {$R'$};
\node[rotate=45] at (1.6,4.2) {$R$};
\draw[fill=white] (0,5) circle (2pt);
\end{scope}
\end{tikzpicture}
\caption{Diagram for the proof of Theorem \ref{Trapped}. Rectangles $R$ and $R'$ are shaded in a lighter and a darker tone respectively and the apparent horizon $\mathcal{A}$ is indicated as a dotted curve. The scenario on the left corresponds to $|\mathcal{B}_0| \geq v_2 - v_1$, while the scenario on the right corresponds to $|\mathcal{B}_0| < v_2 - v_1$, and $\mathcal{B}_0$ is allowed to be the empty set. }
\label{TrappedFig1}
\end{figure}
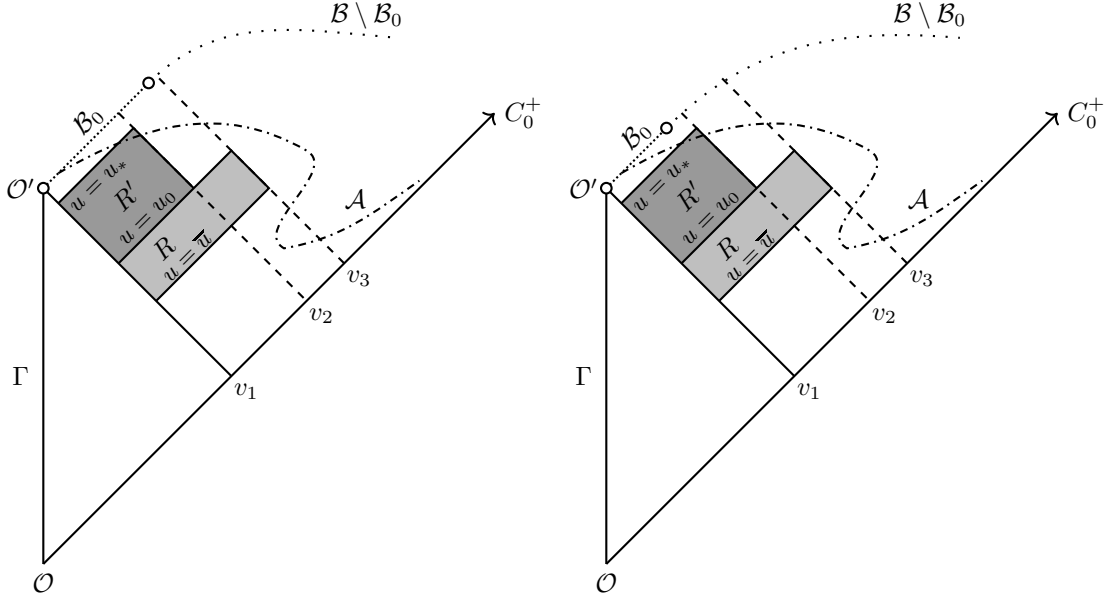

For notational convenience, here we define some commonly used abbreviations. A subscript with $1$ indicates that the quantity is evaluated at $v = v_1$. Similarly, a subscript with $2$ indicates that the quantity is evaluated at $v = v_2$. For a quantity with subscript $0$, this means that it is evaluated at $u = u_0$. For instance, we have $\delta_0 = \delta(u_0)$, $m_1(u) = m(u,v_1)$ and $r_2(u) = r(u,v_2)$. In addition, denote
\begin{equation}\label{TSF43}
\mu^*(u_0) := \sup_{v \in [v_1,v_2]} \mu(u_0,v).
\end{equation}

We are ready to state the below criterion, which is tailored for application in the upcoming instability arguments.
{\color{black}
\begin{theorem}\label{Trapped} (Trapped Surface Formation Criteria.)
Let $\mathcal{O}' = (v_1,v_1)$ be the future ending point of $\Gamma$ as portrayed in Figure \ref{TrappedFig1} and set $v_2 >  v_1$ and $u_0 < v_1$ to be sufficiently close to $v_1$. Suppose that there exist $\overline{u} \in [0,u_0)$ and $v_3 > v_2$ such that the coordinate patch $R:= [\overline{u},u_0] \times [v_1,v_3]$ lies in the regular region $\mathcal{R}$. If the reduced mass ratio $\eta^*(u_0)$ satisfies the lower-bound condition
\begin{equation}\label{TSF2}
\eta^*(u_0) > C_{24} \frac{\delta(u_0)}{1+\delta(u_0)} \log \ml \frac{1}{\delta(u_0)}\mr + C_{23} \ml \frac{\delta(u_0)r_1(u_0)}{1-\mu^*(u_0)}\mr,
\end{equation}
with $\delta(u_0) \in (0,1)$ and with constants $C_{23}$ and $C_{24}$ explicitly given in \eqref{TSF51} and \eqref{TSF54}, then a marginally outer trapped surface (MOTS) or a trapped surface is guaranteed to form along the incoming null curve $[u_0, u_*]\times \{v_2\}$ with $u_* \in (u_0,v_0)$.

\end{theorem}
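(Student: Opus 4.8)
The plan is to follow the new ``identity-based'' approach already sketched in the introduction, adapting it to the geometry of Figure \ref{TrappedFig1}. First I would introduce, on the outgoing null direction, the renormalized change of variable $x = r(u,v_2)/r_0(v_2) = r_2(u)/r_2(u_0)$ (so $x$ decreases from $1$ as $u$ increases from $u_0$), and rewrite $\eta^*$ as a function of $x$. Using $\frac{\D \eta^*}{\D x} = \frac{\D\eta^*/\D u}{\D x/\D u}$ together with the identity
\begin{equation*}
\frac{\D \eta^*}{\D u} = \du\ml \frac{2(m_2 - m_1)}{r_2}\mr - \du\ml \frac{2}{r_2}\int^{v_2}_{v_1}\frac{Q^2 \dv r}{2r^2}(u,v')\D v'\mr,
\end{equation*}
I would extract the $-\eta^*/x$ term coming from differentiating the overall factor $1/r_2$, and reduce the problem to estimating $\int^{v_2}_{v_1}\ml -\du\dv m + \du(Q^2\dv r/(2r^2))\mr(u,v')\D v'$. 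Plugging in the evolution equations \eqref{Setup25}, \eqref{Setup18}, \eqref{Setup19} and regrouping gives exactly the pointwise expression \eqref{eta intro5} of the introduction; the two ``scalar-field'' terms are then bounded by $\frac12\delta(u)$ via the elementary inequality $-x^2+ax\leq a^2/4$ and $\int^{v_2}_{v_1}\frac{\dv r}{r}\D v' = \log(1+\delta(u))\leq \delta(u)$.

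The main new work is bounding the charged term $\int^{v_2}_{v_1}4\pi\e|Q||\phi||\dv\phi|(u,v')\D v'$. Here I would apply H\"older's inequality splitting off $\ml\frac{2\pi r^2(1-\mu)|\dv\phi|^2}{\dv r}\mr^{1/2}$, which integrates to $m^{1/2}\le \ml\eta^* r_2/2\mr^{1/2}$ (using that along $u=u_0$ we have $m(u,v)-m(u,v_1)\lesssim \eta^* r_2$ outside the trapped region, up to the $Q^2$-correction which only helps), and leaves an integral of $\frac{\dv r}{1-\mu}\frac{|Q|^2|\phi|^2}{r^2}$. On this factor I would invoke the improved estimate \eqref{Estimates19} for $|Q\phi|$ from Proposition \ref{PropEstimates} with a suitable choice of $\chi\in(0,1/3]$, $\xi_1$, $\xi_2$, turning it into a positive power of $r$, and then use the monotonicity of $\frac{\dv r}{1-\mu}$ along incoming null directions (Lemma \ref{SetupLemma1}) together with $\frac{1-\mu}{\dv r}(u,v')^{-1}\le (1-\mu^*(u_0))^{-1}\cdot(\text{something})$ to pull out the blueshift factor $(1-\mu^*(u_0))^{-1/2}$. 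The upshot, as in the introduction, is
\begin{equation*}
\int^{v_2}_{v_1}4\pi\e|Q||\phi||\dv\phi|(u,v')\D v' \leq \frac{C}{(1-\mu^*(u_0))^{1/2}}{\eta^*}^{1/2} r_2^{3/4 - 3\chi/4}(u)\, r_1^{1/2}(u_0)\,\delta(u_0)^{1/2},
\end{equation*}
and then Young's inequality $ab\le \frac12(a^2+b^2)$ converts ${\eta^*}^{1/2}\cdot(\cdots)^{1/2}$ into $\frac12\eta^* + \frac12\cdot\frac{\delta_0 (r_1)_0}{1-\mu^*_0}\cdot(\text{const})$. Combining with $(-\du r)(u,v')\le (-\du r)_2(u)$ for $v'\in[v_1,v_2]$ and the factor $\frac{1}{x(\du r_2)}$ produces the ODE inequality
\begin{equation*}
\frac{\D\eta^*}{\D x} \leq -\frac{\eta^*}{x} + \frac{\delta(u)}{x} + \frac{C_{21}}{x^{1/2}}\ml \eta^* + \frac{\delta_0 (r_1)_0}{1-\mu^*_0}\mr,
\end{equation*}
exactly as displayed in the introduction, where I would also bound $\delta(u)/x\le \delta_0/\ml x(x(1+\delta_0)-\delta_0)\mr$ using the explicit relation between $\delta(u)$ and $x$.

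Finally I would solve this linear ODE inequality. Multiplying by the integrating factor $x\,e^{-2C_{21}x^{1/2}}$ (or a comparable one), integrating from $x=1$ down to the value $x_*$ corresponding to $u=u_*$, and feeding in the hypothesis \eqref{TSF2} on $\eta^*(u_0)=\eta^*(1)$, one shows that $\eta^*(x)$ stays bounded below by a positive quantity that forces $1-\mu\le 0$ somewhere along $\{v_2\}$ before $x$ can reach $0$; this is precisely the statement that $\dv r\le 0$, i.e. a MOTS or trapped surface forms on $[u_0,u_*]\times\{v_2\}$. The two constants $C_{23},C_{24}$ emerge from tracking the coefficients in the threshold $\eta^*(u_0) > C_{24}\frac{\delta_0}{1+\delta_0}\log(1/\delta_0) + C_{23}\frac{\delta_0 r_1(u_0)}{1-\mu^*(u_0)}$, where the $\log(1/\delta_0)$ comes from integrating $\delta_0/\ml x(x(1+\delta_0)-\delta_0)\mr$. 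The step I expect to be the genuine obstacle is the charged-term estimate: one must verify that the powers of $r$ obtained from \eqref{Estimates19} are strictly positive (hence harmless as $r\to 0$) for an admissible choice of the parameters $\chi,\xi_1,\xi_2$, and that the blueshift factor enters only as $(1-\mu^*(u_0))^{-1/2}$ rather than a worse power — this is what makes the criterion \eqref{TSF2} compatible (via the extra $r_1(u_0)$ factor) with the later instability arguments, and it is where the uncharged proof genuinely has to be redesigned rather than merely transcribed.
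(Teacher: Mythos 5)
Your proposal follows essentially the same route as the paper's own proof: the same change of variable $x = r_2(u)/r_2(u_0)$, the same exact cancellation $-\partial_u\partial_v m + \partial_u\big(Q^2\partial_v r/(2r^2)\big) = -\partial_u\big(2\pi r^2(1-\mu)|\partial_v\phi|^2/\partial_v r\big)$ built into the definition of $\eta^*$, the same pointwise bound $\leq \tfrac12\delta(u)$ for the scalar-field part via $-x^2+ax\leq a^2/4$, the same Cauchy--Schwarz plus Proposition \ref{PropEstimates} (with $\xi_1=2$, $\xi_2=\chi$, $\chi\in(0,1/3]$) to handle $\int 4\pi\e|Q||\phi||\partial_v\phi|$ with a single blueshift factor $(1-\mu^*_0)^{-1/2}$ and a surviving $r_1(u_0)^{1/2}$, the same AM--GM to absorb ${\eta^*}^{1/2}$, and the same ODE inequality solved with an integrating factor to produce the logarithmic threshold. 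The only thing you elide is the geometric step at the end: the paper's inequality \eqref{TSF50} is only valid for $x>\delta_0/(1+\delta_0)$, so one does not drive $x\to 0$ but instead argues (by the two cases $|\mathcal{B}_0|\ge v_2-v_1$ or not, using $r_1(u_*)\to 0$ or $r_2\to 0$ near $\mathcal{B}$) that $x$ can be brought to $2\delta_0/(1+\delta_0)$ along $\{v_2\}$ inside $\mathcal{Q}$, at which point the bound forces $\mu_2\geq 1$; this detail should be restored, and the factor integrating to $\tfrac12\eta^* r_2$ is exactly $\int_{v_1}^{v_2}\big(\partial_v m - Q^2\partial_v r/(2r^2)\big)dv'$, not $m$ itself, but the logic is the same.
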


\remark{In the proof of the later instability theorems, by taking into account the blueshift mechanism, for $u_0$ sufficiently close to $v_1$, we will see that $\frac{1}{1-\mu^*(u_0)}$} is of order $\log(\frac{1}{\delta(u_0)})$ and the gained factor $r_1(u_0)$ will play a crucial role.

\proof 
With $v_2 > v_1$, $u_0 < v_1$ and $u_* \in [u_0,v_1)$ being an arbitrary constant such that $(u_*,v_2) \in \mathcal{Q}$, we denote $R' := [u_0,u_*] \times [v_1,v_2]$. Under condition \eqref{TSF2}, we explain a mechanism in the regular (untrapped) region, that would lead to MOTS or trapped surface formation.

If there exists a MOTS or a trapped surface within $R'$, then the conclusion of Theorem \ref{Trapped} is achieved. We then proceed to assume the absence of a MOTS or a trapped surface in $R'$. By Remark \ref{Remark mu Q}, if $v_2$ and $u_0$ are picked sufficiently close to $v_1$, for each point $(u,v) \in R'$, the hypothesis of Proposition \ref{PropEstimates} is satisfied. Consequently, we have that \eqref{Estimates17} holds for any $\chi \in (0,1)$ and we obtain
\begin{equation}\label{TSF65}
\mu -\frac{Q^2}{r^2}(u,v) \geq \ml 1 - C_1^{2}(\overline{u},v;\chi)r^{1-\chi}(u,v)r^{\chi}(\overline{u},v)(v-u)\mr \mu(u,v).
\end{equation}
Here, $C_1(\overline{u},v;\chi)$ is defined in \eqref{Estimates11}. By the expression there, we can also see that $C_1(\overline{u},v;\chi)$ is monotonically non-decreasing in $v$ for $v \in (u,v_3]$. Hence, we obtain
\begin{equation}\label{TSF66}
\mu -\frac{Q^2}{r^2}(u,v) \geq \ml 1 - C_1^{2}(\overline{u},v_3;\chi)r(\overline{u},v_3)(v-u)\mr \mu(u,v).
\end{equation}
Using $v - u = v - v_1 + v_1 - u \leq (v_2 - v_1) + (v_1 - u_0)$ for all $u \in [u_0,v_1]$ and $v \in [v_1,v_2]$, if we pick $v_2$ and $u_0$ sufficiently close to $v_1$ such that
\begin{equation}\label{TSF67}
(v_2 - v_1) + (v_1 - u_0) \leq \frac{1}{C_1^{2}(\overline{u},v_3;\chi)r(\overline{u},v_3)},
\end{equation}
then we deduce that $\mu - \frac{Q^2}{r^2} \geq 0$ on $R' \subset \mathcal{R}$. 

With these preparations, we are ready to prove the main theorem. Introducing a scale-free length $x$ by
\begin{equation}\label{TSF3}
x(u) = \frac{r_2(u)}{r_2(u_0)},
\end{equation}
we then consider the evolution of $\eta^*(x)$ along incoming null curves lying within $R'$. For $x \in (0,1]$, we have 
\begin{equation}\label{TSF4}
\begin{aligned}
\frac{\mathrm{d}\eta^*}{\mathrm{d}x} =& \;\frac{\frac{\mathrm{d}\eta^*}{\mathrm{d}u}}{\frac{\mathrm{d}x}{\mathrm{d}u}} = \frac{\du \ml \frac{2(m_2 - m_1)}{r_2} - \frac{2}{r_2}\int^{v_2}_{v_1} \frac{Q^2 \dv r}{r^2} (u,v') \mathrm{d}v'\mr }{\frac{\du r_2}{r_2(u_0)}} \\
&\,\,= \; \frac{r_2(u_0)}{\du r_2} \ml 
-\frac{\du r_2}{r_2}\ml\frac{2(m_2 - m_1)}{r_2} +\frac{2}{r_2} \int^{v_2}_{v_1} \frac{Q^2 \dv r}{2r^2} \mathrm{d}v'
\mr  + \frac{2}{r_2}\du(m_2 - m_1)  - \frac{2}{r_2} \int^{v_2}_{v_1} \du \ml \frac{Q^2 \dv r}{2r^2}\mr\mathrm{d}v'\mr \\
&\,\,= \;- \frac{\eta^*}{x} - \frac{2}{x(\du r_2)} \ml \int^{v_2}_{v_1} -\du\dv m + \du \ml \frac{Q^2 \dv r}{2r^2}\mr \mathrm{d}v'\mr.
\end{aligned}
\end{equation}
Note that the term $\du\ml \frac{Q^2 \du r}{2r^2}\mr$ comes from the definition of $\eta^*(u)$ in \eqref{TSF2}, and it is designed to have a precise cancellation with terms in $-\du \dv m$. This is illustrated using \eqref{Setup25}, where we see that
\begin{equation}\label{TSF5}
- \du\dv m + \du \ml \frac{Q^2 \dv r}{2r^2} \mr = -\du \ml \frac{2\pi r^2(1-\mu)|\dv \phi|^2}{\dv r} \mr.
\end{equation}
Hence, with the help of \eqref{Setup21}, the integrand in the last line of \eqref{TSF4} can be simplified as
\begin{equation}\label{TSF6}
\begin{aligned}
- \du\dv m + \du \ml \frac{Q^2 \dv r}{2r^2} \mr &= -\du \ml \frac{2\pi r^2(1-\mu)|\dv \phi|^2}{\dv r} \mr \\
&= - 2 \pi \ml \du\ml \frac{1-\mu}{\dv r}\mr r^2|\dv \phi|^2 + \ml \frac{1-\mu}{\dv r}\mr \du (r^2|\dv \phi|^2) \mr \\
&= 8\pi^2 r^3 (1-\mu)(\du r)(\dv r) \frac{|D_u \phi|^2 |\dv \phi|^2}{(\du r)^2(\dv r)^2} - 2\pi \ml \frac{1-\mu}{\dv r}\mr \du\ml r^2|\dv \phi|^2\mr.
\end{aligned}
\end{equation}
For $\du(r^2|\dv \phi|^2)$, from \eqref{Setup73} and the definition of $D_u$, we have
\begin{equation}\label{TSF7}
\du(r\dv \phi) = -\ii \e A_u r \dv \phi - \dv r D_u \phi + \ii\e \frac{Q\phi(-\du r)(\dv r)}{r(1-\mu)}
\end{equation}
and thus
\begin{equation}\label{TSF8}
\begin{aligned}
\du(r^2|\dv \phi|^2) =&\; \du(r \dv \phi) (r\dv \phi)^{\dagger} + \du( (r \dv \phi)^\dagger) (r\dv \phi)
\\ =&\; r (\dv \phi)^\dagger \ml -\ii \e A_u r \dv \phi - \dv r D_u \phi + \ii\e \frac{Q\phi(-\du r)(\dv r)}{r(1-\mu)} \mr \\
&+ r (\dv \phi) \ml -\ii \e A_u r \dv \phi - \dv r D_u \phi + \ii\e \frac{Q\phi(-\du r)(\dv r)}{r(1-\mu)} \mr^{\dagger} \\
=& -2 r \dv r \re{\ml (\dv \phi)^\dagger D_u \phi  \mr} + 2\e Q \frac{(\du r) (\dv r)}{1-\mu} \im{(\phi^\dagger \dv \phi)}.
\end{aligned}
\end{equation}
Combining these calculations, we obtain
\begin{equation}\label{TSF9}
\begin{aligned}
&- \du\dv m + \du \ml \frac{Q^2 \dv r}{2r^2} \mr \\ =& \; (\dv r)(-\du r) \ml -8\pi^2 r^3 (1-\mu) \ml \frac{|D_u \phi| |\dv \phi|}{(-\du r)(\dv r)} \mr^2  + 4\pi r (1-\mu) \frac{\re{\ml (\dv \phi)^\dagger (D_u \phi) \mr}}{\dv r (-\du r)}\mr\\
&\; + 4\pi \e Q (-\du r) \im{(\phi^\dagger (\dv \phi))} \\
\leq&\; 8\pi^2 r^3 (\dv r)(-\du r)(1-\mu) \ml - \ml \frac{|D_u \phi| |\dv \phi|}{(-\du r)(\dv r)} \mr^2  + \frac{1}{2\pi r^2} \frac{|D_u \phi||\dv \phi|}{(-\du r)(\dv r)}\mr + 4\pi \e |Q| (-\du r) |\phi||\dv \phi|. \\
\end{aligned}
\end{equation}
Using the fact that $\du \dv r \leq 0$ in $R'$ and equation \eqref{TSF9}, the integral in \eqref{TSF4} can be simplified to
\begin{equation}\label{TSF10}
\begin{aligned}
-\frac{1}{\du r_2}\int^{v_2}_{v_1} -\du\dv m + \du \ml \frac{Q^2 \dv r}{2r^2}\mr \mathrm{d}v' &\leq \int^{v_2}_{v_1} J_4 (u,v') + J_5 (u,v') \mathrm{d}v'
\end{aligned}
\end{equation}
with
\begin{equation}\label{TSF11}
J_4 = 8\pi^2 r^3 (\dv r)(1-\mu) \ml - \ml \frac{|D_u \phi| |\dv \phi|}{(-\du r)(\dv r)} \mr^2  + \frac{1}{2\pi r^2} \frac{|D_u \phi||\dv \phi|}{(-\du r)(\dv r)}\mr
\end{equation}
and
\begin{equation}\label{TSF12}
J_5 = 4 \pi \e |Q||\phi||\dv \phi|.
\end{equation}
For $J_4$, by using an elementary inequality  $-x^2 + ax \leq \frac{a^2}{4}$, we have
\begin{equation}\label{TSF13}
\begin{aligned}
&\int^{v_2}_{v_1} 8\pi^2 r^3 (\dv r)(1-\mu) \ml - \ml \frac{|D_u \phi| |\dv \phi|}{(-\du r)(\dv r)} \mr^2  + \frac{1}{2\pi r^2} \frac{|D_u \phi||\dv \phi|}{(-\du r)(\dv r)}\mr(u,v') \mathrm{d}v' \\
\leq&\; \int^{v_2}_{v_1} 2\pi^2 r^3 (\dv r)(1-\mu) \ml \frac{1}{2\pi r^2}\mr^2(u,v') \mathrm{d}v' \leq\; \frac{1}{2} \int^{v_2}_{v_1}   \frac{\dv r}{r}(u,v') \mathrm{d}v' \\
&\quad\quad=\; \frac{1}{2}\log\ml \frac{r(u,v_2)}{r(u,v_1)}\mr  = \frac{1}{2}\log \ml 1 + \frac{r(u,v_2) - r(u,v_1)}{r(u,v_1)}\mr \leq \frac{1}{2}\delta(u).
\end{aligned}
\end{equation} 
Note that via our new approach, $J_4$ is bounded by $\frac{1}{2}\delta(u)$ independent of $\phi$ through a simple computation. 

For $J_5$, with \eqref{Setup25} and the definition in \eqref{ReducedMassRatio}, we proceed to deduce that
\begin{equation}\label{TSF21}
\begin{aligned}
\int^{v_2}_{v_1} J_5 (u,v') \mathrm{d}v'&= \int^{v_2}_{v_1} 4 \pi \e |Q||\phi||\dv \phi| (u,v') \mathrm{d}v' \\
&\leq \sqrt{8\pi} \e \ml \int^{v_2}_{v_1} \frac{2\pi r^2 (1-\mu)|\dv \phi|^2}{\dv r} (u,v') \mathrm{d}v' \mr^{\frac{1}{2}} \ml \int^{v_2}_{v_1} \frac{\dv r}{1-\mu} \frac{|Q|^2|\phi|^2}{r^2}(u,v') \mathrm{d}v'\mr^{\frac{1}{2}} \\
&\leq \sqrt{8\pi} \e \ml \int^{v_2}_{v_1} \dv m - \frac{Q^2 \dv r}{2r^2} (u,v') \mathrm{d}v' \mr^{\frac{1}{2}} \ml \int^{v_2}_{v_1} \frac{\dv r}{1-\mu} \frac{|Q|^2|\phi|^2}{r^2}(u,v') \mathrm{d}v'\mr^{\frac{1}{2}} \\
&\leq \sqrt{4\pi} \e {\eta^*}^{\frac{1}{2}} r_2^{\frac{1}{2}}(u) \ml \int^{v_2}_{v_1} \frac{\dv r}{1-\mu} \frac{|Q|^2|\phi|^2}{r^2}(u,v') \mathrm{d}v'\mr^{\frac{1}{2}}.
\end{aligned}
\end{equation}
Employing \eqref{Estimates19} of Proposition \ref{PropEstimates} with $\xi_1 = 2$ and $\xi_2 = \chi$ for $\chi \in \left(0,\frac{1}{2}\right)$, we obtain the bound
\begin{equation}\label{TSF35}
|Q \phi|^2(u,v) \leq C_{5}(\overline{u},v_2;\chi,2,\chi)^2 r^{\frac{5}{2} - \frac{3\chi}{2}}(u,v),
\end{equation}
where we have used the fact that $\mu < 1$. Returning to \eqref{TSF21}, with \eqref{TSF35} and $\chi \in \left(0, \frac{1}{3}\right]$, we thus have
\begin{equation}\label{TSF37}
\begin{aligned}
\int^{v_2}_{v_1} J_5 (u,v') \mathrm{d}v' 
&\leq \sqrt{4\pi} \e {\eta^*}^{\frac{1}{2}} r_2^{\frac{1}{2}}(u) \ml \int^{v_2}_{v_1} \frac{\dv r}{1-\mu} \frac{|Q|^2|\phi|^2}{r^2}(u,v') \mathrm{d}v'\mr^{\frac{1}{2}} \\
&\leq \sqrt{4\pi} \e C_{5} {\eta^*}^{\frac{1}{2}} r_2^{\frac{1}{2}}(u)  \ml \int^{v_2}_{v_1} \frac{\dv r}{1 - \mu}(u_0,v') r^{\frac{1}{2}-\frac{3\chi}{2}}(u,v') \mathrm{d}v'\mr^{\frac{1}{2}} \\
&\leq \sqrt{4\pi} \e C_{5} {\eta^*}^{\frac{1}{2}} r_2^{\frac{3}{4} - \frac{3\chi}{4}}(u) \ml \frac{1}{1 - \mu_0^*}\mr^{\frac{1}{2}} \ml \int^{v_2}_{v_1} \dv r(u_0,v') \mathrm{d}v'\mr^{\frac{1}{2}} \\
&\leq \frac{\sqrt{4\pi} \e C_{5}}{(1-\mu_0^*)^{\frac{1}{2}}} {\eta^*}^{\frac{1}{2}} r_2^{\frac{3}{4}-\frac{3\chi}{4}}(u)  r_1^{\frac{1}{2}}(u_0) \delta(u_0)^{\frac{1}{2}}.
\end{aligned}
\end{equation}

Utilizing \eqref{TSF10}, \eqref{TSF13}, \eqref{TSF37} and the fact that $x \in (0,1]$, the main inequality \eqref{TSF4} simplifies to 
\begin{equation}\label{TSF38}
\begin{aligned}
\frac{\mathrm{d}\eta^*}{\mathrm{d}x} &\leq - \frac{\eta^*}{x} - \frac{2}{x(\du r_2)} \ml \int^{v_2}_{v_1} -\du\dv m + \du \ml \frac{Q^2 \dv r}{2r^2}\mr \mathrm{d}v'\mr \\
&\leq - \frac{\eta^*}{x} + \frac{\delta}{x} + \frac{1}{x} \frac{4\sqrt{\pi} \e C_{5}}{(1-\mu_0^*)^{\frac{1}{2}}} {\eta^*}^{\frac{1}{2}} r_2^{\frac{3}{4}-\frac{3\chi}{4}}(u)r_1^{\frac{1}{2}}(u_0) \delta^{\frac{1}{2}}(u_0) \\
&\leq - \frac{\eta^*}{x} + \frac{\delta}{x} + C_{21}x^{-\frac{1}{4}-\frac{3\chi}{4}}\ml \frac{2{\eta^*}^{\frac{1}{2}}\delta_0^{\frac{1}{2}}(r_1)_0^{\frac{1}{2}}}{(1-\mu_0^*)^{\frac{1}{2}}} \mr \\
&\leq - \frac{\eta^*}{x} + \frac{\delta}{x} + \frac{C_{21}}{x^{\frac{1}{2}}} \ml \eta^* + \frac{\delta_0(r_1)_0}{1-\mu_0^*} \mr. \\
\end{aligned}
\end{equation}
Furthermore, the constant $C_{21}$ is defined as
\begin{equation}\label{TSF39}
C_{21} = 2\sqrt{\pi} \e C_{5}r(\overline{u},v_3)^{\frac{3}{4}-\frac{3\chi}{4}}
\end{equation}
with $\chi \in \left(0,\frac{1}{3}\right]$. Meanwhile, for all $(u,v) \in R'$, using $\du \dv r \leq 0$ in $R'$, we get
\begin{equation}\label{TSF40}
\dv r (u,v) \leq \dv r(u_0,v).
\end{equation}
For $u \in [u_0,v_1)$, this implies
\begin{equation}\label{TSF41}
r_2(u) - r_1(u) \leq r_2(u_0) - r_1(u_0).
\end{equation}
Thus, one can establish the following inequality
\begin{equation}\label{TSF42}
\begin{aligned}
\delta(u) = \frac{r_2}{r_1} - 1 &= \frac{r_2 - r_1}{r_2 - (r_2 - r_1)} \leq \frac{r_2(u_0)-r_1(u_0)}{r_2(u) - (r_2(u_0) - r_1(u_0))} \\
&\quad\quad\quad\leq \frac{\delta(u_0)}{\frac{r_2(u)}{r_1(u_0)} - \delta(u_0)} = \frac{\delta_0}{x(u)(1+\delta_0) - \delta_0}.
\end{aligned}
\end{equation}
{\color{black}
Rewriting \eqref{TSF38}, we hence arrive at
\begin{equation}\label{TSF44}
\begin{aligned}
\frac{\mathrm{d}\eta^*}{\mathrm{d}x} +  \ml \frac{1}{x} - \frac{C_{21}}{x^{\frac{1}{2}}}\mr\eta^*&\leq  \delta_0\ml \frac{1}{x(x(1+\delta_0) - \delta_0)} +  \frac{C_{21}(r_1)_0}{x^{\frac{1}{2}}(1-\mu_0^*)} \mr.\\
\end{aligned}
\end{equation}
This can be rewritten as
\begin{equation}\label{TSF44'}
\frac{\mathrm{d}\eta^*}{\mathrm{d}x} + f(x)\eta^* \leq \delta_0 g(x)
\end{equation}
with
\begin{equation}\label{TSF45}
f(x) = \frac{1}{x} - \frac{C_{21}}{x^{\frac{1}{2}}} \quad \text{ and } \quad g(x) = \frac{1}{x(x(1+\delta_0) - \delta_0)} +  \frac{C_{21}(r_1)_0}{x^{\frac{1}{2}}(1-\mu_0^*)}.
\end{equation}
We then solve the differential inequality \eqref{TSF44'} and get 
$$\frac{\mathrm{d}}{\mathrm{d}x} \ml e^{-\int^1_x f(s) \mathrm{d}s} \eta^* \mr \leq \delta_0 g(x) e^{-\int^1_x f(s) \mathrm{d}s},$$
which implies that, for any $x \in (x(u_*),1)$, we obtain
$$\eta^*_0  - e^{-\int^1_x f(s) \mathrm{d}s} \eta^*(x) \leq \delta_0 \int^1_x g(s) e^{-\int^1_s f(s') \mathrm{d}s'}\mathrm{d}s.$$
Recall that in the regular region definition \eqref{ReducedMassRatio} implies $\eta_*(x) \leq \mu_2(x)$. Rearranging the above inequality, we get
\begin{equation}\label{TSF46}
\mu_2(x) \geq \eta^*(x) \geq e^{\int_x^1 f(s) \D s}\ml \eta_0^* - \delta_0 \int_x^1 g(s) e^{-\int_s^1 f(s') \D s'}\mr.
\end{equation}
We further evaluate the integrals in \eqref{TSF46} as follows. We first calculate
\begin{equation}\label{TSF47}
\begin{aligned}
\int^1_x f(s) \mathrm{d}s &= \int^1_x \frac{1}{s} - \frac{C_{21}}{s^{\frac{1}{2}}} \mathrm{d}s = -\log(x) - 2C_{21}\ml 1-x^{\frac{1}{2}} \mr,
\end{aligned}
\end{equation}
which implies
\begin{equation}\label{TSF48}
\begin{aligned}
e^{-\int^1_x f(s) \mathrm{d}s} = xe^{2C_{21}(1-x^{\frac{1}{2}})} \leq xe^{2C_{21}}.
\end{aligned}
\end{equation}
The second integral in \eqref{TSF46} obeys
\begin{equation}\label{TSF49}
\begin{aligned}
\int^1_x g(s) e^{-\int^1_s f(s') \mathrm{d}s'}\mathrm{d}s &\leq \int^1_x \ml \frac{1}{s(s(1+\delta_0) - \delta_0)} +  \frac{C_{21}(r_1)_0}{s^{\frac{1}{2}}(1-\mu_0^*)}\mr s e^{2C_{21}} \mathrm{d}s \\
&\leq e^{2C_{21}} \ml \frac{\log\ml \frac{1}{x(1+\delta_0)-\delta_0}\mr}{1+
\delta_0} + \frac{2C_{21} (r_1)_0}{3(1-\mu_0^*)}\ml 1 - x^{\frac{3}{2}}\mr   \mr
\end{aligned}
\end{equation}
if we impose $x > \frac{\delta_0}{1+\delta_0}$. For our following arguments, we restrict ourselves to $x \in \left( \frac{\delta_0}{1+\delta_0}, 1\right]$. Back to \eqref{TSF46}, with \eqref{TSF48} and \eqref{TSF49}, we now obtain
\begin{equation}\label{TSF50}
\mu_2(x) \geq  \frac{1}{C_{22}x}\ml \eta_0^* - C_{22}\frac{\delta_0}{1+\delta_0} \log\ml\frac{1}{x(1+\delta_0) - \delta_0}\mr - C_{23}\frac{\delta_0 (r_1)_0}{1-\mu_0^*} \mr
\end{equation}
with
\begin{equation}\label{TSF51}
C_{22} = e^{2C_{21}} \quad \text{ and } \quad C_{23} = \frac{2C_{21}e^{2C_{21}}}{3}.
\end{equation}
Next, we consider the following two cases:
\begin{itemize}[leftmargin=*]
\item Suppose that $|\mathcal{B}_0| \geq v_2 - v_1$. This scenario corresponds to the picture on the left of Figure \ref{TrappedFig1}. As seen from the picture, this allows us to set $u_*$ to be arbitrarily close to $v_1$. Furthermore, observe that at $(u_*, v_2)$, we have
\begin{equation}\label{Scalar21}
x(u_*) \leq \frac{r_1(u_*)}{r_2(u_0)} + \frac{\delta_0}{1 + \frac{r_2(u_0) - r_1(u_0)}{r_1(u_0)}} =  \frac{r_1(u_*)}{r_2(u_0)} + \frac{\delta_0}{1+\delta_0}.
\end{equation}
Henceforth, by \eqref{Scalar21} and the fact 
 that $r_1(u_*) \rightarrow 0$ as $u_* \rightarrow v_1^-$, we can have $x(u_*) \leq \frac{p\delta_0}{1+\delta_0}$ for any $p > 1$.
\item Assume that $|\mathcal{B}_0| < v_2 - v_1$. We encounter the scenario portrayed in the right picture of Figure \ref{TrappedFig1}. As observed from the picture, we might not be able to pick $u_*$ to be arbitrarily close to $v_1$ due to the possibly spacelike nature of $\mathcal{B}\setminus \mathcal{B}_0$. Nonetheless,  $r_2(u)\rightarrow 0$ as $(u,v_2)$ approaches any point on $\mathcal{B} \setminus \mathcal{B}_0$. This implies that it suffices to pick $u_*$ along $v = v_2$ to be sufficiently close to $\mathcal{B} \setminus \mathcal{B}_0$ and we conclude $x(u_*) \leq \frac{p\delta_0}{1+\delta_0}$ for any $p > 1$ as in the previous case. 
\end{itemize}

\noindent For both scenarios, since $x(u)$ is a decreasing continuous function on $[u_0,u_*]$, there exists $\tilde{u}(k)$ such that $x(\tilde{u}(k)) = \frac{k\delta_0}{1+\delta_0}$ for any $k > 1$. We then evaluate \eqref{TSF50} at the point $(\tilde{u}(k), v_2)$. Further applying \eqref{TSF2} we then obtain
\begin{equation}\label{TSF60}
\mu(\tilde{u}(k),v_2) \geq \frac{1}{C_{22}k}\ml (C_{24}-C_{22})\log\ml \frac{1}{\delta_0}\mr + C_{22} \log \ml k-1\mr\mr.
\end{equation}
Hence, we would have $\mu(\tilde{u}(k),v_2) \geq 1$ if 
\begin{equation}\label{TSF61}
(C_{24}-C_{22})\log\ml \frac{1}{\delta_0}\mr \geq C_{22}\ml k - \log(k-1) \mr.
\end{equation}
Note that with $k>1$ the function $k-\log(k-1)$ is minimized at $k = 2$. Henceforth, we get $\mu(\tilde{u}(2),v_2) \geq 1$ if
\begin{equation}\label{TSF62}
(C_{24}-C_{22})\log\ml \frac{1}{\delta_0}\mr \geq 2C_{22}.
\end{equation}
This requirement is easily satisfied if we set 
\begin{equation}\label{TSF54}
C_{24} = 3C_{22},
\end{equation}
since we have $\log\ml\frac{1}{\delta_0}\mr \geq 1$ for $\delta_0 \in (0,1)$. With this choice of $C_{24}$, we then obtain $\mu(\tilde{u}(2),v_2) \geq 1$, which guarantees that a MOTS or a trapped surface is formed in $R'$. This concludes the proof of Theorem \ref{Trapped}. 
\qed
}

\section{BV Area Estimates}\label{BV Area Estimates}

In this section, we establish scale-critical BV estimates for $C^1$ solutions with BV initial data. To begin, we first define the concept of area for maps of two Lipschitz functions. Let $\mathcal{U}$ be a domain in the $(u,v)$ plane and let $f,g$ be a pair of Lipschitz functions defined on it. The area $A(\mathcal{U})[f,g]$ for the image of $\mathcal{U}$ lying in a plane by the mapping $(u,v) \rightarrow (f(u),g(v))$ is given by
\begin{equation}\label{Area2}
A(\mathcal{U})[f,g] := \int_{\mathcal{U}} \mlm \frac{\p(f,g)}{\p(u,v)}\mrm \D u \D v
\end{equation}
with
\begin{equation}\label{Area3}
\frac{\p(f,g)}{\p(u,v)} = \frac{\p f}{\p u}\frac{\p g}{\p v} - \frac{\p f}{\p v}\frac{\p g}{\p u}
\end{equation}
being the Jacobian of the mapping $(u,v) \rightarrow (f(u,v),g(u,v))$.

A major difference between our charged system \eqref{Setup12}-\eqref{Setup19} and the uncharged scalar-field system investigated in \cite{christ1} - \cite{christ4} is that the charged system is \textit{not} invariant under translations of $\phi$. In other words, for any constant $c$, the following transformation
$$ r\rightarrow r, \quad \Omega \rightarrow \Omega, \quad \phi \rightarrow \phi + c,$$
does not necessarily leaves all equations from \eqref{Setup12} - \eqref{Setup19} invariant. This means that we will have to modify the area estimates obtained in \cite{christ2}, in which it was assumed without loss of generality that $\phi(0,0) = 0$. In our case, we denote
\begin{equation}\label{tildephi}
\phi_0 := \phi(0,0), \quad \text{ and } \quad  \tp := \phi - \phi_0.
\end{equation}
Correspondingly, we define
\begin{equation}\label{mathfraka}
\mathfrak{a} := \dv(r\phi), \quad \quad \alpha := \frac{\dv(r\phi)}{\dv r} = \frac{\mfa}{\dv r},
\end{equation}
and
\begin{equation}\label{tildemathfraka}
\tilde{\mathfrak{a}} := \dv(r\tp), \quad  \quad \tilde{\alpha} := \frac{\dv(r\tp)}{\dv r} = \frac{\tmfa}{\dv r}.
\end{equation}
Analogously, we set
\begin{equation}\label{mathfrakb}
\mathfrak{b} := \du(r\phi), \quad \quad \beta := \frac{\du(r\phi)}{\du r} = \frac{\mfb}{\du r}, 
\end{equation}
and
\begin{equation}\label{tildemathfrakb}
\tilde{\mathfrak{b}} := \du(r\tp) \quad \quad \tilde{\beta} := \frac{\du(r\tp)}{\du r} = \frac{\tmfb}{\du r}.
\end{equation}
For convenience, we will also denote
\begin{equation}\label{lambda}
\lambda := \dv r, \quad \text{ and } \nu := \du r.
\end{equation}
\begin{remark}
The above definitions of $\alpha$ and $\beta$ are slightly different from \cite{christ2}, where $\alpha$ and $\beta$ are defined to be just $\dv(r\phi)$ and $\du(r\phi)$. To be consistent with the notations adopted in \cite{christ4}, we will adopt the definitions presented in this paper. 
\end{remark}

From the above definitions, it is not difficult to see that
\begin{equation}\label{Area39}
\tmfa|_{C_0^+} = \mfa|_{C_0^+} - \frac{1}{2}\phi_0.
\end{equation}
Therefore, the total variations of $\tmfa$ and $\mfa$ are the same. Hence, we define
\begin{equation}\label{Area40}
D := TV_{\{0\} \times (0,v_0)}[\tmfa] = TV_{\{0\} \times (0,v_0)}[\mfa],
\end{equation}
which measures the size of the initial data in the TV (total variation) norm. In addition, we define the following quantities
\begin{equation}\label{Area15}
\begin{aligned}
X(v_0) &= \sup_{u \in (0,v_0)} \int^{v_0}_u 
 |\dv \phi|(u,v') \D v' = \sup_{u \in (0,v_0)} \int^{v_0}_u 
 |\dv \tp|(u,v') \D v', \\
Y(v_0) &= \sup_{v \in (0,v_0)} \int^{v}_0 
 |\du \phi|(u',v) \D u' = \sup_{v \in (0,v_0)} \int^{v}_0 
 |\du \tp|(u',v) \D u'.
\end{aligned}
\end{equation}

For ease of computation, we pick $v_0$ sufficiently small, such that $\mu - Q^2/r^2 \geq 0$ in $\mathcal{D}(0,v_0)$. We assume that we have a $C^1$ solution on $\mathcal{D}(0,v_0)$ and $\mathcal{D}(0,v_0) \subset \mathcal{R}.$ This means that we can apply Proposition \ref{PropEstimates} with $\overline{u} = 0$. In particular, by \eqref{Estimates17}, we have that $\mu - Q^2/r^2 \geq 0$ only if 
\begin{equation}\label{Area11}
C_1^2(0,v;\chi)r^{1-\chi}(u,v)r^{\chi}(0,v)(v-u) \leq 1
\end{equation}
for all $(u,v) \in \mathcal{D}(0,v_0).$ By restricting our choice of $v_0$ to stay in $(0,1)$, it suffices to demand that
\begin{equation}\label{Area12}
C_1^2(0,1;\chi)r^{1-\chi}(u,v)r^{\chi}(0,v)(v-u) \leq 1
\end{equation}
since by \eqref{Estimates11}, we know that $C_1(0,\cdot;\chi)$ is a non-decreasing function. In particular, from \eqref{Setup33} and the fact that $\du r < 0$, $\dv r > 0$, and $u \geq 0$, if we pick $\chi = \frac{1}{2}$, it then suffices to demand that 
\begin{equation}\label{Area13}
v_0^2 \leq \frac{2}{C_1^2\ml 0,1;\frac{1}{2}\mr}.
\end{equation}
For instance, by choosing
\begin{equation}\label{Area14}
v_0 = \min\left\{\oh, \frac{\sqrt{2}}{C_1 \ml 0, 1; \frac{1}{2}\mr}\right\},
\end{equation} 
we then have $v_0 < 1$ and \eqref{Area13} are satisfied. Note that such a choice is \textit{independent} of $D$, which is important in this section. In the arguments that follow, we will refine $v_0$ to be even smaller if necessary, as long as it remains to be independent of $D$.

Last but not least, we work with the following assumptions for solutions in $\mathcal{D}(0,v_0)$:
\begin{equation}\label{Area27}
\inf_{\mathcal{D}(0,v_0)} \dv r \geq \frac{5}{12}, \quad \sup_{\mathcal{D}(0,v_0)} (-\du r) \leq \frac{2}{3}, \quad \sup_{\mathcal{D}(0,v_0)} \mu \leq \frac{1}{3}.
\end{equation}
These assumptions can be verified directly with the same type of arguments as in Section 6 of \cite{christ2}.

With the above preparation, we now state the main theorem of this section.

\begin{theorem}\label{AreaLemma3} ($BV$ Area Estimates.)
Assume that \eqref{Area27} holds with our choice of $v_0$ to be given in \eqref{Area97}. Then there exists a constant $\varepsilon_0 \in (0,1]$ such that if the initial data satisfy
\begin{equation}\label{Area75}
Z(v_0) = \max\{X(v_0),Y(v_0)\} \leq \varepsilon_0, \quad \text{ and } \quad D := TV_{\{0\} \times (0,v_0)}[\tmfa]  \leq \varepsilon_0,
\end{equation}
then we have 
\begin{equation}\label{Area76}
Z(v_0) \lesssim D,
\end{equation}
\begin{equation}\label{Area301}
\begin{aligned}
\sup_{u \in (0,v_0)}TV_{\{u\}\times(u,v_0)}[\tmfa] &\lesssim D,\\
\sup_{v \in (0,v_0)}TV_{(0,v) \times \{v\}}[\ttmfb] &\lesssim D,\\
TV_{(0,v_0)}[\tmfa|_{\Gamma}]  &\lesssim D,
\end{aligned} \quad 
\begin{aligned}
\sup_{u \in (0,v_0)} TV_{ \{u\} \times (u,v_0) } [\log \lambda] &\lesssim D^2, \\
\sup_{v\in(0,v_0)} TV_{(0,v) \times \{v\}}[\log |\nu|] &\lesssim D^2, \\
TV_{(0,v_0)}[\log(\lambda|_{\Gamma})] &\lesssim D^2, \\
\end{aligned}
\end{equation}
and
\begin{equation}\label{Area302}
\begin{aligned}
A[\tp^\dagger ,\tmfa/\lambda] &\lesssim D^2, \\
A[\tp^\dagger ,\ttmfb/\nu] &\lesssim D^2, \\
\end{aligned} \quad
\begin{aligned}
A[\lambda, \tp] &\lesssim D^3, \\
A[\nu, \tp] &\lesssim D^3. \\
\end{aligned}
\end{equation}
\end{theorem}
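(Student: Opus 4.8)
The proof of Theorem \ref{AreaLemma3} follows the architecture that the introduction already sketches in the ``BV Area Estimates'' subsubsection: one sets up a closed system of mutually-coupled inequalities among the total-variation quantities on the left of \eqref{Area301}, the area quantities on the left of \eqref{Area302}, and the initial-data size $D$ (together with $X$, $Y$), and then closes this system by absorbing small-coefficient terms using the smallness of $D$ and $v_0$. The plan is to carry this out in four phases. First, I would establish the preliminary sup-norm bounds: prove Lemma 1 (the two inequalities $\int_u^{v_0}|\dv\tp| \le TV_{\{u\}\times(u,v_0)}[\tilde\alpha]$ and its incoming analogue), via integration by parts using $\tp|_\Gamma = 0$ from the boundary conditions \eqref{Setup36}; then prove Lemma 2 ($A \le 4D$, $B\le 8D$, $P\le 12D$) by the bootstrap argument indicated around \eqref{Area2'}--\eqref{Area9'}, with the key auxiliary estimates \eqref{Area5'} for $|Q|$ and \eqref{Area12'} for $|A_u|$. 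Feeding these in, one gets at the start of the main proof that $A,B,P\lesssim D$, the refined bounds $\tfrac5{12}\le\lambda\le\tfrac12$, $\tfrac12\le-\nu\le\tfrac23$, and the pointwise estimates $|r\dv\tp|,|r\du\tp|\lesssim D$, $|\phi|\lesssim 1$, $|Q|\lesssim Dr^2$, $|A_u|\lesssim Dr$, $|rD_u\phi|\lesssim D$.

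The second phase is to derive the total-variation inequalities. Using the renormalized wave equation \eqref{Area10'} (i.e. $\du(\dv(r\tp)) = \tp\,\du\dv r - \ii\e A_u\dv(r\tp) - \ii\e\phi_0 A_u\dv r - \ii\e\frac{\du r\dv r}{1-\mu}\frac{Q\phi}{r}$), differentiate once more and split off the Jacobian term to reach \eqref{Area19'} with error terms $G_1,G_2$; integrating over $\mathcal D(0,v_0)$ and taking suprema gives \eqref{Area16'}. The analogous computation starting from \eqref{Setup73} for $\ttmfb$ gives \eqref{Area18'}, and integrating along incoming null curves to $\Gamma$ gives \eqref{Area20'}. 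Then, starting from \eqref{Setup20}/\eqref{Area23'}--\eqref{Area24'} and carefully extracting the cancellations that produce $\tf$, $\tfs$ and the Jacobians $\p(\tp^\dagger,\ttmfb/\nu)/\p(u,v)$, $\p(\tp^\dagger,\tmfa/\lambda)/\p(u,v)$ as in \eqref{Area25'}--\eqref{Area26'}, integrate to obtain \eqref{Area28'}, \eqref{Area30'}, \eqref{Area31'} — these control $TV[\log\lambda]$, $TV[\log|\nu|]$, $TV[\log\lambda|_\Gamma]$ in terms of $D^2$, $DX$, $DY$ and the area quantities.

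The third phase is the area estimates themselves, which is where the genuine work lies. Starting from the Jacobian identity \eqref{Area32'}, I would bound $A[\tp^\dagger,\tmfa/\lambda]$ in terms of $A[\lambda,\tp]$ via \eqref{Area33'}, and symmetrically get \eqref{Area34'} for $A[\tp^\dagger,\ttmfb/\nu]$ in terms of $A[\nu,\tp]$. To close, one estimates $A[\lambda,\tp^\dagger]$ and $A[\nu,\tp^\dagger]$: compute $\frac1\lambda\frac{\p(\lambda,\tp)}{\p(u,v)}$ explicitly, introduce the auxiliary quantities $\tg$ (defined in \eqref{Area80'}), $\tr$, and for the incoming version $\tgs$, $\trs$; the crucial new ingredient for the charged case is the introduction of $\tg$ (and $\tgs$) and the derivation of their bounds. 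Then use integration by parts with the boundary conditions $\tr/r^2|_\Gamma = 0$, $\tg|_\Gamma = 0$ — together with the identities for $\du(\ttmfb/\nu)$ and $\ttheta^\dagger\du(\ttmfb/\nu)$ — to unravel the cancellations yielding $\du(\tr/\nu) = 4\pi r^2\frac{\p(\ttmfb/\nu,\tp^\dagger)}{\p(u,v)} + E_{11}$ with $|E_{11}|\lesssim D^2 r^2$, hence \eqref{Area36'}, and similarly \eqref{Area37'}; plugging back gives \eqref{Area38'} and \eqref{Area39'}. Finally, the fourth phase assembles \eqref{Area16'}, \eqref{Area18'}, \eqref{Area20'}, \eqref{Area28'}, \eqref{Area30'}, \eqref{Area31'}, \eqref{Area33'}, \eqref{Area34'}, \eqref{Area38'}, \eqref{Area39'} into a linear system for the vector of unknowns $(\sup TV[\tmfa], \sup TV[\ttmfb], TV[\tmfa|_\Gamma], \sup TV[\log\lambda], \dots, A[\tp^\dagger,\tmfa/\lambda], \dots, A[\lambda,\tp], A[\nu,\tp])$ whose off-diagonal coefficients all carry a factor of $D$, $X$, $Y$ or $v_0$; since $Z=\max\{X,Y\}\le\varepsilon_0$ and $D\le\varepsilon_0$ with $\varepsilon_0$ small (and $v_0$ small, independent of $D$), the system is a small perturbation of a triangular one and can be solved to yield $Z\lesssim D$ and all of \eqref{Area301}, \eqref{Area302}. (One also uses Lemma 1 to convert $X$, $Y$ into the $TV$ quantities, closing the loop on $Z\lesssim D$.)

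\textbf{Main obstacle.} The hard part is not the bootstrap or the final linear-algebra bookkeeping but the middle phase: producing the exact forms \eqref{Area23'}, \eqref{Area25'}--\eqref{Area26'} and especially the identities $\du(\tr/\nu) = 4\pi r^2\,\p(\ttmfb/\nu,\tp^\dagger)/\p(u,v) + E_{11}$ and the corresponding one with $\tg$, because these rest on delicate algebraic cancellations among the many charged terms — the contributions of $Q$, $A_u$, and the non-translation-invariant $\phi_0 A_u$ pieces must combine into pure Jacobians plus genuinely lower-order errors bounded by $D^2$ (or $D^2 r^2$). Getting the renormalization $\ttmfb = \tmfb + \ii\e A_u r\phi$ exactly right, and checking that $(\du\tp)^\dagger(\dv\tp) - (\du\tp)(\dv\tp)^\dagger$ being purely imaginary kills the dangerous real parts, is the technical heart; everything else is systematic once those cancellation lemmas are in hand.
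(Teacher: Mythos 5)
Your proposal is correct and follows essentially the same route as the paper: the four phases you describe (bootstrap for $A,B,P$; TV inequalities via the renormalized wave equation; area estimates via the cancellation identities involving $\tg,\tgs,\tr,\trs$; assembly and closure of the coupled system using smallness of $D$, $Z$, $v_0$) match the paper's Lemmas \ref{AreaLemma1}--\ref{AreaLemma2} and inequalities (1)--(10) in the proof of Theorem \ref{AreaLemma3}, and you correctly isolate the algebraic cancellations behind $\du(\tr/\nu)=4\pi r^2\,\p(\ttmfb/\nu,\tp^\dagger)/\p(u,v)+E_{11}$ as the technical heart.
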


\begin{remark}
For a quantity $H$, we write $H \lesssim D$ to mean that there exists a non-negative constant $c$ (possibly depending on $\e$ and $|\phi_0|$) independent of $D$ such that $H \leq cD$. For brevity, we also write $A[f,g]$ to mean $A(\mathcal{D}(0,v_0))[f,g]$ for any pair of Lipschitz functions $f$ and $g$.
\end{remark}

To prove this theorem, we first state and prove several lemmas.
\begin{lemma}\label{AreaLemma1}
For any $(u,v) \in \mathcal{D}(0,v_0)$, we have
\begin{equation}\label{Area16}
\begin{aligned}
\int^{v_0}_u |\dv \tilde{\phi}|(u,v') \D v' &\leq TV_{\{u\} \times (u,v_0)} [\tilde\alpha], \quad \int^{v}_0 |\du \tilde{\phi}|(u',v) \D u' \leq TV_{(0,v) \times \{v\} } [\tilde\beta].
\end{aligned}
\end{equation}
\end{lemma}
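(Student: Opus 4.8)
The plan is to prove both inequalities in \eqref{Area16} by the same mechanism: integrate $\dv\tp$ (resp.\ $\du\tp$) against the constant density $1$ along an outgoing (resp.\ incoming) null segment, rewrite $\dv\tp$ in terms of $\dv\tilde\alpha = \dv(\tmfa/\lambda)$ via integration by parts, and exploit the vanishing of the boundary term at the center $\Gamma$. The two statements are symmetric under interchanging the roles of $u$ and $v$, so I would carry out the argument in detail only for the first one and indicate that the second follows mutatis mutandis.

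For the first inequality, fix $(u,v)\in\mathcal{D}(0,v_0)$ and consider the outgoing null segment $\{u\}\times(u,v_0)$. Observe that, since $\tilde\alpha = \tmfa/\lambda = \dv(r\tp)/\dv r$, one has the pointwise identity
\begin{equation}\label{Area16proof1}
r\,\dv\tp = \dv(r\tp) - \tp\,\dv r = \lambda\,\tilde\alpha - \tp\,\lambda = \lambda(\tilde\alpha - \tp),
\end{equation}
hence $\dv\tp = \frac{\lambda}{r}(\tilde\alpha - \tp)$. On the other hand $\tp = \frac{1}{r}\int_u^v \tmfa(u,v')\,\D v'$, and differentiating in $v$ gives $\dv\tp = \frac{\tmfa}{r} - \frac{\lambda}{r}\tp = \frac{\lambda}{r}(\tilde\alpha - \tp)$, consistent with \eqref{Area16proof1}. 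The cleaner route is to note that $\tilde\alpha - \tp$ itself vanishes on $\Gamma$: indeed $\tp\to 0$ on $\Gamma$ by the boundary condition $r\tp|_\Gamma = 0$ together with finiteness of $\phi|_\Gamma$ (so in fact $\tp|_\Gamma = -\phi_0 + \phi|_\Gamma$ — one must be slightly careful here, as $\tp$ need not vanish on $\Gamma$; what vanishes is $r\tp$). I would therefore instead integrate directly: write $\dv\tp(u,v') = \dv\bigl(\frac{\tmfa}{\lambda} - \text{(lower order)}\bigr)$ only after isolating the total-derivative structure. Concretely, from $r\dv\tp = \lambda(\tilde\alpha - \tp)$ and the evolution equation for $\tp$ along the incoming direction one checks that $\dv\tp$ is, up to the factor $\lambda/r$, a quantity that is a $v$-derivative of $\tilde\alpha$ plus controlled terms; but the sharp statement claimed is simply $\int_u^{v_0}|\dv\tp| \le TV[\tilde\alpha]$, which suggests a more elementary identity.

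The elementary identity I expect to use is: along $\{u\}\times(u,v_0)$,
\begin{equation}\label{Area16proof2}
\dv\tp(u,v') = \frac{\dv r(u,v')}{r(u,v')}\,\bigl(\tilde\alpha(u,v') - \tp(u,v')\bigr),
\end{equation}
combined with the representation $\tp(u,v') = \frac{1}{r(u,v')}\int_u^{v'}\dv(r\tp)(u,v'')\,\D v'' = \frac{1}{r(u,v')}\int_u^{v'}\lambda\tilde\alpha(u,v'')\,\D v''$, which exhibits $\tp(u,v')$ as a weighted average of $\tilde\alpha$ over $(u,v')$ (the weights $\lambda\,\D v''$ integrate against $r(u,v') = \int_u^{v'}\lambda\,\D v''$, using $r|_\Gamma=0$). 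Thus $\tilde\alpha(u,v') - \tp(u,v')$ equals $\tilde\alpha(u,v')$ minus its $\lambda$-weighted backward average, which can be bounded by the oscillation — i.e.\ by the total variation — of $\tilde\alpha$ on $(u,v')$. Plugging into \eqref{Area16proof2}, using $\dv r/r = \dv\log r$, and integrating $\int_u^{v_0}$, the factor $\dv\log r$ provides exactly the right measure so that $\int_u^{v_0}|\dv\tp|\,\D v' \le \int_u^{v_0}|\dv\log r|\cdot TV_{\{u\}\times(u,v')}[\tilde\alpha]\,\D v' \le TV_{\{u\}\times(u,v_0)}[\tilde\alpha]$, where the last step uses $TV_{\{u\}\times(u,v')}[\tilde\alpha]\le TV_{\{u\}\times(u,v_0)}[\tilde\alpha]$ and $\int_u^{v_0}\dv\log r\,\D v' = \log(r(u,v_0)/r(u,u))$ — but $r(u,u)=0$, so this diverges, which signals that the averaging cancellation must be used more carefully, extracting a factor that kills the logarithmic divergence near $\Gamma$.

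The main obstacle, therefore, is precisely this borderline behavior at the center: naively $\dv\log r$ is not integrable down to $\Gamma$, so the bound $\int|\dv\tp|\le TV[\tilde\alpha]$ must come from a genuine cancellation showing $\tilde\alpha(u,v') - \tp(u,v') = O\bigl(TV_{\{u\}\times(u,v')}[\tilde\alpha]\bigr)$ with a constant that, after multiplication by $\dv r/r$ and integration, telescopes exactly. I expect the clean way is integration by parts: $\int_u^{v_0}\dv\tp\,\D v' = \tp(u,v_0) - \tp(u,u^+)$, and $\tp(u,u^+) = \lim_{v'\to u^+}\tp(u,v')$; one shows this limit, after subtracting, is controlled — actually the honest computation is $\int_u^{v_0}\frac{\dv r}{r}(\tilde\alpha-\tp)\,\D v' = \int_u^{v_0}\frac{\dv r}{r}\tilde\alpha\,\D v' - \int_u^{v_0}\frac{\dv r}{r}\tp\,\D v'$ and using $\dv\bigl(\frac{\tp}{1}\bigr)$-type identities to recognize the difference as a single boundary-value difference of $\tilde\alpha$-type quantities that is $\le TV[\tilde\alpha]$. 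I would set this up by writing $\tilde\alpha - \tp = \frac{1}{r}\int_u^{v'}\lambda(u,v'')\bigl(\tilde\alpha(u,v') - \tilde\alpha(u,v'')\bigr)\D v''$ (exact, using $r = \int_u^{v'}\lambda$), bound $|\tilde\alpha(u,v')-\tilde\alpha(u,v'')|\le TV_{\{u\}\times(v'',v')}[\tilde\alpha]\le TV_{\{u\}\times(u,v')}[\tilde\alpha]$, so $|\tilde\alpha-\tp|(u,v')\le TV_{\{u\}\times(u,v')}[\tilde\alpha]$, and then $\int_u^{v_0}|\dv\tp|\,\D v'\le\int_u^{v_0}\dv\log r\,(u,v')\,TV_{\{u\}\times(u,v')}[\tilde\alpha]\,\D v'$; a further integration by parts in this last integral (moving $\dv\log r$ onto a primitive and using monotonicity of $v'\mapsto TV_{\{u\}\times(u,v')}[\tilde\alpha]$) removes the logarithmic divergence and yields the claimed bound $TV_{\{u\}\times(u,v_0)}[\tilde\alpha]$. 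The incoming-direction statement is identical with $\lambda\to\nu$, $\Gamma$ reached as $v\to u^+$ replaced by $u\to v^-$, and uses $\du r < 0$ together with $r|_\Gamma = 0$.
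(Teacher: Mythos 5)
Your algebra is right as far as it goes, and you correctly identify the representation $r\dv\tp=\lambda(\tilde\alpha-\tp)$ together with $\tilde\alpha-\tp=\frac{1}{r}\int_u^{v'}\lambda(u,v'')\bigl(\tilde\alpha(u,v')-\tilde\alpha(u,v'')\bigr)\D v''$ as the engine of the proof — this is exactly the quantity $I_\alpha/r$ that the paper introduces. But the final step of your argument does not close. You propose to bound $|\tilde\alpha-\tp|(u,v')$ pointwise by $TV_{\{u\}\times(u,v')}[\tilde\alpha]$ and then integrate against $\frac{\lambda}{r}=\dv\log r$. Swapping the order of integration, this gives
$$\int_u^{v_0}\frac{\lambda}{r}\,TV_{\{u\}\times(u,v')}[\tilde\alpha]\,\D v'
=\int_u^{v_0}\log\!\left(\frac{r(u,v_0)}{r(u,v'')}\right)|\dv\tilde\alpha|(u,v'')\,\D v'',$$
which is a logarithmically weighted total variation, not $TV_{\{u\}\times(u,v_0)}[\tilde\alpha]$. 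The weight $\log(r(u,v_0)/r(u,v''))$ is unbounded as $v''\to u^+$, so the quantity can exceed $TV_{\{u\}\times(u,v_0)}[\tilde\alpha]$ by an arbitrarily large factor depending on where the variation of $\tilde\alpha$ concentrates. The "further integration by parts moving $\dv\log r$ onto a primitive" that you invoke does not rescue this: it just reproduces the same weighted integral, because you have already replaced $I_\alpha$ by its pointwise sup bound and thrown away the cancellation.

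The paper avoids the logarithm by never passing to a pointwise bound on $I_\alpha/r$. Instead, it writes $\int_u^{v_0}|\dv\tp|\,\D v'=-\int_u^{v_0}|I_\alpha|\,\D\!\left(\frac{1}{r}\right)$, integrates by parts (the boundary term at $v_0$ is nonpositive and the one at $\Gamma$ vanishes because $|I_\alpha|/r\to 0$ for a $C^1$ solution), and then uses the \emph{exact} identity $\dv I_\alpha=r\,\dv\tilde\alpha$, which follows from differentiating $I_\alpha(u,v)=\alpha(u,v)r(u,v)-\int_u^v\alpha\,\dv r\,\D v'$. This yields $\int(\partial_v|I_\alpha|)/r\,\D v'\le\int|\dv I_\alpha|/r\,\D v'=\int|\dv\tilde\alpha|\,\D v'$ with no logarithmic loss: the factor $1/r$ from the integration by parts cancels exactly against the $r$ in $\dv I_\alpha$. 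So the repair to your argument is to keep $I_\alpha$ intact through the integration by parts, rather than estimating $|\tilde\alpha-\tp|$ first.
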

The proof of this lemma is similar to the corresponding one in \cite{christ2} and we outline the approach.
\begin{proof} Using the fact that $r\phi|_{\Gamma} = 0$ and $r\phi \in C^1$, for each $(u,v) \in \mathcal{D}(0,v_0)$, we have
\begin{equation}\label{Area17}
\phi(u,v) = \frac{1}{r(u,v)}
 \int^v_u \dv(r\phi) (u,v') \D v'.
\end{equation}
This implies 
\begin{equation}\label{Area18}
\begin{aligned}
(\dv \phi)(u,v) &= - \frac{\dv r(u,v)}{r^2(u,v)}\int^{v}_u \dv (r\phi)(u,v') \D v' + \frac{\dv (r\phi)(u,v)}{r(u,v)} \\
&= \frac{(\dv r)(u,v)}{r^2(u,v)} \int^v_u \ml \alpha(u,v) - \alpha(u,v')\mr(\dv r)(u,v') \D v'.
\end{aligned}
\end{equation}
Next, since $\phi$, $r \dv \phi$, $\dv r$ are $C^1$ functions on $\mathcal{D}(0,v_0)$,  this implies that $\alpha \in C^1$ on $\mathcal{D}(0,v_0)$. Consequently, we have
\begin{equation}\label{Area19}
\lim_{v \rightarrow u+}\frac{ \int^v_u \ml \alpha(u,v) - \alpha(u,v')\mr(\dv r)(u,v') \D v'}{r(u,v)} = 0.
\end{equation}
We further denote
\begin{equation}\label{Ialpha}
I_{\alpha}(u,v) := \int^v_u \ml \alpha(u,v) - \alpha(u,v')\mr(\dv r)(u,v') \D v'
\end{equation}
and it satisfies 
\begin{equation}\label{Area20}
\dv I_{\alpha} = r \dv \alpha.
\end{equation}
From \eqref{Area18}, integrating along $v \in (u,v_0)$ and applying integration by parts, we obtain
\begin{equation}\label{Area21}
\begin{aligned}
\int_u^{v_0} |\dv \phi|(u,v') \D v' 
&= - \int_u^{v_0} |I_{\alpha}(u,v')|\D\ml \frac{1}{r(u,v')}\mr = -\frac{|I_{\alpha}|(u,v_0)}{r(u,v_0)} + \int^{v_0}_u \frac{\p |I_{\alpha}(u,v')|}{\p v}\frac{\D v}{r(u,v')} \\
&\quad\quad\leq \int^{v_0}_u \mlm \dv I_{\alpha} (u,v')\mrm \frac{1}{r(u,v')} \D v' = \int^{v_0}_u |\dv \alpha|
(u,v') \D v' = TV_{\{u\} \times (u,v_0)}[\alpha].
\end{aligned}
\end{equation}
Together with the fact that $\dv \phi = \dv \tp$ and $TV_{\{u\} \times (u,v_0)}[\alpha] = TV_{\{u\} \times (u,v_0)}[\tilde{\alpha}]$, inequality \eqref{Area21} implies the first half of \eqref{Area16}.\\

For the second half of \eqref{Area16}, we begin by writing
\begin{equation}\label{Area22}
\phi(u,v) = - \frac{1}{r(u,v)}
 \int^v_u \du(r\phi) (u',v) \D u'.
\end{equation}
We proceed as above and compute
\begin{equation}\label{Area23}
(\du \phi)(u,v) = \frac{(-\du r)(u,v)}{r^2(u,v)} \int^v_u \ml \beta(u,v) - \beta(u',v)\mr(\du r)(u',v) \D u'.
\end{equation}
Next, we denote
\begin{equation}\label{Ibeta}
I_\beta(u,v) := \int^v_u \ml \beta(u,v) - \beta(u',v)\mr(\du r)(u',v) \D u'
\end{equation}
and observe that
\begin{equation}\label{Area24}
\lim_{u \rightarrow v^-}\frac{I_{\beta}(u,v)}{r(u,v)} = 0 \quad \text{ and } \quad
\du I_\beta = - r \du \beta.
\end{equation}
Integrating \eqref{Area23} along $u \in (0,v)$, we obtain
\begin{equation}\label{Area26}
\begin{aligned}
\int^v_0 |\du \phi|(u',v) \D u' &= -\int^{v}_0 |I_\beta (u',v)| \D \ml \frac{1}{r(u',v)}\mr \\
&\leq \int^{v}_0 |\du \beta|(u',v) \D u' = TV_{(0,v) \times \{v\}}[\beta].
\end{aligned}
\end{equation}
Since $TV_{(0,v) \times \{v\}}[\beta] = TV_{(0,v) \times \{v\}}[\tilde\beta]$, we then prove the second inequality in \eqref{Area16}.
\end{proof}

We now attempt to derive estimates for the following quantities:
\begin{equation}\label{Area28}
A := \sup_{\mathcal{D}(0,v_0)}|\tmfa|, \quad B := \sup_{\mathcal{D}(0,v_0)}|\tmfb|, \quad P := \sup_{\mathcal{D}(0,v_0)}|\tp|.
\end{equation}
We summarize the estimates for these three quantities in a lemma below.
\begin{lemma}\label{AreaLemma2}
Suppose that $D \leq 1$. By refining of $v_0$ to be as described in \eqref{Area64}, we have
\begin{equation}\label{Area29}
A \leq 4D, \quad B \leq 8D, \quad P \leq 12D.
\end{equation}
\end{lemma}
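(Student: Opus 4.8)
\textbf{Proof proposal for Lemma~\ref{AreaLemma2}.} The plan is to close a bootstrap argument on the pair of quantities $\sup_{\mathcal{D}(0,\tu,\tv)}|\tmfa|$ and $\sup_{\mathcal{D}(0,\tu,\tv)}|\tp|$, exactly along the lines sketched in the introduction around \eqref{Area2'}--\eqref{Area9'}. First I would fix an arbitrary $(\tu,\tv)\in\mathcal{D}(0,v_0)$ and make the bootstrap assumption
$$\sup_{\mathcal{D}(0,\tu,\tv)}|\tmfa|\leq 4D,\qquad \sup_{\mathcal{D}(0,\tu,\tv)}|\tp|\leq 12D,$$
which holds trivially on a neighbourhood of $\Gamma$ by the $C^1$ regularity and the boundary condition $r\tp|_\Gamma=r\phi|_\Gamma=0$ (so that $\tp$ and $\tmfa$ are small there, using $D\leq 1$ and $\phi_0$ fixed). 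The goal is to improve both constants. The elementary step is that, since $\tp=\frac{1}{r(u,v)}\int^v_u\tmfa(u,v')\,\D v'$ and $\sup_{\mathcal{D}(0,v_0)}|(v-u)/r(u,v)|\leq 3$ by \eqref{Area27} (which gives $\frac{5}{12}\leq\dv r$ so $r(u,v)\geq\frac{5}{12}(v-u)$, hence the ratio is at most $\frac{12}{5}\leq 3$ — actually one should double-check the constant and shrink $v_0$ if needed), one gets $\sup_{\mathcal{D}(0,\tu,\tv)}|\tp|\leq 3\sup_{\mathcal{D}(0,\tu,\tv)}|\tmfa|$ as in \eqref{Area2'}. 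So everything reduces to bounding $\sup_{\mathcal{D}(0,\tu,\tv)}|\tmfa|$ by $3D$.

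For that I would use the equation \eqref{Area10'} for $\du(\dv(r\tp))$, viewing it as a linear first-order ODE in $\dv(r\tp)=\tmfa$ along incoming null directions with the purely imaginary integrating factor $e^{\int_0^u \ii\e A_u(u',v)\D u'}$; integrating in $u$ from $0$ gives the pointwise estimate \eqref{Area3'}:
$$|\tmfa|(u,v)\leq|\tmfa|(0,v)+\int_0^u|\tp|(-\du\dv r)(u',v)\D u'+\e\int_0^u\frac{(-\du r)(\dv r)}{r(1-\mu)}|Q||\phi|(u',v)\D u'+\e|\phi_0|\int_0^u|A_u|\dv r(u',v)\D u'.$$
The first term is at most $D$ by \eqref{Area40}. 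For the middle terms I would use the self-improving bounds \eqref{Area5'}--\eqref{Area12'}: from \eqref{Setup19} one gets $|Q|(u,v)\lesssim (|\phi_0|+\sup|\tp|)(\sup|\tmfa|)r^2(u,v)$ (the chain of inequalities in \eqref{Area5'}, which uses $r\dv\phi=\dv(r\tp)-\tp\dv r$, $v_0<1$, and \eqref{Area27}), and then \eqref{Setup30} plus this bound for $Q$ gives $|A_u|(u',v)\lesssim v_0(\sup|\tmfa|)(|\phi_0|+\sup|\tp|)$. Plugging these back in, together with $\sup(-\du\dv r)\lesssim 1$ from \eqref{Setup20} (note $\mu-Q^2/r^2\geq 0$ here, so $\du\dv r\leq 0$, and it is bounded since $\mu\leq\frac13$, $|\du r|,\dv r$ bounded), and $\sup|\tp|\leq 12D$, $\sup|\tmfa|\leq 4D$ from the bootstrap, one estimates the three integrals by quantities of the form $C\,v_0\,D$ (using that the $u$-integration interval has length $\leq v_0$ and $D\leq 1$). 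Choosing $v_0$ small enough — independently of $D$, as emphasised after \eqref{Area14} — makes the sum of these three contributions $\leq 2D$, so $|\tmfa|(u,v)\leq D+2D=3D$, improving the first bootstrap constant. Then $\sup|\tp|\leq 3\cdot 3D=9D<12D$ improves the second. By the usual continuity/connectedness argument the bootstrap assumption holds on all of $\mathcal{D}(0,v_0)$, yielding $A\leq 3D$ and $P\leq 9D$; since $3D\leq 4D$ and $9D\leq 12D$ this gives the stated (slightly lossy) bounds $A\leq 4D$, $P\leq 12D$.

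Finally, the bound $B\leq 8D$ for $\tmfb=\du(r\tp)$ follows without a separate bootstrap: writing $r\tp=\int_0^v\tmfa(u,v')\D v'$ differentiated in $u$ is awkward near $\Gamma$, so instead I would integrate the analogue of \eqref{Area10'} — i.e.\ the second equation of \eqref{Setup73}/\eqref{Setup72} rewritten for $\tp$ — along outgoing null directions from $\Gamma$, using the boundary condition $\dv(r\tp)+\du(r\tp)\to 0$ on $\Gamma$ (so $\tmfb|_\Gamma=-\tmfa|_\Gamma$, hence $|\tmfb|_\Gamma\leq A$) together with the already-established bounds for $A$, $P$, $|Q|$, $|A_u|$; the extra integrated error terms are again $\lesssim v_0 D$, so after shrinking $v_0$ one gets $B\leq A+ (\text{small})\leq 4D+\text{small}\leq 8D$. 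I expect the main obstacle to be purely bookkeeping: verifying that all the constants produced in estimating the integrals in \eqref{Area3'} depend only on $\e$, $|\phi_0|$ and $v_0$ (not on $D$), and that the necessary smallness of $v_0$ is compatible with the earlier requirement \eqref{Area14}; there is no conceptual difficulty, and the numerical slack in $4D,8D,12D$ is deliberately generous to absorb this.
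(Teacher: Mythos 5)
Your overall strategy—bootstrapping the pair $\sup|\tmfa|$, $\sup|\tp|$, using $|\tp|\le 3\sup|\tmfa|$, integrating \eqref{Area42} in $u$ with the purely imaginary integrating factor, and closing via smallness in $v_0$ for the $Q$- and $A_u$-terms—is exactly the paper's. The argument for $B$ (integrating along outgoing null rays from $\Gamma$ and using $\ttmfb|_\Gamma=-\tmfa|_\Gamma$) is also the same. There is, however, a concrete error in how you handle the first integral $\int_0^u|\tp|(-\du\dv r)\,\D u'$.

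You claim $\sup(-\du\dv r)\lesssim 1$ ``from \eqref{Setup20} \dots since $\mu\le\frac13$, $|\du r|,\dv r$ bounded.'' That inference fails: \eqref{Setup20} reads $\dv\du r=\frac{1}{r}\,\frac{\mu-Q^2/r^2}{1-\mu}\,\du r\,\dv r$, and the prefactor $\frac1r$ blows up at $\Gamma$. The a priori bounds $\mu\le\frac13$, $|\du r|,\dv r\le\frac23$ from \eqref{Area27} give only $|\du\dv r|\lesssim 1/r$. To get a uniform pointwise bound you would need $\frac{\mu-Q^2/r^2}{r}$ bounded, i.e.\ $\mu\lesssim r$, and the natural route to that uses $r\dv\phi\lesssim r^2$ near $\Gamma$ (the $C^1$ asymptotics \eqref{Area310}), whose constant $M(v_0)$ is tied to the specific solution, not to $D$ or $v_0$ in a way that lets you shrink $v_0$ independently of $D$. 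So the step ``one estimates the three integrals by quantities of the form $C\,v_0\,D$'' is unjustified for this piece; its contribution does not actually shrink with $v_0$.

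The paper avoids this entirely by exploiting the sign $-\du\dv r\ge 0$ and the fact that the integral telescopes: $\int_0^u(-\du\dv r)\,\D u'=(\dv r)(0,v)-(\dv r)(u,v)\le\frac12-\frac{5}{12}=\frac{1}{12}$ by \eqref{Area27}, with no pointwise control of $\du\dv r$ needed. This gives $\int_0^u|\tp|(-\du\dv r)\le\frac{1}{12}P$, a bound that does \emph{not} scale with $v_0$; it is absorbed not by taking $v_0$ small but by using $P\le 3\sup|\tmfa|$ and moving the resulting $\frac12\sup|\tmfa|$ to the left-hand side (see \eqref{Area71}). Your final numerics $A\le3D\le 4D$, $P\le 9D\le 12D$ are fine, but you should replace the pointwise bound on $\du\dv r$ with this telescoping estimate, and correspondingly note that the $\frac{1}{12}P$ term is closed via the self-consistency $P\le 3\sup|\tmfa|$ rather than by shrinking $v_0$. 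A parallel fix is needed in your $B$-estimate, where the analogous term $\int_u^v|\tp|(-\du\dv r)\,\D v'\le P[(-\du r)(u,v)-(\dv r)(u,u)]\le\frac14 P$ again does not shrink with $v_0$.
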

\begin{proof}
We begin by considering the set
\begin{equation}\label{Area70}
\mathcal{\tilde{D}} = \left\{ (\tu,\tv) \in \mathcal{D}(0,v_0): \sup_{\mathcal{D}(0,\tu,\tv)} \mlm \tmfa \mrm \leq 4 D, \quad \sup_{\mathcal{D}(0,\tu,\tv)} \mlm \tp \mrm \leq 12 D \right \}.
\end{equation}
Note that $(0,0) \in \mathcal{\tilde{D}}$ and $\mathcal{\tilde{D}}$ is a closed subset of $\mathcal{D}(0,v_0)$. It remains to show that $\mathcal{\tilde{D}}$ is open.

First, observe that for each $(u,v) \in \mathcal{D}(0,v_0)$, we have
\begin{equation}\label{Area30}
r(u,v) = \int^v_u (\dv r)(u,v') \D v' \geq (v-u) \inf_{\mathcal{D}(0,v_0)} (\dv r) \geq \frac{5}{12}(v-u) \geq \frac{1}{3}(v-u).
\end{equation}
Furthermore, analogous to \eqref{Area17}, we get
\begin{equation}\label{Area41}
\tp = \frac{1}{r(u,v)}\int^v_u \tmfa (u,v') \D v'.
\end{equation}
Applying \eqref{Area30} to \eqref{Area41}, we hence obtain
\begin{equation}\label{Area31}
\sup_{\mathcal{D}(0,\tu,\tv)}|\tp| \leq \sup_{\mathcal{D}(0,\tu,\tv)}|\tmfa| \sup_{\mathcal{D}(0,\tu,\tv)}\mlm \frac{v - u}{r(u,v)}\mrm \leq 3 \sup_{\mathcal{D}(0,\tu,\tv)}|\tmfa|.
\end{equation}

Next, within $\mathcal{D}(0,\tu,\tv)$, we then compute $\du(\dv (r\phi))$ using \eqref{Setup5}, \eqref{Setup16}, \eqref{Setup17} as follows
\begin{equation}\label{Area32}
\begin{aligned}
\du \ml \dv(r\phi) \mr &= \du r \dv \phi + r \du \dv \phi + \du \phi \dv r + \phi \du \dv r \\
&= \phi \du \dv r - \ii \e  A_u \dv (r\phi) - \ii \e 
 \frac{\du r \dv r }{1 - \mu}\frac{Q\phi}{r}.
\end{aligned}
\end{equation}
The above implies 
\begin{equation}\label{Area42}
\du\ml \dv \ml r\tp \mr \mr = \tp \du \dv r- \ii \e  A_u \dv (r\tp) - \ii \e \phi_0 A_u \dv r  - \ii \e 
 \frac{\du r \dv r }{1 - \mu}\frac{Q\phi}{r}.
\end{equation}
By viewing \eqref{Area42} as a first order ODE for $\dv(r\tp)$ in the $u$-direction and noting that the integrating factor $e^{\int^{u}_0 \ii \e A_u(u',v) \D u'}$ only contributes to a phase change, we henceforth derive
\begin{equation}\label{Area33}
\begin{aligned}
& e^{\ii \e \int^{u}_0 A_u(u',v) \D u' }\dv(r\tp)(u,v) \\
 = &\; \dv(r\tp)(0,v) + \int^u_0 e^{\ii \e \int^{u'}_0 A_u(u'',v) \D u'' }\ml \tp \du \dv r - \ii \e \phi_0 A_u \dv r - \ii \e 
 \frac{\du r \dv r }{1 - \mu}\frac{Q\phi}{r} \mr(u',v) \D u',
\end{aligned}
\end{equation}
which implies that
\begin{equation}\label{Area34}
\begin{aligned}
|\dv(r\tp)|(u,v) \leq&\; |\dv (r\tp)|(0,v) + \int^u_0 |\tp|(-\du \dv r) (u',v)\D u'   \\
&+ \e \int^u_0 \frac{(-\du r)(\dv r)}{r(1-\mu)}|Q||\phi| (u',v)\D u'  + \e |\phi_0| \int^u_0 |A_u| \dv r (u',v) \D u'.
\end{aligned}
\end{equation}
In the above estimate, we have used the fact that $\mu - Q^2/r^2 \geq 0$ from our choice of $v_0$ in \eqref{Area64} and its implication $-\du \dv r \geq 0$ by \eqref{Setup20}. \\

It remains to estimate the integrals in \eqref{Area34}. By employing $\dv r\geq \frac{5}{12}$ and \eqref{Setup33}, the first integral obeys
\begin{equation}\label{Area35}
\begin{aligned}
\int^u_0 |\tp|(-\du \dv r) (u',v)\D u' \leq \ml \sup_{\mathcal{D}(0,\tu,\tv)}|\tp| \mr\ml (\dv r)(0,v) - (\dv r)(u,v) \mr \leq \frac{1}{12}\ml \sup_{\mathcal{D}(0,\tu,\tv)}|\tp| \mr.
\end{aligned}
\end{equation}
For the second integral in \eqref{Area34}, since $|\phi| \leq |\tp| + |\phi_0| \leq \ml \sup_{\mathcal{D}(0,\tu,\tv)}|\tp| \mr + |\phi_0|$ for every $(u',v) \in \mathcal{D}(0,\tu,\tv)$, we have
\begin{equation}\label{Area43}
\begin{aligned}
\e \int^u_0 \frac{(-\du r)(\dv r)}{r(1-\mu)}|Q||\phi| (u',v)\D u' \leq & \;  \e \ml \sup_{\mathcal{D}(0,\tu,\tv)}|\tp| \mr \int^u_0 \frac{(-\du r)(\dv r)}{r(1-\mu)}|Q| (u',v)\D u' \\
&+  \e |\phi_0| \int^u_0 \frac{(-\du r)(\dv r)}{r(1-\mu)}|Q| (u',v)\D u'.
\end{aligned}
\end{equation}
For the two integrals in \eqref{Area43}, we apply two different estimates for $|Q|$. For the first integral, since $\mathcal{D}(0,v_0) \subset \mathcal{R}$, we have $\mu \leq 1$ and $\mu - Q^2/r^2 \geq 0$, which in turn imply $|Q| \leq \mu^{\frac{1}{2}} r \leq r$. In light of Lemma \ref{SetupLemma1} and our assumptions in \eqref{Area27}, we thus have
\begin{equation}\label{Area36}
\begin{aligned}
 \e \ml \sup_{\mathcal{D}(0,\tu,\tv)}|\tp| \mr& \int^u_0 \frac{(-\du r)(\dv r)}{r(1-\mu)}|Q| (u',v)\D u' 
\leq \e \ml \sup_{\mathcal{D}(0,\tu,\tv)}|\tp| \mr \int^u_0 \frac{(-\du r)(\dv r)}{1- \mu}(u',v) \D u' \\
&\quad\quad \leq \e \ml \sup_{\mathcal{D}(0,\tu,\tv)}|\tp| \mr \frac{\dv r}{1 - \mu}(0,v)\int^u_0 (-\du r)(u',v) \D u'\\
&\quad\quad \leq \ml \e \frac{3v}{8}\mr \ml \sup_{\mathcal{D}(0,\tu,\tv)}|\tp| \mr \leq \ml \e \frac{3v_0}{8}\mr \ml \sup_{\mathcal{D}(0,\tu,\tv)}|\tp| \mr \leq \frac{1}{12}\ml \sup_{\mathcal{D}(0,\tu,\tv)}|\tp| \mr
\end{aligned}
\end{equation}
as long as we refine $v_0$ from \eqref{Area14} to be 
\begin{equation}\label{Area37}
v_0 = \min\left\{\oh, \frac{\sqrt{2}}{C_1 \ml 0, 1; \frac{1}{2}\mr}, \frac{2}{9\e} \right\}.
\end{equation}
For the second integral in \eqref{Area43}, we use \eqref{Setup19}, $Q|_{\Gamma} = 0$ and their implication 
\begin{equation}\label{Area44}
\begin{aligned}
|Q|(u,v) \leq 4 \pi \e \int^v_u r^2 |\phi||\dv \phi|(u,v') \D v' \leq 4 \pi \e \ml |\phi_0| + \sup_{\mathcal{D}(0,\tu,\tv)}|\tp| \mr r(u,v) \int^v_u r |\dv \phi|(u,v') \D v'. \\
\end{aligned}
\end{equation}
We further employ \eqref{Area30} and obtain
\begin{equation}\label{Area45}
\begin{aligned}
\int^v_u r|\dv \phi|(u,v') \D v' &= \int^v_u r|\dv \tp|(u,v') \D v' \leq \int^v_u |\dv (r\tp)|(u,v') + |\tp|(\dv r)(u,v') \D v' \\
&\quad \leq \ml \sup_{\mathcal{D}(0,\tu,\tv)}|\tmfa| \mr (v-u) + \ml \sup_{\mathcal{D}(0,\tu,\tv)}|\tp| \mr r(u,v) \leq 6\ml \sup_{\mathcal{D}(0,\tu,\tv)}|\tmfa| \mr r(u,v). \\
\end{aligned}
\end{equation}
Plugging \eqref{Area45} into \eqref{Area44}, we deduce that
\begin{equation}\label{Area47} 
|Q|(u,v) \leq 24\pi\e \ml \mlm \phi_0 \mrm + \sup_{\mathcal{D}(0,\tu,\tv)}|\tp| \mr \ml \sup_{\mathcal{D}(0,\tu,\tv)}|\tmfa| \mr  r(u,v)^2.
\end{equation}
Analogous to \eqref{Area36}, by using \eqref{Area47}, we estimate the second integral in \eqref{Area43} and get
\begin{equation}\label{Area48}
\begin{aligned}
\e |\phi_0| \int^u_0 \frac{(-\du r)(\dv r)}{r(1-\mu)}|Q| (u',v)\D u' 
&\leq 12 \pi v_0 \e^2 |\phi_0| \ml \sup_{\mathcal{D}(0,\tu,\tv)}|\tmfa| \mr \ml \mlm \phi_0 \mrm + \sup_{\mathcal{D}(0,\tu,\tv)}|\tp| \mr  \int^u_0 \frac{(-\du r)(\dv r)}{1- \mu}(u',v) \D u' \\
&\leq \frac{9}{2} \pi v_0^2 \e^2 |\phi_0| \ml \sup_{\mathcal{D}(0,\tu,\tv)}|\tmfa| \mr \ml \mlm \phi_0 \mrm + \sup_{\mathcal{D}(0,\tu,\tv)}|\tp| \mr. \\
\end{aligned}
\end{equation}
For the remaining integral in \eqref{Area34}, we begin by estimating $|A_u|$. For $(u',v) \in \tmcd$, employing \eqref{Setup30}, \eqref{Setup37}, \eqref{Area47}, we deduce
\begin{equation}\label{Area68}
\begin{aligned}
|A_u|(u',v) &\leq \int^{v}_{u'} \frac{2|Q| (-\du r) \dv r}{r^2(1-\mu)} (u',v') \D v' \\
&\leq 48 \pi \e \ml \mlm \phi_0 \mrm + \sup_{\mathcal{D}(0,u',v)}|\tp| \mr  \ml \sup_{\mathcal{D}(0,\tu,\tv)}|\tmfa| \mr \int^v_{u'} \frac{(-\du r)(\dv r)}{1-\mu}(u',v') \D v'  \\
&\leq 48 \pi \e \ml \mlm \phi_0 \mrm + \sup_{\mathcal{D}(0,u',v)}|\tp| \mr  \ml \sup_{\mathcal{D}(0,\tu,\tv)}|\tmfa| \mr r(u,v)  \\
&\leq 24 \pi v_0 \e \ml \sup_{\mathcal{D}(0,\tu,\tv)}|\tmfa| \mr \ml \mlm \phi_0 \mrm + \sup_{\mathcal{D}(0,u',v)}|\tp| \mr.
\end{aligned}
\end{equation}
The above further implies
\begin{equation}\label{Area69}
\e |\phi_0| \int^u_0 |A_u|(\dv r)(u',v) \D u' \leq 12 \pi v_0^2 \e^2 |\phi_0| \ml \sup_{\mathcal{D}(0,\tu,\tv)}|\tmfa| \mr \ml \mlm \phi_0 \mrm + \sup_{\mathcal{D}(0,\tu,\tv)}|\tp| \mr.
\end{equation}

\vspace{5mm}

We now plug \eqref{Area35}, \eqref{Area36}, \eqref{Area48} into \eqref{Area34}. Together with \eqref{Area31}, we obtain
\begin{equation}\label{Area50}
\begin{aligned}
|\dv(r\tp)|(u,v) =& \; |\dv (r\tp)|(0,v) + \int^u_0 |\tp|(-\du \dv r) (u',v)\D u' + \e \int^u_0 \frac{(-\du r)(\dv r)}{r(1-\mu)}|Q||\phi| (u',v)\D u' \\
&+ \e |\phi_0| \int^u_0 |A_u| \dv r (u',v) \D u'\\ 
\leq& \; D + \frac{1}{6}\ml \sup_{\mathcal{D}(0,\tu,\tv)}|\tp| \mr + \frac{33}{2} \pi \e^2 |\phi_0| v_0^2 \ml \mlm \phi_0 \mrm + \sup_{\mathcal{D}(0,\tu,\tv)}|\tp|  \mr \ml \sup_{\mathcal{D}(0,\tu,\tv)}|\tmfa| \mr.
\end{aligned}
\end{equation}
In \eqref{Area50}, we use $D := TV_{\{0\} \times (0,v_0)}[\tmfa]$, $\tmfa(0,0) = 0$ and the implication $|\dv(r\tp)|(0,v) \leq D$. Taking supremum over $\mathcal{D}(0,\tu,\tv)$ and applying \eqref{Area70} and \eqref{Area31}, we deduce that
\begin{equation}\label{Area71}
\begin{aligned}
\ml \sup_{\mathcal{D}(0,\tu,\tv)}|\tmfa| \mr &\leq D + \frac{1}{6}\ml \sup_{\mathcal{D}(0,\tu,\tv)}|\tp| \mr + \frac{33}{2} \pi \e^2 |\phi_0| v_0^2 \ml \mlm \phi_0 \mrm + \sup_{\mathcal{D}(0,\tu,\tv)}|\tp|  \mr \ml \sup_{\mathcal{D}(0,\tu,\tv)}|\tmfa| \mr \\
&\leq D +  \frac{1}{2}\ml \sup_{\mathcal{D}(0,\tu,\tv)}|\tmfa| \mr + \frac{33}{2} \pi \e^2 |\phi_0| v_0^2 \ml \mlm \phi_0 \mrm + 12 \mr \ml \sup_{\mathcal{D}(0,\tu,\tv)}|\tmfa| \mr \\
&\leq D +  \frac{2}{3}\ml \sup_{\mathcal{D}(0,\tu,\tv)}|\tmfa| \mr,
\end{aligned}
\end{equation}
which gives
\begin{equation}\label{Area52}
\ml \sup_{\mathcal{D}(0,\tu,\tv)}|\tmfa| \mr \leq 3D.
\end{equation}
Note that in \eqref{Area71} we further restrict our choice of $v_0$ as in \eqref{Area37} to 
\begin{equation}\label{Area72}
v_0 = \min\left\{\oh, \frac{\sqrt{2}}{C_1 \ml 0, 1; \frac{1}{2}\mr}, \frac{2}{9\e}, \frac{1}{\sqrt{99\pi\e^2|\phi_0|\ml |\phi_0| + 12 \mr}} \right\}.
\end{equation}
Furthermore, by \eqref{Area31}, inequality \eqref{Area52} also leads to 
\begin{equation}\label{Area53}
\sup_{\mathcal{D}(0,\tu,\tv)}|\tp| \leq 3\ml \sup_{\mathcal{D}(0,\tu,\tv)}|\tmfa| \mr \leq 9D.
\end{equation}
We thus retrieve our bootstrap estimate in \eqref{Area70} and deduce that $\tmcd$ is an open subset of $\mathcal{D}(0,v_0)$. Moreover, the derived estimates also extend to the entire region $\mathcal{D}(0,v_0)$. 

\vspace{5mm}

We proceed to estimate $B$. {\color{black}In the following arguments, we take $(u,v) \in \mathcal{D}(0,v_0)$ and set $\tu = \tv = v_0$.} By viewing \eqref{Area42} as an ODE in the $v$-direction for $\du \ml r \tp \mr$ and by applying \eqref{Setup38}, we deduce
\begin{equation}\label{Area54}
\begin{aligned}
\mlm \du \ml r\tp \mr \mrm (u,v)\leq & \; \mlm \dv \ml r\tp \mr \mrm (u,u) +   \int^v_u |\tp|(- \du \dv r)(u,v') \D v'  + \e \int^v_u |A_u| \mlm \dv (r \tp)\mrm  (u,v') \D v' \\
&+ \e |\phi_0|  \int^v_u |A_u|(\dv r) (u,v') \D v' + \e \int^v_u \frac{(-\du r) \dv r}{1 - \mu}\frac{|Q| |\phi|}{r} (u,v') \D v'. \\
\end{aligned} 
\end{equation}
Analogous to the previous case, since $\du\dv r \leq 0$ in $\mathcal{D}(0,v_0)$, we bound the first integral in \eqref{Area54} by
\begin{equation}\label{Area60}
\begin{aligned}
\int^v_u |\tp|(-\du \dv r) (u,v')\D v' &\leq P\ml (-\du r)(u,v) - (\dv r)(u,u) \mr \leq \frac{P}{4} \leq 3D.
\end{aligned}
\end{equation}
Furthermore, by \eqref{Area68}, we get
\begin{equation}\label{Area55}
\begin{aligned}
|A_u|(u,v') \leq 24 \pi \e (|\phi_0| + P)A v_0 \leq 96 \pi \e (|\phi_0| + 12) v_0 D.
\end{aligned}
\end{equation}
With these, we can see that the second and the third integral in \eqref{Area54} obey
\begin{equation}\label{Area58}
\begin{aligned}
\e \int^v_u |A_u| \mlm \dv (r \tp)\mrm  (u,v') \D v' &\leq 96 \pi\e^2 (|\phi_0| + 12) v_0^2 A D \leq 384 \pi\e^2 (|\phi_0| + 12) v_0^2 D \leq \frac{D}{4}
\end{aligned}
\end{equation}
and 
\begin{equation}\label{Area73}
\begin{aligned}
\e |\phi_0| \int^v_u |A_u| (\dv r) (u,v') \D v' &\leq 96 \pi\e^2 |\phi_0|(|\phi_0| + 12) v_0^2 D \leq \frac{D}{4},
\end{aligned}
\end{equation}
if we refine our choice of $v_0$ from \eqref{Area72} to 
\begin{equation}\label{Area59}
v_0 = \min\left\{\oh, \frac{\sqrt{2}}{C_1 \ml 0, 1; \frac{1}{2}\mr}, \frac{2}{9\e}, \frac{1}{\sqrt{384\pi\e^2|\phi_0|\ml |\phi_0| + 12 \mr}}, \frac{1}{\sqrt{1536\pi}\e\sqrt{|\phi_0| + 12}}\right\}.
\end{equation}
To estimate the last integral in \eqref{Area54}, by  \eqref{Area47}, we first note
\begin{equation}\label{Area62}
|Q\phi| \leq 12 \pi \e \ml \mlm \phi_0 \mrm + 12 \mr^2 A  v_0 r(u,v) \leq 48\pi \e \ml \mlm \phi_0 \mrm + 12 \mr^2 D  v_0 r(u,v).
\end{equation}
Henceforth, the last integral in \eqref{Area54} satisfies
\begin{equation}\label{Area63}
\begin{aligned}
\e \int^v_u \frac{(-\du r) \dv r}{1 - \mu}\frac{|Q| |\phi|}{r} (u,v') \D v' &\leq 48 \pi \e^2 \ml |\phi_0| + 12\mr^2 v_0 D \int^v_u \frac{(-\du r) \dv r}{1 - \mu} (u,v') \D v' \\
&\leq 24 \pi \e^2 \ml |\phi_0| + 12\mr^2 v_0^2 D \leq \frac{D}{2}
\end{aligned}
\end{equation}
if we restrict our choice of $v_0$ from \eqref{Area59} to 
\begin{equation}\label{Area64}
v_0 = \min\left\{\oh, \frac{\sqrt{2}}{C_1 \ml 0, 1; \frac{1}{2}\mr}, \frac{2}{9\e}, \frac{1}{\sqrt{384\pi\e^2|\phi_0|\ml |\phi_0| + 12 \mr}}, \frac{1}{\sqrt{1536\pi}\e\sqrt{|\phi_0| + 12}}, \frac{1}{\sqrt{48\pi}\e(|\phi_0| + 12)}\right\}.
\end{equation}
Plugging our bounds from \eqref{Area60}, \eqref{Area58}, \eqref{Area63} into \eqref{Area54}, we deduce that
\begin{equation}\label{Area65}
B \leq A + 3D + \frac{D}{2} + \frac{D}{2} \leq 8D.
\end{equation}
This then completes the proof of Lemma \ref{AreaLemma2}.
\end{proof}

With the above lemma proven, we return to the proof of Theorem \ref{AreaLemma3}.
\begin{proof}By Lemma \ref{AreaLemma2} we can now write 
\begin{equation}\label{Area81}
A \lesssim D, \quad B \lesssim D, \quad P \lesssim D.
\end{equation}
Furthermore, recall that in the proof of Lemma \ref{AreaLemma2}, we chose $v_0$ such that $\du \dv r \leq 0$. Combined with the fact that $\mathcal{D}(0,v_0) \subset \mathcal{R}$ and $\dv r|_{\Gamma} = -\du r|_{\Gamma}$, for all $(u,v) \in \mathcal{D}(0,v_0)$, \eqref{Area27} then implies
\begin{equation}\label{Area84}
\frac{5}{12} \leq \dv r(u,v) \leq \frac{1}{2}, \quad \text{ and } \quad \frac{1}{2} \leq (-\du r)(u,v) \leq \frac{2}{3}.
\end{equation}
Henceforth, applying
\begin{equation}\label{Area66}
r \dv \tp = \dv \ml r \tp \mr - \tp \dv r, \quad r \du \tp = \du \ml r \tp \mr - \tp \du r, \quad \phi = \phi_0 + \tp,
\end{equation}
and $D \leq \varepsilon_0 \leq 1$, we then get
\begin{equation}\label{Area82}
\sup_{\mathcal{D}(0,v_0)}\mlm r \dv \tp \mrm \lesssim D, \quad \sup_{\mathcal{D}(0,v_0)}\mlm r \du \tp \mrm \lesssim D, \quad \sup_{\mathcal{D}(0,v_0)}\mlm \phi \mrm \lesssim 1.
\end{equation}
Plugging these estimates into \eqref{Area47} and \eqref{Area68}, for each $(u,v) \in \mathcal{D}(0,v_0)$, we further have
\begin{equation}\label{Area83}
|Q|(u,v) \lesssim D r^2(u,v) \quad \text{ and } \quad |A_u|(u,v) \lesssim Dr(u,v).
\end{equation}
Note that if needed, we can also use $r(u,v) \leq r(0,v_0) = \frac{1}{2}v_0 \leq 1$. As a consequence, we obtain
\begin{equation}\label{Area114}
|rD_u \phi|(u,v) = |r\du \phi + \ii \e A_u r \phi|(u,v) \lesssim D + Dr(u,v)^2 \lesssim D.
\end{equation} 

On top of the estimates in the region $\mathcal{D}(0,v_0)$, we also derive some asymptotics (in $r$) for $C^1$ solutions at points sufficiently close to $\Gamma$. This is to justify that various terms vanish as we take the limit along incoming or (backward) outgoing null curves towards points on $\Gamma$. For a $C^1$ solution, we have that $\phi, r \dv \phi, \text{ and }r \du \phi$ are $C^1$ functions on $\overline{\mathcal{D}(0,v_0)}$. Since $\dv \phi$ and $\du \phi$ are continuous on $\overline{\mathcal{D}(0,v_0)}$, these imply that $r \dv \phi|_{\Gamma} = r \du \phi|_{\Gamma} = 0$. 
By the fundamental theorem of calculus and the extreme value theorem, there exists a constant $M(v_0)$ depending on $v_0$ such that
\begin{equation}\label{Area310}
r|\dv \phi|(u,v) \leq M(v_0) (v-u) \quad \text{ and } \quad r|\du \phi| \leq M(v_0)(v-u).
\end{equation}
Under the bootstrap assumption, similar to the proof in \eqref{Area30}, we now have $v-u \leq 3r(u,v)$. Hence, we get
\begin{equation}\label{Area311} 
|\dv \tilde{\phi}|(u,v) = |\dv \phi|(u,v) \leq 3M(v_0) \quad \text{ and } \quad |\du \tilde{\phi}|(u,v) = |\du \phi| \leq 3M(v_0).
\end{equation}
In addition, since $\dv r$ and $\du r$ are also $C^1$ functions, a similar argument allows us to deduce that there exists a constant $M_1(v_0)$ such that
\begin{equation}\label{Area521}
|\tmfa| (u,v) \leq M_1(v_0)r(u,v) \quad \text{ and } \quad |\tmfb|(u,v) \leq M_1(v_0)r(u,v).
\end{equation}
Additionally, as $\phi$, $\dv r$, and $\du r$ are $C^1$ functions on $\overline{\mathcal{D}(0,v_0)}$, these are bounded functions on $\overline{\mathcal{D}(0,v_0)}$.
Furthermore, since $Q/r$ is yet another $C^1$ function, there exists a constant $M_2(v_0)$ such that
\begin{equation}\label{Area312}
|Q|(u,v) \leq M_2(v_0)r^2(u,v) \quad \text{ and } \quad |A_u|(u,v) \leq M_2(v_0)r(u,v).
\end{equation}
Using \eqref{Setup25}, together with \eqref{Area84} and $m|_{\Gamma} = 0$, we derive
\begin{equation}\label{Area313}
\begin{aligned}
m(u,v) &\leq \int_u^v \frac{2\pi r^2(1-\mu)|\dv \phi|^2}{\dv r}(u,v') + \frac{Q^2}{r^4}\cdot(r^2 \dv r)(u,v') \D v' \\
&\leq \frac{24}{5}(3M(v_0))^2 \pi r^2 (v - u) + \frac{M_2(v_0)^2}{2} r^2(u,v) (v - u) \\
&\leq \ml \frac{648}{5}\pi^2 M^2(v_0) + \frac{3M_2(v_0)^2}{2}\mr r^3(u,v).
\end{aligned}
\end{equation}
This implies
\begin{equation}\label{Area314}
\mu(u,v) \leq \ml \frac{1296}{5}\pi^2 M^2(v_0) + 3M_2(v_0)^2\mr r^2(u,v).
\end{equation}

\vspace{5mm}

We are back to derive estimates for $\tmfa$. Differentiating both sides of \eqref{Area42} with respect to $v$, we get
\begin{equation}
\begin{aligned}
\du \dv \tmfa &= \dv\ml \tp \du \dv r\mr - \ii \e A_u \dv \tmfa - \ii \e \phi_0 (\dv A_u) \dv r - \ii \e \phi_0 A_u \dv \lambda  - \ii \e \dv \ml \frac{\du r \dv r}{1-\mu}\frac{Q \phi}{r}\mr. \\
\end{aligned}
\end{equation}
This can be rewritten as 
\begin{equation}\label{Area67}
\begin{aligned}
\du \ml \dv \tmfa \mr &= \du\ml \tp \dv \lambda \mr + G_1 + G_2 + \frac{\p (\lambda, \tp)}{\p(u,v)}
\end{aligned}
\end{equation}
with
\begin{equation}\label{Area77}
\begin{aligned}
G_1 &= -\ii \e \ml \phi_0 A_u + \frac{\du r }{1-\mu}\frac{Q\phi}{r} \mr \dv \lambda, \\
G_2 &=  -\ii \e \ml (\dv A_u) \tmfa + \phi_0 (\dv A_u)\dv r + A_u \dv \tmfa + \dv r \dv \ml \frac{\du r}{1-\mu}\frac{Q\phi}{r}\mr\mr.
\end{aligned}
\end{equation}
For each $(u,v) \in \mathcal{D}(0,v_0)$, we then have
\begin{equation}\label{Area78}
\begin{aligned}
\mlm \dv \tmfa - \tp \dv \lambda \mrm (u,v) \leq \mlm \dv \tmfa - \tp \dv \lambda \mrm(0,v) + \int^u_0 \mlm G_1 + G_2 + \frac{\p(\lambda,\tp)}{\p(u,v)}\mrm(u',v) \D u'.
\end{aligned}
\end{equation}
Using the fact $\lambda(0,v) = \frac{1}{2}$ and integrating \eqref{Area78} with respect to $v$ from $u$ to $v_0$, we obtain
\begin{equation}\label{Area79}
\begin{aligned}
\int_u^{v_0} \mlm \dv \tmfa - \tp \dv \lambda \mrm (u,v') \D v' \leq \int^{v_0}_u \mlm \dv \tmfa \mrm(0,v') \D v' + \int_u^{v_0}\int^u_0 \ml |G_1| + |G_2| + \mlm\frac{\p(\lambda,\tp)}{\p(u,v)}\mrm \mr(u',v') \D u' \D v'.
\end{aligned}
\end{equation}
By taking the supremum over $u \in (0,v_0)$ in \eqref{Area70} and using \eqref{Area39}, \eqref{Area29}, we arrive at 
\begin{equation}\label{Area80}
\begin{aligned}
\sup_{u \in (0,v_0)}TV_{\{u\}\times(u,v_0)}[\tmfa] \leq& \; D + P \sup_{u \in (0,v_0)}TV_{\{u\}\times(u,v_0)}[\lambda] + A[\lambda,\tp] \\
&+ \sup_{u\in(0,v_0)}\int_u^{v_0}\int^u_0 |G_1| (u',v') \D u' \D v' + \sup_{u\in(0,v_0)}\int_u^{v_0}\int^u_0 |G_2| (u',v') \D u' \D v'.
\end{aligned}
\end{equation}
Next, we turn to estimate the integrals containing $G_1$ and $G_2$ in \eqref{Area80}. Observe that, for each $(u,v) \in \mathcal{D}(0,v_0)$, the following inequality holds uniformly in $(u,v)$ 
\begin{equation}\label{Area85}
\mlm \phi_0 A_u + \frac{\du r}{1- \mu}\frac{Q\phi}{r}\mrm(u,v) \lesssim D r(u,v) \lesssim D.
\end{equation} 
This implies that for every $u \in (0,v_0)$, we have
\begin{equation}\label{Area86}
\begin{aligned}
\int_u^{v_0}\int^u_0 |G_1| (u',v') \D u' \D v' &\lesssim D \int_0^{u}\int^{v_0}_u |\dv \lambda| (u',v') \D v' \D u' \lesssim D \sup_{u \in (0,v_0)}TV_{\{u\}\times(u,v_0)}[\lambda],  
\end{aligned}
\end{equation}
where we employ the fact $u \leq v_0 \leq 1$ and apply Fubini's theorem to interchange the order of integrations. 

To estimate the integrand $G_2$, we start by writing
\begin{equation}\label{Area87}
\begin{aligned}
\dv \ml \frac{\du r}{1-\mu}\frac{Q \phi}{r}\mr &= \frac{\dv \du r}{1-\mu}\frac{Q\phi}{r} + \frac{\du r}{1-\mu}\frac{(\dv Q)\phi}{r} + \frac{\du r}{1-\mu}\frac{Q \dv \phi}{r} - \frac{\du r}{1-\mu}\frac{Q\phi}{r^2}\dv r + \frac{\du r}{(1-\mu)^2}\frac{Q\phi}{r}\dv \mu
\end{aligned}
\end{equation}
and proceed to estimate each of these terms. For the first term, by \eqref{Setup20}, we have
\begin{equation}\label{Area88}
\begin{aligned}
\mlm \frac{\dv \du r}{1-\mu}\frac{Q\phi}{r} \mrm(u,v) &= \mlm \frac{(\dv r)(\du r)}{(1-\mu)^2}\frac{Q\phi}{r^2}\ml \mu - \frac{Q^2}{r^2}\mr \mrm(u,v) \lesssim D,
\end{aligned}
\end{equation}
where we have used \eqref{Area29}, \eqref{Area81}, \eqref{Area82}, \eqref{Area83}, \eqref{Area84} and the fact that $|\mu - Q^2/r^2| = \mu - Q^2/r^2 \leq \mu \leq 1/3$. In the remaining parts of this section, we will not explicitly mention that these are used unless we really need. In a similar fashion, for the second term in \eqref{Area87}, by \eqref{Setup19}, we obtain
\begin{equation}\label{Area89}
\begin{aligned}
\mlm \frac{\du r}{1-\mu}\frac{(\dv Q)\phi}{r}\mrm(u,v) = \mlm \frac{\du r}{1-\mu}\frac{\phi}{r}(-4\pi \e r^2\im(\phi^\dagger \dv \phi))\mrm(u,v) \lesssim D.
\end{aligned}
\end{equation}
The third term in \eqref{Area87} can be estimated directly as
\begin{equation}\label{Area90}
\begin{aligned}
\mlm \frac{\du r}{1-\mu}\frac{Q \dv\phi}{r}\mrm(u,v) \lesssim D^2 \lesssim D.
\end{aligned}
\end{equation}
The fourth term in \eqref{Area87} obeys
\begin{equation}\label{Area91}
\begin{aligned}
\mlm \frac{\du r}{1-\mu}\frac{Q \phi}{r^2}\dv r\mrm(u,v) \lesssim D.
\end{aligned}
\end{equation}
For the fifth term in \eqref{Area87}, we appeal to \eqref{Setup27} and obtain
\begin{equation}\label{Area92}
\begin{aligned}
\mlm \frac{\du r}{(1-\mu)^2}\frac{Q \phi}{r}\dv \mu\mrm(u,v) &= \mlm \frac{\du r}{(1-\mu)^2}\frac{Q \phi}{r}\ml \frac{4\pi r(1-\mu)|\dv \phi|^2}{\dv r} - \frac{\dv r}{r}\ml \mu - \frac{Q^2}{r^2}\mr \mr\mrm(u,v)  \lesssim D^3 + D \lesssim D.
\end{aligned}
\end{equation}
Plugging \eqref{Area88}, \eqref{Area89}, \eqref{Area90}, \eqref{Area91}, \eqref{Area92} into \eqref{Area87}, we thus deduce that
\begin{equation}\label{Area93}
\mlm \dv \ml \frac{\du r}{1-\mu}\frac{Q \phi}{r}\mr\mrm (u,v) \lesssim D.
\end{equation}
For the other terms of $G_2$ in \eqref{Area77}, by \eqref{Setup30} and $A \lesssim D$, we get
\begin{equation}\label{Area94}
\mlm (\dv A_u)\tmfa\mrm(u,v) = \mlm \frac{2Q\du r \dv r}{r^2(1-\mu)}\tmfa\mrm(u,v) \lesssim D^2 \lesssim D.
\end{equation} 
Similarly, we deduce that
\begin{equation}\label{Area95}
\mlm \phi_0(\dv A_u)\dv r\mrm(u,v) = \mlm \frac{2Q\du r \dv r}{r^2(1-\mu)}\phi_0 \dv r\mrm(u,v) \lesssim D.
\end{equation}
For the term $A_u \dv \tmfa$, we first observe from \eqref{Area55} that
\begin{equation}\label{Area96}
|A_u|(u,v) \leq 96v_0 \e (|\phi_0| + 12)D.
\end{equation}
By modifying our choice of $v_0$ from \eqref{Area64} to 
\begin{equation}\label{Area97}
v_0 = \min\left\{\oh, \frac{\sqrt{2}}{C_1 \ml 0, 1; \frac{1}{2}\mr}, \frac{2}{9\e}, \frac{1}{\sqrt{384\e^2|\phi_0|\ml |\phi_0| + 12 \mr}}, \frac{1}{\sqrt{1536}\e\sqrt{|\phi_0| + 12}}, \frac{3}{128\e(|\phi_0| + 12)}\right\},
\end{equation}
we obtain
\begin{equation}\label{Area98}
|A_u|(u,v) \leq \frac{9}{4}D \leq \frac{9}{4},
\end{equation}
which in turn implies that
\begin{equation}\label{Area99}
\mlm A_u \dv \tmfa \mrm(u,v) \leq  \frac{9}{4}\mlm \dv \tmfa \mrm(u,v).
\end{equation}
Consequently, we prove
\begin{equation}\label{Area100}
\begin{aligned}
\int^{v_0}_u \int^u_0 |A_u \dv \tmfa|(u',v') \D u' \D v' &= \int^u_0  \int^{v_0}_u |A_u \dv \tmfa|(u',v') \D v' \D u' \leq \frac{1}{2\e} \sup_{u \in (0,v_0)}TV_{\{u\} \times (u,v_0) }[\tmfa].
\end{aligned}
\end{equation}
Combining \eqref{Area93}, \eqref{Area94}, \eqref{Area95}, \eqref{Area100}, together with \eqref{Area77}, we thus arrive at 
\begin{equation}\label{Area101}
\int_u^{v_0}\int^u_0 |G_2| (u',v') \D u' \D v' - \frac{1}{2}\sup_{u \in (0,v_0)}TV_{\{u\} \times (u,v_0) }[\tmfa] \lesssim  \int_u^{v_0}\int^u_0 D \D u' \D v' \lesssim D.
\end{equation}
Since the estimates in \eqref{Area86} and \eqref{Area101} are uniform in $u \in (0,v_0)$, taking supremum over $u \in (0,v_0)$ and substituting them into \eqref{Area80}, it yields
\begin{equation}\label{Area103}
\begin{aligned}
\frac{1}{2}\sup_{u \in (0,v_0)}TV_{\{u\}\times(u,v_0)}[\tmfa] \lesssim & \; D\ml 1 + \sup_{u \in (0,v_0)}TV_{\{u\}\times(u,v_0)}[\lambda]\mr  + A[\lambda,\tp]. \\
\end{aligned}
\end{equation}
{\color{black}Note that in \eqref{Area103}, we are allowed to shift the term $\frac{1}{2}\sup_{u \in (0,v_0)}TV_{\{u\}\times(u,v_0)}[\tmfa]$ in the inequality as it is a finite term for a $C^1$ solution.}

\vspace{5mm}

Next, we move to derive estimates for $\tmfb$. In view of the definition of $\tmfb$ in \eqref{tildemathfrakb}, we define a renormalized quantity
\begin{equation}\label{tildetildemathfrakb}
\ttmfb := \tmfb + \ii \e A_u r \phi = \tp \du r + r \du \tp + \ii \e A_u r \phi = \tp \du r + r D_u \phi.
\end{equation}
By \eqref{Setup73}, taking derivative of it with respect to $v$, it yields
\begin{equation}\label{Area104}
\begin{aligned}
\dv \ttmfb &= \tp \du \dv r + \ii \e \frac{\du r \dv r}{1 - \mu}\frac{Q\phi}{r}.
\end{aligned}
\end{equation}
Upon taking an additional derivative with respect to $u$, we arrive at 
\begin{equation}
\begin{aligned}
\du \dv \ttmfb &= \du \ml \tp \du \dv r \mr + \ii \e \du \ml \frac{\du r \dv r}{1 - \mu}\frac{Q\phi}{r} \mr.
\end{aligned}
\end{equation}
This can be rewritten as
\begin{equation}\label{Area105}
\dv \du \ttmfb = \dv \ml \tp \du \nu \mr + G_3 + G_4 - \frac{\p(\nu,\tp)}{\p(u,v)}
\end{equation}
with
\begin{equation}\label{Area106}
\begin{aligned}
G_3 = \ii \e \du \nu \ml \frac{\dv r}{1-\mu}\frac{Q\phi}{r}\mr \quad \text{ and } \quad G_4 = \ii \e \du r \du \ml \frac{\dv r}{1-\mu}\frac{Q\phi}{r} \mr. 
\end{aligned}
\end{equation}
For each $(u,v) \in \mathcal{D}(0,v_0)$, by integrating along the future-directed outgoing null curve originating from $\Gamma$, we obtain
\begin{equation}\label{Area107}
\begin{aligned}
\mlm \du \ttmfb - \tp \du \nu\mrm(u,v) &\leq \mlm \du \ttmfb - \tp \du \nu\mrm_{\Gamma}(u) + \int^v_u \mlm G_3 + G_4 - \frac{\p(\nu,\tp)}{\p(u,v)} \mrm (u,v') \D v',
\end{aligned}
\end{equation}
where we use \eqref{Setup74}, \eqref{Setup75}, \eqref{Setup38}. 
{\color{black}Applying \eqref{Setup20}, together with $0 \leq \mu - Q^2/r^2 \leq \mu$, \eqref{Area311}, \eqref{Area312}, \eqref{Area314}, we derive
\begin{equation}\label{Area316}
\begin{aligned}
|\dv \nu|(u,v) = \frac{\mu - Q^2/r^2}{1-\mu}\frac{\lambda |\nu|}{r} \lesssim r(u,v).
\end{aligned}
\end{equation}
Employing \eqref{Area104}, we also get
\begin{equation}\label{Area317}
\begin{aligned}
|\dv \ttmfb|(u,v) \lesssim D r(u,v) + Dr^2(u,v) \lesssim r(u,v).
\end{aligned}
\end{equation}
Henceforth, we deduce that
\begin{equation}\label{Area122}
\lim_{v \rightarrow u^+} \dv \nu(u,v) = \lim_{v \rightarrow u^+} \dv \ttmfb(u,v) = 0.
\end{equation}}
Since by \eqref{Setup37} and \eqref{Setup38}, we also have
\begin{equation}\label{Area116}
\ttmfb|_{\Gamma} = \tmfb|_{\Gamma} = -\tmfa|_{\Gamma} \quad \mbox{ and } \quad \lambda|_{\Gamma} = -\nu|_{\Gamma}.
\end{equation}
By the chain rule, we deduce
\begin{equation}\label{Area120}
\lim_{v \rightarrow u^+} \du \nu(u,v) = \lim_{v \rightarrow u^+} \frac{\D \nu|_{\Gamma}(u)}{\D u} = -\frac{\D \lambda|_{\Gamma}(u)}{\D u}
\end{equation}
and
\begin{equation}\label{Area121}
\lim_{v \rightarrow u^+} \du \tmfb(u,v) = \lim_{v \rightarrow u^+} \frac{\D \tmfb|_{\Gamma}(u)}{\D u} = -\frac{\D \tmfa|_{\Gamma}(u)}{\D u}.
\end{equation}
In addition, \eqref{Area116} also implies that
\begin{equation}\label{Area118}
\begin{aligned}
TV_{(0,v_0)}[\ttmfb|_\Gamma] = TV_{(0,v_0)}[\tmfa|_\Gamma], \quad TV_{(0,v_0)}[\nu|_\Gamma] = TV_{(0,v_0)}[\lambda|_\Gamma].
\end{aligned}
\end{equation}
Hence, by integrating \eqref{Area107} along a future-directed incoming null curve originating from $C_0^+$ and applying \eqref{Area122} - \eqref{Area113}, we obtain
\begin{equation}\label{Area108a}
\begin{aligned}
\int^v_0 \mlm \du \ttmfb - \tp \du \nu\mrm(u',v) \D u' \leq&\; TV_{(0,v_0)}[\ttmfb|_{\Gamma}] + P \cdot TV_{(0,v_0)}[\nu|_{\Gamma}] + A[\nu,\tp] \\
&\; + \int^v_0 \int^v_{u'} |G_3|(u',v') \D v' \D u' + \int^v_0 \int^v_{u'} |G_4|(u',v') \D v' \D u'.
\end{aligned}
\end{equation}
We proceed to estimate the integrals containing $G_3$ and $G_4$ in \eqref{Area108a}. For the former, for each $(u,v) \in \mathcal{D}(0,v_0)$, we have
\begin{equation}\label{Area109}
|G_3|(u,v) \lesssim D r(u,v) |\du \nu|(u,v) \lesssim D|\du \nu|(u,v)
\end{equation} and hence
\begin{equation}\label{Area110}
\begin{aligned}
\int^v_0 \int^v_{u'} |G_3|(u',v') \D v' \D u' &= \int^v_0 \int^{v'}_{0} |G_3|(u',v') \D u' \D v' \lesssim D \cdot \sup_{v \in (0,v_0)}TV_{(0,v) \times \{v\}}[\nu].
\end{aligned}
\end{equation}
For the integrand $G_4$, similar to equation \eqref{Area87}, we now have
\begin{equation}\label{Area111}
\du \ml \frac{\dv r}{1-\mu}\frac{Q \phi}{r}\mr = \frac{\dv \du r}{1-\mu}\frac{Q\phi}{r} + \frac{\dv r}{1-\mu}\frac{(\du Q)\phi}{r} + \frac{\dv r}{1-\mu}\frac{Q \du \phi}{r} - \frac{\dv r}{1-\mu}\frac{Q\phi}{r^2}\du r + \frac{\dv r}{(1-\mu)^2}\frac{Q\phi}{r}\du \mu.
\end{equation}
Notice that each of the terms above is analogous to the corresponding ones in \eqref{Area87}. Their estimates are exactly symmetric (with the use of \eqref{Setup18} and \eqref{Setup26} instead of \eqref{Setup19} and \eqref{Setup27} for the second and the fifth terms). Hence, by utilizing \eqref{Area114}, we then deduce that
\begin{equation}\label{Area112}
\mlm \du \ml \frac{\dv r}{1-\mu}\frac{Q \phi}{r}\mr \mrm (u,v)  \lesssim D,
\end{equation}
which implies that
\begin{equation}\label{Area113}
\int^v_0 \int^v_{u'} |G_4|(u',v') \D v' \D u' \lesssim D.
\end{equation}
Plugging \eqref{Area110} and \eqref{Area113} into \eqref{Area108a}, we hence derive
\begin{equation}\label{Area108}
\begin{aligned}
\int^v_0 \mlm \du \ttmfb - \tp \du \nu\mrm(u',v) \D u' 
\lesssim & \; TV_{(0,v_0)}[\tmfa|_\Gamma] + P \cdot TV_{(0,v_0)}[\lambda|_\Gamma] + A[\nu,\tp] + D \cdot \sup_{v \in (0,v_0)}TV_{(0,v) \times \{v\}}[\nu] + D.\\
\end{aligned}
\end{equation}
This implies that 
\begin{equation}\label{Area318}
\int^v_0 \mlm \du \ttmfb  \mrm (u',v) \D u' \lesssim TV_{(0,v_0)}[\tmfa|_\Gamma] + P \cdot TV_{(0,v_0)}[\lambda|_\Gamma] + A[\nu,\tp] + D \cdot \sup_{v \in (0,v_0)}TV_{(0,v) \times \{v\}}[\nu] + D.
\end{equation}
Taking supremum over $v \in (0,v_0)$, we then conclude that
\begin{equation}\label{Area115}
\begin{aligned}
\sup_{v \in (0,v_0)}TV_{(0,v) \times \{v\}}[\ttmfb] \lesssim & \; TV_{(0,v_0)}[\tmfa|_\Gamma] + D \ml 1 + TV_{(0,v_0)}[\lambda|_{\Gamma}] + \sup_{v \in (0,v_0)} TV_{(0,v) \times \{v\}}[\nu]\mr  + A[\nu,\tp].
\end{aligned}
\end{equation}

By scrutinizing the terms in the upper bound in \eqref{Area115}, it necessitates estimating $TV_{(0,v_0)}[\tmfa|_\Gamma]$, $TV_{(0,v_0)}[\lambda|_{\Gamma}]$ and $\sup_{v \in (0,v_0)} TV_{(0,v) \times \{v\}}[\nu]$. We now estimate each of them. Starting from $TV_{(0,v_0)}[\tmfa|_\Gamma]$, for a fixed $u \in (0,v_0)$, we integrate \eqref{Area67} along the future-directed incoming null curve $v = u$ starting from $C_0^+$ to $(u,u) \in \Gamma$ and obtain
\begin{equation}\label{Area119}
\begin{aligned}
\dv \tmfa|_{\Gamma}(u) =&\; \tp|_{\Gamma}(u) \dv \lambda|_{\Gamma}(u) + (\dv \tmfa)(0,u) - \ml \tp \dv \lambda \mr (0,u) \\
&+ \int^u_0 G_1(u',u) + G_2(u',u) + \frac{\p(\lambda,\tp)}{\p(u,v)}(u',u) \D u'.
\end{aligned}
\end{equation}
Using similar estimates as in \eqref{Area316} and \eqref{Area317}, it is not difficult to see that we have 
\begin{equation}\label{Area123}
\lim_{v \rightarrow u^+} \du \lambda(u,v) = \lim_{v \rightarrow u^+} \du \tmfa(u,v) = 0.
\end{equation}
Together with the fact that $\lambda(0,v) = \frac{1}{2}$, this implies that
\begin{equation}\label{Area124}
\begin{aligned}
\frac{\D}{\D u} \tmfa|_{\Gamma}(u) =&\; \tp|_{\Gamma}(u) \frac{\D}{\D u} \lambda|_{\Gamma}(u) + (\dv \tmfa)(0,u)  + \int^u_0 G_1(u',u) + G_2(u',u) + \frac{\p(\lambda,\tp)}{\p(u,v)}(u',u) \D u'.
\end{aligned}
\end{equation}
Integrating \eqref{Area124} with variable $u \in (0,v_0)$, we deduce
\begin{equation}\label{Area125}
\begin{aligned}
TV_{(0,v_0)}[\tmfa|_{\Gamma}] &\leq  P \cdot TV_{(0,v_0)}[\lambda|_{\Gamma}] + D  + A[\lambda,\tp] + \int_0^{v_0}\int^v_0 |G_1|(u',v)  + |G_2|(u',v) \D u' \D v \\
&= P \cdot TV_{(0,v_0)}[\lambda|_{\Gamma}] + D  + A[\lambda,\tp] + \int_0^{v_0}\int^{v_0}_{u'} |G_1|(u',v)  + |G_2|(u',v) \D v \D u'.
\end{aligned}
\end{equation}
For the integral with $|G_1|$, analogous to \eqref{Area86}, we deduce
\begin{equation}\label{Area126}
\begin{aligned}
\int^{v_0}_0 \int^{v_0}_{u'} |G_1| (u',v) \D v \D u' &\lesssim D \int^{v_0}_0 \int^{v_0}_{u'} |\dv \lambda| (u',v) \D v \D u'\lesssim D \sup_{u \in (0,v_0)}TV_{\{u\}\times(u,v_0)}[\lambda].  \\
\end{aligned}
\end{equation}
To estimate the integral with $|G_2|$, in a similar fashion as for \eqref{Area100}, we obtain
\begin{equation}\label{Area127}
\begin{aligned}\int^{v_0}_0  \int^{v_0}_{u'} |A_u \dv \tmfa|(u',v) \D v \D u' \leq \frac{1}{2\e} \sup_{u \in (0,v_0)}TV_{\{u\} \times (u,v_0) }[\tmfa].
\end{aligned}
\end{equation}
Consequently, by utilizing identical pointwise estimates \eqref{Area93}, \eqref{Area94}, \eqref{Area95}, we derive the below estimate analogous to \eqref{Area101} 
\begin{equation}\label{Area128}
\int^{v_0}_0 \int^{v_0}_{u'} |G_2| (u',v) \D v \D u' - \frac{1}{2}\sup_{u \in (0,v_0)}TV_{\{u\} \times (u,v_0) }[\tmfa] \lesssim D.
\end{equation}
Applying \eqref{Area126} and \eqref{Area128} to \eqref{Area125}, we then conclude that
\begin{equation}\label{Area129} 
TV_{(0,v_0)}[\tmfa|_{\Gamma}] 
\lesssim D\ml 1 + TV_{(0,v_0)}[\lambda|_{\Gamma}] + \sup_{u\in(0,v_0)}TV_{\{u\}\times(u,v_0)}[\lambda]\mr + A[\lambda,\tp].
\end{equation}

Observe that the upper bound in \eqref{Area129} involves the term $\sup_{v\in(0,v_0)}TV_{\{u\}\times(u,v_0)}[\lambda]$. We now move to bound it. Recall that from \eqref{Setup20} we have
\begin{equation}\label{Area130}
\du \log \lambda = \frac{\mu - Q^2/r^2}{1-\mu}\frac{\du r}{r}.
\end{equation}
Upon taking an additional derivative with respect to $v$ and using \eqref{Setup20}, \eqref{Setup27}, \eqref{Setup29}, we arrive at 
\begin{equation}\label{Area131}
\begin{aligned}
\dv \du \log \lambda =&\; \frac{\dv\ml \mu -Q^2/r^2\mr}{1-\mu}\frac{\du r}{r} + (\dv \mu) \frac{\mu - Q^2/r^2}{(1-\mu)^2}\frac{\du r}{r} + \frac{\mu - Q^2/r^2}{1-\mu}\frac{\du \dv r}{r} - \frac{\mu - Q^2/r^2}{1-\mu}\frac{(\du r)(\dv r)}{r^2} \\
=&\; \frac{\dv \mu}{1-\mu}\ml 1 + \frac{\mu-Q^2/r^2}{1-\mu}\mr\ml \frac{\du r}{r}\mr - \frac{\dv \ml \frac{Q^2}{r^2}\mr}{1-\mu}\frac{\du r}{r} + \frac{\mu - Q^2/r^2}{1-\mu}\ml \frac{\mu - Q^2/r^2}{1-\mu} - 1\mr\frac{(\du r)(\dv r)}{r^2} \\
=&\; \ml \frac{4\pi r|\dv \phi|^2}{\dv r} - \frac{\dv r}{r}\frac{\mu - Q^2/r^2}{1-\mu}\mr\ml \frac{1-Q^2/r^2}{1-\mu}\mr\ml \frac{\du r}{r}\mr + \ml \frac{8\pi \e Q \im(\phi^{\dagger}\dv \phi) + \frac{2Q^2 \dv r}{r^3}  }{1-\mu}\mr \ml \frac{\du r}{r}\mr \\
&+ \frac{\mu - Q^2/r^2}{1-\mu}\ml \frac{\mu - Q^2/r^2}{1-\mu} - 1\mr\frac{(\du r)(\dv r)}{r^2} \\
=&\; \ml \frac{4\pi r|\dv \phi|^2}{\dv r} \mr \ml \frac{1-Q^2/r^2}{1-\mu}\mr\ml \frac{\du r}{r}\mr + \ml \frac{8\pi \e Q \im(\phi^{\dagger}\dv \phi) + \frac{2Q^2 \dv r}{r^3}  }{1-\mu}\mr \ml \frac{\du r}{r}\mr \\
&-2 \ml \frac{\mu-Q^2/r^2}{1-\mu}\mr \ml \frac{\du r \dv r}{r^2}\mr.
\end{aligned}
\end{equation}
We can also rewrite \eqref{Setup20} as
\begin{equation}\label{Area132}
\dv \log (-\nu) = \frac{\mu - Q^2/r^2}{1-\mu}\frac{\dv r}{r}.
\end{equation}
By an analogous computation, with the help of \eqref{Setup26} and \eqref{Setup29.1}, we obtain
\begin{equation}\label{Area133}
\begin{aligned}
\du \dv \log (-\nu) =&\; \ml \frac{4\pi r|D_u \phi|^2}{\du r} \mr \ml \frac{1-Q^2/r^2}{1-\mu}\mr\ml \frac{\dv r}{r}\mr + \ml \frac{-8\pi \e Q \im(\phi^{\dagger}D_u \phi) + \frac{2Q^2 \du r}{r^3}  }{1-\mu}\mr \ml \frac{\dv r}{r}\mr \\
&-2 \ml \frac{\mu-Q^2/r^2}{1-\mu}\mr \ml \frac{\du r \dv r}{r^2}\mr.
\end{aligned}
\end{equation}
Observe that \eqref{Area131} and \eqref{Area133} together imply
\begin{equation}\label{Area134}
\begin{aligned}
\dv \du \log \lambda - \du \dv \log (-\nu) =&\; \frac{4\pi}{r^2}\ml \frac{1-Q^2/r^2}{1-\mu}\mr\ml \frac{(r|\dv \phi|)^2(\du r)}{(\dv r)} - \frac{(r|D_u \phi|)^2(\dv r)}{\du r}\mr \\
&+ \ml \frac{8\pi \e Q}{1-\mu} \mr \ml \frac{1}{1-\mu}\mr\ml (\du r)  \im(\phi^{\dagger}\dv \phi ) + (\dv r)\im(\phi^{\dagger}D_u \phi)\mr.
\end{aligned}
\end{equation}
We then exploit the symmetry in \eqref{Area134}. Notice that $\dv \du \log \lambda$ and $\du \dv \log (-\nu)$ have alternative expressions given below
\begin{equation}\label{Area135}
\begin{aligned}
\dv \du \log \lambda =&\; \du\ml \frac{\mu - Q^2/r^2}{1-\mu}\frac{\dv r}{r}\mr + \frac{4\pi}{r^2}\ml \frac{1-Q^2/r^2}{1-\mu}\mr\ml \frac{(r|\dv \phi|)^2(\du r)}{(\dv r)} - \frac{(r|D_u \phi|)^2(\dv r)}{\du r}\mr\\
&+ \ml \frac{8\pi \e Q}{1-\mu} \mr \ml \frac{1}{1-\mu}\mr\ml (\du r)  \im(\phi^{\dagger}\dv \phi ) + (\dv r)\im(\phi^{\dagger}D_u \phi)\mr
\end{aligned}
\end{equation}
and
\begin{equation}\label{Area136}
\begin{aligned}
\du \dv \log(-\nu) =&\; \dv\ml \frac{\mu - Q^2/r^2}{1-\mu}\frac{\du r}{r}\mr - \frac{4\pi}{r^2}\ml \frac{1-Q^2/r^2}{1-\mu}\mr\ml \frac{(r|\dv \phi|)^2(\du r)}{(\dv r)} - \frac{(r|D_u \phi|)^2(\dv r)}{\du r}\mr\\
&- \ml \frac{8\pi \e Q}{1-\mu} \mr \ml \frac{1}{1-\mu}\mr\ml (\du r)  \im(\phi^{\dagger}\dv \phi ) + (\dv r)\im(\phi^{\dagger}D_u \phi)\mr.
\end{aligned}
\end{equation}
Next, we define
\begin{equation}\label{Area137}
\tf(u,v) := 4\pi\re\ml \frac{r (\dv \tp)^\dagger}{r}\ml \frac{\tmfa}{\lambda} - \frac{\ttmfb}{\nu}\mr\mr(u,v)
\end{equation}
and
\begin{equation}\label{Area138}
\tfs(u,v) := 4\pi\re\ml \frac{r (\du \tp)^{\dagger}}{r}\ml \frac{\tmfa}{\lambda} - \frac{\ttmfb}{\nu}\mr \mr(u,v).
\end{equation}
To compute the partial derivative of $f$ with respect to $u$, we first observe from \eqref{Setup72} that
\begin{equation}\label{Area139}
\begin{aligned}
\du(r\dv \phi) = -(\dv r)(\du \phi) + E_1 \quad \text{ and } \quad \dv(r\du \phi) = -(\du r)(\dv \phi) + E_1
\end{aligned}
\end{equation}
with
\begin{equation}\label{Area140}
E_1 := -\ii\e A_u(r\dv \phi + \phi \dv r) - \ii \e \frac{Q\phi}{1-\mu}\frac{(\du r)(\dv r)}{r}
\end{equation}
and $|E_1|(u,v) \lesssim Dr(u,v)$. In addition, we observe that
\begin{equation}\label{Area141}
\frac{\tmfa}{\lambda} - \frac{\ttmfb}{\nu} = \frac{r \dv \tp}{\dv r} - \frac{r \du \tp}{\du r} + E_2 \quad \text{ with } \quad E_2 := -\ii \e \ml \frac{A_u r \phi}{\du r}\mr
\end{equation}
and $|E_2|(u,v) \lesssim Dr^2(u,v)$. We then proceed to take the partial derivative with respect to $u$ on both sides of \eqref{Area137}. Together with \eqref{Area139} and \eqref{Area141}, we obtain
\begin{equation}\label{Area143}
\begin{aligned}
\frac{\p \tf}{\p u} =&\; 4\pi \re \ml -\ml \frac{\du r \dv \tp}{r} + \frac{\dv r\du \tp }{r} - \frac{E_1}{r}\mr^{\dagger} \ml \frac{r\dv \tp}{\dv r} - \frac{r \du \tp}{\du r} + E_2\mr + \ml \frac{r \dv \tp}{r}\mr^{\dagger} \du\ml \frac{\tmfa}{\lambda} - \frac{\ttmfb}{\nu}\mr \mr\\
=&\; \frac{4\pi}{r^2}\ml \frac{\dv r}{\du r}(r|\du \tp|)^2 - \frac{\du r}{\dv r}(r|\dv \tp|)^2\mr + 4\pi\re \ml \ml \frac{r \dv \tp}{r}\mr^\dagger \du \ml \frac{\tmfa}{\lambda} - \frac{\ttmfb}{\nu}\mr \mr - 4\pi\re\ml (\du \tp)^{\dagger}(\dv \tp) - (\du \tp)(\dv \tp)^\dagger \mr \\
&+ 4\pi\re \ml \ml \frac{E_1^\dagger}{r} \mr \ml \frac{r \dv \tp}{\dv r} - \frac{r \du \tp}{\du r}\mr \mr - 4\pi \re \ml \ml \frac{E_2}{r^2} \mr \ml \du r(r\dv \tp) + \dv r (r \du \tp)\mr \mr + 4\pi\re \ml \frac{E_1^\dagger E_2}{r}\mr \\
=&\; \frac{4\pi}{r^2}\ml \frac{(r|\du \tp|)^2(\dv r)}{\du r} - \frac{ (r|\dv \tp|)^2(\du r)}{\dv r}\mr + 4\pi\re \ml \ml \frac{r \dv \tp}{r}\mr^\dagger \du \ml \frac{\tmfa}{\lambda} - \frac{\ttmfb}{\nu}\mr \mr + E_3
\end{aligned}
\end{equation}
with 
\begin{equation}\label{Area144}
E_3 := 4\pi\re \ml \ml \frac{E_1^\dagger}{r} \mr \ml \frac{r \dv \tp}{\dv r} - \frac{r \du \tp}{\du r}\mr \mr - 4\pi \re \ml \ml \frac{E_2}{r^2} \mr \ml \du r(r\dv \tp) + \dv r (r \du \tp)\mr \mr + 4\pi \re \ml \frac{E_1^\dagger E_2}{r}\mr.
\end{equation}
Note that we use the fact that $( \du \tp)^{\dagger}(\dv \tp) - (\du \tp)(\dv \tp)^\dagger$ is purely imaginary. And we have $|E_3|(u,v) \lesssim D^2$. Similarly, we compute the partial derivative of $\tfs$ with respect to $v$ and get
\begin{equation}\label{Area145}
\begin{aligned}
\frac{\p \tfs}{\p v} =&\; 4\pi\re \ml -\ml \frac{\du r \dv \tp}{r} + \frac{\dv r\du \tp }{r} - \frac{E_1}{r}\mr^{\dagger} \ml \frac{r\dv \tp}{\dv r} - \frac{r \du \tp}{\du r} + E_2\mr + \ml \frac{r \du \tp}{r}\mr^{\dagger} \dv\ml \frac{\tmfa}{\lambda} - \frac{\ttmfb}{\nu}\mr \mr\\
=&\; \frac{4\pi}{r^2}\ml \frac{(r|\du \tp|)^2(\dv r)}{\du r} - \frac{ (r|\dv \tp|)^2(\du r)}{\dv r}\mr + 4\pi\re \ml \ml \frac{r \du \tp}{r}\mr^\dagger \dv \ml \frac{\tmfa}{\lambda} - \frac{\ttmfb}{\nu}\mr \mr + E_3.
\end{aligned}
\end{equation}

\vspace{5mm}

The equations for $\frac{\p \tf}{\p u}$ and $\frac{\p \tfs}{\p v}$ will be used in the estimates for $\du \ml \frac{\tmfa}{\lambda} - \frac{\ttmfb}{\nu}\mr$ and $\dv \ml \frac{\tmfa}{\lambda} - \frac{\ttmfb}{\nu}\mr$. To evaluate them, we start by considering \eqref{Area139} and rewrite it as
\begin{equation}\label{Area146}
\begin{aligned}
\du \dv (r\tp) = \dv \du (r\tp) = \tp \du \dv r + E_1.
\end{aligned}
\end{equation}
By the definition of $\tmfa$ from \eqref{tildemathfraka} and employing \eqref{Setup20}, the above equation implies that
\begin{equation}\label{Area147}
\begin{aligned}
\du\ml \frac{\tmfa}{\lambda} \mr = \frac{1}{\dv r}\ml E_1 - \frac{r \dv \tp}{\dv r}\du \dv r\mr = - \frac{\mu - Q^2/r^2}{1-\mu}\ml \frac{r \dv \tp}{\dv r}\mr\ml \frac{\du r}{r}\mr + \frac{E_1}{\dv r}.
\end{aligned}
\end{equation}
Similarly, we also obtain
\begin{equation}\label{Area148}
\begin{aligned}
\dv\ml \frac{\tmfb}{\nu} \mr = - \frac{\mu - Q^2/r^2}{1-\mu}\ml \frac{r \du \tp}{\du r} \mr \ml \frac{\dv r}{r} \mr + \frac{E_1}{\du r}.
\end{aligned}
\end{equation}
By using the definition of $\ttmfb$ in \eqref{tildetildemathfrakb}, the above further simplifies to
\begin{equation}\label{Area149}
\begin{aligned}
\dv\ml \frac{\ttmfb}{\nu} \mr = - \frac{\mu - Q^2/r^2}{1-\mu}\ml \frac{r \du \tp}{\du r} \mr \ml \frac{\dv r}{r} \mr + \frac{E_4}{\du r}
\end{aligned}
\end{equation}
with
\begin{equation}\label{Area150}
E_4 := E_1 + (\du r)\dv \ml \frac{\ii \e A_u r \phi}{\du r}\mr.
\end{equation}
Observe that by applying \eqref{Setup20} and \eqref{Setup30}, we get
\begin{equation}\label{Area151}
\begin{aligned}
\dv \ml \frac{A_u r \phi}{\du r}\mr &= - A_u r \phi \ml\frac{\du \dv r}{(\du r)^2}\mr + \frac{1}{\du r}\ml (\dv A_u)(r\phi) + A_u \phi (\dv r) + A_u r(\dv \tp)\mr \\
&= - A_u \phi \ml\frac{(\mu - Q^2/r^2)(\dv r)}{(1-\mu)(\du r)}\mr + \frac{1}{\du r}\ml \ml \frac{2Q(\du r)(\dv r)}{r^2(1-\mu)}\mr(r\phi) + A_u \phi (\dv r) + A_u r(\dv \tp)\mr.
\end{aligned}
\end{equation}
Using the fact that $|E_1|(u,v) \lesssim Dr(u,v)$, we derive the below estimate
\begin{equation}\label{Area152}
\begin{aligned}
|E_4|(u,v) &\leq \mlm E_1\mrm(u,v) + \mlm (\du r)\dv \ml \frac{\ii \e A_u r \phi}{\du r}\mr\mrm \\
&\lesssim D r(u,v) + Dr(u,v) + Dr(u,v) + Dr(u,v) + Dr^2(u,v) \lesssim Dr(u,v).
\end{aligned}
\end{equation}
At the same time, by the expressions in \eqref{Area147} and \eqref{Area149}, we compute the following terms
\begin{equation}\label{Area153}
\begin{aligned}
(\dv \tp)^\dagger \du\ml \frac{\tmfa}{\lambda} - \frac{\ttmfb}{\nu} \mr =&\; (\dv \tp)^\dagger \du \ml \frac{\tmfa}{\lambda}\mr - (\du \tp)^\dagger \dv \ml \frac{\ttmfb}{\nu}\mr + \ml (\du \tp)^\dagger \dv \ml \frac{\ttmfb}{\nu}\mr - (\dv \tp)^\dagger \du \ml \frac{\ttmfb}{\nu}\mr \mr \\
=&\; \frac{1}{r^2}\ml \frac{\mu - Q^2/r^2}{1-\mu} \mr\ml \frac{(r|\du \tp|)^2(\dv r)}{\du r} - \frac{ (r|\dv \tp|)^2(\du r)}{\dv r}\mr + \frac{\p(\tp^\dagger,\ttmfb/\nu)}{\p(u,v)}\\
&+ \ml \frac{(\dv \tp)^\dagger}{\dv r} \mr E_1 - \ml \frac{(\du \tp)^\dagger}{\du r} \mr E_4,
\end{aligned}
\end{equation}
and
\begin{equation}\label{Area154}
\begin{aligned}
(\du \tp)^\dagger \dv\ml \frac{\tmfa}{\lambda} - \frac{\ttmfb}{\nu} \mr =&\; (\dv \tp)^\dagger \du \ml \frac{\tmfa}{\lambda}\mr -(\du \tp)^\dagger \dv \ml \frac{\ttmfb}{\nu} \mr + \ml (\du \tp)^\dagger \dv \ml \frac{\tmfa}{\lambda}\mr - (\dv \tp)^\dagger \du \ml \frac{\tmfa}{\lambda}\mr\mr \\
=&\; \frac{1}{r^2}\ml \frac{\mu - Q^2/r^2}{1-\mu} \mr\ml \frac{(r|\du \tp|)^2(\dv r)}{\du r} - \frac{ (r|\dv \tp|)^2(\du r)}{\dv r}\mr + \frac{\p(\tp^\dagger,\tmfa/\lambda)}{\p(u,v)}\\
&+ \ml \frac{(\dv \tp)^\dagger}{\dv r} \mr E_1 - \ml \frac{(\du \tp)^\dagger}{\du r} \mr E_4.
\end{aligned}
\end{equation}
By substituting \eqref{Area153} into the expression for $\du \tf$ in \eqref{Area143}, we obtain
\begin{equation}\label{Area155}
\begin{aligned}
\frac{\p \tf}{\p u} =&\; \frac{4\pi}{r^2}\ml 1 + \frac{\mu - Q^2/r^2}{1-\mu}\mr\ml \frac{(r|\du \tp|)^2(\dv r)}{\du r} - \frac{ (r|\dv \tp|)^2(\du r)}{\dv r}\mr + 4\pi \re \ml \frac{\p(\tp^\dagger,\ttmfb/\nu)}{\p(u,v)} \mr\\
&+ \re \ml 4\pi\ml \frac{(\dv \tp)^\dagger}{\dv r} \mr E_1 - 4\pi\ml \frac{(\du \tp)^\dagger}{\du r} \mr E_4\mr + E_3 \\
=&\; \frac{4\pi}{r^2}\ml \frac{1 - Q^2/r^2}{1-\mu}\mr\ml \frac{(r|\du \tp|)^2(\dv r)}{\du r} - \frac{ (r|\dv \tp|)^2(\du r)}{\dv r}\mr + 4\pi \re \ml \frac{\p(\tp^\dagger,\ttmfb/\nu)}{\p(u,v)} \mr + E_5
\end{aligned}
\end{equation}
in which
\begin{equation}\label{Area156}
E_5 := \ml 4\pi\ml \frac{(\dv \tp)^\dagger}{\dv r} \mr E_1 - 4\pi\ml \frac{(\du \tp)^\dagger}{\du r} \mr E_4 \mr + E_3.
\end{equation}
Note that since $|E_1| \lesssim Dr$, $|E_3| \lesssim D^2$, $|E_4| \lesssim Dr$, $|r\dv \tp| \lesssim D$, $|r\du \tp| \lesssim D$, we can deduce that $|E_5| \lesssim D^2$.\\

By a parallel argument, for $\dv \tfs$, we can utilize \eqref{Area154} to \eqref{Area145} to deduce that
\begin{equation}\label{Area157}
\begin{aligned}
\frac{\p \tfs}{\p v} 
=&\; \frac{4\pi}{r^2}\ml \frac{1 - Q^2/r^2}{1-\mu}\mr\ml \frac{(r|\du \tp|)^2(\dv r)}{\du r} - \frac{ (r|\dv \tp|)^2(\du r)}{\dv r}\mr + 4\pi \re \ml \frac{\p(\tp^\dagger,\tmfa/\lambda)}{\p(u,v)} \mr + E_5.
\end{aligned}
\end{equation}
Comparing terms in the expressions for $\frac{\p \tf}{\p u}$ and for $\frac{\p \tfs}{\p v}$ above with that for $\dv \du \log \lambda$ and for $\du \dv \log \ml - \nu \mr$ in \eqref{Area135} and \eqref{Area136}, it is instructional to consider following decomposition of the term $|D_u \phi|^2$:
\begin{equation}\label{Area158}
\begin{aligned}
|D_u \phi|^2 &= (\du \phi + \ii \e A_u \phi)(\du \phi - \ii \e A_u \phi)^\dagger = |\du \tp|^2 - 2 \e A_u \im(\phi \du \phi^\dagger) + \e^2 A_u^2 |\phi|^2.
\end{aligned}
\end{equation}
This implies that \eqref{Area135} and \eqref{Area136} can be simplified as
\begin{equation}\label{Area159}
\begin{aligned}
\dv \du \log \lambda =&\; \du\ml \frac{\mu - Q^2/r^2}{1-\mu}\frac{\dv r}{r}\mr + E_6 \\
&- \frac{4\pi}{r^2}\ml \frac{1-Q^2/r^2}{1-\mu}\mr\ml \frac{(r|\du \tp|)^2(\dv r)}{\du r} - \frac{ (r|\dv \tp|)^2(\du r)}{\dv r}\mr\\
\end{aligned}
\end{equation}
and
\begin{equation}\label{Area160}
\begin{aligned}
\du \dv \log(-\nu) =&\; \dv\ml \frac{\mu - Q^2/r^2}{1-\mu}\frac{\du r}{r}\mr + E_6 \\
&+ \frac{4\pi}{r^2}\ml \frac{1-Q^2/r^2}{1-\mu}\mr\ml \frac{(r|\du \tp|)^2(\dv r)}{\du r} - \frac{ (r|\dv \tp|)^2(\du r)}{\dv r}\mr
\end{aligned}
\end{equation}
with
\begin{equation}\label{Area161}
\begin{aligned}
E_6 :=&\; \ml \frac{8\pi \e Q}{1-\mu} \mr \ml \frac{1}{1-\mu}\mr\ml (\du r)  \im(\phi^{\dagger}\dv \phi ) + (\dv r)\im(\phi^{\dagger}D_u \phi)\mr \\
&+ 4\pi \ml \frac{1-Q^2/r^2}{1-\mu}\mr\ml \frac{\dv r}{\du r}\mr\ml 2\e A_u \im(\phi \du \phi^\dagger) - \e^2 A_u^2|\phi|^2 \mr
\end{aligned}
\end{equation}
satisfying $|E_6|(u,v) \lesssim D^2$. Substituting the expressions of $\frac{\p \tf}{\p u}$ and $\frac{\p \tfs}{\p v}$ in \eqref{Area155} and in \eqref{Area157} to \eqref{Area159} and to \eqref{Area160}, we obtain
\begin{equation}\label{Area162}
\begin{aligned}
\du\ml \dv \log \lambda \mr = \du \ml \frac{\mu - Q^2/r^2}{1-\mu}\frac{\dv r}{r} - \tf\mr + 4\pi \re \ml \frac{\p(\tp^\dagger,\ttmfb/\nu)}{\p(u,v)}\mr + E_5 + E_6
\end{aligned}
\end{equation}
and
\begin{equation}\label{Area163}
\begin{aligned}
\dv\ml \du \log (-\nu) \mr = \dv \ml \frac{\mu - Q^2/r^2}{1-\mu}\frac{\du r}{r} + \tfs\mr - 4\pi \re \ml \frac{\p(\tp^\dagger,\tmfa/\lambda)}{\p(u,v)}\mr - E_5 + E_6.
\end{aligned}
\end{equation}

\vspace{5mm}

With the above preparations, we start to estimate the total variations of $\log(\lambda), \log(-\nu)$ and $\log(\lambda|_{\Gamma})$.  Employing \eqref{Area162}, for a fixed $(u,v) \in \mathcal{D}(0,v_0)$, we first integrate along a future-directed incoming null curve. Together with the fact that $\dv \lambda (0,v) = 0$, we obtain
\begin{equation}\label{Area164}
\begin{aligned}
\dv \log \lambda (u,v) =&\; \ml \frac{\mu - Q^2/r^2}{1-\mu}\frac{\dv r}{r} - \tf \mr(u,v) - \ml \frac{\mu - Q^2/r^2}{1-\mu}\frac{\dv r}{r} - \tf \mr(0,v) \\
&+ 4\pi \int^u_0 \re \ml \frac{\p(\tp^\dagger,\ttmfb/\nu)}{\p(u,v)} \mr (u',v) \D u' + \int^u_0 (E_5 + E_6)(u',v) \D u'.
\end{aligned}
\end{equation}
Further integrating \eqref{Area164} for $v \in (u,v_0)$, we derive
\begin{equation}\label{Area165}
\begin{aligned}
& \int^{v_0}_u \mlm \dv \log \lambda - \frac{\mu - Q^2/r^2}{1-\mu}\frac{\dv r}{r} + \tf\mrm(u,v') \D v'\\
\lesssim &\int^{v_0}_u \mlm \frac{\mu - Q^2/r^2}{1-\mu}\frac{\dv r}{r} - \tf\mrm(0,v') \D v' + A[\tp^\dagger,\ttmfb/\nu] + D^2
\end{aligned}
\end{equation}
To obtain the total variation of $\lambda$, it suffices to estimate the integral in the upper bound of \eqref{Area165}. To estimate the term $\frac{\mu - Q^2/r^2}{1-\mu}\frac{\dv r}{r}$, we appeal to \eqref{Setup22} and see that
\begin{equation}\label{Area169}
\dv \log \ml \frac{-\du r}{1-\mu} \mr = \frac{4 \pi r|\dv \phi|^2}{\dv r}. 
\end{equation}
By \eqref{Area132}, we can rewrite this as
\begin{equation}\label{Area170}
\frac{\mu - Q^2/r^2}{1-\mu}\frac{\dv r}{r} - \dv \log \ml 1-\mu \mr = \frac{4 \pi r|\dv \phi|^2}{\dv r}.
\end{equation}
Integrating both sides of \eqref{Area170} with respect to $v'$, we obtain
\begin{equation}\label{Area171}
\begin{aligned}
\int^{v_0}_u \frac{\mu - Q^2/r^2}{1-\mu}\frac{(\dv r)}{r}(u,v') \D v' &= \log(1-\mu(u,v_0)) + 4\pi \int^{v_0}_u \frac{r|\dv \phi|^2}{\dv r}(u,v') \D v' \\
&\lesssim D \int^{v_0}_u |\dv \phi|(u,v') \D v' \lesssim DX
\end{aligned}
\end{equation}
and
\begin{equation}\label{Area172}
\begin{aligned}
\int^{v_0}_u \frac{\mu - Q^2/r^2}{1-\mu}\frac{(\dv r)}{r}(0,v') \D v' &= \log(1-\mu(u,v_0)) + 4\pi \int^{v_0}_u \frac{r|\dv \phi|^2}{\dv r}(0,v') \D v' \\
&\lesssim D \int^{v_0}_u |\dv \phi|(0,v') \D v' \lesssim D \cdot TV_{\{0\} \times (0,v_0)}[\tilde{\alpha}] \\
&\quad\quad\lesssim D \cdot \ml 2 \cdot TV_{\{0\} \times (0,v_0)}[\tmfa]\mr \lesssim D^2. \\
\end{aligned}
\end{equation}
In the above computations, we have used $\mu|_{\Gamma} = 0$, $\mu < 1$, $\frac{r|\dv \phi|}{\dv r} \lesssim D$, $\lambda(0,v) = \frac{1}{2}$ and Lemma \ref{AreaLemma1}. Furthermore, from the definition of $\tf$ in \eqref{Area137}, we have
\begin{equation}\label{Area173}
|\tf|(u,v) \lesssim D |\dv \phi|(u,v)
\end{equation}
and hence
\begin{equation}\label{Area174}
\begin{aligned}
\int^{v_0}_u |\tf|(u,v') \D v' &\lesssim D \int^{v_0}_u |\dv \phi|(u,v') \D v' \lesssim DX, \\
\int^{v_0}_u |\tf|(0,v') \D v' &\lesssim D \int^{v_0}_u |\dv \phi|(0,v') \D v' \lesssim D^2. \\
\end{aligned}
\end{equation}
Gathering the estimates in \eqref{Area171}, \eqref{Area172}, \eqref{Area174}, by \eqref{Area165}, we conclude that
\begin{equation}\label{Area175}
\sup_{u \in (0,v_0)} TV_{ \{u\} \times (u,v_0) } [\log \lambda] \lesssim D^2 + DX + A[\tp^\dagger, \ttmfb/\nu].
\end{equation}

{\color{black}Next, we move on to estimate the total variation of $\log(-\nu)$. Observe that from \eqref{Area311}, \eqref{Area521}, and \eqref{Area138} that we have
\begin{equation}\label{Area340}
|\tilde{f}^*|(u,v) \lesssim r(u,v).
\end{equation}
Together with \eqref{Area314}, this implies
\begin{equation}\label{Area341}
\tfs|_{\Gamma} = \frac{\mu}{r}|_{\Gamma} = 0.
\end{equation}}
Using a symmetric argument as for \eqref{Area163}, by \eqref{Area120}, we get
\begin{equation}\label{Area176}
\begin{aligned}
\du \log(-\nu)(u,v) =&\; \frac{\D }{\D u}\log(-\nu)(u) + \ml \frac{\mu - Q^2/r^2}{1-\mu}\frac{\du r}{r} + 4 \pi \tfs\mr (u,v)  \\
&- 4\pi \int^v_u \re \ml \frac{\p(\tp^\dagger,\tmfa/\lambda)}{\p(u,v)} \mr (u,v') \D v' + \int^v_u (E_6 - E_5)(u,v') \D v'
\end{aligned}
\end{equation}
and hence
\begin{equation}\label{Area177}
\begin{aligned}
&\int^v_0 \mlm \du \log(-\nu) - \frac{\mu - Q^2/r^2}{1-\mu}\frac{\du r}{r} - 4\pi \tfs \mrm(u',v) \D u' \\
\lesssim & \;  TV_{(0,v_0)}[\log(-\nu)|_{\Gamma}] +  A[\tp^\dagger,\tmfa/\lambda] + D^2.
\end{aligned}
\end{equation}
Similar to the arguments made above, we start from \eqref{Setup21} and rewrite \eqref{Area130} as 
\begin{equation}\label{Area167}
\begin{aligned}
\frac{\mu - Q^2/r^2}{1-\mu}\frac{\du r}{r} - \du \log(1-\mu) &= \frac{4\pi r|D_u \phi|^2}{\du r}.
\end{aligned}
\end{equation}
Integrating with respect to $u'$, recalling the definition of $Y$ in \eqref{Area15} and applying $|D_u \phi| = |\du \phi + \ii \e A_u r \phi| \lesssim |\du \phi| + D$, we deduce that 
\begin{equation}\label{Area168}
\begin{aligned}
\int^v_0 \frac{\mu - Q^2/r^2}{1-\mu}\frac{(-\du r)}{r}(u',v) \D u' &\lesssim D \int^v_u |D_u \phi|(u',v) \D u' \lesssim D \int^v_u |\du \tp|(u',v) \D u' + D^2 \lesssim DY + D^2.
\end{aligned}
\end{equation}
In addition, from the expression of $\tfs$ in \eqref{Area138}, we see that 
\begin{equation}\label{Area178}
|\tfs|(u,v)\lesssim D|\du \phi|(u,v)
\end{equation}
and thus obtain
\begin{equation}\label{Area179}
\begin{aligned}
\int^v_0 |\tfs|(u',v) \D u' &\lesssim D \cdot T.V_{(0,v) \times \{v\}}[\tp]  \lesssim D Y.
\end{aligned}
\end{equation}
By applying \eqref{Area116}, \eqref{Area168}, \eqref{Area179} to \eqref{Area177}, we hence derive
\begin{equation}\label{Area180}
\sup_{v\in(0,v_0)} TV_{(0,v) \times \{v\}}[\log |\nu|] \lesssim D^2 + DY + TV_{(0,v_0)}[\log\lambda|_{\Gamma}] + A[\tp^\dagger,\tmfa/\lambda].
\end{equation}

To estimate $TV_{(0,v_0)}[\log \lambda|_{\Gamma}]$, we appeal to \eqref{Area123}, \eqref{Area162}, \eqref{Area341}. Integrating along future-directed incoming null curve from $C_0^+$ to $(u,u) \in \Gamma$ yields
\begin{equation}\label{Area181}
\begin{aligned}
\frac{\D}{\D u}\log \ml \lambda|_{\Gamma}\mr (u) =&\; - \ml \frac{\mu - Q^2/r^2}{1-\mu}\frac{\dv r}{r} - 4\pi \tf\mr(0,u) \\
&+ 4\pi \int^u_0 \re \ml \frac{\p(\tp^\dagger,\ttmfb/\nu)}{\p(u,v)}\mr(u',u) \D u' + \int^u_0 (E_5 + E_6) (u',u) \D u'.
\end{aligned}
\end{equation}
By utilizing estimates analogous to \eqref{Area172} and \eqref{Area174}, we deduce that 
\begin{equation}\label{Area182}
TV_{(0,v_0)}[\log(\lambda|_{\Gamma})] \lesssim D^2 + A[\tp^\dagger,\ttmfb/\nu].
\end{equation}
\hspace{5pt}

As the next course of action, we proceed to establish relations between the different area terms that have appeared in the estimates above. Starting from $A[\tp^\dagger,\tmfa/\lambda]$, by \eqref{Area42}, we observe that
\begin{equation}\label{Area187}
\du \tmfa = \tp \du \dv r + E_7 \quad \text{ with } \quad E_7 := -\ii \e \ml A_u \dv(r\tp) + \phi_0 A_u \dv r + \frac{\du r \dv r }{1-\mu}\frac{Q \phi}{r}\mr
\end{equation}
and $|E_7|(u,v) \lesssim Dr(u,v)$. This then implies that
\begin{equation}\label{Area183}
\begin{aligned}
\frac{\p(\tp^\dagger,\tmfa/\lambda)}{\p(u,v)} &= \du \tp^\dagger \dv \ml \frac{\tmfa}{\lambda}\mr - \dv \tp^\dagger \du \ml \frac{\tmfa}{\lambda}\mr \\
&= \frac{(\du \tp)^\dagger(\dv \tmfa) - (\dv \tp)^\dagger(\tp \du \dv r + E_7 )}{\lambda} + \frac{(r \dv \tp) + \tp \lambda}{\lambda^2}\frac{\p(\lambda, \tp^\dagger)}{\p(u,v)} \\
&= \frac{(\du \tp)^\dagger}{\lambda}(\dv \tmfa - \tp \dv \lambda) + \frac{(r \dv \tp)}{\lambda^2}\frac{\p(\lambda, \tp^\dagger)}{\p(u,v)}- \frac{(\dv \tp)^\dagger E_7}{\lambda}. \\
\end{aligned}
\end{equation}
Furthermore, applying \eqref{Area79}, \eqref{Area86}, \eqref{Area101}, \eqref{Area103}, we get
\begin{equation}\label{Area184}
\mlm \dv \tmfa - \tp \dv \lambda\mrm(u,v) \leq F(v)
\end{equation}
with
\begin{equation}\label{Area185}
F(v) =  \mlm \dv \tmfa - \tp \dv \lambda \mrm(0,v) + \int^u_0 \mlm G_1 + G_2 + \frac{\p(\lambda,\tp)}{\p(u,v)}\mrm(u',v) \D u'
\end{equation}
and 
\begin{equation}\label{Area186}
\int^{v_0}_0 F(v) \D v \lesssim D\ml 1 + \sup_{u \in (0,v_0)}TV_{\{u\}\times(u,v_0)}[\lambda]\mr  + A[\lambda,\tp^\dagger].
\end{equation}
Note that in \eqref{Area186}, we have used $A[\lambda,\tp^\dagger] = A[\lambda,\tp]$. This follows from the fact that, for any complex number $z$, we have $|z| = |z^\dagger|$. Back to \eqref{Area184}, we then derive 
\begin{equation}\label{Area189}
\begin{aligned}
\int_{D(0,v_0)} \frac{|\du \tp|}{\lambda}|\dv \tmfa - \tp \dv \lambda|(u,v) \D u \D v &\lesssim \ml \sup_{v \in (0,v_0)}\int^v_0 |\du \tp|(u,v) \D u \mr \ml \int^{v_0}_0 F(v) \D v\mr \\
&\lesssim DY\ml 1 + \sup_{u \in (0,v_0)}TV_{\{u\}\times(u,v_0)}[\lambda]\mr  + Y\cdot A[\lambda,\tp^\dagger].
\end{aligned}
\end{equation}
Therefore, equation \eqref{Area183} allows us to deduce that
\begin{equation}\label{Area190}
A[\tp^\dagger, \tmfa/\lambda] \lesssim DY\ml 1 + \sup_{u \in (0,v_0)}TV_{\{u\}\times(u,v_0)}[\lambda]\mr + D^2 + (D+Y)\cdot A[\lambda,\tp^\dagger].
\end{equation}
We conduct a similar strategy to control $A[\tp^\dagger, \ttmfb/\nu]$ as follows. Observe that from \eqref{Area104}, we have
\begin{equation}\label{Area191}
\begin{aligned}
\dv \ttmfb &= \tp \du \dv r + E_8 \quad \text{ with } \quad 
E_8 := \ii \e \frac{\du r \dv r}{1-\mu}\frac{Q \phi}{r}
\end{aligned}
\end{equation}
and $|E_8|(u,v) \lesssim Dr(u,v)$. Together with \eqref{tildetildemathfrakb}, we compute $\frac{\p(\tp,\ttmfb/\nu)}{\p(u,v)}$ and obtain
\begin{equation}\label{Area193}
\begin{aligned}
\frac{\p(\tp,\ttmfb/\nu)}{\p(u,v)} &= (\du \tp)^\dagger \dv \ml\frac{\ttmfb}{\nu} \mr - (\dv \tp)^\dagger \du \ml \frac{\ttmfb}{\nu}\mr \\
&= \frac{(\du \tp)^\dagger (\tp \du \dv r + E_8) - (\dv \tp)^\dagger(\du \ttmfb) }{\nu} + \frac{(r \du \tp) + \tp \nu + \ii \e A_u r \phi}{\nu^2}\frac{\p(\nu,\tp^\dagger)}{\p(u,v)} \\
&= -\frac{(\dv \tp)^\dagger}{\nu}\ml \du \ttmfb - \tp \du \nu \mr + \frac{(r \du \tp) + \ii \e A_u r \phi}{\nu^2}\frac{\p(\nu,\tp^\dagger)}{\p(u,v)} + \frac{(\du \tp)^\dagger E_8}{\nu}.
\end{aligned}
\end{equation}
Employing \eqref{Area107}, \eqref{Area108}, \eqref{Area110}, \eqref{Area113}, we see that
\begin{equation}\label{Area194}
\mlm \du \ttmfb - \tp \du \nu \mrm(u,v) \leq F^*(u)
\end{equation}
with
\begin{equation}\label{Area195}
F^*(u) =  \mlm \du \ttmfb - \tp \du \nu\mrm_{\Gamma}(u) + \int^v_u \mlm G_3 + G_4 - \frac{\p(\nu,\tp)}{\p(u,v)} \mrm (u,v') \D v'
\end{equation}
and
\begin{equation}\label{Area196}
\int_0^{v_0} F^*(u) \D u \lesssim TV_{(0,v_0)}[\ttmfb|_{\Gamma}] + D\ml 1 +  TV_{(0,v_0)}[\lambda|_{\Gamma}] + \sup_{v \in (0,v_0)} TV_{(0,v) \times \{v\} }[\nu] \mr + A[\nu,\tp^\dagger].
\end{equation}
Note that in the expression of \eqref{Area195}, the definitions of $G_3$ and $G_4$ were given \eqref{Area106}. Consequently, we deduce that 
\begin{equation}\label{Area197}
\begin{aligned}
&\int_{D(0,v_0)} \frac{| \dv \tp|}{|\nu|}|\du \ttmfb - \tp \du \nu|(u,v) \D u \D v \lesssim \ml \sup_{u \in (0,v_0)}\int^{v_0}_u |\dv \tp|(u,v) \D v\mr \ml \int^{v_0}_0 F^*(u) \D u\mr \\
\lesssim&\; X \cdot TV_{(0,v_0)}[\ttmfb|_{\Gamma}] + DX\ml 1 +  TV_{(0,v_0)}[\lambda|_{\Gamma}] + \sup_{v \in (0,v_0)} TV_{(0,v) \times \{v\} }[\nu] \mr + X\cdot A[\nu,\tp^\dagger].
\end{aligned}
\end{equation}
Combining \eqref{Area193} and  \eqref{Area197}, we now arrive at the following estimate
\begin{equation}\label{Area198}
\begin{aligned}
A[\tp^\dagger, \ttmfb/\nu] \lesssim&\; X \cdot TV_{(0,v_0)}[\ttmfb|_{\Gamma}] + DX\ml 1 +  TV_{(0,v_0)}[\lambda|_{\Gamma}] + \sup_{v \in (0,v_0)} TV_{(0,v) \times \{v\} }[\nu] \mr \\
&+ D^2 + (D + X)\cdot A[\nu,\tp^\dagger].
\end{aligned}
\end{equation}

In view of \eqref{Area190} and \eqref{Area198}, we proceed to estimate $A[\lambda,\tp^\dagger]$ and $A[\nu,\tp^\dagger]$. For convenience, we denote
\begin{equation}\label{ttzeta}
\tilde{\theta} := r \dv \tp = r \dv \phi, \quad \quad 
\tilde{\zeta} := r \du \tp = r \du \phi, \quad \quad
\ttz := r D_u \phi.
\end{equation}
It is worth noting that, contrary to \eqref{ttzeta}, here $\ttz \neq r D_u \tp$ but rather $\ttz = \tz + \ii \e A_u r \phi.$

Inspired by the definition of $\tf$ in \eqref{Area137}, we also define
\begin{equation}\label{Area199}
\tg := 4 \pi \im \ml \frac{\ttheta^\dagger}{r}\ml\frac{\tmfa}{\lambda} - \frac{\ttmfb}{\nu} \mr \mr.
\end{equation}
By \eqref{Area141}, we have
\begin{equation}\label{Area200}
\tf + \ii \tg = 4\pi \ml \frac{\ttheta^\dagger}{r} \mr\ml  \frac{\ttheta}{\lambda} - \frac{\tz}{\nu} + E_2\mr
\end{equation}
with $E_2$ given in \eqref{Area141}. This further implies that
\begin{equation}\label{Area201}
-\frac{4\pi}{r^2}\frac{\tz \ttheta^\dagger}{\lambda \nu}(\nu \ttheta - \lambda \tz) = -\frac{\tf \tz}{r} - \ii \frac{\tz \tg}{r} + \frac{4\pi \tz \ttheta^\dagger}{r^2}E_2.
\end{equation}
Together with \eqref{Area130}, we now compute $\frac{1}{\lambda}\frac{\p(\lambda,\tp)}{\p(u,v)}$ as follows
\begin{equation}\label{Area202}
\begin{aligned}
\frac{1}{\lambda}\frac{\p(\lambda,\tp)}{\p(u,v)} =&\; \frac{\p \log \lambda}{\p u}\frac{\p \tp}{\p v} - \frac{\p \log \lambda}{\p v}\frac{\p \tp}{\p u} = \frac{\mu - Q^2/r^2}{1-\mu}\frac{\nu \ttheta}{r^2} - \dv \log \lambda \frac{\tz}{r} \\
=&\; \frac{1}{r^2}(\nu \ttheta - \lambda \tz)\ml \frac{\mu - Q^2/r^2}{1-\mu} + 4 \pi \frac{\ttheta^\dagger \tz}{\lambda \nu}\mr - \frac{\tz}{r}\ml \dv \log \lambda - \frac{\mu - Q^2/r^2}{1-\mu}\frac{\lambda}{r}\mr - \frac{4\pi}{r^2}\frac{\ttheta^\dagger \tz}{\lambda \nu}(\nu\ttheta - \lambda \tz) \\
=&\; \frac{1}{r^2}\ml \frac{ \nu \ttheta - \lambda \tz }{\lambda \nu}\mr\ml \frac{\mu - Q^2/r^2}{1-\mu}\lambda \nu + 4 \pi \ttheta^\dagger \ttz\mr - \frac{\tz}{r}\ml \dv \log \lambda - \frac{\mu - Q^2/r^2}{1-\mu}\frac{\lambda}{r} + \tf \mr \\
&- \ii \frac{\tz \tg}{r} + \frac{4\pi \tz \ttheta^\dagger}{r^2}E_2 - 4\pi \ii \e \frac{\nu \ttheta - \lambda \tz}{\lambda \nu}\frac{\ttheta^\dagger A_u \phi}{r}.  \\
\end{aligned}
\end{equation}
By recalling and re-writing \eqref{Area162}, we have
\begin{equation}\label{Area203}
\du \ml \dv \log \lambda - \frac{\mu - Q^2/r^2}{1-\mu}\frac{\lambda}{r} + \tf\mr =  4\pi \re \ml \frac{\p(\tp^\dagger,\ttmfb/\nu)}{\p(u,v)}\mr + E_5 + E_6,
\end{equation}
with $|E_5|(u,v) \lesssim D^2$ and $|E_6|(u,v) \lesssim D^2$. Henceforth, we obtain
\begin{equation}\label{Area204}
\mlm \dv \log \lambda - \frac{\mu - Q^2/r^2}{1-\mu}\frac{\lambda}{r} + \tf\mrm(u,v) \leq G(v)
\end{equation}
with
\begin{equation}\label{Area205}
G(v) := \mlm \dv \log \lambda - \frac{\mu - Q^2/r^2}{1-\mu}\frac{\lambda}{r} + \tf\mrm(0,v) + 4\pi \int^v_0 \mlm \frac{\p(\tp^\dagger,\ttmfb/\nu)}{\p(u,v)}\mrm (u',v) \D u' + \int_0^v |E_5| + |E_6| (u',v) \D u'.
\end{equation}
Together with \eqref{Area172} and \eqref{Area174}, we have
\begin{equation}\label{Area206}
\int_0^{v_0} G(v) \D v \lesssim D^2 + A[\tp^\dagger, \ttmfb/\nu].
\end{equation}
Next, we define
\begin{equation}\label{Area207}
\tr := r\ml \frac{\mu - Q^2/r^2}{1-\mu}\lambda \nu + 4 \pi \ttheta^\dagger \ttz \mr.
\end{equation}
Back to \eqref{Area202}, with $|E_2|(u,v) \lesssim Dr(u,v)^2$, we now deduce that
\begin{equation}\label{Area208a}
\mlm \frac{1}{\lambda}\frac{\p(\lambda,\tp)}{\p(u,v)} \mrm \lesssim D\frac{|\tr|}{r^3} + |\du \tp|G(v) + D \frac{|r^2\tg|}{r^3}  + D^3.
\end{equation}{\color{black}
Integrating on the domain $\mathcal{D}(0,v_0)$, we then have
\begin{equation}\label{Area208}
\begin{aligned}
\int_{\mathcal{D}(0,v_0)}\mlm \frac{1}{\lambda}\frac{\p(\lambda,\tp)}{\p(u,v)} \mrm(u,v) \D u \D v \lesssim&\;  D \int_{\mathcal{D}(0,v_0)}\frac{|\tr|}{r^3}(u,v) \D u \D v + \int_{\mathcal{D}(0,v_0)}|\du \tp|(u,v)G(v)\D u \D v \\
&+\; D \int_{\mathcal{D}(0,v_0)}\frac{|r^2\tg|}{r^3}(u,v) \D u \D v  + D^3.
\end{aligned}
\end{equation}}
By \eqref{Area15}, we note that
\begin{equation}\label{Area209}
\begin{aligned}
\int_{\mathcal{D}(0,v_0)} |\du \tp|(u,v) G(v) \D u \D v &\lesssim Y \int^{v_0}_0 G(v) \D v \lesssim Y (D^2 + A[\tp^\dagger,\ttmfb/\nu]).
\end{aligned}
\end{equation}
{\color{black}Meanwhile, by \eqref{Area84}, \eqref{Area83}, \eqref{Area310}, we have
\begin{equation}\label{Area320}
|\ttheta^\dagger|(u,v) \lesssim r(u,v), \quad |\ttz|(u,v) \lesssim r(u,v) + Dr^3(u,v) \lesssim r(u,v).
\end{equation}
Together with \eqref{Area314} and $0 \leq \mu - Q^2/r^2 \leq \mu$, we prove
\begin{equation}\label{Area321}
|\tr|(u,v) \lesssim r^3(u,v).
\end{equation}
Thus, we deduce that
\begin{equation}\label{Area322}
\frac{\tr}{r^2}|_{\Gamma} = 0.
\end{equation}
Similarly, one can also deduce that $|\tg|(u,v) \lesssim r(u,v)$ and hence it implies
\begin{equation}\label{Area323}
\tg|_{\Gamma} = 0.
\end{equation}}
Using \eqref{Area322} and conducting integration by parts, we obtain
\begin{equation}\label{Area210}
\begin{aligned}
\int_{\mathcal{D}(0,v_0)} \frac{|\tr|}{r^3}(u,v) \D u \D v &= 
\frac{1}{2}\int^{v_0}_0 \ml \int^v_0 |\tr/\nu| \du\ml \frac{1}{r^2}\mr(u,v) \D u \mr \D v \\
&= - \frac{1}{2}\int^{v_0}_0 \frac{|\tr/\nu|}{r^2}(0,v) \D v - \frac{1}{2}\int^{v_0}_0 \ml \int^v_0 \frac{1}{r^2}\du \mlm \frac{\tr}{\nu}\mrm(u,v) \D u\mr \D v \\
&\leq \frac{1}{2}\int_{\mathcal{D}(0,v_0)}\frac{1}{r^2}\mlm \du\ml \frac{\tr}{\nu}\mr \mrm (u,v) \D u \D v.
\end{aligned}
\end{equation}
In a similar fashion, by \eqref{Area323}, we also get
\begin{equation}\label{Area211}
\begin{aligned}
\int_{\mathcal{D}(0,v_0)} \frac{|r^2 \tg|}{r^3}(u,v) \D u \D v \leq \frac{1}{2}\int_{\mathcal{D}(0,v_0)}\frac{1}{r^2}\mlm \du\ml \frac{r^2\tg}{\nu}\mr \mrm (u,v) \D u \D v.
\end{aligned}
\end{equation}
It remains to compute $\du\ml \frac{\tr}{\nu}\mr$ and $\du \ml \frac{r^2 \tg}{\nu}\mr$. For the former, we have
\begin{equation}\label{Area212}
\begin{aligned}
\du \ml \frac{\tr}{\nu}\mr =&\; \du \ml \frac{r}{\nu} \ml \frac{\mu - Q^2/r^2}{1-\mu}\lambda \nu + 4\pi \ttheta^\dagger \ttz\mr\mr \\
=&\; \frac{\mu - Q^2/r^2}{1-\mu}\lambda \nu + 4\pi \ttheta^\dagger \ttz + r \du \ml \frac{\mu - Q^2/r^2}{1-\mu}\lambda + 4 \pi \frac{\ttheta^\dagger \ttz}{\nu} \mr \\
=&\; \frac{\mu - Q^2/r^2}{1-\mu}\lambda \nu + 4\pi \ttheta^\dagger \ttz + r\lambda \du \ml \frac{\mu - Q^2/r^2}{1-\mu}\mr + \frac{\mu-Q^2/r^2}{1-\mu}r \du \lambda \\
&+ 4\pi r \ml \frac{\ttz}{\nu}\mr (D_u \ttheta - \ii \e A_u\ttheta)^\dagger  + 4\pi r \ttheta^\dagger \du \ml \frac{\ttz}{\nu} \mr.
\end{aligned}
\end{equation}
Subsequently, observe that from \eqref{Setup26} and  \eqref{Setup29.1}, we can write the term $\du \ml \frac{\mu - Q^2/r^2}{1-\mu}\mr$ as 
\begin{equation}\label{Area214}
\begin{aligned}
\du \ml \frac{\mu - Q^2/r^2}{1-\mu}\mr =&\; \du \ml \frac{1 -  Q^2/r^2}{1-\mu}\mr = \frac{1-Q^2/r^2}{(1-\mu)^2}\du \mu + \frac{1}{1-\mu}\du \ml -\frac{Q^2}{r^2}\mr \\
&\quad=  \frac{1-Q^2/r^2}{(1-\mu)^2}\ml \frac{4\pi (1-\mu)|\ttz|^2}{r\nu} - \frac{\nu}{r}(\mu - Q^2/r^2)  \mr - \frac{1}{1-\mu}\ml 8\pi \e Q \im(\phi^\dagger D_u \phi) - \frac{2Q^2 \nu}{r^3}\mr.
\end{aligned}
\end{equation}
Furthermore, observe that from \eqref{Setup73}, we can express the term $D_u \ttheta$ as
\begin{equation}\label{Area215}
D_u \ttheta = D_u(r\dv \phi) = - \lambda \frac{\ttz}{r} - \ii \e \frac{Q \phi \lambda\nu}{r(1-\mu)}.
\end{equation}
In addition, from \eqref{tildetildemathfrakb}, we have
\begin{equation}\label{Area216}
\du \ml \frac{\ttmfb}{\nu}\mr = \du \ml \frac{\nu \tp + r D_u \phi}{\nu}\mr = \frac{\ttz}{r} - \ii \e A_u \tp + \du \ml \frac{\ttz}{\nu}\mr.
\end{equation}
Collecting \eqref{Setup20} and \eqref{Area214} - \eqref{Area216}, we can reorganize and simplify \eqref{Area212} as
\begin{equation}\label{Area217}
\begin{aligned}
\du\ml \frac{\tr}{\nu}\mr  =&\; \frac{\mu - Q^2/r^2}{1-\mu}\lambda \nu + 4\pi \ttheta^\dagger \ttz + \ml \frac{\mu-Q^2/r^2}{1-\mu} \mr^2 \lambda \nu \\
&+ r\lambda \ml \frac{1-Q^2/r^2}{(1-\mu)^2}\mr\ml \frac{4\pi (1-\mu)|\ttz|^2}{r\nu} - \frac{\nu}{r}(\mu - Q^2/r^2)  \mr - \frac{r\lambda}{1-\mu}\ml 8\pi \e Q \im(\phi^\dagger D_u \phi) - \frac{2Q^2 \nu}{r^3} \mr  \\
&+ 4\pi r \ml \frac{\ttz}{\nu}\mr \ml - \lambda \frac{\ttz}{r} - \ii \e \frac{Q \phi \lambda\nu}{r(1-\mu)} - \ii \e A_u\ttheta \mr^\dagger + 4\pi r \ttheta^\dagger\ml \du\ml \frac{\ttmfb}{\nu}\mr  - \frac{\ttz}{r} + \ii \e A_u \tp \mr \\
=&\; \frac{4 \pi \lambda(\mu - Q^2/r^2)|\ttz|^2}{\nu(1-\mu)} + 4\pi r \ttheta^\dagger \du \ml \frac{\ttmfb}{\nu}\mr + E_9
\end{aligned}
\end{equation}
with
\begin{equation}\label{Area218}
E_9 := -\frac{r\lambda}{1-\mu}\ml 8\pi \e Q \im(\tp^\dagger D_u \phi) - \frac{2Q^2\nu}{r^3}\mr + 4 \pi \ii \e r \ml \frac{\ttz}{\nu}\mr\ml \frac{Q\phi \lambda \nu}{r(1-\mu)} + A_u \ttheta\mr^\dagger + 4\pi \ii \e r \ttheta^\dagger A_u \tp.
\end{equation}
{\color{black}Note that in the process of arriving at the last line in \eqref{Area217}, numerous cancellations happen.}
On top of that, since $|\ttz| \lesssim D$, we then have $|E_9|(u,v) \lesssim D^2 r^2(u,v).$ Moreover, with the help of \eqref{Area149} and \eqref{Area158}, we can also write
\begin{equation}\label{Area219}
\begin{aligned}
\ttheta^\dagger \du \ml \frac{\ttmfb}{\nu}\mr =&\; r  \du \ml \frac{\ttmfb}{\nu}\mr \dv \tp^\dagger = r\ml \frac{\p(\ttmfb/\nu,\tp^\dagger)}{\p(u,v)} + \dv \ml \frac{\ttmfb}{\nu}\mr\du \tp^\dagger\mr \\
=&\; r\ml \frac{\p(\ttmfb/\nu,\tp^\dagger)}{\p(u,v)} + \frac{\tz^\dagger}{r} \ml -\frac{\mu - Q^2/r^2}{1-\mu}\frac{\tz}{\nu}\frac{\lambda}{r} + \frac{E_4}{\nu} \mr   \mr \\
=&\; r\ml \frac{\p(\ttmfb/\nu,\tp^\dagger)}{\p(u,v)}\mr - \frac{\lambda(\mu - Q^2/r^2)|\tz|^2}{r\nu(1-\mu)} + \frac{\tz^\dagger}{\nu}E_4 \\
=&\; r\ml \frac{\p(\ttmfb/\nu,\tp^\dagger)}{\p(u,v)}\mr - \frac{\lambda(\mu - Q^2/r^2)|\ttz|^2}{r\nu(1-\mu)} + E_{10}
\end{aligned}
\end{equation}
with
\begin{equation}\label{Area220}
E_{10} := -\frac{\lambda(\mu-Q^2/r^2)}{r\nu(1-\mu)}(2 \e r^2 A_u \im (\phi \du \tp^\dagger) - r^2 \e^2 A_u^2 |\phi|^2) + \frac{\tz^\dagger}{\nu}E_4 
\end{equation}
and $|E_{10}|(u,v) \lesssim D^2 r(u,v)$ since $|E_4|(u,v) \lesssim Dr(u,v)$. Combining \eqref{Area217} and \eqref{Area219}, we can now write
\begin{equation}\label{Area221}
\begin{aligned}
\du \ml \frac{\tr}{\nu}\mr &= 4\pi r^2 \frac{\p(\ttmfb/\nu,\tp^\dagger)}{\p(u,v)} + E_{11}
\end{aligned}
\end{equation}
with
\begin{equation}\label{Area222}
E_{11} := E_9 + 4\pi r E_{10}
\end{equation}
and $|E_{11}|(u,v) \lesssim D^2 r^2(u,v)$. Applying \eqref{Area210}, this implies that
\begin{equation}\label{Area223}
\int_{\mathcal{D}(0,v_0)} \frac{|\tr|}{r^3}(u,v) \D u \D v \lesssim A[\tp^\dagger,\ttmfb/\nu] + D^2.
\end{equation}
Next, we continue to estimate $\du\ml \frac{r^2 \tg}{\nu}\mr$ and we first compute
\begin{equation}\label{Area224}
\begin{aligned}
\du\ml \frac{\ttheta^\dagger r}{\nu} \ml \frac{\tmfa}{\lambda} - \frac{\ttmfb}{\nu}\mr\mr &= \ml \ttheta^\dagger 
+ r  \du\ml \frac{\ttheta^\dagger}{\nu}\mr    \mr \ml \frac{\tmfa}{\lambda} - \frac{\ttmfb}{\nu}\mr + \frac{r}{\nu}\ttheta^\dagger \du\ml \frac{\tmfa}{\lambda} - \frac{\ttmfb}{\nu}\mr.
\end{aligned}
\end{equation}
Using \eqref{Area139}, we can rewrite $\du \ml \frac{\ttheta^\dagger}{\nu}\mr$ as
\begin{equation}\label{Area225}
\begin{aligned}
\du\ml \frac{\ttheta^\dagger}{\nu}\mr &= \frac{\nu \du \ttheta^\dagger - \ttheta^\dagger \du \nu}{\nu^2} = -\frac{\lambda}{r\nu}\tz^\dagger + \frac{E_1^\dagger}{\nu} - \frac{\ttheta^\dagger \du \nu}{\nu^2}.
\end{aligned}
\end{equation}
Together with \eqref{Area153}, we then simplify \eqref{Area224} to
\begin{equation}\label{Area226}
\begin{aligned}
\du \ml \frac{\ttheta^\dagger r}{\nu}\ml \frac{\tmfa}{\lambda} - \frac{\ttmfb}{\nu}\mr\mr =&\; \ml \ttheta^\dagger 
-\frac{\lambda \tz^\dagger}{\nu} + \frac{rE_1^\dagger}{\nu} - \frac{r\ttheta^\dagger \du \nu}{\nu^2}   \mr \ml \frac{\tmfa}{\lambda} - \frac{\ttmfb}{\nu}\mr \\
&+ \frac{r^2}{\nu}\ml \frac{1}{r^2}\ml \frac{\mu - Q^2/r^2}{1-\mu} \mr\ml \frac{|\tz|^2\lambda}{\nu} - \frac{ |\ttheta|^2\nu}{\lambda}\mr + \frac{\p(\tp^\dagger,\ttmfb/\nu)}{\p(u,v)} +\frac{\ttheta^\dagger E_1}{r\lambda} - \frac{\tz^\dagger E_4}{r\nu} \mr.
\end{aligned}
\end{equation}
Appealing to \eqref{Area141}, we have
\begin{equation}\label{Area227}
\begin{aligned}
\ml \nu\ttheta^\dagger 
-\lambda \tz^\dagger - \frac{r\ttheta^\dagger \du \nu}{\nu}   \mr \ml \frac{\tmfa}{\lambda} - \frac{\ttmfb}{\nu}\mr =&\; \ml \nu\ttheta^\dagger 
-\lambda \tz^\dagger - \frac{r\ttheta^\dagger \du \nu}{\nu}   \mr \ml \frac{\ttheta}{\lambda} - \frac{\tz}{\nu} + E_2\mr \\
=&\; \frac{\nu|\ttheta|^2}{\lambda} - (\ttheta^\dagger \tz + \tz^\dagger \ttheta) + \frac{\lambda}{\nu}|\tz|^2 + E_2(\nu\ttheta^\dagger - \lambda \tz^\dagger) \\
&- \frac{r}{\nu \lambda}\du \nu|\ttheta|^2 + (\du \nu) \ttheta^\dagger \ml \frac{r}{\nu} \mr\ml \frac{\tz}{\nu} - E_2 \mr.
\end{aligned}
\end{equation}
On the other hand, from \eqref{Setup20}, we have
\begin{equation}\label{Area228}
\begin{aligned}
\du \nu \ttheta^\dagger &= r(\du \nu \dv \tp^\dagger) = r \ml \frac{\p(\nu,\tp^\dagger)}{\p(u,v)} + \dv\nu \du \tp^\dagger \mr = r \ml \frac{\p(\nu,\tp^\dagger)}{\p(u,v)} + \frac{\lambda \nu\tz^\dagger}{r^2} \frac{\mu - Q^2/r^2}{1-\mu}\mr. \\
\end{aligned}
\end{equation}
Hence, we obtain
\begin{equation}\label{Area229}
\begin{aligned}
\ml \nu\ttheta^\dagger 
-\lambda \tz^\dagger - \frac{r\ttheta^\dagger \du \nu}{\nu}   \mr \ml \frac{\tmfa}{\lambda} - \frac{\ttmfb}{\nu}\mr 
=&\; \frac{\nu|\ttheta|^2}{\lambda} - (\ttheta^\dagger \tz + \tz^\dagger \ttheta) + \frac{\lambda}{\nu}|\tz|^2 - \frac{r}{\nu \lambda}\du \nu|\ttheta|^2 + \frac{\lambda |\tz|^2}{\nu} \frac{\mu - Q^2/r^2}{1-\mu} \\
&+  \frac{r^2 \tz}{\nu^2}\frac{\p(\nu,\tp^\dagger)}{\p(u,v)}  + E_2(\nu\ttheta^\dagger - \lambda \tz^\dagger) - E_2 \ml \frac{r^2}{\nu} \mr\ml \frac{\p(\nu,\tp^\dagger)}{\p(u,v)} + \frac{\lambda \nu \tz^\dagger}{r^2} \frac{\mu - Q^2/r^2}{1-\mu}\mr.
\end{aligned}
\end{equation}
Plugging this back to \eqref{Area226}, we arrive at
\begin{equation}\label{Area230}
\begin{aligned}
\du\ml\frac{\ttheta^\dagger r}{\nu} \ml \frac{\tmfa}{\lambda} - \frac{\ttmfb}{\nu}\mr\mr =&\; \frac{1}{\nu}\ml \frac{\nu|\ttheta|^2}{\lambda} - (\ttheta^\dagger \tz + \tz^\dagger \ttheta) + \frac{\lambda}{\nu}|\tz|^2 - \frac{r}{\nu \lambda}\du \nu|\ttheta|^2 + \frac{\lambda |\tz|^2}{\nu} \frac{\mu - Q^2/r^2}{1-\mu}\mr +  \frac{r^2 \tz}{\nu^3}\frac{\p(\nu,\tp^\dagger)}{\p(u,v)}  \\
&+ \frac{E_2}{\nu}(\nu\ttheta^\dagger - \lambda \tz^\dagger) - E_2 \ml \frac{r^2}{\nu^2} \mr\ml \frac{\p(\nu,\tp^\dagger)}{\p(u,v)} + \frac{\lambda \nu\tz^\dagger}{r^2} \frac{\mu - Q^2/r^2}{1-\mu}\mr+ \frac{rE_1^\dagger}{\nu}\ml \frac{\tmfa}{\lambda} - \frac{\ttmfb}{\nu}\mr\\
&+ \frac{1}{\nu}\ml \frac{\mu-Q^2/r^2}{1-\mu}\mr\ml \frac{|\tz|^2\lambda}{\nu} - \frac{|\ttheta|^2\nu}{\lambda}\mr + \frac{r^2}{\nu}\frac{\p(\tp^\dagger,\ttmfb/\nu)}{\p(u,v)} + \frac{r^2}{\nu}\ml \frac{\ttheta^\dagger E_1}{r\lambda} - \frac{\tz^\dagger E_4}{r\nu}\mr \\
=&\; r^2 \frac{(\tz/\nu - E_2)}{\nu^2} \frac{\p(\nu,\tp^\dagger)}{\p(u,v)} + r^2\frac{1}{\nu}\frac{\p(\tp^\dagger,\ttmfb/\nu)}{\p(u,v)} + R_1 + E_{12},
\end{aligned}
\end{equation}
with
\begin{equation}\label{Area231}
\begin{aligned}
R_1 :=&\;  \frac{1}{\nu}\ml \frac{\nu|\ttheta|^2}{\lambda} - (\ttheta^\dagger \tz + \tz^\dagger \ttheta) + \frac{\lambda}{\nu}|\tz|^2 - \frac{r}{\nu \lambda}\du \nu|\ttheta|^2 + \frac{\lambda |\tz|^2}{\nu} \frac{\mu - Q^2/r^2}{1-\mu}\mr\\
&+ \frac{1}{\nu}\ml \frac{\mu-Q^2/r^2}{1-\mu}\mr\ml \frac{|\tz|^2\lambda}{\nu} - \frac{|\ttheta|^2\nu}{\lambda}\mr, \\
E_{12}:= &\; \frac{E_2}{\nu}(\nu \ttheta^\dagger - \lambda \tz^\dagger) - \frac{E_2 \nu \lambda \tz^\dagger}{\nu^2}\frac{\mu - Q^2/r^2}{1-\mu} + \frac{rE_1^\dagger}{\nu}\ml \frac{\tmfa}{\lambda} - \frac{\ttmfb}{\nu}\mr + \frac{r^2}{\nu}\ml \frac{\ttheta^\dagger E_1}{r\lambda} - \frac{\tz^\dagger E_4}{r\nu}\mr.
\end{aligned}
\end{equation}
{\color{black}Note that to arrive at the last equality in \eqref{Area230}, numerous cancellations were explored.} By recognizing that $\im(R_1) = 0$ and $|E_{12}|(u,v) \lesssim D^2r^2(u,v)$, we can take the imaginary part of \eqref{Area230} and deduce that 
\begin{equation}\label{Area232}
\begin{aligned}
\du\ml \frac{r^2 \tg}{\nu}\mr &=   4\pi \im \ml \du \ml \frac{\ttheta^\dagger r}{\nu} \ml \frac{\tmfa}{\lambda} - \frac{\ttmfb}{\nu}\mr\mr \mr \\
&= 4\pi \im \ml r^2 \frac{(\tz/\nu - E_2)}{\nu^2} \frac{\p(\nu,\tp^\dagger)}{\p(u,v)} + r^2\frac{1}{\nu}\frac{\p(\tp^\dagger,\ttmfb/\nu)}{\p(u,v)}  + E_{12}\mr.
\end{aligned}
\end{equation}
This implies
\begin{equation}\label{Area233}
\mlm \du \ml \frac{r^2 \tg}{\nu}\mr\mrm \lesssim r^2 \ml D\mlm \frac{\p(\nu,\tp^\dagger)}{\p(u,v)}\mrm + \mlm \frac{\p(\tp^\dagger,\ttmfb/\nu)}{\p(u,v)}\mrm + D^2\mr.
\end{equation}
Plugging \eqref{Area233} back to \eqref{Area211}, we thus derive
\begin{equation}\label{Area234}
\int_{\mathcal{D}(0,v_0)} \frac{|r^2 \tg|}{r^3}(u,v) \D u \D v \lesssim D \cdot A[\nu,\tp^\dagger] + A[\tp^\dagger,\ttmfb/\nu] + D^2.
\end{equation}
Henceforth, by utilizing \eqref{Area208}, \eqref{Area209}, \eqref{Area223}, \eqref{Area234} and the fact that $A[\lambda,\tp^\dagger] = A[\lambda,\tp]$, we arrive at
\begin{equation}\label{Area235}
A[\lambda,\tp^\dagger] \lesssim (D+Y)A[\tp^\dagger,\ttmfb/\nu] + D^2\cdot A[\nu,\tp^\dagger] + D^2Y + D^3.
\end{equation}

It remains to estimate $A[\nu,\tp^\dagger]$. Parallel to the above arguments, we start off by defining 
\begin{equation}\label{Area236}
\tgs := 4\pi \im\ml \frac{\tz^\dagger}{r}\ml \frac{\tmfa}{\lambda} - \frac{\ttmfb}{\nu}\mr\mr
\end{equation}
and we have
\begin{equation}\label{Area237}
\tfs + \ii \tgs = 4\pi \frac{\tz^\dagger}{r}\ml \frac{\ttheta}{\lambda} - \frac{\tz}{\nu} + E_2\mr
\end{equation}
with $E_2$ given in \eqref{Area141}. This implies that
\begin{equation}\label{Area238}
\frac{4\pi}{r^2}\frac{\ttheta \tz^\dagger}{\lambda \nu}\ml \nu \ttheta - \lambda \tz\mr = \frac{\tfs \ttheta}{r} + \ii \frac{\tgs \ttheta}{r} - \frac{4\pi \ttheta \tz^\dagger}{r^2}E_2.
\end{equation}
Together with \eqref{Area132}, we compute
\begin{equation}\label{Area239}
\begin{aligned}
&\frac{1}{\nu}\frac{\p(\nu,\tp)}{\p(u,v)} = \du \log |\nu| \dv \tp - \dv \log |\nu| \du \tp = \du\log|\nu| \frac{\ttheta}{r} -\frac{\mu - Q^2/r^2}{1-\mu} \frac{\lambda \tz}{r^2} \\
&\quad= \frac{1}{r^2}(\nu \ttheta - \lambda \tz)\ml \frac{\mu - Q^2/r^2}{1-\mu} + 4\pi \frac{\ttheta \tz^\dagger}{\lambda \nu} \mr  + \frac{\ttheta}{r}\ml \du \log|\nu| - \frac{\mu - Q^2/r^2}{1-\mu}\frac{\nu}{r}\mr - 4\pi \frac{\ttheta \tz^\dagger}{r^2}(\nu \ttheta - \lambda \tz) \\
&\quad= \frac{1}{r^2}\ml \frac{\nu \ttheta - \lambda \tz}{\lambda \nu} \mr\ml \frac{\mu - Q^2/r^2}{1-\mu}\lambda \nu + 4\pi \ttheta \tz^\dagger \mr  + \frac{\ttheta}{r}\ml \du \log|\nu| - \frac{\mu - Q^2/r^2}{1-\mu}\frac{\nu}{r} - \tfs\mr - \ii \frac{\ttheta\tgs}{r} + 4\pi \frac{\ttheta\tz^\dagger}{r^2}E_2. \\
\end{aligned}
\end{equation}
Note that from \eqref{Area163}, we have
\begin{equation}\label{Area240}
\dv \ml \du \log|\nu| - \frac{\mu - Q^2/r^2}{1-\mu}\frac{\nu}{r} - \tfs\mr = -4\pi \re \ml \frac{\p(\tp^\dagger,\tmfa/\lambda)}{\p(u,v)}\mr - E_5 + E_6.
\end{equation}
This implies that 
\begin{equation}\label{Area241}
\mlm \du \log|\nu| - \frac{\mu - Q^2/r^2}{1-\mu}\frac{\nu}{r} - \tfs\mrm(u,v) \leq G^*(u)
\end{equation}
with
\begin{equation}\label{Area242}
G^*(u) := \mlm \frac{\D}{\D u}\log \mlm \nu|_{\Gamma}\mrm \mrm(u) + 4\pi \int^{v_0}_u \mlm \frac{\p(\tp^\dagger,\tmfa/\lambda)}{\p(u,v)}\mrm(u,v) \D v + \int_u^{v_0}|E_5| + |E_6|(u,v) \D v.
\end{equation}
In \eqref{Area242}, we have used the fact that both $\frac{\mu - Q^2/r^2}{1-\mu}\frac{\nu}{r}$ and $\tfs$ vanishes at $\Gamma$.
By \eqref{Area118} and \eqref{Area242}, we then deduce that
\begin{equation}\label{Area243}
\int_0^{v_0}G^*(u) \D u \lesssim TV_{(0,v_0)}[\lambda|_{\Gamma}] + A[\tp^\dagger,\tmfa/\lambda] + D^2.
\end{equation}
Parallel to \eqref{Area207}, we set
\begin{equation}\label{Area244}
\trs := r\ml \frac{\mu - Q^2/r^2}{1-\mu}\lambda \nu + 4\pi \ttheta \tz^\dagger\mr.
\end{equation}
Back to \eqref{Area239}, we now obtain
\begin{equation}\label{Area245a}
\mlm \frac{1}{\nu}\frac{\p(\nu,\tp)}{\p(u,v)}\mrm \lesssim D\frac{|\trs|}{r^3} + |\dv \tp|G^*(u) + D\frac{|r^2\tgs|}{r^3} + D^3.
\end{equation}
{\color{black}Integrating on the domain $\mathcal{D}(0,v_0)$, we hence derive
\begin{equation}\label{Area245}
\begin{aligned}
\int_{\mathcal{D}(0,v_0)} \mlm \frac{1}{\nu}\frac{\p(\nu,\tp)}{\p(u,v)}\mrm \D u \D v &\lesssim D \int_{\mathcal{D}(0,v_0)}\frac{|\trs|}{r^3}(u,v) \D u \D v + \int_{\mathcal{D}(0,v_0)}|\dv \tp|(u,v)G^*(u) \D u \D v \\
&+ D \int_{\mathcal{D}(0,v_0)}\frac{|r^2\tgs|}{r^3}(u,v) \D u \D v + D^3.
\end{aligned}
\end{equation}
We then estimate the integrals in the right of \eqref{Area245}.} First, note that by \eqref{Area15}, we have 
\begin{equation}\label{Area246}
\begin{aligned}
\int_{\mathcal{D}(0,v_0)}|\dv \tp|(u,v) G^*(u) \D u &\lesssim  X \int_0^{v_0}G^*(u) \D u \lesssim X(TV_{(0,v_0)}[\lambda|_{\Gamma}] + A[\tp^\dagger,\tmfa/\lambda] + D^2).
\end{aligned}
\end{equation}
On the other hand, analogously to the above, one can show that 
\begin{equation}\label{Area325}
|\trs|(u,v) \lesssim r^3(u,v) \quad \text{ and } \quad  |\tgs|(u,v) \lesssim r(u,v)
\end{equation}
and thus
\begin{equation}\label{Area326}
\frac{\trs}{r^2}|_{\Gamma} = 0 \quad \text{ and } \quad \tgs|_{\Gamma} = 0.
\end{equation}
By employing \eqref{Area326}, we perform integration by parts and obtain
\begin{equation}\label{Area247}
\begin{aligned}
\int_{\mathcal{D}(0,v_0)}\frac{|\trs|}{r^3}(u,v) \D u \D v &= \frac{1}{2}\int^{v_0}_0 \ml \int^{v_0}_u \frac{|\trs|}{-
\lambda}\dv \ml \frac{1}{r^2}\mr(u,v) \D v\mr \D u \\
&= - \frac{1}{2}\int^{v_0}_0 \frac{|\trs|/\lambda}{r^2}(u,v_0) \D u + \frac{1}{2}\int^{v_0}_0 \int^{v_0}_u \dv \mlm \frac{\trs}{\lambda}\mrm \cdot \frac{1}{r^2}(u,v) \D v \D u \\
&\leq \frac{1}{2}\int_{\mathcal{D}(0,v_0)}\frac{1}{r^2}\mlm \dv \ml \frac{\trs}{\lambda}\mr\mrm(u,v) \D u \D v.
\end{aligned}
\end{equation}
Meanwhile, a symmetric argument with \eqref{Area326} yields
\begin{equation}\label{Area248}
\begin{aligned}
\int_{\mathcal{D}(0,v_0)}\frac{|r^2 \tgs|}{r^3}(u,v) \D u \D v 
&\leq \frac{1}{2}\int_{\mathcal{D}(0,v_0)}\frac{1}{r^2}\mlm \dv \ml \frac{r^2 \tgs}{\lambda}\mr\mrm(u,v) \D u \D v.
\end{aligned}
\end{equation}
It remains to compute $\dv \ml \frac{\trs}{\lambda}\mr$ and $\dv \ml \frac{r^2 \tgs}{\lambda}\mr$. For the former, analogously to \eqref{Area212}, we have
\begin{equation}\label{Area249}
\begin{aligned}
\dv \ml \frac{\trs}{\lambda} \mr =&\; \dv \ml \frac{r}{\lambda}\ml \frac{\mu - Q^2/r^2}{1-\mu}\lambda \nu + 4\pi \ttheta \tz^\dagger\mr \mr \\
=&\; \frac{\mu- Q^2/r^2}{1-\mu}\lambda \nu + 4\pi \ttheta \tz^\dagger +  r \dv \ml \frac{\mu - Q^2/r^2}{1-\mu}\nu + \frac{4\pi \ttheta \tz^\dagger}{\lambda}\mr \\
=&\; \frac{\mu- Q^2/r^2}{1-\mu}\lambda \nu + 4\pi \ttheta \tz^\dagger +  r\nu \dv \ml \frac{\mu - Q^2/r^2}{1-\mu}\mr + \ml \frac{\mu - Q^2/r^2}{1-\mu}\mr r\dv \nu  \\
&+ 4\pi r \ml \frac{\ttheta}{\lambda}\mr \dv \tz^\dagger + 4\pi r \tz^\dagger\dv \ml\frac{\ttheta}{\lambda} \mr.
\end{aligned}
\end{equation}
By \eqref{Setup27} and \eqref{Setup29}, we can write $\dv \ml \frac{\mu - Q^2/r^2}{1-\mu}\mr$ as
\begin{equation}\label{Area250}
\begin{aligned}
\dv \ml \frac{\mu - Q^2/r^2}{1-\mu}\mr =&\; \frac{1-Q^2/r^2}{(1-\mu)^2}\dv \mu + \frac{1}{1-\mu}\dv\ml -\frac{Q^2}{r^2}\mr \\
=&\; \frac{1-Q^2/r^2}{(1-\mu)^2}\ml \frac{4\pi (1-\mu)|\ttheta|^2}{r\lambda} - \frac{\lambda}{r}(\mu - Q^2/r^2)\mr + \frac{1}{1-\mu}\ml 8\pi \e Q \im(\phi^\dagger \dv \tp) + \frac{2Q^2\lambda}{r^3}\mr.
\end{aligned}
\end{equation}
In addition, to treat the last term in \eqref{Area249}, we observe that
\begin{equation}\label{Area251}
\begin{aligned}
\dv \ml \frac{\tmfa}{\lambda}\mr = \dv\ml \frac{\dv r \tp + r \dv \tp}{\dv r}\mr = \frac{\ttheta}{r} + \dv \ml \frac{\ttheta}{\lambda}\mr.
\end{aligned}
\end{equation}
In conjunction with \eqref{Setup20} and \eqref{Area139}, the above two expressions imply that \eqref{Area249} simplifies to
\begin{equation}\label{Area252} 
\begin{aligned}
\dv \ml \frac{\trs}{\lambda}\mr =&\;\frac{\mu- Q^2/r^2}{1-\mu}\lambda \nu + 4\pi \ttheta \tz^\dagger + \ml \frac{\mu - Q^2/r^2}{1-\mu}\mr^2 \lambda \nu  + r\nu\ml \frac{1-Q^2/r^2}{(1-\mu)^2}\ml \frac{4\pi (1-\mu)|\ttheta|^2}{r\lambda} - \frac{\lambda}{r}(\mu - Q^2/r^2)\mr\mr \\
&+ \frac{r\nu}{1-\mu}\ml 8\pi \e Q \im(\phi^\dagger \dv \tp) + \frac{2Q^2\lambda}{r^3}\mr + 4\pi r \ml \frac{\ttheta}{\lambda}\mr \ml -\frac{\nu \ttheta^\dagger}{r} + E_1^\dagger\mr  + 4\pi r \tz^\dagger\ml \dv\ml\frac{\tmfa}{\lambda}\mr - \frac{\ttheta}{r} \mr\\
=&\; \frac{4\pi \nu(\mu - Q^2/r^2)|\ttheta|^2}{\lambda(1-\mu)} + 4\pi r \tz^\dagger \dv \ml \frac{\tmfa}{\lambda} \mr + E_{13}
\end{aligned}
\end{equation}
with
\begin{equation}\label{Area253}
E_{13} := \frac{r\nu}{1-\mu}\ml 8\pi \e Q \im(\phi^\dagger \dv \tp) + \frac{2Q^2\lambda}{r^3}\mr + \frac{4\pi r \ttheta E_1^\dagger}{\lambda}
\end{equation}
and $|E_{13}|(u,v) \lesssim D^2r^2(u,v).$ {\color{black}Note that to arrive at the last equality in \eqref{Area252}, various cancellations were explored.} With the help of \eqref{Area147} and \eqref{Area149}, we also have
\begin{equation}\label{Area254}
\begin{aligned}
\tz^\dagger \dv \ml \frac{\tmfa}{\lambda}\mr = r\ml \du \tp^\dagger \dv \ml \frac{\tmfa}{\lambda}\mr\mr &= r\ml \frac{\p(\tp^\dagger,\tmfa/\lambda)}{\p(u,v)} + \dv \tp^\dagger \du \ml \frac{\tmfa}{\lambda}\mr \mr\\
&= r\frac{\p(\tp^\dagger,\tmfa/\lambda)}{\p(u,v)} - \frac{\nu(\mu-Q^2/r^2)|\ttheta|^2}{r\lambda(1-\mu)} + \frac{\ttheta^\dagger E_1}{\lambda}.
\end{aligned}
\end{equation}
Plugging \eqref{Area254} into \eqref{Area252}, we deduce that
\begin{equation}\label{Area255}
\dv \ml \frac{\trs}{\lambda}\mr = 4\pi r^2 \frac{\p(\tp^\dagger,\tmfa/\lambda)}{\p(u,v)} + E_{14} \quad \text{ with } \quad E_{14} = E_{13} + \frac{4\pi r\ttheta^\dagger E_1}{\lambda}
\end{equation}
and $|E_{14}|\lesssim D^2r^2(u,v)$. Together with \eqref{Area247}, this implies that
\begin{equation}\label{Area257}
\int_{\mathcal{D}(0,v_0)}\frac{|\trs|}{r^3}(u,v) \D u \D v \lesssim A[\tp^\dagger,\tmfa/\lambda] + D^2.
\end{equation}
Next, we proceed to compute $\dv \ml \frac{r^2 \tgs}{\nu}\mr$.  First, observe that
\begin{equation}\label{Area258}
\dv \ml \frac{\tz^\dagger r}{\lambda}\ml \frac{\tmfa}{\lambda} - \frac{\ttmfb}{\nu}\mr \mr = \ml \tz^\dagger + r \dv\ml \frac{\tz^\dagger}{\lambda}\mr\mr\ml \frac{\tmfa}{\lambda} - \frac{\ttmfb}{\nu}\mr + \frac{r}{\lambda}\tz^\dagger \dv \ml \frac{\tmfa}{\lambda} - \frac{\ttmfb}{\nu}\mr.
\end{equation}
By \eqref{Area139}, we can write
\begin{equation}\label{Area259}
\begin{aligned}
\dv \ml \frac{\tz^\dagger}{\lambda}\mr &= \frac{\lambda \dv \tz^\dagger - \tz^\dagger \dv \lambda}{\lambda^2} = -\frac{\nu}{\lambda r}\ttheta^\dagger + \frac{E_1^\dagger}{\lambda} - \frac{\tz^\dagger \dv \lambda }{\lambda^2}.
\end{aligned}
\end{equation}
Combining with \eqref{Area154}, we then simplify \eqref{Area258} to
\begin{equation}\label{Area260}
\begin{aligned}
\dv \ml \frac{\tz^\dagger r}{\lambda}\ml \frac{\tmfa}{\lambda} - \frac{\ttmfb}{\nu}\mr \mr =&\; \ml \tz^\dagger 
- \frac{\nu \ttheta^\dagger}{\lambda} + \frac{rE_1^\dagger}{\lambda} - \frac{r\tz^\dagger \dv \lambda}{\lambda^2}  \mr\ml \frac{\tmfa}{\lambda} - \frac{\ttmfb}{\nu}\mr \\
&+ \frac{r^2}{\lambda}\ml  \frac{1}{r^2}\ml \frac{\mu - Q^2/r^2}{1-\mu} \mr\ml \frac{|\tz|^2\lambda}{\nu} - \frac{ |\ttheta|^2\nu}{\lambda}\mr + \frac{\p(\tp^\dagger,\tmfa/\lambda)}{\p(u,v)} + \frac{\ttheta^\dagger E_1}{r\lambda} - \frac{\tz^\dagger E_4}{r\nu} \mr.
\end{aligned}
\end{equation}
Observe that by \eqref{Area141}, we have
\begin{equation}\label{Area261}
\begin{aligned}
\ml \lambda \tz^\dagger - \nu \ttheta^\dagger - \frac{r\tz^\dagger \dv \lambda}{\lambda} \mr\ml \frac{\tmfa}{\lambda} - \frac{\ttmfb}{\nu} \mr =&\; \ml \lambda \tz^\dagger - \nu \ttheta^\dagger - \frac{r\tz^\dagger \dv \lambda}{\lambda} \mr\ml \frac{\ttheta}{\lambda} - \frac{\tz}{\nu} + E_2 \mr \\
=&\; -\frac{\nu|\ttheta|^2}{\lambda} + (\tz^\dagger \ttheta + \tz \ttheta^\dagger) - \frac{\lambda |\tz|^2}{\nu} + E_2(\lambda \tz^\dagger - \nu \ttheta^\dagger) \\
&+ \frac{r|\tz|^2 \dv \lambda}{\lambda \nu} - (\dv \lambda) \tz^\dagger \ml \frac{r}{\lambda} \mr \ml \frac{\ttheta}{\lambda} + E_2\mr. \\
\end{aligned}
\end{equation}
On the other hand, from \eqref{Setup20}, we have
\begin{equation}\label{Area262}
\begin{aligned}
(\dv \lambda) \tz^\dagger = r ((\dv \lambda) \du \tp^\dagger) = r\ml \frac{\p(\tp^\dagger,\lambda)}{\p(u,v)} + (\du\lambda) \dv \tp^\dagger\mr = r\ml \frac{\p(\tp^\dagger,\lambda)}{\p(u,v)} + \frac{\lambda \nu \ttheta^\dagger}{r^2}\frac{\mu - Q^2/r^2}{1-\mu}\mr,
\end{aligned}
\end{equation}
and hence the first term in \eqref{Area260} has the following expression
\begin{equation}\label{Area263}
\begin{aligned}
\ml \lambda \tz^\dagger - \nu \ttheta^\dagger - \frac{r\tz^\dagger \dv \lambda}{\lambda} \mr\ml \frac{\tmfa}{\lambda} - \frac{\ttmfb}{\nu} \mr 
=&\; -\frac{\nu|\ttheta|^2}{\lambda} + (\tz^\dagger \ttheta + \tz \ttheta^\dagger) - \frac{\lambda |\tz|^2}{\nu} + \frac{r|\tz|^2 \dv \lambda}{\lambda \nu} - \frac{\nu|\ttheta|^2}{\lambda}\frac{\mu - Q^2/r^2}{1-\mu}  \\
&\; - \frac{r^2 \ttheta}{\lambda^2}\frac{\p(\tp^\dagger,\lambda)}{\p(u,v)} + E_2(\lambda \tz^\dagger - \nu \ttheta^\dagger) - E_2\ml\frac{r^2}{\lambda} \mr \ml \frac{\p(\tp^\dagger,\lambda)}{\p(u,v)} + \frac{\lambda \nu \ttheta^\dagger}{r^2}\frac{\mu - Q^2/r^2}{1-\mu}\mr. \\
\end{aligned}
\end{equation}
Plugging this back to \eqref{Area260}, we have therefore obtained
\begin{equation}\label{Area264}
\begin{aligned}
\dv \ml \frac{\tz^\dagger r}{\lambda}\ml \frac{\tmfa}{\lambda} - \frac{\ttmfb}{\nu}\mr \mr =&\; \frac{1}{\lambda} \ml -\frac{\nu|\ttheta|^2}{\lambda} + (\tz^\dagger \ttheta + \tz \ttheta^\dagger) - \frac{\lambda |\tz|^2}{\nu} + \frac{r|\tz|^2 \dv \lambda}{\lambda \nu} - \frac{\nu|\ttheta|^2}{\lambda}\frac{\mu - Q^2/r^2}{1-\mu} \mr  - \frac{r^2 \ttheta}{\lambda^3}\frac{\p(\tp^\dagger,\lambda)}{\p(u,v)}\\
&+ \frac{E_2}{\lambda}(\lambda \tz^\dagger - \nu \ttheta^\dagger) - E_2\ml\frac{r^2}{\lambda^2} \mr \ml \frac{\p(\tp^\dagger,\lambda)}{\p(u,v)} + \frac{\lambda \nu \ttheta^\dagger}{r^2}\frac{\mu - Q^2/r^2}{1-\mu}\mr + \frac{rE_1^\dagger}{\lambda}\ml \frac{\tmfa}{\lambda} - \frac{\ttmfb}{\nu}\mr \\
&+  \frac{1}{\lambda}\ml \frac{\mu - Q^2/r^2}{1-\mu} \mr\ml \frac{|\tz|^2\lambda}{\nu} - \frac{ |\ttheta|^2\nu}{\lambda}\mr + \frac{r^2}{\lambda}\frac{\p(\tp^\dagger,\tmfa/\lambda)}{\p(u,v)} + \frac{r^2}{\lambda}\ml \frac{\ttheta^\dagger E_1}{r\lambda} - \frac{\tz^\dagger E_4}{r\nu} \mr  \\
=&\; r^2 \ml \frac{E_2 - \ttheta/\lambda}{\lambda^2} \mr \frac{\p(\tp^\dagger,\lambda)}{\p(u,v)} + r^2 \ml \frac{1}{\lambda} \mr \frac{\p(\tp^\dagger,\tmfa/\lambda)}{\p(u,v)} + R_2 + E_{15},
\end{aligned}
\end{equation}
with
\begin{equation}\label{Area265}
\begin{aligned}
R_2 :=&\;  \frac{1}{\lambda} \ml -\frac{\nu|\ttheta|^2}{\lambda} + (\tz^\dagger \ttheta + \tz \ttheta^\dagger) - \frac{\lambda |\tz|^2}{\nu} + \frac{r|\tz|^2 \dv \lambda}{\lambda \nu} - \frac{\nu|\ttheta|^2}{\lambda}\frac{\mu - Q^2/r^2}{1-\mu} \mr \\
&+ \frac{1}{\lambda}\ml \frac{\mu - Q^2/r^2}{1-\mu} \mr\ml \frac{|\tz|^2\lambda}{\nu} - \frac{ |\ttheta|^2\nu}{\lambda}\mr, \\
E_{15} := &\; \frac{E_2}{\lambda}(\lambda \tz^\dagger - \nu \ttheta^\dagger) - \frac{E_2\nu \ttheta^\dagger}{\lambda}\frac{\mu - Q^2/r^2}{1-\mu}+ \frac{rE_1^\dagger}{\lambda}\ml \frac{\tmfa}{\lambda} - \frac{\ttmfb}{\nu}\mr + \frac{r^2}{\lambda}\ml \frac{\ttheta^\dagger E_1}{r\lambda} - \frac{\tz^\dagger E_4}{r\nu} \mr.
\end{aligned}
\end{equation}
{\color{black}Similar to \eqref{Area252}, to arrive at the last line in \eqref{Area264}, multiple cancellations have to be explored.}
Employing $\im(R_2) = 0$ and $|E_{15}|\lesssim D^2r^2(u,v)$, we arrive at
\begin{equation}\label{Area266a}
\begin{aligned}
\dv \ml \frac{r^2 \tgs}{\lambda}\mr &= 4\pi \im \ml \dv \ml \frac{\tz^\dagger r}{\lambda}\ml \frac{\tmfa}{\lambda}- \frac{\ttmfb}{\nu}\mr\mr\mr\\
&= 4\pi \im\ml r^2 \ml \frac{E_2 - \ttheta/\lambda}{\lambda^2} \mr \frac{\p(\tp^\dagger,\lambda)}{\p(u,v)} + r^2 \ml \frac{1}{\lambda} \mr \frac{\p(\tp^\dagger,\tmfa/\lambda)}{\p(u,v)} + E_{15} \mr \\
\end{aligned}
\end{equation}
and thus 
\begin{equation}\label{Area266}
\mlm \dv \ml \frac{r^2\tgs}{\lambda }\mr\mrm \lesssim r^2\ml 
 D \mlm \frac{\p(\tp^\dagger,\lambda)}{\p(u,v)}\mrm + \mlm \frac{\p(\tp^\dagger,\tmfa/\lambda)}{\p(u,v)}\mrm + D^2\mr.
\end{equation}
Plugging these back to \eqref{Area248}, we then deduce that
\begin{equation}\label{Area267}
\int_{\mathcal{D}(0,v_0)}\frac{|r^2 \tgs|}{r^3}(u,v) \D u \D v \lesssim D\cdot A[\tp^\dagger,\lambda] + A[\tp^\dagger,\tmfa/\lambda] + D^2.
\end{equation}
By incorporating \eqref{Area246}, \eqref{Area257}, \eqref{Area266} to \eqref{Area245}, we arrive at
\begin{equation}\label{Area268}
\begin{aligned}
A[\nu,\tp^\dagger] &\lesssim  X \cdot TV_{(0,v_0)}[\lambda|_{\Gamma}] + (X + D) \cdot A[\tp^\dagger,\tmfa/\lambda] + D^2 \cdot A[\lambda,\tp^\dagger] + D^2X + D^3.\\
\end{aligned}
\end{equation}

\vspace{5mm}

Summarizing the area estimates above, we have proven the following:
\begin{enumerate}[(1)]
\item \eqref{Area103}: 
\begin{equation}\label{Area269}
\sup_{u \in (0,v_0)}TV_{\{u\}\times(u,v_0)}[\tmfa] \lesssim  \; D\ml 1 + \sup_{u \in (0,v_0)}TV_{\{u\}\times(u,v_0)}[\lambda]\mr  + A[\lambda,\tp],
\end{equation}
\item \eqref{Area115}: 
\begin{equation}\label{Area270}
\sup_{v \in (0,v_0)}TV_{(0,v) \times \{v\}}[\ttmfb] \lesssim \; TV_{(0,v_0)}[\tmfa|_\Gamma] + D \ml 1 + TV_{(0,v_0)}[\lambda|_{\Gamma}] + \sup_{v \in (0,v_0)} TV_{(0,v) \times \{v\}}[\nu]\mr  + A[\nu,\tp],
\end{equation}
\item \eqref{Area129}
\begin{equation}\label{Area271}
TV_{(0,v_0)}[\tmfa|_{\Gamma}] 
\lesssim D\ml 1 + TV_{(0,v_0)}[\lambda|_{\Gamma}] + \sup_{u\in(0,v_0)}TV_{\{u\}\times(u,v_0)}[\lambda]\mr + A[\lambda,\tp],
\end{equation}
\item \eqref{Area175}:
\begin{equation}\label{Area272}
\sup_{u \in (0,v_0)} TV_{ \{u\} \times (u,v_0) } [\log \lambda] \lesssim D^2 + DX + A[\tp^\dagger, \ttmfb/\nu],
\end{equation}
\item \eqref{Area180}:
\begin{equation}\label{Area273}
\sup_{v\in(0,v_0)} TV_{(0,v) \times \{v\}}[\log |\nu|] \lesssim D^2 + DY + TV_{(0,v_0)}[\log\lambda|_{\Gamma}] + A[\tp^\dagger,\tmfa/\lambda],
\end{equation}
\item \eqref{Area182}:
\begin{equation}\label{Area274}
TV_{(0,1)}[\log(\lambda|_{\Gamma})] \lesssim D^2 + A[\tp^\dagger,\ttmfb/\nu],
\end{equation}
\item \eqref{Area190}:
\begin{equation}\label{Area275}
A[\tp^\dagger, \tmfa/\lambda] \lesssim DY\ml 1 + \sup_{u \in (0,v_0)}TV_{\{u\}\times(u,v_0)}[\lambda]\mr + D^2 + (D+Y)\cdot A[\lambda,\tp],
\end{equation}
\item \eqref{Area198}:
\begin{equation}\label{Area276}
\begin{aligned}
A[\tp^\dagger, \ttmfb/\nu] \lesssim&\; X \cdot TV_{(0,v_0)}[\ttmfb|_{\Gamma}] + DX\ml 1 +  TV_{(0,v_0)}[\lambda|_{\Gamma}] + \sup_{v \in (0,v_0)} TV_{(0,v) \times \{v\} }[\nu] \mr \\
&+ D^2 + (D + X)\cdot A[\nu,\tp],
\end{aligned}
\end{equation}
\item \eqref{Area235}:
\begin{equation}\label{Area277}
A[\lambda,\tp] \lesssim (D+Y)A[\tp^\dagger,\ttmfb/\nu] + D^2\cdot A[\nu,\tp] + D^2Y + D^3,
\end{equation}
\item \eqref{Area268}:
\begin{equation}\label{Area278}
A[\nu,\tp] \lesssim  X \cdot TV_{(0,v_0)}[\lambda|_{\Gamma}] + (X + D) \cdot A[\tp^\dagger,\tmfa/\lambda] + D^2 \cdot A[\lambda,\tp] + D^2X + D^3.
\end{equation}
\end{enumerate}
Note that we have used $A[\nu,\tp^\dagger] = A[\nu,\tp]$ and $A[\lambda,\tp^\dagger] = A[\lambda,\tp]$ to simplify and to rewrite \eqref{Area275} - \eqref{Area278}.\\

Recall that by our hypothesis, we have that \eqref{Area75} holds. Plugging (1) to (6) into (7) to (10), keeping the dominant terms, and employing $TV_{(0,v_0)}[\ttmfb|_{\Gamma}] = TV_{(0,v_0)}[\tmfa|_{\Gamma}]$, we obtain
\begin{itemize}
\item[(7'):] \begin{equation}\label{Area279}
\begin{aligned}
A[\tp^\dagger,\tmfa/\lambda] &\lesssim DY\ml 1 + D^2 + DX + A[\tp^\dagger,\ttmfb/\nu]\mr + D^2 + (D+Y)\cdot A[\lambda,\tp] \\
&\lesssim DZ + D^2 + DZ\cdot A[\tp^\dagger,\ttmfb/\nu] + (D+Z)A[\lambda,\tp].
\end{aligned}
\end{equation}
\item[(8'):] \begin{equation}\label{Area280}
\begin{aligned}
A[\tp^\dagger,\ttmfb/\nu] \lesssim&\; X\ml D\ml 1 + TV_{(0,v_0)}[\lambda|_{\Gamma}] + \sup_{v\in(0,v_0)}TV_{(0,v) \times \{v\}}[\lambda]\mr + A[\lambda,\tp] \mr + DX  \\
&+ DX\ml TV_{(0,v_0)}[\lambda|_{\Gamma}]\mr + DX \ml \sup_{v \in (0,v_0)} TV_{(0,v) \times \{v\} }[\nu] \mr + D^2 + (D + X)\cdot A[\nu,\tp], \\
\lesssim&\; DX \ml TV_{(0,v_0)}[\lambda|_{\Gamma}] + \sup_{v\in(0,v_0)}TV_{(0,v) \times \{v\} }[\lambda] +\sup_{v\in(0,v_0)}TV_{(0,v) \times \{v\} }[\nu] \mr \\
&+ X \cdot A[\lambda,\tp] + (D+X) \cdot A[\nu,\tp] + DX + D^2 \\
\lesssim&\; DX \ml A[\tp^\dagger, \ttmfb/\nu] + A[\tp^\dagger,\tmfa/\lambda] \mr + X \cdot A[\lambda,\tp] + (D+X) \cdot A[\nu,\tp] + DX + D^2.
\end{aligned}
\end{equation}
\end{itemize}
For $\varepsilon_0 \in (0,1)$ being sufficiently small, inequality (8') implies
\begin{equation}\label{Area281}
\begin{aligned}
A[\tp^\dagger,\ttmfb/\nu]
\lesssim&\; DZ \cdot A[\tp^\dagger,\tmfa/\lambda] + Z \cdot A[\lambda,\tp] + (D+Z) \cdot A[\nu,\tp] + DZ + D^2.
\end{aligned}
\end{equation}
To further simplify (9') and (10'), we consider the following simplifications to (9) and (10)
\begin{equation}\label{Area282}
A[\lambda,\tp] \lesssim (D+Z)A[\tp^\dagger,\ttmfb/\nu] + D^2\cdot A[\nu,\tp] + D^2Z + D^3
\end{equation} 
and
\begin{equation}\label{Area283}
A[\nu,\tp] \lesssim  Z \cdot A[\tp^\dagger,\ttmfb/\nu] + (D + Z) \cdot A[\tp^\dagger,\tmfa/\lambda] + D^2 \cdot A[\lambda,\tp] + D^2Z + D^3.
\end{equation}
Plugging \eqref{Area283} into \eqref{Area282}, we get
\begin{equation}\label{Area284}
A[\lambda,\tp] \lesssim (D+Z)A[\tp^\dagger,\ttmfb/\lambda] + D^2(D+Z)A[\tp^\dagger,\tmfa/\nu] + D^4 \cdot A[\lambda,\tp] + D^2Z + D^3.
\end{equation} 
For $\varepsilon_0$ sufficiently small, we thus obtain
\begin{itemize}
\item[(9')]: \begin{equation}\label{Area285}
A[\lambda,\tp] \lesssim (D+Z)A[\tp^\dagger,\ttmfb/\nu] + (D+Z)A[\tp^\dagger,\tmfa/\lambda] + D^2Z + D^3.
\end{equation}
\end{itemize}
We then plug \eqref{Area285} to \eqref{Area283} and obtain
\begin{itemize}
\item[(10')]: \begin{equation}\label{Area286}
A[\nu,\tp] \lesssim (D+Z)A[\tp^\dagger,\ttmfb/\nu] + (D+Z)A[\tp^\dagger,\tmfa/\lambda] + D^2Z + D^3.
\end{equation}
\end{itemize}
Substituting (9') and (10') into (7') and (8'), we then derive
\begin{equation}\label{Area287}
\begin{aligned}
A[\tp^\dagger,\tmfa/\lambda] 
&\lesssim DZ + D^2 + (D+Z)^2 A[\tp^\dagger,\ttmfb/\nu] + (D+Z)^2A[\tp^\dagger,\tmfa/\lambda]
\end{aligned}
\end{equation}
and
\begin{equation}\label{Area288}
\begin{aligned}
A[\tp^\dagger,\ttmfb/\nu] 
&\lesssim DZ + D^2 + (D+Z)^2 A[\tp^\dagger,\ttmfb/\nu] + (D+Z)^2A[\tp^\dagger,\tmfa/\lambda].
\end{aligned}
\end{equation}
With a sufficiently small $\varepsilon_0$, \eqref{Area287} and \eqref{Area288} imply
\begin{equation}\label{Area294}
\begin{aligned}
A[\tp^\dagger,\tmfa/\lambda] 
&\lesssim DZ + D^2 + (D+Z)^2 A[\tp^\dagger,\ttmfb/\nu]
\end{aligned}
\end{equation}
and
\begin{equation}\label{Area295}
\begin{aligned}
A[\tp^\dagger,\ttmfb/\nu] 
&\lesssim DZ + D^2  + (D+Z)^2A[\tp^\dagger,\tmfa/\lambda].
\end{aligned}
\end{equation}
By employing \eqref{Area288} into \eqref{Area287}, we thus obtain
\begin{equation}\label{Area289}
\begin{aligned}
A[\tp^\dagger,\tmfa/\lambda] 
&\lesssim DZ + D^2 + (D+Z)^4 A[\tp^\dagger,\tmfa/\lambda].
\end{aligned}
\end{equation}
By shrinking $\varepsilon_0$ if necessary, inequality \eqref{Area289} gives
\begin{equation}\label{Area290}
\begin{aligned}
A[\tp^\dagger,\tmfa/\lambda] 
&\lesssim DZ + D^2.
\end{aligned}
\end{equation}
Together with \eqref{Area288}, this implies that
\begin{equation}\label{Area291}
\begin{aligned}
A[\tp^\dagger,\ttmfb/\nu] 
&\lesssim DZ + D^2.
\end{aligned}
\end{equation}
Substituting \eqref{Area290} and \eqref{Area291} back to \eqref{Area285} and \eqref{Area286}, we hence get
\begin{equation}\label{Area293}
A[\lambda,\tp] \lesssim DZ^2 + D^2Z+ D^3 \quad \text{ and } \quad 
A[\nu,\tp] \lesssim DZ^2 + D^2Z+ D^3.
\end{equation}
Plugging \eqref{Area290}, \eqref{Area291}, \eqref{Area293} into (4), (5), (6), we thus prove
\begin{equation}\label{Area296}
\begin{aligned}
\sup_{u \in (0,v_0)} TV_{ \{u\} \times (u,v_0) } [\log \lambda] &\lesssim DZ + D^2, \\
\sup_{v\in(0,v_0)} TV_{(0,v) \times \{v\}}[\log |\nu|] &\lesssim DZ + D^2, \\
TV_{(0,1)}[\log(\lambda|_{\Gamma})] &\lesssim DZ + D^2. \\
\end{aligned}
\end{equation}
By applying all the above estimates to (1), (2), (3), we now conclude
\begin{equation}\label{Area297}
\begin{aligned}
\sup_{u \in (0,v_0)}TV_{\{u\}\times(u,v_0)}[\tmfa] \lesssim D, \quad \quad 
\sup_{v \in (0,v_0)}TV_{(0,v) \times \{v\}}[\ttmfb] \lesssim D, \quad \text{ and } \quad
TV_{(0,v_0)}[\tmfa|_{\Gamma}]  \lesssim D.
\end{aligned}
\end{equation}
In view of Lemma \ref{AreaLemma1} and \eqref{Area15}, we observe that
\begin{equation}\label{Area298}
\begin{aligned}
X &\lesssim \sup_{u \in (0,v_0)}TV_{\{u\} \times (u,v_0)}\left[ \frac{\tmfa}{\lambda}\right] \\
&\lesssim \sup_{u \in (0,v_0)} TV_{\{u\}\times (u,v_0)}[\lambda] + \sup_{u \in (0,v_0)} TV_{\{u\}\times (u,v_0)}[\tmfa] \lesssim D.
\end{aligned}
\end{equation}
A similar argument yields
\begin{equation}\label{Area299}
\begin{aligned}
Y &\lesssim \sup_{v \in (0,v_0)}TV_{(0,v) \times \{v\} }[\tmfb] + \sup_{v \in (0,v_0)}TV_{(0,v) \times \{v\} }[\nu] \\
&\lesssim \sup_{v \in (0,v_0)}TV_{(0,v) \times \{v\} }[\ttmfb] + \sup_{v \in (0,v_0)}TV_{(0,v) \times \{v\} }[\nu] + D \lesssim D.
\end{aligned}
\end{equation}
Henceforth, by employing \eqref{Area298} and \eqref{Area299}, we arrive at
\begin{equation}\label{Area300}
Z \lesssim D.
\end{equation}
Substituting \eqref{Area300} back into \eqref{Area290}, \eqref{Area291}, \eqref{Area293}, \eqref{Area296}, we thus verify that \eqref{Area301} and \eqref{Area302} holds. \\
\end{proof}

\section{\texorpdfstring{$C^1$ Extension Theorems}{C1 Extension Theorems}}\label{C1 Extension} 

We now establish the $C^1$ extension theorem in this section. We begin by defining
\begin{equation}\label{alpha'}
\alpha' := \frac{1}{\dv r} \dv \left( \frac{\dv (r\phi)}{\dv r} \right) =\frac{\dv \alpha}{\dv r}.
\end{equation}
Recall that the $C^1$-regularity of the solution to the system is equivalent to the boundedness of the term $|\alpha'|$ defined in \eqref{alpha'}. Furthermore, for the charged scalar field system, the \textit{Doppler exponent} $\gamma$ is defined in \eqref{gamma} and given by
$$\gamma(u,v) := \int_{0}^u \frac{-\du r}{r}\frac{\mu - Q^2/r^2}{1-\mu}(u',v) \D u'.$$
Our results for this section are summarized in the two theorems below.
\begin{theorem}\label{SET} ($C^1$ Extension Theorem with Doppler Exponent $\gamma$.) Suppose that we have $C^1$ initial data $\alpha$ along $C_0^+$. Let $v_* > 0$ and assume for every $\hat{v} \in (0,v_*)$, a $C^1$ solution exists on $\mathcal{D}(0,\hat{v})$. If we have
\begin{equation}\label{SET4}
\sup_{\mathcal{D}(0,v_*)} \gamma < + \infty,
\end{equation}
then a $C^1$ solution can be extended to $\mathcal{D}(0,\tilde{v})$ with $\tilde{v} > v_*$.
\end{theorem}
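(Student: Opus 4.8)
# Proof Proposal for Theorem 5.1 (C$^1$ Extension with Doppler Exponent $\gamma$)

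The plan is to show that under the hypothesis $G := \sup_{\mathcal{D}(0,v_*)} \gamma < +\infty$, the quantity $B(u) := \sup_{v \in [u,v_*]} |\alpha'|(u,v)$ stays bounded as $u \to v_*^-$, and then to propagate this bound slightly past $v = v_*$ to produce an actual extension. The starting point is the evolution equation for $\alpha'$ along incoming null curves, which one computes from the definitions $\alpha = \dv(r\phi)$, $\alpha' = \dv\alpha/\dv r$, using the wave equation \eqref{Setup16}--\eqref{Setup71}, the Raychaudhuri equations \eqref{Setup14}--\eqref{Setup15}, and \eqref{Setup20}, \eqref{Setup27}, \eqref{Setup29}, \eqref{Setup30}; this yields an identity for $\du\alpha'$ of the schematic form displayed in \eqref{a' intro1}, with eleven source terms once one integrates the gauge phase $e^{\int_0^u \ii\e A_u\,du'}$ out. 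Each source term $I_i$ is then to be estimated along $u \in [0,v_*]$.

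\textbf{Main estimates.} The key technical inputs are: (i) Property~1, $(\dv r)(u,v) = (\dv r)(0,v)e^{-\gamma(u,v)}$, obtained from integrating $\du\log(\dv r) = \frac{\du r}{r}\frac{\mu - Q^2/r^2}{1-\mu}$; (ii) Property~2, $\frac{1}{1-\mu(u,v)} \leq \frac{1}{1-\mu(0,v)}e^{\gamma(u,v)} \leq \frac{e^G}{1-\mu_*}$, obtained by dropping the manifestly negative $-4\pi r|D_u\phi|^2/(-\du r)$ term in the evolution of $1/(1-\mu)$; and (iii) the improved charge bounds \eqref{Estimates14}, \eqref{Estimates19} from Proposition~\ref{PropEstimates}. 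The model cases are $I_3$ and $I_{10}$: for $I_3$ one uses the identity $\dv\!\big(r^3(r\dv\phi/\dv r)^3\big) = 3\alpha' r^3 (r^2(\dv\phi)^2/\dv r)$ together with $r^3(r\dv\phi/\dv r)^3|_\Gamma = 0$, $\dv m = 2\pi r^2(1-\mu)|\dv\phi|^2/\dv r + Q^2\dv r/(2r^2)$, and Property~2 to get $r^3(r|\dv\phi|/\dv r)^3 \lesssim \frac{e^G}{1-\mu_*}B'(u)r^4\mu$, hence $|I_3|(u,v) \lesssim \frac{e^G}{1-\mu_*}\int_0^u \frac{-\du r}{r}\frac{\mu}{1-\mu}B(u')\,du'$; for $I_{10}$ the enhanced estimate $|Q| \lesssim r^{3/2-\chi/2}$ gives $|I_{10}|(u,v) \lesssim \frac{\e e^G r(0,v_*)^{1/2}}{1-\mu_*}\int_0^u \frac{-\du r}{r^{1/2}}B(u')\,du'$. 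The remaining $I_i$ are handled by the same combination of the $\dv$-primitive trick, Property~2, and the $Q$, $Q\phi$ estimates. Collecting everything produces an inequality
\begin{equation*}
B(u) \leq \sup_{\{0\}\times[0,v_*]}|\alpha'| + C_{11} + \int_0^u \Big( C_9 \frac{-\du r}{r}\frac{\mu}{1-\mu} + C_{10}\frac{-\du r}{r^{1/2}}\Big)(u',v)\,B(u')\,du',
\end{equation*}
whence Gr\"onwall's inequality and the identities $\int_0^u \frac{-\du r}{r}\frac{\mu-Q^2/r^2}{1-\mu}\,du' = \gamma(u,v) \leq G$, $\int_0^u \frac{-\du r}{r^{1/2}}\,du' \leq 2r(0,v)^{1/2}$ give $B(u) \leq (\sup_{\{0\}\times[0,v_*]}|\alpha'| + C_{11})\exp(C_9 G + 2C_{10}r(0,v_*)^{1/2}) < \infty$, uniformly in $u \in [0,v_*)$.

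\textbf{Extension past $v_*$.} Having a uniform bound on $|\alpha'|$ (equivalently, on all first derivatives appearing in Definition~\ref{C1solution}) up to the triangular boundary $v = v_*$, standard local existence for the characteristic initial value problem extends the $C^1$ solution to $\mathcal{D}(0,v_*)$ itself. To push strictly beyond, one works on thin rectangles $U_\xi(u_1) = [0,u_1)\times[v_*,v_*+\xi]$: a continuity/bootstrap argument on $u_2 := \sup\{u_1 \in [0,v_*) : \sup_{\mathcal{D}(0,u_1,v_*+\xi)}\gamma \leq 3G\}$ shows that, for $\xi$ small enough, $\gamma$ cannot inflate past $2G$ on $U_\xi(u_2)$, because $|\dv\gamma|(u,v) \leq C_{14}$ is bounded there (via \eqref{Setup99} and the estimates just obtained for $T_1, T_2, T_3$, using the uniform $|\alpha'|$-bound $B_2$), so $\gamma(u,v) \leq \gamma(u,v_*) + C_{14}\xi \leq G + C_{14}\xi \leq 2G$ once $\xi \leq G/C_{14}$. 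This forces $u_2 = v_*$, yielding a uniform $|\alpha'|$-bound on $\mathcal{D}(0,v_*,v_*+\xi)$ and hence a $C^1$ solution on $\mathcal{D}(0,\tilde v)$ with $\tilde v = v_* + \xi > v_*$.

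\textbf{Main obstacle.} The hard part is not any single estimate but the bookkeeping: extracting the correct $\gamma$-dependence from each of the eleven source terms so that the final Gr\"onwall exponent is controlled purely by $G$ and $r(0,v_*)$ (and not by, say, $\sup 1/(1-\mu)$, which is not a priori finite). This is exactly where the new improved bounds on $|Q|$ and $|Q\phi|$ from Section~\ref{Q and phi} are indispensable — the naive product of $|Q|$ and $|\phi|$ bounds would lose a power of $r$ and destroy the integrability needed for the $I_{10}$-type terms — and where the $\dv$-primitive trick (turning $(r|\dv\phi|/\dv r)^3$ into something controlled by $B'(u)\cdot r^4\mu$) converts a dangerous cubic nonlinearity into a linear-in-$B$ Gr\"onwall kernel weighted precisely by the integrand of $\gamma$.
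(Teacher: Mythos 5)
Your proposal is correct and follows essentially the same route as the paper: the same decomposition of $\du\alpha'$ into eleven source terms, the same key inputs (the Doppler identity for $\dv r$, the bound $\frac{1}{1-\mu} \leq \frac{e^G}{1-\mu_*}$, the $\dv$-primitive trick for $(r\dv\phi/\dv r)^3$, and the improved $|Q|$, $|Q\phi|$ bounds), the same Gr\"onwall closure, and the same rectangular-strip bootstrap on $u_2$ with the $|\dv\gamma| \leq C_{14}$ estimate to extend past $v_*$. The only blemish is the slip $\alpha = \dv(r\phi)$ (it should be $\alpha = \dv(r\phi)/\dv r$, consistent with $\alpha' = \dv\alpha/\dv r$), but this does not affect any of the substantive computations.
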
 
As mentioned in the introduction, analogous to \cite{BKM}, the above theorem implies that the inability to extend the solution past some $v_* > 0$ necessitates the ``finite-time" blowup of the Doppler exponent, corresponding to an integration of some physical quantity.

A related but independent $C^1$ extension theorem with mass ratio for the charged system, motivated by  \cite{christ2}, is given as follows.
\begin{theorem}\label{FET}
($C^1$ Extension Theorem with Mass Ratio $\mu$.) Suppose that we have $C^1$ initial data $\alpha$ along $C_0^+$. Let $v_* > 0$ and assume for every $\hat v \in (0,v_*)$, a $C^1$ solution exists on $\mathcal{D}(0,\hat v)$. Then, there exists a constant $\varepsilon' > 0$, such that if we have
\begin{equation}\label{FETsup}
\lim_{u \rightarrow v_*} \sup_{\mathcal{D}(u,v_*)} \mu < \varepsilon',
\end{equation}
then a $C^1$ solution can be extended to $\mathcal{D}(0,\tilde{v})$ for some $\tilde{v} > v_*$.
\end{theorem}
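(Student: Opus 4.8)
\textbf{Proof proposal for Theorem \ref{FET}.} The plan is to mirror the structure of the proof of Theorem \ref{SET} outlined in the introduction, but to replace the role played by $\gamma$ (via the Doppler identity) with a direct smallness argument driven by the hypothesis $\limsup_{u\to v_*}\sup_{\mathcal D(u,v_*)}\mu < \varepsilon'$. Concretely, pick $u_0 < v_*$ with $u_0$ sufficiently close to $v_*$ so that $\sup_{\mathcal D(u_0,v_*)}\mu \le \varepsilon'$, and work in the region $\mathcal D(u_0,u_1,v_*)$ for $u_1\in(u_0,v_*)$. Following the introduction, set
\[
A'(u_1):=\sup_{\mathcal D(u_0,u_1,v_*)}\bigl(r^{\delta}|\alpha'|\bigr)
\]
for a fixed $\delta\in(0,\tfrac12]$, and the goal is to show $A'(u_1)\le C_{13}$ uniformly in $u_1$, where $C_{13}$ is independent of $u_1$. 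The weight $r^{\delta}$ is essential: because $r\to 0$ towards $\mathcal B_0$, only the weighted quantity has a chance of staying bounded, and the $\delta$-loss is exactly what one pays to close the estimates near the singular vertex.

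The key steps, in order: (1) Derive the evolution equation \eqref{a' intro1} for $\du\alpha'$ and integrate it along incoming null curves, giving $|\alpha'|(u,v)\le \sup_{\{u_0\}\times[u_0,v_*]}|\alpha'| + \sum_{i=1}^{11}|I_i|(u,u_0,v)$ with the $I_i$ as in the introduction. (2) Estimate each $I_i$ using the smallness of $\mu$ (hence of $1/(1-\mu)$, which is close to $1$ rather than large — this is the crucial difference from the $\gamma$-theorem, where $1/(1-\mu)$ could be large and one had to absorb it into $e^{\gamma}$), together with the improved estimates for $|Q|$, $|Q\phi|$ from Proposition \ref{PropEstimates} and Lemma \ref{xlogx}; in particular the identity
\[
\dv\Bigl(r^3\bigl(\tfrac{r\dv\phi}{\dv r}\bigr)^3\Bigr)=3\alpha' r^3\,\tfrac{r^2(\dv\phi)^2}{\dv r}
\]
lets one trade powers of $\tfrac{r|\dv\phi|}{\dv r}$ for $A'(u_1)$ and a factor of $\mu$, exactly as in \eqref{partial phi 3} but now without the $e^G$ inflation. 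This yields a Grönwall-type inequality $A'(u_1)\le (\text{data}) + \varepsilon'\!\cdot(\text{bounded})\cdot A'(u_1) + \int (\cdots) A'$, which for $\varepsilon'$ small enough closes to give $A'(u_1)\le C_{13}$. (3) From $A'(u_1)\le C_{13}$ deduce $\tfrac{r|\dv\phi|}{\dv r}(u,v)\le C_{14}r^{1-\delta}(u,v)$, then via \eqref{Setup25} and the $Q$-estimate \eqref{Estimates17} obtain the improved mass bound $m(u,v)\le \tfrac{C_{15}}{2}r^{2-\chi}(u,v)$, hence $\mu(u,v)\le C_{15}r^{1-\chi}(u,v)$ for $0<\chi<\delta$. (4) Feed these enhanced bounds back into the estimates for the $I_i$ to obtain a \emph{uniform} (unweighted) bound on $|\alpha'|$ throughout $\mathcal D(u_0,v_*)$, which together with the analogous control of $r\dv r,\ r\du r,\ \phi,\ Q/r$ shows the $C^1$ solution extends to $\mathcal D(0,v_*)$.

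To push past the triangular boundary, introduce the thin rectangle $U_{\xi}(u_1)=[u_0,u_1)\times[v_*,v_*+\xi]$ and the quantities $\overline{A'_{\xi}}(u_1):=\sup_{U_{\xi}(u_1)}(r^{\delta}|\alpha'|)$ and $A'_*:=\sup_{u_1\in(u_0,v_*)}A'(u_1)$. Repeating the argument of step (2) on $\mathcal D(u_0,u_1,v_*+\xi)$ — using that $\mu$ remains small there by continuity and shrinking $\xi$ if necessary — gives $\sup_{\mathcal D(u_0,u_1,v_*+\xi)}(r^{\delta}|\alpha'|)\le\max\{A'_*,\overline{A'_{\xi}}(u_1)\}$, a closed inequality for $\overline{A'_{\xi}}(u_1)$ once $\varepsilon'$ and $\xi$ are small. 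Then steps (3)–(4) in the rectangle upgrade this to a uniform bound on $|\alpha'|$ in $U_{\xi}(u_1)$, and letting $u_1\to v_*$ yields the $C^1$ solution on $\mathcal D(0,\tilde v)$ with $\tilde v=v_*+\xi$.

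The main obstacle I expect is step (2): controlling the eleven terms $I_i$, several of which carry factors $\tfrac{r|\dv\phi|}{\dv r}$ raised to powers up to three, or factors of $Q$, $Q\phi$, $A_u$ and $\im(\phi^{\dagger}\dv\phi)$. Each must be re-expressed so that (a) the weight $r^{\delta}$ is honoured with only the claimed $\delta$-loss, (b) the resulting coefficient multiplying $A'(u_1)$ (or its square root) carries either a small constant $\propto\varepsilon'$ or an integrable-in-$u$ factor like $\tfrac{-\du r}{r^{\delta}}$ whose integral is $\lesssim r(0,v_*)^{1-\delta}$, and (c) the charged terms are handled by the $|Q|,|Q\phi|$ estimates of Proposition \ref{PropEstimates} combined with the $x^c\log(1/x)$ trick of Lemma \ref{xlogx}. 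The bookkeeping — in particular keeping track of which power of $r$ and which power of $A'(u_1)$ appears in each $I_i$, and verifying that the feedback loop in step (4) genuinely improves the exponent from $2-2\delta$ to something strictly positive so that $\mu\to 0$ at the right rate — is where the real work lies; conceptually, though, the small-$\mu$ hypothesis removes the blueshift inflation that made the $\gamma$-version delicate, so no new ideas beyond careful estimation should be needed.
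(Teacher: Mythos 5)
Your proposal is correct and follows the same architecture as the paper's proof: the weighted supremum $A'(u_1)=\sup(r^{\delta}|\alpha'|)$ over $\mathcal{D}(u_0,u_1,v_*)$, a Gr\"onwall closure driven by the smallness of $\mu$, the bootstrap to the unweighted $B'(u_1)$ via the improved bounds $\frac{r|\dv\phi|}{\dv r}\lesssim r^{1-\delta}$ and $\mu\lesssim r^{1-\chi}$, and the rectangle-strip continuity argument to cross $v=v_*$.

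One genuine divergence worth recording is how you handle the cubic term $\left(\frac{r|\dv\phi|}{\dv r}\right)^3$. You propose reusing the identity $\dv\!\left(r^3\!\left(\frac{r\dv\phi}{\dv r}\right)^3\right)=3\alpha'\,r^3\,\frac{r^2(\dv\phi)^2}{\dv r}$ from the $\gamma$-theorem, then substituting $|\alpha'|\leq A'(u_1)r^{-\delta}$ inside the $v$-integral and pulling $r^{3-\delta}(u,v)$ out by monotonicity of $r$; this yields $\left(\frac{r|\dv\phi|}{\dv r}\right)^3\leq\frac{3}{4\pi}\,\frac{\varepsilon}{1-\varepsilon}\,A'(u_1)\,r^{1-\delta}$. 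The paper instead invokes a separate calculus lemma (Lemma \ref{L2}, Christodoulou's $|f|^3\leq\frac{3ab}{1-\delta}r^{1-\delta}$ lemma from \cite{christ2}), estimating $a=\sup(g/r)$ via smallness of $\mu$ and $b=\sup(r^{\delta}|df/dr|)$ via $A'(u_1)$, giving $\left(\frac{r|\dv\phi|}{\dv r}\right)^3\leq\frac{3(3-\delta)}{4\pi(1-\delta)(2-\delta)}\frac{\varepsilon}{1-\varepsilon}A'(u_1)r^{1-\delta}$. Both routes close; your direct route produces a cleaner, $\delta$-independent constant, while the paper's preserves the structure of the original uncharged argument.

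Two small refinements to your expectations in step (2). First, the smallness mechanism is not uniform across the eleven terms: for the contribution $I_9$ coming from $|\phi|^2$ (whose estimate $|\phi|^2\lesssim \varepsilon/(1-\varepsilon)+r^{-1-\iota}$ contains a piece with no $\varepsilon$), the coefficient carries no $\varepsilon'$-factor and is instead made $<\frac18$ by choosing $u_0$ close enough to $v_*$ so that $r^{1-\iota}(u_0,v_*)$ is small; this is a second, independent smallness knob you should note explicitly. Second, the parameter constraint $0<\chi<2\delta$ is needed to integrate $I_{11}$ and to close the feedback loop; the paper ultimately picks $\delta=\frac12$, $\chi=\iota=\frac14$ so that the improved bound on $\sum|I_i|$ is $\lesssim r^{1/4}(u,v)$, and your concern about verifying the exponent stays positive is exactly the content of that bookkeeping.
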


In this section, we will first prove Theorem \ref{SET}, followed by Theorem \ref{FET}. For both theorems, we will obtain the corresponding extension theorems via two steps. In the first step, under the assumptions in Theorem \ref{SET} or Theorem \ref{FET}, we show that the solution can be extended as a $C^1$ solution up to $\mathcal{D}(0,v_*).$ In the second step, we then extend such a $C^1$ solution to $\mathcal{D}(0,\tilde{v})$ with some $\tilde{v} > v_*$, which completes the proofs.

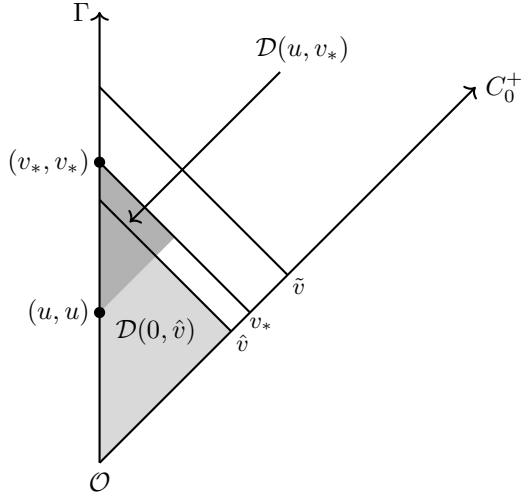
\begin{figure}[ht]
\begin{minipage}[!t]{0.4\textwidth}
	\centering
\begin{tikzpicture}
	\begin{scope}[thick]
    \fill[gray!30] (0,0) to (1.75,1.75) to (0,3.5) to (0,0);
    \fill[gray!60] (0,2) to (1,3) to (0,4) to (0,2);
	\draw[->] (0,0) node[anchor=north]{$\mathcal{O}$} --  (0,6) node[anchor = east]{$\Gamma$};
	\draw[->] (0,0) -- (5,5) node[anchor = west]{$C_0^+$};
	\draw(0,3.5) -- (1.75,1.75);
        \draw(0,4) -- (2,2);
        \draw(0,5) -- (2.5,2.5);
        \node[] at (0.75,1.75) {$\mathcal{D}(0,\hat v)$};
        \node[] at (1.9,1.6) {\small $\hat v$};
        \node[] at (2.15,1.85) {\small $v_*$};
        \node[] at (2.65,2.35) {\small $\tilde{v}$};
        \node[anchor = east] at (0,2) {$(u,u)$};
        \node at (0,2)[circle,fill,inner sep=1.5pt]{};
        \node at (0,4)[circle,fill,inner sep=1.5pt]{};
        \node[anchor = east] at (0,4) {$(v_*,v_*)$};
        \draw[<-](0.4,3.2) -- (2.4,5.2);
        \node[] at (2.7,5.5) {$\mathcal{D}(u,v_*)$};
	\end{scope}
\end{tikzpicture}
\end{minipage}
\caption{Penrose diagram representing the relative positions of $\hat v, v_*, \tilde{v}$ and the regions $\mathcal{D}(0,\hat{v})$ and $\mathcal{D}(u,v_*)$.}
\label{SetupFig4}
\end{figure}

Before we begin with the proof, we list several computations that are commonly used in both proofs. 
\begin{lemma}\label{C1complemma}
For any $v_* > 0$ and $u_1 \in (0,v_*)$, assume that a $C^1$ solution exists in $\mathcal{D}(0,v_*)\setminus \mathcal{D}(u_1,v_*)$. For each  $(u,v) \in \mathcal{D}(0,v_*)\setminus \mathcal{D}(u_1,v_*)$, we have
\begin{equation}\label{FET14}
\begin{aligned}
\du \ml \alpha' e^{\int^u_{u_0} \ii \e A_u(u',v) \D u'} \mr &= e^{\int^u_{u_0} \ii \e A_u(u',v) \D u'} \sum_{i=1}^{11}K_i(u,v) 
\end{aligned}
\end{equation}
with
\begin{equation}\label{K}
\begin{aligned}
K_1(u,v) &= -\frac{8 \pi \e Q \ml \im(\phi^\dagger \dv \phi) \mr (\dv \phi)(\du r)}{(\dv r)^2(1-\mu)}, && K_2(u,v)= \frac{4 \pi Q^2|\dv \phi|^2(\dv \phi)(\du r) }{r(\dv r)^3 (1- \mu)}, \\
K_3(u,v) &= -\frac{4 \pi r|\dv \phi|^2(\dv \phi)(\du r) }{(\dv r)^3 (1- \mu)},&& K_4(u,v)= - \frac{2\du r \mu}{r(1-\mu)}\alpha', \\
K_5(u,v) &= \frac{2Q^2 \du r}{r^3(1-\mu)}\alpha', && K_6(u,v)= \frac{5 Q^2\du r \dv \phi}{r^3 \dv r(1-\mu)}, \\
K_7(u,v) &= -\frac{3\du r \dv \phi \mu}{r \dv r(1-\mu)}, && K_8(u,v) = -\frac{4 \pi \ii \e |\dv \phi|^2 Q\phi \du r}{(1-\mu)(\dv r)^2}, \\
K_9(u,v) &= \frac{4 \ii \pi \e^2 \phi\im(\phi^\dagger \dv \phi) r\du r }{ \dv r (1-\mu)}, && K_{10}(u,v) = -\frac{3 \ii\e  Q \du r \dv \phi}{r(1-\mu)(\dv r)}, \\
K_{11}(u,v) &= -\frac{\ii \e Q\phi \du r}{r^2 (1-\mu)}. && &&&\\
\end{aligned}
\end{equation}
Furthermore, if $\gamma(u,v) \leq G$ and $\frac{1}{1-\mu(u,v)} \leq H$ uniformly in $(u,v) \in \mathcal{D}(0,v_*)\setminus \mathcal{D}(u_1,v_*)$ for some $G,H > 0$, then for each of these $(u,v)$ and any $\chi,\iota \in (0,1)$, we have the following estimates:
\begin{equation}\label{C1Lemma2}
\begin{aligned}
|Q|(u,v) &\leq  C_6'(v_*) \mu^{\frac{1}{2}}(u,v)r(u,v), \\
|Q\phi|(u,v) &\leq C_7''(v_*)\mu^{\oh}(u,v)r^{\frac{3}{2}-\frac{\chi}{2}}(u,v) \leq C_7'(v_*)\mu^{\oh}(u,v)r(u,v), \\
|\phi|^2(u,v) &\leq C_8'(v_*)r^{-\frac{3}{2}}(u,v), \\
\end{aligned}
\end{equation}
with
\begin{equation}\label{SET22}
C_6'(v_*;\chi) := C_6(v_*;\chi)r^{\frac{1}{2}-\frac{\chi}{2}}(0,v_*),
\end{equation}
\begin{equation}\label{SET24}
C_7'(v_*;\chi) := C_7''(v_*;\chi)r^{\frac{1}{2}-\frac{\chi}{2}}(0,v_*),  \quad C_7''(v_*;\chi):= C_7(v_*;\chi)\frac{e^{\oh G}}{(1-\mu_*)^\oh},
\end{equation}
\begin{equation}\label{SET26}
C_8'(v_*;\iota) := \frac{H}{2\pi}r^{\frac{3}{2}}(0,v_*) + 2C_8^2\ml \oh \mr,
\end{equation}
and
\begin{equation}\label{FET130}
C_6(v_*;\chi) := \frac{2 \pi \e}{1 - \mu^*}\ml \frac{\overline{P}(0,v_*)}{\sqrt{4\pi}} + \frac{1}{\sqrt{16\pi^2 \chi \e^2}} \mr r^{\frac{\chi}{2}}(0,v_*) v_*^{\oh},
\end{equation}
\begin{equation}\label{FET243}
C_7(v_*;\chi) = \ml 8 \pi \e^2 C_{3}(0,v_*;\chi,1,\chi) + \frac{C_6^2}{2(2-\chi)} \mr^{\frac{1}{2}},
\end{equation}
\begin{equation}\label{FET300}
C_{8}(\iota):= \overline{P}(0,v_*)r(0,v_*)^{\oh + \frac{\iota}{2}} +  \frac{v_*^\oh r(0,v_*)^{\frac{\iota}{2}}}{(4\pi e^2 \iota (1-\iota))^{\frac{1}{2}}} \ml \frac{\bar{B}(0,v_*)}{(1-\overline{\mu_*})(0,v_*)}\mr^{\frac{1}{2}}.
\end{equation}

On the other hand, if $\mu(u,v) \leq \varepsilon$ for some $\varepsilon < 1$ for each $(u,v) \in \mathcal{D}(0,v_*)\setminus \mathcal{D}(u_1,v_*)$, we can deduce
\begin{equation}\label{C1Lemma1}
\begin{aligned}
|Q|(u,v) &\leq C_6 \varepsilon^{\frac{1}{2}}r^{\frac{3}{2} - \frac{\chi}{2}}(u,v), \\
|Q\phi|(u,v) &\leq C_7 \frac{\varepsilon^\oh}{(1-\varepsilon)^\oh}r^{\frac{3}{2}-\frac{\chi}{2}}(u,v), \\
|\phi|^2 &\leq \frac{1}{2\pi}\frac{\varepsilon}{1-\varepsilon} + 2C_8^2r^{-1-\iota}(u,v).
\end{aligned}
\end{equation}

\end{lemma}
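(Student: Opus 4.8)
\textbf{Proof proposal for Lemma \ref{C1complemma}.}

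The plan is to split the lemma into three essentially independent pieces: the evolution identity \eqref{FET14}--\eqref{K} for $\alpha'$, the estimates \eqref{C1Lemma2} valid under a bound on $\gamma$ and on $(1-\mu)^{-1}$, and the estimates \eqref{C1Lemma1} valid under a pointwise smallness bound on $\mu$. First I would derive \eqref{FET14}. Starting from the definition $\alpha' = (\dv \alpha)/(\dv r)$ with $\alpha = \dv(r\phi)/\dv r$, I compute $\du \alpha'$ using the wave equation in the form \eqref{Setup72} (or \eqref{Setup73}), the Raychaudhuri-type equations \eqref{Setup21}, \eqref{Setup22}, the identity \eqref{Setup20} for $\du\dv r$, the evolution \eqref{Setup25} for $\dv m$ rewritten via \eqref{Setup15}, and the formula \eqref{Setup30} for $\dv A_u$. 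This is the calculation already previewed in \eqref{a' intro1} of the introduction; carrying it out and organizing the right-hand side gives $\du \alpha' = \sum_{i=1}^{11} K_i - \ii \e A_u \alpha'$ with the $K_i$ as listed, and multiplying through by the integrating factor $e^{\int^u_{u_0} \ii \e A_u \, \D u'}$ (which is purely a phase, since $A_u$ is real) yields \eqref{FET14}. The bookkeeping here — matching each term of the raw computation with the correct $K_i$, in particular the three charge-free terms $K_3, K_4, K_7$ present already in the uncharged case and the eight genuinely charged terms — is the main bit of care required, but it is mechanical.

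Next I would establish \eqref{C1Lemma2}. For $|Q|$ I invoke Proposition \ref{PropEstimates}, namely \eqref{Estimates14} with $\overline{u} = 0$ and the gauge choice giving $r(0,v) = v/2$, so that $(v-u)^{1/2} \le v^{1/2} \le v_*^{1/2}$ and $r^{\chi/2}(0,v) \le r^{\chi/2}(0,v_*)$; bounding the explicit constant $C_1(0,v_*;\chi)$ from \eqref{Estimates11} by $C_6(v_*;\chi)$ as defined in \eqref{FET130} and absorbing the factor $r^{\frac12-\frac\chi2}(u,v) \le r^{\frac12-\frac\chi2}(0,v_*)$ produces the stated bound $|Q|(u,v) \le C_6'(v_*)\mu^{1/2}(u,v) r(u,v)$. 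For $|Q\phi|$ I use \eqref{Estimates19} with $\xi_1 = 1$, $\xi_2 = \chi$, which gives the exponent $r^{\frac32-\chi}$; but in fact I should instead use the sharper route through the mass integral: from $|Q\phi|^2 \le \int 2|Q\phi|\dv(Q\phi)$ as in \eqref{Estimates20}, bound $\int \dv m/(1-\mu) \le H \int \dv m = H(m_2-m_1) \le H m$ using the hypothesis $\frac{1}{1-\mu} \le H$ (or its consequence $\frac{1}{1-\mu(u,v)} \le \frac{1}{1-\mu_*}e^{\gamma(u,v)} \le \frac{e^G}{1-\mu_*}$ from Property 2 in the introduction), which explains the factor $\frac{e^{G/2}}{(1-\mu_*)^{1/2}}$ appearing in $C_7''$ of \eqref{SET24}. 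For $|\phi|^2$ I use the representation $\phi = \frac{1}{r}\int^v_u \dv(r\phi)$ together with Cauchy--Schwarz and \eqref{Setup15}: $|\phi|^2 \le \frac{1}{r^2}(\int^v_u \frac{\dv r}{2\pi r^2(1-\mu)})(\int^v_u \frac{2\pi r^2(1-\mu)|\dv\phi|^2}{\dv r}) + (\text{cross terms with } \phi_0\text{-type baseline})$; the first factor is controlled by $\frac{H}{2\pi}\int^v_u \frac{\dv r}{r^2} \lesssim \frac{H}{2\pi r^{3/2}}$-type after using $r(u,v)\le r(0,v_*)$, and the baseline term is handled by \eqref{Estimates13} with $\iota = 1/2$, packaging the constant as $C_8(1/2)$ from \eqref{FET300}. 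This yields \eqref{C1Lemma2} with the constants \eqref{SET22}--\eqref{SET26}.

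Finally, for \eqref{C1Lemma1} I repeat the same three estimates but now substitute the uniform hypothesis $\mu(u,v) \le \varepsilon < 1$ everywhere $\mu$ or $\mu^{1/2}$ or $(1-\mu)^{-1}$ appeared: $|Q| \le C_6 \varepsilon^{1/2} r^{3/2-\chi/2}$ directly from \eqref{TSF65}-style use of \eqref{Estimates14}; $|Q\phi| \le C_7 \frac{\varepsilon^{1/2}}{(1-\varepsilon)^{1/2}} r^{3/2-\chi/2}$ from the mass-integral bound with $m \le \frac{\varepsilon}{2(1-\varepsilon)} r$-type control and the constant $C_7$ from \eqref{FET243}; and $|\phi|^2 \le \frac{1}{2\pi}\frac{\varepsilon}{1-\varepsilon} + 2C_8^2 r^{-1-\iota}$ from the same Cauchy--Schwarz split, now bounding $\int \frac{2\pi r^2(1-\mu)|\dv\phi|^2}{\dv r} \le m \le \frac{\varepsilon r}{2(1-\varepsilon)}$. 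I expect the main obstacle to be purely organizational rather than conceptual: namely keeping the numerous constants $C_6, C_6', C_7, C_7', C_7'', C_8, C_8'$ straight and verifying that each claimed inequality in \eqref{C1Lemma2} and \eqref{C1Lemma1} matches the exponent of $r$ and the power of $\mu$ (or $\varepsilon$) that the corresponding computation actually produces — in particular the step where $r^{\frac12-\frac\chi2}(u,v)$ is traded for $r^{\frac12-\frac\chi2}(0,v_*)$ to convert a bound with exponent $\frac32-\frac\chi2$ into the cruder bound with exponent $1$. Beyond that, everything is a direct application of Proposition \ref{PropEstimates}, Lemma \ref{SetupLemma1}, Lemma \ref{xlogx}, and the equations of Section \ref{Prelim}.
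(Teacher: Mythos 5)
Your proposal is correct and follows the same route as the paper: direct computation of $\du\alpha'$ using \eqref{Setup20}, \eqref{Setup16}--\eqref{Setup17}, \eqref{Setup22}, \eqref{Setup27}, \eqref{Setup29}, \eqref{Setup30}, \eqref{Setup19}, then the integrating-factor trick, then Proposition \ref{PropEstimates} and the mass evolution equation for the pointwise bounds. One small slip: \eqref{Estimates19} with $\xi_1=1,\,\xi_2=\chi$ actually gives $r^{1-\chi}$, not $r^{3/2-\chi}$, and requires $\chi<1/2$; but you immediately discard that route in favor of the mass-integral bound, which is indeed what the paper does — though the paper works directly with $|Q\phi| \le \int (4\pi\e r^2|\phi|^2 + |Q|)|\dv\phi|$ rather than the $|Q\phi|^2 \le \int 2|Q\phi|\,\dv(Q\phi)$ power trick you cite, the two variants being equivalent in spirit and yield.
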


\remark{Note that $|Q|, |Q\phi|, |\phi|^2$ controlled in Lemma \ref{C1complemma} represent additional terms unique to our charged system. The methodologies employed to derive bounds for these terms are therefore new.}

\noindent \textit{Proof of Lemma \ref{C1complemma}.}  Starting from the definition of $\alpha'$ in \eqref{alpha'}, we compute
\begin{equation}\label{FET2}
\begin{aligned}
\du \alpha' =  \du \ml \frac{1}{\dv r} \dv \ml \frac{\dv (r\phi)}{\dv r}\mr\mr &= -\frac{\du\dv r }{(\dv r)^2}\dv \ml \frac{\dv (r \phi)}{\dv r}\mr + \frac{1}{\dv r}\du \dv \ml \frac{\dv (r\phi)}{\dv r}\mr \\
&= -\frac{\du\dv r }{\dv r}\alpha' + \frac{1}{\dv r}\du \dv \ml \frac{\dv (r\phi)}{\dv r}\mr.
\end{aligned}
\end{equation}
Observe that by \eqref{Setup20} we have
\begin{equation}\label{FET3}
 -\frac{\du\dv r }{\dv r}\alpha' = -\frac{\du r}{r}\ml \frac{\mu - Q^2/r^2}{1-\mu}\mr \alpha'.
\end{equation}
Employing \eqref{Setup16} and \eqref{Setup20}, we deduce that
\begin{equation}\label{FET4}
\begin{aligned}
\du \ml \frac{\dv (r\phi)}{\dv r}\mr = \du \ml \phi + \frac{r\dv \phi}{\dv r} \mr &= \du \phi + \frac{\dv r \du (r \dv \phi) - (r \dv \phi)\du\dv r }{(\dv r)^2} \\
&= \frac{\du \phi \dv r + \du r \dv \phi + r \du \dv \phi}{\dv r} - \frac{r \dv \phi}{\dv r} \ml \frac{\du \dv r}{\dv r} \mr \\
&= -\frac{\ii \e \Psi(A)}{\dv r} - \frac{ \dv \phi \du r}{ \dv r} \ml \frac{\mu - Q^2/r^2}{1-\mu}\mr
\end{aligned}
\end{equation}
and this implies
\begin{equation}\label{FET5}
\begin{aligned}
\frac{1}{\dv r}\du \dv \ml \frac{\dv (r\phi)}{\dv r}\mr &= - \frac{1}{\dv r}\dv \ml \frac{\ii \e \Psi(A)}{\dv r}\mr - \frac{1}{\dv r} \dv \ml \frac{ \dv \phi \du r}{ \dv r} \ml \frac{\mu - Q^2/r^2}{1-\mu}\mr \mr.
\end{aligned}
\end{equation}
Plugging \eqref{FET3} and \eqref{FET5} into \eqref{FET2}, we thus obtain
\begin{equation}\label{FET6}
\begin{aligned}
\du \alpha' &= -\frac{\du r}{r}\ml \frac{\mu - Q^2/r^2}{1-\mu}\mr \alpha' - \frac{1}{\dv r}\dv \ml \frac{\ii \e \Psi(A)}{\dv r}\mr - \frac{1}{\dv r} \dv \ml \frac{ \dv \phi \du r}{ \dv r} \ml \frac{\mu - Q^2/r^2}{1-\mu}\mr \mr \\
&=: J_1(u,v) + J_2(u,v) + J_3(u,v).
\end{aligned}
\end{equation}
For $J_3$, we compute the derivative and get
\begin{equation}\label{FET7}
\begin{aligned}
J_3(u,v) &=  - \frac{1}{\dv r} \dv \ml \frac{ \dv \phi \du r}{ \dv r} \ml \frac{\mu - Q^2/r^2}{1-\mu}\mr \mr \\
&= - \frac{\dv \phi \du r}{(\dv r)^2}\dv \ml \frac{\mu - Q^2/r^2}{1-\mu} \mr - \frac{\mu - Q^2/r^2}{(1-\mu)(\dv r)} \dv \ml \frac{\du r}{r}\frac{r \dv \phi }{\dv r}\mr.
\end{aligned}
\end{equation}
For the term $- \dv \ml \frac{\mu - Q^2/r^2}{1-\mu}\mr$, we have the expression
\begin{equation}\label{FET8.1}
\begin{aligned}
- \dv &\ml \frac{\mu - Q^2/r^2}{1-\mu}\mr = \; \dv \ml 1 + \frac{Q^2/r^2 - 1}{1-\mu}\mr = \; \frac{(1-\mu)\dv (Q^2/r^2) + (Q^2/r^2 -1)\dv \mu}{(1-\mu)^2} \\
=&  -\frac{8 \pi \e Q \im(\phi^\dagger \dv \phi)}{1-\mu}- \frac{2 \dv r Q^2}{r^3(1-\mu)} + \frac{4 \pi r|\dv \phi|^2 (Q^2/r^2-1)}{\dv r (1- \mu)} - \frac{\dv r}{r}\frac{\ml \mu - Q^2/r^2\mr(Q^2/r^2 - 1)}{(1-\mu)^2}.
\end{aligned}
\end{equation}
In the last equality above, we have applied \eqref{Setup27} and \eqref{Setup29}. 

\noindent For the term $\dv \ml \frac{\du r}{r}\frac{r \dv \phi}{\dv r}\mr$, with the help of \eqref{alpha'} and \eqref{Setup20}, we obtain
\begin{equation}\label{FET8.2}
\begin{aligned}
\dv \ml \frac{\du r}{r}\frac{r \dv \phi }{\dv r}\mr &= \frac{\du r}{r}\dv \ml  \frac{\dv(r\phi)}{\dv r} - \phi\mr + \ml \frac{r \dv \phi}{\dv r} \mr \dv \ml \frac{\du r}{r} \mr \\
&= \frac{\du r \dv r}{r}\alpha' - \frac{\du r \dv\phi}{r} + \ml \frac{r \dv \phi}{\dv r} \mr\ml \frac{r \du \dv r - \dv r \du r}{r^2} \mr \\
&= \frac{\du r \dv r}{r}\alpha' - \frac{\du r \dv\phi}{r} + \frac{\du r \dv \phi}{r}\ml \frac{\mu - Q^2/r^2}{1-\mu} - 1\mr \\
&= \frac{\du r \dv r}{r}\alpha' - \frac{2\du r \dv\phi}{r}  + \frac{\du r \dv \phi}{r}\ml \frac{\mu - Q^2/r^2}{1-\mu}\mr.\\
\end{aligned}
\end{equation}
Substituting both \eqref{FET8.1} and \eqref{FET8.2} into \eqref{FET7}, $J_3$ can be expanded as
\begin{equation}\label{FET9}
\begin{aligned}
J_3(u,v) =&\; - \frac{\dv \phi \du r}{(\dv r)^2}\dv \ml \frac{\mu - Q^2/r^2}{1-\mu} \mr - \frac{\mu - Q^2/r^2}{(1-\mu)(\dv r)} \dv \ml \frac{\du r}{r}\frac{r \dv \phi }{\dv r}\mr \\
=& \; -\frac{8 \pi \e Q (\im(\phi^\dagger \dv \phi))(\dv \phi)(\du r)}{(\dv r)^2(1-\mu)} - \frac{2 Q^2 (\dv \phi)(\du r)}{r^3 (\dv r)(1-\mu)} + \frac{4 \pi r (Q^2/r^2-1)|\dv \phi|^2(\dv \phi)(\du r) }{(\dv r)^3 (1- \mu)} \\
& \; - \frac{(\dv \phi) (\du r)}{r(\dv r)}\frac{\ml \mu - Q^2/r^2\mr(Q^2/r^2 - 1)}{(1-\mu)^2} -\frac{\du r}{r}\ml \frac{\mu - Q^2/r^2}{1-\mu}\mr \alpha'  \\
&\; + \frac{2 \du r \dv \phi}{r (\dv r)}\ml \frac{\mu - Q^2/r^2}{1-\mu} \mr - \frac{\du r \dv \phi}{r \dv \phi}\ml \frac{\mu - Q^2/r^2}{1-\mu}\mr^2 \\
:= & \sum_{i=1}^7 J_{3,i}(u.v).
\end{aligned}
\end{equation}
Observing that $J_{3,5} \equiv J_1$ and $J_{3,2} + J_{3,4} + J_{3,6} + J_{3,7}$ forms a simpler expression, we can rewrite $J_1 + J_3$ as
\begin{equation}\label{FET10}
\begin{aligned}
J_{1} + J_3 =&\; J_{3,1} + J_{3,3} + 2 J_{3,5} + (J_{3,2} + J_{3,4} + J_{3,6} + J_{3,7}) \\
=&\; -\frac{8 \pi \e Q (\im(\phi^\dagger \dv \phi))(\dv \phi)(\du r)}{(\dv r)^2(1-\mu)}  + \frac{4 \pi r (Q^2/r^2-1)|\dv \phi|^2(\dv \phi)(\du r) }{(\dv r)^3 (1- \mu)} - \frac{2\du r}{r}\ml \frac{\mu - Q^2/r^2}{1-\mu}\mr \alpha' \\
&\; + \frac{\du r \dv \phi}{r \dv r}\ml \frac{5Q^2/r^2 - 
 3 \mu}{1-\mu}\mr.
\end{aligned}
\end{equation}
With the help of \eqref{alpha'}, \eqref{Setup11}, \eqref{Setup17}, we express the terms in $J_2$ as follows:
\begin{equation}\label{FET11}
\begin{aligned}
J_2(u,v) &=  -\frac{\ii \e}{\dv r}  \dv \ml \frac{A_u \dv (r\phi) - \frac{\Omega^2 Q \phi}{4r}}{\dv r}\mr \\
&= -\frac{\ii \e}{\dv r}  \ml \dv A_u \frac{\dv (r\phi)}{\dv r} + A_u \dv \ml \frac{\dv(r\phi)}{\dv r}\mr - \dv \ml \frac{Q\phi(-\du r)}{r(1-\mu)}\mr\mr \\
&= -\ii \e \ml A_u \alpha' + \dv A_u \frac{\dv (r\phi)}{(\dv r)^2} - \frac{1}{\dv r}\dv \ml \frac{-\du r}{1-\mu}\frac{Q\phi}{r}\mr\mr.
\end{aligned}
\end{equation}
Together with \eqref{Setup19}, \eqref{Setup22}, \eqref{Setup30},  we arrive at
\begin{equation}\label{FET12}
\begin{aligned}\frac{J_2(u,v)}{-\ii \e} =& \; A_u \alpha' + \ml \frac{2 Q \du r \dv r}{r^2 (1-\mu)}\mr \frac{r \dv \phi + \phi \dv r}{(\dv r)^2} - \frac{1}{\dv r}\ml \frac{Q\phi}{r}\dv \ml \frac{-\du r}{1-\mu}\mr + \frac{-\du r}{1-\mu}\dv \ml \frac{Q\phi}{r}\mr\mr \\
=&\; A_u \alpha' + \frac{2 Q \du r \dv \phi}{r(1-\mu)(\dv r)} + \frac{2Q\phi \du r}{r^2 (1-\mu)} + \frac{1}{\dv r}\frac{Q\phi}{r}\ml \frac{1-\mu}{-\du r}\mr^{-2} \dv \ml \frac{1-\mu}{-\du r}\mr \\
&\; + \frac{\du r}{\dv r(1-\mu)}\ml \frac{\phi \dv Q}{r} + \frac{Q \dv \phi}{r} - \frac{Q\phi}{r^2} \dv r\mr \\
=&\; A_u \alpha' + \frac{2 Q \du r \dv \phi}{r(1-\mu)(\dv r)} + \frac{2Q\phi \du r}{r^2 (1-\mu)} + \frac{4 \pi |\dv \phi|^2 Q\phi \du r}{(1-\mu)(\dv r)^2} \\
&\; - \frac{4  \pi \e \phi\im(\phi^\dagger \dv \phi) r\du r }{ \dv r (1-\mu)} + \frac{Q \du r \dv \phi}{r(1-\mu)(\dv r)} - \frac{Q \phi \du r}{r^2(1-\mu)} \\
=&\; A_u \alpha' + \frac{3 Q \du r \dv \phi}{r(1-\mu)(\dv r)} + \frac{Q\phi \du r}{r^2 (1-\mu)} + \frac{4 \pi |\dv \phi|^2 Q\phi \du r}{(1-\mu)(\dv r)^2} - \frac{4  \pi \e \phi\im(\phi^\dagger \dv \phi) r\du r }{ \dv r (1-\mu)}.
\end{aligned}
\end{equation}
Combining \eqref{FET10} and \eqref{FET12}, we hence obtain
\begin{equation}\label{FET13}
\begin{aligned}
\du \alpha' =& \; J_1 + J_3 + J_2 \\
=& \; -\frac{8 \pi \e Q (\im(\phi^\dagger \dv \phi))(\dv \phi)(\du r)}{(\dv r)^2(1-\mu)}  + \frac{4 \pi r (Q^2/r^2-1)|\dv \phi|^2(\dv \phi)(\du r) }{(\dv r)^3 (1- \mu)} \\
&\;- \frac{2\du r}{r}\ml \frac{\mu - Q^2/r^2}{1-\mu}\mr \alpha' + \frac{\du r \dv \phi}{r \dv r}\ml \frac{5Q^2/r^2 - 
 3 \mu}{1-\mu}\mr \\
&\;  -\ii \e \ml A_u \alpha' + \frac{3 Q \du r \dv \phi}{r(1-\mu)(\dv r)} + \frac{Q\phi \du r}{r^2 (1-\mu)} + \frac{4 \pi |\dv \phi|^2 Q\phi \du r}{(1-\mu)(\dv r)^2} - \frac{4  \pi \e \phi\im(\phi^\dagger \dv \phi) r\du r }{ \dv r (1-\mu)} \mr. \\
\end{aligned}
\end{equation}
Treating $\ii \e A_u$ as a purely imaginary integrating factor, we thus arrive at \eqref{FET14}.

\vspace{5mm}

We now proceed to obtain estimates for $|Q|, |Q\phi|,$ and $|\phi|^2$. We first prove \eqref{C1Lemma1} under the assumption $\mu \leq \varepsilon < 1$ for all $(u,v) \in \mathcal{D}(0,v_*) \setminus \mathcal{D}(u_1,v_*)$. 

For $|Q|$, from \eqref{Estimates14} of Proposition \ref{PropEstimates}, by setting $\overline{u} = 0$ and recalling the definitions of $\bar{B}, \overline{\mu_*}, \overline{P}$ in \eqref{Bandmu*} and \eqref{overlineP}, we have that the $C_1(0,v;\chi)$ in \eqref{Estimates11} obeys the following upper bound
\begin{equation}\label{FET29}
\begin{aligned}
C_1(0,v;\chi) &\leq \frac{4 \pi \e \ml  \sup_{v' \in [0,v*]} \dv r(0,v') \mr^{\oh}}{(1 - \sup_{v' \in [0,v_*]}\mu(0,v'))^{\frac{1}{2}}}\ml \frac{\overline{P}(0,v_*)}{\sqrt{4\pi}} + \frac{1}{\sqrt{16\pi^2 \chi \e^2}} \mr \\
&\leq \frac{2 \pi \e}{1 - \mu^*}\ml \frac{\overline{P}(0,v_*)}{\sqrt{4\pi}} + \frac{1}{\sqrt{16\pi^2 \chi \e^2}} \mr < + \infty.
\end{aligned}
\end{equation}
Here we use \eqref{Setup33} and \eqref{Setup42} to bound $\sup_{v' \in [0,v_*]} \dv r(0,v')$ and $\sup_{v' \in [0,v_*]}\mu(0,v')^{\frac{1}{2}}$ respectively. We also appeal to the fact that $\overline{P}(0,v_*) = \sup_{v'\in [0,v_*]} |\phi| < + \infty$ since $\phi \in C^1$ along $C_0^+$ and $\phi|_{\Gamma} < +\infty$ from \eqref{Setup35}. Plugging these back into \eqref{Estimates14}, we thus obtain
\begin{equation}\label{FET28}
\begin{aligned}
|Q|(u,v) &\leq \frac{2 \pi \e}{1 - \mu^*}\ml \frac{\overline{P}(0,v_*)}{\sqrt{4\pi}} + \frac{1}{\sqrt{16\pi^2 \chi \e^2}} \mr r^{\frac{\chi}{2}}(0,v_*) v_*^{\oh}\varepsilon^{\frac{1}{2}} r^{\frac{3}{2} - \frac{\chi}{2}}(u,v) = C_6 \varepsilon^{\frac{1}{2}}r^{\frac{3}{2} - \frac{\chi}{2}}(u,v)
\end{aligned}
\end{equation}
with $C_6$ defined in \eqref{FET130}. Note that the constant $C_6$ defined above depends only on $\chi$ and $v_*$. In addition, for brevity, since $\chi$ and $v_*$ were fixed from the beginning, we can then drop the dependency of $C_6$ and subsequent constants on $v_*$ and $\chi$, if there is no danger of confusion.

For $|Q\phi|$, following a similar strategy as in the proof of Proposition \ref{PropEstimates} but with the assumption that $\mu$ is small, we can obtain a better estimate as follows. Since $Q|_\Gamma = 0$ and $|\phi||_{\Gamma} < + \infty$, by \eqref{Setup19}, we have
\begin{equation}\label{FET64}
\begin{aligned}
Q\phi(u,v) &= \int^v_u (\dv Q) \phi + Q \dv \phi (u,v') \mathrm{d}v', 
\end{aligned}
\end{equation}
and
$$|Q\phi|(u,v) \leq \int^v_u  (4 \pi \e r^2 |\phi|^2 + |Q|) |\dv \phi| (u,v') \mathrm{d}v'.$$

With the help of \eqref{Setup25}, this implies that
\begin{equation}\label{FET140}
\begin{aligned}
|Q\phi|(u,v) &\leq \ml \int^v_u \frac{\dv r}{2\pi r^2 (1-\mu)}(4 \pi \e r^2 |\phi|^2 + |Q|)^2(u,v')\D v'\mr^{\frac{1}{2}} \ml \int^v_u \frac{2\pi r^2(1-\mu)|\dv \phi|^2}{\dv r} (u,v') \D v'\mr^{\frac{1}{2}} \\
&\leq \ml \int^v_u \frac{\dv r}{\pi r^2 (1-\mu)}(16 \pi^2 \e^2 r^4 |\phi|^4 + |Q|^2)(u,v')\D v'\mr^{\frac{1}{2}} \frac{1}{\sqrt{2}}\mu(u,v)^{\frac{1}{2}}r(u,v)^{\frac{1}{2}} \\
&\leq \ml 16 \pi \e^2 \int^v_u \frac{\dv r}{(1-\mu)}r^2 |\phi|^4 (u,v')\D v' + \frac{1}{\pi} \int^v_u \frac{\dv r}{r^2 (1-\mu)}|Q|^2(u,v')\D v'\mr^{\frac{1}{2}} \frac{1}{\sqrt{2}}\varepsilon^{\frac{1}{2}}r(u,v)^{\frac{1}{2}}.
\end{aligned}
\end{equation}
Our strategy to estimate the first integral remains unchanged as in Proposition \ref{PropEstimates}. Hence, by setting $\overline{u} = 0$, $\xi_1 = 1$ and $\xi_2 = \chi$ (this is valid since we did not impose $\xi_2 \in (0,1-\chi)$ for this estimate) in \eqref{Estimates36}, we derive the below bound
\begin{equation}\label{FET241}
16 \pi \e^2 \int^v_u \frac{\dv r}{(1-\mu)}r^2 |\phi|^4 (u,v')\D v' \leq 16 \pi \e^2 C_{3}(0,v_*;\chi,1,\chi) r^{2-\chi}(u,v).
\end{equation}
On the other hand, the second integral can be estimated with the help of \eqref{FET28} as follows
\begin{equation}\label{FET242}
\begin{aligned}
\int^v_u \frac{\dv r}{1-\mu} |Q|^2 r^{-2}(u,v') \D v' \leq \int^v_u \frac{\dv r}{1-\varepsilon} C_6^2 \varepsilon r^{1-\chi}(u,v') \D v' = \frac{C_6^2 \varepsilon}{(2-\chi)(1-\varepsilon)}r^{2-\chi}(u,v).
\end{aligned}
\end{equation}
Summarizing \eqref{FET241} and \eqref{FET242}, we obtain
\begin{equation}\label{FET69}
\begin{aligned}
|Q\phi|(u,v) &\leq C_7(v_*) \frac{\varepsilon^{\frac{1}{2}}}{(1-\varepsilon)^{\frac{1}{2}}}r^{\frac
{3}{2}-\frac{\chi}{2}} (u,v) \end{aligned}
\end{equation}
with $C_7$ defined in \eqref{FET243}.

For $|\phi|^2$, analogous to the proof of Proposition \ref{PropEstimates} and \eqref{FET64}, we instead consider $\dv(r\phi)$ in an attempt to obtain pointwise estimates for $|\phi|^2$. With $r\phi|_{\Gamma} = 0$, we have
\begin{equation}\label{FET59}
|r\phi|(u,v) \leq \int^v_u r |\dv \phi|(u,v')\D v' + \int^v_u |\phi| \dv r(u,v') \D v'.
\end{equation}
The first term in \eqref{FET59} can be estimated via 
\begin{equation}\label{FET70}
\begin{aligned}
\int^v_u r|\dv \phi|(u,v') \D v' &\leq \frac{1}{\sqrt{2\pi}}\ml \int^v_u\frac{2\pi r^2(1-\mu)|\dv \phi|^2}{\dv r}(u,v')\D v'\mr^{\frac{1}{2}}\ml \int^v_u \frac{\dv r}{1-\mu}(u,v')\D v'\mr^{\frac{1}{2}} \\
&\leq \frac{1}{\sqrt{4\pi}}\frac{\varepsilon^{\frac{1}{2}}}{(1-\varepsilon)^{\frac{1}{2}}}r(u,v).
\end{aligned}
\end{equation}
For the second term in \eqref{FET59}, by Proposition \ref{PropEstimates}, Lemma \ref{xlogx} and inequality \eqref{Estimates13}, we can exploit its structure and obtain
\begin{equation}\label{FET63}
\begin{aligned}
&\int^v_u|\phi| \dv r(u,v') \D v' \\
\leq &\; \int^v_u (\overline{P}(0,v_*)) \dv r(u,v') \D v' + \int^v_u \frac{1}{(4\pi)^{\frac{1}{2}}} \ml \log \ml\frac{\frac{1-\mu}{\dv r}(u,v')}{\frac{1-\mu}{\dv r}(0,v')} \mr \mr^{\frac{1}{2}}\ml \log \ml \frac{r(0,v')}{r(u,v')} \mr \mr^{\frac{1}{2}}(\dv r)(u,v') \D v' \\
\leq &\;  \overline{P}(0,v_*)r(u,v) + \frac{1}{(4\pi e)^{\frac{1}{2}}} \ml \frac{\bar{B}(0,v_*)}{(1-\overline{\mu_*})(0,v_*)}\mr^{\frac{1}{2}}\int^v_u   (\dv r)^{\frac{1}{2}}(u,v') (1-\mu)^{\frac{1}{2}}(u,v')\ml \log \ml \frac{r(0,v')}{r(u,v')} \mr \mr^{\frac{1}{2}}\D v' \\
\leq &\;  \overline{P}(0,v_*)r(u,v) + \frac{v_*^\oh}{(4\pi e)^{\frac{1}{2}}} \ml \frac{\bar{B}(0,v_*)}{(1-\overline{\mu_*})(0,v_*)}\mr^{\frac{1}{2}} \ml \int^v_u   (\dv r)(u,v') \log \ml \frac{r(0,v')}{r(u,v')} \mr \D v' \mr^{\frac{1}{2}} \\
\leq &\;  \overline{P}(0,v_*) r(u,v) + \frac{v_*^\oh r(0,v_*)^{\frac{\iota}{2}}}{(4\pi e \iota (1-\iota))^{\frac{1}{2}}} \ml \frac{\bar{B}(0,v_*)}{(1-\overline{\mu_*})(0,v_*)}\mr^{\frac{1}{2}} r(u,v)^{\frac{1}{2} -\frac{\iota}{2}} \\
\leq &\; C_8(\iota) r^{\frac{1}{2} - \frac{\iota}{2}}.
\end{aligned}
\end{equation}
In \eqref{FET63} above, the constant $C_8$ is defined in \eqref{FET300} and depends on a parameter $\iota \in (0,1)$.
Combining the two estimates above and using the fact that $(a+b)^2 \leq 2(a^2+b^2)$ for all $a,b \in \mathbb{R}$, we derive
$$r^2|\phi|^2(u,v) \leq \frac{1}{2\pi}\frac{\varepsilon}{1-\varepsilon}r^2(u,v) + 2C_{8}^2(\iota) r^{1-\iota}(u,v),$$
and hence
\begin{equation}\label{FET71}
\begin{aligned}
|\phi|^2(u,v) &\leq \frac{1}{2\pi}\frac{\varepsilon}{1-\varepsilon} + 2C_{8}^2(\iota) r^{-1-\iota}(u,v). \\
\end{aligned}
\end{equation}

\vspace{5mm}

We continue to estimate these terms under the assumption that $\gamma(u,v) \leq G$ and $\frac{1}{1-\mu(u,v)} \leq H$ uniformly for $(u,v) \in \mathcal{D}(0,v_*) \setminus \mathcal{D}(u_1,v_*)$. 

For $|Q|$, using a similar argument as above with a minor exception of not estimating $\mu(u,v)$ by $\varepsilon$, analogous to \eqref{FET28}, we obtain the estimate for $|Q|$ in \eqref{C1Lemma2} with $C_6'(v_*;\chi)$ defined in \eqref{SET22}.

For $|Q\phi|$, by our assumption $\frac{1}{1-\mu} \leq H$, we have the estimate for $|Q\phi|$ in \eqref{C1Lemma2} with $C_7'(v_*;\chi)$ and $C_7''(v_*;\chi)$ defined in \eqref{SET24}.

For $|\phi|^2$, employing a similar argument, by \eqref{FET71}, $\mu < 1$, and \eqref{SET13}, we deduce the estimate for $|\phi|$ in \eqref{C1Lemma2} and \eqref{SET26} with  $C_8$ defined in \eqref{FET300} and $\iota$ chosen to be $\oh$.

\qed 

With Lemma \ref{C1complemma} established, we now start to prove Theorem \ref{SET}.

\noindent \textit{Proof of Theorem \ref{SET}.} By assumption, there is a positive constant $G$ such that
\begin{equation}\label{SET5}
\sup_{\mathcal{D}(0,v_*)}\gamma(u,v) \leq G.
\end{equation}
For each $u \in [0,v_*)$, we define the quantity
\begin{equation}\label{SET6}
B'(u) :=  \sup_{v \in [u,v_*]} |\alpha'|(u,v).
\end{equation}
We will attempt to show that this quantity remains bounded uniformly in $u$, which in turn implies that the constant
\begin{equation}\label{SET7}
B' := \sup_{u \in [0,v_*)} \sup_{v \in [u,v_*]} |\alpha'|(u,v) = \sup_{\mathcal{D}(0,v_*)}|\alpha'| < + \infty
\end{equation}
and the solution extends as a $C^1$ solution up to $\mathcal{D}(0,v_*)$.

\ul{Estimates for $B'(u)$.} To do so, we proceed as follows. By \eqref{FET14} of Lemma \ref{C1complemma}, we have
\begin{equation}\label{SET8}
\alpha'(u,v)e^{\int_0^u \ii \e A_u(u',v) \D u'} = \alpha'(0,v) + \sum_{i=1}^{11} I_i(u,v)
\end{equation}
with $K_i$ defined in \eqref{K} and $I_i := \int_0^u e^{\int^{u'}_{0} \ii \e A_u(u'',v) \D u''} K_i (u',v) \D u'$.
Hence, we obtain
\begin{equation}\label{SET9}
|\alpha'|(u,v) \leq E'(v_*) + \sum_{i=1}^{11} |I_i|(u',v)
\end{equation}
with
\begin{equation}\label{SET10}
E'(v_*) := \sup_{\{0\} \times [0,v_*]}|\alpha'|.
\end{equation}
and $|I_i|$ satisfying the following estimates:
\begin{equation}\label{I_1'}
|I_1|(u,v) \leq  8\pi \e \int^{u}_{0} (-\du r) \ml \frac{|Q||\phi|}{r^2} \mr  \ml \frac{1}{1-\mu}\mr\ml \frac{r|\dv \phi|}{\dv r}\mr^2 (u',v) \mathrm{d}u',
\end{equation}
\begin{equation}\label{I_2'}
|I_2|(u,v) \leq 4\pi \int^{u}_{0} (-\du r) \ml\frac{|Q|^2}{r^4}\mr  \ml \frac{1}{1-\mu}\mr \ml \frac{r|\dv \phi|}{\dv r}\mr^3 (u',v) \mathrm{d}u',
\end{equation}
\begin{equation}\label{I_3'}
|I_3|(u,v) \leq  4\pi \int^{u}_{0} (-\du r)\ml \frac{1}{r^2}\mr\ml \frac{1}{1-\mu}\mr\ml \frac{r|\dv \phi|}{\dv r}\mr^3 (u',v) \mathrm{d}u',
\end{equation}
\begin{equation}\label{I_4'}
|I_4|(u,v) \leq  2 \int^{u}_{0} (-\du r)|\alpha'|  \ml \frac{1}{r}\mr \ml \frac{\mu}{1-\mu}\mr(u',v) \mathrm{d}u' ,
\end{equation}
\begin{equation}\label{I_5'}
|I_5|(u,v) \leq  2 \int^{u}_{0} (-\du r)|\alpha'| \ml\frac{|Q|^2}{r^3}\mr \ml \frac{1}{1-\mu}\mr(u',v) \mathrm{d}u',
\end{equation}
\begin{equation}\label{I_6'}
|I_6|(u,v) \leq  5 \int^{u}_{0}(-\du r)\ml \frac{|Q|^2}{r^4}\mr\ml \frac{1}{1-\mu}\mr\ml \frac{r|\dv \phi|}{\dv r}\mr (u',v) \mathrm{d}u',
\end{equation}
\begin{equation}\label{I_7'}
|I_7|(u,v) \leq 3 \int^{u}_{0} (-\du r)\ml \frac{1}{r^2}\mr\ml \frac{\mu}{1-\mu}\mr\ml \frac{r|\dv \phi|}{\dv r}\mr (u',v) \mathrm{d}u',
\end{equation}
\begin{equation}\label{I_8'}
|I_8|(u,v)  \leq  4\pi \e \int^{u}_{0}(-\du r) \ml\frac{|Q||\phi|}{r^2} \mr \ml \frac{1}{1-\mu}\mr\ml \frac{r|\dv \phi|}{\dv r}\mr^2 (u',v) \mathrm{d}u',
\end{equation}
\begin{equation}\label{I_9'}
|I_9|(u,v) \leq  4 \pi \e^2  \int^{u}_{0} (-\du r) \ml |\phi|^2 \mr  \ml \frac{1}{1-\mu}\mr\ml \frac{r|\dv \phi|}{\dv r}\mr (u',v) \mathrm{d}u',
\end{equation}
\begin{equation}\label{I_10'}
|I_{10}|(u,v) \leq  3 \e  \int^{u}_{0} (-\du r) \ml \frac{|Q|}{r^2}\mr  \ml \frac{1}{1-\mu}\mr\ml \frac{r|\dv \phi|}{\dv r}\mr (u',v) \mathrm{d}u',
\end{equation}
\begin{equation}\label{I_11'}
|I_{11}|(u,v) \leq  \e  \int^{u}_{0} (-\du r) \ml \frac{|Q||\phi|}{r^2}\mr  \ml \frac{1}{1-\mu}\mr (u',v) \mathrm{d}u'.
\end{equation}
Next, we proceed to estimate the following quantities:
$$\frac{1}{1-\mu},\, |\alpha'|,\, \frac{r|\dv \phi|}{\dv r},\, \ml \frac{r|\dv \phi|}{\dv r}\mr^2,\, \ml \frac{r|\dv \phi|}{\dv r}\mr^3,\, |Q|,\, |Q\phi|,\, |\phi|^2 \mbox{ for all } (u,v) \in \mathcal{D}(0,v_*).$$ 
For these terms listed above, we carry out the following strategy:
\begin{itemize}
\item[$\frac{1}{1-\mu}$:] With the help of \eqref{Setup26}, we compute
$$\du\ml \frac{1}{1-\mu(u,v)}\mr = \frac{\du \mu}{(1 - \mu)^2} = \frac{1}{1-\mu(u,v)} \ml -\frac{4\pi r|D_u \phi|^2}{(-\du r)} + \frac{(-\du r)}{r}\frac{\mu - Q^2/r^2}{1-\mu}\mr.$$
This gives
\begin{equation}\label{SET11}
\begin{aligned}
\frac{1}{1-\mu(u,v)} &= \frac{1}{1-\mu(0,v)}\exp\ml {\int^u_{0} \ml -\frac{4\pi r|D_u \phi|^2}{(-\du r)} + \frac{(-\du r)}{r}\frac{\mu - Q^2/r^2}{1-\mu}\mr (u',v) \mathrm{d}u'} \mr.
\end{aligned}
\end{equation}
Since $\du r < 0$, this implies
\begin{equation}\label{SET12}
\begin{aligned}
\frac{1}{1-\mu(u,v)} &\leq \frac{1}{1-\mu(0,v)}\exp\ml {\int^u_{0} \ml  \frac{(-\du r)}{r}\frac{\mu - Q^2/r^2}{1-\mu}\mr (u',v) \mathrm{d}u'} \mr. \\
\end{aligned}
\end{equation}
Employing the definition of $\gamma$ in \eqref{gamma} and $\mu_*$ in \eqref{Setup42}, we obtain
\begin{equation}\label{SET13}
\frac{1}{1-\mu}(u,v) \leq \frac{1}{1-\mu_*}e^G.
\end{equation}
\item[$|\alpha'|$:] By the definition of $B'(u)$ in \eqref{SET6}, for each $(u,v) \in \mathcal{D}(0,v_*)$, we have
\begin{equation}\label{SET14}
|\alpha'|(u,v) \leq B'(u).
\end{equation}
\item[$\frac{r|\dv \phi|}{\dv r}$:] Utilizing the definition of $\alpha$ in \eqref{mathfraka} and $r\phi|_{\Gamma} = 0$, we can rewrite $\phi$ as
\begin{equation}
\phi(u,v) =  \frac{1}{r}\int^v_u (\alpha \dv r)(u,v') \mathrm{d}v',
\end{equation}
which in turn implies that
\begin{equation}
\frac{r \dv \phi}{\dv r}(u,v) = \alpha(u,v) - \frac{1}{r(u,v)}\int^v_u (\alpha \dv r)(u,v') \mathrm{d}v'.
\end{equation}
Using the fact that $r(u,u) = r|_{\Gamma} = 0$ and inequality \eqref{SET14}, we obtain
\begin{equation}\label{SET15}
\begin{aligned}
\mlm\frac{r \dv \phi}{\dv r}\mrm(u,v) &\leq 
\mlm \alpha(u,v) - \frac{1}{r(u,v)} \int^v_u (\alpha \dv r)(u,v') \mathrm{d}v'\mrm \\
&\leq \frac{1}{r(u,v)}\int^v_u \mlm \alpha(u,v) - \alpha(u,v') \mrm \dv r(u,v') \mathrm{d}v' \\
&\leq \frac{1}{r(u,v)}\int^v_u \ml \int^v_{v'} \mlm \dv \alpha (u,v'')\mrm \mathrm{d}v'' \mr \dv r (u,v') \mathrm{d}v' \\
&\leq \frac{1}{r(u,v)}\int^v_u \ml \int^v_{v'} B'(u) (\dv r)(u,v'') \D v'' \mr \D v'= \frac{B'(u)r(u,v)}{2}.
\end{aligned}
\end{equation}
\item[$\ml \frac{r|\dv \phi|}{\dv r}\mr^3:$] For this term, we proceed with a different strategy. Observe that since
\begin{equation}\label{SET16}
\begin{aligned}
\dv  \ml \frac{r \dv \phi}{\dv r}\mr^3 &= \dv \ml \frac{\dv (r\phi)}{\dv r} - \phi \mr^3 = 3\ml \frac{r \dv \phi}{\dv r}\mr^2 \dv \ml \frac{\dv (r\phi)}{\dv r} - \phi\mr \\
&= 3\ml \frac{r \dv \phi}{\dv r}\mr^2 \ml \alpha ' \dv r - \dv \phi\mr = 3 \alpha' \frac{r^2(\dv \phi)^2}{\dv r} - \frac{3 \dv r}{r}\ml \frac{r \dv \phi}{\dv r}\mr^3,
\end{aligned}
\end{equation}
we then derive
\begin{equation}\label{SET17}
\begin{aligned}
\dv  &\ml r^3\ml \frac{r \dv \phi}{\dv r}\mr^3 \mr = 3r^2 \dv r \ml \frac{r \dv \phi}{\dv r}\mr^3 + r^3\dv \ml \frac{r\dv \phi}{\dv r}\mr^3 \\
&= 3r^2 \dv r \ml \frac{r \dv \phi}{\dv r}\mr^3 + r^3\ml3 \alpha' \frac{r^2(\dv \phi)^2}{\dv r} - \frac{3 \dv r}{r}\ml \frac{r \dv \phi}{\dv r}\mr^3 \mr = 3 \alpha' r^3 \ml \frac{r^2(\dv \phi)^2}{\dv r}\mr.
\end{aligned}
\end{equation}
Since $r^3\ml \frac{r \dv \phi}{\dv r}\mr^3 |_{\Gamma} = 0$, by \eqref{Setup25}, \eqref{SET13}, \eqref{SET14}, $\dv r \geq 0, 1-\mu \geq 0, \dv m \geq 0$, we deduce
\begin{equation}\label{SET18}
\begin{aligned}
r^3\ml \frac{r|\dv \phi|}{\dv r} \mr^3& (u,v) \leq \int_u^v \mlm \dv \ml r^3 \ml \frac{r\dv \phi}{\dv r} \mr^3 \mr \mrm(u,v') \D v' \leq 3B'(u)r^3(u,v)\int_u^v \frac{r^2|\dv \phi|^2}{\dv r}(u,v') \D v'\\
&\leq \frac{3e^G}{2\pi(1-\mu_*)} B'(u)r^3(u,v) \int^v_u \dv m(u,v') \D v' \leq \frac{3e^G}{\pi(1-\mu_*)} B'(u)r^4(u,v) \mu(u,v), \\
\end{aligned}
\end{equation}
which in turn implies
\begin{equation}\label{SET19}
\ml \frac{r|\dv \phi|}{\dv r} \mr^3 (u,v) \leq C_3' B'(u)r(u,v) \mu(u,v)
\end{equation}
with the constant $C_1'$ given by
\begin{equation}\label{SET27}
C_3' := \frac{3e^G}{\pi(1-\mu_*)}.
\end{equation}
\item[$\ml \frac{r|\dv\phi|}{\dv r}\mr^2$:] Here we apply an interpolation estimate between \eqref{SET15} and \eqref{SET19} to obtain
\begin{equation}\label{SET20}
\ml\frac{r|\partial_v \phi|}{\partial_v r} \mr^2(u,v) = \sqrt{\ml\frac{r|\partial_v \phi|}{\partial_v r} \mr\ml\frac{r|\partial_v \phi|}{\partial_v r} \mr^3} \leq C_2'B'(u)r(u,v) \mu(u,v)^{\oh}
\end{equation}
with
\begin{equation}\label{SET41}
C_2' := \sqrt{\frac{3e^G}{2\pi(1-\mu_*)}}
\end{equation}
\end{itemize} 
In addition, recall that estimates for $|Q|, |Q\phi|, |\phi|^2$ have been established in Lemma \ref{C1complemma}, with $H = \frac{e^G}{1-\mu_*}$.

\vspace{5mm}

With these estimates, we now obtain an upper bound for each of the terms $|I_i|$ as follows. By plugging \eqref{SET13}, \eqref{SET14}, \eqref{SET15}, \eqref{SET18}, \eqref{SET20}, and \eqref{C1Lemma2}, using $r \leq r(0,v_*)$ whenever necessary, we have the following estimates:
\begin{equation}\label{SET28}
|I_1|(u,v) \leq 8\pi \e C_7' C_2' r(0,v_*) \int_0^u \frac{- \du r}{r} \frac{\mu}{1-\mu}(u',v) B'(u') \D u',  
\end{equation}
\begin{equation}\label{SET29}
|I_2|(u,v) \leq 4\pi (C_6')^2 C_3' \int_0^u \frac{- \du r}{r} \frac{\mu}{1-\mu}(u',v) B'(u') \D u',  
\end{equation}
\begin{equation}\label{SET30}
|I_3|(u,v) \leq 4 \pi C_3' \int_0^u \frac{- \du r}{r} \frac{\mu}{1-\mu}(u',v) B'(u') \D u'.
\end{equation}
The next two terms $|I_4|$ and $|I_5|$ involve $|\alpha'|$. By employing \eqref{SET14}, we proceed and obtain
\begin{equation}\label{SET31}
|I_4|(u,v) \leq 2 \int_0^u \frac{- \du r}{r} \frac{\mu}{1-\mu}(u',v) B'(u') \D u',
\end{equation}
\begin{equation}\label{SET32}
|I_5|(u,v) \leq 2 (C_6')^2 \int_0^u \frac{- \du r}{r} \frac{\mu}{1-\mu}(u',v) B'(u') \D u'.
\end{equation}
For the next three terms $|I_6|, |I_7|, |I_8|$, we have
\begin{equation}\label{SET33}
|I_6|(u,v) \leq \frac{5 (C_6')^2}{2} \int_0^u \frac{- \du r}{r} \frac{\mu}{1-\mu}(u',v) B'(u') \D u',
\end{equation}
\begin{equation}\label{SET34}
|I_7|(u,v) \leq \frac{3}{2} \int_0^u \frac{- \du r}{r} \frac{\mu}{1-\mu}(u',v) B'(u') \D u',
\end{equation}
\begin{equation}\label{SET35}
|I_8|(u,v) \leq 4\pi \e C_7' C_2' r(0,v_*) \int_0^u \frac{- \du r}{r} \frac{\mu}{1-\mu}(u',v) B'(u') \D u'.
\end{equation}
The integrands in the terms $|I_9|$ and $|I_{10}|$ lack a natural upper bound in $\mu$ to construct the integrand of $\gamma$. Instead, we proceed to show that these integrands have additional powers of $r$ as follows:
\begin{equation}\label{SET36}
|I_9|(u,v) \leq \frac{2 \pi \e^2 C_8' e^G}{1-\mu_*} \int_0^u \frac{- \du r}{r^{\oh}}(u',v) B'(u') \D u',
\end{equation}

\begin{equation}\label{SET37}
|I_{10}|(u,v) \leq \frac{3 \e C_6' e^G r(0,v_*)^\oh}{1-\mu_*} \int_0^u \frac{- \du r}{r^{\oh}}(u',v) B'(u') \D u'.
\end{equation}
For the final term $|I_{11}|$, its integrand lacks a natural upper bound in terms of $B'(u)$. We instead treat this term as an additional forcing term in a Gr\"onwall-type argument that we are about to employ. This implies that we have the following estimate for $|I_{11}|$:
\begin{equation}\label{SET38}
|I_{11}|(u,v) \leq \frac{C_7'' \e e^G}{1-\mu_*} \int_0^u \frac{- \du r}{r^{\oh+\frac{\chi}{2}}}(u',v) \D u' \leq \frac{2C_7'' \e e^G r(0,v_*)^{\oh - \frac{\chi}{2}}}{(1-\mu_*)(1-\chi)}.
\end{equation}
Plugging these estimates for $|I_1|$ to $|I_{11}|$ back into \eqref{SET9}, we obtain
\begin{equation}\label{SET39}
|\alpha'|(u,v)\leq E'(v_*) + C_{11} + \int_0^u \ml C_9\frac{- \du r}{r} \frac{\mu}{1-\mu} + C_{10} \frac{- \du r}{r^\oh}\mr(u',v) B'(u') \D u'
\end{equation}
with the constants $C_9$, $C_{10}, C_{11}$ given by
\begin{equation}\label{SET40}
\begin{aligned}
C_9 &:= 12\pi \e C_7' C_2' r(0,v_*) + 4\pi (C_6')^2 C_3' + \frac{9}{2}(C_6')^2 + 4\pi C_3' + \frac{7}{2}, \\
\quad C_{10} &:= \frac{(2\pi \e^2 C_8' + 3\e C_6' r(0,v_*)^\oh)e^G}{1-\mu_*}, \quad C_{11} := \frac{2C_7'' \e e^G r(0,v_*)^{\oh - \frac{\chi}{2}}}{(1-\mu_*)(1-\chi)}. \\
\end{aligned}
\end{equation}
For each $u$, we pick $\tilde{v} \in [u,v_*]$ such that the supremum for \eqref{SET6} is attained. Since \eqref{SET40} is true for all $v \in [u,v_*]$, we thus have
\begin{equation}\label{SET42}
B'(u) \leq E'(v_*) + C_{11} + \int_0^u \ml C_9\frac{- \du r}{r} \frac{\mu}{1-\mu} + C_{10} \frac{- \du r}{r^\oh}\mr(u',\tilde{v}) B'(u') \D u'.
\end{equation}
Applying Gr\"onwall's inequality, we deduce
\begin{equation}\label{SET43}
\begin{aligned}
B'(u) &\leq (E'(v_*)+C_{11})\exp\ml \int_0^u \ml C_9\frac{- \du r}{r} \frac{\mu}{1-\mu} + C_{10} \frac{- \du r}{r^\oh}\mr(u',\tilde{v}) \D u' \mr \\
&\leq (E'(v_*)+C_{11})\exp\ml C_9 \gamma(u,\tilde{v}) + 2C_{10}r(0,\tilde{v})^{\oh} \mr \\
&\leq (E'(v_*)+C_{11})\exp\ml C_9 G + 2C_{10}r(0,v_*)^{\oh} \mr.\\
\end{aligned}
\end{equation}
Observe that the upper bound obtained in \eqref{SET43} is uniform in $u$ and $\tilde{v}$. Hence, from \eqref{SET7}, we deduce that
\begin{equation}\label{SET44}
B' \leq (E'(v_*)+C_{11})\exp\ml C_9 G + 2C_{10}r(0,v_*)^{\oh} \mr < + \infty,
\end{equation}
which in turn implies that the solution extends as a $C^1$ solution up to $\mathcal{D}(0,v_*)$.

\vspace{5mm}

\ul{Extension Beyound Boundary.} Next, we will show that there exists some $\xi' > 0$, such that the solution extends as a $C^1$ solution to $\mathcal{D}(0,v_* +  \xi')$. This is done by estimating the change in $\gamma$ across a rectangular region $\{(u,v): u \in [0,v_*], v\in [v_*,v_*+\xi]\}$ for some $\xi > 0$.

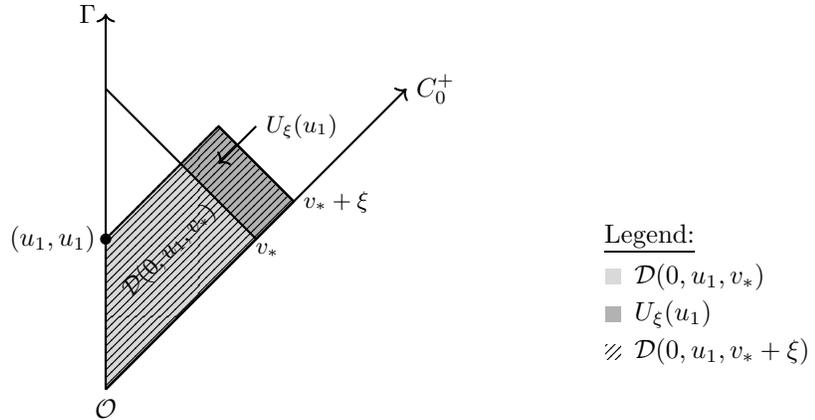
\begin{figure}[htbp!]
\begin{minipage}[!t]{0.4\textwidth}
	\centering
\begin{tikzpicture}[scale=1]
	\begin{scope}[thick]
        \fill[gray!30] (0,0) to (2,2) to (1,3) to (0,2) to (0,1);
        \fill[gray!60] (1,3) to (1.5,3.5) to (2.5,2.5) to (2,2) to (1,3);
	\draw[->] (0,0) node[anchor=north]{$\mathcal{O}$} --  (0,5) node[anchor = east]{$\Gamma$};
	\draw[->] (0,0) -- (4,4) node[anchor = west]{$C_0^+$};
        \draw(0,4) -- (2,2);
        \node[] at (2.15,1.85) {\small $v_*$};
        \node[anchor = east] at (0,2) {$(u_1,u_1)$};
        \node at (0,2)[circle,fill,inner sep=1.5pt]{};
        \node[rotate=45] at (0.85,1.85) {\small $\mathcal{D}(0,u_1,v_*)$};
        \draw[dotted] (0,2) -- (1.5,3.5);
        \draw[] (1.5,3.5) -- (2.5,2.5);
        \node[anchor = west] at (2.5,2.5) {\small $v_* + \xi$};
        \draw[->] (2,3.5) -- (1.5,3);
        \node[anchor = west] at (2,3.5){\small $U_{\xi}(u_1)$};
        \draw[pattern=north east lines, pattern color=black] (0,0) to (2.5,2.5) to (1.5,3.5) to (0,2) to (0,0);
        \node[anchor = west] at (6.5,2) {\ul{Legend:}};
        \node[anchor = west] at (6.9,1.5) {$\mathcal{D}(0,u_1,v_*)$}; 
        \fill[gray!30] (6.65,1.6) to (6.85,1.6) to (6.85,1.4) to (6.65,1.4) to (6.65,1.6);
        \node[anchor = west] at (6.9,1) {$U_\xi(u_1)$};
        \fill[gray!60] (6.65,1.1) to (6.85,1.1) to (6.85,0.9) to (6.65,0.9) to (6.65,1.1);
        \node[anchor = west] at (6.9,0.5) {$\mathcal{D}(0,u_1,v_*+\xi)$}; 
        \path[pattern=north east lines, pattern color=black] (6.65,0.6) to (6.85,0.6) to (6.85,0.4) to (6.65,0.4) to (6.65,0.6);
	\end{scope}
\end{tikzpicture}
\end{minipage}
\caption{Illustration of $U_\xi(u_1)$ and $\mathcal{D}(0,v_1,v_* + \xi)$.}
\label{SETFig1}
\end{figure}
For any given $u_1 \in [0,v_*)$, we consider the corresponding rectangular strip given by
\begin{equation}\label{SET45}
U_{\xi}(u_1) = [0,u_1) \times [v_*,v_*+\xi].
\end{equation}
In the spirit of \eqref{SET6}, we define
\begin{equation}\label{SET46}
\overline{B'_{\xi}}(u_1) := \sup_{U_{\xi}(u_1)} (|\alpha'|).
\end{equation}
We also note that, under the convention in the previous subsections, we can write
\begin{equation}\label{SET47}
\mathcal{D}(0,u_1,v_* + \xi) = \mathcal{D}(0,u_1,v_*) \cup U_{\xi}(u_1).
\end{equation}

Recall from the first part of the proof that the quantity $B'$ in \eqref{SET6} is bounded, independent of $u_1$, and obeys the following estimate derived in \eqref{SET44}:
\begin{equation}\label{SET48}
B'(u) \leq (E'(v_*)+C_{11}(G))\exp\ml C_9(G) G + 2C_{10}(G)r(0,v_*)^{\oh} \mr =: C_{12}(G,v_*).
\end{equation}
Note that in \eqref{SET48}, the constants $C_9, C_{10}, C_{11}$ depends on $G$, and we define $C_{12}$ to be a constant depending only on $v_*$ and $G$. 

Now, fix some $\hat{\xi} > 0$. We then have the following lemma:
\begin{lemma}\label{SETLemma}
For any $u_1 \in (0,v_*)$ and $\xi \in (0,\hat{\xi}]$, if $\gamma$ satisfies
\begin{equation}\label{SET49}
\sup_{\mathcal{D}(0,u_1,v_* + \xi)} \gamma \leq 3G,
\end{equation}
then we have
\begin{equation}\label{SET50}
\overline{B'_{\xi}}(u_1) \leq \max\{  C_{12}(G,v_*), C_{12}(3G,v_*+\hat{\xi}) \} =: B_{2}.
\end{equation}
\end{lemma}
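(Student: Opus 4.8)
The plan is to re-run the Gr\"onwall argument from the first part of the proof of Theorem \ref{SET} (the estimate for $B'(u)$) essentially verbatim, but over the enlarged region $\mathcal{D}(0,u_1,v_*+\xi)=\mathcal{D}(0,u_1,v_*)\cup U_\xi(u_1)$ instead of over $\mathcal{D}(0,v_*)$, and with the bound $\gamma\le G$ replaced by the hypothesis $\gamma\le 3G$ of \eqref{SET49}. Working in the setting of the subsequent continuity argument (so that a $C^1$ solution, and hence $\alpha'$, is available and finite on $\mathcal{D}(0,u_1,v_*+\xi)$), I would introduce the fiber-wise supremum $\tilde{B}'(u):=\sup_{v\in[u,\,v_*+\xi]}|\alpha'|(u,v)$ for $u\in[0,u_1)$, so that $\overline{B'_\xi}(u_1)\le \sup_{u\in[0,u_1)}\tilde{B}'(u)$ by definition of $U_\xi(u_1)$, and then derive a closed integral inequality for $\tilde B'$.

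All the ingredients carry over with only cosmetic changes. Since $\sup_{\mathcal{D}(0,u_1,v_*+\xi)}\gamma\le 3G$, inequality \eqref{SET13} becomes $\frac{1}{1-\mu}(u,v)\le\frac{1}{1-\mu_*}e^{3G}$ on this region, so Lemma \ref{C1complemma} applies with $G$ replaced by $3G$, $H=\frac{e^{3G}}{1-\mu_*}$, and $v_*$ replaced by $v_*+\xi$, producing the bounds on $|Q|$, $|Q\phi|$, $|\phi|^2$ with constants $C_6'(v_*+\xi;\chi)$, $C_7'(v_*+\xi;\chi)$, $C_8'(v_*+\xi;\iota)$, etc. The pointwise bounds \eqref{SET15}, \eqref{SET19}, \eqref{SET20} for $\frac{r|\dv\phi|}{\dv r}$ and its powers used only $r\phi|_\Gamma=0$, the equations \eqref{Setup20}, \eqref{Setup25}, and the definition of the fiber-wise sup, so they remain valid with $B'(u)$ replaced by $\tilde B'(u)$ and with $e^G,\mu_*$ updated accordingly (so $C_2',C_3'$ are now evaluated at $3G$). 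Integrating \eqref{SET8}, estimating $|I_1|,\dots,|I_{11}|$ exactly as in \eqref{SET28}--\eqref{SET38} but with the enlarged constants, and using that the forcing integrals are controlled by $\gamma(u,\cdot)\le 3G$ and by $r(0,v_*+\xi)^{1/2}\le r(0,v_*+\hat{\xi})^{1/2}$, one arrives, after applying Gr\"onwall's inequality exactly as in \eqref{SET43}, at
$$\tilde{B}'(u)\le\big(E'(v_*+\xi)+C_{11}(3G,v_*+\xi)\big)\exp\!\big(C_9(3G,v_*+\xi)\cdot 3G+2C_{10}(3G,v_*+\xi)\,r(0,v_*+\xi)^{1/2}\big)=C_{12}(3G,v_*+\xi),$$
in the notation of \eqref{SET40} and \eqref{SET48}.

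To finish, I would invoke monotonicity of the constants. Each of $E'$, $C_6'$, $C_7'$, $C_7''$, $C_8'$, $C_2'$, $C_3'$, and hence $C_9,C_{10},C_{11}$ and $C_{12}$, is nondecreasing in $G$ and in the second, region-size argument: enlarging $v_*\mapsto v_*+\xi$ only increases $r(0,v_*)=\tfrac12 v_*$, $v_*$, $\overline{P}(0,v_*)$ and $\bar B(0,v_*)$, while $\mu_*$ and $\overline{\mu_*}$ are fixed data. Hence for $\xi\le\hat{\xi}$ we get $C_{12}(3G,v_*+\xi)\le C_{12}(3G,v_*+\hat{\xi})$, giving $\overline{B'_\xi}(u_1)\le C_{12}(3G,v_*+\hat{\xi})$; combining this with the bound $B'(u)\le C_{12}(G,v_*)$ already established in \eqref{SET48} on the sub-region $\mathcal{D}(0,u_1,v_*)\subset\mathcal{D}(0,v_*)$ (where $\gamma\le G$) yields $\overline{B'_\xi}(u_1)\le\max\{C_{12}(G,v_*),\,C_{12}(3G,v_*+\hat{\xi})\}=B_2$, as claimed.

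I expect the only genuine work — and thus the main point to be careful about — to be bookkeeping rather than analysis: verifying that every constant in \eqref{SET28}--\eqref{SET40} is honestly monotone in $(G,v_*)$ so that the single constant $C_{12}(3G,v_*+\hat{\xi})$ dominates \emph{uniformly} in $u_1$ and $\xi$, and checking that the boundary-vanishing facts at $\Gamma$ (such as $r\phi|_\Gamma=0$ and $r^3(r\dv\phi/\dv r)^3|_\Gamma=0$) and the Gr\"onwall step used in the proof of Theorem \ref{SET} remain valid verbatim on the enlarged domain $\mathcal{D}(0,u_1,v_*+\xi)$. There is no new analytic difficulty beyond what was already handled there.
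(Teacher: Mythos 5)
Your proposal is correct and follows essentially the same route as the paper, which itself only gives a one-paragraph remark that the $B'(u)$ Grönwall argument carries over on the enlarged region with $G$ replaced by $3G$ and the length scale $v_*$ replaced by $v_*+\xi$. You fill in the two points the paper glosses over — the monotonicity of the constants $C_9,C_{10},C_{11},C_{12}$ in both $G$ and the region-size argument, and the uniformity of the resulting bound in $u_1$ and $\xi$ — and these are exactly the right things to check.
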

The proof of this lemma follows from the fact that in the region $U_\xi(u_1)$ estimates derived as above now rely on the initial data along $\{0\} \times [v_*,v_* + \xi]$ and rely on a new uniform upper bound for $\gamma$ to be $3G$. The remaining details are similar to those outlined in the estimates for $B'(u)$ as above. Note that the upper bound in \eqref{SET50} is independent of $u_1$ and $\xi$.

With the above lemma, we proceed back on track to prove that the $C^1$ solution extends beyond the domain of dependence $\mathcal{D}(0,u_1,v_*)$. Next, we define
\begin{equation}\label{SET51}
u_2 = \sup\left\{ u_1 \in [0,v_*): \sup_{\mathcal{D}(0,u_1,v_*+\xi)} \gamma \leq 3G\right\}.
\end{equation}
By the definition of $u_2$, we then have either $u_2 = v_*$ or
\begin{equation}\label{SET52}
\sup_{\mathcal{D}(0,u_1,v_* + \xi)} \gamma = 3G
\end{equation}
is attained at some $u_2 \in [u_0,v_*)$. We now show that for $\xi$ sufficiently small, the latter will not happen. Suppose for a contradiction that the latter happens. Then, we set $u_1 = u_2$ as in the hypothesis of Lemma \ref{SETLemma}. Consequently, as described in the proof of Lemma \ref{SETLemma}, in the region $U_\xi(u_2)$, we have
\begin{equation}\label{SET54}
\frac{|\dv \phi|}{\dv r}(u,v) \leq \frac{B_2}{2}.
\end{equation}
Note that we can obtain a different estimate for $|Q|$ with a higher power of $r$ albeit without an upper bound in $\mu$. To do so, we leverage on \eqref{Setup19}, $Q|_\Gamma = 0$ and \eqref{SET54} to obtain
\begin{equation}\label{SET55}
\begin{aligned}
|Q|(u,v) &\leq \int_u^v |\dv Q|(u,v') \D v' \leq 2 \pi \e B_2 C_8' \int_u^v r^{\frac{5}{4}} \dv r(u,v') \D v' \leq \frac{8 \pi \e B_2 \sqrt{C_8'}}{9} r^{\frac{9}{4}}(u,v) \leq C_{13} r^2(u,v),
\end{aligned}
\end{equation}
with
\begin{equation}\label{SET56}
C_{13} := \frac{8 \pi \e B_2 \sqrt{C_8'} r^{\frac{1}{4}}(0,v_*)}{9}.
\end{equation}
Here, the dependence of $C_8'$ on $v_* + \hat\xi$ is dropped for brevity. 

We are now able to estimate $m(u,v)$ as follows. By \eqref{Setup25} and $m|_{\Gamma} = 0$, these imply
\begin{equation}\label{SET57}
m(u,v) \leq \int^v_u \ml \frac{2\pi r^2(1-\mu)|\dv \phi|^2}{\dv r} + \frac{Q^2 \dv r}{2r^2}\mr(u,v') \D v'.
\end{equation}
Employing \eqref{SET54}, we get
\begin{equation}\label{SET58}
\begin{aligned}
\int^v_u \frac{2\pi r^2(1-\mu)|\dv \phi|^2}{\dv r}(u,v') \D v'&\leq 2\pi \int^v_u \frac{r^2|\dv \phi|^2}{(\dv r)^2}(\dv r)(u,v') \D v'\\
&\leq \frac{\pi B_2^2 }{2} \int^v_u r^{2}(\dv r)(u,v') \D v' = \frac{\pi B_2^2}{6}r^3(u,v).
\end{aligned}
\end{equation}
Furthermore, by \eqref{SET55}, we obtain
\begin{equation}\label{SET59}
\begin{aligned}
\int^v_u \frac{Q^2 \dv r}{2r^2}(u,v') \D v' &\leq \frac{C_{13}^2}{2}\int^v_u r^2(\dv r)(u,v') \D v' = \frac{C_{13}^2}{6}r^3(u,v).
\end{aligned}
\end{equation}
This in turn implies that
\begin{equation}\label{SET60}
\mu(u,v) \leq \ml \frac{\pi B_2^2}{3} + \frac{C_{13}^2}{3}\mr r^2(u,v).
\end{equation}
Meanwhile, from \eqref{SET13} but with $G$ replaced by $3G$, we have
\begin{equation}\label{SET61}
\frac{1}{1-\mu}(u,v) \leq \frac{1}{1-\mu_*}e^{3G}.
\end{equation}
With the above estimates, we can now estimate the change in $\gamma$. Recall the expressions for the derivative of $\gamma$ with respect to $v$ from \eqref{Setup98} and \eqref{Setup99} as below:
\begin{equation}\label{SET64}
\dv \gamma(u,v) = \int_0^u (T_1 + T_2 + T_3)(u',v) \D u'.
\end{equation}
By utilizing the estimates on relevant quantities above, together with \eqref{Setup92}, we derive 
\begin{equation}\label{SET65}
\begin{aligned}
\int_0^u |T_1|(u',v) \D u' &\leq \frac{8\pi \e e^{3G} }{1-\mu_*}\int_0^u \frac{|Q||\phi||\dv \phi|}{r}(-\du r)(u',v) \D u' \leq \frac{4\pi \e e^{3G} \sqrt{C_8'} C_{13} B_2 }{1-\mu_*}\int_0^u r^{-\frac{1}{4}}(\dv r)(-\du r)(u',v) \D u' \\
&\leq \frac{2\pi \e e^{4G} \sqrt{C_8'} C_{13} B_2 }{1-\mu_*}\int_0^u r^{-\frac{1}{4}}(-\du r)\D u' \leq \frac{(8)^{\frac{3}{4}}\pi \e e^{4G} \sqrt{C_8'} C_{13} B_2 }{3(1-\mu_*)}(v_* + \hat \xi)^{\frac{3}{4}},\\
\end{aligned}
\end{equation}
\begin{equation}\label{SET66}
\begin{aligned}
\int_0^u |T_2|(u',v) \D u' &\leq \frac{4\pi e^{3G}}{1-\mu_*}\ml 1+\frac{C_{13}^2(v_* + \hat \xi)^2}{4}\mr \int_0^u (-\du r)\frac{|\dv \phi|^2}{\dv r}(u',v) \D u' \\
&\leq \frac{\pi e^{3G}B_2^2}{1-\mu_*}\ml 1+\frac{C_{13}^2(v_* + \hat \xi)^2}{4}\mr \int_0^u (-\du r)(\dv r)(u',v) \D u' \\
&\leq \frac{\pi e^{4G}B_2^2}{1-\mu_*}\ml 1+\frac{C_{13}^2(v_* + \hat \xi)^2}{4}\mr \int_0^u (-\du r)(u',v) \D u' \\
&\leq \frac{\pi e^{4G}B_2^2}{2(1-\mu_*)}\ml 1+\frac{C_{13}^2(v_* + \hat \xi)^2}{4}\mr (v_* + \hat \xi), \\
\end{aligned}
\end{equation}

\begin{equation}\label{SET67}
\begin{aligned}
\int_0^u |T_3|(u',v) \D u' &\leq \frac{2e^{3G}}{1-\mu_*}\int_0^u \frac{(-\du r)(\dv r)}{r^2}\ml |\mu| + \frac{2Q^2}{r^2} \mr(u',v) \D u' \\
&\leq \frac{2e^{3G}}{1-\mu_*}\ml \frac{\pi B_2^2}{3} + \frac{7C_{13}^2}{3}\mr\int_0^u (-\du r)(\dv r)(u',v) \D u' \\
&\leq \frac{e^{4G}}{1-\mu_*}\ml \frac{\pi B_2^2}{3} + \frac{7 C_{13}^2}{3}\mr\int_0^u (-\du r)(u',v) \D u' \\
&\leq \frac{e^{4G}}{2(1-\mu_*)}\ml \frac{\pi B_2^2}{3} + \frac{7 C_{13}^2}{3}\mr (v_* + \hat \xi). \\
\end{aligned}
\end{equation}
Note that \eqref{SET65}, \eqref{SET66}, \eqref{SET67} imply
$\int^u_0 |T_1 + T_2 + T_3|(u',v) \D u' < + \infty$, and hence the expression in \eqref{SET64} is valid by the dominated convergence theorem. Furthermore, for all $(u,v) \in U_\xi(u_2)$, we also obtain 
\begin{equation}\label{SET68}
|\dv \gamma|(u,v) \leq C_{14}
\end{equation}
with
\begin{equation}\label{SET69}
\begin{aligned}
C_{14} :=&\; \frac{(8)^{\frac{3}{4}}\pi \e e^{4G} \sqrt{C_8'} C_{13} B_2 }{3(1-\mu_*)}(v_* + \hat \xi)^{\frac{3}{4}} + \frac{\pi e^{4G}B_2^2}{2(1-\mu_*)}\ml 1+\frac{C_{13}^2(v_* + \hat \xi)^2}{4}\mr(v_* + \hat \xi)\\
&+ \frac{e^{4G}}{2(1-\mu_*)}\ml \frac{\pi B_2^2}{3} + \frac{7 C_{13}^2}{3}\mr(v_* + \hat \xi).
\end{aligned}
\end{equation}
For each $(u,v) \in U_\xi(u_2)$, we thus have
\begin{equation}\label{SET70}
\begin{aligned}
\gamma(u,v) \leq \gamma(u,v_*) + \int_v^{v_*}|\dv \gamma|(u,v') \D v' \leq G + C_{14}\xi.
\end{aligned}
\end{equation}
If we pick 
\begin{equation}\label{SET71}
\xi := \min \left\{  \hat{\xi}, \frac{G}{C_{14}}\right\},
\end{equation}
we then deduce that
$\sup_{\mathcal{D}(0,u_1,v_* + \xi)} \gamma \leq 2G$, contradicting \eqref{SET52}.

Henceforth, for each $u_1 \in (u_0,v_*)$, the hypothesis of Lemma \ref{SETLemma} is verified. This then implies that \eqref{SET50} holds and hence  $\sup_{\mathcal{D}(u_0,u_1,v_*+\Delta v)}|\alpha'|$ remains bounded as $u_1 \rightarrow v_*$. Therefore, we conclude that the solution extends as a $C^1$ solution up to the closure of $\mathcal{D}(u_0,v_*,v_*+\xi)$. In particular, with $C^1$ initial data prescribed along $\{v_*\} \times [v_*,v_*+\xi]$, we show that the solution further extends as a $C^1$ solution to $\mathcal{D}(v_*,v_*+\xi')$ for some $\xi' \in (0,\xi]$. This concludes the proof of Theorem \ref{SET}. \qed

\vspace{5mm}

We continue to prove Theorem \ref{FET}. 

\hspace{-10pt}\textit{Proof of Theorem \ref{FET}.} By the assumption, we can find a $u_0 < v_*$ such that 
\begin{equation}\label{FET1}
\varepsilon  = \sup_{\mathcal{D}(u_0,v_*)} \mu < \varepsilon'.
\end{equation}
Given any $u_1 \in (u_0,v_*)$, we further define 
\begin{equation}\label{D(u0,u1,v*)}
\mathcal{D}(u_0,u_1,v_*) := \mathcal{D}(u_0,v_*)\setminus \overline{\mathcal{D}}(u_1,v_*).
\end{equation}
This region is demonstrated in Figure \ref{SetupFig4}. For any given $\delta \in (0,1)$, we set
\begin{equation}\label{Au1}
A'(u_1) := \sup_{\mathcal{D}(u_0,u_1,v_*)} (r^{\delta}|\alpha'|).
\end{equation}
We will show that this quantity remains bounded as $u_1 \rightarrow v_*$. Based on it, we will further prove that
\begin{equation}\label{Bu1}
B'(u_1) := \sup_{\mathcal{D}(u_0,u_1,v_*)}(|\alpha'|)
\end{equation}
remains bounded as $u_1 \rightarrow v_*$. The fact that the solution extends to $\mathcal{D}(0,v_*)$ as a $C^1$ solution then follows. 
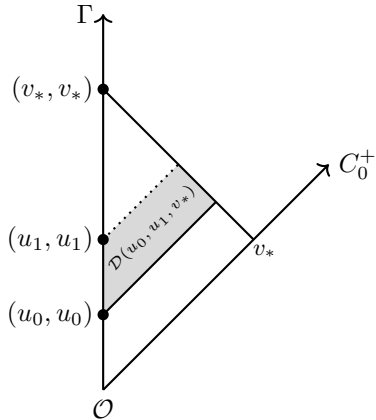
\begin{figure}[htbp!]
\begin{minipage}[!t]{0.4\textwidth}
	\centering
\begin{tikzpicture}
	\begin{scope}[thick]
         \fill[gray!30] (0,1) to (1.5,2.5) to (1,3) to (0,2) to (0,1);
	\draw[->] (0,0) node[anchor=north]{$\mathcal{O}$} --  (0,5) node[anchor = east]{$\Gamma$};
	\draw[->] (0,0) -- (3,3) node[anchor = west]{$C_0^+$};
        \draw(0,4) -- (2,2);
        \node[] at (2.15,1.85) {\small $v_*$};
        \node[anchor = east] at (0,1) {$(u_0,u_0)$};
        \node at (0,1)[circle,fill,inner sep=1.5pt]{};
        \node[anchor = east] at (0,2) {$(u_1,u_1)$};
        \node at (0,2)[circle,fill,inner sep=1.5pt]{};
        \node at (0,4)[circle,fill,inner sep=1.5pt]{};
        \node[anchor = east] at (0,4) {$(v_*,v_*)$};
        \node[rotate=45] at (0.65,2.15) {\tiny $\mathcal{D}(u_0,u_1,v_*)$};
        \draw[dotted] (0,2) -- (1,3);
        \draw[] (0,1) -- (1.5,2.5);
	\end{scope}
\end{tikzpicture}
\end{minipage}
\caption{Illustration of $\mathcal{D}(u_0,v_1,v_*)$.}
\label{FETFig1}
\end{figure}

\ul{Estimates for $A'(u_1)$.}
To bound $A'(u_1)$, we begin by applying Lemma \ref{C1complemma} to obtain \eqref{FET14}. From \eqref{FET14}, for every $(u,v) \in \mathcal{D}(u_0,u_1,v_*)$, via integration along the $u$ direction, we obtain
\begin{equation}\label{FET114}
\begin{aligned}
\alpha'(u,v) e^{\int^u_{u_0}\ii \e A_u (u',v) \mathrm{d}u'} &= \alpha'(u_0,v) +  \sum_{i=1}^{11} I_i(u,u_0,v). \\
\end{aligned}
\end{equation}
We now define
\begin{equation}\label{D'}
D'(u_0) := \sup_{\{u_0\}\times (u_0,v_*)}(r^{\delta}|\alpha'|)
\end{equation}
and observe that this term is finite as we have $\alpha \in C^1(\{u_0\} \times [u_0,v_*])$. 
This implies
\begin{equation}\label{FET15}
\begin{aligned}
|\alpha'(u,v)| &\leq D'(u_0)\ml r(u_0,v) \mr^{-\delta}  + \sum_{i=1}^{11} |I_i|(u,u_0,v)
\end{aligned}
\end{equation}
with 
\begin{equation}\label{FET16‘}
I_i(u,u_0,v) = \int^u_{u_0} e^{\int^{u'}_{u_0}\ii \e A_u (u'',v) \mathrm{d}u''} K_i(u',v) \mathrm{d}u'.
\end{equation} 
Since $A_u$ is real-valued, we also obtain
\begin{equation}\label{FET16}
|I_i|(u,u_0,v) \leq \int^u_{u_0} |K_i|(u',v) \mathrm{d}u'.
\end{equation}
Next, we move to derive an upper bound for each of the $I_i$'s. We begin by observing that
\begin{equation}\label{I_1}
|I_1|(u,u_0,v) \leq  8\pi \e \int^{u}_{u_0} (-\du r) \ml \frac{|Q||\phi|}{r^2} \mr  \ml \frac{1}{1-\mu}\mr\ml \frac{r|\dv \phi|}{\dv r}\mr^2 (u',v) \mathrm{d}u',
\end{equation}
\begin{equation}\label{I_2}
|I_2|(u,u_0,v) \leq 4\pi \int^{u}_{u_0} (-\du r) \ml\frac{|Q|^2}{r^4}\mr  \ml \frac{1}{1-\mu}\mr \ml \frac{r|\dv \phi|}{\dv r}\mr^3 (u',v) \mathrm{d}u',
\end{equation}
\begin{equation}\label{I_3}
|I_3|(u,u_0,v) \leq  4\pi \int^{u}_{u_0} (-\du r)\ml \frac{1}{r^2}\mr\ml \frac{1}{1-\mu}\mr\ml \frac{r|\dv \phi|}{\dv r}\mr^3 (u',v) \mathrm{d}u',
\end{equation}
\begin{equation}\label{I_4}
|I_4|(u,u_0,v) \leq  2 \int^{u}_{u_0} (-\du r)|\alpha'|  \ml \frac{1}{r}\mr \ml \frac{\mu}{1-\mu}\mr(u',v) \mathrm{d}u' ,
\end{equation}
\begin{equation}\label{I_5}
|I_5|(u,u_0,v) \leq  2 \int^{u}_{u_0} (-\du r)|\alpha'| \ml\frac{|Q|^2}{r^3}\mr \ml \frac{1}{1-\mu}\mr(u',v) \mathrm{d}u',
\end{equation}
\begin{equation}\label{I_6}
|I_6|(u,u_0,v) \leq  5 \int^{u}_{u_0}(-\du r)\ml \frac{|Q|^2}{r^4}\mr\ml \frac{1}{1-\mu}\mr\ml \frac{r|\dv \phi|}{\dv r}\mr (u',v) \mathrm{d}u',
\end{equation}
\begin{equation}\label{I_7}
|I_7|(u,u_0,v) \leq 3 \int^{u}_{u_0} (-\du r)\ml \frac{1}{r^2}\mr\ml \frac{\mu}{1-\mu}\mr\ml \frac{r|\dv \phi|}{\dv r}\mr (u',v) \mathrm{d}u',
\end{equation}
\begin{equation}\label{I_8}
|I_8|(u,u_0,v)  \leq  4\pi \e \int^{u}_{u_0}(-\du r) \ml\frac{|Q||\phi|}{r^2} \mr \ml \frac{1}{1-\mu}\mr\ml \frac{r|\dv \phi|}{\dv r}\mr^2 (u',v) \mathrm{d}u',
\end{equation}
\begin{equation}\label{I_9}
|I_9|(u,u_0,v) \leq  4 \pi \e^2  \int^{u}_{u_0} (-\du r) \ml |\phi|^2 \mr  \ml \frac{1}{1-\mu}\mr\ml \frac{r|\dv \phi|}{\dv r}\mr (u',v) \mathrm{d}u',
\end{equation}
\begin{equation}\label{I_10}
|I_{10}|(u,u_0,v) \leq  3 \e  \int^{u}_{u_0} (-\du r) \ml \frac{|Q|}{r^2}\mr  \ml \frac{1}{1-\mu}\mr\ml \frac{r|\dv \phi|}{\dv r}\mr (u',v) \mathrm{d}u',
\end{equation}
\begin{equation}\label{I_11}
|I_{11}|(u,u_0,v) \leq  \e  \int^{u}_{u_0} (-\du r) \ml \frac{|Q||\phi|}{r^2}\mr  \ml \frac{1}{1-\mu}\mr (u',v) \mathrm{d}u'.
\end{equation}
In order for us to prove desired upper bounds for these $11$ terms, we derive useful estimates for 
$$\frac{1}{1-\mu},\, |\alpha'|,\, \frac{r|\dv \phi|}{\dv r},\, \ml \frac{r|\dv \phi|}{\dv r}\mr^2,\, \ml \frac{r|\dv \phi|}{\dv r}\mr^3,\, |Q|,\, |Q\phi|,\, |\phi|^2 \mbox{ for all } (u,v) \in \mathcal{D}(u_0,u_1,v_*).$$ 
For these listed terms, we carry out the following strategy 
\begin{itemize}
\item[$\frac{1}{1-\mu}$:] Using the definition of $\varepsilon$ in \eqref{FET1}, we have
\begin{equation}\label{FET17}
\frac{1}{1-\mu}(u,v) \leq \frac{1}{1-\varepsilon}.
\end{equation}
\item[$|\alpha'|$:] Appealing to the definition of $A(u_1)$ in \eqref{Au1}, we get
\begin{equation}\label{FET21}
|\alpha'|(u,v) \leq A'(u_1)r^{-\delta}(u,v).
\end{equation}
\item[$\frac{r|\dv \phi|}{\dv r}$:] Recall from the proof of Theorem \ref{SET} that we have
\begin{equation}\label{FET19}
\frac{r \dv \phi}{\dv r}(u,v) = \alpha(u,v) - \frac{1}{r(u,v)}\int^v_u (\alpha \dv r)(u,v') \mathrm{d}v'.
\end{equation}
Using the fact that $r(u,u) = r|_{\Gamma} = 0$, inequality \eqref{FET21} and the definition of $A'(u_1)$ in \eqref{Au1}, we obtain
\begin{equation}\label{FET20}
\begin{aligned}
\mlm\frac{r \dv \phi}{\dv r}\mrm(u,v) &\leq 
\mlm \alpha(u,v) - \frac{1}{r(u,v)} \int^v_u (\alpha \dv r)(u,v') \mathrm{d}v'\mrm \\
&= \frac{1}{r(u,v)}\int^v_u \mlm \alpha(u,v) - \alpha(u,v') \mrm \dv r(u,v') \mathrm{d}v' \\
&\leq \frac{1}{r(u,v)}\int^v_u \ml \int^v_{v'} \mlm \dv \alpha (u,v'')\mrm \mathrm{d}v'' \mr \dv r (u,v') \mathrm{d}v' \\
&\leq \frac{1}{r(u,v)}\int^v_u \ml \int^v_{v'} A'(u_1) r^{-\delta}(\dv r)(u,v'')
\mathrm{d}v'' \mr \dv r (u,v') \mathrm{d}v' \\
&= \frac{A'(u_1)}{r(u,v)}\ml \frac{r^{2-\delta}}{(1-\delta)} -\frac{r^{2-\delta}}{(1-\delta)(2-\delta)} \mr(u,v)= \frac{A'(u_1)r^{1-\delta}(u,v)}{2-\delta}.
\end{aligned}
\end{equation}
\item[$\ml \frac{r|\dv \phi|}{\dv r}\mr^3$:] Here we will obtain an upper bound, which is linear in $A'(u_1)$. We recall a lemma established by Christodoulou in \cite{christ2} as follows:
\begin{lemma}\label{L2}
Let $f$ be a $C^1$ function on $[0,r_*]$ with $f(0) = 0$ and let 
\begin{equation}\label{Lemma2,1}
a := \sup_{r\in(0,r_*)} \ml \frac{g}{r} \mr, \quad b := \sup_{r\in(0,r_*)} \ml r^{\delta} \left| \frac{\mathrm{d}f}{\mathrm{d}r} \right| \mr
\end{equation}
with
\begin{equation}\label{Lemma2,2}
    g(r) = \int^r_0 f^2 \mathrm{d}r.
\end{equation}
Then, for any $\delta \in (0,1)$, we have
\begin{equation}\label{Lemma2,3}
    |f(r)|^3 \leq \frac{3ab}{1-\delta}r^{1-\delta}.
\end{equation}
\end{lemma}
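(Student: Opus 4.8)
The plan is to treat Lemma \ref{L2} as a one–dimensional calculus estimate: integrate the derivative of $|f|^3$ starting from $r=0$, insert the pointwise bound on $|\D f/\D r|$, and then perform a single integration by parts in which the bound $g(r)\le ar$ is used twice. First I would observe that $|f|^3 = (f^2)^{3/2}$ is genuinely $C^1$ on $[0,r_*]$, since $f^2\in C^1$ and $t\mapsto t^{3/2}$ is $C^1$ on $[0,\infty)$; by the chain rule its derivative is $3|f|ff'$, and since $g'(r)=f(r)^2$ this gives the key inequality
\[
\frac{\D}{\D r}\,|f|^3 \;=\; 3|f|ff' \;\le\; 3 f^2 |f'| \;=\; 3\, g'(r)\,|f'(r)| .
\]
Integrating from $0$ to $r$, using $f(0)=0$ (hence $|f(0)|^3=0$) and the bound $|f'(s)|\le b\,s^{-\delta}$ supplied by the definition of $b$, yields
\[
|f(r)|^3 \;\le\; 3b\int_0^r s^{-\delta} g'(s)\,\D s .
\]

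Next I would integrate by parts in the remaining integral, writing
\[
\int_0^r s^{-\delta} g'(s)\,\D s \;=\; r^{-\delta}g(r) \;-\; \lim_{s\to 0^+} s^{-\delta}g(s) \;+\; \delta\int_0^r s^{-\delta-1} g(s)\,\D s .
\]
The definition of $a$ gives $g(s)\le a s$ for all $s\in(0,r_*)$; since $\delta\in(0,1)$ this forces $0\le s^{-\delta}g(s)\le a\, s^{1-\delta}\to 0$ as $s\to 0^+$, so the boundary term at the origin vanishes, while $r^{-\delta}g(r)\le a\,r^{1-\delta}$ and
\[
\delta\int_0^r s^{-\delta-1} g(s)\,\D s \;\le\; \delta a\int_0^r s^{-\delta}\,\D s \;=\; \frac{\delta a}{1-\delta}\,r^{1-\delta}.
\]
Adding the two contributions gives $\int_0^r s^{-\delta}g'(s)\,\D s \le \frac{a}{1-\delta}\,r^{1-\delta}$, and combining with the previous display produces the claimed bound $|f(r)|^3\le \frac{3ab}{1-\delta}\,r^{1-\delta}$.

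There is no serious obstacle here; the only two points that require a little care are the differentiability of $|f|^3$ through the zeros of $f$ (which is why I would rewrite it as $(f^2)^{3/2}$ rather than differentiate $|f|$ directly) and the vanishing of the boundary term at $s=0$ in the integration by parts, which is precisely where the hypothesis $\delta\in(0,1)$ together with $g(s)\le a s$ enters. One could instead avoid the chain-rule step by estimating $f(r)^2=\int_0^r 2ff'\,\D s$ directly via Cauchy–Schwarz, but the route above keeps all constants explicit and matches the form of \eqref{Lemma2,3}.
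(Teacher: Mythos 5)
Your proof is correct and follows essentially the same line as the paper's: differentiate the cube, insert the pointwise bound on $|f'|$ from the definition of $b$, integrate by parts, and use $g(s)\le as$ twice. The only cosmetic differences are that you differentiate $(f^2)^{3/2}$ instead of $f^3$ (the paper's route avoids the $|\cdot|$ issue by taking absolute values after integrating) and that you justify the vanishing boundary term directly from $g(s)\le as$ rather than by L'H\^opital as the paper does; both are fine.
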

\begin{proof}
Observe that
\begin{equation}
f(r)^3 = \int^r_0 \frac{\mathrm{d}}{\mathrm{d}r} f(r)^3 \mathrm{d}r 
= \int^r_0 3f^2 \ml \frac{\mathrm{d}f}{\mathrm{d}r} \mr \mathrm{d}r.
\end{equation}
This implies
\begin{equation}\label{Lemma2Proof}
\begin{aligned}
|f(r)|^3 &\leq \int^r_0 3f^2 \left| \frac{\mathrm{d}f}{\mathrm{d}r} \right| \mathrm{d}r \leq 3b \int^r_0 r^{-\delta}f^2 \mathrm{d}r = 3b\ml \int^r_0 r^{-\delta} \frac{dg}{dr} \mathrm{d}r \mr \\
&\quad= 3b \ml r^{-\delta}g(r) - r^{-\delta}g(r)|_{r=0} + \delta \int^r_0 gr^{-1-\delta} \mathrm{d}r \mr\\
&\quad\leq 3b \ml ar^{1-\delta} + a\delta \int^r_0 r^{-\delta} \mathrm{d}r \mr = \frac{3ab}{1-\delta}r^{1-\delta}.
\end{aligned}
\end{equation}
Here we use the fact that
\begin{equation}
r^{-\delta}g(r)|_{r=0} = \lim_{r\rightarrow 0}\frac{g(r)}{r^{\delta}} = \lim_{r\rightarrow 0}\frac{f(r)^2}{\delta r^{\delta-1}} = \lim_{r\rightarrow 0} \frac{f(r)^2 r^{1-\delta}}{\delta} = 0.
\end{equation}
\end{proof}
We then set $f = \frac{r|\dv \phi|}{\dv r}$ and apply the above lemma. To obtain a bound for $\ml\frac{r|\dv \phi|}{\dv r}\mr^3$, we need to compute $a$ and $b$ as in \eqref{Lemma2,1}. For each fixed $u \in (u_0,u_1)$ with $r_* = r(u,v_*)$, we have
\begin{equation}\label{FET22}
\begin{aligned}
a(u) &= \sup_{r\in(0,r_*)} \ml \frac{1}{r} \int^r_0 f(r')^2 \mathrm{d}r' \mr = \sup_{v\in(u,v_*)} \ml \frac{1}{r} \int^v_u \ml\frac{(r|\partial_v \phi|)^2}{\partial_v r}\mr(u,v') \mathrm{d}v' \mr.
\end{aligned}
\end{equation}
Noting that \eqref{Setup25} implies $
\frac{(r|\partial_v \phi|)^2}{\partial_v r} \leq \frac{\partial_v m}{2\pi (1-\mu)}$, we therefore obtain
\begin{equation}\label{FET24}
\begin{aligned}
a \leq \sup_{v\in(u,v_*)} \ml \frac{1}{2\pi r} \int^v_u \ml\frac{\partial_v m}{1-\mu}\mr \mathrm{d}v' \mr &\leq \frac{1}{4\pi (1-\varepsilon)} \sup_{v\in(u,v_*)}\ml \mu(u,v) - \frac{2m(u,u)}{r(u,v)}\mr \\
&\leq \frac{\varepsilon}{4\pi (1-\varepsilon)}.
\end{aligned}
\end{equation}
For $b$, by applying \eqref{FET21} and \eqref{FET20}, we have
\begin{equation}\label{FET25}
\begin{aligned}
b &= \sup_{r\in(0,r_*)} \ml r^{\delta} \left| \frac{\mathrm{d}f}{\mathrm{d}r} \right| \mr =\sup_{v\in(u,v_*)} \ml r^{\delta} \mlm \frac{\dv f}{\dv r} \mrm \mr = \sup_{v\in(u,v_*)} \ml \frac{r^\delta}{\partial_v r} \mlm \dv \ml \frac{\partial_v (r\phi)}{\partial_v r} - \phi\mr \mrm \mr \\
&\quad\quad\leq \sup_{\mathcal{D}(u_0,u_1,v_*)} \ml \frac{r^\delta}{\partial_v r} \mlm \dv \ml \frac{\partial_v (r\phi)}{\partial_v r} - \phi\mr \mrm \mr \\
&\quad\quad\leq \sup_{\mathcal{D}(u_0,u_1,v_*)} \ml r^{\delta}|\alpha'| \mr +
\sup_{\mathcal{D}(u_0,u_1,v_*)} \ml r^{\delta - 1}\ml \frac{r|\partial_v \phi|}{\partial_v r} \mr\mr\\
&\quad\quad\leq A'(u_1) + \frac{1}{2-\delta}A'(u_1) = \frac{3-\delta}{2-\delta}A'(u_1).
\end{aligned}
\end{equation}
Henceforth, combining the lemma together with \eqref{FET24} and \eqref{FET25}, we obtain
\begin{equation}\label{FET26}
\begin{aligned}
\ml\frac{r|\partial_v \phi|}{\partial_v r}\mr ^3(u,v) \leq \frac{3ab}{1-\delta}r^{1-\delta} \leq \frac{3(3-\delta)}{4\pi(1-\delta)(2-\delta)}\frac{\varepsilon}{1-\varepsilon}A'(u_1)r^{1-\delta}.
\end{aligned}
\end{equation}
\item[$\ml \frac{r|\dv \phi|}{\dv r}\mr^2:$] A direct interpolation between the upper bounds in \eqref{FET20} and \eqref{FET26} implies that
\begin{equation}\label{FET27}
\begin{aligned}
\ml\frac{r|\partial_v \phi|}{\partial_v r} \mr^2(u,v) = \sqrt{\ml\frac{r|\partial_v \phi|}{\partial_v r} \mr\ml\frac{r|\partial_v \phi|}{\partial_v r} \mr^3} \leq \sqrt{\frac{3(3-\delta)}{4\pi(1-\delta)(2-\delta)^2}\frac{\varepsilon}{1-\varepsilon}}A'(u_1)r^{1-\delta}.
\end{aligned}
\end{equation}
\end{itemize}
Note that the estimates for $|Q|, |Q\phi|,$ and $|\phi|^2$ have been established in Lemma \ref{C1complemma}.

\vspace{5mm}

With these estimates, we now obtain an estimate for each of the $|I_i|$'s as follows. For $I_1$, we have
\begin{equation}\label{FET30}
\begin{aligned}
|I_1|(u,u_0,v) &\leq  8\pi \e \int^{u}_{u_0} (-\du r) \ml \frac{|Q||\phi|}{r^2} \mr  \ml \frac{1}{1-\mu}\mr\ml \frac{r|\dv \phi|}{\dv r}\mr^2 (u',v) \mathrm{d}u' \\
&\leq 8 \pi \e C_{7} A'(u_1) \sqrt{\frac{3(3-\delta)}{4\pi(1-\delta)(2-\delta)^2}} \frac{\varepsilon}{(1-\varepsilon)^{2}}  \int^{u}_{u_0}  (-\du r) r^{\frac{1}{2}-\frac{\chi}{2}-\delta}  (u',v) \mathrm{d}u' \\
&\leq 8 \pi \e C_{7} A'(u_1) \sqrt{\frac{3(3-\delta)}{4\pi(1-\delta)(2-\delta)^2}} \frac{\varepsilon}{(1-\varepsilon)^{2}}  \frac{r^{\frac{3}{2}-\frac{\chi}{2}}(u_0,v)}{\frac{3}{2}-\frac{\chi}{2}} \cdot \frac{1}{r^\delta(u,v)} \\
&\leq \frac{16 \pi \e C_{7}r^{\frac{3}{2}-\frac{\chi}{2}}(0,v_*)}{3-\chi}  \sqrt{\frac{3(3-\delta)}{4\pi(1-\delta)(2-\delta)^2}} \frac{\varepsilon}{(1-\varepsilon)^{2}}   \frac{A'(u_1)}{r^{\delta}(u,v)}.\\
\end{aligned}
\end{equation}
Here we use the fact $\chi\in(0,1)$, $\delta\in(0,1)$ and $\frac12-\frac{\chi}{2}-\delta>-1$. 

\noindent Following a similar idea, the next term can be bounded as follows:
\begin{equation}\label{FET31}
\begin{aligned}
|I_2|(u,u_0,v) &\leq 4\pi \int^{u}_{u_0} (-\du r) \ml\frac{|Q|^2}{r^4}\mr  \ml \frac{1}{1-\mu}\mr \ml \frac{r|\dv \phi|}{\dv r}\mr^3 (u',v) \mathrm{d}u' \\
&\leq  C_6^2  A'(u_1) \frac{3(3-\delta)}{(1-\delta)(2-\delta)}\frac{\varepsilon^2}{(1-\varepsilon)^2}\int^{u}_{u_0} (-\du r) r^{-\delta - \chi} (u',v) \mathrm{d}u' \\
&\leq  \frac{C_6^2r(0,v_*)^{1-\chi}}{1-\chi} \frac{3(3-\delta)}{(1-\delta)(2-\delta)}\frac{\varepsilon^2}{(1-\varepsilon)^2} \frac{A'(u_1)}{r^{\delta}(u,v)}.
\end{aligned}
\end{equation}
We proceed to derive
\begin{equation}\label{FET33}
\begin{aligned}
|I_3|(u,u_0,v) &\leq 4\pi \int^{u}_{u_0} (-\du r)\ml \frac{1}{r^2}\mr\ml \frac{1}{1-\mu}\mr\ml \frac{r|\dv \phi|}{\dv r}\mr^3 (u',v) \mathrm{d}u'\\
&\leq  A'(u_1) \frac{3(3-\delta)}{(1-\delta)(2-\delta)}\frac{\varepsilon}{(1-\varepsilon)^2}\int^{u}_{u_0} (-\du r)r^{-1-\delta} (u',v) \mathrm{d}u'  \\
&\leq \frac{3(3-\delta)}{\delta(1-\delta)(2-\delta)}\frac{\varepsilon}{(1-\varepsilon)^2} \frac{A'(u_1)}{r^{\delta}(u,v)}
\end{aligned}
\end{equation}
and
\begin{equation}\label{FET32}
\begin{aligned}
|I_4|(u,u_0,v) &\leq   2 \int^{u}_{u_0} (-\du r)|\alpha'|  \ml \frac{1}{r}\mr \ml \frac{\mu}{1-\mu}\mr(u',v) \mathrm{d}u'  \\
&\leq 2 A'(u_1) \frac{\varepsilon}{1-\varepsilon}  \int^{u}_{u_0} (-\du r) r^{-1-\delta} (u',v) \mathrm{d}u' \leq \frac{2}{\delta} \frac{\varepsilon}{1-\varepsilon}  \frac{A'(u_1)}{r^{\delta}(u,v)}.
\end{aligned}
\end{equation} 
The subsequent three terms obey the following bounds
\begin{equation}\label{FET34}
\begin{aligned}
|I_5|(u,u_0,v) &\leq   2 \int^{u}_{u_0} (-\du r)|\alpha'| \ml\frac{|Q|^2}{r^3}\mr \ml \frac{1}{1-\mu}\mr(u',v) \mathrm{d}u' \\
&\leq 2 C^2_6 A'(u_1) \frac{\varepsilon}{1-\varepsilon} \int^{u}_{u_0} (-\du r) r^{-\delta -\chi} (u',v) \mathrm{d}u' \leq  \frac{2 C^2_6 r^{1-\chi}(0,v_*)}{1-\chi} \frac{\varepsilon}{1-\varepsilon}\frac{A'(u_1)}{r^{\delta}(u,v)},
\end{aligned}
\end{equation}
\begin{equation}\label{FET35}
\begin{aligned}
|I_6|(u,u_0,v) &\leq   5 \int^{u}_{u_0}(-\du r)\ml \frac{|Q|^2}{r^4}\mr\ml \frac{1}{1-\mu}\mr\ml \frac{r|\dv \phi|}{\dv r}\mr (u',v) \mathrm{d}u' \\
&\leq  5 C_6^2 A'(u_1) \frac{1}{2-\delta}  \frac{\varepsilon}{1-\varepsilon} \int^{u}_{u_0} (-\du r) r^{-\delta - \chi}(u',v) \mathrm{d}u' \\
&\leq \frac{5 C_6^2 r^{1-\chi}(0,v_*)}{(1-\chi)(2-\delta)} \frac{\varepsilon}{1-\varepsilon} \frac{A'(u_1)}{r^{\delta}(u,v)},
\end{aligned}
\end{equation}
\begin{equation}\label{FET36}
\begin{aligned}
|I_7|(u,u_0,v) &\leq 3 \int^{u}_{u_0} (-\du r)\ml \frac{1}{r^2}\mr\ml \frac{\mu}{1-\mu}\mr\ml \frac{r|\dv \phi|}{\dv r}\mr (u',v) \mathrm{d}u' \\
&\leq  \frac{3A'(u_1)}{2-\delta} \frac{\varepsilon}{1-\varepsilon} \int^{u}_{u_0} (-\du r) r^{-1 -\delta}(u',v) \mathrm{d}u'\leq \frac{3}{\delta(2-\delta)} \frac{\varepsilon}{1-\varepsilon} \frac{A'(u_1)}{r^{\delta}(u,v)}.
\end{aligned}
\end{equation}
For the next term $I_8$, we note that its upper bound is exactly half of that of $I_1$. This implies 
\begin{equation}\label{FET38}
\begin{aligned}
|I_8|(u,u_0,v)  &\leq  4\pi \e \int^{u}_{u_0}(-\du r) \ml\frac{|Q||\phi|}{r^2} \mr \ml \frac{1}{1-\mu}\mr\ml \frac{r|\dv \phi|}{\dv r}\mr^2 (u',v) \mathrm{d}u', \\
&\leq \frac{8 \pi \e C_{7}r^{\frac{3}{2}-\frac{\chi}{2}}(0,v_*)}{3-\chi}  \sqrt{\frac{3(3-\delta)}{4\pi(1-\delta)(2-\delta)^2}} \frac{\varepsilon}{(1-\varepsilon)^{2}}   \frac{A'(u_1)}{r^{\delta}(u,v)}. \\
\end{aligned}
\end{equation}
To estimate $|I_9|$, we employ \eqref{FET71} and get
\begin{equation}\label{FET40}
\begin{aligned}
|I_9|(u,u_0,v) &\leq  4 \pi \e^2  \int^{u}_{u_0} (-\du r) \ml |\phi|^2 \mr  \ml \frac{1}{1-\mu}\mr\ml \frac{r|\dv \phi|}{\dv r}\mr (u',v) \mathrm{d}u' \\
&\leq 4 \pi \e ^2  \frac{A'(u_1)}{2-\delta} \frac{1}{1-\varepsilon} \int^u_{u_0} (-\du r)r^{1-\delta}|\phi|^2(u',v)\mathrm{d}u
'\\
&\leq 4 \pi \e ^2  \frac{A'(u_1)}{2-\delta} \frac{1}{1-\varepsilon} \int^u_{u_0} (-\du r)r^{1-\delta} \ml \frac{1}{2\pi}\frac{\varepsilon
}{1-\varepsilon} + 2C_{8}^2(\iota)r^{-1-\iota} \mr (u',v)\mathrm{d}u
'\\
&\leq \frac{4 \pi \e ^2 }{2-\delta} \frac{1}{1-\varepsilon} \frac{A'(u_1)}{r^{\delta}(u,v)}  \ml \frac{r^2(0,v_*)}{4\pi}\frac{\varepsilon
}{1-\varepsilon} + \frac{2C_{8}^2(\iota) r^{1-\iota}(u_0,v_*)}{1-\iota}\mr.
\end{aligned}
\end{equation}
The upper bounds for the last two terms $I_{10}$ and $I_{11}$ can be derived as
\begin{equation}\label{FET42}
\begin{aligned}
|I_{10}|(u,u_0,v) &\leq  3 \e  \int^{u}_{u_0} (-\du r) \ml \frac{|Q|}{r^2}\mr  \ml \frac{1}{1-\mu}\mr\ml \frac{r|\dv \phi|}{\dv r}\mr (u',v) \mathrm{d}u' \\
&\leq 3 \e C_6 A'(u_1) \frac{1}{2-\delta}\frac{\varepsilon^{\frac{1}{2}}}{1-\varepsilon}\int^{u}_{u_0} (-\du r) r^{\frac{1}{2} -\frac{\chi}{2}-\delta}(u',v) \mathrm{d}u' \\
&\leq \frac{6 \e C_6 r^{\frac{3}{2}-\frac{\chi}{2}}(u_0,v_*)}{(3 -  \chi)(2-\delta)} \frac{\varepsilon^{\frac{1}{2}}}{1-\varepsilon} \frac{A'(u_1)}{r^{\delta}(u,v)}
\end{aligned}
\end{equation}
and
\begin{equation}\label{FET43}
\begin{aligned}
|I_{11}|(u,u_0,v) &\leq \e  \int^{u}_{u_0} (-\du r) \ml \frac{|Q||\phi|}{r^2}\mr  \ml \frac{1}{1-\mu}\mr (u',v) \mathrm{d}u'\\
&\leq \e C_{7} \frac{\varepsilon^{\frac{1}{2}}}{(1-\varepsilon)^{\frac{3}{2}}} \int^u_{u_0} (-\du r)r^{-\oh-\frac{\chi}{2}}(u',v)\mathrm{d}u' \\
&\leq \e C_{7} r^{\oh + \delta - \frac{\chi}{2}}(0,v_*) \frac{\varepsilon^{\frac{1}{2}}}{(1-\varepsilon)^{\frac{3}{2}}} \int^u_{u_0} (-\du r)r^{-1-\delta}(u',v)\mathrm{d}u' \\
&\leq \frac{2\e C_{7}r^{\oh + \delta - \frac{\chi}{2}}(0,v_*)}{\delta} \frac{\varepsilon^{\frac{1}{2}}}{(1-\varepsilon)^{\frac{3}{2}}} \frac{1}{r^{\delta}(u,v)}.
\end{aligned}
\end{equation}
Note that in \eqref{FET43}, we have restricted our choice for $\chi$ to $\chi < 2\delta$.

Back to \eqref{FET15}, multiplying $r^{\delta}(u,v)$ on both sides and taking the supremum over $\mathcal{D}(u_0,u_1,v_*)$, we therefore obtain
\begin{equation}\label{FET45}
A'(u_1) \leq D'(u_0) + \sup_{\mathcal{D}(u_0,u_1,v_*)} \ml \sum_{i=1}^{11} |I_i|r^{\delta}(u,v)\mr
\end{equation}
with $D'(u_0)$ defined in \eqref{D'}. For convenience, we impose the requirement
\begin{equation}\label{FET47}
 \varepsilon<\frac12.   
\end{equation}
With this restriction, we employ obvious bounds (such as $0 < \iota,\delta,\chi < 1$ and $0 < \varepsilon < \frac{1}{2}$) to simplify the constants appearing in the upper bounds for $|I_1|$ to $|I_{11}|$. We will also apply $1 \leq \frac{1}{(1-\delta)^{\frac{1}{2}}} \leq \frac{1}{1-\delta}\leq \frac{1}{\delta(1-\delta)}$, $\frac{1}{\delta} \leq \frac{1}{\delta(1-\delta)}$, $\frac{1}{(1-\varepsilon)^b} \leq 4$ for any $b \in [1,2]$. Furthermore, for the ease of impending computation, we will lower all powers of $\varepsilon$ that are greater than $\oh$ in the numerator to $\varepsilon^{\frac{1}{2}}$ since $\varepsilon < \frac{1}{2}$. For brevity, we will also define the unified coupling constant $E$ as
\begin{equation}\label{FET49}
E := \max\{1,\e,\e^2\}.
\end{equation} 
With the above, we can simplify the constants in our estimates for $|I_i|$. We start by analyzing $|I_1|$ and get
\begin{equation}\label{FET50}
\begin{aligned}
|I_1|(u,u_0,v) &\leq \frac{16 \pi \e C_{7}r^{\frac{3}{2}-\frac{\chi}{2}}(0,v_*)}{3-\chi}  \sqrt{\frac{3(3-\delta)}{4\pi(1-\delta)(2-\delta)^2}} \frac{\varepsilon}{(1-\varepsilon)^{2}}   \frac{A'(u_1)}{r^{\delta}(u,v)} \leq \frac{C_{9,1}\varepsilon^{\frac{1}{2}}}{\delta(1 - \delta)} \frac{A'(u_1)}{r(u,v)^{\delta}}
\end{aligned}
\end{equation}
with constant $C_{9,1} > 0$ depending on $v_*$, $\chi$, $E$. Explicitly for $|I_1|$, this is given by
\begin{equation}
C_{9,1} = 16 \sqrt{6\pi} E C_{7} r^{\frac{3}{2}-\frac{\chi}{2}}(0,v_*).
\end{equation}
One can repeat a similar argument for $|I_i|$ with $i = 2,3,4,5,6,7,8,10$ and get constants $C_{9,i} > 0$ depending on $v_*$, $\chi$, and $E$.\footnote{Here, observe that $C_7$ also depends on $v_*, \chi, E$ by its explicit expression in \eqref{FET243}.} Note that each of these constants does not depend on $u_0$. For $i = 11$, we employ a similar bound without $A'(u_1)$ and obtain
\begin{equation}\label{FET51}
\begin{aligned}
|I_{11}|(u,u_0,v) &\leq \frac{2\e C_{7}r^{\delta - \frac{\chi}{2}}(0,v_*)}{\delta} \frac{\varepsilon^{\frac{1}{2}}}{1-\varepsilon} \frac{1}{r^{\delta}(u,v)} \leq \frac{C_{10,11}\varepsilon^{\frac{1}{2}}}{\delta(1 - \delta)} \frac{1}{r(u,v)^{\delta}}
\end{aligned}
\end{equation}
with constant $C_{10,11} > 0$ depending on $v_*$, $\chi$, $E$. For $i = 9$, there is a term without $\varepsilon$. Thus, we use the following upper bound 
\begin{equation}\label{FET52}
\begin{aligned}
|I_9|(u,u_0,v) 
&\leq \ml C_{9,9} \frac{\varepsilon^{\frac{1}{2}}}{\delta(1 - \delta)} + C_{11,9}(u_0;\iota) \mr \frac{A'(u_1)}{r(u,v)^{\delta}}
\end{aligned}
\end{equation}
with $C_{9,9} > 0$ and $C_{11,9}(u_0) > 0$. Note that $C_{9,9}$ depends only on $v_*, \chi, E$, while $C_{11,9}$ depends additionally on $\iota$, $u_0$ and can be expressed as
\begin{equation}\label{FET72}
C_{11,9}(u_0;\iota) = \frac{32 \pi E C_{8}^2(\iota)}{1-\iota}r^{1-\iota}(u_0,v_*).
\end{equation}
Combining these estimates together with \eqref{FET45}, we derive
\begin{equation}\label{FET53}
A'(u_1) \leq D'(u_0) + \ml \frac{\sum_{i=1}^{10}C_{9,i}\varepsilon^{\frac{1}{2}} }{\delta(1-\delta)} + C_{11,9}(u_0;\iota)\mr A'(u_1) + \frac{ C_{10,11}}{\delta(1-\delta)}\varepsilon^{\frac{1}{2}}.
\end{equation}
Next, we demand the following constraints on the constants
\begin{equation}\label{FET54}
\frac{\sum_{i=1}^{10}C_{9,i}\varepsilon^{\frac{1}{2}} }{\delta(1-\delta)} < \frac{1}{4},
\end{equation}
and
\begin{equation}\label{FET73}
C_{11,9}(u_0;\iota) < \frac{1}{8}.\footnote{The choice of $\frac{1}{8}$ here is for simplifying parallel arguments in the next subsection for extension beyond $\mathcal{D}(0,v_*)$.}
\end{equation}
Note that \eqref{FET54} can be rewritten as
\begin{equation}\label{FET74}
\varepsilon < \ml \frac{\delta(1-\delta)}{4\sum_{i=1}^{10}C_{9,i}}\mr^2.
\end{equation}
For \eqref{FET73}, this can be attained by picking $u_0$ sufficiently close to $v_*$ with using the explicit expression of $C_{11,9}(u_0)$ in \eqref{FET72}, in which for a fixed $\iota \in (0,1)$, the term $r^{1-\iota}(u_0,v_*)$ serves as a small factor which goes to $0$ as $u_0 \rightarrow v_*$. Henceforth, \eqref{FET54} implies
\begin{equation}\label{FET56}
\frac{C_{10,11}}{\delta(1-\delta)}\varepsilon^{\frac{1}{2}} < \frac{C_{10,11}}{4\sum_{i=1}^{10}C_{9,i}} =: C_{12} < + \infty.
\end{equation}
From \eqref{FET47} and \eqref{FET74}, if we pick
\begin{equation}\label{FET57}
\varepsilon' = \min\left\{\frac{1}{2},\ml \frac{\delta(1-\delta)}{4\sum_{i=1}^{10}C_{9,i}}\mr^2 \right\},
\end{equation}
then for all $\varepsilon < \varepsilon'$, we have that \eqref{FET47} and \eqref{FET74} hold. Plugging \eqref{FET54}, \eqref{FET73}, and \eqref{FET56} back into \eqref{FET53}, we deduce that
\begin{equation}\label{FET58}
A'(u_1) \leq 2(D'(u_0) + C_{12}) =: C_{13}.
\end{equation}
In other words, $A'(u_1)$ is bounded by a constant depending only on $\delta$, $\iota$, $\chi$. Note that this bound is valid as long as we have $\chi$, $\delta$, $\iota$ satisfying $0< \chi < \delta < 1$ and $0 < \iota < 1$. 

\ul{Estimates for $B'(u_1)$.}
Next, we will further show that $B'(u_1)$ as defined in \eqref{Bu1} remains bounded. This is important for concluding the extension of the $C^1$ solution up to the boundary as this in turn implies that the solution in $\mathcal{D}(u_0,u_1,v_*)$ is indeed $C^1$. We start off by observing from \eqref{FET20} and \eqref{FET58} that
\begin{equation}\label{FET107}
\begin{aligned}
\frac{r |\dv \phi|}{\dv r}(u,v) &\leq C_{14}(\delta) r^{1-\delta}(u,v) \mbox{ with } C_{14}(\delta) := \frac{C_{13}}{1-\delta}.
\end{aligned}
\end{equation}
By \eqref{Setup25}, from $m|_{\Gamma} = 0$, this implies
\begin{equation}\label{FET108}
m(u,v) \leq \int^v_u \ml \frac{2\pi r^2(1-\mu)|\dv \phi|^2}{\dv r} + \frac{Q^2 \dv r}{2r^2} \mr(u,v') \D v'.
\end{equation}
The first term in \eqref{FET108} can be estimated with the help of \eqref{FET107} as
\begin{equation}\label{FET109}
\begin{aligned}
\int^v_u \frac{2\pi r^2(1-\mu)|\dv \phi|^2}{\dv r}(u,v') \D v'&\leq 2\pi \int^v_u \frac{r^2|\dv \phi|^2}{(\dv r)^2}(\dv r)(u,v') \D v'\\
&\leq 2\pi \int^v_u C_{14}^2 r^{2-2\delta}(\dv r)(u,v') \D v' 
= \frac{2\pi C_{14}^2}{3-2\delta}r^{3-2\delta}(u,v).
\end{aligned}
\end{equation}
For the second term, from \eqref{FET28}, we have
\begin{equation}\label{FET110}
\begin{aligned}
\int^v_u \frac{Q^2 \dv r}{2r^2}(u,v') \D v' \leq \frac{C_{6}^2}{2}\int^v_u r^{1-\chi}(\dv r)(u,v') \D v' = \frac{C_{6}^2}{2(2-\chi)}r^{2-\chi}(u,v).
\end{aligned}
\end{equation}
If we restrict our choice of $\delta$ to $\left( 0, \frac{1}{2} \right]$, from \eqref{FET108}, \eqref{FET109}, \eqref{FET110}, we observe that
\begin{equation}\label{FET111}
m(u,v) \leq \frac{C_{15}}{2} r^{2-\chi}(u,v)
\end{equation}
with 
\begin{equation}\label{FET112}
C_{15}(\delta,\chi) := \frac{4\pi C_{14}^2}{3-2\delta}r^{1-2\delta+\chi}(0,v_*) + \frac{C_6^2}{(2-\chi)}.
\end{equation}
Consequently, we obtain
\begin{equation}\label{FET113}
\mu(u,v) \leq C_{15}r^{1-\chi}(u,v).
\end{equation}

\vspace{5mm}

To see how \eqref{FET113} can help us to obtain a bound for $B'(u_1)$, we first revisit \eqref{FET114} and observe that if we define
\begin{equation}\label{FET115}
D''(u_0) := \sup_{\{u_0\} \times [u_0,v_*]} (|\alpha'|) < +\infty,
\end{equation}
we then have
\begin{equation}\label{FET116}
|\alpha'(u,v)| \leq D''(u_0) + \sum_{i=1}^{11}|I_i|(u,u_0,v)
\end{equation}
with $I_i$ defined in \eqref{FET16} remains unchanged. Analogous to \eqref{FET45}, taking supremum over $\mathcal{D}(u_0,u_1,v_*)$ yields
\begin{equation}\label{FET117}
B'(u_1) \leq D''(u_0) + \sup_{\mathcal{D}(u_0,u_1,v_*)} \ml \sum_{i=1}^{11}|I_i|\mr.
\end{equation}
By comparing \eqref{FET117} with  \eqref{FET45}, we observe that the supremum is taken without an additional factor of $r^{\delta}(u,v)$ paired with each $|I_i|$.  This is where \eqref{FET113} comes in. By viewing an upper bound for $\mu$ as that for $\varepsilon$, we are able to pick appropriate values of $\delta, \chi, \iota$ such that the term $\varepsilon^{\frac{1}{2}}$ could be replaced by sufficient powers of $r(u,v)$ to counteract the term $r^{-\delta}(u,v)$ in each of the estimates for $|I_{i}|$. The details for relevant terms are provided as follows.
\begin{itemize}
\item[$\ml\frac{r|\dv \phi|}{\dv r}\mr^2$:] By \eqref{FET107}, we have
\begin{equation}\label{FET119}
\ml \frac{r|\dv \phi|}{\dv r}\mr^2 \leq C_{14}^2 r^{2-2\delta}(u,v).
\end{equation}
\item[$\ml\frac{r|\dv \phi|}{\dv r}\mr^3$:] By \eqref{FET107}, we obtain
\begin{equation}\label{FET118}
\ml \frac{r|\dv \phi|}{\dv r}\mr^3 \leq C_{14}^3 r^{3-3\delta}(u,v).
\end{equation}
\item[$|Q|$:] As \eqref{FET28} was derived from a more general expression in Proposition \ref{PropEstimates}, if we instead apply \eqref{FET113}, we would obtain
\begin{equation}\label{FET120}
|Q|(u,v) \leq C_6 C_{15}^{\frac{1}{2}} r^{2-\chi}(u,v).
\end{equation}
\item[$|Q\phi|:$] Similar to $|Q|$, if we work through the corresponding arguments and apply \eqref{FET113} instead, we get
\begin{equation}\label{FET121}
|Q\phi| \leq \frac{C_{7}C_{15}^{\frac{1}{2}}}{(1-\varepsilon)^{\frac{1}{2}}}r^{2-\chi}(u,v).
\end{equation}
\item[$|\phi|^2:$] In the same fashion, we deduce
\begin{equation}\label{FET122}
\begin{aligned}
|\phi|^2(u,v) &\leq \frac{C_{15}}{2\pi}\frac{1}{1-\varepsilon} r^{1-\chi}(u,v) + 2C_{8}^2(\iota)r^{-1-\iota}(u,v) \leq \frac{C_{16}}{1-\varepsilon}r^{-1-\iota}(u,v)
\end{aligned}
\end{equation}
with $C_{16}(\chi,\iota,\delta) := \frac{C_{15}(\delta,\chi)}{2\pi}r^{2-\chi+\iota}(0,v_*) + 2 C_{8}^{2}(\iota).$ 
\end{itemize}
For the remaining terms with $A'(u_1)$, a direct application of \eqref{FET58} suffices.

With the above preparation, we are now ready to derive improved estimates for $|I_{i}|$ with $i = 1$ to $11$. We summarize the bounds obtained for each $|K_i|(u,v)$ defined in \eqref{FET16} and thus list the bounds for each $|I_i|(u,u_0,v)$ in Table \ref{FETTable1}. Here we use $\lesssim$ to refer to inequalities with constants depending only on $v_*$ and parameters $\delta, \chi, \iota$. Meanwhile, due to \eqref{FET58}, the dependence on $A'(u_1)$ will be encoded under $\lesssim$ as $C_{13}$ depends on $\delta, \chi, \iota, v_*$, while being independent of $u_1$. Furthermore, we recall that we require the parameters to satisfy $0 < \chi < \delta \leq \frac{1}{2}$ and $0 < \iota < 1$.

\begin{table}[htbp]
\centering
\begin{tabular}{|c|c|c|}
\hline
    Index $i$ & $\frac{|K_i|}{-\du r}(u,v) \lesssim$ & $|I_i|(u,u_0,v) \lesssim$  \\
\hline
1 & $\frac{|Q||\phi|}{r^2} \ml \frac{r|\dv \phi|}{\dv r}\mr \lesssim r^{1-\chi-\delta}$ & $r^{2-\chi - \delta}$  \\
2 & $\frac{|Q|^2}{r^4}\ml \frac{r|\dv \phi|}{\dv r}\mr^3 \lesssim r^{3-2\chi-3\delta}$&  $r^{4-2\chi-3\delta}$\\
3 & $\frac{1}{r^2}\ml \frac{r|\dv \phi|}{\dv r}\mr^3 \lesssim r^{1-3\delta}$ & $r^{2-3\delta}$ \\
4 & $\frac{|\alpha'|}{r}\mu \lesssim r^{-\chi - \delta}$ & $r^{1 - \chi - \delta}$ \\
5 & $\frac{|Q|^2}{r^3}|\alpha'|\lesssim r^{1-2\chi - \delta}$ & $r^{2-2\chi-\delta}$ \\
6 & $\frac{|Q|^2}{r^4}\ml \frac{r|\dv \phi|}{\dv r}\mr \lesssim r^{1-2\chi-\delta}$& $r^{2-2\chi - \delta}$ \\
7 & $\frac{\mu}{r^2}\ml \frac{r|\dv \phi|}{\dv r}\mr \lesssim r^{-\chi - \delta}$& $r^{1-\chi-\delta}$ \\
8 & $\frac{|Q||\phi|}{r^2}\ml \frac{r|\dv \phi|}{\dv r}\mr^2 \lesssim r^{2-\chi-2\delta}$& $r^{3-\chi -2\delta}$ \\
9 & $|\phi|^2\ml \frac{r|\dv \phi|}{\dv r}\mr \lesssim r^{-\delta - \eta}$ & $r^{1-\delta - \eta}$ \\
10 & $\frac{|Q|^2}{r^2}\ml \frac{r|\dv \phi|}{\dv r}\mr \lesssim r^{3-2\chi-\delta}$& $r^{4-2\chi-\delta}$ \\
11 & $\frac{|Q||\phi|}{r^2} \lesssim r^{-\chi}$ & $r^{1-\chi}$ \\
    \hline
    \end{tabular}
    \caption{Table of improved estimates for $|I_i|$ in terms of powers of $r(u,v)$.}
    \label{FETTable1}
\end{table}
From Table \ref{FETTable1}, we observe the relevant improvement in the estimates for $|I_3|$, $|I_4|$, $|I_7|$, $|I_{11}|$. In addition, we see that 
\begin{equation}
\sum_{i=1}^{11}|I_i|(u,u_0,v) \lesssim r^{1-\chi-\delta}(u,v) = r^{\frac{1}{4}}(u,v) \leq r^{\frac{1}{4}}(0,v_*) < +\infty
\end{equation}
if we pick $\delta = \frac{1}{2}$ and $\iota = \chi = \frac{1}{4}$. Henceforth, from \eqref{FET117}, we deduce that $B'(u_1)$ uniformly bounded in $u_1$. We then conclude that the solution extends up to $\mathcal{D}(0,v_*)$ as a $C^1$ solution. 

\ul{Extension Beyond Boundary.} Last but not least, we will show that there exists some $\xi' > 0$ such that the solution extends as a $C^1$ solution to $\mathcal{D}(0,v_* +  \xi')$. This is done by estimating the change in $\mu$ across a rectangular region $\{(u,v): u \in [0,v_*], v\in [v_*,v_*+\xi]\}$ for some $\xi > 0$ and through a contradiction argument. Furthermore, to maintain the generality of the arguments and to draw parallel conclusions made in the previous subsections, we will keep $\delta, \chi, \iota$ as parameters to be chosen, and it is implicitly assumed that all constants depend on $\delta, \chi, \iota$ unless stated otherwise. One can observe that the argument below holds if we retain our choice of $\delta = \frac{1}{2}$ and $\chi = \iota = \frac{1}{4}$ as in the previous subsection.

\begin{figure}[htbp!]
\begin{minipage}[!t]{0.4\textwidth}
	\centering
\begin{tikzpicture}[scale=1]
	\begin{scope}[thick]
        \fill[gray!30] (0,1) to (1.5,2.5) to (1,3) to (0,2) to (0,1);
        \fill[gray!60] (1,3) to (1.5,3.5) to (2,3) to (1.5,2.5) to (1,3);
	\draw[->] (0,0) node[anchor=north]{$\mathcal{O}$} --  (0,5) node[anchor = east]{$\Gamma$};
	\draw[->] (0,0) -- (4,4) node[anchor = west]{$C_0^+$};
        \draw(0,4) -- (2,2);
        \node[] at (2.15,1.85) {\small $v_*$};
        \node[anchor = east] at (0,1) {$(u_0,u_0)$};
        \node at (0,1)[circle,fill,inner sep=1.5pt]{};
        \node[anchor = east] at (0,2) {$(u_1,u_1)$};
        \node at (0,2)[circle,fill,inner sep=1.5pt]{};
        \node[rotate=45] at (0.65,2.15) {\tiny $\mathcal{D}(u_0,u_1,v_*)$};
        \draw[dotted] (0,2) -- (1.5,3.5);
        \draw[] (0,1) -- (2,3);
        \draw[] (1.5,3.5) -- (2.5,2.5);
        \node[anchor = west] at (2.5,2.5) {\small $v_* + \xi$};
        \draw[->] (2,3.5) -- (1.5,3);
        \node[anchor = west] at (2,3.5){\small $U_{\xi}(u_1)$};
        \draw[pattern=north east lines, pattern color=black] (0,1) to (2,3) to (1.5,3.5) to (0,2) to (0,1);
        \node[anchor = west] at (6.5,2) {\ul{Legend:}};
        \node[anchor = west] at (6.9,1.5) {$\mathcal{D}(u_0,u_1,v_*)$}; 
        \fill[gray!30] (6.65,1.6) to (6.85,1.6) to (6.85,1.4) to (6.65,1.4) to (6.65,1.6);
        \node[anchor = west] at (6.9,1) {$U_\xi(u_1)$};
        \fill[gray!60] (6.65,1.1) to (6.85,1.1) to (6.85,0.9) to (6.65,0.9) to (6.65,1.1);
        \node[anchor = west] at (6.9,0.5) {$\mathcal{D}(u_0,u_1,v_*+\xi)$}; 
        \path[pattern=north east lines, pattern color=black] (6.65,0.6) to (6.85,0.6) to (6.85,0.4) to (6.65,0.4) to (6.65,0.6);
	\end{scope}
\end{tikzpicture} 
\end{minipage}
\caption{Illustration of $U_\xi(u_1)$ and $\mathcal{D}(u_0,v_1,v_* + \xi)$.}
\label{FETFig2}
\end{figure}
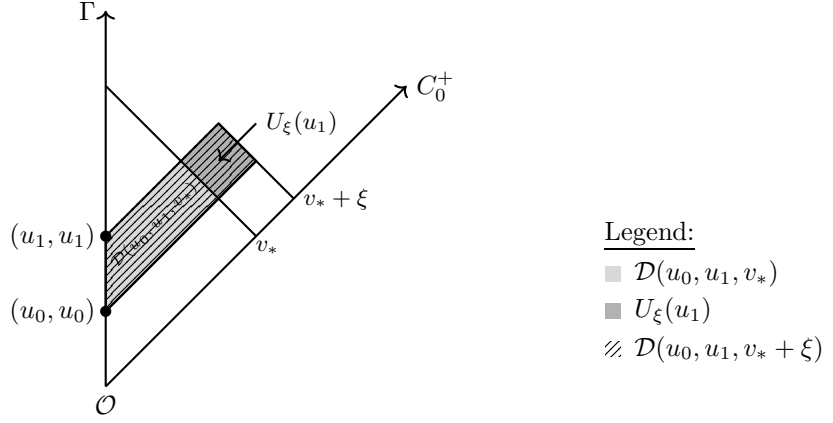

First, we consider a $u_0 \in [0,v_*)$ sufficiently close to $v_*$ so that \eqref{FET1} holds. Next, we use the continuity of $\dv r$ to deduce that $\dv r \geq 0$ on $\{u_0\} \times [v_*, v_* + \Delta v]$ for some fixed $\Delta v > 0$. Then, for any $0<\xi < \Delta v$, we consider the corresponding rectangular strip given by
\begin{equation}\label{FET75}
U_{\xi}(u_1) = [u_0,u_1) \times [v_*,v_*+\xi].
\end{equation}
In the spirit of \eqref{Au1}, we define
\begin{equation}\label{FET76}
\overline{A'_{\xi}}(u_1) := \sup_{U_{\xi}(u_1)} (r^{\delta}|\alpha'|).
\end{equation}
We also note that, under the convention in the previous subsections, we can write
\begin{equation}\label{FET77}
\mathcal{D}(u_0,u_1,v_* + \xi) = \mathcal{D}(u_0,u_1,v_*) \cup U_{\xi}(u_1).
\end{equation}

In below we will show that $\overline{A'_{\xi}}(u_1)$ remains bounded for some $u_1 \in (u_0,v_*)$ sufficiently close to $u_0$. This can be proved by utilizing the continuity of $\mu$ on $\mathcal{Q}$. With $\varepsilon$ and $\varepsilon'$ as in the previous subsections, if we pick $v_* - u_0$ and $\xi$ to be sufficiently small, then we have
\begin{equation}\label{FET78}
\varepsilon''(u_1) := \sup_{\mathcal{D}(u_0,u_1,v_*+\xi)} \mu \leq \frac{1}{2}(\varepsilon + \varepsilon') < \varepsilon'.
\end{equation}
Later, we will show that $\varepsilon''(u_1)$ defined above obeys the same upper bound as in \eqref{FET78} for all $u_1 \in (u_0,v_*)$. 

Before we begin detailing the arguments, we will first state and prove two lemmas that would be useful. We first define
\begin{equation}\label{FET83}
A'_* := \sup_{u_1\in(u_0,v_*)}A'(u_1) < +\infty.
\end{equation}
Here $A'_*$ is independent of $u_1$ as we have showed in \eqref{FET58} that $A'(u_1)$ is uniformly bounded in $u_1$ by $C_{13}$. 

\begin{lemma}\label{FETLemma3}
For any $u_1 \in (u_0,v_*)$, if $\mu$ satisfies
\begin{equation}\label{FET79}
\sup_{\mathcal{D}(u_0,u_1,v_* + \xi)} \mu \leq \frac{1}{2}(\varepsilon + \varepsilon'),
\end{equation}
then we have
\begin{equation}\label{FET80}
\overline{A'_{\xi}}(u_1) \leq \max\left\{A'_*, C_{17}(u_0,v_* + \Delta v) \right\}
\end{equation}
with $C_{17}$ defined in \eqref{FET124}. 
\end{lemma}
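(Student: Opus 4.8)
\textbf{Proof proposal for Lemma \ref{FETLemma3}.}

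The plan is to mimic closely the argument already carried out for $A'(u_1)$ on $\mathcal{D}(u_0,u_1,v_*)$, but now working over the enlarged region $\mathcal{D}(u_0,u_1,v_*+\xi)$, with the new ingredient that the ``initial data'' for the integration in the $u$-direction sits along the union of $\{u_0\}\times(u_0,v_*)$ and the horizontal segment along which we extend. Concretely, fix $(u,v)\in U_\xi(u_1)$. Since a $C^1$ solution exists on $\mathcal{D}(u_0,u_1,v_*)$ by the previous subsection, Lemma \ref{C1complemma} applies on $\mathcal{D}(u_0,u_1,v_*+\xi)$, so $\alpha'$ satisfies the integral identity \eqref{FET114}--\eqref{FET16} with $I_i(u,u_0,v)=\int^u_{u_0}e^{\int^{u'}_{u_0}\ii\e A_u\,\D u''}K_i(u',v)\,\D u'$. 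Using the hypothesis \eqref{FET79}, which gives $\varepsilon''(u_1):=\sup_{\mathcal{D}(u_0,u_1,v_*+\xi)}\mu\le\frac12(\varepsilon+\varepsilon')<\varepsilon'<\tfrac12$, I would rerun the chain of estimates \eqref{FET17}--\eqref{FET27} and the bounds for $|Q|,|Q\phi|,|\phi|^2$ in \eqref{C1Lemma1} verbatim, replacing $\varepsilon$ by $\varepsilon''(u_1)$ throughout; every constant produced is of the same shape as before (depending only on $v_*+\Delta v$, $\chi$, $\delta$, $\iota$, $E$ — note $v_*+\Delta v$ now plays the role of the right endpoint since $r\le r(0,v_*+\Delta v)$). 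The one genuine difference is the quantity playing the role of $D'(u_0)$: along the portion of the past boundary inside $U_\xi(u_1)$, namely $\{v_*\}\times[v_*,v_*+\xi]$, we use the $C^1$ initial data prescribed there, and $r^\delta|\alpha'|$ is bounded there by some finite constant $C_{17}(u_0,v_*+\Delta v)$ (this is where I expect \eqref{FET124} to be the definition). Also, along $\{u_0\}\times(u_0,v_*+\xi)$ we already control $r^\delta|\alpha'|$ by $D'(u_0)$, which is $\le A'_*$; more precisely the value of $r^\delta|\alpha'|$ at the relevant boundary point entering the $u$-integral for a given $(u,v)\in U_\xi(u_1)$ is bounded by $\max\{A'_*,C_{17}\}$.

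Thus, multiplying \eqref{FET116} by $r^\delta(u,v)$ and taking $\sup$ over $\mathcal{D}(u_0,u_1,v_*+\xi)$ — exactly as in passing from \eqref{FET15} to \eqref{FET45}–\eqref{FET53} — I would obtain an inequality of the form
\begin{equation*}
\overline{A'_\xi}(u_1)\ \le\ \max\{A'_*,C_{17}\}\ +\ \ml\frac{\sum_{i=1}^{10}C_{9,i}(\varepsilon'')^{\frac12}}{\delta(1-\delta)}+C_{11,9}(u_0;\iota)\mr\max\{\overline{A'_\xi}(u_1),A'_*\}\ +\ \frac{C_{10,11}}{\delta(1-\delta)}(\varepsilon'')^{\frac12}.
\end{equation*}
Because $\varepsilon''<\varepsilon'$ and, by \eqref{FET57}, $\varepsilon'$ was chosen so that $\frac{\sum C_{9,i}(\varepsilon')^{1/2}}{\delta(1-\delta)}<\frac14$, and because we have the freedom — by shrinking $v_*-u_0$ if necessary — to also keep $C_{11,9}(u_0;\iota)<\frac18$ as in \eqref{FET73} (recall $C_{11,9}\propto r^{1-\iota}(u_0,v_*)\to0$ as $u_0\to v_*$), the coefficient of $\max\{\overline{A'_\xi}(u_1),A'_*\}$ is $<\tfrac38$. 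Absorbing that term and using $(\varepsilon'')^{1/2}<1$, this closes to
\begin{equation*}
\overline{A'_\xi}(u_1)\ \le\ 2\ml\max\{A'_*,C_{17}(u_0,v_*+\Delta v)\}+\tfrac{C_{10,11}}{\delta(1-\delta)}\mr,
\end{equation*}
and after folding the additive constant into $C_{17}$ (or bounding it crudely by $\max\{A'_*,C_{17}\}$ after enlarging $C_{17}$) one gets \eqref{FET80}. The argument is uniform in $u_1\in(u_0,v_*)$ because none of the constants depends on $u_1$, only on $u_0$ and $v_*+\Delta v$.

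The main obstacle — more a bookkeeping hazard than a conceptual one — is ensuring that \eqref{FET79} really does let us treat the whole enlarged region with the \emph{same} smallness constant $\varepsilon'$ and hence the \emph{same} absorbed coefficients: one must check that the constants $C_{9,i}, C_{10,11}, C_{11,9}$ built in the first subsection, which were defined using the right endpoint $v_*$, survive the replacement $v_*\leadsto v_*+\Delta v$ without spoiling \eqref{FET54} and \eqref{FET73}. This is fine since $\Delta v$ is fixed once and for all \emph{before} $u_0$ is pushed close to $v_*$, so $r(u_0,v_*+\Delta v)$ can still be made as small as needed; but the order of quantifiers ($\Delta v$ first, then $u_0$, then $\xi<\Delta v$) has to be respected throughout. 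A secondary point is to verify that $A'_*$ in \eqref{FET83} is genuinely finite and $u_1$-independent — but that is immediate from \eqref{FET58}, since $A'(u_1)\le C_{13}$ with $C_{13}$ independent of $u_1$. Once \eqref{FET80} is in hand, continuity of $\mu$ together with the bootstrap value $\frac12(\varepsilon+\varepsilon')$ will (in the next lemma) force $\varepsilon''(u_1)$ to retain its bound for all $u_1\in(u_0,v_*)$, completing the extension.
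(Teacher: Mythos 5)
Your proposal takes essentially the same route as the paper's proof: rerun the $A'(u_1)$ estimate on the enlarged domain with $\varepsilon$ replaced by $\tilde\varepsilon=\tfrac12(\varepsilon+\varepsilon')$, observe that $\sup_{\mathcal{D}(u_0,u_1,v_*+\xi)}(r^\delta|\alpha'|)\le\max\{A'_*,\overline{A'_\xi}(u_1)\}$ feeds into the $|I_i|$ bounds, close by absorbing with the same smallness coefficient, and respect the quantifier order $\Delta v\to u_0\to\xi$. That is the paper's argument.

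Two bookkeeping points are off, though neither is fatal. First, the boundary data for the $u$-integration over $U_\xi(u_1)$ sits along $\{u_0\}\times[v_*,v_*+\xi]$, not $\{v_*\}\times[v_*,v_*+\xi]$: for $(u,v)\in U_\xi(u_1)$ the identity \eqref{FET114} integrates from $u'=u_0$ to $u$ at fixed $v\in[v_*,v_*+\xi]$, so the inhomogeneous term is $r^\delta|\alpha'|(u_0,v)$ for $v$ in the strip, i.e.\ the quantity $\overline{D'_\xi}(u_0)=\sup_{\{u_0\}\times[v_*,v_*+\xi]}(r^\delta|\alpha'|)$ introduced in \eqref{FET81}. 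In particular $D'(u_0)$, which only controls $\{u_0\}\times(u_0,v_*)$, is not the right object for the extension strip, so the remark ``$D'(u_0)\le A'_*$'' is true but irrelevant here. Second, in the displayed recursive inequality the additive constant should be $\overline{D'_\xi}(u_0)$, not $\max\{A'_*,C_{17}\}$; the paper's \eqref{FET85} reads $\overline{A'_\xi}(u_1)\le\overline{D'_\xi}(u_0)+(\cdots)\max\{A'_*,\overline{A'_\xi}(u_1)\}+\tfrac{C_{10,11}}{\delta(1-\delta)}\tilde\varepsilon^{1/2}$, and the constant $C_{17}:=2(\overline{D'_\xi}(u_0)+C_{12}(v_*+\Delta v))$ in \eqref{FET124} is precisely what comes out after the two-case absorption ($\overline{A'_\xi}\ge A'_*$ vs.\ $\overline{A'_\xi}<A'_*$), not something fed in at the start. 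A minor third point: the paper relaxes $C_{11,9}(u_0;v_*+\Delta v)<\tfrac14$ (equation \eqref{FET89}) for the enlarged endpoint, rather than carrying over the $\tfrac18$ of \eqref{FET73}; either choice closes, but the quantitative bookkeeping you wrote follows the $\tfrac18$ version. With the first two corrections your argument matches the paper's.
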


\proof $ $  Similar to \eqref{D'}, we start by defining
\begin{equation}\label{FET81}
\overline{D'_\xi}(u_0) := \sup_{\{u_0\}\times[v_*,v_*+\xi]}(r^{\delta}|\alpha'|).
\end{equation}
Then, repeating the same derivations for \eqref{FET14} and  \eqref{FET15}, but instead with $u_1 \in (u_0,v_*)$ and $v \in [v_*,v_*+\xi]$, we get 
\begin{equation}\label{FET82}
|\alpha'|(u,v) \leq \overline{D'_\xi}(u_0)r(u_0,v)^{-\delta} + \sum_{i=1}^{11}|I_i|(u,u_0,v).
\end{equation} 
In \eqref{FET82}, the expressions for $I_i$ remains unchanged as in \eqref{I_1} to \eqref{I_11}. Now, we will obtain an upper bound for each of the terms $|I_1|$ to $|I_{11}|$ as in the previous subsections. A few key differences are listed below.

For $\mu$, we apply \eqref{FET78} and replace each $\varepsilon$ in the estimates obtained above by 
\begin{equation}\label{tildeepsilon}
\tilde{\varepsilon} := \frac{1}{2}(\varepsilon + \varepsilon').
\end{equation}
This is justified since instead of $\mu \leq \varepsilon$, we now have $\mu \leq \frac{1}{2}(\varepsilon + \varepsilon')$. Hence, in the region $\mathcal{D}(u_0,u_1,v_*+\xi)$ as the supremum of the quantity $r^{\delta}|\alpha'|$ is bounded above by the maximum of its supremum over $\mathcal{D}(u_0,u_1,v_*)$ and its supremum over $U_\xi(u_1)$ (see Figure \ref{FETFig2}), we obtain
\begin{equation}\label{FET84}
\sup_{\mathcal{D}(u_0,u_1,v_*+\xi)}(r^{\delta}|\alpha'|) \leq \max\{ A'(u_1), \overline{A'_\xi}(u_1)\} \leq \max\{A'_*, \overline{A'_\xi}(u_1)\}.
\end{equation}
For the remaining terms, an analogous logic to the previous sections applies with a slight modification demonstrated as follows. For $(u,v) \in U_\xi(u_1)$, we estimate $\alpha'$ over the entire region $\mathcal{D}(u_0,u_1,v_*+\xi)$ to derive relevant estimates for $\ml \frac{r|\dv \phi|}{\dv r}\mr$, $|Q|$, $|Q\phi|$, and consequently $\ml \frac{r|\dv \phi|}{\dv r}\mr^2$, $\ml \frac{r|\dv \phi|}{\dv r}\mr^3$, $|\phi|^2$. Thus, by \eqref{FET84}, we can replace $A'(u_1)$ from \eqref{FET19} to \eqref{FET71} by $\max\{A'_*, \overline{A'_\xi}(u_1)\}$.

With above pointers, we then multiply $r^\delta$ on both sides of \eqref{FET82} and take supremum over $U_{\xi}(u_1)$ to obtain
\begin{equation}\label{FET85}
\begin{aligned}
\overline{A'_\xi}(u_1)  &\leq \overline{D'_\xi}(u_0) + \ml \frac{ \sum_{i=1}^{10}C_{9,i}(v_*+\xi)}{\delta(1-\delta)}\tilde{\varepsilon}^{\frac{1}{2}} + C_{11,9}(u_0;v_*+\xi)\mr \max\{A'_*, \overline{A'_\xi}(u_1)\} + \frac{C_{10,11}(v_*+\xi)}{\delta(1-\delta)} \tilde{\varepsilon}^{\frac{1}{2}} \\
&\leq \overline{D'_\xi}(u_0) + \ml \frac{ \sum_{i=1}^{10}C_{9,i}(v_* + \Delta v)}{\delta(1-\delta)}\tilde{\varepsilon}^{\frac{1}{2}} + C_{11,9}(u_0;v_* + \Delta v)\mr \max\{A'_*, \overline{A'_\xi}(u_1)\} + \frac{C_{10,11}(v_* + \Delta v)}{\delta(1-\delta)} \tilde{\varepsilon}^{\frac{1}{2}},
\end{aligned}
\end{equation}
where we use relevant estimates in the previous subsections, subject to the remarks above. Moreover, as explained in the previous subsections, the constants derived are non-decreasing with increasing $v_*$, which thus allows us to obtain the $\xi-$independent estimates for the constants $C_{9,i}, C_{10,11}, C_{11,9}$. By further refining $\Delta v > 0$ from the start, we could also assume that
\begin{equation}\label{FET89}
C_{11,9}(u_0, v_* + \Delta v) < \frac{1}{4},
\end{equation}
which is analogous to \eqref{FET73}. Note that inequality \eqref{FET85} bears resemblance to \eqref{FET53}. 

With a similar choice of $\varepsilon'$ as in \eqref{FET57} but with constants evaluated at $(u_0,v_*+\Delta v)$, we then deduce that
\begin{equation}\label{FET86}
\varepsilon' \leq \ml \frac{\delta(1-\delta)}{ \sum_{i=1}^{10}C_{9,i}(u_0,v_*+\Delta v)} \mr^2.
\end{equation}
This implies that the counterpart of \eqref{FET54} is still true, if we replace $\varepsilon$ by $\tilde{\varepsilon}$ and it satisfies
\begin{equation}\label{FET87}
\tilde{\varepsilon} = \frac{1}{2}(\varepsilon + \varepsilon') < \varepsilon' \leq \ml \frac{\delta(1-\delta)}{ \sum_{i=1}^{10}C_{9,i}(u_0,v_*+\xi)} \mr^2.
\end{equation}
This implies
\begin{equation}\label{FET88}
\frac{\ml \sum_{i=1}^{10}C_{9,i}(u_0,v_*+\xi)\mr^2}{\delta^2(1-\delta)^2} < \frac{1}{4}.
\end{equation}
Parallel to \eqref{FET58}, we also define
\begin{equation}\label{FET124}
C_{17}(u_0,v_*+\Delta v) := 2 \ml \overline{D'_\xi}(u_0) + C_{12}(v_* + \Delta v) \mr.
\end{equation}
By considering two separate cases $A_*' > \overline{A'_{\xi}(u_1)}$ and $A_*' \leq \overline{A'_{\xi}(u_1)}$, following a similar argument as in the previous subsections, analogous to \eqref{FET58}, we then deduce that \eqref{FET80} holds. This concludes the proof of this lemma. \qed \\

We will also equip ourselves with an elementary lemma as follows:
\begin{lemma}\label{FETLemma4}
For any $0 < y \leq x$ and $0 \leq p \leq 1$, we have
\begin{equation}\label{FET125}
x^p - y^p \leq (x-y)^p.
\end{equation}
\end{lemma}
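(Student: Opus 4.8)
\textbf{Proof proposal for Lemma \ref{FETLemma4}.}

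The statement to prove is the elementary inequality $x^p - y^p \leq (x-y)^p$ for all $0 < y \leq x$ and $0 \leq p \leq 1$. The plan is to reduce the two-variable statement to a one-variable statement by exploiting the homogeneity of both sides, then verify the one-variable statement by a single convexity/monotonicity argument. First I would dispose of the boundary cases: when $p = 0$ both sides equal $1$ (using $x - y \geq 0$, with the convention that the right side is $1$ when $x = y$ is not an issue since we only claim $\leq$), and when $p = 1$ both sides equal $x - y$, so the inequality is an equality. When $y = x$ the left side is $0$ and the right side is $0$, so equality holds there too. Hence it suffices to treat $0 < p < 1$ and $0 < y < x$.

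For the main case, I would divide through by $x^p > 0$ and set $t := y/x \in (0,1)$. The inequality $x^p - y^p \leq (x-y)^p$ becomes $1 - t^p \leq (1-t)^p$, so everything reduces to showing
\begin{equation}\label{FETLemma4proof1}
1 - t^p \leq (1-t)^p \quad \text{for all } t \in (0,1) \text{ and } p \in (0,1).
\end{equation}
To establish \eqref{FETLemma4proof1}, define $h(t) := (1-t)^p + t^p - 1$ on $[0,1]$ and show $h(t) \geq 0$. We have $h(0) = h(1) = 0$. Differentiating, $h'(t) = p t^{p-1} - p(1-t)^{p-1} = p\bigl( t^{p-1} - (1-t)^{p-1} \bigr)$. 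Since $p - 1 < 0$, the function $s \mapsto s^{p-1}$ is strictly decreasing on $(0,1)$, so $h'(t) > 0$ exactly when $t < 1-t$, i.e. $t < \tfrac{1}{2}$, and $h'(t) < 0$ when $t > \tfrac{1}{2}$. Therefore $h$ is increasing on $[0,\tfrac12]$ and decreasing on $[\tfrac12,1]$; combined with $h(0) = h(1) = 0$, this forces $h(t) \geq 0$ throughout $[0,1]$ (the minimum over the interval is attained at one of the endpoints and equals $0$). This proves \eqref{FETLemma4proof1}, and unwinding the substitution $t = y/x$ and multiplying back by $x^p$ gives $x^p - y^p \leq (x - y)^p$, as desired.

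There is essentially no serious obstacle here; this is a standard subadditivity estimate for the concave power function. The only mild care needed is to handle the boundary values of $p$ and the degenerate case $y = x$ separately so that no division by zero or by $x^p$ with $x = 0$ occurs, and to state the monotonicity of $s \mapsto s^{p-1}$ correctly (it is decreasing because $p - 1 \leq 0$, with strict monotonicity when $0 < p < 1$). An alternative one-line argument, which I might include as a remark instead of the convexity computation, is to use concavity of $u \mapsto u^p$ directly: writing $x = (x-y) + y$, concavity and $u^p|_{u=0} = 0$ give the superadditivity-dual inequality $x^p = \bigl((x-y) + y\bigr)^p \leq (x-y)^p + y^p$, which rearranges immediately to the claim. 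I would likely present this concavity argument as the main proof since it is shorter, and it avoids any case analysis in $p$ beyond noting $u^p$ is concave for $p \in [0,1]$.
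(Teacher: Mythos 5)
Your proposal is correct, and both your main argument and the paper's proof are essentially the same in spirit: reduce the two-variable inequality to a one-variable statement by homogeneity, then settle the one-variable statement by examining a derivative. The only difference in the main argument is the choice of normalization. The paper divides through by $y^p$ and sets $r = x/y \geq 1$, which produces $f(r) = r^p - 1 - (r-1)^p$ with $f(1) = 0$ and $f'(r) \leq 0$ for all $r \geq 1$ — a single monotonicity check. You divide through by $x^p$ and set $t = y/x \in (0,1)$, which produces $h(t) = (1-t)^p + t^p - 1$; this $h$ is not monotone but unimodal (increasing then decreasing, symmetric about $t = \tfrac12$), so you need the slightly longer argument that $h$ vanishes at both endpoints and has its minimum there. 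Both are fine; the paper's normalization happens to yield the slightly cleaner one-sided monotonicity.

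Your alternative concavity argument is a genuinely different and shorter route. Writing $x = (x-y) + y$ and invoking subadditivity of the concave function $u \mapsto u^p$ (which satisfies $\phi(0) = 0$, hence $\phi(a+b) \leq \phi(a) + \phi(b)$ for $a, b \geq 0$) gives $x^p \leq (x-y)^p + y^p$ immediately, with no case analysis and no differentiation. This buys brevity at the cost of invoking the subadditivity fact as known; the paper's derivative computation is more self-contained but longer. Either is an acceptable proof of this elementary lemma.
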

\begin{proof}
Dividing $y^p$ throughout \eqref{FET125}, it is equivalent to show that, for any $r \geq 1$, we have
\begin{equation}\label{FET126}
r^p - 1 \leq (r - 1)^p.
\end{equation}
Let $f(r) = r^p - 1 - (r-1)^p$. For $r \geq 1$, we see that $f(1) = 0$ and $f'(r) = pr^{p-1}\ml 1-\ml 1-\frac{1}{r}\mr^{p-1} \mr \leq 0$. This implies that $f$ is a non-increasing function of $r$ since $1 \leq \ml 1 - \frac{1}{r}\mr^{p-1}$ for $p \in [0,1]$. Thus, we prove $f(r) \leq f(1) = 0$ for all $r \geq 1$ and conclude.
\end{proof}
With the above two lemmas, we proceed back on track to prove that the $C^1$ solution extends beyond the domain of dependence $\mathcal{D}(u_0,u_1,v_*)$. Similar to \cite{christ2}, we first define
\begin{equation}\label{FET90}
u_2 = \sup\left\{ u_1 \in (u_0,v_*): \sup_{\mathcal{D}(u_0,u_1,v_*+\xi)} \mu \leq \frac{1}{2}(\varepsilon + \varepsilon')\right\}.
\end{equation}
By the definition of $u_2$, we then have either $u_2 = v_*$ or
\begin{equation}\label{FET91}
\sup_{U_\xi (u_2)} \mu = \frac{1}{2}(\varepsilon + \varepsilon')
\end{equation}
is attained at some $u_2 \in [u_0,v_*)$. We then show that for $\xi$ sufficiently small, the latter will not happen. 

Suppose for a contradiction that the latter happens. Then, we set $u_1 = u_2$ as in the hypothesis of Lemma \ref{FETLemma3}. Consequently,  by \eqref{FET91}, $\sup_{\mathcal{D}(u_0,u_2,v_*+\xi)} \mu$ satisfies \eqref{FET79} and hence by Lemma \ref{FETLemma3}, we have that \eqref{FET80} holds. We then estimate $\mu$ in the region $U_\xi(u_2)$ to obtain a contradiction with sufficiently small $\xi$. As described in the proof of Lemma \ref{FETLemma3}, in the region $U_\xi(u_2)$, we have
\begin{equation}\label{FET92}
\frac{r|\dv \phi|}{\dv r}(u,v) \leq \frac{r^{1-\delta}(u,v)}{2-\delta} \max\{A'_*, C_{12}\}
\end{equation}
with the argument $(u_0,v_*+\Delta v)$ suppressed when the context is clear. We can then estimate $\mu$ in $U_\xi(u_2)$ as follows. From \eqref{FET28} we first observe that
\begin{equation}\label{FET127}
\frac{Q^2}{r^2}(u,v) \leq \mu(u,v) C_6^2(v_*+\Delta v)r^{1-\chi}(u,v).
\end{equation}
By further requiring $u_0$ to be sufficiently close to $v_*$, we can then demand that
\begin{equation}\label{FET128}
C_6^2(v_*)r^{\frac{1}{2}-\frac{\chi}{2}}(u_0,v_*) \leq \frac{1}{2}.
\end{equation}
Hence, from the expression of $C_6$ in \eqref{FET130}, we see that the mapping $v \mapsto C_6^2(v)r^{1-\chi}(u_0,v)$ is continuous. This allows us to refine $\Delta v$ to be sufficiently small (depending on the given $u_0$) such that
\begin{equation}\label{FET132}
C_6(v_* + \Delta)r^{1-\chi}(u_0,v_* + \Delta) \leq \frac{7}{8},
\end{equation}
and consequently,
\begin{equation}\label{FET129}
\frac{Q^2}{r^2}(u,v) \leq \frac{7}{8}\mu(u,v).
\end{equation}
This in turn implies that
\begin{equation}\label{FET131}
\ml \mu - \frac{Q^2}{r^2}\mr(u,v) \geq \frac{\mu}{8}(u,v) \geq 0
\end{equation}
for all $(u,v) \in U_\xi(u_2)$.
With this and the help of \eqref{Setup27}, we move on to estimate $\mu$ directly via calculating
\begin{equation}\label{FET93}
\begin{aligned}
\dv \mu &= \frac{4\pi r (1-\mu)|\dv \phi|^2}{\dv r} - \frac{\dv r}{r}\ml \mu - \frac{Q^2}{r^2}\mr \leq \frac{4\pi r (1-\mu)|\dv \phi|^2}{\dv r}.
\end{aligned}
\end{equation}
In \eqref{FET93}, the inequality follows from the fact that $\dv r > 0$ while $\mu - Q^2/r^2 \geq 0$. 
For $(u,v) \in U_\xi(u_2)$, we further perform integration and apply \eqref{FET92} and \eqref{FET93} to obtain
\begin{equation}\label{FET94}
\begin{aligned}
\mu(u,v) &\leq \mu(u,v_*) + 4 \pi \int^v_{v_*} \frac{(1-\mu)}{r} \ml\frac{r|\dv \phi|}{\dv r}\mr^2 (\dv r) (u,v') \; \D v' \\
&\leq \mu(u,v_*) + \frac{4 \pi (\max\{A'_*,C_{17}  \})^2}{(2-\delta)^2} \int^v_{v_*} r^{1 - 2 \delta} (\dv r) (u,v') \; \D v' \\
&= \mu(u,v_*) + \frac{4 \pi (\max\{A'_*,C_{17}\})^2}{2(2-\delta)^2(1-\delta)} \ml r^{2-2\delta}(u,v) - r^{2-2\delta}(u,v_*) \mr. \\
\end{aligned}
\end{equation}
On the other hand, since $\dv r$ is $C^1$ for $(u,v) \in U_\xi(u_2)$, we have $\dv r(u_0,v)$ is $C^1$ on $[v_*,v_*+ \Delta v]$. Thus, there exists $C_{18}(u_0) > 0$, such that $\dv r(u_0,v) \leq C_{18}(u_0)$ for all $v \in [v_*,v_*+\Delta v]$. Hence, for $(u,v) \in U_\xi(u_2)$, using the fact that $\dv \du r \leq 0$ as described above, we obtain
\begin{equation}\label{FET96}
\begin{aligned}
r(u,v) - r(u,v_*) &= r(u_0,v) - r(u_0,v_*) + \int^u_{u_0} \int^v_{v_*} \dv \du r(u',v') \; \D u' \D v' \\
&\leq r(u_0,v) - r(u_0,v_*) = \int_{v_*}^v \dv r (u_0,v') \; \D v' \\
&\leq C_{18}(u_0)(v - v_*) \leq C_{18}\xi.
\end{aligned}
\end{equation}
Now, for $0 < \delta < \frac{1}{2}$ and hence $2 - 2\delta > 1$, we employ \eqref{FET94} and \eqref{FET96} to obtain
\begin{equation}\label{FET95}
\begin{aligned}
\mu(u,v) &\leq \mu(u,v_*) + \frac{4 \pi (\max\{A'_*, C_{17}  \})^2}{2(2-\delta)^2(1-\delta)} (r^{2-2\delta}(u,v) - r^{2-2\delta}(u,v_*)) \\
&\leq \mu(u,v_*) + \frac{4 \pi (\max\{A'_*,C_{17}\})^2}{2(2-\delta)^2(1-\delta)} r^{2-2\delta-1}(u,v_* + \Delta v)(r(u,v) - r(u,v_*)) \\
&\leq \mu(u,v_*) + \frac{4 \pi (\max\{A'_*,C_{17}\})^2 C_{18}}{2(2-\delta)^2(1-\delta)} r^{2-2\delta-1}(u_0,v_*+ \Delta v)\xi.\\
\end{aligned}
\end{equation}
Since $\mu(u,v_*) \leq \varepsilon$, to obtain $\mu(u,v) < \frac{1}{2}(\varepsilon + \varepsilon')$, it suffices to pick $\xi > 0$ sufficiently small, such that 
\begin{equation}\label{FET97}
\frac{4 \pi (\max\{A'_*,C_{17}  \})^2 C_{18}}{2(2-\delta)^2(1-\delta)} r^{2-2\delta-1}(u_0,v_*+\Delta v)\xi < \frac{1}{2}(\varepsilon' - \varepsilon).
\end{equation}
In contrast, for the case $\delta \geq \frac{1}{2}$ and hence $2 - 2 \delta \leq 1$, we can apply Lemma \ref{FETLemma4} and \eqref{FET94} to obtain
\begin{equation}\label{FET100}
\begin{aligned}
\mu(u,v) &\leq \mu(u,v_*) + \frac{4 \pi (\max\{A'_*,C_{17}  \})^2}{2(2-\delta)^2(1-\delta)} \ml r^{2-2\delta}(u,v) - r^{2-2\delta}(u,v_*) \mr\\
&\leq  \mu(u,v_*) + \frac{4 \pi (\max\{A'_*, C_{17} \})^2}{2(2-\delta)^2(1-\delta)} \ml r(u,v) - r(u,v_*) \mr^{2-2\delta} \\
&\leq  \mu(u,v_*) + \frac{4 \pi (\max\{A'_*, C_{17} \})^2 C_{18}^{2-2\delta}}{2(2-\delta)^2(1-\delta)} \xi^{2-2\delta}.
\end{aligned}
\end{equation}
Similarly to the former case, it suffices to pick $\xi > 0$ sufficiently small such that 
\begin{equation}\label{FET101}
\frac{4 \pi (\max\{A'_*,C_{17}  \})^2 C_{18}^{2-2\delta}}{2(2-\delta)^2(1-\delta)} \xi^{2-2\delta} < \frac{1}{2}(\varepsilon' - \varepsilon).
\end{equation}
Combining \eqref{FET97} and \eqref{FET101}, while requiring $\xi$  to be less than or equals to $\Delta v$, a valid choice of $\xi$ would be 
\begin{equation}\label{FET102}
\xi = \min\left\{ \Delta v, \frac{(2-\delta)^2(1-\delta)(\varepsilon'-\varepsilon)}{8\pi(\max\{A'_*,C_{17} \})^2 C_{18}r^{2-2\delta-1}(u_0,v_*+\xi_1)}, \ml \frac{(2-\delta)^2(1-\delta)(\varepsilon'-\varepsilon)}{8\pi(\max\{A'_*,C_{17} 
 \})^2 C_{18}^{2-2\delta}}\mr^{\frac{1}{2-2\delta}} \right\}.
\end{equation}
Therefore, for any $\delta \in (0,1)$, such a choice $\xi$ would imply that
\begin{equation}\label{FET103}
\sup_{U_\xi (u_2)} \mu < \frac{1}{2}(\varepsilon + \varepsilon'),
\end{equation}
contradicting \eqref{FET91}. \\

In summary, the above arguments show that $u_2$ in \eqref{FET90} is achieved at $u_2 = v_*$, and hence the bound in \eqref{FET90} continues to hold. This implies that
\begin{equation}\label{FET104}
\sup_{\mathcal{D}(u_0,v_*,v_*+\xi)} \mu \leq \frac{1}{2}(\varepsilon
+ \varepsilon').
\end{equation}
Thus, for each $u_1 \in (u_0,v_*)$, the hypothesis of Lemma \ref{FETLemma3} is verified. This then implies that \eqref{FET80} holds. Observe that \eqref{FET80} further implies that  $\overline{A'}(u_1) := \sup_{\mathcal{D}(u_0,u_1,v_*+\Delta v)}(r^{\delta}|\alpha'|)$ remains bounded as $u_1 \rightarrow v_*$. By conducting a similar argument as in previous subsections, we can also obtain improved estimates for $|I_1|$ to $|I_{11}|$. Henceforth, we can deduce that the solution extends as a $C^1$ solution to the closure of $\mathcal{D}(u_0,v_*,v_*+\xi)$. In particular, with $C^1$ initial data along $\{v_*\} \times [v_*,v_*+\xi]$, we show that the solution further extends as a $C^1$ solution to $\mathcal{D}(v_*,v_*+\xi')$ for some $\xi' \in (0,\xi]$. This concludes the proof of Theorem \ref{FET}. \qed

\section{First Instability Theorem}\label{First Instability Theorem}

As previously described, we consider the following picture. We mark $\mathcal{O}' = (v_0,v_0)$ along $\Gamma$ as the start of the singular boundary $\mathcal{B}$ and $\mathcal{J}^-(\mathcal{O}') \subset \mathcal{R}$. Denote $C_0^-$ to be the past-directed incoming null curve emitting from $\mathcal{O}'$. For the remaining parts of this section, we shall focus on $\mathcal{Q}\setminus \mathcal{J}^-(\mathcal{O}')$ lying in the future of both $C_0^+$ and $C_0^-$. Figure \ref{FITFig3} illustrates the above set-up.

\begin{figure}[htbp]
\begin{tikzpicture}[
    dot/.style = {draw, fill = white, circle, inner sep = 0pt, minimum size = 4pt}, scale=0.8]
\begin{scope}[thick]
\fill[gray!50]  (2.5,2.5) to (0,5) to (1.7,6.7) node[dot]{} to [out=45,in=175] (4.7,7.0) to (5,5) to (2.5,2.5);
\draw[thick] (0,0) node[anchor=north]{$\mathcal{O}$} --  (0,5);
\node[] at (-0.3,2.5) {$\Gamma$};
\node[] at (-0.3,5.0) {$\mathcal{O}'$};
\node[] at (1,2.5) {$\mathcal{J}^-(\mathcal{O}')$};
\draw[->] (0,0) -- (5,5) node[anchor = west]{$C_0^+$};
\draw[thick, densely dotted] (0,5) to (1.7,6.7);
\draw[thick, loosely dotted](1.7,6.7) to [out=45,in=175] (4.7,7.0);
\node[] at (0.5,6) {$\mathcal{B}_0$};
\node[rotate=-10] at (4.3,7.3) {$\mathcal{B}\setminus\mathcal{B}_0$};
\draw[thick,dashed] (0,5) --  (2.5,2.5);
\node[] at (2.7,2.3){$v_0$};
\node[] at (2.7,4.8){$\mathcal{Q} \setminus \mathcal{J}^{-}(\mathcal{O}')$};
\draw[fill=white] (1.7,6.7) circle (2pt);
\draw[fill=white] (0,5) circle (2pt);
\end{scope}
\end{tikzpicture}
\caption{Diagram for this section. The shaded region corresponds to the region $\mathcal{Q} \setminus \mathcal{J}^-(\mathcal{O}')$, while the unshaded region corresponds to the region $\mathcal{J}^-(\mathcal{O}')$.}
\label{FITFig3}
\end{figure}
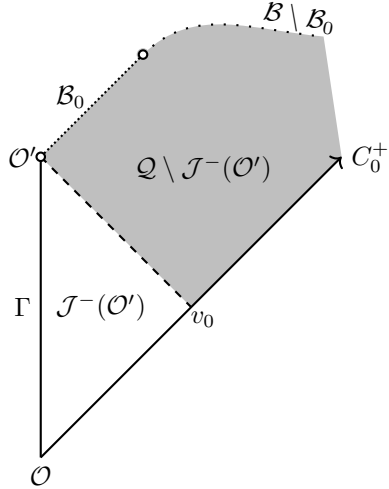

Recall the quantity $\gamma$ from \eqref{gamma} below:
\begin{equation}\label{FIT3}
\gamma(u,v) := \int^u_{0} \frac{(- \du r)}{r} \frac{\mu - Q^2/r^2}{1 - \mu} (u',v) \mathrm{d}u'.
\end{equation}
For notational simplicity, we denote
\begin{equation}\label{FIT4}
    \gamma_0(u) := \gamma(u,v_0).
\end{equation} 
In lieu of estimates that we will establish, we also define the truncated gamma $\tgam$, with $\overline{u} \in [0,v_0)$ as a parameter, as below
\begin{equation}\label{FIT4a}
\tgam(u,v) := \gamma(u,v;\overline{u}) = \int^u_{\overline{u}} \frac{(- \du r)}{r} \frac{\mu - Q^2/r^2}{1 - \mu} (u',v)\mathrm{d}u'.
\end{equation}
For brevity, the dependence on $\overline{u}$ will be dropped if the given context is clear. In addition, similar to Section \ref{TSF}, we use the subscript $()_1$ to indicate that a given quantity is evaluated along $C_0^-$.

We also use $g(u,v)$ and $f(u,v)$ to denote the following functions
\begin{equation}\label{FIT10}
g(u,v) := D_u \phi(u,v) +  \ii \e \frac{Q \phi (\du r)}{r(1-\mu)}(u,v),
\end{equation}
\begin{equation}\label{FIT11}
f(u,v) := \int^u_{0} -\ii \e A_u(u',v) + \frac{(-\du r)}{r} \frac{\mu - Q^2/r^2}{1 - \mu}(u',v) \mathrm{d}u'.
\end{equation}
With them, we define  
\begin{equation}\label{FIT14}
I(u) := \int^u_{0} g_0(u')e^{-f_0(u')}\mathrm{d}u'.
\end{equation}

We are now ready to state the main theorem as follows. 

\begin{theorem}\label{FITTheorem} (First Instability Theorem.) 
Let $\gamma_0(u)$ defined in \eqref{FIT4} be unbounded as $u \rightarrow v_0^-$. Suppose that $I$ defined in \eqref{FIT14} does not converge to a finite limit as $u \rightarrow v_0^-$ or otherwise
\begin{equation}\label{FITTheorem1}
\ml \frac{r \dv \phi}{\dv r} \mr_0 (0) \neq \lim_{u \rightarrow v_0^-} I(u).
\end{equation}
Then, we can conclude that
\begin{enumerate}[leftmargin=*]
\item $\mathcal{A}$ is non-empty, \\
and
\item Either one of the following two scenarios holds:
\begin{enumerate}[(i)]
\item$\mathcal{O}' = \mathcal{B}_0$ (and thus both $\mathcal{B}\setminus \mathcal{B}_0$ and $\mathcal{A}$ issue from $\mathcal{O}'$), or
\item $\mathcal{O}' \subsetneq \mathcal{B}_0$ and for any sufficiently small open neighborhood $U$ of $\mathcal{O}'$ with $U \subset \mathcal{Q} \setminus \mathcal{J}^-(\mathcal{O}')$, we have $U \cap \mathcal{A} \neq  \emptyset$.
\end{enumerate}\end{enumerate}
\end{theorem}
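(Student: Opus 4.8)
\textbf{Proof proposal for Theorem \ref{FITTheorem}.}

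The plan is to run the blueshift-driven instability argument of Christodoulou \cite{christ4}, adapted to the charged system, using the trapped-surface-formation criterion of Theorem \ref{Trapped} (in its detailed form \eqref{FIT34}, to be stated) as the engine that produces a marginally outer trapped surface or a trapped surface near $\mathcal{O}'$. The first step is to convert the hypothesis \eqref{FITTheorem1} into a lower bound on $\left(\frac{r|\dv \phi|}{\dv r}\right)_0(u)$ as $u \to v_0^-$. Integrating
$$\du \ml \frac{r \dv \phi}{\dv r} \mr = \ml -D_u \phi - \ii \e \frac{Q \phi (\du r)}{r(1-\mu)}\mr + \ml \frac{r \dv \phi}{\dv r}\mr \ml -\ii \e A_u + \frac{(-\du r)}{r} \frac{\mu - Q^2/r^2}{1 - \mu} \mr$$
with the purely imaginary integrating factor yields $\left(\frac{r|\dv \phi|}{|\dv r|}\right)_0(u) = \left| \left(\frac{r \dv \phi}{\dv r}\right)_0(0) - I(u)\right| e^{\gamma_0(u)}$. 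I would split into the case where $I$ converges to a limit $l \neq \left(\frac{r \dv \phi}{\dv r}\right)_0(0)$, where for $u$ near $v_0$ one gets $\left(\frac{r|\dv \phi|}{\dv r}\right)_0(u) \geq \frac{h}{3} e^{\gamma_0(u)}$ with $h := \left|\left(\frac{r \dv \phi}{\dv r}\right)_0(0) - l\right| > 0$; and the case where $\limsup_{u \to v_0^-}|I(u)| = +\infty$, where one introduces $b(u) := \sup_{u' \in [0,u]}\left(\frac{r|\dv \phi|}{\dv r}\right)_0(u')e^{-\gamma_0(u')}$, picks an increasing sequence $u_n \to v_0^-$ realizing the sup with $b(u_n) \to \infty$, and again obtains a comparable lower bound with $h$ replaced by a growing quantity.

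Second, I would fix a sequence $u_n \to v_0^-$ and choose $v(u_n) \in [v_0, v_0 + \te]$, with $\delta(u_n, v(u_n)) = e^{-c\, e^{\gamma_0(u_n)}}$ for a suitable constant $c$, so that the hypotheses \eqref{FIT35c'}, \eqref{FIT35c''} and the auxiliary requirements listed in the introduction (\eqref{FIT139}, \eqref{FIT140}, \eqref{FIT257}, \eqref{FIT240}, \eqref{FIT256}) all hold; the key geometric input is the lemma giving \eqref{FIT83'}, controlling the outer value of $\delta$ by its inner value along a fixed $v$. Third, using the propagation lemma for $\psi(u,v) = e^{-\gamma_0(u)}\Upxi(u,v)$ — whose $\du$-equation \eqref{first instability condition 5} I would integrate, controlling $J_4, J_5, J_6$ with the novel estimates for $Q$, $Q\phi$, $|\phi|^2$ from Proposition \ref{PropEstimates} and Lemma \ref{C1complemma} — I verify $\sup_{v' \in [v_0, v(u_n)]}|\psi|(u_n, v') \leq \frac{h}{6}$, assuming the initial-data smallness \eqref{first instability condition 4}, which in turn follows from continuity of $\psi(u_2, \cdot)$ and \eqref{FIT256}. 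This yields on one hand $\eta^*(u_n, v(u_n)) \geq C_{37}\,\delta(u_n, v(u_n))\, e^{\gamma_0(u_n)}$, while on the other hand, \emph{if no trapped surface forms}, the criterion \eqref{TSF2} gives $\eta^*(u_n, v(u_n)) \leq C_{36}\,\delta(u_n, v(u_n))\left(\log\frac{1}{\delta(u_n,v(u_n))} + e^{\gamma_0(u_n)}r_0(u_n)\right)$. Choosing $n$ large so that $r_0(u_n) \leq \frac{C_{37}}{4C_{36}}$ and $\delta(u_n, v(u_n)) = e^{-\frac{C_{37}}{4C_{36}}e^{\gamma_0(u_n)}}$ makes \eqref{first instability condition 3} hold, a contradiction; hence a MOTS or trapped surface forms along $[u_0, u_*] \times \{v(u_n)\}$, and since $v(u_n) \to v_0^+$ as $n \to \infty$, these trapped surfaces accumulate at $\mathcal{O}'$. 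This proves conclusion (1), that $\mathcal{A} \neq \emptyset$, and conclusion (2): by Proposition \ref{SetupProp3}, once $(u, v(u_n)) \in \mathcal{A} \cup \mathcal{T}$ the future of that outgoing curve stays in $\mathcal{A} \cup \mathcal{T}$; tracking how these trapped regions cluster near $\mathcal{O}'$ — together with the structure of $\mathcal{B}_0 = \mathcal{S}^1_\Gamma \cup \mathcal{CH}_\Gamma \cup \mathcal{S}^2_\Gamma$ and the fact that $\mathcal{A}$ necessarily issues from $\mathcal{CH}_\Gamma \subset \mathcal{B}_0$ (cf. \cite{kommemi}) — forces either $\mathcal{O}' = \mathcal{B}_0$ (case (i)) or, if $\mathcal{O}' \subsetneq \mathcal{B}_0$, that every sufficiently small neighbourhood $U \subset \mathcal{Q}\setminus\mathcal{J}^-(\mathcal{O}')$ of $\mathcal{O}'$ meets $\mathcal{A}$ (case (ii)).

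The main obstacle will be the propagation estimate for $\psi$, i.e. closing the bootstrap $\sup|\psi|(u_n, \cdot) \leq \frac{h}{6}$: one must track the precise contributions of the charged terms $\frac{Q\phi(\du r)}{r(1-\mu)}$ and $A_u \psi$ through the $\du\psi$-equation, and the gain needed to absorb the dangerous $\left(\log\frac{(1-\mu)/\dv r}{\cdots}\right)^{1/2}$ factors comes precisely from the $x^c\log(1/x) \leq \frac{1}{ce}$ trick of Lemma \ref{xlogx} applied to $x = \frac{\dv r}{1-\mu}$, together with relating $\left(\frac{\dv r}{1-\mu}\right)^{1/2}$ to $e^{\frac12\gamma_0}$ via the Doppler identity \eqref{Setup92} — this forces the particular exponents in the $\delta$-smallness conditions \eqref{FIT35c'}–\eqref{FIT35c''}. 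The secondary subtlety is book-keeping: verifying that a single choice of $v(u_n)$ simultaneously satisfies all the competing smallness requirements on $\delta(u_n, v(u_n))$, which is possible precisely because every constraint has the form $\delta \leq (\text{const})\, e^{-(\text{positive multiple of }\gamma_0(u_n))}$ and $\gamma_0(u_n) \to \infty$. For the 45 remaining cases where $\limsup|I| = \infty$, the same argument applies verbatim with $h$ replaced by $b(u_n)^2$, at the cost of additional linear factors of $b(u_n)$ in the exponents, which are again absorbed by taking $v(u_n)$ even closer to $v_0$.
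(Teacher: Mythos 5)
Your technical pipeline — the lower bound $\left(\frac{r|\partial_v\phi|}{\partial_v r}\right)_0(u)\geq\frac{h}{3}e^{\gamma_0(u)}$ from hypothesis \eqref{FITTheorem1}, the choice of $v(u_n)$ with $\delta(u_n,v(u_n))=e^{-c\,e^{\gamma_0(u_n)}}$, the $\psi$-propagation estimate via Lemma \ref{FITpsievolution}, the role of Lemma \ref{xlogx} and the $\frac{\partial_v r}{1-\mu}$ factors, the comparison of the lower bound $\eta^*\geq C_{37}\delta e^{\gamma_0}$ against the no-trapped-surface upper bound \eqref{FIT34}, and the treatment of the $45$ cases with $b(u_n)\to\infty$ — all match the paper. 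The gap is in the logical wrapper, and it is a genuine one.

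The paper opens by \emph{assuming the conclusion is false}: either $\mathcal{A}=\emptyset$, or $\mathcal{O}'\subsetneq\mathcal{B}_0$ and some neighborhood $U$ of $\mathcal{O}'$ satisfies $U\cap\mathcal{A}=\emptyset$. In either case one obtains a rectangular patch $\mathcal{P}=[u_1,v_0)\times[v_0,v_0+\varepsilon]$ with $\mathcal{P}\cap\mathcal{A}=\emptyset$, and then Proposition \ref{SetupProp3} gives $\mathcal{P}\subset\mathcal{R}$. This is not a cosmetic preliminary: every estimate you run (Proposition \ref{PropEstimates}, Lemmas \ref{FITEstLemma3}, \ref{FITEstLemma5}, \ref{FITpsievolution}) requires the relevant $[u_2,u_n]\times[v_0,v(u_n)]$ to sit in the regular region $\mathcal{R}$ with $\mu-Q^2/r^2\geq 0$, and the paper's contradiction hypothesis is exactly what supplies this. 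Your proposal instead says ``\emph{if no trapped surface forms}'' only at the point where you invoke \eqref{TSF2}, without having established in advance that the rectangles where you do the $\psi$-bootstrap and the $\delta$-comparison lemma are regular. A pointwise ``no trapped surface at $(u_n,v(u_n))$'' assumption is strictly weaker than regularity of the whole patch and does not license the estimates you quote.

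The second problem is the closing deduction. Having (under your conditional hypothesis) produced trapped surfaces along $\{v=v(u_n)\}$, you claim they ``accumulate at $\mathcal{O}'$'' and that this ``forces'' (1) and (2). But the trapped surface guaranteed by Theorem \ref{Trapped} lives at some $u\in[u_n,u_*]$ where $u_*$ is chosen near $\mathcal{B}\setminus\mathcal{B}_0$ or near $\Gamma$ — controlling that this $u$ also tends to $v_0^-$, so the points genuinely cluster at $\mathcal{O}'$, requires an argument you do not give, and the invocation of the decomposition $\mathcal{B}_0=\mathcal{S}^1_\Gamma\cup\mathcal{CH}_\Gamma\cup\mathcal{S}^2_\Gamma$ does not fill that hole. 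The paper avoids all of this: since it already assumed $\mathcal{P}\cap\mathcal{A}=\emptyset$, the existence of \emph{any} MOTS or trapped surface inside $\mathcal{P}$ is the contradiction, and (1) and (2) drop out immediately as the negation of the assumed dichotomy. You should restructure the proof to start from ``suppose (1)$\wedge$(2) fails,'' extract $\mathcal{P}\subset\mathcal{R}$ via Lemma \ref{FITLemma2} and Proposition \ref{SetupProp3}, and derive the trapped surface inside $\mathcal{P}$ as the contradiction, rather than attempt a direct accumulation argument.
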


Before we begin with the proof of the theorem, we first establish several useful lemmas.

\begin{lemma}\label{FITLemma2}
For the spherically symmetric Einstein-Maxwell-charged scalar field system, if $\mathcal{A} = \emptyset$, then we have $\mathcal{B} \setminus \mathcal{B}_0 = \emptyset$ and $\mathcal{B}_0$ is an outgoing null segment originating from $\mathcal{O}'$ and extends all the way to $v \rightarrow \infty$.
\end{lemma}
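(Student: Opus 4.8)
The plan is to show that, in the absence of an apparent horizon, the spacetime behaves essentially like a regular solution foliated by outgoing cones of strictly positive area expansion, so that the only possible singular boundary is a future-complete null ray emanating from $\mathcal{O}'$. First I would invoke the decomposition of the singular boundary from \cite{kommemi} recalled in the Preliminaries, namely $\mathcal{B} = \mathcal{B}_0 \cup (\mathcal{B}\setminus \mathcal{B}_0)$, and recall that $\mathcal{A}$ necessarily issues from $\mathcal{CH}_\Gamma \subset \mathcal{B}_0$, so that $\mathcal{A} = \emptyset$ forces $\mathcal{CH}_\Gamma$ to be empty and hence $\mathcal{B}_0 = \mathcal{S}^1_\Gamma \cup \mathcal{S}^2_\Gamma$, an outgoing null segment (possibly empty apart from $\mathcal{O}'$) on which $r$ extends continuously to zero. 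The hypothesis $\mathcal{A} = \emptyset$ means $\mathcal{Q} = \mathcal{R}$, i.e. $\dv r > 0$ and $1-\mu > 0$ everywhere; combined with $\du r < 0$ from \eqref{Setup40}, this says every sphere in $\mathcal{Q}$ is untrapped.

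Next I would argue that $\mathcal{B}\setminus \mathcal{B}_0 = \emptyset$. Recall $\mathcal{B}\setminus\mathcal{B}_0$ is achronal with $r$ vanishing on it, and that any point of it is a limit of regular points $(u,v)$ with $r(u,v)\to 0$; but along any future-directed incoming null curve in $\mathcal{R}$, Lemma \ref{SetupLemma1} gives that $\dv r/(1-\mu)$ is decreasing while $r$ itself is increasing in $v$, so $r$ cannot degenerate to zero while moving to the future along an incoming cone away from $\Gamma$. More precisely, a point of $\mathcal{B}\setminus\mathcal{B}_0$ other than at $\Gamma$ would have to be approached by a sequence $(u_n,v_n)$ with $u_n \to u_\infty > v_0$ and $r(u_n,v_n)\to 0$; following the incoming cone $v = v_n$ backward to $C_0^+$, monotonicity of $r$ along it (using $\du r < 0$) forces $r(0,v_n) < r(u_n,v_n) + $ (the total drop), and then using the wave-type equation \eqref{Setup15} together with $\du r <0$, $\dv r > 0$, one shows $r$ stays bounded below by a positive constant on the relevant compact portion, a contradiction. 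Equivalently one can note that $\mathcal{B}\setminus\mathcal{B}_0$, being the non-central component, must, by the structure theorem in \cite{kommemi}, have $\mathcal{A}$ in its causal past, and with $\mathcal{A}=\emptyset$ this is impossible; either route gives $\mathcal{B}\setminus\mathcal{B}_0 = \emptyset$.

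It then remains to show that $\mathcal{B}_0$ is a genuine outgoing null segment that extends to $v\to\infty$, rather than terminating at some finite $v$. Since $\mathcal{B}\setminus\mathcal{B}_0=\emptyset$ and $\mathcal{B}_0$ is null and outgoing with past endpoint $\mathcal{O}'=(v_0,v_0)$, it lies along a curve $u = $ const $\geq v_0$ in the ambient $\mathbb{R}^{1+1}$; I would show this constant is exactly $v_0$, i.e. $\mathcal{B}_0 = \{v_0\}\times (v_0,\infty)$ as a subset of $\overline{\mathcal{Q}}\setminus\mathcal{Q}$. Indeed for each $u \in (v_0, \infty)$ one must check the outgoing cone $\{v_0\}\times(v_0,u)$ remains in $\mathcal{Q}$ with $r>0$; the area radius along it satisfies $\du(\Omega^{-2}\du r) \leq 0$ by \eqref{Setup14}, and since we are in $\mathcal{R}$ with no trapped surfaces, standard local existence (a $C^1$ extension argument as in Theorem \ref{SET} or \ref{FET}, both of whose hypotheses are automatic once $\mu<1$ uniformly or $\gamma$ stays finite on the relevant triangle — which holds since $\mathcal{A}=\emptyset$) shows the solution continues as long as $r$ stays positive and bounded, so $\mathcal{B}_0$ can only consist of points where $r\to 0$, forcing it to coincide with the boundary ray issuing from $\mathcal{O}'$; and this ray has no finite future endpoint since such an endpoint would itself be a point of $\mathcal{B}\setminus\mathcal{B}_0$, already excluded.

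The main obstacle I anticipate is the second paragraph: ruling out $\mathcal{B}\setminus\mathcal{B}_0$ cleanly. The delicate point is that "$r\to 0$ approaching a non-central boundary point" has to be converted into a quantitative lower bound on $r$ using only the equations \eqref{Setup12}--\eqref{Setup19} and the monotonicity lemmas, without accidentally invoking the very trapped-surface formation one is trying to avoid; in the uncharged case Christodoulou handles this via the Bondi mass and the monotonicity of $r$ along cones, and here one must be careful that the charge terms $Q^2/r^2$ in \eqref{Setup20}, \eqref{Setup24}--\eqref{Setup27} do not spoil the sign of $\du\dv r$ — but in the regular region $\mu - Q^2/r^2$ could in principle be negative, so one should instead lean on the robust inequalities $\du r<0$, $\dv r>0$, $\dv(\Omega^{-2}\dv r)\leq 0$, $\du(\Omega^{-2}\du r)\leq 0$, which hold unconditionally. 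I would use these together with the fact that $\mathcal{B}\setminus\mathcal{B}_0$ being achronal and $\mathcal{A}=\emptyset$ are contradictory by the Penrose-type structure already quoted from \cite{kommemi}, making the argument essentially a citation plus a short consistency check rather than a new estimate.
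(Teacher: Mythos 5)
Your overall strategy — exploit the fact that $\mathcal{A}=\emptyset$ forces $\dv r>0$ everywhere and hence $r$ strictly increases along outgoing cones from $\Gamma$, so no non-central singular component where $r\to 0$ can be reached — is the right one and matches the paper. The paper's actual proof is three sentences: from $\mathcal{A}=\emptyset$ and Proposition \ref{SetupProp3} (plus the continuity of $\dv r$ from $C_0^+$ where $\dv r=1/2$) one gets $\mathcal{A}\cup\mathcal{T}=\emptyset$, hence $\dv r>0$ throughout $\mathcal{Q}$; every point of $\mathcal{Q}$ lies on an outgoing cone issuing from $\Gamma$ where $r=0$, and $r$ is strictly increasing along it, so such a cone can never again reach $r=0$, which is exactly what characterizes points of $\mathcal{B}\setminus\mathcal{B}_0$. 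That clean monotone-in-$v$ argument is buried in your write-up but is not what you actually execute.

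Several specific points in your attempt are off. First, your opening reduction uses the wrong direction of the cited structure statement: the Preliminaries say that $\mathcal{A}$ issues from $\mathcal{CH}_\Gamma$, i.e.\ \emph{if} $\mathcal{A}\neq\emptyset$ then it starts on $\mathcal{CH}_\Gamma$; this does not entitle you to conclude that $\mathcal{A}=\emptyset$ forces $\mathcal{CH}_\Gamma=\emptyset$. Second, the central estimate in your second paragraph conflates the two null directions: you invoke Lemma \ref{SetupLemma1} (monotonicity of $\dv r/(1-\mu)$ in $u$) and then speak of $r$ increasing in $v$ while "following the incoming cone $v=v_n$ backward," where $r$ actually increases as $u$ decreases because $\du r<0$; the inequality $r(0,v_n) < r(u_n,v_n) + (\text{total drop})$ you write down is vacuous and in the wrong form, and the appeal to the Raychaudhuri equation \eqref{Setup15} to produce a "positive lower bound on $r$ on a compact portion" is never carried out. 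Third, your anticipated "delicate point" about the sign of $\mu - Q^2/r^2$ is a red herring for this lemma: the argument never touches $\du\dv r$ and needs only $\dv r>0$, which follows from $\mathcal{A}\cup\mathcal{T}=\emptyset$ regardless of the charged terms. Your fallback via the Kommemi structure theorem and your final paragraph on $\mathcal{B}_0$ having no finite future endpoint are sound, but the core step ruling out $\mathcal{B}\setminus\mathcal{B}_0$ as you wrote it has a gap and needs to be replaced by the direct monotonicity along outgoing cones from $\Gamma$.
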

\begin{proof}
By Proposition \ref{SetupProp3}, under the assumption of this lemma, we have that $\mathcal{A} \cup \mathcal{T} = \emptyset$. 
For any potential point $p \in \mathcal{B} \setminus \mathcal{B}_0$, we have $r(p) = 0$. From $\mathcal{A} \cup \mathcal{T} = \emptyset$, we also get $\dv r > 0$ everywhere. Henceforth, any outgoing null curve originating from $\Gamma$ will never hit $\mathcal{B} \setminus \mathcal{B}_0$.
\end{proof}

\begin{lemma}\label{FITLemma3}
Along any incoming null curve with $v \geq v_0$, for all $0 \leq u < v_0$, if $(u',v) \in \mathcal{R}$ for all $u' \in [0,u]$, then we have
\begin{equation}\label{FIT18a}
    (\dv r) (u,v) = (\dv r)(0,v) e^{-\gamma(u,v)}.
\end{equation}
In particular, along $v = v_0$, we deduce
\begin{equation}\label{FIT18}
    (\dv r)_0 (u) = (\dv r)_0(0) e^{-\gamma_0(u)}
\end{equation}
Furthermore, for any $\overline{u} \in [0,v_0)$, with $\tgam$ defined in \eqref{FIT4a}, we also get
\begin{equation}\label{FIT18b}
(\dv r)(u,v) = (\dv r)(\overline{u},v) e^{-\tgam(u.v)}.
\end{equation}
\end{lemma}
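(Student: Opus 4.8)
\textbf{Proof proposal for Lemma \ref{FITLemma3}.}

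The plan is to derive all three identities from a single ODE along incoming null curves, namely the transport equation for $\log(\dv r)$ in the $u$-direction, and then integrate. First I would recall equation \eqref{Setup20}, which I rewrite in the equivalent logarithmic form
\begin{equation*}
\du \log(\dv r) = \frac{\du \dv r}{\dv r} = \frac{1}{r}\ml \frac{\mu - Q^2/r^2}{1-\mu}\mr \du r = -\frac{(-\du r)}{r}\frac{\mu - Q^2/r^2}{1-\mu}.
\end{equation*}
This step is valid precisely because, by hypothesis, $(u',v) \in \mathcal{R}$ for all $u' \in [0,u]$, so that $1-\mu > 0$ and $\dv r > 0$ along the segment of the incoming null curve under consideration, which makes $\log(\dv r)$ well-defined and $C^1$ there; moreover the integrand $\frac{(-\du r)}{r}\frac{\mu - Q^2/r^2}{1-\mu}(u',v)$ is continuous on the compact interval $u' \in [0,u]$ (note $r > 0$ away from $\Gamma$), hence integrable.

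Next I would integrate this ODE in $u'$ from $0$ to $u$ along the fixed incoming null curve $\{v = \text{const}\}$. Using the fundamental theorem of calculus,
\begin{equation*}
\log(\dv r)(u,v) - \log(\dv r)(0,v) = -\int_0^u \frac{(-\du r)}{r}\frac{\mu - Q^2/r^2}{1-\mu}(u',v)\,\D u' = -\gamma(u,v),
\end{equation*}
where the last equality is just the definition \eqref{FIT3} of $\gamma(u,v)$ (the definition in \eqref{gamma}). Exponentiating gives \eqref{FIT18a}. The special case \eqref{FIT18} follows immediately by setting $v = v_0$ and invoking the notation \eqref{FIT4}, i.e.\ $\gamma_0(u) = \gamma(u,v_0)$. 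For \eqref{FIT18b}, I would instead integrate the same ODE from $\overline{u}$ to $u$ (this requires $(u',v) \in \mathcal{R}$ for $u' \in [\overline{u},u]$, which holds since $[\overline{u},u] \subseteq [0,u]$ when $\overline{u} \in [0,v_0)$ and $u \geq \overline{u}$), obtaining
\begin{equation*}
\log(\dv r)(u,v) - \log(\dv r)(\overline{u},v) = -\int_{\overline{u}}^u \frac{(-\du r)}{r}\frac{\mu - Q^2/r^2}{1-\mu}(u',v)\,\D u' = -\tgam(u,v)
\end{equation*}
by the definition \eqref{FIT4a} of $\tgam$; exponentiating yields \eqref{FIT18b}.

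There is essentially no hard step here: the lemma is a direct consequence of the Doppler identity \eqref{Setup92} already recorded in the preliminaries, restated for incoming null curves with $v \geq v_0$ and with a general lower endpoint $\overline{u}$. The only point requiring a small amount of care is ensuring that the integrand is genuinely integrable and that $\dv r$ stays positive along the whole segment, which is exactly guaranteed by the standing hypothesis that the relevant portion of the incoming null curve lies in $\mathcal{R}$; I would state this explicitly so that the logarithm and the exponentiation are justified. If one wants to avoid logarithms altogether (e.g.\ to sidestep any worry about $\dv r$ vanishing), an equivalent route is to observe that \eqref{Setup20} shows $u \mapsto (\dv r)(u,v)\,e^{\gamma(u,v;\cdot)}$ has vanishing $u$-derivative on the segment, hence is constant, which gives the same conclusion; I would mention this as the cleanest phrasing.
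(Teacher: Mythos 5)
Your proof is correct and follows exactly the paper's route: rewrite \eqref{Setup20} as $\du \log(\dv r) = \frac{\du r}{r}\frac{\mu - Q^2/r^2}{1-\mu}$ and integrate in $u$ from $0$ (resp. $\overline{u}$) to $u$. The extra remarks you include about integrability and positivity of $\dv r$ on the segment are sound but are left implicit in the paper.
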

\begin{proof}
Using \eqref{Setup20}, we can rearrange its expression to obtain
\begin{equation}\label{FIT19}
\du \log \ml \dv r\mr = \frac{\du r}{r}\ml \frac{\mu-Q^2/r^2}{1-\mu}\mr.
\end{equation}
Integrating \eqref{FIT19}, we retrieve \eqref{FIT18a}, \eqref{FIT18} and \eqref{FIT18b}.
\end{proof}

\begin{lemma}\label{FITEstLemma1} Along any incoming null curve with $v \geq v_0$, for all $0 \leq \overline{u} \leq u < v_0$, if $(u',v) \in \mathcal{R}$ for all $u' \in [\overline{u},u]$, we have 
\begin{equation}\label{FIT5a}
    \frac{1}{1-\mu(u,v)} \leq \frac{1}{1-\mu(\overline{u},v)}e^{\tgam(u,v)}.
\end{equation}
In particular, we also get
\begin{equation}\label{FIT5}
    \frac{1}{1-\mu(u,v)} \leq \frac{1}{1-\mu(0,v)}e^{\gamma(u,v)}.
\end{equation}
\end{lemma}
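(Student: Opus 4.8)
<br>

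The statement to prove is Lemma \ref{FITEstLemma1}: along an incoming null curve with $v \geq v_0$, if $(u',v) \in \mathcal{R}$ for all $u' \in [\overline{u},u]$, then $\frac{1}{1-\mu(u,v)} \leq \frac{1}{1-\mu(\overline{u},v)}e^{\tgam(u,v)}$, with the special case $\overline{u}=0$ following immediately.

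The plan is to differentiate $\frac{1}{1-\mu}$ in the $u$-direction and integrate. First I would use the evolution equation \eqref{Setup26} for $\du\mu$ to compute
$$
\du\ml \frac{1}{1-\mu(u,v)}\mr = \frac{\du \mu}{(1-\mu)^2} = \frac{1}{1-\mu(u,v)}\ml -\frac{4\pi r|D_u\phi|^2}{(-\du r)} + \frac{(-\du r)}{r}\frac{\mu - Q^2/r^2}{1-\mu}\mr,
$$
exactly as recorded in the proof sketch of Theorem \ref{SET} (see \eqref{SET11}). Since we assume $(u',v)\in\mathcal{R}$ on the whole segment $[\overline{u},u]$, the quantity $\frac{1}{1-\mu}$ stays positive and finite there, so this is a genuine linear ODE in $u$ for the positive quantity $\frac{1}{1-\mu}$, and I can integrate it with an exponential integrating factor from $\overline{u}$ to $u$:
$$
\frac{1}{1-\mu(u,v)} = \frac{1}{1-\mu(\overline{u},v)}\exp\ml \int^u_{\overline{u}}\ml -\frac{4\pi r|D_u\phi|^2}{(-\du r)} + \frac{(-\du r)}{r}\frac{\mu - Q^2/r^2}{1-\mu}\mr(u',v)\,\D u'\mr.
$$

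Next I would observe that on $\mathcal{R}$ we have $-\du r > 0$ (from \eqref{Setup40}), $r > 0$, so the first term in the integrand, $-\frac{4\pi r|D_u\phi|^2}{(-\du r)}$, is non-positive. Dropping it only increases the exponent, giving
$$
\frac{1}{1-\mu(u,v)} \leq \frac{1}{1-\mu(\overline{u},v)}\exp\ml \int^u_{\overline{u}} \frac{(-\du r)}{r}\frac{\mu - Q^2/r^2}{1-\mu}(u',v)\,\D u'\mr = \frac{1}{1-\mu(\overline{u},v)}e^{\tgam(u,v)},
$$
by the definition \eqref{FIT4a} of $\tgam$. Setting $\overline{u}=0$ and using \eqref{FIT3}, \eqref{FIT4a} yields \eqref{FIT5}, completing the proof.

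There is essentially no hard obstacle here; the only point requiring a little care is justifying that the integrand $\frac{(-\du r)}{r}\frac{\mu-Q^2/r^2}{1-\mu}(u',v)$ is integrable over $[\overline{u},u]$ so that $\tgam(u,v)$ is well-defined and the fundamental theorem of calculus applies — but this is guaranteed by the hypothesis that the solution is $C^1$ (or, in the relevant later applications, BV) on the regular segment $[\overline{u},u]\times\{v\}$, where $r$ is bounded below away from zero and $1-\mu$ bounded below away from zero. So the main ``work'' is simply recording the ODE computation and the sign observation; the lemma is a direct consequence of the monotonicity structure of the Einstein-Maxwell-charged scalar field equations that has already been exploited in the $C^1$ extension arguments.
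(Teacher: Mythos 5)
Your proposal is correct and follows exactly the paper's argument: compute $\du\bigl(\tfrac{1}{1-\mu}\bigr)$ via \eqref{Setup26}, integrate the linear ODE from $\overline{u}$ to $u$, and drop the non-positive term $-4\pi r|D_u\phi|^2/(-\du r)$ in the exponent, then invoke the definition of $\tgam$. The remark on integrability is a reasonable extra sanity check but not something the paper dwells on.
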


\begin{proof}
With the help of \eqref{Setup26}, we compute
$$\du\ml \frac{1}{1-\mu(u,v)}\mr = \frac{\du \mu}{(1 - \mu)^2} = \frac{1}{1-\mu(u,v)} \ml -\frac{4\pi r|D_u \phi|^2}{(-\du r)} + \frac{(-\du r)}{r}\frac{\mu - Q^2/r^2}{1-\mu}\mr.$$
This gives
\begin{equation}\label{FIT6}
\begin{aligned}
\frac{1}{1-\mu(u,v)} &= \frac{1}{1-\mu(\overline{u},v)}\exp\ml {\int^u_{\overline{u}} \ml -\frac{4\pi r|D_u \phi|^2}{(-\du r)} + \frac{(-\du r)}{r}\frac{\mu - Q^2/r^2}{1-\mu}\mr (u',v) \mathrm{d}u'} \mr.
\end{aligned}
\end{equation}
Since $\du r < 0$, this implies
\begin{equation}\label{FIT7}
\begin{aligned}
\frac{1}{1-\mu(u,v)} &\leq \frac{1}{1-\mu(\overline{u},v)}\exp\ml {\int^u_{\overline{u}} \ml  \frac{(-\du r)}{r}\frac{\mu - Q^2/r^2}{1-\mu}\mr (u',v) \mathrm{d}u'} \mr. \\
\end{aligned}
\end{equation}
Employing the definition of $\gamma$ in \eqref{FIT3} yields \eqref{FIT5a} and \eqref{FIT5}.
\end{proof}
\noindent Observe that these two lemmas were established when we were proving Theorem \ref{SET}, albeit only for $\overline{u} = 0$.

Next, we remark that the definition of $\gamma$ in \eqref{FIT3} naturally arises from the evolution equation for $\frac{r \dv \phi}{\dv r}$ along the $u$-direction. To see this, by \eqref{Setup20} and \eqref{Setup73}, we get
\begin{equation}\label{FIT8}
\begin{aligned}
\du \ml \frac{r \dv \phi}{\dv r} \mr &= \frac{\du (r\dv \phi)}{\dv r} + \frac{r \dv \phi (- \du \dv r)}{(\dv r)^2} \\
&= \frac{- \dv r D_u \phi + \ii \e \frac{Q\phi (-\du r)(\dv r)}{r(1-\mu)}}{\dv r} - \ii \e A_u \frac{r \dv \phi}{\dv r} + \frac{\dv \phi (- \du r)}{(\dv r)} \frac{\mu - Q^2/r^2}{1-\mu} \\
&= -D_u \phi - \ii \e \frac{Q \phi (\du r)}{r(1-\mu)} + \ml \frac{r \dv \phi}{\dv r}\mr \ml -\ii \e A_u + \frac{(-\du r)}{r} \frac{\mu - Q^2/r^2}{1 - \mu} \mr. \\
\end{aligned}
\end{equation}
Using the definition of $f$ and $g$ in \eqref{FIT10} and \eqref{FIT11}, we obtain
\begin{equation}\label{FIT9}
\ml \frac{r \dv \phi}{\dv r} \mr(u,v)  = \ml\ml \frac{r \dv \phi}{\dv r} \mr(0,v) - \int^u_{0} g(u',v)e^{-f(u',v)}\mathrm{d}u' \mr e^{f(u,v)}
\end{equation}
Taking absolute values, we arrive at
\begin{equation}\label{FIT12}
\ml \frac{r |\dv \phi|}{|\dv r|} \mr(u,v)  = \mlm \ml \frac{r \dv \phi}{\dv r} \mr(0,v) - \int^u_{0} g(u',v)e^{-f(u',v)}\mathrm{d}u' \mrm e^{\gamma(u,v)}.
\end{equation}
Employing $(\dv r)_0 > 0$ in lieu of Lemma \ref{FITLemma3}, we then derive
\begin{equation}\label{FIT13}
\ml \frac{r |\dv \phi|}{\dv r} \mr_0(u)  = \mlm \ml \frac{r \dv \phi}{\dv r} \mr_0(0) - I(u) \mrm e^{\gamma_0(u)}
\end{equation}
with $I(u)$ defined in \eqref{FIT14}.

Last but not least, we have the following lemma
\begin{lemma}\label{FITEstLemma6} 
Along $v = v_0$, as $u \rightarrow v_0^-$, we have that
\begin{equation}\label{FIT57}
\gamma_0(u) < + \infty \quad \mbox{ implies } \quad \mu_0(u) \rightarrow 0.
\end{equation}
\end{lemma}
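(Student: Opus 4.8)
\textbf{Proof proposal for Lemma \ref{FITEstLemma6}.}

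The statement to prove is a contrapositive-friendly assertion: along $C_0^-$, if the Doppler exponent $\gamma_0(u)$ stays bounded as $u\to v_0^-$, then the mass ratio $\mu_0(u)\to 0$. The natural strategy is to invoke Theorem \ref{SET} (the $C^1$ extension theorem with Doppler exponent) together with the characterization of the singular boundary $\mathcal{O}'$. Recall that $\mathcal{O}'=(v_0,v_0)$ was defined precisely as the \emph{first} singular point along $\Gamma$, i.e.\ the point beyond which no $C^1$ extension exists; equivalently, for every $\hat v\in(0,v_0)$ a $C^1$ solution exists on $\mathcal{D}(0,\hat v)$ but one does not exist on $\mathcal{D}(0,v_0)$ (otherwise one could extend further by Theorem \ref{SET}'s "extension beyond boundary" step, contradicting that $\mathcal{O}'$ starts the singular boundary). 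The plan is: first, argue by contradiction. Suppose $\gamma_0(u)$ stays bounded but $\mu_0(u)\not\to 0$ as $u\to v_0^-$. I want to show this forces a uniform bound $\sup_{\mathcal{D}(0,v_0)}\gamma<+\infty$, which by Theorem \ref{SET} would let the $C^1$ solution extend past $v_0$, contradicting the definition of $\mathcal{O}'$.

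The key technical step is upgrading boundedness of $\gamma_0$ (a bound only along the single incoming cone $v=v_0$) to boundedness of $\gamma$ on all of $\mathcal{D}(0,v_0)$. For this I would use the fact that $\gamma(u,v)=\int_0^u \frac{-\du r}{r}\frac{\mu-Q^2/r^2}{1-\mu}(u',v)\,\D u'$ and compare $\gamma(u,v)$ for $v<v_0$ with $\gamma(u,v_0)$. Since $\mathcal{J}^-(\mathcal{O}')\subset\mathcal{R}$, the region $\mathcal{D}(0,v_0)$ lies in the regular region where $1-\mu>0$ and $\mu-Q^2/r^2\ge 0$ (the latter by Proposition \ref{PropEstimates}/Remark \ref{Remark mu Q}, valid near $\Gamma$, and one can shrink $v_0$ if necessary). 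One then needs monotonicity or a comparison estimate for the integrand as $v$ varies. A cleaner route: use the formula \eqref{Setup99}, $\dv\gamma(u,v)=\int_0^u(T_1+T_2+T_3)(u',v)\,\D u'$, with $T_1,T_2,T_3$ as in \eqref{Setup98}, and estimate each $T_i$ on $\mathcal{D}(0,v_0)$ using the novel bounds from Proposition \ref{PropEstimates} for $|Q|$, $|Q\phi|$, $|\phi|$, plus $\frac{1}{1-\mu}\le e^{\gamma_0}\cdot\frac{1}{1-\mu_*}$ from Lemma \ref{FITEstLemma1} (Property 2). This gives $|\dv\gamma|(u,v)\lesssim C$ uniformly on $\mathcal{D}(0,v_0)$, hence $\gamma(u,v)\le \gamma(u,v_0)+C(v_0-v)\le \sup\gamma_0 + Cv_0<+\infty$ for all $(u,v)\in\mathcal{D}(0,v_0)$ — exactly the hypothesis \eqref{SET4} of Theorem \ref{SET}. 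This is essentially the same bootstrap on $\dv\gamma$ already carried out in the "extension beyond boundary" portion of the proof of Theorem \ref{SET} (see \eqref{SET64}–\eqref{SET70}), so I would cite that machinery rather than redo it.

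Once $\sup_{\mathcal{D}(0,v_0)}\gamma<+\infty$ is established, Theorem \ref{SET} yields a $C^1$ solution on $\mathcal{D}(0,\tilde v)$ with $\tilde v>v_0$, contradicting the fact that $\mathcal{O}'=(v_0,v_0)$ is the first singular point on $\Gamma$ (equivalently, $C_0^-$ cannot terminate at a regular vertex, as recorded right after \eqref{Setup45}). Therefore the assumption $\mu_0(u)\not\to 0$ is untenable, and we conclude $\mu_0(u)\to 0$. The main obstacle I anticipate is the bookkeeping in the uniform estimate for $\dv\gamma$ on $\mathcal{D}(0,v_0)$: one must make sure the constants there depend only on $v_0$ and the assumed bound on $\gamma_0$ (not on $u$ approaching $v_0$), and that the various Proposition \ref{PropEstimates} constants $C_1,C_5$, etc., admit bounds independent of the interior point — this is exactly the kind of care flagged in Remark \ref{Remark mu Q}. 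A subtle point to double-check is whether one needs the \emph{full} strength of boundedness of $\gamma_0$ or just that it controls $\frac{1}{1-\mu}$ along $v=v_0$; in either case the argument routes through the same Doppler identity. Everything else is a direct appeal to results already proved in the excerpt.
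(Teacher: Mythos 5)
Your proposal takes a fundamentally different route from the paper, and unfortunately it contains a genuine circularity in the key step, so it does not go through as written.

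The paper's proof of Lemma~\ref{FITEstLemma6} is a short, self-contained quantitative estimate along $C_0^-$ alone: it uses Proposition~\ref{PropEstimates} to get $\ml Q^2/r^2\mr_0 \leq \tfrac{1}{2}\mu_0$ near $v_0$, then bounds $\gamma_0(u) - \gamma_0(u_2)$ from below by $\tfrac12\log\ml\tfrac{1-\mu_0(u)r_0(u)/r_0(u_2)}{1-\mu_0(u)}\mr$ using the monotonicity of $m_0$ in $u$ and an explicit integration. Rearranging gives $\mu_0(u) \leq \tfrac{e^{2(\gamma_0(u)-\gamma_0(u_2))}-1}{1-r_0(u)/r_0(u_2)}$; fixing $r_0(u)/r_0(u_2)=1/2$ and sending $u\to v_0^-$ (so $u_2\to v_0^-$ too), the numerator vanishes because $\gamma_0$ is bounded and nondecreasing hence Cauchy. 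This argument never touches the extension theorem, makes no use of the singular nature of $\mathcal{O}'$, and is valid purely as a local estimate along $v=v_0$.

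Your plan instead is a proof by contradiction via Theorem~\ref{SET}: upgrade the bound on $\gamma_0$ to a bound on $\gamma$ over $\mathcal{D}(0,v_0)$, apply the extension theorem, and contradict the fact that $\mathcal{O}'$ starts the singular boundary. The gap is in the upgrade step. You invoke Lemma~\ref{FITEstLemma1} as giving $\tfrac{1}{1-\mu}\leq e^{\gamma_0}\cdot\tfrac{1}{1-\mu_*}$, but that lemma actually gives $\tfrac{1}{1-\mu(u,v)}\leq\tfrac{e^{\gamma(u,v)}}{1-\mu(0,v)}$ --- it controls $1/(1-\mu)$ at a generic interior point by $\gamma(u,v)$, not by $\gamma_0(u)=\gamma(u,v_0)$. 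So to bound $T_1,T_2,T_3$ in \eqref{Setup98}--\eqref{Setup99} you already need a bound on the very $\gamma(u,v)$ you are trying to produce. The same circularity shows up in $T_2$: it contains $\tfrac{r|\dv\phi|^2}{\dv r(1-\mu)}$, whose control in the paper (via $B_2$ in \eqref{SET65}--\eqref{SET67}, or via $\alpha'$ estimates generally) is itself derived from an \emph{a priori} bound on $\gamma$ or $\mu$ over a full two-dimensional region; Proposition~\ref{PropEstimates} alone does not bound $|\dv\phi|$. You also cannot salvage this with a sign argument --- $\dv\gamma$ is not sign-definite (e.g.\ $T_3\leq 0$ while $T_2\geq 0$), so $\gamma(u,v)\leq\gamma_0(u)$ does not follow from monotonicity in $v$. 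Beyond the technical gap, your approach would --- if it worked --- establish the strictly stronger statement that boundedness of $\gamma_0$ is impossible given $\mathcal{O}'$ is singular, and it would thereby make the lemma depend on the global singularity structure; the paper deliberately keeps this lemma as a purely local implication, which it then combines with Theorem~\ref{FET} to conclude $\gamma_0\to\infty$. I would recommend abandoning the contradiction-via-extension route here and doing the direct estimate.
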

\begin{proof}   
By Proposition \ref{PropEstimates}, with $\overline{u} = 0$ and $\chi = \frac{1}{2}$, inequality \eqref{Estimates17} implies that
\begin{equation}\label{FIT220}
\ml \frac{Q^2}{r^2} \mr_0 (u) \leq C_1^2 \ml 0,v_0;\frac{1}{2}\mr r_0^{\frac{1}{2}}(0)v_0 r_0^{\frac{1}{2}}(u) \mu_0(u).
\end{equation}
We then pick $u_1 \in [0,v_0)$ with $u_1$ sufficiently close to $v_0$, such that $C_1^2 \ml 0,v_0;\frac{1}{2}\mr r_0^{\frac{1}{2}}(0)v_0 r_0^{\frac{1}{2}}(u) \leq \frac{1}{2}$ for all $u \in [u_1,v_0)$. This simplifies \eqref{FIT220} to
\begin{equation}\label{FIT221}
0 \leq \ml \frac{Q^2}{r^2} \mr_0 (u) \leq \frac{\mu_0(u)}{2}.
\end{equation}
Recalling the definition of $\gamma_0$ in \eqref{FIT4}, with a fixed $u_2 \in [u_1,v_0)$, for all $u \in [u_2,v_0)$, we have
\begin{equation}\label{FIT58}
\begin{aligned}
\gamma_0(u) - \gamma_0(u_2) &= \int^u_{u_2} \frac{(-\du r)_0}{r_0} \frac{(\mu - Q^2/r^2)_0}{1-\mu_0}(u')\;\D u' \geq \frac{1}{2}\int^u_{u_2} \frac{(-\du r)_0}{r_0} \frac{\mu_0}{1-\mu_0}(u')\;\D u' \\
&\quad\quad\geq \frac{1}{2}\int^u_{u_2} \frac{(-\du r)_0(u')}{r_0(u')} \frac{\frac{2m_0(u)}{r_0(u')}}{1-\frac{2m_0(u)}{r_0(u')}}\;\D u' = \frac{1}{2}\log \ml \frac{r_0(u')}{r_0(u')-2m_0(u)}\mr|^{u' = u}_{u' = u_2} \\
&\quad\quad\quad\quad= \frac{1}{2}\log \ml \frac{1-\mu_0(u)\frac{r_0(u)}{r_0(u_2)}}{1-\mu_0(u)}\mr.
\end{aligned}
\end{equation}
This in turn implies that
\begin{equation}\label{FIT222a}
\begin{aligned}
e^{2(\gamma_0(u)-\gamma_0(u_2))} - 1 &\geq \frac{\mu_0(u)}{1-\mu_0(u)}\ml 1 - \frac{r_0(u)}{r_0(u_2)} \mr \\
\end{aligned}
\end{equation}
and hence
\begin{equation}\label{FIT222}
0 \leq \mu_0(u) \leq \frac{e^{2 \ml \gamma_0(u) - \gamma_0(u_2)\mr}-1}{1-\frac{r_0(u)}{r_0(u_2)}}.
\end{equation}
Since $r_0(u) \rightarrow 0$ as $u \rightarrow v_0^-$, for each $u_2$, we can select $u(u_2) > u_2$ such that $r_0(u)/r_0(u_2) = 1/2$. Furthermore, by the continuity of $r_0$, we have that $u_2 \rightarrow v_0^-$ implies $u(u_2) \rightarrow v_0^-$. In addition, as $\gamma_0$ is non-decreasing with respect to $u$ and is bounded by hypothesis, this implies that 
$\lim_{u \rightarrow v_0^-} \gamma_0(u)$ exists and thus 
\begin{equation}
\lim_{u_2 \rightarrow v_0^-} \ml \gamma_0(u(u_2)) - \gamma_0(u_2) \mr = 0.
\end{equation}
By \eqref{FIT222}, it in turn yields that 
\begin{equation}
\lim_{u \rightarrow v_0^-}\mu_0(u)=0.
\end{equation}
\end{proof}

 In below we will see that the blueshift effect in the regular region would trigger our established trapped surface formation criterion. We now start to prove Theorem \ref{FITTheorem}.
\begin{proof}
Suppose that the conclusion in the statement of Theorem \ref{FITTheorem} is not true. Then either $\mathcal{A}$ is empty or there exists a sufficiently small open neighborhood $U$ of $\mathcal{O}'$ (with $\mathcal{O}' \subsetneq \mathcal{B}_0$) such that $U \cap \mathcal{A} = \emptyset$.

If the former holds, by Lemma \ref{FITLemma2}, we deduce that $\mathcal{B}_0$ is an outgoing null segment originating from $\mathcal{O}'$ and extends all the way to $v \rightarrow \infty$. In particular, there exists an $\varepsilon > 0$ and a corresponding outgoing null segment $\{v_0\} \times [v_0,v_0+\varepsilon]$ in $\mathcal{B}_0$. This enables us to find $u_1 \in (0,v_0)$ sufficiently close to $v_0$, such that the coordinate patch $([u_1,v_0] \times [v_0,v_0+\varepsilon])$ satisfies $([u_1,v_0] \times [v_0,v_0+\varepsilon]) \cap \mathcal{A} = \emptyset$. 

If the latter holds, by an analogous argument, we can also find an $\varepsilon > 0$ and a $u_1 \in (0,v_0)$ sufficiently close to $v_0$ such that $([u_1,v_0) \times [v_0,v_0+\varepsilon]) \cap \mathcal{A} = \emptyset$. Hence, for both cases, we define this coordinate patch to be $\mathcal{P}= [u_1,v_0) \times [v_0, v_0 + \varepsilon]$ and $\mathcal{P}\cap \mathcal{A} = \emptyset$. 
Figure \ref{FITFig1} below illustrates the existence of $\mathcal{P}$ in both cases. In view of Proposition \ref{SetupProp3}, we deduce that $\mathcal{P} \subset \mathcal{R}$. This allows us to carry out the estimates and lemmas established as above.

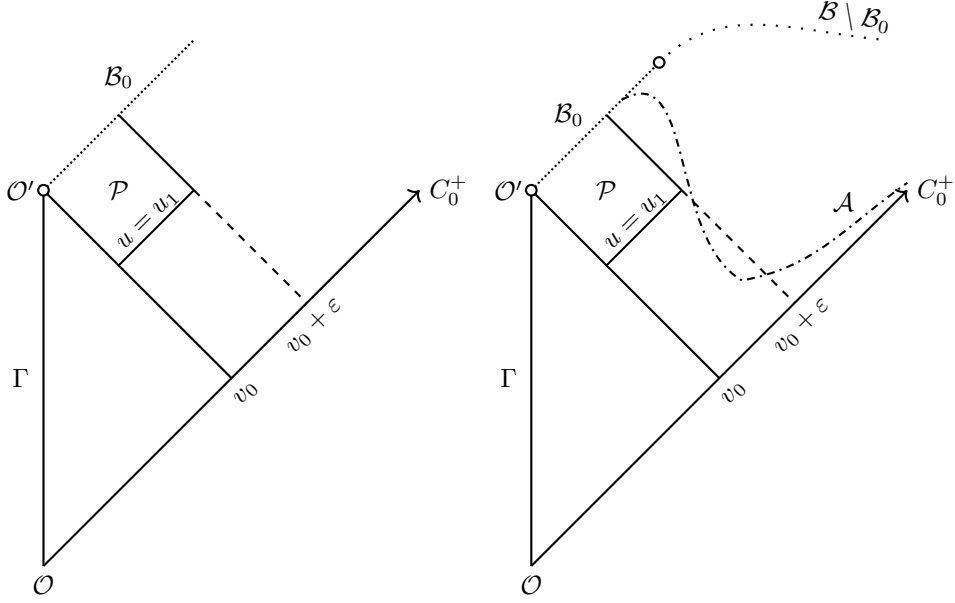
\begin{figure}[htbp]
\begin{tikzpicture}[
    dot/.style = {draw, fill = white, circle, inner sep = 0pt, minimum size = 4pt}]
\begin{scope}[thick]
\draw[thick] (0,0) node[anchor=north]{$\mathcal{O}$} --  (0,5);
\node[] at (-0.3,2.5) {$\Gamma$};
\node[] at (-0.3,5.0) {$\mathcal{O}'$};
\draw[->] (0,0) -- (5,5) node[anchor = west]{$C_0^+$};
\draw[thick, densely dotted] (0,5) to (2,7);
\node[] at (1,6.5) {$\mathcal{B}_0$};
\draw[thick] (0,5) --  (2.5,2.5);
\node[rotate = 45] at (2.7,2.3){$v_0$};
\draw[thick] (1,4) --  (2,5);
\draw[thick] (2,5) --  (1,6);
\node[] at (1,5){$\mathcal{P}$};
\draw[fill=white] (0,5) circle (2pt);
\node[rotate=45] at (1.4,4.6){$u = u_1$};
\draw[dashed] (2,5) -- (3.5,3.5);
\node[rotate = 45] at (3.6,3.2){$v_0 + \varepsilon$};
\end{scope}
\end{tikzpicture}
\begin{tikzpicture}[
    dot/.style = {draw, fill = white, circle, inner sep = 0pt, minimum size = 4pt}]
\begin{scope}[thick]
\draw[thick] (0,0) node[anchor=north]{$\mathcal{O}$} --  (0,5);
\node[] at (-0.3,2.5) {$\Gamma$};
\node[] at (-0.3,5.0) {$\mathcal{O}'$};
\draw[->] (0,0) -- (5,5) node[anchor = west]{$C_0^+$};
\draw[thick, densely dotted] (0,5) to (1.7,6.7);
\draw[thick, loosely dotted] (1.7,6.7) to [out=45,in=175] (4.7,7.0);
\draw[fill=white] (1.7,6.7) circle (2pt);
\node[] at (0.5,6) {$\mathcal{B}_0$};
\node[rotate=-10] at (4.3,7.3) {$\mathcal{B}\setminus\mathcal{B}_0$};
\draw[thick, dashdotted] (1.2,6.2) to [out=30,in=150]  (2.8,3.8) [out=10] to [in=215] (5,5.1);
\node[] at (4.15,4.85){$\mathcal{A}$};
\draw[thick] (0,5) --  (2.5,2.5);
\node[rotate = 45] at (2.7,2.3){$v_0$};
\draw[thick] (1,4) --  (2,5);
\draw[thick] (2,5) --  (1,6);
\node[] at (1,5){$\mathcal{P}$};
\node[rotate=45] at (1.4,4.6){$u = u_1$};
\draw[dashed] (2,5) -- (3.5,3.5);
\node[rotate=45] at (3.6,3.2){$v_0 + \varepsilon$};
\draw[fill=white] (0,5) circle (2pt);
\end{scope}
\end{tikzpicture}
\caption{The existence of the coordinate patch $\mathcal{P}$ in both cases.}
\label{FITFig1}
\end{figure}

\newpage

Similar to \eqref{ReducedMassRatio}, for $(u,v) \in \mathcal{P}$, we first define $\delta(u,v)$ and $\eta^*(u,v)$ as
\begin{equation}\label{deltauv}
    \delta(u,v) := \frac{r(u,v) - r(u,v_0)}{r(u,v_0)},
\end{equation}
\begin{equation}\label{eta*uv}
    \eta^*(u,v) := \frac{2\ml m(u,v)-m(u,v_0) \mr}{r(u,v)} - \frac{2}{r(u,v)} \int^v_{v_0} \frac{Q^2(\dv r)}{2r^2} (u,v') \mathrm{d}v'.
\end{equation}
By Theorem \ref{Trapped}, with $u_1$ taking the role of $\overline{u}$, considering $v_3$ as $v_0 + \varepsilon$, regarding $v_1$ as $v_0$, and treating $(u_0,v_2)$ as $(u,v) \in \mathcal{P}$, with $\delta(u,v) < 1$ we then have
\begin{equation}\label{FIT34}
\eta^*(u,v) \leq C_{24}(\chi_1) \frac{\delta(u,v)}{1+\delta(u,v)}\log\ml \frac{1}{\delta(u,v)}\mr + C_{23}(\chi_1)\frac{\delta(u,v) r(u,v_0)}{1-\sup_{{v \in [v_0,v_0+\varepsilon]}}\mu(u,v)}
\end{equation}
for all $(u,v) \in \widehat{\mathcal{P}}$ with $\widehat{\mathcal{P}}$ being a subset of $\mathcal{P}$. And via \eqref{TSF67}, $\widehat{\mathcal{P}} \subset \mathcal{P}$ is defined as \begin{equation}\label{FIT223}
\widehat{\mathcal{P}} := \left\{(u,v) \in \mathcal{P}: v - u \leq \frac{1}{C_1^2(u_1,v_0+\varepsilon;\chi_1)r(u_1,v_0+\varepsilon)} \right\}.
\end{equation}
Here we choose the parameter $\chi_1 \in \left(0,\frac{1}{3}\right]$ and note that the constants $C_{24}$ and $C_{23}$ are the only terms depending on $\chi_1$. For simplicity, we can pick $\chi_1 = \frac{1}{3}$ and denote 
\begin{equation}\label{FIT228}
C_{31} := C_1^2\ml u_1,v_0+\varepsilon;\frac{1}{3}\mr \cdot r(u_1,v_0+\varepsilon).
\end{equation}
As we have seen in previous sections, obtaining the correct sign for the function $\mu-Q^2/r^2$ is crucial. This motivates a further restriction on $\widehat{\mathcal{P}}$. Fix a parameter $\xi \in \ml 0,\frac{1}{2}\mr$ and choose $\Tilde{\mathcal{P}}$ as
\begin{equation}\label{FIT225}
\Tilde{\mathcal{P}} := \left\{(u,v) \in \mathcal{P}: v - v_0 \leq \frac{\xi}{2C_{31}}, u - v_0 \leq \frac{\xi}{2C_{31}} \right\}.
\end{equation}
Recall that by Proposition \ref{PropEstimates} via \eqref{Estimates17} we have
\begin{equation}\label{FIT224}
\frac{Q^2}{r^2}(u,v) \leq C_1^2 \ml u_1,v_0+\varepsilon; \frac{1}{3}\mr \cdot r(u_1,v_0+\varepsilon) \cdot (v - u) \cdot \mu(u,v).
\end{equation}
Consequently, for all $(u,v) \in \Tilde{\mathcal{P}}$ we obtain
\begin{equation}\label{FIT226}
\frac{Q^2}{r^2}(u,v) \leq \xi \mu(u,v) \leq \xi.
\end{equation}
In particular, we have
\begin{equation}\label{FIT227}
\ml \mu - \frac{Q^2}{r^2}\mr(u,v) \geq (1-\xi) \mu(u,v) \geq \frac{\mu(u,v)}{2} \geq 0.
\end{equation}
Motivated from \eqref{FIT225}, we also define $\tilde{\varepsilon} := \frac{\xi}{2C_{31}}$ representing the maximal width of $\tilde{\mathcal{P}}$. 

Within $\Tilde{\mathcal{P}}$ we then find a further refinement $\mathcal{P}'$. And the following lemma holds.
\begin{lemma}\label{FITEstLemma8} Along $v = v_0$, we define
\begin{equation}\label{FIT78}
    \gamma_s(u) := \int^u_{0} \frac{(- \du r)}{r} \frac{\mu}{1 - \mu} (u',v) \mathrm{d}u'
\end{equation}
and 
\begin{equation}\label{FIT79}
    \gamma_{s,0}(u) := \gamma_s(u,v_0).
\end{equation}
For each $\xi \in \left(0,\frac{1}{2}\right)$, there exists $u_2 \in [u_1,v_0)$ and in the corresponding refinement 
\begin{equation}\label{P'}
\mathcal{P}' = [u_2,v_0) \times [v_0,v_0+\tilde{\varepsilon}]
\end{equation}
of $\Tilde{\mathcal{P}}$, for any $u \in [u_2,v_0)$, we have
\begin{equation}\label{FIT77}
    \gamma_0(u) \leq \gamma_{s,0}(u) \leq (1+2\xi) \gamma_0(u).
\end{equation}
\end{lemma}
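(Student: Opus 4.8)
\textbf{Proof proposal for Lemma~\ref{FITEstLemma8}.} The plan is to treat the two inequalities separately: the left inequality $\gamma_0(u)\le\gamma_{s,0}(u)$ is an immediate pointwise comparison, while the right inequality $\gamma_{s,0}(u)\le(1+2\xi)\gamma_0(u)$ is obtained by splitting the defining integral at a point $u_*$ chosen close to $v_0$, estimating the "tail'' by the sharp bound on $Q^2/r^2$ from Proposition~\ref{PropEstimates}, and absorbing the fixed "head'' term into $\xi\gamma_0(u)$ using that $\gamma_0(u)\to+\infty$ as $u\to v_0^-$ (the standing hypothesis of Theorem~\ref{FITTheorem}). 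Along $v=v_0$ and for $u'<v_0$ we are in the regular region $\mathcal{R}$, so $1-\mu>0$, $-\du r>0$, $r>0$, and since $Q^2/r^2\ge 0$ we have $\mu-Q^2/r^2\le\mu$; hence the integrand defining $\gamma_{s,0}$ dominates that of $\gamma_0$ at every $u'\in[0,u]$, and integration gives $\gamma_0(u)\le\gamma_{s,0}(u)$. This disposes of the first half.

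For the second half, first fix $u_*\in[u_1,v_0)$ close enough to $v_0$ so that, by \eqref{FIT224} (equivalently \eqref{Estimates17}) together with \eqref{FIT226}, one has $\frac{Q^2/r^2}{\mu-Q^2/r^2}(u',v_0)\le\xi$ for all $u'\in[u_*,v_0)$; this is possible because the bound $\frac{Q^2}{r^2}(u',v_0)\le C_{31}(v_0-u')\mu(u',v_0)$ from \eqref{FIT228}--\eqref{FIT224} forces the left side to zero as $u'\to v_0^-$. Writing
\[
\gamma_{s,0}(u)-\gamma_0(u)
=\int_0^{u}\frac{-\du r}{r}\,\frac{Q^2/r^2}{1-\mu}(u',v_0)\,\D u'
=F(u_*)+\int_{u_*}^{u}\frac{-\du r}{r}\,\frac{Q^2/r^2}{1-\mu}(u',v_0)\,\D u',
\]
where $F(u_*):=\gamma_{s,0}(u_*)-\gamma_0(u_*)<+\infty$ (both quantities are finite for $u_*<v_0$ since $C_0^-\setminus\{\mathcal{O}'\}\subset\mathcal{R}$), the second integral is bounded by $\xi\bigl(\gamma_0(u)-\gamma_0(u_*)\bigr)\le\xi\gamma_0(u)$ for $u\ge u_*$. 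Thus $\gamma_{s,0}(u)-\gamma_0(u)\le F(u_*)+\xi\gamma_0(u)$.

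It remains to absorb the head term. Since $\gamma_0(u)\to+\infty$ as $u\to v_0^-$, there is $u_2\in[u_*,v_0)$ with $\gamma_0(u)\ge F(u_*)/\xi$, hence $F(u_*)\le\xi\gamma_0(u)$, for all $u\in[u_2,v_0)$; combining this with the previous estimate yields $\gamma_{s,0}(u)-\gamma_0(u)\le 2\xi\gamma_0(u)$, i.e. $\gamma_{s,0}(u)\le(1+2\xi)\gamma_0(u)$ on $[u_2,v_0)$. Finally, because $u_2\ge u_1$ and the $v$-width of $\mathcal{P}'$ in \eqref{P'} equals $\tilde\varepsilon=\xi/(2C_{31})$, the patch $\mathcal{P}'=[u_2,v_0)\times[v_0,v_0+\tilde\varepsilon]$ is contained in $\tilde{\mathcal{P}}$, so \eqref{FIT227} and the stated comparison both hold there. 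The one genuinely delicate point is the order of quantifiers in the absorption step: $u_*$ must be fixed \emph{first} from the $Q^2/r^2$ estimate, and only afterwards may $u_2$ be chosen from the divergence of $\gamma_0$ — the argument is nonlinear in the sense that it would fail if $\gamma_0$ stayed bounded as $u\to v_0^-$.
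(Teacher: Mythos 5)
Your approach matches the paper's in essence: both decompose $\gamma_{s,0}(u)-\gamma_0(u)$ into a bounded ``head'' over $[0,u_*]$ and a ``tail'' over $[u_*,u]$, bound the tail via the pointwise control of $Q^2/r^2$ relative to $\mu$ from \eqref{FIT224}, and absorb the head using the standing hypothesis $\gamma_0(u)\to+\infty$. Your pointwise inequality $\tfrac{Q^2/r^2}{\mu-Q^2/r^2}\le\xi$ on $[u_*,v_0)$ is a clean repackaging of the paper's ratio comparison \eqref{FIT80}, and splitting the loss $2\xi$ evenly between tail and head is arguably tidier than the paper's bookkeeping through the constant $\tfrac{\xi(1-2\xi)}{1-\xi}$.

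The one genuine gap is the step $\gamma_0(u)-\gamma_0(u_*)\le\gamma_0(u)$, which implicitly requires $\gamma_0(u_*)\ge 0$. This is not automatic: for $u'<u_*$ the sign of $\mu-Q^2/r^2$ along $v=v_0$ is not controlled, so $\gamma_0(u_*)$ could a priori be negative, and the inequality as written would then fail. The paper preempts exactly this by choosing its split point $u_2'$ so that $0<\gamma_0(u_2')\le\gamma_{s,0}(u_2')<\infty$, which is possible precisely because $\gamma_0$ diverges. You should impose the same positivity constraint on $u_*$ — it is compatible with the $Q^2/r^2$ constraint since both merely ask that $u_*$ be sufficiently close to $v_0$ — or, alternatively, fold the extra constant $\max\{0,-\xi\gamma_0(u_*)\}$ into $F(u_*)$ before the divergence-absorption step. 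Your closing remark about the order of quantifiers (fix $u_*$ first, then $u_2$, and the argument fails if $\gamma_0$ stays bounded) is exactly the right caution and mirrors the structure of the paper's argument.
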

\remark{ The $s$ in the subscript of $\gamma_{s,0}(u)$ represents the corresponding $\gamma$ defined for the uncharged scalar field system. One can compare this to our definition of $\gamma$ in \eqref{FIT3}, for which the term $Q^2/r^2$ is dropped in \eqref{FIT78}.}

\begin{proof}
To prove \eqref{FIT77}, we first note that for any choice of $u_2 \in [u_1,v_0)$, inequality $\gamma_0 (u) \leq \gamma_{s,0}(u)$ follows immediately from definitions in \eqref{FIT3} and \eqref{FIT78}. For the upper bound of $\gamma_{s,0}(u)$, under the assumption of Theorem \ref{FITTheorem}, we have $\gamma_0(u) \rightarrow \infty$ as $u \rightarrow v_0^-$. This signals the existence of $u_2' \in [u_1,v_0)$ satisfying $0 < \gamma_0(u_2') \leq \gamma_{s,0}(u_2') < \infty$. Here the finiteness of $\gamma_{s,0}(u_2')$ comes from the fact that the only possible singularity for the integrand of $\gamma_{s,0}$ occurs at $\mathcal{O}'$. On the other hand, by \eqref{FIT227}, we note that
\begin{equation}\label{FIT80}
\int_{u_2'}^u \frac{(- \du r)}{r} \frac{\mu - Q^2/r^2}{1 - \mu} (u',v_0) \mathrm{d}u' \leq \int_{u_2'}^u \frac{(- \du r)}{r} \frac{\mu}{1 - \mu} (u',v_0) \mathrm{d}u' \leq \frac{1}{1-\xi} \int_{u_2'}^u \frac{(- \du r)}{r} \frac{\mu - Q^2/r^2}{1 - \mu} (u',v_0) \mathrm{d}u'.
\end{equation}
Consequently, as $u \rightarrow v_0^-$, since 
\begin{equation}\label{FIT81}
\int_{u_2'}^u \frac{(- \du r)}{r} \frac{\mu - Q^2/r^2}{1 - \mu} (u',v_0) \mathrm{d}u' \rightarrow \infty, 
\end{equation}
then we have
\begin{equation}
\int^u_{{u_2}'}\frac{(-\du r)}{r}\frac{\mu}{1-\mu}(u',v_0) \D u' \rightarrow \infty.
\end{equation}
We then decompose $\gamma_{s,0}(u)$ as
\begin{equation}\label{FIT100}
\gamma_{s,0}(u) = \int^u_{u_2'} \frac{(- \du r)}{r} \frac{\mu}{1 - \mu} (u',v_0) \mathrm{d}u' + \int^{u_2'}_{0} \frac{(- \du r)}{r} \frac{\mu}{1 - \mu} (u',v_0) \mathrm{d}u'.
\end{equation}
Observe that, for a fixed $u_2'$, the second integral is finite for a fixed $u_2'$. By \eqref{FIT80}, we further derive
\begin{equation}\label{FIT100A}
\gamma_{s,0}(u) \leq \frac{1}{1-\xi}\int^u_{u_2'} \frac{(- \du r)}{r} \frac{\mu - Q^2/r^2}{1 - \mu} (u',v_0) \mathrm{d}u' + \int^{u_2'}_{0} \frac{(- \du r)}{r} \frac{\mu}{1 - \mu} (u',v_0) \mathrm{d}u'.
\end{equation}
Since $\xi \in \ml 0,\frac{1}{2} \mr$, from \eqref{FIT81}, if we pick $u_2 \in [u_2',v_0)$ sufficiently close to $v_0$ such that
\begin{equation}
\frac{\xi(1-2\xi)}{1-\xi} \int^{u_2}_{u_2'} \frac{(- \du r)}{r} \frac{\mu - Q^2/r^2}{1 - \mu} (u',v_0) \mathrm{d}u' \geq \int^{u_2'}_{0} \frac{(- \du r)}{r} \frac{\mu}{1 - \mu} (u',v_0) \mathrm{d}u',
\end{equation}
then from \eqref{FIT100A}, for all $u \in [u_2,v_0)$, we have
\begin{equation}
\begin{aligned}
\gamma_{s,0}(u) &\leq \frac{1}{1-\xi}\int^u_{u_2'} \frac{(- \du r)}{r} \frac{\mu - Q^2/r^2}{1 - \mu} (u',v_0) \mathrm{d}u' + \int^{u_2'}_{0} \frac{(- \du r)}{r} \frac{\mu}{1 - \mu} (u',v_0) \mathrm{d}u' \\
&\leq \frac{1}{1-\xi}\int^u_{u_2'} \frac{(- \du r)}{r} \frac{\mu - Q^2/r^2}{1 - \mu} (u',v_0) \mathrm{d}u' + \frac{\xi(1-2\xi)}{1-\xi}\int^{u_2}_{u_2'} \frac{(- \du r)}{r} \frac{\mu}{1 - \mu} (u',v_0) \mathrm{d}u' \\
&\leq \frac{1}{1-\xi}\int^u_{u_2'} \frac{(- \du r)}{r} \frac{\mu - Q^2/r^2}{1 - \mu} (u',v_0) \mathrm{d}u' + \frac{\xi(1-2\xi)}{1-\xi}\int^{u}_{u_2'} \frac{(- \du r)}{r} \frac{\mu}{1 - \mu} (u',v_0) \mathrm{d}u' \\
&= (1+2\xi) \int^u_{u_2'} \frac{(- \du r)}{r} \frac{\mu - Q^2/r^2}{1 - \mu} (u',v_0) \mathrm{d}u' \\
&\leq (1+2\xi) \gamma_0(u),
\end{aligned}
\end{equation}
where the last inequality follows from $\gamma_0(u_2') > 0$.
\end{proof}

We continue the proof of Theorem \ref{FITTheorem}. Recalling the definition of $I$ in \eqref{FIT14}, since $I$ is complex-valued, we can decompose it into
\begin{equation}\label{FIT15}
        I(u) = X(u) + \ii Y(u)
\end{equation} 
with $X(u) = \re[I(u)]$ and $Y(u) = \im[I(u)]$. Let us further define the following:
\begin{equation}
\begin{aligned}
    X_- := \liminf_{u \rightarrow v_0^-} X(u), \qquad & \qquad X_+ := \limsup_{u \rightarrow v_0^-} X(u), \\ 
    Y_- := \liminf_{u \rightarrow v_0^-} Y(u), \qquad & \qquad Y_+ := \limsup_{u \rightarrow v_0^-} Y(u). \\ 
\end{aligned}
\end{equation}
\textit{A priori}, as the asymptotic behavior of $I$ is unknown, similar to \cite{christ4} for the uncharged case (dealing with only $X$), here we will have to consider all possible behaviors for $X$ and $Y$. For each separate $X$ or $Y$, denoted by $Z$, we have the following cases
\begin{enumerate}[(1)]
    \item $-\infty < Z_- = Z_+ < \infty$ (limit exists and is finite),
    \item $-\infty = Z_- = Z_+$ (limit exists and is equals to $- \infty$),
    \item $Z_- = Z_+ = \infty$ (limit exists and is equals to $+ \infty$),
    \item $-\infty < Z_- < Z_+ < \infty$ (limit does not exist, but the sequence is bounded),
    \item $-\infty < Z_- < Z_+ = \infty$ (limit does not exist),
    \item $-\infty = Z_- < Z_+ < \infty$ (limit does not exist),  
    \item $-\infty = Z_- < Z_+ = \infty$ (limit does not exist).
\end{enumerate}
For brevity, we denote case $(a,b)$ to refer to the case where $a$ corresponds to one of the seven scenarios as listed above for $X$, while $b$ corresponds to that for $Y$. For instance, case $(1,6)$ corresponds to the scenario, whereby the limit of $X(u)$ exists as $u \rightarrow v_0^-$, while the limit for $Y(u)$ does not exist, with $Y_- = -\infty$ and $Y_+ < \infty$.

Analogous to \cite{christ4} with 7 cases there, in this paper we characterize our $49$ cases into the following dichotomy
\begin{itemize}
\item[(I)] $I(u_n)$ is bounded for all possible sequences $\{u_n\}_n$ approaching to $v_0^-$. These correspond to cases $(1,1), (1,4), (4,1), (4,4)$.
\item[(II)] There exists a sequence $\{u_n\}_n$  such that $|I(u_n)| \rightarrow \infty$ as $u_n \rightarrow v_0$. These correspond to the remaining $45$ cases.
\end{itemize}
For \textit{Scenario (I)}, we proceed as follows. For case $(1,1)$, since the limit for $X$ and $Y$ both exist, we deduce that the limit for the complex-valued $I$ also exists. By the hypothesis \eqref{FITTheorem1}, we can set
\begin{equation}\label{FIT21}
    h := \mlm \ml \frac{r\dv \phi}{\dv r}\mr_0(0) - l\mrm \neq 0
\end{equation}
with
\begin{equation}\label{FIT20}
    l = \lim_{u \rightarrow v_0^-} I(u).
\end{equation}      
For $u_*$ sufficiently close to $v_0$ and for all $u \in [u_*,v_0)$, we then have
\begin{equation}\label{FIT22}
    |I(u)-l| \leq \frac{2h}{3}.
\end{equation}
Hence, for each $u \in [u_*,v_0)$, by applying \eqref{FIT13}, we have
\begin{equation}\label{FIT23}
\begin{aligned}
\ml \frac{r |\dv \phi|}{\dv r} \mr_0(u) 
=& \mlm \ml\frac{r \dv \phi}{\dv r} \mr_0(0) - l + l - I(u) \mrm e^{\gamma_0(u)}\\ 
\geq& \mlm \mlm \ml\frac{r \dv \phi}{\dv r} \mr_0(0) - l\mrm - |l - I(u)|  \mrm e^{\gamma_0(u)} \geq \frac{h}{3}e^{\gamma_0(u)}.
\end{aligned}
\end{equation}
For cases $(1,4), (4,1), (4,4)$, we denote
\begin{equation}\label{FIT24}
    h_X := X_+ - X_- \quad \text{ and } \quad h_Y := Y_+ - Y_-.
\end{equation}
By the definitions of $\limsup$ and $\liminf$, we must have the following
\begin{equation}\label{FIT25}
\begin{aligned}
&\max\left\{\mlm \re\ml \frac{r \dv \phi}{\dv r}\mr_0(0) - X_- \mrm, \mlm \re\ml \frac{r \dv \phi}{\dv r}\mr_0(0) - X_+ \mrm \right\}  \geq \frac{h_X}{2}, \\
&\max\left\{\mlm \im\ml \frac{r \dv \phi}{\dv r}\mr_0(0) - Y_- \mrm, \mlm \im\ml \frac{r \dv \phi}{\dv r}\mr_0(0) - Y_+ \mrm \right\} \geq \frac{h_Y}{2}.\\
\end{aligned}
\end{equation}
For each case, we denote $X_*$ and $Y_*$ to be the quantities ($X_* = X_+$ or $X_-$ and $Y_* = Y_+$ or $Y_-$) that the maximums in \eqref{FIT25} are attained. With those, we can find an increasing sequence $\{u_n\}$ with $u_n \rightarrow v_0$, and for $u_n$ sufficiently close to $v_0$ we have
\begin{equation}\label{FIT27}
\begin{aligned}
&\; \mlm \ml \frac{r \dv \phi}{\dv r}\mr_0(0) - I(u_n)\mrm \\
=&\; \mlm \re \ml \frac{r \dv \phi}{\dv r}\mr_0(0) - X_* + X_* - X(u_n) + \ii \ml \im \ml \frac{r \dv \phi}{\dv r}\mr_0(0) - Y_* + Y_* - Y(u_n)\mr \mrm \\
=&\; \sqrt{\mlm \re \ml \frac{r \dv \phi}{\dv r}\mr_0(0) - X_* + X_* - X(u_n)\mrm^2 + \mlm \im \ml \frac{r \dv \phi}{\dv r}\mr_0(0) - Y_* + Y_* - Y(u_n) \mrm^2} \\
\geq &\; \sqrt{\mlm \mlm \re \ml \frac{r \dv \phi}{\dv r}\mr_0(0) - X_*\mrm -  \mlm X_* - X(u_n)\mrm \mrm^2 + \mlm \mlm \im \ml \frac{r \dv \phi}{\dv r}\mr_0(0) - Y_* \mrm - \mlm Y_* - Y(u_n) \mrm \mrm^2} \\
\geq &\; \max \left\{ \mlm \mlm \re \ml \frac{r \dv \phi}{\dv r}\mr_0(0) - X_*\mrm -  \mlm X_* - X(u_n)\mrm \mrm, \mlm \mlm \im \ml \frac{r \dv \phi}{\dv r}\mr_0(0) - Y_* \mrm - \mlm Y_* - Y(u_n) \mrm \mrm \right\} \\
\geq &\; \max \left\{ \frac{h_X}{3}, \frac{h_Y}{3}\right\}= \frac{h}{3} > 0.
\end{aligned}
\end{equation}
Note that for the last step in \eqref{FIT27}, we use the fact that  $h:= \max\{h_X,h_Y\} > 0$. This is true since both $h_X$ and $h_Y$ cannot be simultaneously zero (otherwise, we fall back to case $(1,1)$). Thus, in the above scenarios, we can find an increasing sequence $\{u_n\}$ approaching $v_0$ such that
\begin{equation}\label{FIT28}
\ml \frac{r |\dv \phi|}{\dv r} \mr_0(u_n)  \geq \frac{h}{3}e^{\gamma_0(u_n)}.
\end{equation}
In addition, since $I$ is bounded in these cases, by \eqref{FIT13}, we have that $\ml \frac{r|\dv \phi|}{\dv r}\mr_0(u) e^{-\gamma_0(u)}$ is bounded. Henceforth, we can define the following quantity
\begin{equation}\label{FIT207}
b := \sup_{u\in [0,v_0)} \ml \frac{r|\dv \phi|}{\dv r}\mr_0(u)e^{-\gamma_0(u)} < +\infty.
\end{equation}

Given our sequence $u_n$ as mentioned above, for each $n$, we will pick a corresponding value of $v$ (denoted by $v(u_n)$) such that $(u_n, v(u_n))$ is in the refined region and a few geometric inequalities hold. We summarize these into a couple of lemmas. 

First, we have the following lemma.
\begin{lemma}\label{FITEstLemma4}
Consider a fixed $u \in [u_2, v_0)$ and recall region $\mathcal{P}'$ in \eqref{P'}. For $(u,v_1), (u,v_2) \in \mathcal{P}'$ with $v_1 < v_2$, we have
\begin{equation}\label{FIT45}
    \delta(u,v_1) \leq \delta(u,v_2).
\end{equation}
\end{lemma}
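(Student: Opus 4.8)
\textbf{Proof proposal for Lemma \ref{FITEstLemma4}.}

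The statement claims that along any fixed outgoing null ray $u = \text{const}$ inside the refined region $\mathcal{P}'$, the function $\delta(u,\cdot)$ is monotonically non-decreasing in $v$. Recall that $\delta(u,v) = \frac{r(u,v) - r(u,v_0)}{r(u,v_0)} = \frac{r(u,v)}{r(u,v_0)} - 1$. Since for a fixed $u$ the denominator $r(u,v_0)$ is a positive constant, the claim reduces entirely to showing that $r(u,v)$ is non-decreasing in $v$ on the relevant range $v \in [v_0, v_0 + \tilde{\varepsilon}]$. That is, it suffices to verify $\partial_v r(u,v) \geq 0$ for all $(u,v) \in \mathcal{P}'$.

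The plan is to invoke the fact, established in the proof of Theorem \ref{FITTheorem} above, that $\mathcal{P} \subset \mathcal{R}$ (this came from Proposition \ref{SetupProp3} once we located the coordinate patch $\mathcal{P}$ avoiding the apparent horizon $\mathcal{A}$). Since $\mathcal{P}' \subset \tilde{\mathcal{P}} \subset \mathcal{P} \subset \mathcal{R}$, and the regular region $\mathcal{R}$ is by definition the set of points where $\partial_v r > 0$, we immediately obtain $\partial_v r(u,v) > 0$ for every $(u,v) \in \mathcal{P}'$. Integrating this in $v$ from $v_1$ to $v_2$ along the fixed ray $u$ gives $r(u,v_2) - r(u,v_1) = \int_{v_1}^{v_2} (\partial_v r)(u,v')\, \mathrm{d}v' \geq 0$, hence $r(u,v_1) \leq r(u,v_2)$. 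Dividing through by the positive constant $r(u,v_0)$ and subtracting $1$ yields $\delta(u,v_1) \leq \delta(u,v_2)$, which is exactly \eqref{FIT45}.

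There is essentially no obstacle here: the lemma is a direct consequence of the geometry of the regular region, and the only thing to be careful about is the bookkeeping that $\mathcal{P}'$ was constructed as a subset of $\mathcal{P}$ (via the chain of refinements $\mathcal{P} \supset \tilde{\mathcal{P}} \supset \mathcal{P}'$), so that the inclusion $\mathcal{P}' \subset \mathcal{R}$ is legitimate. One could alternatively phrase the argument using $\partial_v(\Omega^{-2}\partial_v r) \leq 0$ from \eqref{Setup15} together with the sign of $\partial_v r$ at $v = v_0$, but the direct appeal to $\mathcal{P}' \subset \mathcal{R}$ is cleaner and self-contained given what has already been set up. I would keep the proof to two or three lines.
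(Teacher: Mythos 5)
Your proof is correct and follows the same reasoning as the paper's, which simply notes that $\partial_v r \geq 0$ holds in $\mathcal{P}' \subset \mathcal{R}$ and hence $r(u,\cdot)$ is non-decreasing in $v$, from which the monotonicity of $\delta(u,\cdot)$ is immediate. You have merely spelled out the same one-line observation in greater detail.
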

\begin{proof}
This follows from the fact that $\dv r \geq 0$ in $\mathcal{P}' \subset \mathcal{R}$.
\end{proof}

Next, by leveraging on Lemma \ref{FITEstLemma1}, we have the following improvement
\begin{lemma}\label{FITEstLemma3}
Given any sequence $u_n \in [u_2,v_0)$, for each $n$, suppose that there exists a corresponding value $v(u_n) \in [v_0,v_0 + \te]$ such that
\begin{equation}\label{FIT35a}
\delta\ml u,v(u_n)\mr \log \ml \frac{1}{\delta(u,v(u_n))}\mr \leq C_{33}e^{-2\gamma_0(u)}
\end{equation} with $C_{33}$ given in \eqref{FIT44} and $\delta(u,v(u_n)) \leq 1$ for all $u \in [u_2,u_n]$.
Then, for all $v \in [v_0,v(u_n)]$, we have
\begin{equation}\label{FIT36}
\frac{1}{1-\mu(u,v)} \leq \frac{3}{2}\frac{1}{1-\mu_0(u)} \leq \frac{3}{2}\frac{1}{1-\mu_0(0)}e^{\gamma_0(u)}.
\end{equation}
\end{lemma}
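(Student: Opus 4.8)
The plan is to estimate $\frac{1}{1-\mu(u,v)}$ along an incoming null curve by starting from the comparison value $\frac{1}{1-\mu(u,v_0)}$ at $v=v_0$ and controlling how much this quantity can grow as $v$ increases from $v_0$ to $v(u_n)$. First I would apply Lemma \ref{FITEstLemma1} with $\overline u$ playing the role of $0$ to relate $\frac{1}{1-\mu(u,v_0)}$ back to $\frac{1}{1-\mu(0,v_0)}=\frac{1}{1-\mu_0(0)}$: this gives exactly the factor $e^{\gamma_0(u)}$ in the second inequality of \eqref{FIT36}, since $\gamma(u,v_0)=\gamma_0(u)$ by definition \eqref{FIT4}. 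So the bulk of the work is the first inequality: showing $\frac{1}{1-\mu(u,v)} \le \frac{3}{2}\frac{1}{1-\mu_0(u)}$ for $v\in[v_0,v(u_n)]$.

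For that first inequality, I would compute $\dv\left(\frac{1}{1-\mu(u,v)}\right)$ from \eqref{Setup27}, which yields
$$\dv\left(\frac{1}{1-\mu}\right) = \frac{1}{1-\mu}\left(\frac{4\pi r|\dv\phi|^2}{\dv r} - \frac{\dv r}{r}\frac{\mu - Q^2/r^2}{1-\mu}\right).$$
Since $\Tilde{\mathcal{P}}$ is arranged so that $\mu - Q^2/r^2 \ge 0$ (this is \eqref{FIT227}), the second term is non-positive, so $\dv\log\left(\frac{1}{1-\mu}\right) \le \frac{4\pi r|\dv\phi|^2}{\dv r}$. Integrating from $v_0$ to $v$ gives
$$\frac{1}{1-\mu(u,v)} \le \frac{1}{1-\mu(u,v_0)}\exp\left(\int_{v_0}^v \frac{4\pi r|\dv\phi|^2}{\dv r}(u,v')\,\D v'\right).$$
The task then reduces to bounding the exponent by $\log(3/2)$. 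Using $\frac{4\pi r|\dv\phi|^2}{\dv r} \le \frac{2\dv m}{r(1-\mu)}$ from \eqref{Setup25}, and then bounding $\frac{1}{1-\mu}$ crudely (say by a bootstrap assumption $\frac{1}{1-\mu} \le 2 \cdot \frac{3}{2}\frac{1}{1-\mu_0(0)}e^{\gamma_0(u)}$ on a maximal subinterval, or simply using $\dv m \le \dv(m)$ directly together with the reduced-mass-ratio structure), the integral becomes controlled by $\eta^*(u,v)$ up to the factor $e^{\gamma_0(u)}$ and $\frac{1}{1-\mu_0(0)}$; crucially $\int_{v_0}^v \frac{\dv r}{r^2} \le \frac{\delta(u,v)}{r(u,v_0)}$ and $m(u,v)-m(u,v_0)$ is controlled by $\eta^*(u,v)r(u,v)$ plus the $Q$-integral. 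Then I would invoke the trapped-surface-formation bound \eqref{FIT34} (valid here since we are assuming $\mathcal{A}$ is empty in the relevant patch, hence no MOTS/trapped surface) to bound $\eta^*(u,v) \lesssim \delta(u,v)\log(1/\delta(u,v)) + \delta(u,v)r(u,v_0)\cdot\frac{1}{1-\sup\mu}$, and use the blueshift control on $\frac{1}{1-\sup_{v\in[v_0,v_0+\varepsilon]}\mu}$ (itself of order $e^{\gamma_0(u)}$, bootstrapped) to see that the whole exponent is bounded by a constant multiple of $\delta(u,v(u_n))\log(1/\delta(u,v(u_n)))e^{2\gamma_0(u)}$, which is $\le C_{33}$ by hypothesis \eqref{FIT35a}. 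Choosing the constant $C_{33}$ in \eqref{FIT44} small enough that this is $\le \log(3/2)$ closes the estimate. Finally, combining with Lemma \ref{FITEstLemma1} gives \eqref{FIT36}.

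The main obstacle I expect is the circularity: bounding $\frac{1}{1-\mu(u,v)}$ requires a bound on $\frac{1}{1-\mu}$ appearing inside $\dv m / (1-\mu)$ (and on $\frac{1}{1-\sup_{v\in[v_0,v_0+\varepsilon]}\mu(u,v)}$ inside \eqref{FIT34}), so this must be set up as a continuity/bootstrap argument in $v$: assume $\frac{1}{1-\mu(u,v)} \le \frac{3}{2}\frac{1}{1-\mu_0(u)}$ on a maximal subinterval $[v_0,v_*)$ of $[v_0,v(u_n)]$, run the integral estimate above to improve the constant strictly (e.g. to $\frac{5}{4}$), and conclude $v_* = v(u_n)$. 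One has to be careful that the constant $C_{33}$ is chosen \emph{a priori} (it appears in \eqref{FIT44}) and is consistent across Lemma \ref{FITEstLemma3}, the trapped-surface criterion, and the blueshift bounds. The geometric smallness of $\tilde\varepsilon$ (the width of $\mathcal{P}'$) and the sign condition $\mu - Q^2/r^2 \ge 0$ on $\Tilde{\mathcal{P}}$ are what make the bootstrap close, so I would verify at the outset that $[v_0,v(u_n)] \subset [v_0,v_0+\tilde\varepsilon]$ for the relevant $u$, which is part of the standing hypotheses on $v(u_n)$.
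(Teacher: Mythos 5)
Your proposal is correct in outline, but takes a genuinely different internal route from the paper. The paper does not integrate the differential inequality $\dv\log\frac{1}{1-\mu}\le\frac{4\pi r|\dv\phi|^2}{\dv r}$; instead it exploits the algebraic identity
\begin{equation*}
\eta^*(u,v) = \mu(u,v) - \mu_0(u)\frac{r(u,v_0)}{r(u,v)} - \frac{2}{r(u,v)}\int^v_{v_0}\frac{Q^2\,\dv r}{2r^2}(u,v')\,\D v'
\end{equation*}
(this is \eqref{FIT39A}, a direct rearrangement of \eqref{eta*uv} and the definition of $\mu$), which gives $\mu(u,v) \le \eta^*(u,v) + \mu_0(u) + (\text{small $Q$-term})$ \emph{linearly}, with no exponential. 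Feeding the trapped-surface bound \eqref{FIT34} and the bootstrap estimate for $\frac{1}{1-\sup\mu}$ into the right side then yields $1-\mu(u,v)\ge\frac{7}{8}(1-\mu_0(u))$ under \eqref{FIT35a}, closing the bootstrap with a clean factor $\frac{8}{7}<\frac{3}{2}$. Your route --- integrating the log-derivative of $\frac{1}{1-\mu}$, pulling $\frac{1}{r(1-\mu)}$ out of the resulting integral, recognizing $\frac{\eta^*\,r}{2}=\int(\dv m-\tfrac{Q^2\dv r}{2r^2})$, and then applying the same \eqref{FIT34} and bootstrap --- does work, and I checked that with the paper's $C_{33}=\frac{1-\mu_0(0)}{8C_{32}}$ the exponent lands below $\tfrac{3}{8}<\log\tfrac{3}{2}$, so the bootstrap does close. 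But your version closes by a thin margin ($e^{3/8}\approx 1.46$ vs.\ $3/2$) and requires an extra layer of estimates (the exponential and the extraction of $\tfrac{1}{r(1-\mu)}$ from the integral) that the algebraic identity sidesteps entirely. You also correctly identify the two structural ingredients the paper uses: the positivity $\mu - Q^2/r^2\ge 0$ on $\tilde{\mathcal{P}}$ to make the second term in $\dv\mu$ the right sign (via \eqref{FIT227}), and Lemma \ref{FITEstLemma1} for the second inequality in \eqref{FIT36}. One small caution: the $\sup$ appearing in \eqref{FIT34} should be read as taken over $v'\in[v_0,v]$ when the criterion is instantiated at $(u,v)$ (that is what Theorem \ref{Trapped} actually gives), so the bootstrap assumption does control it; you flagged this circularity explicitly, which is the right instinct.
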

\begin{proof} 
We prove this lemma via a bootstrapping argument. Consider the following set
\begin{equation}\label{FIT37}
 S_1(u) = \left\{ v'' \in [v_0,v(u)]: \frac{1}{1-\mu(u,v')} \leq \frac{3}{2}\frac{1}{1-\mu_0(u)} \; \text{ for all }v'\in[v_0,v''] \right\}.
\end{equation}
It is clear that $S_1(u)$ is closed. Furthermore, by Lemma \ref{FITEstLemma1}, we have
\begin{equation}\label{FIT38}
    \frac{1}{1-\mu(u,v_0)} = \frac{1}{1-\mu_0(u)} \leq \frac{3}{2}\frac{1}{1-\mu_0(u)}.
\end{equation}
This indicates that $v_0 \in S_1(u)$ and thus $S_1(u)$ is non-empty. It remains to show that $S_1(u)$ is open. To do so, we rewrite $\eta^*$ in \eqref{eta*uv} as
\begin{equation}\label{FIT39A}
\eta^*(u,v) = \mu(u,v) - \mu_0(u)\frac{r(u,v_0)}{r(u,v)} - \frac{2}{r(u,v)}\int^v_{v_0} \frac{Q^2 \dv r}{2r^2}(u,v') \mathrm{d}v'.
\end{equation}
Let $v \in [v_0,v(u)]$. To obtain an estimate for $\mu(u,v)$, by rearranging the terms in \eqref{FIT39A} and by \eqref{FIT34}, \eqref{FIT226}, we derive
\begin{equation}\label{FIT39}
\begin{aligned}
\mu(u,v) \leq& \; C_{24} \frac{\delta(u,v)}{1+\delta(u,v)}\log\ml \frac{1}{\delta(u,v)}\mr + C_{23}\frac{\delta(u,v)r_0(u)}{1-\sup_{{v \in [v_0,v_0+\varepsilon]}}\mu(u,v)} \\
&+ \mu_0(u)\frac{r(u,v_0)}{r(u,v)} + \frac{2\xi}{r(u,v_0)}\int^v_{v_0} (\dv r)(u,v')\D v'\\
\leq &\; C_{24} \delta\ml u,v(u)\mr\log\ml \frac{1}{\delta\ml u,v(u)\mr}\mr + \frac{3 C_{23}}{2}\frac{r_0(0)}{1-\mu_0(0)}e^{\gamma_0(u)}\delta\ml u,v(u)\mr + \mu_0(u) + 2\xi\delta\ml u,v(u)\mr  \\
\leq &\; C_{32}\delta\ml u,v(u)\mr\log \ml \frac{1}{\delta\ml u,v(u)\mr}\mr e^{\gamma_0(u)} + \mu_0(u)
\end{aligned}
\end{equation}
with 
\begin{equation}\label{FIT41}
\begin{aligned}
C_{32} := C_{24} + \frac{3C_{23}}{2}\frac{r_0(0)}{1-\mu_0(0)} + 2\xi.
\end{aligned}
\end{equation}
Here we use $\log\ml \frac{1}{\delta(u,v)}\mr \geq 1$, $v \leq v(u)$, inequality \eqref{FIT226} and Lemma \ref{FITEstLemma4}. We can rewrite \eqref{FIT39} as 
\begin{equation}\label{FIT42}
\begin{aligned}
1-\mu(u,v) &\geq 1 - C_{32}\delta \ml u,v(u_n)\mr \log \ml\frac{1}{\delta \ml u,v(u_n)\mr} \mr e^{\gamma_0(u)} - \mu_0(u).\\
\end{aligned}
\end{equation}
If we further demand that
\begin{equation}\label{FIT43}
\begin{aligned}
\delta(u,v(u_n)) \log \ml \frac{1}{\delta(u,v(u_n))}\mr &\leq C_{33}e^{-2\gamma_0(u)},
\end{aligned}
\end{equation}
with
\begin{equation}\label{FIT44}
C_{33} := \frac{1 - \mu_0(0)}{8 C_{32}},
\end{equation}
we then get $1 - \mu(u,v) \geq \frac{7}{8}(1-\mu_0(u))$, which implies 
\begin{equation}\label{7.71 new}
    \frac{1}{1-\mu(u,v)} \leq \frac{8}{7}\frac{1}{1-\mu_0(u)}.
\end{equation}
With \eqref{7.71 new} we improve the estimate in \eqref{FIT37} and hence show that $S_1(u)$ is open. Thus, the estimate holds for all $v' \in [v_0,v(u_n)]$. Inequality \eqref{FIT36} then follows from \eqref{FIT5}.
\end{proof}

Lemma \ref{FITEstLemma4} describes the evolution of $\delta(u,v)$ along the $u$-direction. In contrast, the following lemma describes the evolution of $\delta(u,v)$ along the $v$-direction.
\begin{lemma}\label{FITEstLemma5}  Given any sequence $u_n \in [u_2,v_0)$, for each $n$, suppose that there exists a corresponding $v(u_n) \in [v_0,v_0 + \te]$, such that, for all $u \in [u_2,u_n]$, both
\begin{equation}\label{FIT35c}
\delta(u,v(u_n)) \leq \min\left\{ \oh,\frac{2(1 - \mu_0(0))e^{-\gamma_0(u)}}{3}, \frac{e^{-\frac{3}{2}(1+2\xi)\gamma_0(u)}}{2 C_{34}} \right\}
\end{equation} and \eqref{FIT35a} hold, with $C_{34}$ defined in \eqref{FIT71}. Then, for any $v \in (v_0,v(u_n)]$ and $u_3,u_4 \in [u_2,u_n]$ with $u_3 < u_4$, we have
\begin{equation}\label{FIT83}
\delta(u_3,v') \leq 4 e^{-\frac{1}{2}\log\ml \frac{r_0(u_3)}{r_0(u_4)}\mr + \frac{3}{2}(1+2\xi)\gamma_{0}(u_4)} \delta(u_4,v').
\end{equation}
\end{lemma}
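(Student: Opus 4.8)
Proof plan for Lemma \ref{FITEstLemma5}.

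The plan is to follow the strategy indicated in the introductory sketch (around equations \eqref{FIT35c'}-\eqref{FIT83'}), tracking the quantity $\td(u,v) := \log\ml \frac{r(u,v)}{r_0(u)}\mr$ along outgoing null curves. First I would record the elementary equivalence $\frac{1}{2}\delta(u,v) \leq \td(u,v) \leq \delta(u,v)$, which holds because $\delta(u,v(u_n)) \leq 1/2$ by hypothesis \eqref{FIT35c} and because $\log(1+x) \in [\frac{x}{2}, x]$ for $x \in [0,1/2]$; this lets me pass freely between $\delta$ and $\td$ at the cost of harmless constants. Next I would compute $\dv\ml \log\ml \frac{-\du r}{r}\mr\mr = \frac{\dv r}{r}\ml \frac{\mu - Q^2/r^2}{1-\mu} - 1\mr$ using \eqref{Setup20}, then bound this above by $\frac{\dv r}{r}\ml \frac{\mu}{1-\mu} - 1 \mr = \frac{\dv r}{r}\ml \frac{1}{1-\mu} - 2 \mr$ (using $Q^2/r^2 \geq 0$), and then by $\frac{\dv r}{r}\ml \frac{3}{2}\frac{1}{1-\mu_0(u)} - 2\mr$ by invoking Lemma \ref{FITEstLemma3}, whose hypotheses are supplied by \eqref{FIT35a} together with \eqref{FIT35c}. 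Integrating from $v_0$ to $v$ gives pointwise control of $\ml -\frac{\du r}{r}\mr(u,v)$ in terms of $\ml -\frac{\du r}{r}\mr_0(u)$ times $e^{\ml \frac{3}{2(1-\mu_0(u))} - 2\mr \td(u,v)}$.

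Then I would convert this into a differential inequality for $\td$ in the $u$-direction. Since $-\du \td = \frac{-\du r}{r}(u,v) - \ml\frac{-\du r}{r}\mr_0(u)$, the above yields $-\du \td \leq \ml e^{\ml \frac{3}{2(1-\mu_0(u))} - 2\mr \td} - 1\mr \ml \frac{-\du r}{r}\mr_0(u)$. Using the elementary inequality $e^x - 1 \leq x + (e-2)x^2$ for $x \leq 1$ (which applies once $\td$ is small enough, guaranteed by \eqref{FIT35c}), this becomes $-\du \td \lesssim \ml \ml \frac{3}{2(1-\mu_0(u))} - 2\mr \td + (e-2)\ml \frac{3}{2(1-\mu_0(u))} - 2\mr^2 \td^2\mr \ml \frac{-\du r}{r}\mr_0$. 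Dividing through by $\td^2$ gives a linear differential inequality for $1/\td$: $\du\ml \frac{1}{\td}\mr \leq \ml \ml \frac{3}{2(1-\mu_0(u))} - 2\mr\frac{1}{\td} + (e-2)\ml \frac{3}{2(1-\mu_0(u))} - 2\mr^2\mr \ml \frac{-\du r}{r}\mr_0$. This is the key reduction: the evolution of $1/\td$ is governed by a Grönwall-type inequality whose integrating factor is controlled by $\gamma_{s,0}$ and by $\log(r_0)$ along $v=v_0$. I would then integrate from $u_3$ to $u_4$, using $\int_{u_3}^{u_4}\ml \frac{-\du r}{r}\mr_0 = \log\ml \frac{r_0(u_3)}{r_0(u_4)}\mr$ and the definition of $\gamma_{s,0}$ (together with Lemma \ref{FITEstLemma8}, which bounds $\gamma_{s,0} \leq (1+2\xi)\gamma_0$), to obtain a bound of the form $\frac{e^{-\frac{3}{2}\gamma_{s,0}(u_4) + \frac{1}{2}\log(r_0(0)/r_0(u_4))}}{\td(u_4,v)} - \frac{e^{-\frac{3}{2}\gamma_{s,0}(u_3)+\frac{1}{2}\log(r_0(0)/r_0(u_3))}}{\td(u_3,v)} \leq \frac{1}{2}\cdot\frac{e^{-\frac{3}{2}\gamma_{s,0}(u_4)+\frac{1}{2}\log(r_0(0)/r_0(u_4))}}{\td(u_4,v)}$, where the forcing term has been absorbed using the smallness constraint \eqref{FIT35c} with the constant $C_{34}$ chosen appropriately in \eqref{FIT71}. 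Rearranging and converting back from $\td$ to $\delta$ via the first step yields \eqref{FIT83}.

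The main obstacle I anticipate is the bookkeeping needed to absorb the quadratic forcing term $\int_{u_3}^{u_4}(e-2)\ml \frac{3}{2(1-\mu_0)} - 2\mr^2 \ml \frac{-\du r}{r}\mr_0\, \D u'$ into the dominant linear part of the Grönwall estimate, which is precisely where the third smallness requirement in \eqref{FIT35c}, namely $\delta(u,v(u_n)) \leq \frac{e^{-\frac{3}{2}(1+2\xi)\gamma_0(u)}}{2C_{34}}$, must be used — one needs to verify that $1/\td(u_4,v)$ is large enough (equivalently $\delta(u_4,v)$ small enough, which follows since $\delta(u_4,v) \leq \delta(u_4,v(u_n))$ by the $v$-monotonicity in Lemma \ref{FITEstLemma4}) that the quadratic contribution is dominated by a fixed fraction, say $\frac{1}{2}$, of the linear term. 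A secondary subtlety is that the factor $\frac{3}{2(1-\mu_0(u))}$ in the exponent needs the bound from Lemma \ref{FITEstLemma3}, so I must check that its hypotheses (which are \eqref{FIT35a} plus part of \eqref{FIT35c}) are indeed implied by the hypotheses of the present lemma — they are, which is why \eqref{FIT35c} lists all three constraints together. Everything else is routine integration and the choice of the constant $C_{34}$ in \eqref{FIT71} to make the inequalities close.
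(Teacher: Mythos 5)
Your proposal is correct and follows essentially the same route as the paper's proof: the same reduction from $\delta$ to $\td$, the same bound on $\dv\log(-\du r/r)$ via Lemma~\ref{FITEstLemma3}, the same Riccati-type inequality for $1/\td$ obtained from $e^x-1\leq x+(e-2)x^2$, and the same Gr\"onwall integration from $u_3$ to $u_4$ with $\gamma_{s,0}$ controlled by $\gamma_0$ through Lemma~\ref{FITEstLemma8}. The only part you leave at a high level --- absorbing the quadratic forcing integral --- is precisely where the paper does the nontrivial bookkeeping (splitting $(\tfrac{3}{2(1-\mu_0)}-2)^2$, an integration-by-parts cancellation, and the final comparison with $\tfrac12\cdot 1/\td(u_4,v)$ via the third constraint in \eqref{FIT35c}), but you correctly identify this as the crux, so the plan is sound.
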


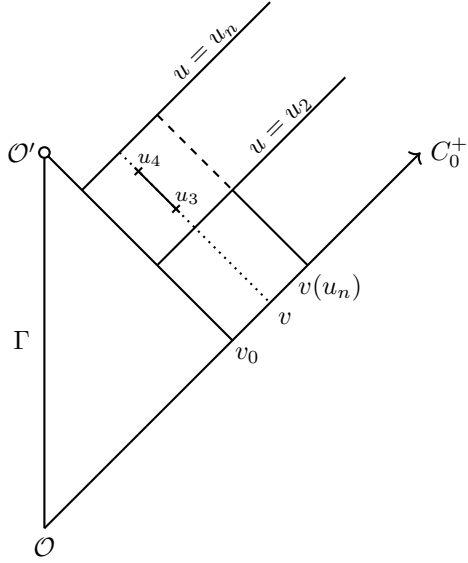
\begin{figure}[htbp]
\begin{tikzpicture}[
    dot/.style = {draw, fill = white, circle, inner sep = 0pt, minimum size = 4pt}]
\begin{scope}[thick]
\draw[thick] (0,0) node[anchor=north]{$\mathcal{O}$} --  (0,5);
\node[] at (-0.3,2.5) {$\Gamma$};
\node[] at (-0.3,5.0) {$\mathcal{O}'$};
\draw[->] (0,0) -- (5,5) node[anchor = west]{$C_0^+$};
\draw[thick] (0,5) --  (2.5,2.5);
\node[] at (2.7,2.3){$v_0$};
\draw[dashed] (1.5,5.5) -- (2.5,4.5);
\draw[thick] (2.5,4.5) -- (3.5,3.5);
\node[] at (3.8,3.2){$v(u_n)$};
\draw[dotted] (1,5) -- (3,3);
\node[] at (3.2,2.8){$v$};
\draw[thick] (0.5,4.5) -- (3,7);
\node[rotate=45] at (2.15,6.35){$u = u_n$};
\draw[thick] (1.5,3.5) -- (4,6);
\node[rotate=45] at (3.15,5.35){$u = u_2$};
\draw[thick] (1.75,4.25) -- (1.25,4.75);
\draw[thick] (1.7,4.2) -- (1.8,4.3);
\draw[thick] (1.7,4.3) -- (1.8,4.2);
\draw[thick] (1.2,4.7) -- (1.3,4.8);
\draw[thick] (1.2,4.8) -- (1.3,4.7);
\node[] at (1.4,4.9){\footnotesize $u_4$};
\node[] at (1.9,4.4){\footnotesize $u_3$};
\draw[fill=white] (0,5) circle (2pt);
\end{scope}
\end{tikzpicture}
\caption{The relative positions as described in Lemma \ref{FITEstLemma5}.}
\label{FITFig2}
\end{figure}

\begin{proof} By Lemma \ref{FITEstLemma4} and \eqref{FIT35c}, we have that $\delta(u,v) \leq \frac{1}{2}$ for all $u \in [u_3,u_4] \subset [u_2,u_n]$ and $v \in (v_0,v(u_n)]$. Consequently, by the inequality $\frac{1}{2} x \leq \log(1+x) \leq x$ for $0 \leq x \leq \frac{1}{2}$, for all $u \in [u_3,u_4]$ and $v \in (v_0,v(u_n)]$, we get 
\begin{equation}\label{FIT47}
    \frac{1}{2}\delta(u,v) \leq \log \ml \frac{r(u,v)}{r_0(u)}\mr \leq \delta(u,v).
\end{equation} 
Furthermore, by \eqref{Setup20}, we also have
\begin{equation}\label{FIT48A}
\dv \ml \frac{\du r}{r}\mr = \frac{\du r \dv r}{r^2} \ml \frac{\mu - Q^2/r^2}{1-\mu} - 1\mr.
\end{equation}
Consider any $(u,v) \in [u_2,u_n] \times (v_0,v(u_n)]$. By \eqref{FIT48A} and \eqref{FIT36} of Lemma \ref{FITEstLemma3}, we obtain
\begin{equation}\label{FIT48}
\begin{aligned}
\dv \ml \log \ml \frac{-\du r}{r}\mr\mr &= \frac{\dv r}{r}\ml \frac{\mu - Q^2/r^2}{1-\mu} - 1\mr \leq \frac{\dv r}{r} \ml \frac{\mu}{1-\mu} - 1\mr \\
&\quad\quad= \frac{\dv r}{r} \ml \frac{1}{1-\mu} - 2\mr \leq \frac{\dv r}{r}\ml \frac{3}{2}\frac{1}{1-\mu_0}-2\mr.
\end{aligned}
\end{equation}
Next, we perform an integration from $v = v_0$ to $v = v$ and derive
\begin{equation}\label{FIT49A}
\log \ml \frac{-\frac{\du r}{r}(u,v)}{\ml -\frac{\du r}{r} \mr_0(u)}\mr \leq \ml \frac{3}{2}\frac{1}{1-\mu_0(u)} - 2\mr \log \ml \frac{r(u,v)}{r_0(u)}\mr,
\end{equation}
which implies
\begin{equation}\label{FIT49}
\begin{aligned}
    \ml - \frac{\du r}{r}\mr(u,v) &\leq \ml -\frac{\du r}{r}\mr_0(u) e^{\ml \frac{3}{2}\frac{1}{1-\mu_0(u)} - 2\mr \log \ml \frac{r(u,v)}{r_0(u)}\mr}.
\end{aligned}
\end{equation}
By our assumption in \eqref{FIT35c}, for $u \in [u_2,u_n]$, we have 
\begin{equation}\label{FIT35d}
\delta(u,v(u_n)) \leq \frac{2(1 - \mu_0(0))}{3}e^{-\gamma_0(u)}.
\end{equation}
Hence, by \eqref{FIT45}, \eqref{FIT38}, \eqref{FIT47}, \eqref{FIT35d},  the exponent in the right hand side of \eqref{FIT49} can be bounded from above by
\begin{equation}\label{FIT51}
\begin{aligned}
\ml \frac{3}{2}\frac{1}{1-\mu_0(u)} - 2\mr \log \ml \frac{r(u,v)}{r_0(u)}\mr &\leq \frac{3}{2(1-\mu_0(u))} \delta(u,v) \leq \frac{3}{2(1-\mu_0(0))}e^{\gamma_0(u)} \delta(u,v(u_n)) \leq 1.
\end{aligned}
\end{equation} 
Combining \eqref{FIT49} and \eqref{FIT51} implies
\begin{equation}\label{FIT52}
\ml - \frac{\du r}{r}\mr(u,v) \leq 3\ml -\frac{\du r}{r}\mr_0(u).
\end{equation}
On the other hand, we can rewrite \eqref{FIT49} as
\begin{equation}\label{FIT53}
\ml \frac{- \du r}{r}\mr (u,v) - \ml \frac{- \du r}{r}\mr_0(u) \leq \ml e^{\ml \frac{3}{2}\frac{1}{1-\mu_0(u)} - 2\mr \log \ml \frac{r(u,v)}{r_0(u)}\mr} - 1 \mr \ml \frac{-\du r}{r}\mr_0.
\end{equation}
This form is of great use if we calculate
\begin{equation}\label{FIT54}
\begin{aligned}
-\du \log\ml \frac{r(u,v)}{r_0(u)}\mr  &= -\frac{\frac{\du r}{r_0}(u,v) - \frac{r(u,v)}{r_0(u)^2}(\du r)_0(u)}{\frac{r(u,v)}{r_0(u)}} =\frac{- \du r}{r}(u,v) - \ml\frac{- \du r}{r}\mr_0(u) \\
&\quad\quad\leq \ml e^{\ml \frac{3}{2}\frac{1}{1-\mu_0(u)} - 2\mr \log \ml \frac{r(u,v)}{r_0(u)}\mr} - 1 \mr \ml \frac{-\du r}{r}\mr_0.
\end{aligned}
\end{equation}
For each $(u,v) \in [u_2,u_n] \times (v_0,v(u_n)]$, we define
\begin{equation}\label{deltatilde}
\td(u,v) := \log \ml \frac{r(u,v)}{r_0(u)}\mr.
\end{equation}
By employing the inequality $e^x - 1 \leq x + (e-2)x^2$ for $ x \leq 1$, inequality \eqref{FIT54} yields 
\begin{equation}\label{FIT55A}
-\du \td \leq \ml \ml \frac{3}{2(1-\mu_0(u))} - 2 \mr \td + (e-2)\ml \frac{3}{2(1-\mu_0(u))} - 2 \mr^2\td^2 \mr \ml \frac{-\du r}{r}\mr_0 \\
\end{equation}
and
\begin{equation}\label{FIT55}
\begin{aligned}
\du \ml \frac{1}{\td}\mr &\leq \ml \ml \frac{3}{2(1-\mu_0(u))} - 2 \mr\ml \frac{1}{\td}\mr + (e-2)\ml \frac{3}{2(1-\mu_0(u))} - 2 \mr^2 \mr\ml \frac{-\du r}{r}\mr_0.
\end{aligned}
\end{equation}
By multiplying a suitable integrating factor on both sides of \eqref{FIT55}, we arrive at
\begin{equation}\label{FIT56}
\begin{aligned}
&\du \ml \frac{1}{\td(u,v)}e^{-\int^u_{0} \ml \frac{-\du r}{r}\mr_0(u') \ml \frac{3}{2(1-\mu_0(u'))} - 2\mr \mathrm{d}u'}\mr \\ 
\leq & \; (e-2)\ml \frac{3}{2(1-\mu_0(u))} - 2 \mr^2 \ml \frac{-\du r}{r}\mr_0(u) e^{-\int^u_{0} \ml \frac{-\du r}{r}\mr_0(u') \ml \frac{3}{2(1-\mu_0(u'))} - 2\mr \mathrm{d}u'}.
\end{aligned}
\end{equation}
By performing integration along $v \in (u_0,v(u_2)]$ from $u_3$ to $u_4$, for $u_2 \leq u_3 < u_4 \leq u_n$, we obtain
\begin{equation}\label{FIT59}
\begin{aligned}
&\frac{e^{-\int^{u_4}_{0} \ml \frac{-\du r}{r}\mr_0(u') \ml \frac{3}{2(1-\mu_0(u'))} - 2\mr \mathrm{d}u'}}{\td(u_4,v)} - \frac{e^{-\int^{u_3}_{0} \ml \frac{-\du r}{r}\mr_0(u') \ml \frac{3}{2(1-\mu_0(u'))} - 2\mr \mathrm{d}u'}}{\td(u_3,v)} \\
\leq & \; (e-2)\int^{u_4}_{u_3} \ml \frac{3}{2(1-\mu_0(u''))} - 2 \mr^2 \ml \frac{-\du r}{r}\mr_0(u'') e^{-\int^{u''}_{0} \ml \frac{-\du r}{r}\mr_0(u') \ml \frac{3}{2(1-\mu_0(u'))} - 2\mr \mathrm{d}u'} \mathrm{d}u''.
\end{aligned}
\end{equation}
As the exponent in \eqref{FIT59} has the following upper bound
\begin{equation}\label{FIT60}
\begin{aligned}
-\int^{u''}_{0} &\ml \frac{-\du r}{r}\mr_0(u') \ml \frac{3}{2(1-\mu_0(u'))} - 2\mr \mathrm{d}u' = -\int^{u''}_{0} \ml \frac{-\du r}{r}\mr_0(u') \ml \frac{3}{2}\frac{\mu_0(u')}{1-\mu_0(u')} - \frac{1}{2}\mr \mathrm{d}u' \\
&\leq -\int^{u''}_{0} \ml \frac{-\du r}{r}\mr_0(u') \ml \frac{3}{2}\frac{(\mu - Q^2/r^2)_0(u')}{1-\mu_0(u')} - \frac{1}{2}\mr \mathrm{d}u' \leq -\frac{3}{2}\gamma_0(u'') + \frac{1}{2}\log \ml \frac{r_0(0)}{r_0(u'')}\mr,
\end{aligned}
\end{equation}
we can shorten \eqref{FIT59} to
\begin{equation}\label{FIT61}
\begin{aligned}
&\frac{e^{-\int^{u_4}_{0} \ml \frac{-\du r}{r}\mr_0(u') \ml \frac{3}{2(1-\mu_0(u'))} - 2\mr \mathrm{d}u'}}{\td(u_4,v)} - \frac{e^{-\int^{u_3}_{0} \ml \frac{-\du r}{r}\mr_0(u') \ml \frac{3}{2(1-\mu_0(u'))} - 2\mr \mathrm{d}u'}}{\td(u_3,v)} \\
\leq & \; (e-2)\int^{u_4}_{u_3} \ml \frac{3}{2(1-\mu_0(u'))} - 2 \mr^2 \ml \frac{-\du r}{r}\mr_0(u') e^{-\frac{3}{2}\gamma_0(u') + \frac{1}{2}\log \ml \frac{r_0(0)}{r_0(u')}\mr} \mathrm{d}u'.
\end{aligned}
\end{equation}
To further simplify the upper bound in \eqref{FIT61}, we first observe that, for all $u' \in [u_2,v_0)$, we have
\begin{equation}\label{FIT62}
\ml \frac{3}{2}\frac{1}{1-\mu_0(u')}-2 \mr^2 \leq \frac{9}{4}\ml \frac{\mu_0(u')}{1-\mu_0(u')}\mr^2 + \frac{1}{4}.
\end{equation}
This implies that we can analyze the contribution of the integral from two separate terms as follows. First, we have
\begin{equation}\label{FIT63}
\begin{aligned}
&\int^{u_4}_{u_3} \ml \frac{-\du r}{r}\mr_0(u') e^{-\frac{3}{2}\gamma_0(u') + \frac{1}{2}\log \ml \frac{r_0(0)}{r_0(u')}\mr} \mathrm{d}u' \\
\leq & \; 2e^{-\frac{3}{2}\gamma_0(u_3)} \int^{u_4}_{u_3} \frac{1}{2}\ml \frac{-\du r}{r}\mr_0(u') e^{\frac{1}{2}\log \ml \frac{r_0(0)}{r_0(u')}\mr} \mathrm{d}u' \\
\leq & \; 2e^{-\frac{3}{2}\gamma_0(u_3)}e^{\frac{1}{2}\log\ml \frac{r_0(0)}{r_0(u_4)}\mr},
\end{aligned}
\end{equation}
since inequality \eqref{FIT227} with $\xi \in \ml 0,\oh\mr$ implies that $\gamma_0(u)$ is non-decreasing in $u$ for $u \in [u_2,v_0)$. Next, appealing to \eqref{FIT5} and \eqref{FIT227} along $v = v_0$, we obtain an estimate for the following integral
\begin{equation}\label{FIT64}
\begin{aligned}
&\int^{u_4}_{u_3} \ml \frac{-\du r}{r}\mr_0(u') \ml \frac{\mu_0(u')}{1-\mu_0(u')}\mr^2 e^{-\frac{3}{2}\gamma_0(u') + \frac{1}{2}\log \ml \frac{r_0(0)}{r_0(u')}\mr} \mathrm{d}u' \\
\leq & \; 2\int^{u_4}_{u_3} \ml \frac{-\du r}{r}\mr_0(u') \ml \frac{(\mu - Q^2/r^2)_0(u')}{1-\mu_0(u')}\mr \ml \frac{\mu_0(u')}{1-\mu_0(u')} \mr e^{-\frac{3}{2}\gamma_0(u') + \frac{1}{2}\log \ml \frac{r_0(0)}{r_0(u')}\mr} \mathrm{d}u' \\
\leq & \; \frac{2}{1-\mu_0(0)}\int^{u_4}_{u_3} \ml \frac{-\du r}{r}\mr_0(u') \ml \frac{(\mu - Q^2/r^2)_0(u')}{1-\mu_0(u')}\mr e^{-\frac{1}{2}\gamma_0(u') + \frac{1}{2}\log \ml \frac{r_0(0)}{r_0(u')}\mr} \mathrm{d}u' \\
= &  \; \frac{2}{1-\mu_0(0)} \int^{u_4}_{u_3} \du \gamma_0(u') e^{-\frac{1}{2}\gamma_0(u') + \frac{1}{2}\log \ml \frac{r_0(0)}{r_0(u')}\mr} \mathrm{d}u' \\
= &  -\frac{4}{1-\mu_0(0)} \int^{u_4}_{u_3} \du \ml e^{-\frac{1}{2}\gamma_0(u')} \mr e^{\frac{1}{2}\log \ml \frac{r_0(0)}{r_0(u')}\mr} \mathrm{d}u'.
\end{aligned}
\end{equation}
Integrating by parts, the last line of \eqref{FIT64} can be rewritten as
\begin{equation}\label{FIT67}
\begin{aligned}
&-\frac{4}{1-\mu_0(0)} \int^{u_4}_{u_3} \du \ml e^{-\frac{1}{2}\gamma_0(u')} \mr e^{\frac{1}{2}\log \ml \frac{r_0(0)}{r_0(u')}\mr} \mathrm{d}u' \\
=&  -\frac{4}{1-\mu_0(0)} \ml \ml e^{-\frac{1}{2}\gamma_0(u')} \mr e^{\frac{1}{2}\log \ml \frac{r_0(0)}{r_0(u')}\mr}|^{u' = u_4}_{u' = u_3}  - \int^{u_4}_{u_3} e^{-\frac{1}{2}\gamma_0(u')} \du \ml e^{\frac{1}{2}\log \ml \frac{r_0(0)}{r_0(u')}\mr}\mr \mathrm{d}u' \mr\\
=& \; \frac{4}{1-\mu_0(0)}\ml e^{-\frac{1}{2}\gamma_0(u_3) + \frac{1}{2}\log\ml \frac{r_0(0)}{r_0(u_3)}\mr} - e^{-\frac{1}{2}\gamma_0(u_4) + \frac{1}{2}\log\ml \frac{r_0(0)}{r_0(u_4)}\mr}\mr \\
& + \frac{4}{1-\mu_0(0)} \int^{u_4}_{u_3} e^{-\frac{1}{2}\gamma_0(u')} \du \ml e^{\frac{1}{2}\log \ml \frac{r_0(0)}{r_0(u')}\mr}\mr \mathrm{d}u'.
\end{aligned}
\end{equation}
Since $\du \ml e^{\frac{1}{2}\log \ml \frac{r_0(0)}{r_0(u')}\mr}\mr \geq 0$ and $\gamma_0(u')$ is non-decreasing in $u'$, the second term in the last line of \eqref{FIT67} has the following upper bound
\begin{equation}\label{FIT68}
\begin{aligned}
&\frac{4}{1-\mu_0(0)} \int^{u_4}_{u_3} e^{-\frac{1}{2}\gamma_0(u')} \du \ml e^{\frac{1}{2}\log \ml \frac{r_0(0)}{r_0(u')}\mr}\mr \mathrm{d}u' \\
\leq & \; \frac{4e^{-\frac{1}{2}\gamma_0(u_3)}}{1-\mu_0(0)} \int^{u_4}_{u_3} \du \ml e^{\frac{1}{2}\log \ml \frac{r_0(0)}{r_0(u')}\mr}\mr \mathrm{d}u' \\
= & \; \frac{4}{1-\mu_0(0)} \ml e^{-\frac{1}{2}\gamma_0(u_3) + \frac{1}{2}\log \ml \frac{r_0(0)}{r_0(u_4)}\mr} - e^{-\frac{1}{2}\gamma_0(u_3) + \frac{1}{2}\log \ml \frac{r_0(0)}{r_0(u_3)}\mr} \mr.
\end{aligned}
\end{equation}
Combining \eqref{FIT67} and \eqref{FIT68}, we observe precise cancellations and this gives
\begin{equation}\label{FIT69}
\begin{aligned}
-\frac{4}{1-\mu_0(0)} \int^{u_4}_{u_3} \du \ml e^{-\frac{1}{2}\gamma_0(u')} \mr e^{\frac{1}{2}\log \ml \frac{r_0(0)}{r_0(u')}\mr} \mathrm{d}u' \leq & \;  \frac{4}{1-\mu_0(0)} e^{\frac{1}{2}\log\ml \frac{r_0(0)}{r_0(u_4)}\mr}\ml e^{-\frac{1}{2}\gamma_0(u_3)} - e^{-\frac{1}{2}\gamma_0(u_4)} \mr.
\end{aligned}
\end{equation}
Back to the upper bound in \eqref{FIT61}, by combining \eqref{FIT62}, \eqref{FIT63}, \eqref{FIT69}, we hence deduce that
\begin{equation}\label{FIT70}
\begin{aligned}
&(e-2)\int^{u_4}_{u_3} \ml \frac{3}{2(1-\mu_0(u'))} - 2 \mr^2 \ml \frac{-\du r}{r}\mr_0(u') e^{-\frac{3}{2}\gamma_0(u') + \frac{1}{2}\log \ml \frac{r_0(0)}{r_0(u')}\mr} \mathrm{d}u' \\
\leq & \; (e-2)\ml \frac{1}{2}e^{-\frac{3}{2}\gamma_0(u_3)}e^{\frac{1}{2}\log\ml \frac{r_0(0)}{r_0(u_4)}\mr} + \frac{9}{1-\mu_0(0)} e^{\frac{1}{2}\log\ml \frac{r_0(0)}{r_0(u_4)}\mr}\ml e^{-\frac{1}{2}\gamma_0(u_3)} - e^{-\frac{1}{2}\gamma_0(u_4)} \mr\mr\\
\leq & \; \frac{19(e-2)}{2(1-\mu_0(0))}e^{\frac{1}{2}\log\ml\frac{r_0(0)}{r_0(u_4)}\mr} = C_{34} e^{\frac{1}{2}\log\ml\frac{r_0(0)}{r_0(u_4)}\mr}
\end{aligned}
\end{equation}
with
\begin{equation}\label{FIT71}
C_{34} := \frac{19(e-2)}{2(1-\mu_0(0))}.
\end{equation}
In addition, under the assumption \eqref{FIT35c}, together with \eqref{FIT77}, for all $u \in [u_2,u_n]$ and $v \in (v_0,v(u_n)]$, we derive
\begin{equation}\label{FIT73}
    \log\ml\frac{r(u,v)}{r_0(u)}\mr = \td(u,v) \leq \delta(u,v) \leq \delta(u,v(u_n)) \leq \frac{e^{-\frac{3}{2}(1+2\xi)\gamma_0(u)}}{2 C_{34}} \leq \frac{e^{-\frac{3}{2}\gamma_{s,0}(u)}}{2 C_{34}}.
\end{equation}
In particular, \eqref{FIT73} is true for $u = u_4$. For each $v \in (v_0,v(u_n)]$, by applying such an estimate to \eqref{FIT70}, we obtain
\begin{equation}\label{FIT74}
\begin{aligned}
(e-2)\int^{u_4}_{u_3} \ml \frac{3}{2(1-\mu_0(u'))} - 2 \mr^2 \ml \frac{-\du r}{r}\mr_0(u') e^{-\frac{3}{2}\gamma_0(u') + \frac{1}{2}\log \ml \frac{r_0(0)}{r_0(u')}\mr} \mathrm{d}u' \leq \;  \frac{e^{-\frac{3}{2}\gamma_{s,0}(u_4) + \frac{1}{2}\log\ml\frac{r_0(0)}{r_0(u_4)}\mr}}{2\td(u_4,v)}.
\end{aligned}   
\end{equation}

Next, we return to \eqref{FIT59} and proceed to simplify the terms on the left as follows
\begin{equation}\label{FIT75}
\begin{aligned}
&\frac{e^{-\int^{u_4}_{0} \ml \frac{-\du r}{r}\mr_0(u') \ml \frac{3}{2(1-\mu_0(u'))} - 2\mr \mathrm{d}u'}}{\td(u_4,v)} - \frac{e^{-\int^{u_3}_{0} \ml \frac{-\du r}{r}\mr_0(u') \ml \frac{3}{2(1-\mu_0(u'))} - 2\mr \mathrm{d}u'}}{\td(u_3,v)} \\
=& \; \frac{e^{-\int^{u_4}_{0} \ml \frac{-\du r}{r}\mr_0(u') \ml \frac{3}{2}\frac{\mu_0(u')}{(1-\mu_0(u'))} - \frac{1}{2}\mr \mathrm{d}u'}}{\td(u_4,v)} - \frac{e^{-\int^{u_3}_{0} \ml \frac{-\du r}{r}\mr_0(u') \ml \frac{3}{2}\frac{\mu_0(u')}{(1-\mu_0(u'))} - \frac{1}{2}\mr \mathrm{d}u'}}{\td(u_3,v)} \\
=& \; \frac{e^{- \frac{3}{2}\gamma_{s,0}(u_4) + \frac{1}{2}\log\ml\frac{r_0(0)}{r_0(u_4)}\mr}}{\td(u_4,v)} - \frac{e^{- \frac{3}{2}\gamma_{s,0}(u_3)+ \frac{1}{2}\log\ml\frac{r_0(0)}{r_0(u_3)}\mr}}{\td(u_3,v)}. \\
\end{aligned}
\end{equation}
By utilizing \eqref{FIT75} and the upper bound in \eqref{FIT74}, inequality \eqref{FIT59} can be rewritten as
$$\frac{e^{- \frac{3}{2}\gamma_{s,0}(u_4) + \frac{1}{2}\log\ml\frac{r_0(0)}{r_0(u_4)}\mr}}{\td(u_4,v)} - \frac{e^{- \frac{3}{2}\gamma_{s,0}(u_3)+ \frac{1}{2}\log\ml\frac{r_0(0)}{r_0(u_3)}\mr}}{\td(u_3,v)} \leq \frac{e^{-\frac{3}{2}\gamma_{s,0}(u_4) + \frac{1}{2}\log\ml\frac{r_0(0)}{r_0(u_4)}\mr}}{2\td(u_4,v)},$$
which implies 
$$\frac{e^{- \frac{3}{2}\gamma_{s,0}(u_4) + \frac{1}{2}\log\ml\frac{r_0(0)}{r_0(u_4)}\mr}}{2\td(u_4,v)} \leq \frac{e^{- \frac{3}{2}\gamma_{s,0}(u_3)+ \frac{1}{2}\log\ml\frac{r_0(0)}{r_0(u_3)}\mr}}{\td(u_3,v)}$$
and
\begin{equation}\label{FIT82}
\begin{aligned}
\td(u_3,v) &\leq \td{(u_4,v)} 2 e^{-\frac{1}{2}\log\ml \frac{r_0(u_3)}{r_0(u_4)}\mr-\frac{3}{2}(\gamma_{s,0}(u_3)-\gamma_{s,0}(u_4))} \\
&\leq \td{(u_4,v)} 2 e^{-\frac{1}{2}\log\ml \frac{r_0(u_3)}{r_0(u_4)}\mr + \frac{3}{2}\gamma_{s,0}(u_4)}.
\end{aligned}
\end{equation}
Together with \eqref{FIT47}, the definition of $\td$ in \eqref{deltatilde}, and \eqref{FIT77} in Lemma \ref{FITEstLemma8}, we therefore obtain \eqref{FIT83} and conclude the proof of Lemma \ref{FITEstLemma5}.
\end{proof} 
We also record some useful estimates for $|\phi|$ and $|Q\phi|$ in this setting
\begin{lemma}\label{FITEstLemma10}
For each $(u,v) \in \mathcal{P}'$ with $\chi \in \left(0,\frac{1}{3}\right)$, we have
\begin{equation}\label{FIT166}
|\phi|(u,v) \leq C_{35}  + \frac{1}{(4\pi)^{\frac{1}{2}}}\ml  \log \ml \frac{\frac{1 - \mu}{ \dv r}(u,v)}{\frac{1 - \mu}{ \dv r}(u_2,v)}\mr  \mr^{\frac{1}{2}}\ml \log \ml \frac{r(u_2,v)}{r(u,v)}\mr \mr^{\frac{1}{2}},
\end{equation}
and
\begin{equation}\label{FIT101}
|Q\phi|(u,v) \leq C_{5}(u_2,v_0+\te;\chi,2,\chi)r^{\frac{5}{4}-\frac{3\chi}{4}}(u,v').
\end{equation}
\end{lemma}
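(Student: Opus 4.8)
\textbf{Proof proposal for Lemma \ref{FITEstLemma10}.}

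The plan is to reduce both inequalities to direct applications of Proposition \ref{PropEstimates}, with the single point $\overline{u}=u_2$ playing the role of the reference hypersurface. The key geometric input is that the refined rectangle $\mathcal{P}'=[u_2,v_0)\times[v_0,v_0+\tilde{\varepsilon}]$ lies entirely in the regular region $\mathcal{R}$ (this was arranged when $\mathcal{P}\subset\mathcal{R}$ was established and $\mathcal{P}'\subset\Tilde{\mathcal{P}}\subset\mathcal{P}$), and that for each $(u,v)\in\mathcal{P}'$ the ``backward rectangle'' $\{(u',v'):u'\in[u_2,u],\,v'\in(u',v]\}$ is also contained in $\mathcal{P}'\subset\mathcal{R}$ — here I would use that $\mathcal{P}'$ is a coordinate rectangle with its left edge at $u=u_2$ and that $\dv r>0,\,\du r<0$ in $\mathcal{R}$, so the domain of dependence picture in Proposition \ref{PropEstimates} (Figure \ref{SetupFig3}) applies verbatim. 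With this, the hypothesis of Proposition \ref{PropEstimates} is satisfied with $\overline{u}=u_2$, and for any fixed $\chi\in(0,1)$ we immediately obtain \eqref{Estimates13} and \eqref{Estimates19} with $\overline{u}=u_2$.

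First I would prove \eqref{FIT166}. Apply \eqref{Estimates13} with $\overline{u}=u_2$: this gives exactly
$$|\phi|(u,v)\leq \overline{P}(u_2,v)+\frac{1}{(4\pi)^{\frac12}}\ml\log\ml\frac{\frac{1-\mu}{\dv r}(u,v)}{\frac{1-\mu}{\dv r}(u_2,v)}\mr\mr^{\frac12}\ml\log\ml\frac{r(u_2,v)}{r(u,v)}\mr\mr^{\frac12},$$
where $\overline{P}(u_2,v)=\sup_{v'\in(u_2,v]}|\phi|(u_2,v')$. Since $\mathcal{P}'$ is bounded and $v$ ranges over the compact interval $[v_0,v_0+\tilde{\varepsilon}]$, and since $\phi$ is continuous on $\overline{\mathcal{P}}$ (indeed $|\phi||_{\Gamma}<+\infty$ by \eqref{Setup35} and the $C^1$ regularity along $u=u_2$), the quantity $\overline{P}(u_2,v)$ is bounded above by a finite constant; I would define $C_{35}:=\sup_{v'\in(u_2,v_0+\tilde{\varepsilon}]}|\phi|(u_2,v')<+\infty$, which depends only on $u_2$ and the region, and is non-decreasing in $v$, so $\overline{P}(u_2,v)\leq C_{35}$ for all relevant $v$. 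This yields \eqref{FIT166}.

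Next, for \eqref{FIT101} I would invoke \eqref{Estimates19} with $\overline{u}=u_2$, $v$ replaced by the upper endpoint $v_0+\tilde{\varepsilon}$ (using monotonicity of the constant $C_5(\overline{u},\cdot;\chi,\xi_1,\xi_2)$ in its second argument, exactly as was done in \eqref{TSF35} in the proof of Theorem \ref{Trapped}), and the specific parameter choices $\xi_1=2$, $\xi_2=\chi$ with $\chi\in(0,\frac13)$ — valid since we then need $\xi_1\geq1$ and $\xi_2\in(0,1-\chi)$, which holds because $\chi<1-\chi$ when $\chi<\frac12$, in particular when $\chi<\frac13$. With $\xi_1=2,\xi_2=\chi$ the exponent in \eqref{Estimates19} becomes $\frac32-\frac{\chi}{2}-\frac1{4}(1+\chi)=\frac54-\frac{3\chi}{4}$, and $\mu^{1/(2\xi_1)}=\mu^{1/4}\leq1$ since $\mathcal{P}'\subset\mathcal{R}$ gives $\mu<1$. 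Dropping the $\mu^{1/4}$ factor then produces $|Q\phi|(u,v)\leq C_5(u_2,v_0+\tilde{\varepsilon};\chi,2,\chi)\,r^{\frac54-\frac{3\chi}{4}}(u,v)$, which is \eqref{FIT101} (the primed $v'$ in the statement should read $v$).

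The only real subtlety — and the main obstacle to watch — is making sure the backward-rectangle inclusion in $\mathcal{R}$ is genuinely available at the left edge: one must confirm that for $(u,v)\in\mathcal{P}'$ the whole triangle between the outgoing null ray from $(u_2,u_2)\in\Gamma$ and the point $(u,v)$ stays in $\mathcal{R}$, not merely the horizontal segment. This is handled by the fact that $\mathcal{P}=[u_1,v_0)\times[v_0,v_0+\varepsilon]$ was chosen with $\mathcal{P}\cap\mathcal{A}=\emptyset$, so by Proposition \ref{SetupProp3} the entire rectangle $\mathcal{P}\subset\mathcal{R}$, and $u_2\geq u_1$ means the relevant triangle sits inside $\mathcal{P}$. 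Once this is noted, everything else is a direct quotation of Proposition \ref{PropEstimates} together with the elementary observations about finiteness and monotonicity of the constants; no new computation is required.
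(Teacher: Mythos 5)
Your proposal is correct and follows exactly the same route as the paper: both inequalities are direct citations of Proposition \ref{PropEstimates} with $\overline{u}=u_2$, defining $C_{35}:=\overline{P}(u_2,v_0+\te)$ for the first and choosing $\xi_1=2$, $\xi_2=\chi$ in \eqref{Estimates19} for the second; your exponent computation $\frac{3}{2}-\frac{\chi}{2}-\frac{1}{4}(1+\chi)=\frac{5}{4}-\frac{3\chi}{4}$ and the observation that $\mu^{1/4}\leq 1$ match what the paper (tacitly) does. The paper's own proof is a two-line citation, so your extra care about the backward-domain hypothesis and the finiteness of $\overline{P}(u_2,\cdot)$ is merely making explicit what the paper leaves implicit; the only small inaccuracy is that the relevant past triangle lies in $\mathcal{P}\cup\mathcal{J}^-(\mathcal{O}')$ rather than in $\mathcal{P}$ alone (the sub-$v_0$ portion is covered by $\mathcal{J}^-(\mathcal{O}')\subset\mathcal{R}$), but this does not affect the conclusion.
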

\begin{proof} The first inequality follows from \eqref{Estimates13} of Proposition \ref{PropEstimates} if we set 
\begin{equation}\label{FIT231}
C_{35} := \overline{P}(u_2,v_0 + \te).
\end{equation}
The second inequality can be deduced from \eqref{Estimates19} by choosing $\xi_1 = 2$ and $\xi_2 = \chi$ with $\chi \in \left(0,\frac{1}{3}\right]$.
\end{proof}

With the above preparations, we return to the proof of Theorem \ref{FITTheorem} for the cases under \textit{Scenario (I)}. 
We begin with defining the following quantities
\begin{equation}\label{FIT29}
    \Upxi(u,v) := \frac{r(u,v)}{r_0(u)}\ml \frac{r \dv \phi}{\dv r}\mr(u,v) - \ml \frac{r\dv \phi}{\dv r
    }\mr_0 (u) 
\end{equation}
and
\begin{equation}\label{FIT30}
    \psi(u,v) := e^{-\gamma_0(u)}\Upxi(u,v).
\end{equation}
From \eqref{FIT29} and \eqref{FIT30}, for any $(u,v) \in \mathcal{P}'$, we have
\begin{equation}\label{FIT31}
\mlm \frac{r(u,v)}{r_0(u)}\ml \frac{r \dv \phi}{\dv r}\mr(u,v) - \ml \frac{r\dv \phi}{\dv r
    }\mr_0 (u)  \mrm = |\psi(u,v)|e^{\gamma_0(u)}.
\end{equation}
Let $n \in \mathbb{N}$ be sufficiently large such that $u_n$ is sufficiently close to $v_0$. Pick $v(u_n)$ sufficiently close to $v_0$ such that inequalities \eqref{FIT35a}, \eqref{FIT35c} and
\begin{equation}\label{FIT127}
\sup_{v' \in [v_0,v(u_n)]}|\psi|(u_n,v') \leq \frac{h}{6}
\end{equation}
hold. By \eqref{FIT28}, for each $v' \in [v_0,v(u_n)]$, we deduce that
\begin{equation}\label{FIT128}
\begin{aligned}
\frac{r(u_n,v')}{r_0(u_n)}\frac{r|\dv \phi|}{\dv r}(u_n,v') &= \mlm \frac{r\dv \phi}{\dv r}(u_n,v')\ml \frac{r(u_n,v')}{r_0(u_n)}\mr - \ml \frac{r\dv \phi}{\dv r}\mr_0(u_n) + \ml \frac{r \dv \phi}{\dv r}\mr_0(u_n) \mrm  \\
&\geq \mlm \mlm \frac{r\dv \phi}{\dv r}(u_n,v')\ml \frac{r(u_n,v')}{r_0(u_n)}\mr - \ml \frac{r\dv \phi}{\dv r}\mr_0(u_n) \mrm - \mlm \ml \frac{r \dv \phi}{\dv r}\mr_0(u_n)\mrm\mrm \\
&= \mlm \mlm \ml \frac{r \dv \phi}{\dv r}\mr_0(u_n)\mrm - |\psi|(u_n,v')e^{\gamma_0(u_n)} \mrm \\
&\geq \mlm \frac{h}{3} - \frac{h}{6}\mrm e^{\gamma_0(u_n)} = \frac{h}{6}e^{\gamma_0(u_n)}.
\end{aligned}
\end{equation}
Observe that due to \eqref{Setup25}, for $(u,v) \in \mathcal{P}'$, quantity $\eta^*(u,v)$ in \eqref{ReducedMassRatio} can be re-written as
\begin{equation}\label{FIT130}
\begin{aligned}
\eta^*(u,v) &= \frac{2\int^v_{v_0} \dv m(u,v') \mathrm{d}v'}{r(u,v)} - \frac{2}{r(u,v)} \int^v_{v_0} \frac{Q^2(\dv r)}{2r^2} (u,v') \mathrm{d}v' \\
&= \frac{2}{r(u,v)}\ml \int^v_{v_0} \frac{2\pi r^2(1-\mu)|\dv \phi|^2}{\dv r}(u,v') \mathrm{d}v' \mr.
\end{aligned}
\end{equation}
Together with \eqref{FIT128}, for each $u_n$ and for all $v \in (v_0,v(u_n)]$, we have
\begin{equation}\label{FIT133}
\begin{aligned}
\eta^*(u_n,v) &=  \frac{2}{r(u_n,v)}\ml \int^v_{v_0} \frac{2\pi r^2(1-\mu)|\dv \phi|^2}{\dv r}(u_n,v') \mathrm{d}v' \mr  \\
&\geq \frac{8\pi \ml 1-\mu_0(0) \mr }{3r(u_n,v)}e^{-\gamma_0(u_n)}r^2_0(u_n)\ml \int^v_{v_0} (\dv r)\frac{1}{r^2(u_n,v')}
\frac{r^2(u_n,v')}{r^2_0(u_n)}\frac{ r^2|\dv \phi|^2}{(\dv r)^2}(u_n,v') \mathrm{d}v' \mr  \\
&\geq \frac{2\pi \ml 1-\mu_0(0) \mr h^2 }{27r(u_n,v)}e^{\gamma_0(u_n)}r^2_0(u_n)\ml \int^v_{v_0} \frac{(\dv r)(u_n,v')}{r^2(u_n,v')} \mathrm{d}v' \mr  \\
&= \frac{2\pi \ml 1-\mu_0(0) \mr h^2 }{27r(u_n,v)}e^{\gamma_0(u_n)}r^2_0(u_n)\ml \frac{1}{r_0(u_n)} - \frac{1}{r(u_n,v)}\mr  \\
&= \frac{2\pi \ml 1-\mu_0(0) \mr h^2 }{27}e^{\gamma_0(u_n)}\ml \frac{r_0(u_n)}{r(u_n,v)} \mr^2\ml \frac{r(u_n,v) - r_0(u_n)}{r_0(u_n)}\mr  \\
&= \frac{2\pi \ml 1-\mu_0(0) \mr h^2 }{27}e^{\gamma_0(u_n)}\ml \frac{1}{1+\delta(u_n,v)}\mr^2\ml \frac{r(u_n,v) - r_0(u_n)}{r_0(u_n)}\mr  \\
&\geq C_{37} \delta(u_n,v)e^{\gamma_0(u_n)}.  \\
\end{aligned}
\end{equation}
Here, we have used \eqref{FIT36} in Lemma \ref{FITEstLemma3}, $\delta(u_n,v) \leq \frac{1}{2}$, $\dv r > 0$, and we set
\begin{equation}\label{FIT134}
C_{37} := \frac{8\pi(1-\mu_0(0))h^2 }{243}.
\end{equation}

On the other hand, assuming the absence of a trapped surface at each $(u_n,v)$, from \eqref{FIT34}, we get
\begin{equation}\label{FIT131}
\begin{aligned}
\eta^*(u_n,v) &\leq C_{24} \frac{\delta(u_n,v)}{1+\delta(u_n,v)}\log\ml \frac{1}{\delta(u_n,v)}\mr + C_{23}\frac{\delta(u_n,v)r_0(u_n)}{1-\sup_{{v \in [v_0,v_0+\varepsilon]}}\mu(u_n,v)} \\
&\leq C_{24} \delta(u_n,v) \log\ml \frac{1}{\delta(u_n,v)}\mr + \frac{3 C_{23}}{2(1-\mu_0(0))} \delta(u_n,v) e^{\gamma_0(u_n)}r_0(u_n) \\
&\leq C_{36}\delta(u_n,v) \ml \log \ml \frac{1}{\delta(u_n,v)}\mr + e^{\gamma_0(u_n)} r_0(u_n) \mr
\end{aligned}
\end{equation}
with 
\begin{equation}\label{FIT132}
C_{36} = C_{24} + \frac{3 C_{23}}{2(1-\mu_0(0))}.
\end{equation}

By comparing \eqref{FIT133} with \eqref{FIT131} at $v = v(u_n)$, a contradiction would arise if
\begin{equation}\label{FIT135}
\log \ml \frac{1}{\delta(u_n,v(u_n))}\mr e^{-\gamma_0(u_n)} + r_0(u_n)  < \frac{C_{37}}{C_{36}}.
\end{equation}
By picking $n$ sufficiently large, we can always choose
\begin{equation}\label{FIT136}
r_0(u_n) \leq \frac{C_{37}}{4C_{36}}.
\end{equation}
At the same time, we can pick a corresponding $v(u_n)$ to be sufficiently close to $v_0$ such that\footnote{We are able to find such a value of $v(u_n)$ since $\delta(u_n,v) \rightarrow 0$ as $v \rightarrow v_0$.}
\begin{equation}\label{FIT137}
\delta(u_n,v(u_n)) = e^{-\frac{C_{37}}{4C_{36}}e^{\gamma_0(u_n)}}.
\end{equation}

Given \eqref{FIT127}, note that \eqref{FIT136} and \eqref{FIT137} imply inequality \eqref{FIT135}, and thus it leads to a contradiction. We then verify the assumption \eqref{FIT127} with the help of the following lemma.
\begin{lemma}\label{FITpsievolution}
Let $u_n \in [u_2,v_0)$ and $v(u_n) \in [v_0,v_0+\te]$ be chosen such that \eqref{FIT35a} and \eqref{FIT35c} hold. If we further demand that for all $u \in [u_2,u_n]$, we have
\begin{equation}\label{FIT35e}
\delta(u,v(u_n)) \leq \min \left\{ \ml \frac{\log(3)}{8C_{48}}\mr^2e^{-(9+6\xi)\gamma_0(u)}, \ml \frac{h}{1296C_{45}}\mr^2 e^{-\ml \frac{15}{2} + 3\xi\mr\gamma_0(u)}, \ml \frac{h}{1296 b C_{49}}\mr^2 e^{- \frac{19}{2}\gamma_0(u)} \right\}
\end{equation}
and
\begin{equation}\label{FIT252}
\sup_{v \in [v_0,v(u_n)]}|\psi|(u_2,v) \leq \frac{h}{54},
\end{equation}
then inequality \eqref{FIT127} holds, that is, 
$$\sup_{v' \in [v_0,v(u_n)]}|\psi|(u_n,v') \leq \frac{h}{6}.$$
\end{lemma}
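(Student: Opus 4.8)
The plan is to derive a linear evolution equation for $\psi$ in the outgoing direction, integrate it from $u_2$ to $u_n$, and run a Gr\"onwall argument in which each term is forced to be small by the hypotheses \eqref{FIT35a}, \eqref{FIT35c} and \eqref{FIT35e}. First I would compute $\du\psi$: differentiating the definition \eqref{FIT30} of $\psi$, inserting \eqref{FIT8} for $\du(r\dv\phi/\dv r)$ and the expression for $\du\gamma_0$, one obtains
\[
\du\psi = e^{-\gamma_0}J_4 + \psi\,(J_5 - \ii\e A_u) + e^{-\gamma_0}\Big(\frac{r\dv\phi}{\dv r}\Big)_0 J_6,
\]
where $J_4$ carries the source $-D_u\phi - \ii\e Q\phi\,\du r/(r(1-\mu))$ together with the factor $r/r_0$, the quantity $J_5 = \du\gamma - \du\gamma_0 + \du r/r - (\du r/r)_0$ measures the discrepancy of the blueshift integrand between $v$ and $v_0$, and $J_6 = J_5 - \ii\e(A_u - (A_u)_0)$. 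Treating $\ii\e A_u$ as a purely imaginary integrating factor, integrating in $u$ from $u_2$ to $u_n$ and taking absolute values gives
\[
|\psi|(u_n,v) \le |\psi|(u_2,v) + \int^{u_n}_{u_2}\!|e^{-\gamma_0}J_4|\,\D u' + \int^{u_n}_{u_2}\!|\psi|\,|J_5|\,\D u' + \int^{u_n}_{u_2}\!\Big|e^{-\gamma_0}\Big(\tfrac{r\dv\phi}{\dv r}\Big)_0 J_6\Big|\,\D u'.
\]

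Next I would estimate the three integrals. For $J_5$, write $J_5(u,v) = \int^v_{v_0}\dv(\du r/r + \du\gamma)(u,v')\,\D v'$ and expand $\dv(\du r/r + \du\gamma)$ through \eqref{Setup20} and \eqref{Setup97}--\eqref{Setup98}; using the $|\phi|$ and $|Q\phi|$ bounds of Lemma \ref{FITEstLemma10}, the improved estimates for $|Q|$, the blueshift bound $1/(1-\mu)\lesssim e^{\gamma_0}/(1-\mu_0(0))$ from Lemma \ref{FITEstLemma3}, and Lemma \ref{xlogx} to absorb the logarithm produced by the renormalized $\phi$--estimate, one arrives at $|J_5|(u,v)\lesssim (-\du r/r)_0(u)\,e^{3\gamma_0(u)}\,\delta(u,v)^{1/2}$; integrating in $u'$ and invoking the geometric comparison \eqref{FIT83} of Lemma \ref{FITEstLemma5} to trade $e^{3\gamma_0}\delta^{1/2}$ against a decaying power of $r_0(u')/r_0(u_n)$ yields $\int^{u_n}_{u_2}|J_5|\lesssim e^{(\frac{15}{4}+\frac{3}{2}\xi)\gamma_0(u_n)}\delta(u_n,v(u_n))^{1/2}$, which by the first entry of \eqref{FIT35e} together with Lemma \ref{xlogx} is at most $\log 3$, so that the Gr\"onwall factor $e^{\int|J_5|}$ is $\le 3$. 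For $J_4$, one first integrates along the $v$--direction using the equation for $\dv\big(\tfrac{r}{r_0}(D_u\phi + \ii\e\tfrac{Q\phi\du r}{r(1-\mu)})\big)$ derived from \eqref{Setup73}, \eqref{Setup18}, \eqref{Setup30}; every term on the right is controlled termwise by the same ingredients, the worst being an $r^2|\phi|^4\,\dv r/(1-\mu)$ integral handled by balancing one factor $(\dv r/(1-\mu))^{1/2}$ against the logarithm via Lemma \ref{xlogx} and relating the remaining $(\dv r/(1-\mu))^{1/2}$ to $e^{\frac12\gamma_0}$; this gives $\int^{u_n}_{u_2}|e^{-\gamma_0}J_4|\lesssim e^{(\frac{15}{2}+3\xi)\gamma_0(u_n)}\delta(u_n,v(u_n))^{1/2}$, which is $\le h/54$ by the second entry of \eqref{FIT35e}. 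Finally, for the $J_6$ term, by \eqref{FIT13} and the bound $\big(\tfrac{r|\dv\phi|}{\dv r}\big)_0 e^{-\gamma_0}\le b$ from \eqref{FIT207} one has $|e^{-\gamma_0}(\tfrac{r\dv\phi}{\dv r})_0 J_6|\le b\,|J_6|$ with $|J_6|\lesssim |J_5| + \e|A_u - (A_u)_0|$, the last term integrated via \eqref{Setup30}; the third entry of \eqref{FIT35e}, which carries the extra factor $b$, makes this integral $\le h/54$.

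Putting these estimates into the displayed inequality together with the initial-data bound \eqref{FIT252} gives $\sup_{v'\in[v_0,v(u_n)]}|\psi|(u_n,v') \le 3\big(\tfrac{h}{54} + \tfrac{h}{54} + \tfrac{h}{54}\big) = \tfrac{h}{6}$, which is exactly \eqref{FIT127}. If one wishes to keep all quantities a priori finite, this is phrased as a continuity argument on the set of $u\in[u_2,u_n]$ for which $\sup_{v'\in[v_0,v(u_n)]}|\psi|(u,v')\le h/6$: it is non-empty by \eqref{FIT252}, closed by continuity of $\psi$ on $\mathcal{P}'$, and open by the estimate just obtained. The main obstacle is the control of the $J_4$ integral: it is the only place where the renormalized estimates for $|\phi|$ and $|Q\phi|$, the blueshift bound for $1/(1-\mu)$, the comparison Lemma \ref{FITEstLemma5}, and the calculus Lemma \ref{xlogx} must all be combined at once, and it is precisely the bookkeeping of the powers of $e^{\gamma_0}$ produced there (and in the parallel treatment of $J_5$ and $J_6$) that dictates the three exponents $e^{-(9+6\xi)\gamma_0}$, $e^{-(\frac{15}{2}+3\xi)\gamma_0}$, $e^{-\frac{19}{2}\gamma_0}$ imposed in \eqref{FIT35e}.
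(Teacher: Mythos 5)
Your proof follows the paper's argument essentially verbatim: the evolution equation $\du\psi = e^{-\gamma_0}J_4 + \psi(J_5 - \ii\e A_u) + e^{-\gamma_0}\ml\frac{r\dv\phi}{\dv r}\mr_0 J_6$ with $J_4,J_5,J_6$ as in \eqref{FIT152}, the phase/integrating-factor step and Gr\"onwall in $u$, the $v$-integration of the source via the renormalized $|\phi|$ and $|Q\phi|$ bounds, Lemma \ref{xlogx}, and the blueshift estimate of Lemma \ref{FITEstLemma3}, the $u$-integration via the comparison \eqref{FIT83}, and the final tally $3\ml\frac{h}{54}+\frac{h}{54}+\frac{h}{54}\mr = \frac{h}{6}$ are exactly the paper's steps \eqref{FIT143}--\eqref{FIT253}. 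One bookkeeping slip: you write $\int_{u_2}^{u_n}|e^{-\gamma_0}J_4|\lesssim e^{(\frac{15}{2}+3\xi)\gamma_0(u_n)}\delta(u_n,v(u_n))^{1/2}$ and claim it is $\le h/54$ by the second entry of \eqref{FIT35e}; but that entry bounds $\delta$ (not $\delta^{1/2}$) by a constant times $e^{-(\frac{15}{2}+3\xi)\gamma_0}$, so with your exponent a factor $e^{(\frac{15}{4}+\frac{3}{2}\xi)\gamma_0}$ survives. The correct exponent against $\delta^{1/2}$ is $\frac{15}{4}+\frac{3}{2}\xi$, as in the paper's \eqref{FIT197}--\eqref{FIT198}, after which the cancellation does give $\le h/54$; the argument is otherwise sound.
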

\proof 
From \eqref{FIT30} and \eqref{FIT29}, we first compute
\begin{equation}\label{FIT143}
\begin{aligned}
\du \psi(u,v) &= \du \ml e^{-\gamma_0(u)}\Upxi(u,v) \mr = e^{-\gamma_0(u)}  \left( -\du \ml \gamma_0(u)\mr \Upxi(u,v) + \du \ml \Upxi(u,v)\mr \right). \\
\end{aligned}
\end{equation}
In computations that follow, we suppress the dependence of the quantities on $(u,v)$ and abbreviate terms without a subscript $()_0$ as being evaluated at $(u,v)$, while terms with that are evaluated at $(u,v_0)$. We can then calculate 
\begin{equation}\label{FIT144}
\begin{aligned}
\du(\gamma_0) &= \ml \frac{(-\du r)}{r}\frac{\mu - Q^2/r^2}{1-\mu}\mr_0
\end{aligned}
\end{equation}
and
\begin{equation}\label{FIT145}
\begin{aligned}
\du\Upxi &= \frac{r}{r_0}\du \ml \frac{r \dv \phi}{\dv r}\mr + \du \ml \frac{r}{r_0}\mr\frac{r \dv \phi}{\dv r} - \du \ml \frac{r \dv \phi}{\dv r}\mr_0.
\end{aligned}
\end{equation}
By \eqref{FIT8}, we can simplify \eqref{FIT145} to 
\begin{equation}\label{FIT147}
\begin{aligned}
\du \Upxi =&\; \frac{r}{r_0}\ml\ml -D_u \phi - \ii \e \frac{Q \phi (\du r)}{r(1-\mu)}\mr + \ml \frac{r \dv \phi}{\dv r}\mr \ml -\ii \e A_u + \frac{(-\du r)}{r} \frac{\mu - Q^2/r^2}{1 - \mu} \mr \mr + \frac{(\du r)r_0 - (\du r)_0 r}{r_0^2}\ml \frac{r \dv \phi}{\dv r}\mr \\
&- \ml \ml -D_u \phi - \ii \e \frac{Q \phi (\du r)}{r(1-\mu)}\mr_0 + \ml \frac{r \dv \phi}{\dv r}\mr_0 \ml -\ii \e A_u + \frac{(-\du r)}{r} \frac{\mu - Q^2/r^2}{1 - \mu} \mr_0\mr.
\end{aligned}
\end{equation}
Prior to substituting this back to \eqref{FIT143}, we first compute
\begin{equation}\label{FIT148}
\begin{aligned}
(-\du \gamma_0)\Upxi + \du \Upxi =\; & - \ml \frac{(-\du r)}{r}\frac{\mu - Q^2/r^2}{1-\mu}\mr_0 \ml \frac{r}{r_0}\ml \frac{r \dv \phi}{\dv r}\mr - \ml \frac{r \dv \phi}{\dv r}\mr_0\mr + \frac{(\du r)r_0 - (\du r)_0 r}{r_0^2}\ml \frac{r \dv \phi}{\dv r}\mr  \\
&+ \frac{r}{r_0}\ml\ml -D_u \phi - \ii \e \frac{Q \phi (\du r)}{r(1-\mu)}\mr + \ml \frac{r \dv \phi}{\dv r}\mr \ml - \ii \e A_u + \frac{(-\du r)}{r} \frac{\mu - Q^2/r^2}{1 - \mu} \mr \mr \\
&- \ml \ml -D_u \phi - \ii \e \frac{Q \phi (\du r)}{r(1-\mu)}\mr_0 + \ml \frac{r \dv \phi}{\dv r}\mr_0 \ml - \ii \e A_u + \frac{(-\du r)}{r} \frac{\mu - Q^2/r^2}{1 - \mu} \mr_0\mr  \\
=\; & - \ml \frac{(-\du r)}{r}\frac{\mu - Q^2/r^2}{1-\mu}\mr_0 \ml \frac{r}{r_0}\ml \frac{r \dv \phi}{\dv r}\mr\mr + \frac{(\du r)r_0 - (\du r)_0 r}{r_0^2}\ml \frac{r \dv \phi}{\dv r}\mr   \\
&+ \frac{r}{r_0}\ml\ml -D_u \phi - \ii \e \frac{Q \phi (\du r)}{r(1-\mu)}\mr + \ml \frac{r \dv \phi}{\dv r}\mr \ml -\ii \e A_u + \frac{(-\du r)}{r} \frac{\mu - Q^2/r^2}{1 - \mu} \mr \mr \\
&- \ml \ml -D_u \phi - \ii \e \frac{Q \phi (\du r)}{r(1-\mu)}\mr_0 + \ml \frac{r \dv \phi}{\dv r}\mr_0 \ml -\ii \e A_u \mr_0\mr.  \\
\end{aligned}
\end{equation}
Rearranging the expression above, we get
\begin{equation}\label{FIT149}
\begin{aligned}
(-\du \gamma_0)\Upxi + \du \Upxi 
= \; & \left\{ \frac{r}{r_0}\ml - D_u \phi - \ii \e \frac{Q \phi (\du r)}{r(1-\mu)}\mr + \ml D_u \phi + \ii \e \frac{Q \phi (\du r)}{r(1-\mu)}\mr_0 \right\} \\
&+ \ml \frac{r \dv \phi}{\dv r}\mr\frac{r}{r_0} \left\{ \du \gamma - \du \gamma_0 + \frac{\du r}{r} - \ml \frac{\du r}{r}\mr_0    \right\} \\
&- \ii \e \ml A_u \frac{r}{r_0}\ml \frac{r \dv \phi}{\dv r}\mr - (A_u)_0\ml \frac{r \dv \phi}{\dv r}\mr_0 \mr.
\end{aligned}
\end{equation}
Next, observe from \eqref{FIT29} and \eqref{FIT30} that
\begin{equation}
\begin{aligned}
\ml \frac{r \dv \phi}{\dv r} \mr \frac{r}{r_0} &=  \psi e^{\gamma_0} + \ml \frac{r \dv \phi}{\dv r}\mr_0.
\end{aligned}
\end{equation}
Thus, we can rewrite \eqref{FIT149} as 
\begin{equation}\label{FIT150}
\begin{aligned}
(-\du \gamma_0)\Upxi + \du \Upxi 
=\; & \left\{ \frac{r}{r_0}\ml - D_u \phi - \ii \e \frac{Q \phi (\du r)}{r(1-\mu)}\mr + \ml D_u \phi + \ii \e \frac{Q \phi (\du r)}{r(1-\mu)}\mr_0 \right\} \\
&+  \psi e^{\gamma_0}\left\{ \du \gamma - \du \gamma_0 + \frac{\du r}{r} - \ml \frac{\du r}{r}\mr_0    \right\} \\
&+  \ml \frac{r \dv \phi}{\dv r}\mr_0 \left\{ \du \gamma - \du \gamma_0 + \frac{\du r}{r} - \ml \frac{\du r}{r}\mr_0    \right\} \\
&-\ii \e A_u \psi e^{\gamma_0} - \ii \e \ml \frac{r \dv \phi}{\dv r}\mr_0 \ml A_u - (A_u)_0\mr. \\
\end{aligned}
\end{equation}
We now apply \eqref{FIT150} to \eqref{FIT143} and obtain
\begin{equation}\label{FIT151}
\begin{aligned}
\du \psi =\; & e^{-\gamma_0}\left\{ \frac{r}{r_0}\ml - D_u \phi - \ii \e \frac{Q \phi (\du r)}{r(1-\mu)}\mr + \ml D_u \phi + \ii \e \frac{Q \phi (\du r)}{r(1-\mu)}\mr_0 \right\} \\
&+  \psi \left\{ \du \gamma - \du \gamma_0 + \frac{\du r}{r} - \ml \frac{\du r}{r}\mr_0  - \ii \e A_u  \right\} \\
&+  e^{-\gamma_0}\ml \frac{r \dv \phi}{\dv r}\mr_0 \left\{ \du \gamma - \du \gamma_0 + \frac{\du r}{r} - \ml \frac{\du r}{r}\mr_0 - \ii \e A_u + \ii \e (A_u)_0    \right\}. \\
\end{aligned}
\end{equation}
By defining the following terms:
\begin{equation}\label{FIT152}
\begin{aligned}
J_4 &:=  \frac{r}{r_0}\ml - D_u \phi - \ii \e \frac{Q \phi (\du r)}{r(1-\mu)}\mr + \ml D_u \phi + \ii \e \frac{Q \phi (\du r)}{r(1-\mu)}\mr_0 , \\
J_5 &:=  \du \gamma - \du \gamma_0 + \frac{\du r}{r} - \ml \frac{\du r}{r}\mr_0, \\
J_6 &:= J_5 - \ii \e (A_u - (A_u)_0),
\end{aligned}
\end{equation}
the expression in \eqref{FIT151} can be rewritten as 
\begin{equation}\label{FIT153A}
\begin{aligned}
\du \psi &= e^{-\gamma_0} J_4 + \psi(J_5 - \ii \e A_u) + e^{-\gamma_0}\ml \frac{r \dv \phi}{\dv r}\mr_0 J_6. \\
\end{aligned}
\end{equation}
Through the use of a suitable integrating factor, we deduce that
\begin{equation}
\du(e^{-\int^u_{u_2} J_5(u',v) + \ii \e A_u(u',v) \D u'}\psi) =  e^{-\gamma_0-\int^u_{u_2} J_5(u',v) + \ii \e A_u(u',v) \D u'}\ml J_4 + J_6\ml \frac{r \dv \phi}{\dv r}\mr_0\mr.
\end{equation}
Upon integrating with respect to $u$ from $u_2$, we obtain
\begin{equation}\label{FIT154A}
\begin{aligned}
&e^{-\int^{u}_{u_2} J_5(u',v) - \ii \e A_u(u',v) \D u'} \psi(u,v) - \psi(u_2,v) \\
&= \int^{u}_{u_2} e^{-\gamma_0(u'')-\int^{u''}_{u_2} J_5(u',v) - \ii \e A_u(u',v) \D u'}\ml J_4(u'',v) + J_6\ml \frac{r \dv \phi}{\dv r}\mr_0(u'')\mr \D u''.
\end{aligned}
\end{equation}
This implies that
\begin{equation}\label{FIT154}
\begin{aligned}
|\psi|(u,v) &\leq e^{\int^u_{u_2} |J_5| (u',v) \D u'}\ml |\psi|(u_2,v) + \int^{u}_{u_2} e^{\int^{u''}_{u_2} |J_5|(u',v) \D u'}\ml |J_4|(u'',v) + |J_6|\ml \frac{r |\dv \phi|}{\dv r}\mr_0(u'',v)\mr \D u'' \mr.
\end{aligned}
\end{equation}
Here we use the fact that the term $\ii \e A_u$ is purely imaginary, and thus $e^{\ii \e A_u}$ constitutes an additional phase factor. We then proceed to estimate each of the terms $|J_4|, |J_5|, |J_6|$ in \eqref{FIT152}.
Starting from $|J_4|$, we can rewrite it as
\begin{equation}\label{FIT155}
\frac{r}{r_0}\ml - D_u \phi - \ii \e \frac{Q \phi (\du r)}{r(1-\mu)}\mr + \ml D_u \phi + \ii \e \frac{Q \phi (\du r)}{r(1-\mu)}\mr_0 = - \int^v_{v_0} \dv \ml \frac{r}{r_0} \ml D_u \phi + \ii \e \frac{Q \phi (\du r)}{r(1-\mu)} \mr \mr(u,v') \D v'.
\end{equation}
This inspires us to compute the integrand in \eqref{FIT155}, which yields
\begin{equation}\label{FIT156}
-\dv \ml \frac{r}{r_0} \ml D_u \phi + \ii \e \frac{Q \phi (\du r)}{r(1-\mu)} \mr \mr = - \frac{1}{r_0} \ml  \dv \ml r D_u \phi \mr +  \ii \e\dv \ml \frac{Q \phi (\du r)}{1-\mu}\mr \mr.
\end{equation}
Observe that by using \eqref{Setup19}, \eqref{Setup20}, \eqref{Setup27}, we have
\begin{equation}\label{FIT159}
\begin{aligned}
\dv \ml \frac{Q \phi (\du r)}{1-\mu}\mr =\; & \frac{\phi(\du r)(\dv Q)}{1-\mu} + \frac{Q (\dv \phi)(\du r)}{1-\mu} + \frac{Q \phi \dv \du r}{1-\mu} + \frac{Q\phi(\du r)(\dv \mu)}{(1-\mu)^2} \\
=\; &  (-4 \pi \e)\frac{\phi(\du r)}{1-\mu} r^2 \im{(\phi^{\dagger} \dv \phi)} + \frac{Q (\dv \phi)(\du r)}{1-\mu} + \frac{Q \phi(\du r)(\dv r)(\mu - Q^2/r^2)}{r(1-\mu)^2} \\
&+ \frac{Q\phi(\du r)}{(1-\mu)^2}\ml \frac{4 \pi r(1-\mu)|\dv \phi|^2}{\dv r} - \frac{\dv r}{r}\ml \mu - Q^2/r^2\mr\mr \\
=\; & (4 \pi \e)\frac{\phi(-\du r)r^2 \im{(\phi^{\dagger} \dv \phi)}}{1-\mu} + \frac{Q (\dv \phi)(\du r)}{1-\mu} +  4 \pi \frac{Q r |\dv \phi|^2(\du r) \phi}{\dv r(1-\mu)}.
\end{aligned}
\end{equation} 
Together with \eqref{Setup73}, equation \eqref{FIT159} implies that 
\begin{equation}\label{FIT160}
\begin{aligned}
& \dv \ml \frac{r}{r_0} \ml D_u \phi + \ii \e \frac{Q \phi (\du r)}{r(1-\mu)} \mr \mr \\
=\; & \frac{1}{r_0}\ml (-\du r) \dv \phi + \ii \e \frac{Q (\du r)(\dv r) \phi}{r(1-\mu)} +  (4 \pi \ii \e^2)\frac{\phi(-\du r)r^2 \im{(\phi^{\dagger} \dv \phi)}}{1-\mu} +\ii \e \frac{Q (\dv \phi)(\du r)}{1-\mu} +  4 \pi \ii \e \frac{Q r |\dv \phi|^2(\du r) \phi}{\dv r(1-\mu)}\mr.
\end{aligned}
\end{equation}
By \eqref{FIT155} and the fact that $\du \dv r \geq 0$,  we have the following estimate for $J_4$
\begin{equation}\label{FIT161}
|J_4|(u,v) \leq \frac{(-\du r)(u,v)}{r_0(u)}\ml |H_{1}| + |H_{2}| + |H_{3}| + |H_{4}| + |H_{5}|\mr(u,v),
\end{equation}
with
\begin{equation}\label{FIT162}
\begin{aligned}
|H_1|(u,v) &= \int^v_{v_0} |\dv \phi|(u,v') \D v', \\
|H_3|(u,v) &= 4 \pi \e^2 \int^v_{v_0}  \frac{r^2 |\phi|^2|\dv \phi|}{1-\mu}(u,v') \D v',  \\
|H_5|(u,v) &= 4 \pi \e \int^v_{v_0}  \frac{r |Q||\dv \phi|^2 |\phi|}{\dv r(1-\mu)}(u,v') \D v'. \\
\end{aligned} \quad 
\begin{aligned}
|H_2|(u,v) &= \e \int^v_{v_0}  \frac{|Q||\phi|(\dv r)}{r(1-\mu)}(u,v') \D v',\\
|H_4|(u,v) &= \e \int^v_{v_0} \frac{|Q||\dv \phi|}{1- \mu}(u,v') \D v', \\
\end{aligned}
\end{equation}

As \eqref{FIT35a} and \eqref{FIT35c} were verified, we thus have proved \eqref{FIT83}. Consequently, we have $\delta(u,v(u_n)) \leq \frac{1}{2}$ for each $u \in [u_2,u_n]$. This implies that \eqref{FIT34} is true and hence \eqref{FIT131} holds, not just for $u_n$, but also for each $u \in [u_2,u_n]$. This then yields that for each $v \in (v_0,v(u_n)]$, we have
\begin{equation}\label{FIT175}
\begin{aligned}
\eta^*(u,v) &\leq C_{36} \delta(u,v) \ml \log \ml \frac{1}{\delta(u,v)}\mr + e^{\gamma_0(u)}r_0(u)\mr \\
&\leq C_{38}\delta(u,v)^{\frac{1}{2}}e^{\gamma_0(u)}
\end{aligned}
\end{equation}
with
\begin{equation}\label{FIT270}
C_{38} := C_{36}\ml \frac{2}{e} + \frac{r_0(0)}{2}\mr.
\end{equation}
Here we have used  Lemma \ref{xlogx} and the fact that $\gamma_0(u_2)\geq 0$.

We start to bound $|H_1|$ to $|H_5|$ as follows. For $|H_1|$, using \eqref{FIT36}, \eqref{FIT130}, \eqref{FIT175}, we obtain
\begin{equation*}
\begin{aligned}
|H_1|(u,v) =& \int^v_{v_0}  |\dv \phi|(u,v') \D v'\\ 
\leq& \frac{1}{\sqrt{2\pi}} \ml \int^v_{v_0} \frac{2 \pi r^2(1-\mu)|\dv \phi|^2}{\dv r}(u,v') \D v'\mr^{\frac{1}{2}} \ml \int^v_{v_0} \frac{\dv r}{r^2(1-\mu)}(u,v') \D v'\mr^{\frac{1}{2}}
\end{aligned}
\end{equation*}
\begin{equation}\label{FIT163}
\begin{aligned}
&\leq \frac{\sqrt{3}}{\sqrt{8\pi} \ml 1-\mu_0(0) \mr^{\frac{1}{2}}} r(u,v)^{\frac{1}{2}} \eta^*(u,v)^{\frac{1}{2}} e^{\frac{1}{2}\gamma_0(u)}\ml \int^v_{v_0} \frac{\dv r}{r^2}(u,v') \D v'\mr^{\frac{1}{2}} \\
&\leq \frac{\sqrt{3}}{\sqrt{8\pi}\ml 1-\mu_0(0) \mr^{\frac{1}{2}}} \eta^*(u,v)^{\frac{1}{2}} e^{\frac{1}{2}\gamma_0(u)}\ml \frac{r(u,v)}{r(u,v_0)} - \frac{r(u,v)}{r(u,v)}\mr^{\frac{1}{2}} \\
&\leq \frac{\sqrt{3}C_{38}^{\frac{1}{2}}}{\sqrt{8\pi}\ml 1-\mu_0(0) \mr^{\frac{1}{2}}} e^{\gamma_0(u)}\delta(u,v)^{\frac{1}{2}}= C_{39} e^{\gamma_0(u)}\delta(u,v)^{\oh}
\end{aligned}
\end{equation}
with
\begin{equation}\label{FIT169}
C_{39} := \sqrt{\frac{3C_{38}}{8\sqrt{2}\pi(1-\mu_0(0))}}.
\end{equation}
Employing \eqref{FIT101} with $\chi \in \left(0,\frac{1}{3}\right]$, we can see that $|H_2|$ obeys
\begin{equation}\label{FIT164}
\begin{aligned}
|H_2|(u,v) &= \e \int^v_{v_0}  \frac{|Q||\phi|(\dv r)}{r(1-\mu)}(u,v') \D v' \leq \frac{3 C_{5} \e}{2(1-\mu_0(0))} e^{\gamma_0(u)} \int^v_{v_0} r^{\frac{5}{4}-\frac{3\chi}{4}} \dv r (u,v') \D v' \\
&\quad\quad\leq \frac{3 C_{5} \e r_0(0) r(u_2,v_0+\te)^{\frac{5}{4}-\frac{3\chi}{4}}}{2(1-\mu_0(0))} e^{\gamma_0(u)} \delta(u,v) = C_{40} e^{\gamma_0(u)}\delta(u,v)^{\frac{1}{2}}
\end{aligned}
\end{equation}
with
\begin{equation}\label{FIT170}
C_{40} := \frac{3 C_{5}(u_2,v_0+\te;\chi,2,\chi) \e r_0(0) r(u_2,v_0+\te)^{\frac{1}{4}-\frac{3\chi}{4}}}{2\sqrt{2}(1-\mu_0(0))}.
\end{equation}
To bound $|H_3|$, we first appeal to \eqref{FIT166} and use inequality $(a+b)^4 \leq 8(a^4 + b^4)$ to derive
\begin{equation}\label{FIT165}
\begin{aligned}
\int^v_{v_0} r^2|\phi|^4 \frac{\dv r}{1-\mu}(u,v') \D v' \leq \; & 8C_{35}^4\int^v_{v_0} r^2 \frac{\dv r}{1-\mu}(u,v') \D v' \\
&+  \frac{1}{2\pi^2}\int^v_{v_0} r^2 \frac{\dv r}{1-\mu}(u,v') \ml  \log \ml \frac{\frac{1 - \mu}{ \dv r}(u,v)}{\frac{1 - \mu}{ \dv r}(u_2,v)}\mr  \mr^{2}\ml \log \ml \frac{r(u_2,v)}{r(u,v)}\mr \mr^{2} \D v'. \\
\end{aligned}
\end{equation}
The first integrand on the right satisfies
\begin{equation}\label{FIT244}
\begin{aligned}
\int^v_{v_0} r^2\frac{\dv r}{1-\mu}(u,v') \D v' &\leq  \frac{3r^2(u_2,v_0 + \te)}{2(1-\mu_0(0))}e^{\gamma_0(u)}\int^v_{v_0} \dv r (u,v') \D v' \leq \frac{3r^2(u_2,v_0 + \te)r_0(0)}{2(1-\mu_0(0))}e^{\gamma_0(u)}\delta(u,v).
\end{aligned}
\end{equation}
By Lemma \ref{xlogx} and \eqref{FIT36}, the second integrated obeys
\begin{equation}\label{FIT245}
\begin{aligned}
& \int^v_{v_0} r^2 \frac{\dv r}{1-\mu}(u,v') \ml  \log \ml \frac{\frac{1 - \mu}{ \dv r}(u,v)}{\frac{1 - \mu}{ \dv r}(u_2,v)}\mr  \mr^{2}\ml \log \ml \frac{r(u_2,v)}{r(u,v)}\mr \mr^{2} \D v' \\
\leq \; &  \ml \frac{\bar{B}(u_2,v_0+\te)}{(1-\overline{\mu}_*)(u_2,v_0+\te)}\mr^{\frac{1}{2}}\frac{16 r(u_2,v_0+\te)^2}{e^4}\int^v_{v_0} \ml \frac{\dv r}{1-\mu}\mr^{\frac{1}{2}}(u,v')  \D v' \\
\leq \; &  \ml \frac{3\bar{B}(u_2,v_0+\te)\te}{2(1-\mu_0(0))(1-\overline{\mu}_*)(u_2,v_0+\te)}\mr^{\frac{1}{2}}\frac{16 r(u_2,v_0+\te)^2}{e^4} e^{\frac{1}{2}\gamma_0(u)} \ml \int^v_{v_0} \dv r(u,v')  \D v' \mr^{\frac{1}{2}} \\
\leq \; &  \ml \frac{3\bar{B}(u_2,v_0+\te)\te r_0(0)}{2(1-\mu_0(0))(1-\overline{\mu}_*)(u_2,v_0+\te)}\mr^{\frac{1}{2}}\frac{16 r(u_2,v_0+\te)^2}{e^4} e^{\frac{1}{2}\gamma_0(u)} \delta(u,v)^{\frac{1}{2}}.
\end{aligned}
\end{equation}
Combining \eqref{FIT165}, \eqref{FIT244}, \eqref{FIT245}, we then obtain
\begin{equation}\label{FIT246}
\int^v_{v_0} r^2|\phi|^4 \frac{\dv r}{1-\mu}(u,v') \D v' \leq C_{41}e^{\gamma_0(u)}\delta(u,v)^{\frac{1}{2}}
\end{equation}
with
\begin{equation}\label{FIT247}
C_{41} := \frac{12 C_{35}^4r^2(u_2,v_0 + \te)r_0(0)}{\sqrt{2}(1-\mu_0(0))} +  \ml \frac{3\bar{B}(u_2,v_0+\te)\te r_0(0)}{2(1-\mu_0(0))(1-\overline{\mu}_*)(u_2,v_0+\te)}\mr^{\frac{1}{2}}\frac{8 r(u_2,v_0+\te)^2}{\pi^2 e^4}.
\end{equation}
With these preparations, for $H_3$, we have
\begin{equation}\label{FIT168}
\begin{aligned}
|H_3|(u,v) &= 4 \pi \e^2 \int^v_{v_0}  \frac{r^2 |\phi|^2|\dv \phi|}{1-\mu}(u,v') \D v' \leq \frac{6 \pi \e^2}{(1-\mu_0(0))}e^{\gamma_0(u)} \int^v_{v_0}r^2|\phi|^2|\dv \phi|(u,v') \D v' \\
&\quad\leq \frac{6 \pi \e^2 e^{\gamma_0(u)}}{\sqrt{2\pi}(1-\mu_0(0))} \ml \int^v_{v_0}\frac{2\pi r^2(1-\mu)|\dv \phi|^2}{\dv r}(u,v') \D v' \mr^{\frac{1}{2}} \ml \int^v_{v_0} \frac{r^2 |\phi|^4 \dv r}{1-\mu}(u,v')\D v'\mr^{\frac{1}{2}} \\
&\quad\leq \frac{3 \sqrt{\pi} \e^2 C_{41}^{\frac{1}{2}} C_{38}^{\oh} r(u_2,v_0+\te)^{\frac{1}{2}}}{(1-\mu_0(0))}e^{\frac{7}{4}\gamma_0(u)} \delta(u,v)^{\frac{1}{2}} = C_{42} e^{\frac{7}{4}\gamma_0(u)}\delta(u,v)^{\frac{1}{2}}
\end{aligned}
\end{equation}
with
\begin{equation}\label{FIT171}
C_{42} := \frac{3 \sqrt{\pi} \e^2 C_{41}^{\frac{1}{2}} C_{38}^{\oh} r(u_2,v_0+\te)^{\frac{1}{2}}}{(1-\mu_0(0))}.
\end{equation}

Continuing on to $|H_4|$, by employing the estimate for $\int^v_{v_0} |\dv \phi|(u,v') \D v'$ that is already done in \eqref{FIT163} and applying \eqref{FIT226} with $\mu < 1$, we derive
\begin{equation}\label{FIT172}
\begin{aligned}
|H_4|(u,v) &= \e \int^v_{v_0} \frac{|Q||\dv \phi|}{1- \mu}(u,v') \D v' \leq \frac{\e \sqrt{\xi} r(u_2,v_0+\te)}{1-\mu_0(0)} \int^v_{v_0}|\dv \phi|(u,v') \D v' \\ 
&\quad\quad\leq \frac{\e \sqrt{\xi} r(u_2,v_0+\te) C_{39}}{1-\mu_0(0)}e^{2\gamma_0(u)}\delta(u,v)^{\frac{1}{2}} = C_{43} e^{2\gamma_0(u)}\delta(u,v)^{\frac{1}{2}}
\end{aligned}
\end{equation}
with 
\begin{equation}\label{FIT173}
 C_{43} := \frac{\e \sqrt{\xi} r(u_2,v_0+\te) C_{39}}{1-\mu_0(0)}.
\end{equation}
For $|H_5|$, using \eqref{FIT101}, we deduce 
\begin{equation}\label{FIT174}
\begin{aligned}
|H_5|(u,v) &= 4 \pi \e \int^v_{v_0}  \frac{r |Q||\dv \phi|^2 |\phi|}{\dv r(1-\mu)}(u,v') \D v' \\
&\leq \frac{9\e C_5 r^{\frac{1}{4}-\frac{3\chi}{4}}(u_2,v_0+\te)}{2(1-\mu_0(0))^2} e^{2\gamma_0(u)}\int^v_{v_0} \frac{2\pi r^2(1-\mu)|\dv \phi|^2}{\dv r}(u,v') \D v' \\
&\leq \frac{9\e C_5 C_{38} r^{\frac{5}{4}-\frac{3\chi}{4}}(u_2,v_0+\te) }{4(1-\mu_0(0))^2} e^{3\gamma_0(u)}\delta(u,v)^{\frac{1}{2}} = C_{44}e^{3\gamma_0(u)}\delta(u,v)^{\frac{1}{2}}
\end{aligned}
\end{equation}
with
\begin{equation}\label{FIT176}
C_{44} := \frac{9\e C_5 C_{38} r^{\frac{5}{4}-\frac{3\chi}{4}}(u_2,v_0+\te) }{4(1-\mu_0(0))^2} .
\end{equation}
Summarizing the above estimates for $|H_1|$ to $|H_5|$ and substituting them back to \eqref{FIT161}, we obtain
\begin{equation}\label{FIT177}
\begin{aligned}
|J_4|(u,v) &\leq (C_{39} + C_{40} + C_{42} + C_{43} + C_{44})\frac{(-\du r)(u,v)}{r_0(u)} e^{3\gamma_0(u)}\delta(u,v)^{\frac{1}{2}} \\
&=  (C_{39} + C_{40} + C_{42} + C_{43} + C_{44}) \ml -\frac{\du r}{r}\mr(u,v) \frac{r(u,v)}{r_0(u)} e^{3\gamma_0(u)}\delta(u,v)^{\frac{1}{2}} \\
&\leq C_{45} \ml \frac{-\du r}{r}\mr_0(u) e^{3\gamma_0(u)}\delta(u,v)^{\frac{1}{2}}
\end{aligned}
\end{equation}
with
\begin{equation}\label{FIT178}
C_{45} := \frac{9}{2}\ml C_{39} + C_{40} + C_{42} + C_{43} + C_{44} \mr.
\end{equation}
Here we use the fact that $\delta(u,v) \leq 1/2$ and the inequality \eqref{FIT52} in the proof of Lemma \ref{FITEstLemma5} holds.

We now proceed to estimate $|J_5|$. From the expression of $J_5$ in \eqref{FIT152}, we can rewrite it as
\begin{equation}\label{FIT179}
J_5(u,v) = \int^v_{v_0} \ml \dv \ml \frac{\du r}{r} \mr + \dv 
 \du \gamma \mr(u,v') \, \D v'.
\end{equation}
By \eqref{Setup20}, the first term in the integrand yields
\begin{equation}\label{FIT180}
\begin{aligned}
\dv \ml \frac{\du r}{r} \mr = \frac{\du r \dv r}{r^2}\ml \frac{2 \mu - 1 - Q^2/r^2}{1-\mu}\mr.
\end{aligned}
\end{equation}
For the second term in the integrand, observe that
\begin{equation}\label{FIT86}
\begin{aligned}
\partial_v\partial_u \gamma=\; &\dv \ml \frac{-\du r}{r} \frac{\mu - Q^2/r^2}{1-\mu}\mr \\
=\; & J_8 + J_{9} + J_{10}
\end{aligned}
\end{equation}
with 
\begin{equation}\label{FIT248}
J_8 := T_1, \quad J_9 := T_2, \quad \text{ and } \quad J_{10} := T_3,
\end{equation}
and $T_1, T_2,$ and $T_3$ defined in \eqref{Setup98}.
Henceforth, we have
\begin{equation}\label{FIT249}
\begin{aligned}
\dv \ml \frac{\du r}{r} \mr + \dv \du \gamma &= J_8 + J_{9} + J_{10} +  \frac{\du r \dv r}{r^2}\ml \frac{2 \mu - 1 - Q^2/r^2}{1-\mu}\mr \\
&= J_{8} + J_{9} + J_{11}
\end{aligned}
\end{equation}
with
\begin{equation}\label{FIT188}
J_{11} := \frac{\du r \dv r}{r^2}\ml -4 + \frac{3 - 5Q^2/r^2}{1-\mu}\mr.
\end{equation}
Returning back to \eqref{FIT179}, we deduce that
\begin{equation}\label{FIT187}
\begin{aligned}
|J_{5}|(u,v) \leq \int^v_{v_0} |J_8| + |J_{9}| + |J_{11}|(u,v')\, \D v'.
\end{aligned}
\end{equation}

To estimate $|J_8|, |J_{9}|, |J_{11}|$, we utilize similar strategies as above. Together with \eqref{FIT52}, for $|J_8|$ we derive
\begin{equation}\label{FIT189}
\begin{aligned}
\int^v_{v_0} |J_8|(u,v') \D v' &\leq 24 \pi \e \ml \frac{-\du r}{r}\mr_0(u) \int^v_{v_0} \frac{|Q||\phi||\dv \phi|}{1-\mu}(u,v') \D v' \\
&\leq \frac{24 \pi \e}{\sqrt{2\pi}} \ml \frac{-\du r}{r}\mr_0(u) \ml \int^v_{v_0} \frac{2\pi r^2(1-\mu)|\dv \phi|^2}{\dv r}(u,v') \D v'\mr^{\oh} \ml \int^v_{v_0} \frac{\dv r}{r^2}\frac{|Q|^2|\phi|^2}{(1-\mu)^3}(u,v') \mr^{\oh} \\
&\leq \frac{18\sqrt{3\pi} \e C_5 r^{\frac{3}{4}-\frac{\chi}{4}}(u_2,v_0+\te)}{\sqrt{2}(1-\mu_0(0))^{\frac{3}{2}}} \ml \frac{-\du r}{r}\mr_0(u) e^{\frac{3}{2}\gamma_0(u)} \eta^*(u,v)^{\oh} \ml \int^v_{v_0} \dv r(u,v') \mr^{\oh} \\
&\leq C_{46}  \ml \frac{-\du r}{r}\mr_0(u) e^{2\gamma_0(u)}\delta(u,v)^{\frac{1}{2}}
\end{aligned}
\end{equation}
with
\begin{equation}
C_{46} := \frac{18\sqrt{3\pi} \e C_5 r^{\frac{3}{4}-\frac{\chi}{4}}(u_2,v_0+\te) r_0(0)^{\frac{1}{2}}}{2^{\frac{3}{4}}(1-\mu_0(0))^{\frac{3}{2}}}.
\end{equation}
For $|J_{9}|$, using $Q^2/r^2 < 1$, we get
\begin{equation}\label{FIT190}
\begin{aligned}
\int^v_{v_0} |J_{9}|(u,v') \D v' &\leq 6 \ml \frac{-\du r}{r}\mr_0(u) \int^v_{v_0} \frac{2 \pi r|\dv \phi|^2}{\dv r(1-\mu)}(u,v') \D v' \\
&\leq \frac{27}{2(1-\mu_0(0))^2} \ml \frac{-\du r}{r}\mr_0(u) e^{2\gamma_0(u)}\frac{r(u,v)}{r_0(u)}\frac{1}{r(u,v)}\int^v_{v_0} \frac{2 \pi r^2(1-\mu)|\dv \phi|^2}{\dv r}(u,v') \D v' \\
&\leq \frac{81}{8(1-\mu_0(0))^2} \ml \frac{-\du r}{r}\mr_0(u) e^{2\gamma_0(u)} \eta^*(u,v) \\
&\leq C_{47} \ml \frac{-\du r}{r}\mr_0(u) e^{3\gamma_0(u)}\delta(u,v)^{\frac{1}{2}},
\end{aligned}
\end{equation}
with
\begin{equation}
C_{47} := \frac{81 C_{38}}{8(1-\mu_0(0))^2}.
\end{equation}
Similarly, we also obtain
\begin{equation}\label{FIT191}
\begin{aligned}
\int^v_{v_0} |J_{11}|(u,v') \D v' &\leq 3\ml -\frac{\du r}{r}\mr_0(u) \int^v_{v_0}  \frac{\dv r}{r}\ml 4 + \frac{3 + 5Q^2/r^2}{1-\mu}\mr(u,v') \D v' \\
&\leq \frac{36}{1-\mu_0(0)} \ml\frac{-\du r}{r}\mr_0(u)  e^{\gamma_0(u)}\int^v_{v_0}  \frac{\dv r}{r}(u,v') \, \D v' \\
&\leq \frac{36}{\sqrt{2}(1-\mu_0(0))} \ml\frac{-\du r}{r}\mr_0(u) 
  e^{\gamma_0(u)}\delta(u,v)^{\frac{1}{2}}. \\
\end{aligned}
\end{equation}

For $J_6$, we observe that it is the addition of $J_5$ and a term that can be estimated by applying 
$$\partial_v A_u=\frac{2Q\partial_u r \partial_v r}{r^2(1-\mu)} \quad \mbox{ and } \quad \frac{Q^2}{r^2} < 1$$ 
as below
\begin{equation}\label{FIT192}
\begin{aligned}
|-\ii \e(A_u(u,v) - (A_u)_0(u))| &= \mlm \e \int^v_{v_0} \dv A_u (u,v') \D v'\mrm \leq \int^v_{v_0} \frac{2\e |Q|(-\du r)(\dv r)}{r^2(1-\mu)}(u,v')\D v' \\
&\quad\quad\leq \frac{6\e }{(1-\mu_0(0))}\ml \frac{-\du r}{r}\mr_0(u) e^{\gamma_0(u)}\int^v_{v_0} \dv r (u,v')\D v' \\
&\quad\quad\leq  \frac{6\e r_0(0)}{\sqrt{2}(1-\mu_0(0))}\ml \frac{-\du r}{r}\mr_0(u) e^{\gamma_0(u)}\delta(u,v)^{\frac{1}{2}}.
\end{aligned}
\end{equation}
Combining the above estimates, we then obtain
\begin{equation}\label{FIT194}
\begin{aligned}
|J_5|(u,v) &\leq C_{48} \ml \frac{-\du r}{r}\mr_0(u) e^{3 \gamma_0(u)} \delta(u,v)^{\frac{1}{2}}, \\
|J_6|(u,v) &\leq C_{49} \ml \frac{-\du r}{r}\mr_0(u) e^{3 \gamma_0(u)} \delta(u,v)^{\frac{1}{2}}, \\
\end{aligned}
\end{equation}
where
\begin{equation}\label{FIT195}
\begin{aligned}
C_{48} &:= C_{46} + C_{47} + \frac{36}{\sqrt{2}\ml 1-\mu_0(0)\mr} \quad \mbox{ and } \quad
C_{49} &:= C_{48} + \frac{6\e r_0(0)}{\sqrt{2} \ml 1-\mu_0(0)\mr}. 
\end{aligned}
\end{equation}

With the desired estimates for $|J_4|, |J_5|, |J_6|$ as obtained above, we are now in shape to conclude. Applying the crucial inequality \eqref{FIT83}, for each $u \in [u_2,u_n]$ and $v \in (v_0,v(u_n)]$, we deduce that
\begin{equation}\label{FIT200}
\begin{aligned}
\int^u_{u_2} |J_5|(u',v) \D v' &\leq C_{48} \int^u_{u_2} \ml \frac{-\du r}{r}\mr_0(u') e^{3\gamma_0(u')}\delta(u',v)^{\frac{1}{2}}\D u' \\
&\leq 2C_{48} e^{\ml \frac{15}{4} + \frac{3}{2}\xi \mr\gamma_0(u)}\delta(u,v)^{\frac{1}{2}} \int^u_{u_2} \ml \frac{-\du r}{r}\mr_0(u') e^{-\frac{1}{4}\log\ml \frac{r_0(u')}{r_0(u)}\mr} \D u' \\
&\leq 8C_{48} e^{\ml \frac{15}{4} + \frac{3}{2}\xi \mr\gamma_0(u)}\delta(u,v)^{\frac{1}{2}}.
\end{aligned}
\end{equation}
In a similar fashion, we also derive
\begin{equation}\label{FIT197}
\int^u_{u_2} |J_4|(u',v)\D u' \leq  8C_{45} e^{\ml \frac{15}{4} + \frac{3}{2}\xi \mr\gamma_0(u)}\delta(u,v)^{\frac{1}{2}},
\end{equation}

\begin{equation}\label{FIT201}
\int^u_{u_2} |J_6|(u',v)\D u' \leq  8C_{49} e^{\ml \frac{15}{4} + \frac{3}{2}\xi \mr\gamma_0(u)}\delta(u,v)^{\frac{1}{2}}.
\end{equation}
In particular, for each given $u$ and a correspondingly given $u'' \in [u_2,u]$, by applying \eqref{FIT83} and our assumption \eqref{FIT35e}, we obtain
\begin{equation}\label{FIT250}
\begin{aligned}
\int_{u_2}^{u''} |J_5|(u',v) \D v' &\leq 8C_{48} e^{\ml \frac{15}{4} + \frac{3}{2}\xi \mr\gamma_0(u)}\delta(u'',v)^{\frac{1}{2}} \leq 8C_{48} e^{\ml \frac{9}{2} + 3\xi \mr\gamma_0(u)}\delta(u,v(u_n))^{\frac{1}{2}} \leq \log(3).
\end{aligned}
\end{equation}
Consequently, this yields
\begin{equation}\label{FIT198}
\int^u_{u_2} e^{\int^{u''}_{u_2}|J_5|(u',v)\D u'} |J_4|(u'',v) \D u'' \leq 24C_{45}e^{\ml \frac{15}{4} + \frac{3}{2}\xi \mr \gamma_0(u)}\delta(u,v)^{\frac{1}{2}},
\end{equation}
and by \eqref{FIT207} we have
\begin{equation}\label{FIT199}
\begin{aligned}
\int^u_{u_2} e^{\int^{u''}_{u_2}|J_5|(u',v)\D u'} |J_6|(u'',v)\ml \frac{r|\dv\phi|}{\dv r}\mr_0(u'') \D u'' &\leq 3be^{\gamma_0(u)} \int^u_{u_2} |J_6|(u',v)\D u'\\
&\leq 24b C_{49}e^{(\frac{19}{4}+\frac32\xi)\gamma_0(u)}\delta(u,v)^{\frac{1}{2}}.
\end{aligned}
\end{equation}
Collecting \eqref{FIT250}, \eqref{FIT198}, \eqref{FIT199} and going back to \eqref{FIT154}, we arrive at
\begin{equation}\label{FIT251}
|\psi|(u,v) \leq 3 \ml |\psi|(u_2,v) + 24C_{45}e^{\ml \frac{15}{4} + \frac{3}{2}\xi \mr \gamma_0(u)}\delta(u,v)^{\frac{1}{2}} +24b C_{49}e^{(\frac{19}{4}+\frac32\xi)\gamma_0(u)}\delta(u,v)^{\frac{1}{2}} \mr.
\end{equation}
Together with \eqref{FIT252}, for each $v \in [v_0,v(u_n)]$, we then conclude that \eqref{FIT251} implies
\begin{equation}\label{FIT253}
|\psi|(u_n,v) \leq 3 \ml \frac{h}{54} \times 3\mr = \frac{h}{6}.
\end{equation}
Taking the supremum over $[v_0,v(u_n)]$, we now verify \eqref{FIT127}.
\qed \\

Back to the proof of the main theorem, by assuming \eqref{FIT35e} and \eqref{FIT252}, Lemma \ref{FITpsievolution} implies that \eqref{FIT127} holds. Hence, it remains to check that such a choice of $v(u_n)$ in \eqref{FIT137} satisfies the assumptions \eqref{FIT35a}, \eqref{FIT35c}, \eqref{FIT35e}, \eqref{FIT252}. 
For \eqref{FIT35a}, \eqref{FIT35c}, \eqref{FIT35e}, at $u = u_n$ with $n$ sufficiently large, since $\gamma_0(u_n) \rightarrow \infty$, we have
\begin{equation}\label{FIT139}
\begin{aligned}
\delta(u_n,v(u_n))\log \ml \frac{1}{\delta(u_n,v(u_n))}\mr &= e^{-\frac{C_{37}}{4C_{36}}e^{\gamma_0(u_n)}}
\log\ml e^{\frac{C_{37}}{4C_{36}}e^{\gamma_0(u_n)}}\mr \leq C_{33}e^{-2\gamma_0(u_n)},
\end{aligned}
\end{equation}
\begin{equation}\label{FIT140}
\begin{aligned}
\delta(u_n,v(u_n)) &= e^{-\frac{C_{37}}{4C_{36}}e^{\gamma_0(u_n)}} \leq \min\left\{ \frac{1}{2},\frac{2(1 - \mu_0(0))e^{-\gamma_0(u_n)}}{3}, \frac{e^{-\frac{3}{2}(1+2\xi)\gamma_0(u_n)}}{2 C_{34}} \right\},
\end{aligned}
\end{equation}
\begin{equation}\label{FIT257}
\begin{aligned}
\delta(u_n,v(u_n)) &= e^{-\frac{C_{37}}{4C_{36}}e^{\gamma_0(u_n)}} \\
&\leq \min \left\{ \ml \frac{\log(3)}{8C_{48}}\mr^2e^{-(9+6\xi)\gamma_0(u_n)}, \ml \frac{h}{1296C_{45}}\mr^2 e^{-\ml \frac{15}{2} + 3\xi\mr\gamma_0(u_n)}, \ml \frac{h}{1296 b C_{49}}\mr^2 e^{- \frac{19}{2}\gamma_0(u_n)} \right\}.
\end{aligned}
\end{equation}
To verify that the above assumptions all hold for all $u \in [u_2,u_n]$, let $u_*$ be the smallest $u \in [u_2,u_n]$ with \eqref{FIT35a}, \eqref{FIT35c}, \eqref{FIT35e} satisfied for all $u \in [u_*,u_n]$. We can see that, if $u = u_* > u_2$, then one of the following three scenarios would happen
\begin{equation}\label{FIT243}
\delta(u_*,v(u_n))\log \ml \frac{1}{\delta(u_*,v(u_n))}\mr  = C_{33}e^{-2\gamma_0(u_*)},
\end{equation}
or
\begin{equation}\label{FIT141}
\delta(u_*,v(u_n)) = \min\left\{ \frac{1}{2},\frac{2(1 - \mu_0(0))e^{-\gamma_0(u_*)}}{3}, \frac{e^{-\frac{3}{2}(1+2\xi)\gamma_0(u_*)}}{2 C_{34}} \right\},
\end{equation}
or
\begin{equation}\label{FIT258}
\delta(u_*,v(u_n)) = \min \left\{ \ml \frac{\log(3)}{8C_{48}}\mr^2e^{-(9+6\xi)\gamma_0(u_*)}, \ml \frac{h}{1296C_{45}}\mr^2 e^{-\ml \frac{15}{2} + 3\xi\mr\gamma_0(u_*)}, \ml \frac{h}{1296 b C_{49}}\mr^2 e^{- \frac{19}{2}\gamma_0(u_*)} \right\}.
\end{equation}
By referring to the proof of Lemma \ref{FITEstLemma5}, we have that \eqref{FIT83} is true, as long as the assumption \eqref{FIT35c} holds for all $u \in [u_3,u_4]$. By setting $u_3 = u_*$ and $u_4 = u_n$, we can see that \eqref{FIT83} implies
\begin{equation}\label{FIT240}
\begin{aligned}
\delta(u_*,v(u_n))&\log \ml \frac{1}{\delta(u_*,v(u_n))}\mr \leq \frac{2}{e}\delta(u_*,v(u_n))^{\frac{1}{2}}
\\
&\leq \frac{4}{e}\delta(u_n,v(u_n))^{\frac{1}{2}} e^{-\frac{1}{4}\log \ml \frac{r_0(u_*)}{r_0(u_n)}\mr + \frac{3}{4}\ml 1 + 2\xi \mr\gamma_0(u_n)}= 4e^{-\frac{C_{37}}{8C_{36}}e^{\gamma_0(u_n)} + \frac{3}{4}\ml 1 + 2\xi \mr\gamma_0(u_n)-1}
\end{aligned}
\end{equation}
and 
\begin{equation}\label{FIT256}
\begin{aligned}
\delta(u_*,v(u_n)) &\leq \delta(u_n,v(u_n)) 4 e^{-\frac{1}{2}\log \ml \frac{r_0(u_*)}{r_0(u_n)}\mr + \frac{3}{2}\ml 1 + 2\xi \mr\gamma_0(u_n)} \\
&\leq 4e^{-\frac{C_{37}}{4C_{36}}e^{\gamma_0(u_n)} + \frac{3}{2}\ml 1 + 2\xi \mr\gamma_0(u_n)}.
\end{aligned}
\end{equation}
Hence, we observe that \eqref{FIT256} contradicts both \eqref{FIT243} and \eqref{FIT258}, while \eqref{FIT240} contradicts \eqref{FIT141} for sufficiently large $n$. We then deduce that $u_* = u_2$ necessarily, and hence the assumptions \eqref{FIT35a}, \eqref{FIT35c}, and \eqref{FIT35e} hold for all $u \in [u_2,u_n]$. 

It remains to verify that \eqref{FIT252} is true for a sufficiently large $n$. To do so, by the continuity of $|\psi|(u_2,\cdot)$, it suffices to prove that as $n \rightarrow \infty$, we have $v(u_n) \rightarrow v_0$. By \eqref{FIT83}, as $n \rightarrow \infty$, we indeed have
\begin{equation}\label{FIT254}
\delta(u_2,v(u_n)) \leq 4\delta(u_n,v(u_n))e^{\frac{3}{2}\ml 1 + 2\xi \mr\gamma_0(u_n)} = 4e^{-\frac{C_{37}}{4C_{36}}e^{\gamma_0(u_n)}+\frac{3}{2}(1+2\xi)\gamma_0(u_n)} \rightarrow 0.
\end{equation}
Henceforth, employing the continuity of $\delta(u_2,\cdot)$, applying Lemma \ref{FITEstLemma4} and the fact that $\delta(u_2,v_0) = 0$, we conclude that $v(u_n) \rightarrow v_0$ as $n \rightarrow \infty$, which thus verifies \eqref{FIT252}. This completes our discussion for cases under \textit{Scenario (I)}.

\vspace{5mm}

For \textit{Scenario (II)}, in each of the cases, we have 
\begin{equation}
\limsup_{u \rightarrow v_0^-}|I(u)| = +\infty.
\end{equation}
This implies that there exists an increasing sequence $u_n \rightarrow v_0^-$ such that $|I(u_n)| \rightarrow +\infty$ as $n \rightarrow \infty$. Following \eqref{FIT207}, we now define $b:[0,v_0) \rightarrow \mathbb{R}$ as
\begin{equation}\label{FIT208}
b(u) := \sup_{u'\in [0,u]} \ml \frac{r|\dv \phi|}{\dv r}\mr_0(u')e^{-\gamma_0(u')}.
\end{equation}
Note that $b(u)$ is a non-decreasing function and $b(u_n)\rightarrow+\infty$ as $u_n\rightarrow v_0^-$. The latter follows from \eqref{FIT13} and $|I(u_n)| \rightarrow +\infty$ as $u_n \rightarrow v_0^-$. Hence, there exists an increasing sequence $u_n \rightarrow v_0^-$ such that
\begin{equation}\label{FIT259}
b(u_n) = \ml \frac{r|\dv \phi|}{\dv r}\mr_0(u_n)e^{-\gamma_0(u_n)}.
\end{equation}
Parallel to the arguments presented in \textit{Scenario (I)}, we observe that all the corresponding estimates still hold as long as the inequalities in \eqref{FIT35a} and \eqref{FIT35c} are satisfied. By modifying \eqref{FIT35e} via replacing $h$ with $1$ and replacing $b$ with $b(u)$ such that for each $u \in [u_2,v_0)$ we have
\begin{equation}\label{FIT35f}
\delta \ml u,v(u_n)\mr \leq \min \left\{ \ml \frac{\log(3)}{8C_{48}}\mr^2e^{-(9+6\xi)\gamma_0(u)}, \ml \frac{1}{1296C_{45}}\mr^2 e^{-\ml \frac{15}{2} + 3\xi\mr\gamma_0(u)}, \ml \frac{1}{1296 b(u) C_{49}}\mr^2 e^{- \frac{19}{2}\gamma_0(u)} \right\},
\end{equation}
then we can derive
\begin{equation}\label{FIT260}
\sup_{v\in[v_0,v(u_n)]} |\psi|(u_2,v) \leq \frac{1}{54}.
\end{equation}
This further implies
\begin{equation}\label{FIT261}
\sup_{v\in[v_0,v(u_n)]} |\psi|(u_n,v) \leq \frac{1}{6}.
\end{equation}
Henceforth, by imposing \eqref{FIT35a}, \eqref{FIT35c}, \eqref{FIT35f}, it suffices to only check that \eqref{FIT252} holds. To do so, we return to \eqref{FIT128} and instead we deduce that
\begin{equation}\label{FIT212}
\begin{aligned}
\frac{r(u_n,v')}{r_0(u_n)}\frac{r|\dv \phi|}{\dv r}(u_n,v') \geq \ml b(u_n) - \frac{1}{6} \mr e^{\gamma_0(u_n)} \geq \frac{1}{2}b(u_n) e^{\gamma_0(u_n)}
\end{aligned}    
\end{equation}
for sufficiently large $n$ such that $b(u_n) > \frac{1}{3}$. On the other hand, for $\eta^*(u_n,v(u_n))$,  instead of \eqref{FIT133}, we now have
\begin{equation}\label{FIT214}
\begin{aligned}
\eta^*(u_n,v(u_n)) &=  \frac{2}{r \ml u_n,v(u_n) \mr}\ml \int^{v(u_n)}_{v_0} \frac{2\pi r^2(1-\mu)|\dv \phi|^2}{\dv r}(u_n,v') \mathrm{d}v' \mr  \\
&\geq \frac{8\pi \ml 1-\mu_0(0)\mr }{3r \ml u_n,v(u_n)\mr }e^{-\gamma_0(u_n)}r^2_0(u_n)\ml \int^{v(u_n)}_{v_0} (\dv r)\frac{1}{r^2(u_n,v')}
\frac{r^2(u_n,v')}{r^2_0(u_n)}\frac{ r^2|\dv \phi|^2}{(\dv r)^2}(u_n,v') \mathrm{d}v' \mr  \\
&\geq \frac{2\pi \ml 1-\mu_0(0) \mr b(u_n)^2 }{3r \ml u_n,v(u_n) \mr }e^{\gamma_0(u_n)}r^2_0(u_n)\ml \int^{v(u_n)}_{v_0} \frac{(\dv r)(u_n,v')}{r^2(u_n,v')} \mathrm{d}v' \mr  \\
&= \frac{2\pi \ml 1-\mu_0(0) \mr b(u_n)^2 }{3}e^{\gamma_0(u_n)}\ml \frac{r_0(u_n)}{r \ml u_n,v(u_n)\mr } \mr^2\ml \frac{r(u_n,v(u_n)) - r_0(u_n)}{r_0(u_n)}\mr  \\
&\geq C_{50} b(u_n)^2\delta \ml u_n,v(u_n) \mr e^{\gamma_0(u_n)}  \\
\end{aligned}
\end{equation}
with 
\begin{equation}\label{FIT215}
C_{50} = \frac{8\pi(1-\mu_0(0))}{27}.
\end{equation}
Comparing with \eqref{FIT131}, we observe that a contradiction arises if the following inequality holds
\begin{equation}\label{FIT216}
\log \ml \frac{1}{\delta(u_n,v(u_n))}\mr e^{-\gamma_0(u_n)} + r_0(u_n)  < \frac{C_{50}}{C_{36}} b(u_n)^2.
\end{equation}
For this to happen, we pick
\begin{itemize}[leftmargin = *]
\item $n$ sufficiently large such that \begin{equation}\label{FIT217}
r_0(u_n) \leq \frac{C_{50}}{4C_{36}}b(u_n)^2
\end{equation}
as $r_0(u_n) \rightarrow 0$ while $b(u_n) \rightarrow + \infty$, 
\item $v(u_n)$ sufficiently close to $v_0$ verifying
\begin{equation}\label{FIT218}
\delta(u_n,v(u_n)) = \frac{1}{2}e^{-b(u_n)^2\frac{C_{50}}{4C_{36}}e^{\gamma_0(u_n)}}.
\end{equation}
\end{itemize}
It remains to show that \eqref{FIT218} implies \eqref{FIT35a}, \eqref{FIT35c}, \eqref{FIT35f}, \eqref{FIT260}. By employing a similar argument as in \textit{Scenario (I)} presented above, it suffices to conduct the following modifications to \eqref{FIT240} and \eqref{FIT256} respectively, i.e.,
\begin{equation}\label{FIT262}
\delta(u_*,v(u_n))\log \ml \frac{1}{\delta(u_*,v(u_n))}\mr \leq 4e^{-\frac{C_{50}}{8C_{36}}b(u_n)e^{\gamma_0(u_n)}+\frac{3}{4}(1+2\xi)\gamma_0(u_n) - 1}
\end{equation}
and
\begin{equation}\label{FIT263}
\delta(u_*,v(u_n)) \leq 4e^{-\frac{C_{50}}{4C_{36}}b(u_n)e^{\gamma_0(u_n)}+\frac{3}{2}(1+2\xi)\gamma_0(u_n)}.
\end{equation}
With \eqref{FIT262} and \eqref{FIT263}, in an analogous manner as above, we can conclude that \eqref{FIT35a}, \eqref{FIT35c}, \eqref{FIT35f} hold. To verify \eqref{FIT260}, we use the fact that 
$$\delta(u_2,v(u_n)) \leq 4e^{-\frac{C_{50}}{4C_{36}}b(u_n)e^{\gamma_0(u_n)}+\frac{3}{2}(1+2\xi)\gamma_0(u_n)} \rightarrow 0.$$ 
This enables us that we can argue exactly as in \textit{Scenario (I)} to deduce that \eqref{FIT260} is satisfied. This then concludes the proof for the cases in \textit{Scenario (II)}. And we finish the proof of Theorem \ref{FITTheorem}.
\end{proof}

\section{Second Instability Theorem}\label{Second Instability Theorem}
We proceed to prove the second instability theorem for the charged system that we are working with. To do so, we consider the case not covered in Theorem \ref{FITTheorem}, i.e., with $I$ as defined in \eqref{FIT14} satisfying
\begin{equation}\label{SIT1A}
\lim_{u \rightarrow v_0^-}I(u) < + \infty
\end{equation}
and
\begin{equation}\label{SIT1}
\ml \frac{r \dv \phi}{\dv r} \mr_0 (0) = \lim_{u \rightarrow v_0^-} I(u).
\end{equation}
In \cite{christ4}, Christodoulou proved the corresponding second instability theorem for the spherically symmetric Einstein-scalar field system by asserting that he has a positive constant $p < 1$ such that
\begin{equation}\label{SIT2}
\limsup_{u \rightarrow v_0^-}\left\{ \mlm \ml \frac{r \dv \phi}{\dv r}\mr_0(0) - I(u)\mrm e^{\frac{1}{2}p\gamma_0(u)} \right\} \neq 0.
\end{equation}
However, for the charged case, it is not clear \textit{a priori} whether the above assertion still holds. Here we give an argument for the charged case without asserting this assumption. The theorem to be proven in this section is recorded below
\begin{theorem}\label{SITTheorem} (Second Instability Theorem.) Along $v = v_0$, assume that $\gamma_0(u)$ is unbounded as $u \rightarrow v_0^-$. Furthermore, we require that the limit for $I(u)$ exists and is given by \eqref{SIT1}. Under these assumptions, for $v \in [v_0,\infty]$, we define
\begin{equation}\label{SIT48}
h^2(v) := \frac{1}{v - v_0}\int_{v_0}^{v} |\psi|^2(0,v') \D v',
\end{equation}
with $\psi$ given in \eqref{FIT30}. Let $v(u)$ be the function given by
\begin{equation}\label{SIT166}
v(u) = v_0 + e^{-14\gamma_0(u) - \log \ml \frac{r_0(0)}{r_0(u)}\mr}
\end{equation}
for $u \in [u_7,v_0)$ with some $u_7 < v_0$ sufficiently close to $v_0$. Also denote $g(v(u))$ to be 
\begin{equation}\label{SIT167}
g(v(u)) :=  V(u) + e^{-\frac{1}{4}\gamma_0(u)} + e^{- \frac{1}{2}\log\ml\frac{r_0(0)}{r_0(u)}\mr}
\end{equation} 
with
\begin{equation}\label{SIT169}
V(u) := \mlm \ml \frac{r \dv \phi}{\dv r}\mr_0(0) - I(u)\mrm.
\end{equation}
Based on these requirements, if the following assumption holds
\begin{equation}\label{SIT168}
\limsup_{u \rightarrow v_0^-}\frac{h(v(u))}{g(v(u))} = \infty,
\end{equation} then the same conclusion as for Theorem \ref{FITTheorem} still holds.
\end{theorem}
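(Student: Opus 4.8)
The plan is to argue by contradiction, exactly as in the proof of Theorem \ref{FITTheorem}: assuming the conclusion fails, Lemma \ref{FITLemma2} and Proposition \ref{SetupProp3} produce a coordinate patch $\mathcal{P} = [u_1,v_0)\times[v_0,v_0+\varepsilon]\subset\mathcal{R}$, and we pass to the refinements $\widehat{\mathcal{P}}, \Tilde{\mathcal{P}}, \mathcal{P}'$ as before, so that the trapped-surface criterion \eqref{FIT34}, the blueshift estimates of Lemmas \ref{FITEstLemma1} and \ref{FITEstLemma3}, the Doppler identity of Lemma \ref{FITLemma3}, and the bounds on $|\phi|$, $|Q\phi|$ from Lemma \ref{FITEstLemma10} are all at our disposal on $\mathcal{P}'$. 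The new ingredient replacing the pointwise deviation argument of Section \ref{First Instability Theorem} is an \emph{averaged} (i.e. $L^2$-in-$v$) lower bound for $\eta^*$, driven by the nonvanishing of $h(v)$ relative to the geometric scale $g(v(u))$. First I would record the auxiliary drift lemma: for fixed $\tu\in[u_2,v_0)$, if $\delta(u,v(\tu))\le \frac{e^{-(\frac{15}{2}+3\xi)\gamma_0(u)}}{64C_{53}^2}$ for all $u\in[u_2,\tu]$, then $|\tgam(u,v)-\tgam_0(u)|\le 1$ for $u\in[u_2,\tu]$, $v\in[v_0,v(\tu)]$; this follows by writing $\dv\gamma$ through $T_1,T_2,T_3$ as in \eqref{Setup99}, estimating each term as in Section \ref{First Instability Theorem}, and integrating twice using $\int_{u_2}^u (\tfrac{-\du r}{r})_0 e^{-\frac14\log(r_0(u_2)/r_0(u'))}\,\D u'\lesssim 1$. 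Combined with $(\dv r)(u,v)=(\dv r)(0,v)e^{-\gamma(u,v)}$, this yields the \emph{reversed} estimate
\begin{equation*}
\delta(\tu,v(\tu))\le \frac{eC_{56}}{2r_0(\tu)}e^{-\gamma_0(\tu)}(v(\tu)-v_0),\qquad C_{56}(u_2):=\max_{v'\in[v_0,v_0+\te]}e^{|\gamma|(u_2,v')},
\end{equation*}
whose direction is opposite to \eqref{FIT83} and which is crucial below.

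Next I would establish the $L^2$ lower bound. Decomposing $\frac{r\dv\phi}{\dv r}(\tu,v')=\bigl(\tfrac{r\dv\phi}{\dv r}\bigr)_0(\tu)+\tfrac{r_0(\tu)}{r(\tu,v')}\Upxi(\tu,v')$ and writing $\Upxi(\tu,v')=e^{\gamma_0(\tu)}\psi(\tu,v')$, I would relate $\psi(\tu,v')$ to $\psi(0,v')+J_{13}(\tu,v')$, where $J_{13}$ is the cumulative propagation error obtained by integrating $\du\psi$ from $0$ to $\tu$; this gives
\begin{equation*}
\int_{v_0}^{v(\tu)}|\Upxi|^2(\tu,v')\,\D v'\ \ge\ \frac{e^{-2C_{55}}}{9}e^{2\gamma_0(\tu)}\int_{v_0}^{v(\tu)}|\psi(0,v')+J_{13}(\tu,v')|^2\,\D v'.
\end{equation*}
One then verifies $\int_{v_0}^{v(\tu)}|\psi|(0,v')\,\D v'\ge 4\int_{v_0}^{v(\tu)}|J_{13}|(\tu,v')\,\D v'$ and the trigger inequality $4\int_{v_0}^{v(\tu)}|(\tfrac{r\dv\phi}{\dv r})_0|^2\le \int_{v_0}^{v(\tu)}|\Upxi|^2$; feeding both into Lemma \ref{SITLemma1} with $f\equiv 1$ produces
\begin{equation*}
\int_{v_0}^{v(\tu)}\Bigl|\bigl(\tfrac{r\dv\phi}{\dv r}\bigr)_0+\Upxi\Bigr|^2(\tu,v')\,\D v'\ \ge\ \frac{e^{-2C_{55}}}{144}e^{2\gamma_0(\tu)}\int_{v_0}^{v(\tu)}|\psi|^2(0,v')\,\D v'.
\end{equation*}
Plugging this into $\eta^*(\tu,v(\tu))=\frac{4\pi}{r(\tu,v(\tu))}\int_{v_0}^{v(\tu)}(1-\mu)\frac{r^2|\dv\phi|^2}{\dv r}(\tu,v')\,\D v'$ and using Lemma \ref{FITEstLemma3} gives $\eta^*(\tu,v(\tu))\gtrsim \frac{1}{r_0(\tu)}(v(\tu)-v_0)\,h^2(v(\tu))$.

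Finally I would compare this lower bound with the no-trapped-surface bound \eqref{FIT34}, substitute the reversed $\delta$ estimate, and conclude that a contradiction arises once
\begin{equation*}
h^2(v(\tu))\ \ge\ C_{61}\Bigl(r_0(\tu)+e^{-\gamma_0(\tu)}\Bigl(\log\tfrac{1}{v(\tu)-v_0}+\log\tfrac{2r_0(0)}{eC_{56}}+\log\tfrac{r_0(\tu)}{r_0(0)}+\gamma_0(\tu)\Bigr)\Bigr).
\end{equation*}
With the rigid choice $v(\tu)-v_0=e^{-14\gamma_0(\tu)-\log(r_0(0)/r_0(\tu))}$ (so $A=14$, $B=1$), the right-hand side collapses to a constant multiple of $g^2(v(\tu))$, so along a sequence $u_n\to v_0^-$ realizing $\limsup_{u\to v_0^-}h(v(u))/g(v(u))=\infty$ the condition is met. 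The remaining work is to verify that, for such $u_n$ and with this $v(u_n)$, all the smallness hypotheses on $\delta$ — namely \eqref{FIT35c'}, \eqref{FIT35c''}, the drift-lemma hypothesis, and $\sup_{v\in[v_0,v(u_n)]}|\psi|(u_2,v)$ small — propagate down to $u=u_2$; this is done by a short bootstrap on $\du\log\delta(u,v)$ (analogous to \eqref{FIT54}–\eqref{FIT56}), together with continuity of $\psi(u_2,\cdot)$ and $\delta(u_2,\cdot)$ using $v(u_n)\to v_0$. \textbf{The main obstacle} is Steps 3–4: converting the pointwise deviation mechanism of the first instability theorem into an $L^2$-averaged statement. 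Because the relevant $\delta$-estimate now runs the \emph{wrong} way, $\psi(\tu,\cdot)$ cannot be controlled pointwise by $\psi(u_2,\cdot)$; instead the propagation error $J_{13}$ must be bounded in $L^1_v$ and absorbed into a fixed fraction of $\int|\psi|(0,\cdot)$, and it is precisely here that the choice $B=1$ is forced (while $A$ is free), and where the hypothesis $\limsup h/g=\infty$ is consumed — making the $\delta$-bootstrap close simultaneously with this absorption is the delicate point.
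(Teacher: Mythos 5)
Your proposal follows essentially the same route as the paper's proof: contradiction argument on $\mathcal{P}'$, the drift lemma $|\tgam(u,v)-\tgam_0(u)|\le 1$ (the paper's Lemma \ref{SITLemma3}), the reversed $\delta$-estimate via the Doppler identity, the $L^2$-in-$v$ lower bound on $\eta^*$ obtained by decomposing $\frac{r\dv\phi}{\dv r}$ and absorbing the propagation error $J_{13}$ through Lemma \ref{SITLemma1}, the comparison with the no-trapped-surface bound yielding the threshold on $h^2$, the rigid choice $A=14$, $B=1$, and the bootstrap propagation of the $\delta$-smallness conditions down to $u_2$. One small slip: the absorption inequality you state as $\int|\psi|(0,\cdot)\ge 4\int|J_{13}|$ should be in $L^2$, i.e.\ $\int|\psi|^2(0,\cdot)\ge 4\int|J_{13}|^2$, as that is what Lemma \ref{SITLemma1} with $f\equiv1$ actually consumes (the paper's \eqref{SIT113}), but this is a notational glitch rather than a conceptual gap.
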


Before we begin with the proof, we first state a useful calculus lemma.
\begin{lemma}\label{SITLemma1}
Set $X$ to be an arbitrary measurable subset in $\mathbb{R}$ and let functions $f:X \rightarrow [0,\infty)$, $g:X \rightarrow \mathbb{C}$, $h:X \rightarrow \mathbb{C}$ be arbitrary measurable functions defined on $X$. If we have
\begin{equation}\label{SIT3}
\int_X f|g|^2 \geq 4\int_X f|h|^2,
\end{equation}
then the following is true:
\begin{equation}\label{SIT4}
\int_X f|g+h|^2 \geq \frac{1}{4}\int_X f|g|^2 \geq \int_X f|h|^2.
\end{equation}
\end{lemma}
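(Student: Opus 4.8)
The statement to prove is an elementary inequality: if $\int_X f|g|^2 \geq 4\int_X f|h|^2$ with $f \geq 0$, then $\int_X f|g+h|^2 \geq \frac14 \int_X f|g|^2 \geq \int_X f|h|^2$. The second inequality is immediate from the hypothesis, so the real content is the first one. The plan is to estimate $\int_X f|g+h|^2$ from below by a reverse triangle inequality at the level of the $L^2(f\,\mathrm{d}x)$ norm. Writing $\|\cdot\|$ for the seminorm $\|\varphi\| := \left(\int_X f|\varphi|^2\right)^{1/2}$, the hypothesis reads $\|g\| \geq 2\|h\|$, and the triangle inequality gives $\|g+h\| \geq \|g\| - \|h\| \geq \|g\| - \tfrac12\|g\| = \tfrac12\|g\|$, whence $\|g+h\|^2 \geq \tfrac14\|g\|^2$, which is exactly the first claimed inequality. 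The second follows since $\tfrac14\|g\|^2 \geq \tfrac14 \cdot 4\|h\|^2 = \|h\|^2$.

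The one technical point to address is whether $\|\cdot\|$ is genuinely a seminorm on the relevant space of functions, i.e. whether the triangle inequality $\|g+h\| \leq \|g\| + \|h\|$ is available here. This is just Minkowski's inequality for the (possibly infinite) measure $f\,\mathrm{d}x$ on $X$; it holds for all measurable $g,h$ valued in $\mathbb{C}$ once we allow the value $+\infty$. If $\|g\| = +\infty$ there is nothing to prove (both sides of the first inequality are infinite and the hypothesis forces $\|h\|=+\infty$ too, or the statement is vacuous depending on convention); if $\|g\| < \infty$ then $\|h\| \leq \tfrac12\|g\| < \infty$ by hypothesis, so both $g$ and $h$ lie in $L^2(f\,\mathrm{d}x)$ and the reverse triangle inequality applies without any integrability caveat. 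So I would simply invoke Minkowski and run the two-line estimate above. I do not anticipate any genuine obstacle; the only thing to be careful about is stating the measure-theoretic setup cleanly so the triangle inequality is unambiguous, and handling the degenerate infinite-norm case in one sentence.

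Concretely, the proof I would write is: Set $\|\varphi\|^2 := \int_X f|\varphi|^2$. By Minkowski's inequality for the measure $f\,\mathrm{d}x$, $\|g\| \leq \|g+h\| + \|{-h}\| = \|g+h\| + \|h\|$, hence $\|g+h\| \geq \|g\| - \|h\|$. The hypothesis \eqref{SIT3} states $\|g\|^2 \geq 4\|h\|^2$, i.e. $\|h\| \leq \tfrac12\|g\|$, so $\|g+h\| \geq \|g\| - \tfrac12\|g\| = \tfrac12\|g\|$, and squaring yields $\int_X f|g+h|^2 = \|g+h\|^2 \geq \tfrac14\|g\|^2 = \tfrac14\int_X f|g|^2$, which is the first inequality in \eqref{SIT4}. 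For the second, \eqref{SIT3} gives $\tfrac14\int_X f|g|^2 \geq \tfrac14 \cdot 4\int_X f|h|^2 = \int_X f|h|^2$. This completes the proof. (The case $\int_X f|g|^2 = \infty$: then by \eqref{SIT3} also $\int_X f|h|^2 = \infty$ unless the RHS is interpreted as vacuously satisfied, and in either reading both displayed inequalities hold trivially, so one may assume $\int_X f|g|^2 < \infty$, which by \eqref{SIT3} also forces $\int_X f|h|^2 < \infty$, putting $g,h \in L^2(f\,\mathrm dx)$ where Minkowski applies verbatim.)
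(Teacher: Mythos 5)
Your proof is correct and takes a mildly different route from the paper's. The paper works pointwise: it establishes $|g|^2 - 2|h|^2 \leq 2|g+h|^2$ for each point of $X$ (via $|g| \leq |g+h|+|h|$ together with $2ab \leq a^2+b^2$), integrates against $f$, and then invokes the hypothesis. You instead work directly at the level of the $L^2(f\,\mathrm{d}\mu)$ seminorm, using the reverse triangle (Minkowski) inequality $\|g+h\| \geq \|g\| - \|h\| \geq \tfrac12\|g\|$ and squaring. The two are morally the same estimate — your Minkowski step, once squared and combined with Cauchy--Schwarz, reproduces the paper's $\|g\|^2 \leq 2\|g+h\|^2 + 2\|h\|^2$ — but yours is a bit cleaner since it stays at the norm level throughout and avoids the pointwise subtraction. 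The paper's version has a small technical advantage: the pointwise inequality integrates without needing finiteness of the individual pieces, whereas Minkowski's reverse triangle inequality $\|g+h\| \geq \|g\|-\|h\|$ genuinely requires $\|h\| < \infty$.

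One remark on your degenerate-case discussion: you assert that when $\int_X f|g|^2 = \infty$ ``both displayed inequalities hold trivially.'' That is not quite right — if $g = -h$ and $\|g\| = \|h\| = \infty$ the hypothesis holds but $\int_X f|g+h|^2 = 0$, so the first inequality in \eqref{SIT4} fails. Strictly speaking the lemma is false in this degenerate regime (under both your argument and the paper's). This is immaterial for the application, where the integrals are manifestly finite, but you should not claim the infinite case is trivially fine; it is better to simply note that the lemma is applied only to finite integrals.
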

The proof of this lemma follows from a simple inequality
\begin{equation}\label{SIT5}
|g|^2 - 2 |h|^2 = |g+h -h|^2 - 2|h|^2 \leq (|g+h| + |h|)^2  - 2|h|^2 \leq 2 |g+h|^2
\end{equation}
and its integrated form
\begin{equation}\label{SIT6}
\begin{aligned}
\int_X f|g+h|^2 &\geq \frac{1}{2}\ml\int_X f|g|^2 - 2 \int_X f|h|^2\mr \geq \frac{1}{4}\int_X f|g|^2 \geq \int_X f|h|^2.
\end{aligned}
\end{equation}

\vspace{5mm}

We now start with the proof of Theorem \ref{SITTheorem}.
\proof 
Following the proof of Theorem \ref{FITTheorem} and arguing by means of contradiction, by constructing $\mathcal{P}'$ via Lemma \ref{FITEstLemma8} such that \eqref{FIT226} and \eqref{FIT77} hold, for each $u \in [u_2,v_0)$, we have
\begin{equation}\label{SIT11}
\ml \frac{r |\dv \phi|}{\dv r}\mr_0(u) = \mlm \ml \frac{r \dv \phi}{\dv r}\mr_0(0) - I(u)\mrm e^{\gamma_0(u)} = V(u)e^{\gamma_0(u)}
\end{equation}
with
\begin{equation}
\lim_{u \rightarrow v_0^-}V(u) = 0.
\end{equation}
In contrast to the scenario in the proof of Theorem \ref{FITTheorem} in which the limit only holds along a sequence approaching $v_0$, here \eqref{SIT11} holds for every $u \in [u_2,v_0)$. Hence, we can make a choice of $v(u) \in [v_0, v_0 + \te]$ for each fixed $u \in [u_2,v_0)$ as compared to along a sequence. Furthermore, for a given $\tu \in [u_2,v_0)$, $\tu$ plays the role of $u_n$ in the proof of Theorem \ref{FITTheorem}. Hence, the relevant assumptions \eqref{FIT35a} and \eqref{FIT35c} with $u_n$ replaced by $\tu$ imply that the corresponding estimates in Lemma \ref{FITEstLemma3} and  \ref{FITEstLemma5} hold with $u_n$ replaced by $\tu$.

Next, we prove a lemma that will be used in later estimates.
\begin{lemma}\label{SITLemma3}
Given $\tu \in [u_2,v_0)$, suppose that there exists a corresponding value of $v(\tu) \in [v_0,v_0 + \te]$ such that, for $\delta(u, v(\tu))$, conditions \eqref{FIT35a} and \eqref{FIT35c} hold, and a new requirement
\begin{equation}\label{SIT35c}
 \delta(u,v(\tu)) \leq \frac{e^{-\ml \frac{15}{2} + 3 \xi \mr\gamma_0(u)}}{64C_{53}^2}
\end{equation}
is satisfied for all $u \in [u_2,\tu]$ with $C_{53}$ defined in \eqref{SIT87}.
Then for each $u \in [u_2,\tu]$ and $v' \in [v_0,v(\tu)]$, we have
\begin{equation}\label{SIT81}
|\tgam(u,v') - \tgam_0(u)| \leq 1.
\end{equation}
Here, $\tgam(u,v') = \gamma(u,v';u_2)$ is defined in \eqref{FIT4a}.
\end{lemma}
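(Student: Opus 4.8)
\textbf{Proof proposal for Lemma \ref{SITLemma3}.}
The plan is to mirror the proof of the corresponding lemma in the first instability theorem, namely the bound $|\tgam(u,v)-\tgam_0(u)|\leq 1$ established there via the decomposition of $\dv\gamma$ into $J_8+J_9+J_{10}$ (i.e.\ $T_1+T_2+T_3$ from \eqref{Setup98}). First I would write, for $u\in[u_2,\tu]$ and $v'\in[v_0,v(\tu)]$,
\begin{equation*}
\tgam(u,v') - \tgam_0(u) = \int^{u}_{u_2}\int^{v'}_{v_0}\dv\du\gamma(u'',v'')\,\D v''\,\D u'' = \int^{u}_{u_2}\int^{v'}_{v_0}\ml J_8 + J_9 + J_{10}\mr(u'',v'')\,\D v''\,\D u'',
\end{equation*}
so that $|\tgam(u,v')-\tgam_0(u)|\leq \int^u_{u_2}\int^{v'}_{v_0}\ml|J_8|+|J_9|+|J_{10}|\mr\,\D v''\,\D u''$. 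Note $J_{10}=T_3$ replaces the role played by $J_{11}$ only up to the elementary rearrangement used in \eqref{FIT188}; either form gives the same estimate after one more application of \eqref{Setup20}, so I will just keep $T_1,T_2,T_3$.

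Next I would reuse verbatim the inner-integral estimates \eqref{FIT189}, \eqref{FIT190}, \eqref{FIT191} (now with $u_n$ replaced by $\tu$), which are valid here since the hypotheses \eqref{FIT35a} and \eqref{FIT35c} are exactly what was needed there to invoke Lemma \ref{FITEstLemma3} (for the factor $\tfrac{1}{1-\mu}\lesssim e^{\gamma_0}$) and Lemma \ref{FITEstLemma5} (for the crucial comparison \eqref{FIT83}), together with the estimates of Lemma \ref{FITEstLemma10} for $|\phi|$ and $|Q\phi|$ in $\mathcal{P}'$. These give
\begin{equation*}
\int^{v'}_{v_0}\ml |J_8| + |J_9| + |J_{10}|\mr(u'',v'')\,\D v'' \leq C_{53}\ml \frac{-\du r}{r}\mr_0(u'')\,e^{3\gamma_0(u'')}\,\delta(u'',v')^{\frac12}
\end{equation*}
for a suitable constant $C_{53}$ depending on $u_2,v_0,\te,\xi$ and the coupling constant, whose explicit value I would record as \eqref{SIT87}. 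The only bookkeeping difference from the first instability theorem is that $J_8 = T_1$ carries the $Q$-term, but this was already handled in \eqref{FIT189} using $|Q\phi|\lesssim r^{5/4-3\chi/4}$ and the $\dv m$-estimate, so no new idea is required.

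Then I would integrate in $u''$ from $u_2$ to $u$, applying the geometric comparison \eqref{FIT83} (which holds under \eqref{FIT35a}, \eqref{FIT35c}) to trade $\delta(u'',v')^{1/2}$ for $\delta(u,v')^{1/2}$ at the cost of an exponential factor $e^{-\frac14\log(r_0(u'')/r_0(u)) + \frac34(1+2\xi)\gamma_0(u'')}$, exactly as in \eqref{FIT200}:
\begin{equation*}
|\tgam(u,v') - \tgam_0(u)| \leq 2C_{53}\,e^{\ml \frac{15}{4} + \frac{3\xi}{2}\mr\gamma_0(u)}\,\delta(u,v')^{\frac12}\int^u_{u_2}\ml \frac{-\du r}{r}\mr_0(u'')\,e^{-\frac14\log\ml \frac{r_0(u'')}{r_0(u)}\mr}\,\D u'' \leq 8C_{53}\,e^{\ml \frac{15}{4} + \frac{3\xi}{2}\mr\gamma_0(u)}\,\delta(u,v')^{\frac12}.
\end{equation*}
Finally, using $\delta(u,v')\leq\delta(u,v(\tu))$ (Lemma \ref{FITEstLemma4}) and the new hypothesis \eqref{SIT35c}, namely $\delta(u,v(\tu))\leq e^{-(\frac{15}{2}+3\xi)\gamma_0(u)}/(64C_{53}^2)$, the right-hand side is $\leq 8C_{53}\cdot\frac{1}{8C_{53}}=1$, which is \eqref{SIT81}. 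There is no real obstacle here — it is an adaptation of existing estimates — but the one point requiring care is the choice of the threshold in \eqref{SIT35c}: it must be strong enough that the accumulated exponential growth $e^{(\frac{15}{2}+3\xi)\gamma_0}$ (half from the $e^{3\gamma_0}$ in the pointwise bound squared into $\delta^{1/2}$, half from the $(1+2\xi)$-factor in \eqref{FIT83}) is exactly absorbed, and one must verify this is consistent with (indeed implied along the relevant $v(\tu)$ by) the earlier requirements \eqref{FIT35a} and \eqref{FIT35c}, which is why $C_{53}$ is defined only after the estimates above are in hand.
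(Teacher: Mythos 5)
Your proposal is correct and follows essentially the same route as the paper: decompose $\dv\du\gamma = J_8 + J_9 + J_{10}$, bound the inner $v$-integral by $C_{53}(-\du r/r)_0 e^{3\gamma_0}\delta^{1/2}$, then integrate in $u$ using the comparison \eqref{FIT83} and close with \eqref{SIT35c}. The one small imprecision is that \eqref{FIT191} bounds $J_{11}$ rather than $J_{10}$; the paper derives a fresh estimate \eqref{SIT84} for $\int|J_{10}|\,dv'$ (with its own constant $C_{52}$ entering $C_{53}=C_{46}+C_{47}+C_{52}$), but as you correctly anticipate the argument and the resulting form of the bound are the same, so this does not affect the validity of the proof.
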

\begin{proof}
To obtain \eqref{SIT81}, we consider $\dv \gamma$. Hence, we compute $\dv \ml \frac{-\du r}{r}\frac{\mu - Q^2/r^2}{1-\mu}\mr$. By \eqref{FIT86}, we have
\begin{equation}\label{SIT82}
\dv \ml \frac{-\du r}{r}\frac{\mu - Q^2/r^2}{1-\mu}\mr = J_8 + J_{9} + J_{10}
\end{equation}
with the expressions for $J_8, J_{9}, J_{10}$ available in \eqref{FIT248}. Hence, by Fubini's theorem and dominated convergence theorem \footnote{The use of Fubini's theorem and dominated convergence theorem are guaranteed by the estimates for $J_9, J_{10}, J_{11}$, which we will derive below.}, for each $u \in [u_2,\tu]$ and $v \in [v_0,v(\tu)]$, we obtain
\begin{equation}\label{SIT83}
|\tgam(u,v) - \tgam_0(u)| \leq \int^u_{u_2} \int^v_{v_0} |J_8|(u',v') + |J_{9}|(u',v') + |J_{10}|(u',v') \D v' \D u'.
\end{equation}
For the inner integral along $u = u'$ with $u' \in [u_2,u]$, estimates for $|J_{8}|$ and $|J_{9}|$ have been derived in \eqref{FIT189} and \eqref{FIT190}. In a similar fashion, we can also obtain an estimate for $|J_{10}|$. By \eqref{FIT226} for $\xi \in \left(0,\oh\right)$, we have $0 \leq \mu - \frac{2Q^2}{r^2}\leq 1$, and this further implies
\begin{equation}\label{SIT84}
\begin{aligned}
\int^v_{v_0} |J_{10}|(u',v') \D v' &\leq 9 \ml \frac{-\du r}{r} \mr_0(u') \int^v_{v_0}\frac{\dv r}{r} \frac{1}{1-\mu}(u',v') \D v' \\
&\leq \frac{27}{2(1-\mu_0(0))} \ml \frac{-\du r}{r} \mr_0(u') e^{\gamma_0(u')}\td(u',v) \\
&\leq C_{52} \ml \frac{-\du r}{r}\mr_0(u') e^{3\gamma_0(u')}\delta(u',v)^{\frac{1}{2}}
\end{aligned}
\end{equation}
with
\begin{equation}\label{SIT85}
C_{52} := \frac{27}{2\sqrt{2}(1-\mu_0(0))}.
\end{equation}
Together with \eqref{FIT189} and \eqref{FIT190}, we henceforth obtain
\begin{equation}\label{SIT86}
\int^v_{v_0} |J_8|(u',v') + |J_{9}|(u',v') + |J_{10}|(u',v') \D v' \leq C_{53}\ml \frac{-\du r}{r} \mr_0(u')e^{3\gamma_0(u')}\delta(u',v)^{\frac{1}{2}}
\end{equation}
with
\begin{equation}\label{SIT87}
C_{53} := C_{46} + C_{47} + C_{52}.
\end{equation}
Conducting another integration along a given incoming null curve, we appeal to \eqref{FIT83} of Lemma \ref{FITEstLemma5} and \eqref{SIT83} to obtain
\begin{equation}\label{SIT88}
\begin{aligned}
|\tgam(u,v) - \tgam_0(u)| &\leq \int^u_{u_2} \int^v_{v_0} |J_8|(u',v') + |J_{9}|(u',v') + |J_{10}|(u',v') \D v' \D u' \\
&\leq C_{53} e^{3\gamma_0(u)}\int^u_{u_2} \ml \frac{-\du r}{r}\mr_0(u') \delta(u',v)^{\frac{1}{2}}\D u' \\
&\leq 2C_{53} e^{ \ml \frac{15}{4} + \frac{3 \xi}{2} \mr \gamma_0(u)}\delta(u,v)^{\frac{1}{2}} \int^u_{u_2} \ml \frac{-\du r}{r}\mr_0(u') e^{-\frac{1}{4}\log \ml \frac{r_0(u')}{r_0(u)}\mr}\D u' \\
&\leq 8C_{53} e^{\ml \frac{15}{4} + \frac{3 \xi}{2} \mr\gamma_0(u)}\delta(u,v(\tu))^{\frac{1}{2}} \leq 1,
\end{aligned}
\end{equation}
in which the last inequality follows from our assumption \eqref{SIT35c}. 
\end{proof}

We are ready to demonstrate the relevant contradiction for proving Theorem \ref{SITTheorem}. Assuming the absence of a trapped surface, at $(\tu,v(\tu)) \in \mathcal{R}$, employing \eqref{FIT130} and  \eqref{FIT131}, we get
\begin{equation}\label{SIT104}
\begin{aligned}
\eta^*(\tu,v(\tu)) &= \frac{4\pi}{r(\tu,v(\tu))}\int^{v(\tu)}_{v_0} (1-\mu) \ml \frac{r^2|\dv \phi|^2}{\dv r}\mr(\tu,v') \D v' \\
&\leq C_{36}\delta(\tu,v(\tu))\ml \log \ml \frac{1}{\delta(\tu,v(\tu))}\mr + e^{\gamma_0(\tu)}r_0(\tu)\mr.
\end{aligned}
\end{equation}
Instead of using the usual upper bounds for $\delta(\tu,v(\tu))$ as in \eqref{FIT35a}, \eqref{FIT35c}, \eqref{SIT35c}, we derive an alternative upper bound for $\delta(\tu,v(\tu))$ as follows:
\begin{equation}\label{SIT135}
\begin{aligned}
\delta(\tu,v(\tu)) &= \frac{1}{r_0(\tu)}\int^{v(\tu)}_{v_0} (\dv r)(\tu,v')\D v' = \frac{1}{r_0(\tu)} \int^{v(\tu)}_{v_0} (\dv r)(0,v')e^{-\tgam(\tu,v') - \gamma(u_2,v')} \D v' \\
&\quad\quad\leq \frac{e C_{56}}{2r_0(\tu)} \int^{v(\tu)}_{v_0} e^{-\gamma_0(\tu) } \D v' \leq \frac{e C_{56} e^{-\gamma_0(\tu)}}{2r_0(\tu)}(v(\tu) - v_0).
\end{aligned}
\end{equation}
Here
\begin{equation}\label{SIT136}
C_{56}(u_2) := \max_{v'\in[v_0,v_0 + \te]}e^{|\gamma|(u_2,v')} < + \infty
\end{equation}
and we have utilized \eqref{FIT18a} for the evolution of $\dv r$ along $v = v'$ with $v' \in [v_0, v(\tu)] \subset [v_0,v_0 + \te]$. Note that $C_{56}(u_2)$ is finite due to the fact that $\gamma(u_2,\cdot)$ is continuous on $[v_0,v_0+\te]$. By the non-decreasing property of the mapping $x \mapsto x\log\ml \frac{1}{x}\mr$ for $x \in \left( 0, \frac{1}{e}\right]$, if we further require 
\begin{equation}\label{SIT138}
v(\tu) - v_0 \leq \frac{2r_0(\tu)}{e^2 C_{56}}e^{\gamma_0(\tu)},
\end{equation}
then by applying \eqref{SIT135} to \eqref{SIT104}, we derive 
\begin{equation}\label{SIT139}
\begin{aligned}
& \eta^*(\tu,v(\tu)) \\
\leq & \; C_{36}\ml \frac{eC_{56}e^{-\gamma_0(\tu)}}{2r_0(\tu)}(v(\tu) - v_0) \log \ml \frac{2r_0(\tu)}{eC_{56} e^{-\gamma_0(\tu)}(v(\tu) - v_0)}\mr + \frac{eC_{56}}{2}(v(\tu) - v_0) \mr \\
\leq & \; \frac{eC_{36}C_{56}}{2r_0(\tu)}(v(\tu) - v_0)\ml r_0(\tu) + e^{-\gamma_0(\tu)}\ml \log \ml \frac{1}{v(\tu) - v_0}\mr + \log \ml \frac{2r_0(0)}{eC_{56}}\mr  + 
 \log \ml\frac{r_0(\tu)}{r_0(0)} \mr + \gamma_0(\tu)\mr\mr.
\end{aligned}
\end{equation}
To obtain the contradiction, it is natural to obtain a lower bound for $\eta^*(\tu,v(\tu))$. To do so, observe by the definition of $\Upxi$ in \eqref{FIT29} that
\begin{equation}\label{SIT7}
\begin{aligned}
\ml \frac{r |\dv \phi|}{\dv r}\mr &= \frac{r_0}{r}\mlm \ml \frac{r \dv \phi}{\dv r}\mr_0 + \Upxi \mrm. \\
\end{aligned}
\end{equation}
This implies that, along $u = \tu$, we have 
\begin{equation}\label{SIT8}
\begin{aligned}
&\int_{v_0}^{v(\tu)} (1-\mu) \ml \frac{r^2 |\dv \phi|^2}{\dv r}\mr (\tu,v')  \D v'\\
=& \; \int_{v_0}^{v(\tu)} (1-\mu) \ml \frac{r_0}{r}\mr^2 \mlm \ml \frac{r \dv \phi}{\dv r}\mr_0 + \Upxi \mrm^2 (\dv r) (\tu,v')  \D v' \\
\geq&\; \frac{8}{27}(1-\mu_0(0))e^{-\gamma_0(\tu)}\int_{v_0}^{v(\tu)}  \mlm \ml \frac{r \dv \phi}{\dv r}\mr_0 + \Upxi \mrm^2 (\dv r) (\tu,v')  \D v' \\
=&\; \frac{8}{27}(1-\mu_0(0))e^{-\gamma_0(\tu)}\int_{v_0}^{v(\tu)}  \mlm \ml \frac{r \dv \phi}{\dv r}\mr_0 + \Upxi \mrm^2 (\dv r) (0,v')e^{-\tgam(\tu,v') - \gamma(u_2,v')}  \D v', \\
\end{aligned}
\end{equation}
where we have employed \eqref{FIT18a} and a similar argument as in \eqref{FIT133} to obtain a lower bound for $1-\mu$ and $r/r_0$. Additionally, by applying \eqref{Setup33} and \eqref{SIT81} of Lemma \ref{SITLemma3}, we obtain
\begin{equation}\label{SIT105}
\begin{aligned}
\int_{v_0}^{v(\tu)} (1-\mu) \ml \frac{r^2 |\dv \phi|^2}{\dv r}\mr (\tu,v')  \D v' \geq \frac{4 C_{54}}{27 e}(1-\mu_0(0))e^{-2\gamma_0(\tu)} \int_{v_0}^{v(\tu)}  \mlm \ml \frac{r \dv \phi}{\dv r}\mr_0 + \Upxi \mrm^2 (\tu,v') \D v'
\end{aligned}
\end{equation}
with
\begin{equation}\label{SIT106}
C_{54}(u_2) = \min_{v' \in [v_0,v_0+\te]} e^{-\gamma(u_2,v')} > 0.
\end{equation}
For the inequality in \eqref{SIT105}, we have used the fact that as $\gamma$ is a continuous function on the rectangular region $[0,u_2] \times [v_0,v_0 + \te] \subset \mathcal{R}$. 

Contrary to the proof of Theorem \ref{FITTheorem}, we instead require that the dominant term for the lower bound of $\eta^*$ is $\Upxi$ as we have an upper bound for $\ml \frac{r|\dv \phi|}{\dv r}\mr_0$ given in \eqref{SIT11}. Therefore, by Lemma \ref{SITLemma1}, if we can show that
\begin{equation}\label{SIT9}
\int_{v_0}^{v(\tu)} \mlm \Upxi \mrm^2  (\tu,v')  \D v' \geq 4 \int_{v_0}^{v(\tu)}  \mlm \frac{r \dv \phi}{\dv r} \mrm_0^2  (\tu,v')  \D v',
\end{equation}
then we have
\begin{equation}\label{SIT10}
\int_{v_0}^{v(\tu)}  \mlm \ml \frac{r \dv \phi}{\dv r}\mr_0 + \Upxi \mrm^2 (\tu,v') \D v' \geq \frac{1}{4}\int_{v_0}^{v(\tu)} \mlm \Upxi \mrm^2  (\tu,v')  \D v'.
\end{equation}
To show that \eqref{SIT9} holds, by applying \eqref{SIT11}, we observe that
\begin{equation}\label{SIT107}
4 \int_{v_0}^{v}  \mlm \frac{r \dv \phi}{\dv r} \mrm_0^2  (\tu,v')  \D v' \leq 4( v(\tu) - v_0 ) V^2(\tu)e^{2\gamma_0(\tu)}.
\end{equation}
Showing $4( v(\tu) - v_0 ) V^2(\tu)e^{2\gamma_0(\tu)} \leq \int_{v_0}^{v(\tu)}  \mlm \Upxi \mrm^2  (\tu,v')  \D v' $ requires more work. Together with \eqref{FIT30} and \eqref{FIT154}, for all $v \in [v_0,v(\tu)]$, we first write
\begin{equation}\label{SIT42A}
\begin{aligned}
\Upxi(\tu,v) &= e^{\gamma_0(\tu) + \int^{\tu}_{0} J_{5} (u',v) - \ii \e A_u(u',v) \D u'}  \ml  \psi(0,v) + J_{13}(\tu,v) \mr. 
\end{aligned}
\end{equation}
Since $J_5$ and $A_u$ are real-valued, we can write
\begin{equation}\label{SIT42}
\begin{aligned}
|\Upxi|^2(\tu,v) &= e^{2\gamma_0(\tu) + 2\int^{\tu}_{0} J_{5} (u',v) \D u'}  \mlm  \psi(0,v) + J_{13}(\tu,v) \mrm^2 \\
\end{aligned}
\end{equation}
with
\begin{equation}\label{SIT43}
J_{13}(\tu,v) := \int^{\tu}_{0} e^{-\gamma_0(u'')-\int^{u''}_{0} J_{5}(u',v) - \ii \e A_u(u',v) \D u'}\ml J_4(u'',v) + J_{6}(u'',v)\ml \frac{r \dv \phi}{\dv r}\mr_0(u'')\mr \D u''.
\end{equation}
Integrating \eqref{SIT42} along the incoming null curve $u = \tu$, we thus have
\begin{equation}\label{SIT44}
\begin{aligned}
\int_{v_0}^{v(\tu)} \mlm \Upxi \mrm^2 (\tu,v')  \D v' 
= e^{2\gamma_0(\tu)}\int^{v(\tu)}_{v_0} e^{2\int^{\tu}_{0} J_{5}(u',v') \D u'}|\psi(0,v') + J_{13}(\tu,v')|^2  \D v'.
\end{aligned}
\end{equation}
Next, for each $v' \in [v_0,v(\tu)]$, we consider the decomposition
\begin{equation}\label{SIT108}
\int^{\tu}_0 |J_5| (u',v') \D u' = \int^{\tu}_{u_2} |J_5| (u',v') \D u' + \int^{u_2}_0 |J_5| (u',v') \D u'.
\end{equation}
By the fact that $J_5$ is continuous on $[0,u_2]$, we have
\begin{equation}\label{SIT109}
C_{55}(u_2) := \sup_{v' \in [v_0,v_0 + \te]}\int^{u_2}_0 |J_5| (u',v') \D u' < + \infty.
\end{equation}
We then repeat the computations in Lemma \ref{FITpsievolution} up to \eqref{FIT250} to obtain
\begin{equation}\label{SIT110}
\int^{\tu}_{u_2} |J_5| (u',v') \D u' \leq \log(3).
\end{equation}
Consequently, this implies that
\begin{equation}\label{SIT111}
e^{2 \int^{\tu}_0 J_5(u',v') \D u'} \geq \frac{1}{9}e^{-2C_{55}}.
\end{equation}
Back to \eqref{SIT44}, we thus derive
\begin{equation}\label{SIT112}
\int_{v_0}^{v(\tu)} \mlm \Upxi \mrm^2 (\tu,v')  \D v' 
\geq \frac{e^{-2C_{55}}}{9}e^{2\gamma_0(\tu)}\int^{v(\tu)}_{v_0} |\psi(0,v') + J_{13}(\tu,v')|^2  \D v'.
\end{equation}
To provide a lower bound for $\int^{v(\tu)}_{v_0} |\psi(0,v') + J_{13}(u,v')|^2  \D v'$ in \eqref{SIT112}, we first attempt to show that
\begin{equation}\label{SIT113}
\int_{v_0}^{v(\tu)} |\psi|^2(0,v') \D v' \geq 4 \int_{v_0}^{v(\tu)} |J_{13}|^2(\tu,v') \D v' .
\end{equation}
Referring to the expression of $|J_{13}|$ in \eqref{SIT43}, together with \eqref{SIT109},\eqref{SIT110}, for each $v' \in [v_0,v(\tu)]$, we obtain
\begin{equation}\label{SIT114}
\begin{aligned}
|J_{13}|(\tu,v') \leq  3 e^{C_{55}}\int^{\tu}_{0} \mlm e^{-\gamma_0(u'')}\mrm \ml |J_4|(u'',v') + |J_{6}|(u'',v')\ml \frac{r |\dv \phi|}{\dv r}\mr_0(u'')\mr \D u''.
\end{aligned}
\end{equation} 
Note that $\gamma_0(u)$ is not necessarily non-negative for the charged case and thus necessitates the term $\mlm e^{-\gamma_0(u'')}\mrm$ to remain in \eqref{SIT114}. To estimate the integral on the right of \eqref{SIT114}, we first note that
\begin{equation}\label{SIT116}
C_{56} := \sup_{u \in [0,v_0)} V(u) < + \infty
\end{equation}
because $V(u) \rightarrow 0$ as $u \rightarrow v_0^-$ and $I(u)$ is continuous on any compact subsets of $[0,v_0)$. We continue to estimate
\begin{equation}\label{SIT115}
\begin{aligned}
&\int_{u_2}^{\tu} \mlm e^{-\gamma_0(u'')}\mrm \ml |J_4|(u'',v') + |J_{6}|(u'',v')\ml \frac{r |\dv \phi|}{\dv r}\mr_0(u'')\mr \D u'' \\
\leq & \; \int_{u_2}^{\tu} \ml |J_4|(u'',v') + |J_{6}|(u'',v')\ml \frac{r |\dv \phi|}{\dv r}\mr_0(u'')\mr \D u'' \\
\leq & \; 8C_{45}e^{\ml \frac{15}{4} + \frac{3}{2}\xi \mr\gamma_0(u)}\delta(\tu,v(\tu))^{\frac{1}{2}} + 8C_{49}C_{56}e^{\ml \frac{19}{4} + \frac{3}{2}\xi \mr\gamma_0(u)}\delta(\tu,v(\tu))^{\frac{1}{2}}.\\
\end{aligned}
\end{equation}
Here we have employed \eqref{SIT11} and similar arguments used to derive \eqref{FIT197} and \eqref{FIT201} in Lemma \ref{FITpsievolution}. 
To proceed, we impose
\begin{equation}\label{SIT117}
\delta(\tu,v(\tu)) \leq \min \left\{ \frac{e^{-\ml \frac{19}{2} + 3 \xi\mr\gamma_0(\tu)}}{256 C_{45}^2}, \frac{e^{-\ml \frac{23}{2} + 3 \xi\mr\gamma_0(\tu)}}{256 C_{49}^2C_{56}^2} \right\}.
\end{equation}
As a result, from \eqref{SIT115}, we obtain
\begin{equation}\label{SIT118}
\int_{u_2}^{\tu} \mlm e^{-\gamma_0(u'')}\mrm \ml |J_4|(u'',v') + |J_{6}|(u'',v')\ml \frac{r |\dv \phi|}{\dv r}\mr_0(u'')\mr \D u''  \leq e^{-\gamma_0(\tu)}.
\end{equation}
On the other hand, for the contribution of the integral from $u = 0$ to $u = u_2$, we have
\begin{equation}\label{SIT119}
\begin{aligned}
&\int_{0}^{u_2} \mlm e^{-\gamma_0(u'')}\mrm \ml |J_4|(u'',v') + |J_{6}|(u'',v')\ml \frac{r |\dv \phi|}{\dv r}\mr_0(u'')\mr \D u''  \\
\leq & \; C_{57} \int^{u_2}_0 |J_4|(u'',v') + |J_6|(u'',v') \D u''
\end{aligned}
\end{equation}
with
\begin{equation}\label{SIT120}
C_{57}(u_2) := \max_{u' \in [0,u_2]}\mlm e^{-\gamma_0(u')}\mrm \ml 1 + \max_{u' \in [0,u_2]} \ml \frac{r|\dv \phi|}{\dv r}\mr_0(u')\mr.
\end{equation}
To bound $|J_4|$, using \eqref{FIT152} and \eqref{FIT161}, we obtain 
\begin{equation}\label{SIT121}
|J_4|(u'',v') \leq \int^v_{v_0}|H_6|(u'',v'') \D v'' \leq C_{58}(v'-v_0)
\end{equation}
with
\begin{equation}\label{SIT122}
\begin{aligned}
H_6 :=&\; \frac{(-\du r) \dv \phi}{r_0} + \ii \e \frac{Q (\du r)(\dv r) \phi}{r_0 r(1-\mu)} +  (4 \pi \ii \e^2)\frac{\phi(-\du r)r^2 \im{(\phi^{\dagger} \dv \phi)}}{r_0(1-\mu)} \\
&+\ii \e \frac{Q (\dv \phi)(\du r)}{r_0(1-\mu)} +  4 \pi \ii \e \frac{Q r |\dv \phi|^2(\du r) \phi}{r_0(\dv r)(1-\mu)}
\end{aligned}
\end{equation}
being a continuous function on $[0,u_2]\times[v_0,v(\tu)]$. This implies that there exists a constant $C_{58} > 0$ such that $|H_6|(u'',v'') \leq C_{58}$ for all $(u'',v'') \in [0,u_2]\times[v_0,v(\tu)]$. Similarly, from \eqref{Setup30}, \eqref{FIT152}, \eqref{FIT249}, together with the bound on $|H_6|$ by $C_{58}$, we define
\begin{equation}\label{SIT124}
H_7 := J_9 + J_{10} + J_{12} - \ii\e\ml \frac{2Q \du r 
 \dv r}{r^2(1-\mu)}\mr
\end{equation}
and observe that since $H_{7}$ is a continuous function on $[0,u_2]\times[v_0,v(\tu)]$, there exists by a constant $C_{59} > 0$ such that $|H_7|(u'',v'') \leq C_{59}$ for all $(u'',v'') \in [0,u_2]\times[v_0,v(\tu)]$.  Hence, we have
\begin{equation}\label{SIT123}
|J_6|(u'',v') \leq \int^v_{v_0}|H_6| + |H_7|(u'',v'') \D v'' \leq (C_{58} + C_{59})(v'-v_0).
\end{equation}

Combining these estimates into \eqref{SIT114}, we derive the following estimate for $|J_{13}|$:
\begin{equation}\label{SIT125}
|J_{13}|(\tu,v') \leq 3e^{C_{55}}(e^{-\gamma_0(\tu)} + C_{57}(2C_{58} + C_{59})u_2(v' - v_0)) \\
\leq C_{60}(e^{-\gamma_0(\tu)} + (v' - v_0))
\end{equation}
with
\begin{equation}\label{SIT126}
C_{60}(u_2) := 3e^{C_{55}}\ml 1 + C_{57}(2C_{58} + C_{59})u_2 \mr.
\end{equation}
Consequently, this implies
\begin{equation}\label{SIT127}
\begin{aligned}
4 \int_{v_0}^{v(\tu)}|J_{13}|^2(\tu,v') \D v'  &\leq 8C_{60}^2 \int_{v_0}^{v(\tu)} e^{-2\gamma_0(\tu)} + (v'-v_0)^2 \D v' \\ 
&\leq 8C_{60}^2\ml e^{-2\gamma_0(\tu)} + (v(\tu) - v_0)^2 \mr (v(\tu) - v_0).
\end{aligned}
\end{equation}
Henceforth, by the definition of $h^2(v(\tu))$ in \eqref{SIT48}, verifying \eqref{SIT113} is equivalent to requiring 
\begin{equation}\label{SIT128}
h^2(v(\tu)) \geq 8C_{60}^2(e^{-2\gamma_0(\tu)} + (v(\tu) - v_0)^2).
\end{equation}
Now if \eqref{SIT128} is true, via \eqref{SIT112} and \eqref{SIT113}, we have
\begin{equation}\label{SIT129}
\int^{v(\tu)}_{v_0} |\Upxi|^2(\tu,v') \D v' \geq \frac{e^{-2C_{55}}}{36}e^{2\gamma_0(\tu)}\int^{v(\tu)}_{v_0} |\psi|^2(0,v') \D v'.
\end{equation}
With this, via \eqref{SIT107} and \eqref{SIT129}, validating \eqref{SIT9} is now the same as doing so for 
\begin{equation}\label{SIT130}
4(v(\tu)-v_0)V^2(\tu)e^{2\gamma_0(\tu)} \leq \frac{e^{-2C_{55}}}{36}e^{2\gamma_0(\tu)}\int^{v(\tu)}_{v_0} |\psi|^2(0,v') \D v'.
\end{equation}
Henceforth, we can instead require
\begin{equation}\label{SIT131}
h^2(v(\tu)) \geq 144 e^{2C_{55}} V^2(\tu).
\end{equation}
Recall that \eqref{SIT9} implies \eqref{SIT10}, which by \eqref{SIT129}, in turn implies that
\begin{equation}\label{SIT132}
\int_{v_0}^{v(\tu)}  \mlm \ml \frac{r \dv \phi}{\dv r}\mr_0 + \Upxi \mrm^2 (\tu,v') \D v' \geq \frac{e^{-2C_{55}}}{144}e^{2\gamma_0(\tu)}\int^{v(\tu)}_{v_0}|\psi|^2(0,v')\D v'.
\end{equation}
Hence by  \eqref{SIT8}, we have
\begin{equation}\label{SIT133}
\begin{aligned}
\eta^*(\tu,v(\tu)) &= \frac{4\pi}{r(\tu,v(\tu))}\int^{v(\tu)}_{v_0} (1-\mu) \ml \frac{r^2|\dv \phi|^2}{\dv r}\mr(\tu,v') \D v' \\
&\geq \frac{16\pi C_{54}(1-\mu_0(0))}{27e r(\tu,v(\tu))}e^{-2\gamma_0(\tu)}\int_{v_0}^{v(\tu)}  \mlm \ml \frac{r \dv \phi}{\dv r}\mr_0 + \Upxi \mrm^2 (\tu,v') \D v' \\
&\geq \frac{\pi C_{54}e^{-2C_{55}}(1-\mu_0(0))}{243e r(\tu,v(\tu))} \int^{v(\tu)}_{v_0}|\psi|^2(0,v')\D v' \\
&\geq \frac{2\pi C_{54}e^{-2C_{55}}(1-\mu_0(0))}{729 e r_0(\tu)} \int^{v(\tu)}_{v_0}|\psi|^2(0,v')\D v'.
\end{aligned}
\end{equation}\label{SIT134}
This would contradict \eqref{SIT139} if 
\begin{equation}\label{SIT140}
\begin{aligned}
& \frac{eC_{36}C_{56}}{2r_0(\tu)}(v(\tu) - v_0)\ml r_0(\tu) + e^{-\gamma_0(\tu)}\ml \log \ml \frac{1}{v(\tu) - v_0}\mr + \log \ml \frac{2r_0(0)}{eC_{56}}\mr  + 
 \log \ml\frac{r_0(\tu)}{r_0(0)} \mr + \gamma_0(\tu)\mr\mr \\
 \leq & \; \frac{\pi C_{54}e^{-2C_{55}}(1-\mu_0(0))}{729e r_0(\tu)} \int^{v(\tu)}_{v_0}|\psi|^2(0,v')\D v'.
\end{aligned}
\end{equation}
Equivalently, this is our last requirement for $h^2(v(\tu))$:
\begin{equation}\label{SIT141}
\begin{aligned}
h^2(v(\tu)) \geq  C_{61}\ml r_0(\tu) + e^{-\gamma_0(\tu)}\ml \log \ml \frac{1}{v(\tu) - v_0}\mr + \log \ml \frac{2r_0(0)}{eC_{56}}\mr  + 
 \log \ml\frac{r_0(\tu)}{r_0(0)} \mr + \gamma_0(\tu)\mr \mr
 \end{aligned}
\end{equation}
with
\begin{equation}\label{SIT142}
C_{61}(u_2) := \frac{729 e^2 C_{36} C_{56}}{2\pi C_{54}e^{-2C_{55}}(1-\mu_0(0))}.
\end{equation}

\vspace{5mm}

It now remains to check that each of the assumptions made above can be verified if we take $\tu$ sufficiently close to $v_0$. Collecting the above assumptions and using the fact that $\delta(u,v(\tu)) \leq \frac{1}{2}$ for each $u \in [u_2,\tu]$, here we list the requirements
\begin{enumerate}[(1)]
\item \eqref{FIT35a}: $$\delta(u,v(\tu)) \leq \frac{C_{33}^2e^2} {4}e^{-4\gamma_0(u)} \quad \mbox{for all} \,\, u \in [u_2,\tu],$$
\item \eqref{FIT35c}: $$\delta(u,v(\tu)) \leq \min\left\{ \frac{1}{2},\frac{2(1 - \mu_0(0))e^{-\gamma_0(u)}}{3}, \frac{e^{-\frac{9}{2}\gamma_0(u)}}{2 C_{34}} \right\} \quad \mbox{for all} \,\, u \in [u_2,\tu],$$
\item \eqref{SIT35c}:  
\begin{equation*}
 \delta(u,v(\tu)) \leq \frac{e^{-\ml \frac{15}{2} + 3 \xi \mr\gamma_0(u)}}{64C_{53}^2} \quad \mbox{for all} \,\, u \in [u_2,\tu],
\end{equation*}
 \item \eqref{SIT117}:
$$\delta(\tu,v(\tu)) \leq \min \left\{ \frac{e^{-\ml \frac{19}{2} + 3 \xi\mr\gamma_0(\tu)}}{256 C_{45}^2}, \frac{e^{-\ml \frac{23}{2} + 3 \xi\mr\gamma_0(\tu)}}{256 C_{49}^2C_{56}^2} \right\}, $$
\item \eqref{SIT138}: 
$$v(\tu) - v_0 \leq \frac{2r_0(\tu)}{e^2 C_{56}}e^{\gamma_0(\tu)} = \frac{2r_0(0)}{e^2 C_{56}}e^{\gamma_0(\tu) - \log \ml \frac{r_0(0)}{r_0(\tu)}\mr},$$
\item \eqref{SIT128}: $$h^2(v(\tu)) \geq 8C_{60}^2(e^{-2\gamma_0(\tu)} + (v(\tu) - v_0)^2),$$
\item \eqref{SIT131}: $$h^2(v(\tu)) \geq 144 e^{2C_{55}} V^2(\tu),$$
\item \eqref{SIT141}:
$$h^2(v(\tu)) \geq  C_{61}\ml r_0(\tu) + e^{-\gamma_0(\tu)}\ml \log \ml \frac{1}{v(\tu) - v_0}\mr + \log \ml \frac{2r_0(0)}{eC_{56}}\mr  + 
 \log \ml\frac{r_0(\tu)}{r_0(0)} \mr + \gamma_0(\tu)\mr \mr.$$
\end{enumerate}
Recall that each of the constants that appear above is only dependent on $u_2$, $\te$ and parameter $\xi \in \left(0,\oh\right)$.

Before we proceed with verifying these above assumptions, we first state a useful lemma.
\begin{lemma}\label{SITLemma4}
There exist positive constants $C_{66}$ and $C_{67}$ given in \eqref{SIT149} depending only on $u_2$ and $\te$, such that
\begin{equation}\label{SIT156}
C_{66}(v(\tu) - v_0) \leq \delta(u_2,v(\tu)) \leq C_{67}(v(\tu) - v_0).
\end{equation}
\end{lemma}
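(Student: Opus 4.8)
\textbf{Proof proposal for Lemma \ref{SITLemma4}.}

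The plan is to derive both inequalities directly from the definition $\delta(u_2,v(\tu)) = \frac{1}{r_0(u_2)}\int_{v_0}^{v(\tu)}(\dv r)(u_2,v')\,\D v'$ together with uniform upper and lower bounds for $(\dv r)(u_2,\cdot)$ on the interval $[v_0,v_0+\te]$. The key observation is that $u_2$ is a \emph{fixed} interior coordinate (chosen in Lemma \ref{FITEstLemma8}), so that the outgoing null curve $\{u_2\}\times[v_0,v_0+\te]$ lies in the regular region $\mathcal{R}$, on which $\dv r > 0$. Since $\dv r$ is a continuous function on the compact set $\{u_2\}\times[v_0,v_0+\te]$, by the extreme value theorem there exist positive constants
\begin{equation*}
\bar{B}_2 := \sup_{v'\in[v_0,v_0+\te]}(\dv r)(u_2,v') \quad \text{and} \quad \underline{B}_2 := \inf_{v'\in[v_0,v_0+\te]}(\dv r)(u_2,v') > 0.
\end{equation*}
The positivity of $\underline{B}_2$ is exactly where we use $\{u_2\}\times[v_0,v_0+\te]\subset\mathcal{R}$; note that for $v' \in [v_0, v(\tu)] \subset [v_0, v_0+\te]$ this range is contained in the interval on which the extrema are taken.

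Given these bounds, one simply integrates: for $v(\tu) \in [v_0,v_0+\te]$,
\begin{equation*}
\frac{\underline{B}_2}{r_0(u_2)}(v(\tu) - v_0) \leq \delta(u_2,v(\tu)) = \frac{1}{r_0(u_2)}\int_{v_0}^{v(\tu)}(\dv r)(u_2,v')\,\D v' \leq \frac{\bar{B}_2}{r_0(u_2)}(v(\tu) - v_0).
\end{equation*}
Thus one sets $C_{66} := \underline{B}_2/r_0(u_2)$ and $C_{67} := \bar{B}_2/r_0(u_2)$, both of which depend only on $u_2$ (and implicitly $\te$, which fixes the interval over which the supremum/infimum are taken), as claimed. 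Since $r_0(u_2) = r(u_2,v_0) > 0$ is a fixed positive number, both constants are finite and positive. (If one prefers to make the $\te$-dependence explicit in the statement's reference to \eqref{SIT149}, one can simply record $C_{66}(u_2,\te)$ and $C_{67}(u_2,\te)$ with these expressions.)

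There is no real obstacle here; the only point requiring care is the justification that $\underline{B}_2 > 0$, i.e.\ that $\dv r$ does not degenerate along $\{u_2\}\times[v_0,v_0+\te]$. This follows because $u_2$ was selected (in the refinement yielding $\mathcal{P}'$ and in the construction preceding \eqref{SIT135}) so that the closed rectangle $[0,u_2]\times[v_0,v_0+\te]$, and in particular its right edge, lies in $\mathcal{R}$ where $\dv r>0$; combined with continuity of $\dv r$ on this compact edge, the infimum is attained and is strictly positive. Everything else is the elementary monotonicity of the integral. I would present the argument in essentially the three displayed lines above, after a one-sentence reminder that $\{u_2\}\times[v_0,v_0+\te]\subset\mathcal{R}$ and that $\dv r$ is continuous there.
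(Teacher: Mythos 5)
Your proof is correct, but it takes a genuinely different and more elementary route than the paper's. The paper does not integrate $\dv r$ directly along the outgoing null curve $\{u_2\}\times[v_0,v(\tu)]$; instead it computes $\du\log\delta(u,v)$ for $(u,v)\in[0,u_2]\times[v_0,v(\tu)]$, bounds the relevant quantities on the full compact rectangle $[0,u_2]\times[v_0,v_0+\te]$ via continuity, integrates in $u$ to control the ratio $\delta(u_2,v)/\delta(0,v)$, and finally uses the gauge normalization $r(0,v)=v/2$ on $C_0^+$ to evaluate $\delta(0,v)=(v-v_0)/v_0$ explicitly. Your argument bypasses the evolution-in-$u$ step entirely: since $u_2$ is a fixed interior coordinate with $\{u_2\}\times[v_0,v_0+\te]\subset\mathcal{R}$, you write $\delta(u_2,v(\tu))=\frac{1}{r_0(u_2)}\int_{v_0}^{v(\tu)}\dv r(u_2,v')\,\D v'$ and bound $\dv r(u_2,\cdot)$ above and below by its extrema on the compact interval. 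This is cleaner and requires compactness only along a one-dimensional null segment rather than over a rectangle; it also avoids needing the explicit gauge formula for $\delta(0,\cdot)$. The only thing you do not recover is the paper's specific expression \eqref{SIT149} for $C_{66},C_{67}$, but since the lemma is an existence statement for such constants, your $C_{66}=\underline{B}_2/r_0(u_2)$ and $C_{67}=\bar{B}_2/r_0(u_2)$ serve equally well, and your justification that $\underline{B}_2>0$ (positivity of $\dv r$ on the compact null segment in $\mathcal{R}$) is exactly the right point to flag.
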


\begin{proof}
For each $(u,v) \in [0,u_2] \times [v_0,v(\tu)]$, we first compute 
\begin{equation}\label{SIT143A}
\begin{aligned}
\du \log \delta &= \frac{\frac{\du r}{r_0} - \frac{r}{r_0^2}(\du r_0)}{\frac{r - r_0}{r_0}} = r \frac{\frac{\du r}{r} - \frac{\du r_0}{r_0}}{r - r_0} \\
\end{aligned}
\end{equation}
and
\begin{equation}\label{SIT143}
\begin{aligned}
\du \log \delta(u,v) = r(u,v) \frac{\int^v_{v_0} \dv \ml \frac{\du r}{r}\mr(u,v') \D v'}{\int^v_{v_0} \dv r(u,v') \D v'} =  r(u,v) \frac{\int^v_{v_0}  \frac{\du r \dv r}{r^2} \ml \frac{2\mu - 1 - Q^2/r^2}{1-\mu}\mr  (u,v') \D v'}{\int^v_{v_0} \dv r(u,v') \D v'}. \\
\end{aligned}
\end{equation}
Here we have used \eqref{FIT180} for calculating $\dv \ml \frac{\du r}{r}\mr$. Next, note that in $[0,u_2]\times [v_0,v_0 + \te] \subset \mathcal{R}$, functions $\dv r$ and $\frac{\du r \dv r}{r^2} \ml \frac{2\mu - 1 - Q^2/r^2}{1-\mu}\mr$ are continuous. Since $[0,u_2]\times [v_0,v_0 + \te]$ is compact and $\mu < 1$, then for all $(u,v) \in [0,u_2]\times [v_0,v_0 + \te]$, there exist constants $C_{62}, C_{63}$, $C_{64}$, $C_{65}$, such that 
\begin{equation}\label{SIT144}
C_{62}(u_2) \leq \frac{\du r \dv r}{r^2} \ml \frac{2\mu - 1 - Q^2/r^2}{1-\mu}\mr(u,v) \leq C_{63}(u_2)
\end{equation}
and 
\begin{equation}\label{SIT145}
0 < C_{64}(u_2) \leq \dv r(u,v) \leq C_{65}(u_2).
\end{equation} 
Consequently, from \eqref{SIT143}, we have
$$C_{62}r(u,v) \frac{v - v_0}{\int^v_{v_0}\dv r(u,v')\D v'} \leq \du \log \delta(u,v) \leq C_{63}r(u,v) \frac{v - v_0}{\int^v_{v_0}\dv r(u,v')\D v'},$$
and
$$-\frac{|C_{62}|}{C_{64}}r(0, v_0+\te) \leq \du \log \delta(u,v)  \leq \frac{|C_{63}|}{C_{64}}r(0, v_0+\te).$$
These imply that
\begin{equation}\label{SIT147}
\begin{aligned}
-\frac{|C_{62}|}{C_{64}}r(0, v_0+\te)u_2 &\leq \log \ml \frac{\delta(u_2,v)}{\delta(0,v)}\mr   \leq \frac{|C_{63}|}{C_{64}}r(0, v_0+\te)u_2.
\end{aligned}
\end{equation}
By setting $v = v(\tu)$, we hence obtain
\begin{equation}\label{SIT148}
\begin{aligned}
C_{66}(v - v_0) &\leq \delta(u_2,v) \leq C_{67}(v - v_0)
\end{aligned}
\end{equation}
with
\begin{equation}\label{SIT149}
\begin{aligned}
C_{66} := e^{-\frac{|C_{62}|}{C_{64}}r(0, v_0+\te)u_2}v_0 > 0, \quad \text{ and } \quad 
C_{67} := e^{\frac{|C_{63}|}{C_{64}}r(0, v_0+\te)u_2}v_0 > 0.
\end{aligned}
\end{equation}
\end{proof}
We now start to verify the assumptions (1) to (8) as listed above. For a fixed $\tu$, since $v(\tu) - v_0 \rightarrow 0$ as $v(\tu) \rightarrow v_0$, we can pick $v(\tu)$ sufficiently close to $v_0$, such that it satisfies 
\begin{equation}\label{SIT150}
v(\tu) - v_0 = e^{-A\gamma_0(\tu) - B \log \ml \frac{r_0(0)}{r_0(\tu)}\mr}.
\end{equation}
We claim that by choosing $A$ and $B$ to be suitable constants, we can verify assumptions (1) to (5). First, note that for assumption (5) to hold, with $\tu$ sufficiently close to $v_0$, it suffices to pick $A > -1$ and $ B \geq 1$. To verify assumptions (1) to (4), we conduct as follows. For a given $\tu \in [u_2,v_0)$, we denote a set
\begin{equation}\label{SIT151}
S_2 := \left\{u \in [u_2,\tu]: \text{ Assumptions (1) to (4) hold for all } u' \in [u_2,u] \right\}.
\end{equation}
It is clear that the set $S_2$ is closed. To show that $S_2$ is non-empty, it suffices to verify that $u_2 \in S_2$. Observe by Lemma \ref{SITLemma4} and \eqref{SIT150} that 
\begin{equation}\label{SIT152}
\delta(u_2,v(\tu)) \leq C_{67}(v(\tu) - v_0) = C_{67} e^{-A\gamma_0(\tu) - B \log \ml \frac{r_0(0)}{r_0(\tu)}\mr}.
\end{equation}
Henceforth, assumptions (1) and (4) hold at $u = u_2$ if we have
\begin{equation}\label{SIT153}
\begin{aligned}
C_{67} e^{-A\gamma_0(\tu) - B\log \ml \frac{r_0(0)}{r_0(\tu)}\mr} &\leq  \frac{C_{33}^2e^2}{4}e^{-4\gamma_0(u_2)}, \\
C_{67} e^{-A\gamma_0(\tu) - B\log \ml \frac{r_0(0)}{r_0(\tu)}\mr} &\leq  \min\left\{ \frac{1}{2},\frac{2(1 - \mu_0(0))e^{-\gamma_0(u_2)}}{3}, \frac{e^{-\frac{9}{2}\gamma_0(u_2)}}{2 C_{34}} \right\}, \\
C_{67} e^{-A\gamma_0(\tu) - B\log \ml \frac{r_0(0)}{r_0(\tu)}\mr} &\leq  \frac{e^{-\ml \frac{15}{2} + 3 \xi \mr\gamma_0(u_2)}}{64C_{53}^2}, \\
C_{67} e^{-A\gamma_0(\tu) - B\log \ml \frac{r_0(0)}{r_0(\tu)}\mr} &\leq  \min \left\{ \frac{e^{-\ml \frac{19}{2} + 3 \xi\mr\gamma_0(u_2)}}{256 C_{45}^2}, \frac{e^{-\ml \frac{23}{2} + 3 \xi\mr\gamma_0(u_2)}}{256 C_{49}^2C_{56}^2} \right\}. \\
\end{aligned}
\end{equation}
These are true by picking $\tu$ sufficiently close to $v_0$, since the left sides of the inequalities depend on $\tu$ and they go to $0$ as $\tu \rightarrow v_0^-$, while the right sides of the inequalities do not depend on $\tu$. Hence, we prove $u_2 \in S_2$. We proceed to prove that $S_2$ is open. Pick an arbitrary $u_5 \in S_2$. We then have assumptions (1) to (4) hold for all $u \in [u_2,u_5]$. Next, choose an arbitrary $u_6 \in [u_2,u_5]$. Analogously to \eqref{SIT135}, we get
\begin{equation}\label{SIT157}
\begin{aligned}
\delta(u_6,v(\tu)) &= \frac{1}{r_0(u_6)}\int^{v(\tu)}_{v_0} (\dv r)(u_6,v')\D v' = \frac{1}{r_0(u_6)} \int^{v(\tu)}_{v_0} (\dv r)(0,v')e^{-\tgam(u_6,v') - \gamma(u_2,v')} \D v' \\
&\quad\quad\leq \frac{e C_{56} e^{-\gamma_0(u_6)}}{2r_0(u_6)}(v(\tu) - v_0) \leq \frac{e C_{56} e^{-\gamma_0(u_2)}}{2r_0(\tu)}(v(\tu) - v_0).
\end{aligned}
\end{equation}
Consequently, by \eqref{SIT150}, this implies
\begin{equation}\label{SIT158}
\begin{aligned}
\delta(u_6,v(\tu)) \leq \frac{e C_{56} e^{-\gamma_0(u_2)}}{2r_0(\tu)}(v(\tu) - v_0) = \frac{e C_{56} e^{-\gamma_0(u_2)}}{2r_0(0)}e^{-A \gamma_0(\tu) - ( B -1)\log \ml \frac{r_0(0)}{r_0(\tu)}\mr}.
\end{aligned}
\end{equation}
Since $u_6$ is arbitrary and the above estimate is uniform in $u_6$, we hence show that for all $u \in [u_2,u_5]$, we have
\begin{equation}\label{SIT159}
\begin{aligned}
\delta(u,v(\tu)) &\leq \frac{e C_{56} e^{-\gamma_0(u_2)}}{2r_0(0)}e^{-A \gamma_0(\tu) - ( B -1)\log \ml \frac{r_0(0)}{r_0(\tu)}\mr}.
\end{aligned}
\end{equation}
Henceforth, if we pick $A > \frac{23}{2} + 3 \xi$ and $B \geq 1$ corresponding to the most singular term among assumptions (1) to (4), then for $\tu$ sufficiently close to $v_0$, we now obtain better estimates for assumptions (1) to (4) for all $u \in [u_2,u_5]$. In other words, we show that $S_2$ is open since $u_5$ is an arbitrary element in $[u_2, \tu]$. This thus implies that $S_2 = [u_2,\tu]$. Therefore, assumptions (1) to (4) indeed hold for all $u \in [u_2,\tu]$. With $\xi \in \left(0,\frac{1}{2}\right)$, choosing $A=14$ and $B=1$ satisfies all the requirements.

It now remains to verify assumptions (6) to (8). Plugging in our choice 
$$v(\tu) - v_0=e^{-14\gamma_0(\tu) - \log \ml \frac{r_0(0)}{r_0(\tu)}\mr},$$ 
we obtain
\begin{equation}\label{SIT160}
8C_{60}\ml e^{-2\gamma_0(\tu)} + e^{-28\gamma_0(\tu) - 2\log \ml \frac{r_0(0)}{r_0(\tu)}\mr} \mr \leq 16C_{60}e^{-2\gamma_0(\tu)}
\end{equation}
and 
\begin{equation}\label{SIT161}
\begin{aligned}
&\; C_{61}\ml r_0(\tu) + e^{-\gamma_0(\tu)}\ml \log \ml \frac{1}{v(\tu) - v_0}\mr + \log \ml \frac{2r_0(0)}{eC_{56}}\mr  + 
 \log \ml\frac{r_0(\tu)}{r_0(0)} \mr + \gamma_0(\tu)\mr \mr \\
 = &\; C_{61}\ml r_0(\tu) + e^{-\gamma_0(\tu)}\ml\log \ml \frac{2r_0(0)}{eC_{56}}\mr  + 15\gamma_0(\tu) \mr \mr \\
\leq &\; C_{61}\ml r_0(0)e^{-\log \ml \frac{r_0(0)}{r_0(\tu)}\mr} + e^{-\frac{1}{2}\gamma_0(\tu)}\ml\log \ml \frac{2r_0(0)}{eC_{56}}\mr  + 15\mr \mr \\
\leq &\; C_{61}\ml r_0(0) + \log \ml \frac{2r_0(0)}{eC_{56}}\mr  + 15\mr\max\left\{e^{-\frac{1}{2}\gamma_0(\tu)}, e^{- \log \ml \frac{r_0(0)}{r_0(\tu)}\mr}\right\},
 \end{aligned}
\end{equation}
where in \eqref{SIT161} we use the fact that $\gamma_0(\tu)e^{-\frac{1}{2}\gamma_0(\tu)} \leq 1$ for $\gamma_0(\tu) \geq 0$. 

Hence, to fulfill assumptions (6) to (8), it suffices to impose
\begin{equation}\label{SIT162}
h^2(v(\tu)) \geq 16C_{60}e^{-2\gamma_0(\tu)}, \end{equation}
\begin{equation}\label{SIT163}
h^2(v(\tu)) \geq 144 e^{2C_{55}} V^2(\tu),
\end{equation}
and
\begin{equation}\label{SIT164}
h^2(v(\tu)) \geq C_{61}\ml r_0(0) + \log \ml \frac{2r_0(0)}{eC_{56}}\mr  + 15\mr\cdot\max\left\{e^{-\frac{1}{2}\gamma_0(\tu)}, e^{- \log \ml \frac{r_0(0)}{r_0(\tu)}\mr}\right\}.
\end{equation}
Observe that, for $\tu$ sufficiently close to $v_0$, the three competing vanishing terms are $V^2(\tu), e^{-\frac{1}{2}\gamma_0(\tu)}$ and $e^{-\log \ml \frac{r_0(0)}{r_0(\tu)}\mr}$. Henceforth, to verify assumptions (6) to (8), for a $\tu$ sufficiently close to $v_0$, it suffices to require that
\begin{equation}\label{SIT165}
\limsup_{\tu \rightarrow v_0} \frac{ h^2(v(\tu))}{ \ml V(\tu) + e^{-\frac{1}{4}\gamma_0(\tu)} + e^{-\frac{1}{2}\log\ml\frac{r_0(0)}{r_0(\tu)}\mr}\mr^2 } = \infty.
\end{equation}
This is exactly the condition \eqref{SIT168} stated in Theorem \ref{SITTheorem} and we thus conclude the proof of Theorem \ref{SITTheorem}.
\qed

\section{Exceptional Set and Weak Cosmic Censorship}\label{ExceptionalSetSection}

In this section, we prove that the subset of BV initial data, which lead to the incompleteness of future null infinity, are non-generic. 

As mentioned in the introduction, we prescribe quantity $\alpha$ along $C_0^+$ and  we have
\begin{equation}\label{Except1}
\alpha = \phi + \frac{\theta}{\dv r} \quad \mbox{ with } \quad \theta=r\partial_v\phi.
\end{equation}
Note that if both $\phi$ and $\theta$ are of bounded variation along $C_0^+$, so is $\alpha$. Aligned with \cite{christ2}, we define the total variation of $\phi$ along $C_0^+$ as
\begin{equation}\label{Except2}
TV_{\{0\} \times (0,\infty)}[\phi] = 2\int_0^{\infty}|\theta| (0,v') \frac{\D v'}{v'}.
\end{equation}
For notational simplicity, we denote 
\begin{equation}\label{vartheta}
    \vartheta(v) := \theta(0,v).
\end{equation}
In this paper, we consider initial data set of $\vartheta$ with $\vartheta \in BV([0,\infty),\D v) \cap L^1\ml (0,\infty), \frac{\D v}{v}\mr $.  The main conclusion of this section is

\begin{theorem}\label{ExceptionalSet}
(Exceptional Set.) For the spherically symmetric Einstein-Maxwell-charged scalar field system, within its initial data set $BV([0,\infty),\D v) \cap L^1\ml (0,\infty), \frac{\D v}{v}\mr$ prescribed along $C_0^+$, we have an exceptional set $\mathcal{E}$ and it satisfies the following properties:
\begin{itemize}
\item[-] for each $\vartheta\notin\mathcal{E}$, the evolutional spacetime arising from $\vartheta$ possesses a complete future null infinity; 

\item[-] for every $\vartheta\in\mathcal{E}$, there is a complex $2$-dimensional linear space $\Pi_{\vartheta}$ belonging to the initial data set and each element in $\Pi_{\vartheta} \setminus \{\vartheta \}$ does not belong to $\mathcal{E}$, i.e., 
\begin{equation}\label{Except3}
    \ml \Pi_{\vartheta} \setminus \{\vartheta \}\mr \bigcap \mathcal{E} = \emptyset.
\end{equation}
Moreover, if $\vartheta, \vartheta' \in \mathcal{E}$ and $\vartheta \neq \vartheta'$, then we have
\begin{equation}\label{Except4}
    \Pi_{\vartheta} \bigcap \Pi_{\vartheta'} = \emptyset.
\end{equation}
\end{itemize}
\end{theorem}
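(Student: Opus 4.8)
\textbf{Proof strategy for Theorem~\ref{ExceptionalSet}.}
The plan is to follow the blueprint of Christodoulou \cite{Chr.99}, now feeding in the instability theorems of the previous two sections and the $C^1$ extension criteria of Section~\ref{C1 Extension}. First I would \emph{define} $\mathcal{E}$ to be the set of $\vartheta$ in the data class $BV([0,\infty),\mathrm{d}v)\cap L^1((0,\infty),\mathrm{d}v/v)$ whose maximal future development has \emph{incomplete} future null infinity; the first bullet of the statement is then vacuous and all the content lies in the construction of $\Pi_\vartheta$. The crucial reduction is: if $\vartheta\in\mathcal{E}$ then, by the structure of the singular boundary recalled in Section~\ref{Prelim} (cf.~\cite{kommemi}), incompleteness of $\mathcal{I}^+$ forces a first singular point $\mathcal{O}'=(v_0,v_0)\in\Gamma\cap\partial\mathcal{Q}$ with $\mathcal{J}^-(\mathcal{O}')\subset\mathcal{R}$ and with the apparent horizon $\mathcal{A}$ failing to censor $\mathcal{O}'$. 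By Theorem~\ref{SET} this is impossible unless $\sup_{\mathcal{D}(0,v_0)}\gamma=+\infty$, and, arguing as in Lemma~\ref{FITEstLemma6}, this in turn forces $\gamma_0(u)\to+\infty$ as $u\to v_0^-$. Hence the hypotheses of Theorems~\ref{FITTheorem} and~\ref{SITTheorem} on $\gamma_0$ hold; since their conclusions (namely $\mathcal{A}\neq\emptyset$ issuing from or accumulating at $\mathcal{O}'$, which would censor $\mathcal{O}'$) are excluded by $\vartheta\in\mathcal{E}$, their hypotheses must fail. In particular Theorem~\ref{FITTheorem} forces the first instability condition \eqref{FITTheorem1} to be an \emph{equality}:
\[
\ml \tfrac{r\dv\phi}{\dv r}\mr_0(0)=\lim_{u\to v_0^-}I(u),
\]
with $I$ as in \eqref{FIT14} (Theorem~\ref{SITTheorem} moreover forces the second instability condition \eqref{SIT168} to fail, but only the first resonance will be needed below).

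Next I would fix, once and for all, two functions $\vartheta^{(1)},\vartheta^{(2)}$ in the data class, set $W:=\mathrm{span}_{\mathbb{C}}\{\vartheta^{(1)},\vartheta^{(2)}\}$ and $\Pi_\vartheta:=\vartheta+W$ for every $\vartheta\in\mathcal{E}$; note that, $W$ being fixed, one has $\Pi_\vartheta\cap\Pi_{\vartheta'}\neq\emptyset$ if and only if $\vartheta-\vartheta'\in W$. Introduce the $\mathbb{C}$-valued functional $\Delta(\vartheta):=\ml \tfrac{r\dv\phi}{\dv r}\mr_0(0)-\lim_{u\to v_0(\vartheta)^-}I_\vartheta(u)$, defined whenever a first singular point exists; the reduction above says $\Delta\equiv 0$ on $\mathcal{E}$. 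The heart of the proof is to choose $\vartheta^{(1)},\vartheta^{(2)}$ so that, for \emph{every} $\vartheta\in\mathcal{E}$, the restriction of $(c_1,c_2)\mapsto\Delta(\vartheta+c_1\vartheta^{(1)}+c_2\vartheta^{(2)})$ to $\mathbb{C}^2$ vanishes \emph{only} at $(c_1,c_2)=0$. The geometric input is that $\ml \tfrac{r\dv\phi}{\dv r}\mr_0(0)=\alpha(0,0)$ depends on the data only through its behaviour as $v\to 0$, whereas $\lim I_\vartheta$ is an integral along the whole backward cone $C_0^-$: one therefore takes $\vartheta^{(1)}$ supported in a compact subinterval of $(0,\infty)$ bounded away from $0$ (so that, to leading order, perturbing by $c_1\vartheta^{(1)}$ moves $\lim I$ but not $\alpha(0,0)$) and $\vartheta^{(2)}$ a model spike concentrating near $v=0$ (moving $\alpha(0,0)$ and $\lim I$ with different leading coefficients). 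Continuity of the flow in the $BV\cap L^1(\mathrm{d}v/v)$ topology --- which rests on the scale-critical $BV$ area estimates of Theorem~\ref{AreaLemma3} propagated from $C_0^+$, together with the extension criteria of Theorems~\ref{SET} and~\ref{FET} used to locate $v_0(\vartheta')$ --- then shows that this map is nonconstant and of a definite type (real-analytic, or strictly monotone along each direction, or of nonzero winding about $0$) that returns to $0$ only at the origin. Consequently, for $\vartheta'\in\Pi_\vartheta\setminus\{\vartheta\}$ one has $\Delta(\vartheta')\neq 0$, i.e.\ the first instability condition \eqref{FITTheorem1} holds for $\vartheta'$: either $\vartheta'$ has no first singular point on $\Gamma$, in which case $\mathcal{I}^+[\vartheta']$ is complete by the global structure of Section~\ref{Prelim}, or it has one, at which $\gamma_0\to\infty$ automatically as above and Theorem~\ref{FITTheorem} applies to censor it --- in both cases $\vartheta'\notin\mathcal{E}$, which is \eqref{Except3}. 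Then \eqref{Except4} is formal: if $\vartheta\neq\vartheta'$ lie in $\mathcal{E}$ and $\Pi_\vartheta\cap\Pi_{\vartheta'}\neq\emptyset$, then $\vartheta-\vartheta'\in W$, so $\vartheta=\vartheta'+(\vartheta-\vartheta')\in\Pi_{\vartheta'}\setminus\{\vartheta'\}$, contradicting \eqref{Except3} applied to $\vartheta'$.

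I expect the main obstacle to be the middle step: proving that the perturbations break the resonance \emph{robustly}, i.e.\ for all --- not merely infinitesimally small --- elements of $W$. This requires controlling how the first singular time $v_0(\vartheta')$ and the functionals $\alpha(0,0)$ and $\lim I_{\vartheta'}$ depend on $\vartheta'$ along the two-parameter family, together with a quantitative continuity (and, along $\vartheta^{(1)},\vartheta^{(2)}$, monotonicity) statement that persists as one moves through the data class; this is exactly where the propagated $BV$ estimates of Theorem~\ref{AreaLemma3} and the sharpness built into the trapped-surface criterion of Theorem~\ref{Trapped} (in particular the gained $r_1(u_0)$ factor in \eqref{TSF2}) are indispensable. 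A secondary point, essentially bookkeeping, is to verify that the instability one is permitted to violate for the perturbed data is genuinely only \eqref{FITTheorem1}, so that the comparatively elementary Theorem~\ref{FITTheorem} --- rather than the more delicate Theorem~\ref{SITTheorem} --- suffices to censor the perturbed evolution, which is what keeps the exceptional set $\mathcal{E}$ of positive codimension with a transversal plane of the stated dimension.
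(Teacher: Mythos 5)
Your proposal diverges from the paper's proof in two crucial ways, and both divergences introduce genuine gaps that would prevent the argument from closing.

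\textbf{The fixed plane $W$ does not exploit the key localization trick.} The paper's $\Pi_\vartheta$ is not of the form $\vartheta+W$ for a single, universal $W$: the perturbations $f_1,f_2$ are chosen \emph{depending on $\vartheta$}, and in particular both vanish identically on $[0,v_0(\vartheta))$. This is the entire mechanism that makes the argument tractable: it forces the perturbed solution to coincide with the unperturbed one throughout $\mathcal{D}(0,v_0)\setminus\{v=v_0\}$, so that $\gamma_0(u)$, $I(u)$, $g(v)$, and the location $v_0$ of the first singular point are \emph{literally unchanged} under the perturbation. Only the data at and past $v_0$ move. Your $\vartheta^{(1)}$ (compactly supported away from $0$) and $\vartheta^{(2)}$ (a spike near $v=0$) both alter the solution inside $\mathcal{D}(0,v_0)$ and shift $v_0$ itself, and you correctly identify the resulting dependence of $v_0(\vartheta')$, $\alpha(0,0)$, and $\lim I_{\vartheta'}$ on $\vartheta'$ as ``the main obstacle.'' It is not a bookkeeping obstacle --- it is precisely the nonlinearity the paper's choice of $f_1,f_2$ was designed to eliminate, and no continuity or monotonicity statement in the $BV\cap L^1(\mathrm{d}v/v)$ topology of the strength you would need is available (nor is such a statement developed from Theorem~\ref{AreaLemma3}, which is a local, small-data result near $\Gamma$, not a global continuity-of-flow result).

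\textbf{Codimension two cannot come from the first instability condition alone.} You propose to break only the resonance $\Delta(\vartheta)=(r\dv\phi/\dv r)_0(0)-\lim I_\vartheta(u)=0$, asserting ``only the first resonance will be needed below.'' But $\Delta$ is a single $\mathbb{C}$-valued functional, so generically $(c_1,c_2)\mapsto\Delta(\vartheta+c_1\vartheta^{(1)}+c_2\vartheta^{(2)})$ vanishes on a complex one-dimensional subvariety of $\mathbb{C}^2$, not just at the origin; demanding otherwise is precisely demanding that the map be nondegenerate in a way a single scalar equation cannot be. The paper never tries to do this. It tests the two directions against two \emph{different} conditions: $\lambda_1\neq 0$ shifts $\tilde\vartheta(v_0)=\lim I+\lambda_1$ and activates Theorem~\ref{FITTheorem}, while $\lambda_1=0$, $\lambda_2\neq 0$ leaves the FIT resonance intact and instead forces the $h/g$ ratio of \eqref{SIT168} to blow up, activating Theorem~\ref{SITTheorem} (this is where $f_2$ is chosen with property \eqref{Except8}). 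One complex direction per instability theorem is what produces complex codimension two. Dropping the SIT from the plane construction collapses the conclusion to codimension one.

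Secondary remarks: the paper defines $\mathcal{E}$ as the set where both the FIT and SIT initial-data conditions fail and then \emph{proves} the first bullet via the positive-energy check \eqref{WCSC1} and the Dafermos completeness criterion; defining $\mathcal{E}$ by incompleteness instead, as you do, is a legitimate reorganization, but you would still owe the same verification (now as an equivalence rather than an implication). Finally, disjointness \eqref{Except4} is not purely formal once $\Pi_\vartheta$ depends on $\vartheta$: the paper's argument uses again that agreement on $[0,v_0)$ pins down $v_0$, $\gamma_0$, $I$, $g$, and hence first forces $\lambda_1=\lambda_1'$ and then, via the $h/g$ estimate, $\lambda_2=\lambda_2'$.
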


\begin{remark}
Theorem \ref{ExceptionalSet} implies that the exceptional set $\mathcal{E}$ has complex codimension $2$ instability in the $BV([0,\infty),\D v) \cap L^1\ml (0,\infty), \frac{\D v}{v}\mr$ initial data set. 
\end{remark}

\begin{proof}
We define the set $\mathcal{E}$ to be the union of initial datum along $C_0^+$, which does not satisfy the initial-data conditions in Theorem \ref{FITTheorem} and in Theorem \ref{SITTheorem}. Suppose that $\vartheta \in \mathcal{E}$. Then by Theorem \ref{FITTheorem} and Theorem \ref{SITTheorem}, we have
\begin{equation}\label{Except5}
\mlm I(u) - \vartheta(v_0) \mrm \rightarrow 0 \quad \mbox{ as } \quad u\rightarrow v_0^-
\end{equation}
and
\begin{equation}\label{Except6}
\limsup_{u \rightarrow v_0^-}\frac{h(v(u))}{g(v(u))} < + \infty
\end{equation}
with $v(u), h(v), g(v)$ defined in \eqref{SIT166}, \eqref{SIT48}, \eqref{SIT167}, respectively. We then choose two real-valued non-negative functions $f_1, f_2 \in L^1\ml (0,\infty), \frac{\D v}{v} \mr$. For $f_1$, we further require it to vanish on $[0,v_0)$, to be absolutely continuous on $[v_0, \infty)$, and to satisfy
\begin{equation}\label{Except7}
\lim_{v \rightarrow v_0^+}f_1(v) = 1.
\end{equation}
For $f_2$, we choose it to be absolutely continuous on $[0,\infty)$, vanishing on $[0,v_0]$, and with property
\begin{equation}\label{Except8}
\limsup_{v \rightarrow v_0^+}\frac{1}{g(v)^2}\ml \frac{1}{v-v_0}\int^v_{v_0}\ml \frac{v'}{v_0}\mr^2 f_2^2(v') \D v' \mr = \infty. \footnote{In particular, one can set $f_2(v)=0$ for $v\in[0,v_0)$, $f_2(v) = g(v)^{\oh}$ for $v\in[v_0, v_0+1]$, and $f_2$ to be absolutely continuous for $[v_0+1,\infty)$. To see this choice of $f_2$ satisfies the above requirements, it suffices to check that $g$ is differentiable on $(v_0,v_0+1]$, to show $\lim_{v\rightarrow v_0^+}f_2(v)=\lim_{v\rightarrow v_0^+}g(v)^{\frac12}=0$, and to prove property \eqref{Except8} holds. Differentiability of $g$ on $(v_0,v_0+1]$ follows from the fact that $v(u)$ is a differentiable function with respect to $u$ with $v'(u) > 0$. The condition $\lim_{v\rightarrow v_0^+}f_2(v)=0$ is easily verified since $\lim_{v\rightarrow v_0^+}g(v) = 0$. The property \eqref{Except8} comes from $\lim_{v\rightarrow v_0^+}g(v) = 0$ and the use of Lebesgue differentiation theorem.}
\end{equation}

With $\lambda_1, \lambda_2 \in \mathbb{C}$, we then consider the following initial data
\begin{equation}\label{Except9}
\tilde{\vartheta}_{(\lambda_1,\lambda_2)} = \vartheta + \lambda_1 f_1 + \lambda_2 f_2.
\end{equation}
Since $\tilde{\vartheta}_{(\lambda_1,\lambda_2)}$ and $\vartheta$ coincides for $v \in [0,v_0)$, the corresponding solutions coincide in $\mathcal{D}(0,v_0) \setminus \{v = v_0\}$. This thus permits a common definition of $\gamma_0$ and $I(u)$ as described in \eqref{FIT3} and \eqref{FIT14} for both $\tilde{\vartheta}_{(\lambda_1,\lambda_2)}$ and $\vartheta$. Applying \eqref{Except5} and \eqref{Except7}, taking limits as $v \rightarrow v_0^+$ while respecting \eqref{SIT166}, these yield
\begin{equation}\label{Except10}
\tilde{\vartheta}_{(\lambda_1,\lambda_2)}(v_0) = \lim_{u \rightarrow v_0^-} I(u) + \lambda_1.
\end{equation}
If $\lambda_1 \neq 0$, then we have $\tilde{\vartheta}_{(\lambda_1,\lambda_2)}(v_0) \neq \lim_{u \rightarrow 0^-} I(u)$. By Theorem \ref{FITTheorem}, we get $\tilde{\vartheta}_{(\lambda_1,\lambda_2)} \notin \mathcal{E}$.

\noindent If $\lambda_1 = 0$ and $\lambda_2 \neq 0$, we encounter the scenario
\begin{equation}\label{Except12}
\tilde{\vartheta}_{(\lambda_1,\lambda_2)}(v_0) = \lim_{u \rightarrow v_0^-} I(u) = \vartheta(v_0)
\end{equation}
and for $v \in (v_0,\infty)$ we have
\begin{equation}\label{Except13}
\tilde{\vartheta}_{(\lambda_1,\lambda_2)}(v) = \vartheta(v) + \lambda_2 f_2(v).
\end{equation}
Recalling \eqref{FIT31}, we then get

\begin{equation}\label{Except11}
\begin{aligned}
\tilde{h}_{(\lambda_1,\lambda_2)}(v)^2 =&\frac{2e^{-2\gamma_0(0)}}{v-v_0}\int^v_{v_0} \mlm \frac{v'}{v_0}\tilde{\vartheta}_{(\lambda_1,\lambda_2)}(v') - \tilde{\vartheta}_{(\lambda_1,\lambda_2)}(v_0)\mrm^2 \D v'\\
=&\frac{2e^{-2\gamma_0(0)}}{v-v_0}\int^v_{v_0} \mlm \lambda_2 \ml \frac{v'}{v_0}\mr f_2(v') - \ml \frac{v'}{v_0}\vartheta(v') - \vartheta(v_0)\mr\mrm^2 \D v'.
\end{aligned}
\end{equation}
By \eqref{SIT5}, we hence derive
\begin{equation}\label{Except16}
\begin{aligned}
\frac{\tilde{h}_{(\lambda_1,\lambda_2)}(v)^2}{g(v)^2} &= \frac{1}{g(v)^2}\frac{2e^{-2\gamma_0(0)}}{v-v_0}\int^v_{v_0} \mlm \frac{v'}{v_0}\tilde{\vartheta}_{(\lambda_1,\lambda_2)}(v') - \tilde{\vartheta}_{(\lambda_1,\lambda_2)}(v_0)\mrm^2 \D v' \\
&\geq \lambda_2 e^{-2\gamma_0(0)}\frac{1}{g(v)^2}\frac{1}{v-v_0}\int^v_{v_0} \ml \frac{v'}{v_0}\mr^2 f_2^2(v') \D v' - 2e^{-2\gamma_0(0)}\frac{1}{g(v)^2}\frac{1}{v-v_0}\int^v_{v_0} \mlm \frac{v'}{v_0}\vartheta(v') - \vartheta(v_0)\mrm^2 \D v' \\
&= \lambda_2 e^{-2\gamma_0(0)}\frac{1}{g(v)^2}\frac{1}{v-v_0}\int^v_{v_0} \ml \frac{v'}{v_0}\mr^2 f_2^2(v') \D v' - 2e^{-2\gamma_0(0)}\frac{h(v)^2}{g(v)^2}. \\
\end{aligned}
\end{equation}
Applying \eqref{Except6} and \eqref{Except8}, inequality \eqref{Except16} further implies that
\begin{equation}\label{Except17}
\limsup_{v \rightarrow v_0^+}\frac{\tilde{h}_{(\lambda_1,\lambda_2)}(v)^2}{g(v)^2} = \infty.
\end{equation}
Theorem \ref{SITTheorem} then applies to $\tilde{\vartheta}_{(\lambda_1,\lambda_2)}$ and hence we demonstrate $\tilde{\vartheta}_{(\lambda_1,\lambda_2)} \notin \mathcal{E}$. 

\vspace{5mm}

We then proceed to prove that the complex $2$-dimensional linear subspace $\Pi_{\vartheta}$ defined by
\begin{equation}\label{Except18}
\Pi_{\vartheta}:=\{ \vartheta_{(\lambda_1,\lambda_2)}: (\lambda_1,\lambda_2) \in \mathbb{C}^2 \}
\end{equation}
intersects $\mathcal{E}$ only at $(\lambda_1,\lambda_2) = (0,0)$, which corresponds to $\vartheta$ itself. Suppose that $\vartheta, \vartheta' \in \mathcal{E}$ and $\Pi_{\vartheta}, \Pi_{\vartheta'}$ intersect at
\begin{equation}\label{Except19}
\tilde{\vartheta}_{(\lambda_1,\lambda_2)} = \tilde{\vartheta}'_{(\lambda_1,\lambda_2)},
\end{equation}
then by the construction of $f_1, f_2$, we have
$\vartheta = \vartheta'$ on $[0,v_0)$ and hence we obtain the same functions $\gamma_0(u), I(u), g(v)$. Employing \eqref{Except5}, we also get
\begin{equation}\label{Except20}
\vartheta(v_0) = \vartheta'(v_0) = \lim_{u\rightarrow 0^-}I(u).
\end{equation}
By \eqref{Except9} and \eqref{Except19} with $v=v_0$, we then obtain
\begin{equation}\label{Except21}
\lambda_1 = \lambda_1'.
\end{equation}
Equation \eqref{Except19} also implies
\begin{equation}\label{Except22}
\vartheta - \vartheta' = (\lambda_2' - \lambda_2)f_2.
\end{equation}
For each $v \in [v_0,\infty)$, this gives
\begin{equation}\label{Except23}
\mlm \frac{v}{v_0} \vartheta(v) - \vartheta(0) - \ml \frac{v}{v_0}\vartheta'(v) - \vartheta'(0)\mr \mrm^2 = \ml \frac{v}{v_0} \mr^2 |\lambda_2' - \lambda_2|^2f_2^2(v).
\end{equation}
Recall the first line of \eqref{Except11}, for $\vartheta'\in\mathcal{E}$, we have
\begin{equation}\label{Except24}
h'(v) := \frac{2e^{-2\gamma_0(0)}}{v-v_0}\int^v_{v_0} \mlm \frac{v'}{v_0}\vartheta'_{(\lambda_1,\lambda_2)}(v') - \vartheta'_{(\lambda_1,\lambda_2)}(v_0)\mrm^2 \D v'.
\end{equation}
Together with \eqref{Except23}, it then implies that
\begin{equation}\label{Except25}
\begin{aligned}
|\lambda_2' - \lambda_2|^2 \frac{1}{g(v)^2}\frac{1}{v-v_0}\int^v_{v_0} \ml \frac{v'}{v_0} \mr^2 f_2^2(v') \D v' &\leq \frac{1}{g(v)^2}\frac{1}{v-v_0}\int^v_{v_0} \mlm \frac{v}{v_0} \vartheta(v) - \vartheta(0) - \ml \frac{v}{v_0}\vartheta'(v) - \vartheta'(0)\mr \mrm^2(v') \D v' \\
&\leq e^{2\gamma_0(0)}\ml \frac{h(v)^2}{g(v)^2} + \frac{h'(v)^2}{g(v)^2}\mr.
\end{aligned}
\end{equation}
Note that \eqref{Except6} implies
\begin{equation}\label{Except26}
\limsup_{v \rightarrow v_0^+} \frac{h(v)}{g(v)} < + \infty \quad \text{ and } \quad \limsup_{v \rightarrow v_0^+} \frac{h'(v)}{g(v)} < + \infty.
\end{equation}
Back to \eqref{Except25}, utilizing \eqref{Except8}, \eqref{Except26} and letting $v\rightarrow v_0^+$, we then conclude that $\lambda_2 = \lambda_2'$. With \eqref{Except22}, this implies that $\vartheta = \vartheta'$. Hence, we prove that $\Pi_{\vartheta} \bigcap \Pi_{\vartheta'} = \emptyset$ unless $\vartheta = \vartheta'$. 

\vspace{5mm}

It remains to prove that, once the initial-data conditions for Theorem \ref{FITTheorem} (first instability theorem) or Theorem \ref{SITTheorem} (second instability theorem) are satisfied and a trapped surface is formed, then the corresponding dynamical spacetime is blessed with a complete future null infinity. In \cite{dafermos} Dafermos identified six conditions: $\mathbf{A}', \mathbf{B}', \mathbf{\Gamma}', \mathbf{\Delta}', \mathbf{E}', \mathbf{\Sigma T}'$ and proved that, for spherically symmetric Einstein field equations, once these six conditions are fulfilled, then the existence of a trapped surface would imply the completeness of future null infinity. For spherically symmetric Einstein-Maxwell-charged scalar field system, condition $\mathbf{\Sigma T}'$ (a local extension principle at $r(u,v)>0$) was proved in Section of \cite{kommemi}. With asymptotically flat initial data, in our setting, conditions $\mathbf{A}', \mathbf{B}', \mathbf{\Delta}', \mathbf{E}'$ are also automatically satisfied. It remains to meet the condition $\mathbf{\Gamma}'$ (positive energy condition), i.e., $T_{uu}\geq 0, T_{vv}\geq 0, T_{uv}\geq 0$. Meanwhile, for our charged system, a direct calculation gives
\begin{equation}\label{WCSC1}
T_{uu} = |D_u\phi|^2\geq 0, \quad T_{vv} = |\dv \phi|^2\geq 0, \quad T_{uv} = \frac{Q^2}{16\pi r^2\Omega^2}\geq 0,
\end{equation}
which verifies condition $\mathbf{B}'$. Therefore, for the generic scenarios (when Theorem \ref{FITTheorem} or Theorem \ref{SITTheorem} is activated), we conclude the completeness of future null infinity. Hence the weak cosmic censorship for the spherically symmetric Einstein-Maxwell-charged scalar field system is proved.

\end{proof}

\appendix

\section{Alternative Proof of the Almost Scale-Critical Trapped-Surface-Formation Criteria for the Einstein-(Real) Scalar Field System}\label{Appendix A}

For the spherically symmetric Einstein-(real) scalar field system, by setting $Q \equiv 0$ in the system \eqref{Setup12} to \eqref{Setup20}, we get
\begin{equation}\label{Scalar1}
r \dv \du r + \dv r \du r = -\frac{\Omega^2}{4},
\end{equation}
\begin{equation}\label{Scalar3}
\du(\Omega^{-2}\du r) = - 4\pi r\Omega^{-2}|\du\phi|^2,
\end{equation}
\begin{equation}\label{Scalar4}
\dv(\Omega^{-2}\dv r) = - 4\pi r\Omega^{-2}|\dv \phi|^2,
\end{equation}
\begin{equation}\label{Scalar5}
r\du \dv \phi + \du r \dv \phi + \dv r \du \phi = 0.
\end{equation}
Define
$$m(u,v):=\frac{r}{2}\cdot(1+4\Omega^{-2}\partial_u r \partial_v r)(u,v) \quad \mbox{and} \quad \mu(u,v):=\frac{2m(u,v)}{r(u,v)}.$$
Analogously to \eqref{Setup20}, \eqref{Setup21}, \eqref{Setup25}, we deduce
\begin{equation}\label{Scalar6}
\dv \du r = \frac{1}{r}\ml \frac{\mu}{1-\mu}\mr \du r \dv r,
\end{equation}
\begin{equation}\label{Scalar7}
\du \ml \frac{1-\mu}{\dv r}\mr = \frac{4 \pi r (\du \phi)^2(1-\mu)}{(-\du r)(\dv r)},
\end{equation}
\begin{equation}\label{Scalar9}
\dv m = \frac{2\pi r^2(1-\mu)(\dv \phi)^2}{\dv r}.
\end{equation}
Furthermore, for any coordinate patch $[u_0,v_0) \times [v_1,v_2]$ with $u_0 < v_0$ and $v_1 < v_2$, we can define 
$$\delta(u) := \frac{r_2(u) - r_1(u)}{r_1(u)} \quad \mbox{and} \quad \eta(u) := \frac{2(m_2(u) - m_1(u))}{r_2(u)}$$
and let $\delta_0 := \delta(u_0)$ and $\eta_0 := \eta(u_0).$
Adopting the proof strategy in Theorem \ref{Trapped}, we give a shorter new proof of the below theorem, which was first achieved by Christodoulou in \cite{christ1}. 

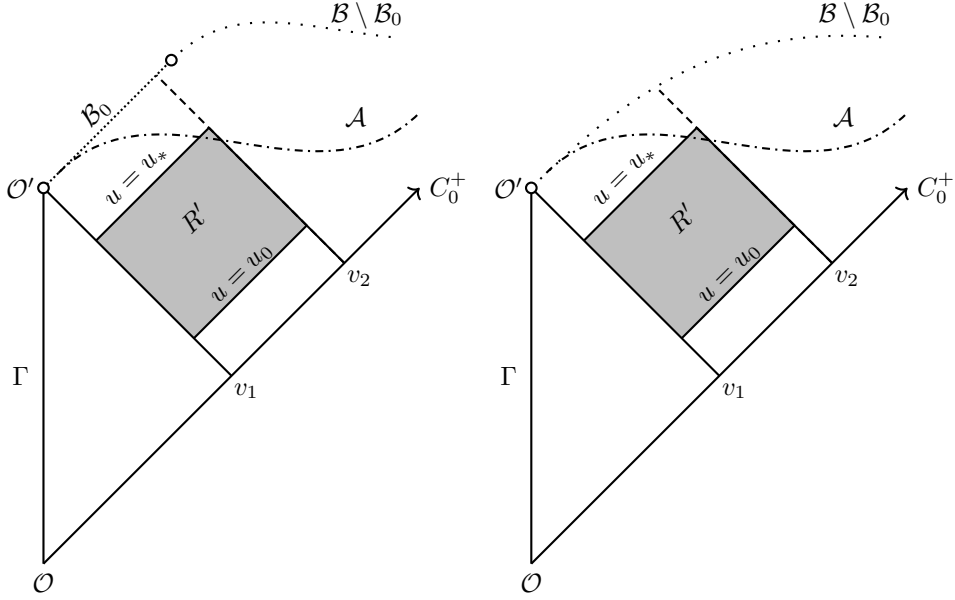
\begin{figure}[htbp]
\begin{tikzpicture}[
    dot/.style = {draw, fill = white, circle, inner sep = 0pt, minimum size = 4pt}]
\begin{scope}[thick]
\fill[gray!50] (0.7,4.3) to (2.2,5.8) to (3.5,4.5) to (2,3) to (0.7,4.3);
\draw[thick] (0,0) node[anchor=north]{$\mathcal{O}$} --  (0,5);
\node[] at (-0.3,2.5) {$\Gamma$};
\node[] at (-0.3,5.0) {$\mathcal{O}'$};
\draw[->] (0,0) -- (5,5) node[anchor = west]{$C_0^+$};
\draw[thick, densely dotted] (0,5) to (1.7,6.7);
\draw[thick, loosely dotted] (1.7,6.7) to [out=45,in=175] (4.7,7.0);
\node[rotate=45] at (0.7,6) {$\mathcal{B}_0$};
\node[rotate=0] at (4.3,7.3) {$\mathcal{B}\setminus\mathcal{B}_0$};
\draw[fill=white] (1.7,6.7) circle (2pt);
\draw[thick] (0,5) --  (2.5,2.5);
\node[] at (2.7,2.3){$v_1$};
\draw[dashdotted] (0.2,5.2) to[out = 45, in = -135] (5,6);
\node[] at (4.15,5.95){$\mathcal{A}$};
\node[rotate=45] at (2,4.6){$R'$};
\node[] at (4.2,3.8){$v_2$};
\draw[thick] (4,4) to (2.2,5.8);
\draw[thick] (2,3) to (3.5,4.5);
\node[rotate = 45] at (2.65,3.85) {$u = u_0$};
\draw[thick] (0.7,4.3) to (2.2,5.8);
\node[rotate=45] at (1.25,5.15) {$u = u_*$};
\draw[densely dashed] (4,4) to (1.5,6.5);
\draw[fill=white] (0,5) circle (2pt);
\end{scope}
\end{tikzpicture}
\begin{tikzpicture}[
    dot/.style = {draw, fill = white, circle, inner sep = 0pt, minimum size = 4pt}]
\begin{scope}[thick]
\fill[gray!50] (0.7,4.3) to (2.2,5.8) to (3.5,4.5) to (2,3) to (0.7,4.3);
\draw[thick] (0,0) node[anchor=north]{$\mathcal{O}$} --  (0,5);
\node[] at (-0.3,2.5) {$\Gamma$};
\node[] at (-0.3,5.0) {$\mathcal{O}'$};
\draw[->] (0,0) -- (5,5) node[anchor = west]{$C_0^+$};
\draw[thick, loosely dotted] (0,5) to [out=45,in=175] (4.7,7.0);
\node[rotate=0] at (4.3,7.3) {$\mathcal{B}\setminus\mathcal{B}_0$};
\draw[thick] (0,5) --  (2.5,2.5);
\node[] at (2.7,2.3){$v_1$};
\draw[dashdotted] (0.2,5.2) to[out = 45, in = -135] (5,6);
\node[] at (4.15,5.95){$\mathcal{A}$};
\node[rotate=45] at (2,4.6){$R'$};
\node[] at (4.2,3.8){$v_2$};
\draw[thick] (4,4) to (2.2,5.8);
\draw[thick] (2,3) to (3.5,4.5);
\node[rotate = 45] at (2.65,3.85) {$u = u_0$};
\draw[thick] (0.7,4.3) to (2.2,5.8);
\node[rotate=45] at (1.25,5.15) {$u = u_*$};
\draw[densely dashed] (4,4) to (1.7,6.3);
\draw[fill=white] (0,5) circle (2pt);
\end{scope}
\end{tikzpicture}
\caption{Diagram for the proof of Theorem \ref{ScalarTrapped}. The rectangle region $R'$ used in the proof below is indicated as above. The apparent horizon $\mathcal{A}$ is represented with a dotted curve. The scenario on the left corresponds to $|\mathcal{B}_0| \geq v_2 - v_1$, while the scenario on the right corresponds to $|\mathcal{B}_0| < v_2 - v_1$, and $\mathcal{B}_0$ is allowed to be an empty set.}
\label{TrappedFig2}
\end{figure}
{\color{black}
\begin{theorem}\label{ScalarTrapped}
Let $\mathcal{O}' = (v_1,v_1)$ be the future ending point of $\Gamma$ as portrayed in Figure \ref{TrappedFig2}. Let $v_2 > v_1$ and consider the future-directed incoming null curve along $v = v_2$. With $\delta_0 \in \ml 0, 1\mr$, along $u=u_0$ if we have
\begin{equation}\label{Scalar10}
\eta_0  > \frac{2\delta_0}{1+\delta_0} + \frac{\delta_0}{1+\delta_0}\log \ml \frac{1}{\delta_0}\mr,
\end{equation} then there exists $u_* \in (u_0,v_0)$ along $v = v_2$ in the future of $C_0^+$, such that a MOTS or a trapped surface is guaranteed to form along $[u_0,u_*] \times \{v_2\}$.
\end{theorem}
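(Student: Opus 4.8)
The plan is to mirror the new argument given for the charged case (Theorem \ref{Trapped}), but with all the $Q$-dependent terms set to zero, which simplifies the argument considerably. First I would introduce the scale-free length $x(u) = r_2(u)/r_2(u_0)$, and compute the evolution of $\eta(x)$ along incoming null curves inside a rectangle $R' := [u_0,u_*] \times [v_1,v_2]$ contained in the regular region $\mathcal{R}$ (assuming no MOTS or trapped surface occurs in $R'$, else we are done). Exactly as in \eqref{eta intro3}, the chain rule gives
\begin{equation*}
\frac{\D \eta}{\D x} = \frac{\frac{\D \eta}{\D u}}{\frac{\D x}{\D u}} = -\frac{\eta}{x} - \frac{2}{x(\du r_2)}\int^{v_2}_{v_1} -\du \dv m \; \D v'.
\end{equation*}
Using \eqref{Scalar9} for $\dv m$, the integrand $-\du\dv m = -\du\ml 2\pi r^2(1-\mu)(\dv \phi)^2/\dv r\mr$ expands, via \eqref{Scalar7}, into a term proportional to $8\pi^2 r^3(1-\mu)(\du r)(\dv r)$ times $\ml -\ml \frac{\du\phi \dv\phi}{(\du r)(\dv r)}\mr^2 + \frac{1}{2\pi r^2}\frac{\dv\phi\du\phi}{(\dv r)(\du r)}\mr$.

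The key step — and the one I expect to be the simplest part here, since there is no $Q$ to control — is the elementary inequality $-y^2 + ay \leq a^2/4$ applied pointwise to the integrand: this bounds the integral, independent of $\phi$, by $\frac{(-\du r)_2}{2}\int_{v_1}^{v_2}\frac{\dv r}{r}\D v' = \frac{(-\du r)_2}{2}\log\ml 1 + \frac{r_2(u)-r_1(u)}{r_1(u)}\mr \leq \frac{(-\du r)_2}{2}\delta(u)$, using $\log(1+s) \leq s$. Since $(-\du r)(u,v') \leq (-\du r)_2(u)$ for $v' \in [v_1,v_2]$ by $\du\dv r \leq 0$ in $R'$ (which holds because $\mu \geq 0$ there), this yields the ODE inequality
\begin{equation*}
\frac{\D \eta}{\D x} \leq -\frac{\eta}{x} + \frac{\delta(u)}{x}.
\end{equation*}
Then I would bound $\delta(u)$ in terms of $\delta_0$ and $x$: using $\du\dv r \leq 0$, one gets $r_2(u) - r_1(u) \leq r_2(u_0) - r_1(u_0)$ for $u \geq u_0$, and hence $\delta(u) \leq \frac{\delta_0}{x(1+\delta_0) - \delta_0}$, exactly as in \eqref{TSF42}. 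This gives $\frac{\D\eta}{\D x} + \frac{\eta}{x} \leq \frac{\delta_0}{x(x(1+\delta_0)-\delta_0)}$ for $x > \frac{\delta_0}{1+\delta_0}$.

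Finally I would solve this linear ODE inequality by the integrating factor $x$: $\frac{\D}{\D x}(x\eta) \leq \frac{\delta_0}{x(1+\delta_0)-\delta_0}$, integrating from $x$ to $1$ to obtain $\eta_0 - x\,\eta(x) \leq \frac{\delta_0}{1+\delta_0}\log\ml\frac{1}{x(1+\delta_0)-\delta_0}\mr$, i.e.
\begin{equation*}
\mu_2(x) \geq \eta(x) \geq \frac{1}{x}\ml \eta_0 - \frac{\delta_0}{1+\delta_0}\log\ml\frac{1}{x(1+\delta_0)-\delta_0}\mr\mr,
\end{equation*}
using $\eta(x) \leq \mu_2(x)$ in $\mathcal{R}$. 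Then, exactly as in the proof of Theorem \ref{Trapped}, I would use the two cases $|\mathcal{B}_0| \geq v_2 - v_1$ and $|\mathcal{B}_0| < v_2 - v_1$ (Figure \ref{TrappedFig2}) to argue that $r_2(u_*) \to 0$ forces the existence of $\tilde u(k)$ with $x(\tilde u(k)) = \frac{k\delta_0}{1+\delta_0}$ for any $k > 1$; evaluating at such a point gives $\mu_2 \geq \frac{1+\delta_0}{k\delta_0}\ml \eta_0 - \frac{\delta_0}{1+\delta_0}\log\ml\frac{1}{(k-1)\delta_0}\mr\mr$. Picking $k = 2$ and using the hypothesis \eqref{Scalar10}, $\eta_0 > \frac{2\delta_0}{1+\delta_0} + \frac{\delta_0}{1+\delta_0}\log\ml\frac{1}{\delta_0}\mr$, one checks $\mu_2(\tilde u(2)) \geq 1$, producing a MOTS or trapped surface along $[u_0,u_*]\times\{v_2\}$ and yielding the contradiction with the assumed absence of trapped surfaces in $R'$. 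The main (minor) obstacle is just bookkeeping: making sure the constant $\frac{2\delta_0}{1+\delta_0}$ in \eqref{Scalar10} is exactly what is needed to kill the $\frac{2}{k}$-type term at $k=2$, and verifying $\log\ml\frac{1}{\delta_0}\mr \geq 1$ for $\delta_0 \in (0,1)$ so the inequality closes cleanly.
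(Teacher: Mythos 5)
Your proposal is correct and follows essentially the same approach as the paper's Appendix A proof: introduce $x(u)=r_2(u)/r_2(u_0)$, compute $\frac{\D\eta}{\D x}$ along incoming null curves, expand $-\du\dv m$ via \eqref{Scalar7} and $\du(r\dv\phi)=-\dv r\,\du\phi$, bound the result by $\frac{(-\du r)_2}{2}\delta(u)$ using $-y^2+ay\leq a^2/4$, bound $\delta(u)$ by $\frac{\delta_0}{x(1+\delta_0)-\delta_0}$, integrate the resulting ODE inequality, and evaluate at $x(\tilde u(2))=\frac{2\delta_0}{1+\delta_0}$ (reachable since $r_2(u_*)\to 0$ by the two boundary scenarios) to force $\mu_2\geq 1$. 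One minor correction: you do not need $\log(1/\delta_0)\geq 1$ anywhere here (that inequality actually fails for $\delta_0\in(1/e,1)$); the threshold $k-\log(k-1)$ is minimized at $k=2$ where it equals $2$, and the hypothesis \eqref{Scalar10} is then exactly $\eta_0\cdot\frac{1+\delta_0}{\delta_0}-\log(1/\delta_0)>2$, which closes the argument directly without any further condition on $\delta_0$.
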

\begin{proof}
Let $u_* \in [u_0,v_1)$ be an arbitrary constant such that $(u_*,v_2) \in \mathcal{Q}$. Denote $R' := [u_0,u_*] \times [v_1,v_2]$ and define $x(u) := \frac{r_2(u)}{r_2(u_0)}$. We will show that condition \eqref{Scalar10} along $u=u_0$ guarantees the existence of a MOTS or a trapped surface in $R'$. Below, we present a mechanism in the regular (untrapped) region, that would lead to a MOTS or a trapped surface formation in $R'$. 

If there is already a MOTS or a trapped surface in $R'$, then we are done. Otherwise, assuming the absence of a MOTS or a trapped surface in $R'$, we proceed to compute
\begin{equation}\label{Scalar11}
\frac{\D \eta}{\D x} = \frac{\frac{\mathrm{d}\eta}{\mathrm{d}u}}{\frac{\mathrm{d}x}{\mathrm{d}u}} = \frac{\du \ml \frac{2(m_2 - m_1)}{r_2} \mr}{\frac{\du r_2}{r_2(u_0)}} = - \frac{\eta}{x} - \frac{2}{x(\du r_2)}\int^{v_2}_{v_1} -\du \dv m \; \D v'.
\end{equation}
By \eqref{Scalar9}, we have
\begin{equation}\label{Scalar12}
\begin{aligned}
- \du\dv m  =& -\du \ml \frac{2\pi r^2(1-\mu)(\dv \phi)^2}{\dv r} \mr = -2 \pi \ml \du\ml \frac{(1-\mu)}{\dv r}\mr r^2(\dv \phi)^2 + \ml \frac{2(1-\mu)}{\dv r}\mr r (\dv \phi) \du (r\dv \phi) \mr \\
&\quad\quad= 8\pi^2 r^3 (1-\mu) \ml  (\du r)(\dv r) \frac{(\du \phi)^2 (\dv \phi)^2}{(\du r)^2(\dv r)^2} - \frac{1}{2\pi r^2} (\du r)(\dv r) \frac{(\dv \phi)(\du \phi)}{(\dv r)(\du r)} \mr.
\end{aligned}
\end{equation}
With an elementary inequality $-x^2 + ax \leq \frac{a^2}{4}$, we then derive
\begin{equation}\label{Scalar13}
\begin{aligned}
\int^{v_2}_{v_1} -\du \dv m \; \D v' &= \int^{v_2}_{v_1} 8\pi^2 r^3 (1-\mu)(-\du r)(\dv r) \ml  -\frac{(\du \phi)^2 (\dv \phi)^2}{(\du r)^2(\dv r)^2} + \frac{1}{2\pi r^2} \frac{(\dv \phi)(\du \phi)}{(\dv r)(\du r)} \mr \; \D v' \\
&\leq \int^{v_2}_{v_1} 8\pi^2 r^3 (1-\mu)(-\du r)(\dv r) \ml \frac{1}{16\pi^2 r^4}\mr \; \D v' \\
&\leq \frac{(-\du r)_2}{2}\int^{v_2}_{v_1} \frac{\dv r}{r} \D v' = \frac{(-\du r)_2}{2}\log \ml 1 + \frac{r_2(u) - r_1(u)}{r_1(u)}\mr\leq \frac{(-\du r)_2}{2}\delta(u).
\end{aligned}
\end{equation}
In \eqref{Scalar13}, we use the fact that $\dv(-\du r) \geq 0$ in $R'$. Furthermore, for all $(u,v) \in R'$, we also have
\begin{equation}\label{Scalar14}
\dv r (u,v) \leq \dv r(u_0,v).
\end{equation}
This further implies that, for all $u \in [u_0,u_*]$, we get
\begin{equation}\label{Scalar15}
r_2(u) - r_1(u) \leq r_2(u_0) - r_1(u_0).
\end{equation}
For $\delta(u)$, we hence derive
\begin{equation}\label{Scalar16}
\begin{aligned}
\delta(u) = \frac{r_2}{r_1} - 1 &= \frac{r_2 - r_1}{r_2 - (r_2 - r_1)} \leq \frac{r_2(u_0)-r_1(u_0)}{r_2(u) - (r_2(u_0) - r_1(u_0))} \leq \frac{\delta(u_0)}{\frac{r_2(u)}{r_1(u_0)} - \delta(u_0)} = \frac{\delta_0}{x(u)(1+\delta_0) - \delta_0}.
\end{aligned}
\end{equation}
Returning to \eqref{Scalar11}, we obtain
\begin{equation}\label{Scalar17}
\begin{aligned}
\frac{\D \eta}{\D x} &= - \frac{\eta}{x} + \frac{\delta(u)}{x} \leq - \frac{\eta}{x} + \frac{\delta_0}{x(x(1+\delta_0)-\delta_0)},
\end{aligned}
\end{equation}
which simplifies to
\begin{equation}\label{Scalar18}
\begin{aligned}
\frac{\D}{\D x}(x \eta) &\leq  \frac{\delta_0}{(x(1+\delta_0)-\delta_0)}.
\end{aligned}
\end{equation}
Integrating from any given $x \in \ml x(u_*), 1 \mr$ to $1$ and using the fact that $\eta(x) \leq \mu_2(x)$, we arrive at
$$\eta_0 - x\eta(x) \leq \int^1_x \frac{\delta_0}{(s(1+\delta_0)-\delta_0)} \D s$$
and
\begin{equation}\label{Scalar19}
\begin{aligned}
\mu_2(x)  \geq \eta(x)  \geq \frac{1}{x}\ml \eta_0  - \frac{\delta_0}{1+\delta_0}\log\ml \frac{1}{x(1+\delta_0) - \delta_0}\mr \mr.
\end{aligned}
\end{equation}
Next, we consider the following two scenarios:
\begin{itemize}[leftmargin=*]
\item Scenario 1: $|\mathcal{B}_0| \geq v_2 - v_1$. This corresponds to the illustration on the left of Figure \ref{ScalarTrapped}. For this case, we are allowed to set $u_*$ to be arbitrarily close to $v_1$. Note that at $(u_*, v_2)$ we have
\begin{equation}\label{Scalar21a}
x(u_*) \leq \frac{r_1(u_*)}{r_2(u_0)} + \frac{\delta_0}{1 + \frac{r_2(u_0) - r_1(u_0)}{r_1(u_0)}} =  \frac{r_1(u_*)}{r_2(u_0)} + \frac{\delta_0}{1+\delta_0}.
\end{equation}
Employing \eqref{Scalar21a} and the fact that $r_1(u_*) \rightarrow 0$ as $u_* \rightarrow v_1^-$, we obtain $x(u_*) \leq \frac{p\delta_0}{1+\delta_0}$ for any $p > 1$.
\item Scenario 2: $|\mathcal{B}_0| < v_2 - v_1$. This case is illustrated in the right picture of Figure \ref{ScalarTrapped}. As seen from the picture, when discussing the point $(u_*, v_2)$, we may not be able to pick $u_*$ to be arbitrarily close to $v_1$ due to the spacelike nature of $\mathcal{B}\setminus \mathcal{B}_0$. Nonetheless, $r_2(u)\rightarrow 0$ as $(u,v_2)$ approaches any point on $\mathcal{B} \setminus \mathcal{B}_0$. Thus, if we choose $u_*$ along $v = v_2$ to be sufficiently close to $\mathcal{B} \setminus \mathcal{B}_0$, we then get $x(u_*) \leq \frac{p\delta_0}{1+\delta_0}$ for any $p > 1$.
\end{itemize}

\noindent In both cases, since $x(u)$ is a decreasing continuous function on $[u_0,u_*]$, there exists $\tilde{u}(k)$ such that $x(\tilde{u}(k)) = \frac{k\delta_0}{1+\delta_0}$ for any $k > 1$. We then evaluate \eqref{Scalar19} at this point $(\tilde{u}(k), v_2)$ and get
\begin{equation}\label{Scalar22}
\mu(\tilde{u}(k),v_2) \geq \frac{1}{k}\ml \eta_0 \ml \frac{1+\delta_0}{\delta_0}\mr - \log \ml \frac{1}{\delta_0}\mr + \log(k-1)\mr.
\end{equation}
Hence, we would have $\mu(\tilde{u}(k),v_2) \geq 1$ if 
\begin{equation}\label{Scalar23}
\eta_0\ml \frac{1+\delta_0}{\delta_0}\mr - \log\ml \frac{1}{\delta_0}\mr \geq k - \log\ml k - 1 \mr.
\end{equation}
Observe that, with $k>1$, the lower bound in \eqref{Scalar23} is attained at $k = 2$. Hence, we have $\mu(\tilde{u}(2),v_2) \geq 1$ if we request
\begin{equation}\label{Scalar24}
\eta_0 \geq \frac{2\delta_0}{1+\delta_0} + \frac{\delta_0}{1+\delta_0}\log \ml \frac{1}{\delta_0}\mr.
\end{equation}
It is exactly the requirement in \eqref{Scalar10} and this concludes the proof of Theorem \ref{ScalarTrapped}. 
\end{proof}

\section{\texorpdfstring{$C^1$ Extension Theorem with Doppler Exponent $\gamma$ for the Einstein-(Real) Scalar Field System}{C1 Extension Theorem with Doppler Exponent gamma for the Einstein-(Real) Scalar Field System.}}\label{Appendix B}

For the spherically symmetric Einstein-(real) scalar field system, we have the same set of equations from \eqref{Scalar1} to \eqref{Scalar9}. In addition, we have
\begin{equation}\label{Scalar25}
\du \mu = -\frac{4 \pi r(1-\mu)|\du \phi|^2}{-\du r} + \frac{-\du r}{r}\mu
\end{equation}
and denote
\begin{equation}\label{Scalar26}
\alpha := \frac{\dv (r\phi)}{\dv r}, \quad \quad \alpha' := \frac{\dv \alpha}{\dv r}.
\end{equation}
We define the \textit{Doppler exponent} in the (real) scalar field system to be
\begin{equation}\label{gammas}
\gamma(u,v) := \int^u_0 \frac{-\du r}{r}\frac{\mu}{1-\mu}(u',v) \D u'.
\end{equation}
In particular, its partial derivative with respect to $v$ is given by
\begin{equation}\label{B1}
\dv \gamma(u,v) = \int_{0}^{u} \frac{\dv r \du r}{r^2}\frac{1}{1-\mu}\ml 2\mu - 4 \pi \frac{r^2|\dv \phi|^2}{(\dv r)^2}\mr(u',v) \D u'
\end{equation}
if interchanging between the derivative and integral is allowed. Furthermore, we have the following \textit{Doppler identity}
\begin{equation}\label{B51}
(\dv r)(0,v) = (\dv r)(u,v)e^{\gamma(u,v)}.
\end{equation}
For the spherically symmetric Einstein-(real) scalar field system below, we proceed to prove
\begin{theorem}\label{DopplerScalar} ($C^1$ Extension Theorem with Doppler Exponent $\gamma$.) We prescribe $C^1$ initial data $\alpha$ along $C_0^+$. Let $v_* > 0$ and assume that for every $\hat{v} \in (0,v_*)$, a $C^1$ solution exists on $\mathcal{D}(0,\hat{v})$. If it obeys
\begin{equation}\label{B2}
\sup_{\mathcal{D}(0,v_*)} \gamma < + \infty,
\end{equation}
then a $C^1$ solution can be extended to $\mathcal{D}(0,\tilde{v})$ with $\tilde{v} > v_*$.
\end{theorem}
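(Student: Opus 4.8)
\textbf{Proof proposal for Theorem \ref{DopplerScalar}.}

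The plan is to mirror the proof of Theorem \ref{SET} (the charged $C^1$ extension theorem with Doppler exponent $\gamma$), specialized to the uncharged system by setting $Q \equiv 0$. This will be considerably shorter because many of the troublesome charged terms in Lemma \ref{C1complemma} — namely $|Q|$, $|Q\phi|$, and the new forcing terms $I_8$ through $I_{11}$ in \eqref{SET8} — simply vanish. The key quantity to control is $B'(u) := \sup_{v \in [u,v_*]} |\alpha'|(u,v)$, and the goal is to show it remains uniformly bounded in $u$, so that $B' := \sup_{\mathcal{D}(0,v_*)}|\alpha'| < +\infty$, whence the solution extends as a $C^1$ solution up to $\mathcal{D}(0,v_*)$; then a second step extends beyond the boundary to some $\tilde v > v_*$.

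First I would derive the evolution equation for $\alpha'$ along the $u$-direction. Setting $Q = 0$ and $\phi$ real in the computation \eqref{FET2}–\eqref{FET13}, the equation collapses to
\begin{equation*}
\du \alpha' = -\frac{\du r}{r}\frac{\mu}{1-\mu}\alpha' - \frac{4\pi r |\dv \phi|^2 (\dv \phi)(\du r)}{(\dv r)^3(1-\mu)} - \frac{3\du r \dv \phi \, \mu}{r \dv r (1-\mu)},
\end{equation*}
which is precisely the uncharged analogue of \eqref{a' intro1} with only three surviving terms (no integrating factor $e^{\int \ii \e A_u}$ is needed since $A_u$ drops out). Integrating from $u = 0$ yields $\alpha'(u,v) = \alpha'(0,v) + I_3(u,v) + I_4(u,v) + I_7(u,v)$ in the notation of \eqref{K}, where the surviving integrands are $K_3, K_4, K_7$ with $Q = 0$. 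Then I would estimate the auxiliary quantities: the bound $\frac{1}{1-\mu}(u,v) \leq \frac{1}{1-\mu_*}e^G$ via \eqref{SET11}–\eqref{SET13} with $G := \sup_{\mathcal{D}(0,v_*)}\gamma$ (now using \eqref{gammas}); the bound $\frac{r|\dv\phi|}{\dv r}(u,v) \leq \frac{B'(u)r(u,v)}{2}$ as in \eqref{SET15}; and the key cubic estimate $\big(\frac{r|\dv\phi|}{\dv r}\big)^3(u,v) \leq \frac{3e^G}{\pi(1-\mu_*)}B'(u) r(u,v)\mu(u,v)$, obtained exactly as in \eqref{SET16}–\eqref{SET19} using $\dv\big(r^3(\frac{r\dv\phi}{\dv r})^3\big) = 3\alpha' r^3 \frac{r^2(\dv\phi)^2}{\dv r}$, the identity $\dv m = 2\pi r^2(1-\mu)(\dv\phi)^2/\dv r$ from \eqref{Scalar9}, and the boundary condition $r^3(\frac{r\dv\phi}{\dv r})^3|_\Gamma = 0$. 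Plugging these into $|I_3|, |I_4|, |I_7|$ gives $|\alpha'|(u,v) \leq \sup_{\{0\}\times[0,v_*]}|\alpha'| + \int_0^u C_9 \frac{-\du r}{r}\frac{\mu}{1-\mu}(u',v) B'(u') \, \mathrm{d}u'$ for an explicit constant $C_9$ depending on $G$ and $\mu_*$; taking the supremum over $v$ and applying Grönwall's inequality together with the identity $\int_0^u \frac{-\du r}{r}\frac{\mu}{1-\mu}(u',\tilde v)\,\mathrm{d}u' = \gamma(u,\tilde v) \leq G$ yields the uniform bound $B'(u) \leq (\sup_{\{0\}\times[0,v_*]}|\alpha'|)e^{C_9 G}$. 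Note there is no analogue of the $I_9, I_{10}, I_{11}$ terms that forced the extra $r(0,v_*)^{1/2}$ factors in \eqref{SET38}–\eqref{SET44}, so the uncharged Grönwall argument is cleaner.

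For the second step — extending past $v = v_*$ — I would repeat the argument of Lemma \ref{SETLemma} and the subsequent contradiction argument \eqref{SET45}–\eqref{SET71} verbatim with $Q = 0$: define $u_2 = \sup\{u_1 \in [0,v_*): \sup_{\mathcal{D}(0,u_1,v_*+\xi)}\gamma \leq 3G\}$, show that if $u_2 < v_*$ then on the strip $U_\xi(u_2)$ the quantity $\frac{|\dv\phi|}{\dv r}$ is bounded by $B_2/2$, deduce $\mu(u,v) \lesssim r^2(u,v)$ via \eqref{SET57}–\eqref{SET60} (now with the $Q^2\dv r/2r^2$ term absent), and then bound $|\dv\gamma|$ using \eqref{B1}: the only surviving contributions are the $T_2$-type term $\frac{-\du r}{r}\frac{4\pi r|\dv\phi|^2}{\dv r(1-\mu)}$ and the $T_3$-type term $\frac{2\du r \dv r}{r^2}\frac{\mu}{1-\mu}$, both controlled by constants times $(v_*+\hat\xi)$. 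This gives $|\dv\gamma|(u,v) \leq C_{14}$ on $U_\xi(u_2)$, hence $\gamma(u,v) \leq G + C_{14}\xi \leq 2G$ for $\xi \leq G/C_{14}$, contradicting $\sup_{\mathcal{D}(0,u_2,v_*+\xi)}\gamma = 3G$. Therefore $u_2 = v_*$, $\sup_{\mathcal{D}(0,v_*,v_*+\xi)}|\alpha'|$ stays bounded, and with $C^1$ data along $\{v_*\}\times[v_*,v_*+\xi]$ the solution extends to $\mathcal{D}(0,\tilde v)$ for some $\tilde v > v_*$.

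The main obstacle — and the reason this is not entirely trivial — is the same subtlety Christodoulou faced: establishing the cubic estimate on $\frac{r|\dv\phi|}{\dv r}$ with a bound \emph{linear} in $B'(u)$ (rather than cubic), since a naive pointwise bound would only give $\big(\frac{r|\dv\phi|}{\dv r}\big)^3 \lesssim B'(u)^3$, which is useless for a Grönwall argument. The trick, as recorded in \eqref{SET16}–\eqref{SET19}, is to integrate the \emph{derivative} identity $\dv\big(r^3(\frac{r\dv\phi}{\dv r})^3\big) = 3\alpha' r^3 \frac{r^2(\dv\phi)^2}{\dv r}$ and then absorb $\int_u^v \frac{r^2(\dv\phi)^2}{\dv r}\,\mathrm{d}v' \leq \frac{1}{2\pi(1-\mu_*)}\int_u^v \dv m \,\mathrm{d}v' \leq \frac{1}{\pi(1-\mu_*)}r\mu$ using the Hawking mass monotonicity; this converts one factor of the a priori bound into an integral producing a factor of $\mu$, which combines with $\frac{-\du r}{r}\frac{1}{1-\mu}$ to reconstruct the integrand of $\gamma$. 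Once this is in hand, everything else is a routine specialization of Section \ref{C1 Extension}. I expect the entire proof to fit in roughly a page and a half, as the excerpt itself suggests.
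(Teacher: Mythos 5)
Your proposal is correct and follows essentially the same route as the paper's own proof in Appendix \ref{Appendix B}: specialize the charged argument of Theorem \ref{SET} to $Q \equiv 0$, keeping only $K_3, K_4, K_7$, derive the cubic estimate on $\frac{r|\dv\phi|}{\dv r}$ that is linear in $B'(u)$ via the identity $\dv\bigl(r^3(\frac{r\dv\phi}{\dv r})^3\bigr)=3\alpha' r^3\frac{r^2(\dv\phi)^2}{\dv r}$ and the mass equation $\dv m = 2\pi r^2(1-\mu)(\dv\phi)^2/\dv r$, close with Gr\"onwall and $\gamma \leq G$, and then extend past $v=v_*$ by the bootstrap on $\gamma$ in a thin rectangular strip. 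The only slip is cosmetic: the coefficient on the $\alpha'$ term in your uncharged evolution equation should be $2$ (from $K_4 = -\frac{2\du r\,\mu}{r(1-\mu)}\alpha'$, as in the paper's \eqref{B6.1}), not $1$, and one intermediate line drops the $e^G$ factor from $\frac{1}{1-\mu}\leq \frac{e^G}{1-\mu_*}$ before reinstating it in the final constant $C_3'$; neither affects the argument.
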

\noindent \textit{Proof.}  By the hypothesis, there is a positive constant $G$ such that
\begin{equation}\label{B3}
\sup_{\mathcal{D}(0,v_*)}\gamma(u,v;0) < G.
\end{equation}
For each $u \in [0,v_*)$, we define the quantity
\begin{equation}\label{B4}
B'(u) :=  \sup_{v \in [u,v_*]} |\alpha'|(u,v).
\end{equation}
We will proceed to show that this quantity remains bounded uniformly in $u$, which in turn implies
\begin{equation}\label{B5}
B' := \sup_{u \in [0,v_*)} \sup_{v \in [u,v_*]} |\alpha'|(u,v) = \sup_{\mathcal{D}(0,v_*)}|\alpha'| < + \infty
\end{equation}
and the solution extends as a $C^1$ solution up to $\mathcal{D}(0,v_*)$. 

\ul{Estimates for $B'(u)$.} To estimate $B'(u)$, we adopt the following strategy. We first compute $\du \alpha$ and get
\begin{equation}\label{B6.1}
\begin{aligned}
\du \alpha' = 2\frac{(-\du r)}{r}\frac{\mu}{1-\mu} \alpha' 
+ 3\frac{(-\du r)}{r^2}\frac{\mu}{1-\mu}\ml \frac{r \dv \phi}{\dv r}\mr + 4\pi \frac{(-\du r)}{r^2}\frac{1}{1-\mu}\ml \frac{r \dv \phi}{\dv r}\mr^3.\\
\end{aligned}
\end{equation}
Henceforth, for any $(u,v) \in \mathcal{D}(0,v_*)$, we obtain 
\begin{equation}\label{B7}
|\alpha'|(u,v) \leq E'(v_*) + \sum_{i=1}^{3} |I_i|(u',v) \D u' \quad \text{ with } \quad  
E'(v_*) := \sup_{\{0\} \times [0,v_*]}|\alpha'|
\end{equation}
and $|I_i|$ satisfying the following estimates:
\begin{equation}\label{I_1''}
|I_1|(u,v) \leq  2 \int^{u}_{0} (-\du r)|\alpha'|  \ml \frac{1}{r}\mr \ml \frac{\mu}{1-\mu}\mr(u',v) \mathrm{d}u', 
\end{equation}
\begin{equation}\label{I_2''}
|I_2|(u,v) \leq 3 \int^{u}_{0} (-\du r)\ml \frac{1}{r^2}\mr\ml \frac{\mu}{1-\mu}\mr\ml \frac{r|\dv \phi|}{\dv r}\mr (u',v) \mathrm{d}u',
\end{equation}
\begin{equation}\label{I_3''}
|I_3|(u,v) \leq  4\pi \int^{u}_{0} (-\du r)\ml \frac{1}{r^2}\mr\ml \frac{1}{1-\mu}\mr\ml \frac{r|\dv \phi|}{\dv r}\mr^3 (u',v) \mathrm{d}u'.
\end{equation}
We continue by obtaining key estimates for the following terms:
$$\frac{1}{1-\mu},\, |\alpha'|,\, \frac{r|\dv \phi|}{\dv r}, \, \ml \frac{r|\dv \phi|}{\dv r}\mr^3 \mbox{ for all } (u,v) \in \mathcal{D}(0,v_*).$$ 
This is done by employing the following method for each of these terms:
\begin{itemize}
\item[$\frac{1}{1-\mu}$:] By utilizing \eqref{Scalar25}, we compute
$$\du\ml \frac{1}{1-\mu(u,v)}\mr = \frac{\du \mu}{(1 - \mu)^2} = \frac{1}{1-\mu(u,v)} \ml -\frac{4\pi r|\du \phi|^2}{(-\du r)} + \frac{(-\du r)}{r}\frac{\mu}{1-\mu}\mr.$$
Together with $\du r < 0$, we obtain
\begin{equation}\label{B9}
\begin{aligned}
\frac{1}{1-\mu(u,v)} &= \frac{1}{1-\mu(0,v)}\exp\ml {\int^u_{0} \ml -\frac{4\pi r|\du \phi|^2}{(-\du r)} + \frac{(-\du r)}{r}\frac{\mu}{1-\mu}\mr (u',v) \mathrm{d}u'} \mr \\
&\leq \frac{1}{1-\mu(0,v)}\exp\ml {\int^u_{0} \ml  \frac{(-\du r)}{r}\frac{\mu}{1-\mu}\mr (u',v) \mathrm{d}u'} \mr.
\end{aligned}
\end{equation}
Employing the definition of $\gamma$ in \eqref{gammas} and $\mu_*$ in \eqref{Setup42}, we hence get
\begin{equation}\label{B11}
\frac{1}{1-\mu}(u,v) \leq \frac{1}{1-\mu_*}e^G.
\end{equation}
\item[$|\alpha'|$:] By the definition of $B'(u)$ in \eqref{B4}, for each $(u,v) \in \mathcal{D}(0,v_*)$, we have
\begin{equation}\label{B26}
|\alpha'|(u,v) \leq B'(u).
\end{equation}
\item[$\frac{r|\dv \phi|}{\dv r}$:] Applying the definition of $\alpha$ in \eqref{Scalar26} and $r\phi|_{\Gamma} = 0$, we can rewrite $\phi$ as
\begin{equation}\label{B12}
\phi(u,v) =  \frac{1}{r}\int^v_u (\alpha \dv r)(u,v') \mathrm{d}v',
\end{equation}
which in turn implies that
\begin{equation}\label{B13}
\frac{r \dv \phi}{\dv r}(u,v) = \alpha(u,v) - \frac{1}{r(u,v)}\int^v_u (\alpha \dv r)(u,v') \mathrm{d}v'.
\end{equation}
Using the fact that $r(u,u) = r|_{\Gamma} = 0$ and \eqref{B4}, we obtain
\begin{equation}\label{B14}
\begin{aligned}
\mlm\frac{r \dv \phi}{\dv r}\mrm(u,v) &\leq 
\mlm \alpha(u,v) - \frac{1}{r(u,v)} \int^v_u (\alpha \dv r)(u,v') \mathrm{d}v'\mrm \\
&\leq \frac{1}{r(u,v)}\int^v_u \mlm \alpha(u,v) - \alpha(u,v') \mrm \dv r(u,v') \mathrm{d}v' \\
&\leq \frac{1}{r(u,v)}\int^v_u \ml \int^v_{v'} \mlm \dv \alpha (u,v'')\mrm \mathrm{d}v'' \mr \dv r (u,v') \mathrm{d}v' \\
&\leq \frac{1}{r(u,v)}\int^v_u \ml \int^v_{v'} B'(u) (\dv r)(u,v'') \D v'' \mr (\dv r)(u,v')\D v'= \frac{B'(u)r(u,v)}{2}.
\end{aligned}
\end{equation}
\item[$\ml \frac{r|\dv \phi|}{\dv r}\mr^3:$] For this term, we proceed with a different strategy as follows. Observe that 
\begin{equation}\label{B15}
\begin{aligned}
\dv & \ml \frac{r \dv \phi}{\dv r}\mr^3 = \dv \ml \frac{\dv (r\phi)}{\dv r} - \phi \mr^3 = 3\ml \frac{r \dv \phi}{\dv r}\mr^2 \dv \ml \frac{\dv (r\phi)}{\dv r} - \phi\mr \\
&= 3\ml \frac{r \dv \phi}{\dv r}\mr^2 \ml \alpha ' \dv r - \dv \phi\mr = 3 \alpha' \frac{r^2(\dv \phi)^2}{\dv r} - \frac{3 \dv r}{r}\ml \frac{r \dv \phi}{\dv r}\mr^3.
\end{aligned}
\end{equation}
This then implies
\begin{equation}\label{B16}
\begin{aligned}
\dv  \ml r^3\ml \frac{r \dv \phi}{\dv r}\mr^3 \mr = 3r^2 \dv r \ml \frac{r \dv \phi}{\dv r}\mr^3 + r^3\ml3 \alpha' \frac{r^2(\dv \phi)^2}{\dv r} - \frac{3 \dv r}{r}\ml \frac{r \dv \phi}{\dv r}\mr^3 \mr = 3 \alpha' r^3 \ml \frac{r^2(\dv \phi)^2}{\dv r}\mr.
\end{aligned}
\end{equation}
Since $r^3\ml \frac{r \dv \phi}{\dv r}\mr^3 |_{\Gamma} = 0$, together with \eqref{B14}, equation \eqref{Scalar9}, $\dv r \geq 0, \dv m \geq 0$, we thus obtain
\begin{equation}\label{B17}
\begin{aligned}
r^3\ml \frac{r|\dv \phi|}{\dv r} \mr^3& (u,v) \leq \int_u^v \mlm \dv \ml r^3 \ml \frac{r\dv \phi}{\dv r} \mr^3 \mr \mrm(u,v') \D v' \leq 3B'(u)r^3(u,v)\int_u^v \frac{r^2|\dv \phi|^2}{\dv r}(u,v') \D v'\\
&\leq \frac{3}{2\pi}B'(u)r^3(u,v) \int^v_u \frac{\dv m}{1-\mu}(u,v') \D v' \leq \frac{3e^G}{\pi(1-\mu_*)} B'(u)r^4(u,v) \mu(u,v), \\
\end{aligned}
\end{equation}
which in turn implies
\begin{equation}\label{B18}
\ml \frac{r|\dv \phi|}{\dv r} \mr^3 (u,v) \leq C_3' B'(u)r(u,v) \mu(u,v)
\end{equation}
with the constant $C_3'$ given by
\begin{equation}\label{B19}
C_3' := \frac{3e^G}{\pi(1-\mu_*)}.
\end{equation}
\end{itemize}
Employing these estimates, we can now obtain an estimate for each of the terms $|I_i|$ as follows. For $I_1$, we have
\begin{equation}\label{B20}
\begin{aligned}
|I_1|(u,v) &\leq  2\int_{0}^u \frac{-\du r}{r} \frac{\mu}{1-\mu} (u',v) B'(u') \D u'.
\end{aligned}
\end{equation}
For $I_2$, by \eqref{B26}, we deduce that
\begin{equation}\label{B21}
\begin{aligned}
|I_2|(u,v) &\leq  3\int_{0}^u \frac{-\du r}{r^2} \frac{\mu}{1-\mu}\mlm \frac{r\dv \phi}{\dv r}\mrm(u',v) \D u' \leq  \frac{3}{2}\int_{0}^u \frac{-\du r}{r} \frac{\mu}{1-\mu}(u',v) B'(u') \D u'. \\
\end{aligned}
\end{equation}
Last but not least, for $I_3$, we obtain
\begin{equation}\label{B22}
\begin{aligned}
|I_3|(u,v) &\leq  4\pi\int_{0}^u \frac{-\du r}{r^2} \frac{1}{1-\mu}\mlm \frac{r\dv \phi}{\dv r}\mrm^3(u',v) \D u' \leq  4\pi C_3'\int_{0}^u \frac{-\du r}{r} \frac{\mu}{1-\mu}(u',v) B'(u') \D u'.
\end{aligned}
\end{equation}
Consequently, by substituting these estimates back to \eqref{B7}, we derive
\begin{equation}\label{B23}
|\alpha'|(u,v)\leq E'(v_*) + \ml \frac{7}{2} + 4\pi C_3'\mr \int_0^u \ml \frac{- \du r}{r} \frac{\mu}{1-\mu}\mr(u',v) B'(u') \D u'.
\end{equation}
For each $u$, we pick $\tilde{v} \in [u,v_*]$ such that the supremum for \eqref{B4} is attained at $(u,\tilde{v})$. Since \eqref{B23} is true for all $v \in [u,v_*]$, we deduce that
\begin{equation}\label{B25}
B'(u)\leq E'(v_*) + \ml \frac{7}{2} + 4\pi C_3'\mr \int_0^u \ml \frac{- \du r}{r} \frac{\mu}{1-\mu}\mr(u',\tilde{v}) B'(u') \D u'.
\end{equation}
Applying Gr\"onwall's inequality, we hence prove
\begin{equation}\label{B50}
\begin{aligned}
B'(u) \leq E'(v_*)\exp\ml \ml \frac{7}{2} + 4\pi C_3'\mr\int^u_0 \frac{-\du r}{r}\frac{\mu}{1-\mu}(u',\tilde{v})\D u'\mr \leq  E'(v_*)\exp\ml \ml \frac{7}{2} + 4\pi C_3'\mr G\mr.
\end{aligned}
\end{equation}
Observe that the upper bound obtained in \eqref{B50} is uniform in $u$ and $\tilde{v}$. Therefore, from \eqref{B5}, we conclude
\begin{equation}\label{B27}
B' \leq  E'(v_*)\exp\ml \ml \frac{7}{2} + 4\pi C_3'\mr G\mr < + \infty,
\end{equation}
which in turn implies that the solution extends as a $C^1$ solution up to $\mathcal{D}(0,v_*)$.

\ul{Extension Beyond Boundary.}

\noindent Next, we demonstrate that there exists some $\xi' > 0$, such that the solution extends as a $C^1$ solution to $\mathcal{D}(0,v_* +  \xi')$. To do so, we estimate the change in $\gamma$ across a rectangular region $\{(u,v): u \in [0,v_*], v\in [v_*,v_*+\xi]\}$ for some $\xi > 0$. An illustration of the set-up is in Figure \ref{SETFig1}.

For any given $u_1 \in [0,v_*)$, we consider the  rectangular strip given by
\begin{equation}\label{B28}
U_{\xi}(u_1) = [0,u_1) \times [v_*,v_*+\xi].
\end{equation}
Similar to \eqref{B5}, we define
\begin{equation}\label{B29}
\overline{B'_{\xi}}(u_1) := \sup_{U_{\xi}(u_1)} (|\alpha'|).
\end{equation}
In addition, we denote
\begin{equation}\label{B30}
\mathcal{D}(0,u_1,v_* + \xi) = \mathcal{D}(0,u_1,v_*) \cup U_{\xi}(u_1).
\end{equation}
From the first part of the proof, we have that the quantity $B'$ in \eqref{B5} is bounded, independent of $u_1$, and obeys the following estimates from \eqref{B27}:
\begin{equation}\label{B31}
B'(u) \leq E'(v_*)\exp\ml \ml \frac{7}{2}+4\pi C_3'(G)\mr G\mr =: C_{12}(G,v_*).
\end{equation}
We remark that $C_3'$ is defined in \eqref{B31} and the newly defined $C_{12}$ depends only on $v_*$ and $G$. Fix some $\hat{\xi} > 0$. We then have the following lemma:
\begin{lemma}\label{SETLemmaScalar}
For any $u_1 \in (0,v_*)$ and $\xi \in (0,\hat{\xi}]$, if $\gamma$ satisfies
\begin{equation}\label{B32}
\sup_{\mathcal{D}(0,u_1,v_* + \xi)} \gamma \leq 3G,
\end{equation}
then we have
\begin{equation}\label{B33}
\overline{B'_{\xi}}(u_1) \leq \max\{  C_{12}(G,v_*), C_{12}(3G,v_*+\hat{\xi}) \} =: B_{2}.
\end{equation}
\end{lemma}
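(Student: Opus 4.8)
The plan is to mirror, in the uncharged setting, the proof of Lemma \ref{SETLemma} that was sketched for the charged case in Theorem \ref{SET}, but now using the much simpler structure of equations \eqref{Scalar1}--\eqref{Scalar9}. Concretely, the claim is that on the strip $U_\xi(u_1)$, once $\gamma$ is a priori controlled by $3G$ on the enlarged region $\mathcal{D}(0,u_1,v_*+\xi)$, the same Gr\"onwall argument that produced \eqref{B50} can be re-run with the initial data now prescribed along $\{0\}\times[v_*,v_*+\xi]$ rather than along $C_0^+$. Since $\xi\le\hat\xi$, all the geometric constants entering the argument (the supremum of $\gamma$, the lower bound $1/(1-\mu_*)$, the quantity $C_3'$) are monotone in the size of the region, so the final bound depends only on $(G,v_*)$ or on $(3G,v_*+\hat\xi)$, whichever is larger.

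The key steps, in order, are as follows. First I would introduce the analogue of $E'$, namely $\overline{E'_\xi}(u_1):=\sup_{\{0\}\times[v_*,v_*+\xi]}|\alpha'|$, which is finite because $\alpha$ is $C^1$ along $C_0^+$ and, by the first part of the proof, the solution is $C^1$ up to $\mathcal{D}(0,v_*)$; on the incoming segment $\{0\}\times[v_*,v_*+\xi]$ this quantity is just the given initial data restricted to that interval, hence uniformly bounded in $u_1$ and $\xi\in(0,\hat\xi]$. Second, I would repeat verbatim the pointwise estimates \eqref{B11}, \eqref{B26}, \eqref{B14}, \eqref{B18}: the bound $\frac{1}{1-\mu}\le\frac{1}{1-\mu_*}e^{3G}$ now follows from \eqref{B9} with $G$ replaced by $3G$ (using hypothesis \eqref{B32}); the bound $\bigl|\tfrac{r\dv\phi}{\dv r}\bigr|(u,v)\le\tfrac12 \overline{B'_\xi}(u_1)r(u,v)$ follows from \eqref{B13}--\eqref{B14} with the supremum taken over $U_\xi(u_1)$ instead of over the triangle; and the cube estimate \eqref{B15}--\eqref{B18} goes through with $C_3'$ replaced by $C_3'(3G):=\frac{3e^{3G}}{\pi(1-\mu_*)}$. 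Third, I would substitute these into the identity \eqref{B6.1} for $\du\alpha'$, integrate from $u=0$, and bound the three terms $I_1,I_2,I_3$ exactly as in \eqref{B20}--\eqref{B22}, obtaining
\begin{equation*}
|\alpha'|(u,v)\le \overline{E'_\xi}(u_1) + \ml\tfrac{7}{2}+4\pi C_3'(3G)\mr\int_0^u\ml\frac{-\du r}{r}\frac{\mu}{1-\mu}\mr(u',v)\,\D u'\cdot \overline{B'_\xi}(u_1)
\end{equation*}
for $(u,v)\in U_\xi(u_1)$. Fourth, I would fix $v$ to be the value in $[v_*,v_*+\xi]$ at which the supremum defining $\overline{B'_\xi}(u_1)$ is attained, apply Gr\"onwall's inequality, and use $\int_0^u\frac{-\du r}{r}\frac{\mu}{1-\mu}(u',v)\,\D u'=\gamma(u,v)\le 3G$ (from \eqref{B32}) to conclude
\begin{equation*}
\overline{B'_\xi}(u_1)\le \overline{E'_\xi}(u_1)\exp\ml\ml\tfrac{7}{2}+4\pi C_3'(3G)\mr 3G\mr.
\end{equation*}
Since $\overline{E'_\xi}(u_1)$ is bounded by $\sup_{\{0\}\times[v_*,v_*+\hat\xi]}|\alpha'|$ uniformly in $u_1,\xi$, and the exponential factor is exactly $C_{12}(3G,v_*+\hat\xi)/E'(v_*+\hat\xi)$ times a harmless rescaling, the right-hand side is at most $\max\{C_{12}(G,v_*),C_{12}(3G,v_*+\hat\xi)\}=B_2$, which is \eqref{B33}.

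The main obstacle — really the only point requiring care rather than routine copying — is ensuring that the constant $B_2$ genuinely does not depend on $u_1$ or on $\xi$, only on the fixed data $G$, $v_*$, $\hat\xi$. This hinges on two monotonicity facts: (i) the exponential Gr\"onwall factor is controlled by $\gamma$, which is bounded by $3G$ by hypothesis regardless of $u_1,\xi$; and (ii) $\mu_*=\sup_{v>0}\mu(0,v)$ and the suprema of the $C^1$ data along $\{0\}\times[v_*,v_*+\hat\xi]$ are fixed once $\hat\xi$ is chosen, so enlarging $\xi$ up to $\hat\xi$ or moving $u_1$ toward $v_*$ cannot worsen the constants. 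One should also note, as in the charged case, that the estimate \eqref{B14} uses $r|_\Gamma=0$ and absolute continuity of $\alpha$, which remain valid for the extended $C^1$ solution; and that the interchange of derivative and integral in the expression for $\dv\gamma$ (needed only implicitly here) is justified by the very bounds on $\mu,|\dv\phi|,\dv r$ being established. With these monotonicity observations in place, the proof of Lemma \ref{SETLemmaScalar} is complete, and it feeds into the contradiction argument (defining $u_2=\sup\{u_1:\sup_{\mathcal{D}(0,u_1,v_*+\xi)}\gamma\le 3G\}$, estimating $|\dv\gamma|$ on $U_\xi(u_2)$ via \eqref{B1}, and choosing $\xi$ small) exactly as in the proof of Theorem \ref{SET}, yielding the desired extension to $\mathcal{D}(0,\tilde v)$ with $\tilde v>v_*$.
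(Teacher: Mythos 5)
Your proof is correct and follows exactly the route the paper intends: the paper only sketches this lemma in one sentence (``The proof follows in the same fashion\dots''), and you fill in the details in the right way. One point worth tightening: the displayed inequality in your third step, as written, pulls the constant $\overline{B'_\xi}(u_1)$ outside the integral, which is \emph{not} a Gr\"onwall-type estimate and would not close without assuming $3G\bigl(\tfrac72+4\pi C_3'(3G)\bigr)<1$. What you clearly mean (and what \eqref{B20}--\eqref{B22} actually give) is that the $u'$-dependent supremum $\tilde B(u'):=\sup_{v''\in[u',v_*+\xi]}|\alpha'|(u',v'')$ sits \emph{inside} the integral; Gr\"onwall then gives $\tilde B(u)\le E'(v_*+\hat\xi)\exp\bigl((\tfrac72+4\pi C_3'(3G))\,3G\bigr)=C_{12}(3G,v_*+\hat\xi)\le B_2$. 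Note also that the supremum must be taken over the full outgoing null segment $\{u'\}\times[u',v_*+\xi]$, not merely over the strip $U_\xi(u_1)$, because the integration path in \eqref{B14} from $\Gamma$ to a point $(u',v)\in U_\xi(u_1)$ passes through the triangle $\mathcal{D}(0,u_1,v_*)$; this is precisely what the max defining $B_2$ accounts for, as you indicate in your closing paragraph.
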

The proof of this lemma follows in the same fashion as for deriving an estimate for $B'(u)$, noting that the bounds established in the region $U_\xi(u_1)$ would depend on the initial data along $\{0\} \times [v_*,v_* + \xi]$ and the uniform upper bound for $\gamma$ is now $3G$, independent of $u_1$ and $\xi$.

With the above lemma, we continue to prove that the $C^1$ solution extends beyond the domain of dependence $\mathcal{D}(0,u_1,v_*)$. First, we define
\begin{equation}\label{B34}
u_2 = \sup\left\{ u_1 \in [0,v_*): \sup_{\mathcal{D}(0,u_1,v_*+\xi)} \gamma \leq 3G\right\}.
\end{equation}
By the definition of $u_2$, we then have either $u_2 = v_*$ or
\begin{equation}\label{B35}
\sup_{\mathcal{D}(0,u_1,v_* + \xi)} \gamma = 3G
\end{equation}
is attained at some $u_2 \in [u_0,v_*)$. We now show that for $\xi$ sufficiently small, the latter will not happen. Suppose for a contradiction that the latter happens. Subsequently, set $u_1 = u_2$ as in the hypothesis of Lemma \ref{SETLemmaScalar}. Analogous to the proof of Lemma \ref{SETLemmaScalar}, in the region $U_\xi(u_2)$, we first obtain
\begin{equation}\label{B36}
\frac{|\dv \phi|}{\dv r}(u,v) \leq \frac{B_2}{2}.
\end{equation}
We can derive an estimate for $m(u,v)$ using \eqref{B36} as follows. By \eqref{Scalar9} and $m|_{\Gamma} = 0$, these imply 
\begin{equation}\label{B37}
\begin{aligned}
m(u,v) 
\leq 2\pi \int^v_u \frac{r^2|\dv \phi|^2}{(\dv r)^2}(\dv r)(u,v') \D v' \leq \frac{\pi B_2^2 }{2} \int^v_u r^{2}(\dv r)(u,v') \D v' = \frac{\pi B_2^2}{6}r^3(u,v)
\end{aligned}
\end{equation}
and hence
\begin{equation}\label{B38}
\mu(u,v) \leq \frac{\pi B_2^2}{3} r^2(u,v).
\end{equation}
In addition, from \eqref{B11} but with $G$ replaced by $3G$, we have
\begin{equation}\label{B39}
\frac{1}{1-\mu}(u,v) \leq \frac{1}{1-\mu_*}e^{3G}.
\end{equation}
With these estimates, we recall the derivative of $\gamma$ with respect to $v$ from \eqref{B1} 
\begin{equation}\label{B40}
\dv \gamma(u,v) = \int_{0}^{u} \frac{\dv r \du r}{r^2}\frac{1}{1-\mu}\ml 2\mu - 4 \pi \frac{r^2|\dv \phi|^2}{(\dv r)^2}\mr(u',v) \D u'.
\end{equation}
By \eqref{B38}, \eqref{B39} and the Doppler identity \eqref{B51},
we deduce that
\begin{equation}\label{B41}
\begin{aligned}
&\int_{0}^{u} \frac{\dv r \du r}{r^2}\frac{1}{1-\mu}\mlm 2\mu - 4 \pi \frac{r^2|\dv \phi|^2}{(\dv r)^2}\mrm(u',v) \D u' \\
\leq& \; \frac{5\pi}{3(1-\mu_*)}e^{3G} B_2^2 \int^u_0 (-\du r)(\dv r)(u',v) \D u' 
\leq \frac{5\pi}{6(1-\mu_*)}e^{3G} B_2^2 \int^u_0 (-\du r)(u',v) \D u' \\
&\quad\leq \; \frac{5\pi}{6(1-\mu_*)}e^{3G} B_2^2 r(0,v) = \frac{5\pi}{12(1-\mu_*)}e^{3G} B_2^2 v \leq \frac{5\pi}{12(1-\mu_*)}e^{3G} B_2^2 (v_* + \hat{\xi}). \\
\end{aligned}
\end{equation}
The estimate in \eqref{B41} implies that the expression in \eqref{B40} is valid by dominated convergence theorem. For each $(u,v) \in U_\xi(u_2)$, we then establish
\begin{equation}\label{B42}
\begin{aligned}
\gamma(u,v) \leq \gamma(u,v_*) + \int_v^{v_*}|\dv \gamma|(u,v') \D v' \leq G + \frac{5\pi}{12(1-\mu_*)}e^{3G} B_2^2 (v_* + \hat{\xi})\xi.
\end{aligned}
\end{equation}
If we pick 
\begin{equation}\label{B43}
\xi := \min \left\{  \hat{\xi}, \frac{12G(1-\mu_*)}{5\pi e^{3G} B_2^2 (v_* + \hat{\xi})}\right\},
\end{equation}
we arrive at
$\sup_{\mathcal{D}(0,u_1,v_* + \xi)} \gamma \leq 2G$, contradicting \eqref{B35}.

Thus, for each $u_1 \in (u_0,v_*)$, the hypothesis of Lemma \ref{SETLemmaScalar} is verified. This then implies that \eqref{B33} holds and hence  $\sup_{\mathcal{D}(u_0,u_1,v_*+\Delta v)}|\alpha'|$ remains bounded as $u_1 \rightarrow v_*$. Therefore, we conclude that the solution extends as a $C^1$ solution up to the closure of $\mathcal{D}(u_0,v_*,v_*+\xi)$. In particular, with $C^1$ initial data prescribed along $\{v_*\} \times [v_*,v_*+\xi]$, we show that the solution further extends as a $C^1$ solution to $\mathcal{D}(v_*,v_*+\xi')$ for some $\xi' \in (0,\xi]$. This completes the proof of Theorem \ref{DopplerScalar}.
\qed

\end{document}